\newcommand\encircle[1]{
	\text{
  		\tikz[baseline=(X.base)] 
    		\node (X) [draw, shape=circle, inner sep=0] {${\strut #1}$};
    	}
 }
\newcommand{\bds}[1]{\boldsymbol{#1}}
\newtheorem{theorem}{Theorem}
\theoremstyle{plain}
\newtheorem{algorithm}{Algorithm}
\newtheorem{corollary}[theorem]{Corollary}
\newtheorem{definition}[theorem]{Definition}
\newtheorem{example}{Example}
\newtheorem{lemma}[theorem]{Lemma}
\newtheorem{proposition}[theorem]{Proposition}
\newtheorem{remark}[theorem]{Remark}
\numberwithin{equation}{section}
\numberwithin{theorem}{section}  
\newtheorem{heuristic}{Calculation}
\begin{document}
\title[Optimal Derivation of the Boltzmann Equation]{The Derivation of the
Boltzmann Equation from Quantum Many-body Dynamics}
\author{Xuwen Chen}
\address{Department of Mathematics, University of Rochester, Rochester, NY
14627}
\email{xuwenmath@gmail.com}
\urladdr{http://www.math.rochester.edu/people/faculty/xchen84/}
\author{Justin Holmer}
\address{Department of Mathematics, Brown University, 151 Thayer Street,
Providence, RI 02912}
\email{justin\_holmer@brown.edu}
\urladdr{https://www.math.brown.edu/jholmer/}
\subjclass[2010]{Primary 35Q20, 76P05, 81Q05, 81T27, 81V70; Secondary 35A01,
35C05, 35R25, 82C40.}
\keywords{Quantum Boltzmann Equation, Weak-coupling limit, Quantum Many-body
Dynamic, Hilbert's 6th Problem, Time Irreversibility}

\begin{abstract}
We consider the quantum many-body dynamics at the weak-coupling scaling. We
derive rigorously the quantum Boltzmann equation, which contains the
classical hard sphere model and, effectively, the inverse power law model,
from the many-body dynamics assuming a physical and optimal regularity
bound. The regularity bound we find, on the one hand, is satisfied by
quasi-free solutions and comes from calculations regarding the local
Maxwellian solution, in which we also prove that 2-body molecular chaos
never happens unless $N=+\infty $; on the other hand, it arises from the
well-posedness threshold of the limiting Boltzmann equation below which we
prove ill-posedness. That is, the regularity cannot be higher at the $N$%
-body level, cannot be lower in the limit, and is hence a double
criticality. To work with this borderline case, we analyze all four sides,
with respect to the Fourier transform, of the BBGKY hierarchy sequence with
new tools and techniques. We prove well-definedness, compactness,
convergence, and uniqueness of hierarchies right at the criticality to
complete an optimal derivation. In particular, we have proved that, for
physical $N$-particle solutions, the Boltzmann equation emerges as the
mean-field limit and time is hence irreversible, from first principles of
quantum mechanics.
\end{abstract}

\maketitle
\tableofcontents


\section{Introduction\label{sec:Introduction}}

In 1872, Boltzmann devised the now so-called Boltzmann (transport) equation,
a fundamental equation in kinetic theory which describes the time-evolution
of the statistical behavior of a thermodynamic system away from a state of
equilibrium in the mesoscopic regime, accounting for both dispersion under
the free flow and dissipation as the result of collisions. Let the
probability distribution for the position and velocity of a typical particle
be denoted by $f$. Under the molecular chaos assumption, he wrote the
collision as%
\begin{equation*}
Q(f,f)=\int_{\mathbb{S}^{2}}\int_{\mathbb{R}^{3}}\left\vert u-v\right\vert
I(\left\vert u-v\right\vert ,\omega )\left[ f(u^{\ast })f(v^{\ast })-f(u)f(v)%
\right] dud\omega
\end{equation*}%
where $I$ is the differential cross section of the collision, $u,v$ are the
incoming velocities for a pair of particles, $\omega \in \mathbb{S}^{2}$ is
a parameter for the deflection angle in the collision of these particles,
and the outgoing velocities are $u^{\ast },v^{\ast }$: 
\begin{equation}
u^{\ast }=u+[\omega \cdot (v-u)]\omega \;\text{ and }\; v^{\ast }=v-[\omega
\cdot (v-u)]\omega \text{.}  \label{def:pre-after collision velocity}
\end{equation}%
Together with the transport part, the Boltzmann equation reads%
\begin{equation}
\left( \partial _{t}+v\cdot \nabla _{x}\right) f=Q(f,f)\ \text{in }\mathbb{R}%
^{1+6}.  \label{eqn:generic Boltzmann}
\end{equation}%
Equation (\ref{eqn:generic Boltzmann}) sits between the law of motion of the
microscopic particles (atoms, molecules, \textellipsis) and the macroscopic
phenomena (and is hence called mesoscopic). For example, the Navier--Stokes
equations and the Euler equations in fluid dynamics are special limits of (%
\ref{eqn:generic Boltzmann}). (See, e.g. \cite{GM1961,LL1987} for a physics
oriented and \cite{SR09} for a mathematics oriented discussion.) Moreover,
it naturally carries the nondecreasing quantity, entropy, written as 
\begin{equation*}
S=-\int f\ln fdv
\end{equation*}%
in the Gibbs entropy form. That is, equation (\ref{eqn:generic Boltzmann})
is time irreversible while the laws of motion of the microscopic particles
are time reversible. Thus, rigorously justifying the emergence of (\ref%
{eqn:generic Boltzmann}) from first principles of the microscopic mechanics
connects the microscopic and macroscopic theories and establishes time
irreversibility, and is hence a fundamental problem.

This problem is also specifically mentioned in Hilbert's explanation to his
6th problem. The explicit statement of the 6th problem of Hilbert, raised in
1900, reads:

\begin{quote}
Mathematical Treatment of the Axioms of Physics. The investigations on the
foundations of geometry suggest the problem: To treat in the same manner, by
means of axioms, those physical sciences in which already today mathematics
plays an important part; in the first rank are the theory of probabilities
and mechanics.
\end{quote}

Hilbert gave the further explanation of this problem and its possible
specific forms as the following:

\begin{quote}
As to the axioms of the theory of probabilities, it seems to me desirable
that their logical investigation should be accompanied by a rigorous and
satisfactory development of the method of mean values in mathematical
physics, and in particular in the kinetic theory of gases. \textellipsis %
Boltzmann's work on the principles of mechanics suggests the problem of
developing mathematically the limiting processes, there merely indicated,
which lead from the atomistic view to the laws of motion of continua.
\end{quote}

One version of the contemporary understanding of the program is the
following.

\begin{minipage}{\textwidth}
\bigskip

\fbox{\begin{minipage}{0.18\textwidth}
\begin{center}
time reversible \\
microscopic \\
law of motion
\end{center}
\end{minipage}}
$\longrightarrow$
\fbox{\begin{minipage}{0.15\textwidth}
\begin{center}
first time \\
irreversible model
\end{center}
\end{minipage}}
$\longrightarrow$
\fbox{\begin{minipage}{0.15\textwidth}
\begin{center}
Boltzmann \\
equation
\end{center}
\end{minipage}}
$\longrightarrow$
\fbox{\begin{minipage}{0.18\textwidth}
\begin{center}
macroscopic \\
hydrodynamical \\
equations
\end{center}
\end{minipage}}
\bigskip
\end{minipage}

Here, there is not a definitive answer to the so called ``first time
irreversible model", and that is why some surveys (e.g. \cite{A2017,UV2015})
about time irreversibility state that deriving the Boltzmann equation for a
long time (from $N$-body systems) would prove the time irreversibility but
proving time irreversibility may not require the derivation of the Boltzmann
equation.

In this paper, we consider part of the problem. We consider the rigorous
derivation of a version of the Boltzmann equation (\ref{eqn:generic
Boltzmann}) and hence the time irreversibility from quantum $N$-body
dynamics.

It is evident that the ``mechanics" and the ``atomistic view" in Hilbert's
explanation should be Newtonian mechanics (it may mean specifically the hard
sphere model, but that is much less clear) as quantum mechanics was still
one of the two ``clouds"\footnote{%
There are many versions of the record in many different textbooks and
papers. But all of them says one ``cloud" became relativity and the other
``cloud" became quantum mechanics.} described by Lord Kelvin in April of the
same year 1900. Again at the same year, Planck produced the now called
Planck's law and quantum theory was born. Hence, Hilbert could not have had
accounted for another development in his 1900 statement, that is, the
existence of a more accurate time-reversible but probabilistic microscopic
theory, since the probabilistic interpretation of the wave function was only
raised by Born \cite{B1926}\ in 1926.\footnote{%
The main object of study in our paper, the Schr\"{o}dinger equations, are
deterministic, and the probabilistic feature is an interpretation of the
solutions.} While quantum theory has become the concurrently most accurate
microscopic model, any modern understanding of Hilbert's program should
allow quantum theory, in which the probabilistic feature is an axiom but the
time-irreversibility problem stands still, to be a microscopic starting
point, as well as Newtonian mechanics.

On the other hand, deriving (\ref{eqn:generic Boltzmann}) from quantum
theory sounds like an even more challenging problem as (\ref{eqn:generic
Boltzmann}) is largely regarded as classical. Because it has been proven
many times that there is no obvious gap between quantum theory and classical
mechanics, there is no reason not to use a more accurate model. For example,
one could derive the Euler equations directly from quantum $N$-body dynamics
without passing through any Boltzmann equations \cite{CSZ1,CSZ2,CSWZ,GP1}.%
\footnote{%
See also \cite{EMY,OVY,QY} for derivations of fluid equations directly from
systems with probabilistic features.} Interestingly, in 1877, Boltzmann also
suggested that the energy of a particle could be discrete; in 1896, he
replied to Zermelo \cite[Vol. III, paper 119]{BoltzmannCollection} that
``the Maxwell distribution law (and hence the Boltzmann theory) is not a
theorem from ordinary mechanics\footnote{%
Judging from the time, this ``ordinary mechanics" by Boltzmann should also
mean Newtonian mechanics.} and cannot be proven from mechanical
assumptions". \footnote{%
These replies to Zermelo were recommended as ``superbly clear and right on
the money'' by Lebowitz \cite{L1999}.} In fact, the large number of quantum
``dice'',\footnote{%
Dice as described by Einstein.} might contribute to the time
irreversibility. (See also, the on-going development of quantum
thermodynamics and hence entropy, in for example, \cite{HZ,Z}.)

We denote the microscopic interparticle interaction by the 2-body radial
interaction $\phi $. As assumed in basic kinetic theory, the interaction
should be repelling ($\phi >0$) if the distance between 2 particles is small
while it should be attractive ($\phi <0$) if the distance is large. We
further assume that $\int \phi =0$ to avoid zero momentum exchanges. While $%
\int \phi \neq 0$ is certainly allowed in quantum mechanics\footnote{%
It is even one of the starting points, e.g. the hydrogen atom.}, our desired
limit is classical and hence the model has to be compatible with it.

Put $\mathbf{y}_{N}=\left( y_{1},...,y_{N}\right) \in \mathbb{R}^{3N}$ as
the position vector of $N$ particles in $\mathbb{R}^{3}$. We write the $N$%
-body Hamiltonian at the weak-coupling scaling as%
\begin{equation}
H_{N,\varepsilon }=\sum_{j=1}^{N}-\frac{\varepsilon ^{2}}{2}\triangle
_{y_{j}}+\sum_{1\leqslant i<j\leqslant N}\sqrt{\varepsilon }\phi (\frac{%
y_{i}-y_{j}}{\varepsilon })\text{ with }N=\varepsilon ^{-3}.
\label{hamiltonian: weak coupliing}
\end{equation}%
Denoting the $k$-th marginal densities by $\gamma _{N}^{(k)}$, we consider
the $N$-body dynamic%
\begin{equation}
i\varepsilon \partial _{t}\gamma _{N}^{(N)}=\left[ H_{N,\varepsilon },\gamma
_{N,\varepsilon }^{(N)}\right] ,  \label{eqn:von Neumann equation}
\end{equation}%
under the normalization condition that%
\begin{equation*}
\gamma _{N}^{(k)}(\mathbf{y}_{k};\mathbf{y}_{k})\geqslant 0\text{ and }\int
\gamma _{N}^{(k)}(\mathbf{y}_{k};\mathbf{y}_{k})d\mathbf{y}_{k}=1,
\end{equation*}%
and the symmetric condition that, $\forall \sigma \in S_{k}$, the
permutation group of $k$-elements%
\begin{equation*}
\gamma _{N}^{(k)}(\mathbf{y}_{k};\mathbf{y}_{k}^{\prime })=\gamma
_{N}^{(k)}(y_{\sigma (1)}...y_{\sigma (k)};y_{\sigma (1)}^{\prime
}...y_{\sigma (k)}^{\prime })\text{ and }\gamma _{N}^{(k)}(\mathbf{y}%
_{k}^{\prime };\mathbf{y}_{k})=\overline{\gamma _{N}^{(k)}(\mathbf{y}_{k};%
\mathbf{y}_{k}^{\prime })},
\end{equation*}%
if $\gamma _{N}^{(k)}$ is put in kernel form.

We need the $\left( x,v\right) $ phase space picture (to state the limit in
the common format\footnote{%
We actually work in 4 representations of the problem.}), thus we consider
the Wigner transform $\left\{ f_{N}^{(k)}\right\} $ of $\left\{ \gamma
_{N}^{(k)}\right\} $ defined as%
\begin{eqnarray}
f_{N}^{(k)}(\mathbf{x}_{k},\mathbf{v}_{k}) &=&W_{\varepsilon }(\gamma
_{N}^{(k)})  \label{eqn:Wigner} \\
&=&\left( \frac{1}{2\pi }\right) ^{3k}\int e^{i\mathbf{\xi}_{k}\mathbf{v}%
_{k}}\gamma _{N}^{(k)}(\mathbf{x}_{k}+\frac{\varepsilon }{2}\mathbf{\xi}_{k};%
\mathbf{x}_{k}-\frac{\varepsilon }{2}\mathbf{\xi}_{k})d\mathbf{\xi}_{k}. 
\notag
\end{eqnarray}%
Via direct computation, we have that the family $\left\{ f_{N}^{(k)}\right\} 
$ solves the quantum Bogoliubov-Born-Green-Kirkwood-Yvon (BBGKY) hierarchy

\begin{equation}
(\partial _{t}+\sum_{j=1}^{k}v_{j}\cdot \nabla _{x_{j}})f_{N}^{(k)}=\frac{1}{%
\sqrt{\varepsilon }}A_{\varepsilon }^{(k)}f_{N}^{(k)}+\frac{N}{\sqrt{%
\varepsilon }}B_{\varepsilon }^{(k+1)}f_{N}^{(k+1)},
\label{hierarchy:quantum BBGKY in differential form}
\end{equation}%
where the two inhomogeneous terms are given by 
\begin{equation*}
A_{\varepsilon }^{(k)}=\sum_{1\leqslant i<j\leqslant k}A_{i,j}^{\varepsilon
},\text{if }k\geqslant 2\text{, and }A_{\varepsilon }^{(1)}=0,
\end{equation*}%
\begin{equation*}
B_{\varepsilon }^{(k+1)}=\sum_{j=1}^{k}B_{j,k+1}^{\varepsilon },
\end{equation*}%
with%
\begin{eqnarray}
A_{i,j}^{\varepsilon }f_{N}^{(k)} &=&\frac{-i}{\left( 2\pi \right) ^{3}}%
\sum_{\sigma =\pm 1}\sigma \int_{\mathbb{R}^{3}}dh\text{ }e^{\frac{ih\cdot
(x_{i}-x_{j})}{\varepsilon }}\hat{\phi}(h)  \label{eqn:Aij_epsilon} \\
&&f_{N}^{(k)}\left( t,\mathbf{x}_{k},v_{1}...,v_{i-1},v_{i}-\sigma \frac{h}{2%
},v_{i+1},...,v_{j-1},v_{j}+\sigma \frac{h}{2},v_{j+1},...,v_{k}\right) , 
\notag
\end{eqnarray}%
\begin{eqnarray}
B_{j,k+1}^{\varepsilon }f_{N}^{(k+1)} &=&\frac{-i}{\left( 2\pi \right) ^{3}}%
\sum_{\sigma =\pm 1}\sigma \int_{\mathbb{R}^{3}}dx_{k+1}\int_{\mathbb{R}%
^{3}}dv_{k+1}\int_{\mathbb{R}^{3}}dh\text{ }e^{\frac{ih\cdot (x_{j}-x_{k+1})%
}{\varepsilon }}\hat{\phi}(h)  \label{eqn:Bij_epsilon} \\
&&f_{N}^{(k+1)}\left( t,\mathbf{x}_{k},x_{k+1},v_{1}...,v_{j-1},v_{j}-\sigma 
\frac{h}{2},v_{j+1},...,v_{k+1}+\sigma \frac{h}{2}\right) .  \notag
\end{eqnarray}

This model has been studied by many prominent authors in both the
inhomogeneous and the homogeneous cases. See, for example, \cite%
{BCEP1,BCEP2,BCEP4} by Benedetto, Castella, Esposito, and Pulvirenti and 
\cite{ErdosFock} by Erd\"{o}s, Salmhofer and Yau. One of the original
motivation was the possibility of the emergence of the Uehling-Uhlenbeck
equation \cite{U-U} in the quasi-free case. It was X.Chen and Guo \cite{CY}
who noticed that is not possible formally. Then, for the spatial homogeneous
case, under the quasi-free assumption at the initial condition, T.Chen and
Hott \cite{TCMH} proved that quasi-freeness persists approximately in time
and that the conclusion in \cite{CY} holds. Moreover, they found a density
condition such that the Uehling-Uhlenbeck equation emerges.

We use the casual $A$,$B$ notation for the two inhomogeneous terms in (\ref%
{hierarchy:quantum BBGKY in differential form}) as they individually do not
generate the Boltzmann collision kernel and they tend to zero as $%
\varepsilon \rightarrow 0$ if applied to smooth functions. This is very
different from the usual hierarchy analysis in which a formal limit is first
found in the smooth setting without well-definedness issues, and was raised
as a question in \cite{BCEP1} and more specifically \cite[p.11]{BCEP4}. We
answer this question in detail during the proof. In short, the answer would
be, in order to arise from an $N$-body solution, $f_{N}^{(k)}$ must satisfy
a physically reasonable regularity requirement related to quantum
quasi-freeness, under which the $B$ term will not tend to $0$ (though
without a clear form for the limit due to the effect of the irregular parts
making the hierarchy unbalanced.) Then, another coupling provides a balanced
hierarchy mainly on the core part and a recognizable limit via a special
combination of $A$ and $B$ while the irregular effects vanish. This is also
a reason and benefit that one works in the critical regularity, which also
happens to be physical for this problem.

Another problem of the analysis of hierarchy (\ref{hierarchy:quantum BBGKY
in differential form}) is that\ it is not clear if hierarchy (\ref%
{hierarchy:quantum BBGKY in differential form}) is a well-defined equation
and if $f_{N}^{(k)}$ is nonnegative real-valued immediately from its
definition, being the Wigner transform\footnote{%
If one considers the Husimi transform instead, the marginals would be
nonegative but with more complicated equations. Thus, it is expected that $%
f_{N}^{(k)}$ is nonnegative (or provides nonnegative limits.)} of (\ref%
{eqn:von Neumann equation}). See, for example, the $dv_{k+1}$ integration in
(\ref{eqn:Bij_epsilon}) and the $i$'s in (\ref{eqn:Aij_epsilon}) and (\ref%
{eqn:Bij_epsilon}). It turns out that the functions spaces we are forced to
work in, to be compatible with the quantum quasi-freeness, is also where
hierarchy (\ref{hierarchy:quantum BBGKY in differential form}) is at the
borderline of well-definedness, and this is one aspect of the optimality of
our derivation.

Our target is the following quantum version of (\ref{eqn:generic Boltzmann})%
\begin{equation}
\left( \partial _{t}+v\cdot \nabla _{x}\right) f=Q(f,f)
\label{eqn:QBEquation}
\end{equation}%
in which the quantum collision operator $Q$ is given by%
\begin{eqnarray}
Q(f,g) &\equiv &Q^{+}(f,g)-Q^{-}(f,g)  \label{eqn:collision kernel} \\
&=&\frac{1}{8\pi ^{2}}\int_{\mathbb{R}^{3}}du\int_{\mathbb{S}^{2}}dS_{\omega
}\left\vert \omega \cdot \left( v-u\right) \right\vert \left\vert \hat{\phi}%
\left( \left( \omega \cdot \left( v-u\right) \right) \omega \right)
\right\vert ^{2}  \notag \\
&&\left[ f\left( t,x,v^{\ast }\right) g(t,x,u^{\ast })-f(t,x,v)g(t,x,u)%
\right] .  \notag
\end{eqnarray}%
which would arise as the \textquotedblleft not-so-obvious" mean-field limit
of hierarchy (\ref{hierarchy:quantum BBGKY in differential form}). As
pointed out in \cite{BCEP4}, the classical collision cross section in (\ref%
{eqn:collision kernel}) proves the macroscopic effects of quantum
microscopic interaction, it directly shows a transition rate which is
independent of time and is proportional to the strength of the coupling
between the initial and final densities of states of the system, and it is
hence a representation of the \textquotedblleft Fermi Golden Rule" which is
the quantum version of \textquotedblleft Stosszahlansatz" in the physics
point of view or time irreversibility.

\subsection{The cycle regularity condition\label{s:intro cycle regularity}}

We will work with hierarchy space-time bounds\footnote{%
As pointed out by Cercignani, this is one important part of the hierarchy
analysis.} built to be compatible with the quantum quasi-freeness, which is
required as $f_{N}^{(k)}$ are the Wigner transforms of quantum $N$-body
states, and to be compatible with the single particle norms in the molecular
chaos case $f^{(k)}=f^{\otimes k},$ which happens in the limit. As we are
interested in $N\rightarrow \infty $ behaviors, we look for regularity
conditions uniform in $N$ (or for all large enough $N)$. Though, for each $N$%
, $f_{N}^{(k)}$ could be a very smooth function, the uniform in $N$
regularity of the family is in fact far from being high. Our main working
space and topology, is based on the usual $H_{x}^{r}L_{v}^{2,s}$ norm,
which, when defined for a single particle distribution function, is 
\begin{equation*}
\left\Vert f\right\Vert _{H_{x}^{r}L_{v}^{2,s}}^{2}=\int \left\vert
\left\langle \nabla _{x}\right\rangle ^{r}\left\langle v\right\rangle
^{s}f(x,v)\right\vert ^{2}dxdv.
\end{equation*}%
Considering the limit, it is usually customary to work with space-time
bounds like 
\begin{equation}
\sup_{t\in \left[ 0,T\right] }||(\dprod\limits_{j=1}^{k}\left\langle \nabla
_{x_{j}}\right\rangle ^{r}\left\langle v_{j}\right\rangle
^{s})f_{N}^{(k)}||_{L_{x,v}^{2}}\leqslant C^{k},
\label{bound:customary bound}
\end{equation}%
but the situation we consider in this paper requires us to work with a more
general and less regular setting as assuming (\ref{bound:customary bound})
will lead to a trivial limit if $r>1$ (let alone it is in fact unphysical)
or an ill-posed limiting equation (let along things becoming not
well-defined) if simply putting $r<1$. In fact, the customary bound only
holds at most for $r=\frac{3}{4}$ even when $f_{N}^{(k)}$ tends to a local
Maxwellian.\footnote{$r=\frac{3}{4}$ also happens to be the regularity we
need to prove the emergence of the collision kernel.} We would need to work
under a physical $N$-body regularity implied by the quantum quasi-free
condition. As noted in much literature\footnote{%
See, for example, \cite[p.4]{BCEP4}, \cite[p.2]{ErdosFock}, and \cite{Hu1}.}
the physical situation is quasi-free (or restricted quasi-free).

Denote $\pi $ a permutation in the $k$-permutation group $S_{k}$, we assume
the quantum BBGKY hierarchy sequence (\ref{hierarchy:quantum BBGKY in
differential form}) has the wave packet structure that, at each time $t$, it
decomposes into a sum 
\begin{equation}
f_{N}^{(k)}(t)=\sum_{\pi \in S_{k}}f_{N,\pi }^{(k)}(t)
\label{assumption:wave packet sum}
\end{equation}%
(where each $f_{N,\pi }^{(k)}(t)$ is not necessarily quasi-free) and we
assume the regularity that, there exists a constant $C$ so that 
\begin{equation}
\forall N\geqslant 0,\;t\in \lbrack 0,T]\,,\;\pi \in S_{k}\,,\quad \text{ we
have }\quad \Vert f_{N,\pi }(t)\Vert _{X_{\pi }}\leq C^{k}
\label{energy bound:pi}
\end{equation}%
where the norm \thinspace $X_{\pi }$ is specified below. In the
decomposition (\ref{assumption:wave packet sum}), the term $f_{N,I}$
corresponding to the identity $\pi =I\text{ }$is called the \emph{core }%
term, and the analysis ultimately shows that it is the only term that has a
nontrivial limit as $N\rightarrow \infty $, and the terms $f_{N,\pi }(t)$
with $\pi \neq I$ are called ``irregular" (in terms of the customary bounds (%
\ref{bound:customary bound})) or ``cycle" terms. It is not assumed that the
decomposition (\ref{assumption:wave packet sum}) is unique.

For the core term $f_{N,I}^{(k)}$, we assume the $H_{x}^{1+}(L_{v}^{2,\frac{1%
}{2}+}\cap L_{v}^{\infty ,2+}\cap L_{v}^{1})$ energy condition, that is,
there exists a $C>0$ independent of $k$ or $N$ such that, $\forall k>0$, we
have 
\begin{eqnarray}
\sup_{t\in \left[ 0,T\right] }||(\dprod\limits_{j=1}^{k}\left\langle \nabla
_{x_{j}}\right\rangle ^{1+}\left\langle v_{j}\right\rangle ^{\frac{1}{2}%
+})f_{N,I}^{(k)}||_{L_{x,v}^{2}} &\leqslant &C^{k},
\label{energy bound:Hx^1(v^1/2andL1)} \\
\sup_{t\in \left[ 0,T\right] }||(\dprod\limits_{j=1}^{k}\left\langle \nabla
_{x_{j}}\right\rangle ^{1+}\left\langle v_{j}\right\rangle
^{2+})f_{N,I}^{(k)}||_{L_{x}^{2}\left( L_{v}^{\infty }\right) } &\leqslant
&C^{k}, \\
\sup_{t\in \left[ 0,T\right] }||(\dprod\limits_{j=1}^{k}\left\langle \nabla
_{x_{j}}\right\rangle ^{1+})f_{N,I}^{(k)}||_{L_{x}^{2}L_{v}^{1}} &\leqslant
&C^{k}.
\end{eqnarray}

On the other hand, though they are considered errors to the core term $%
f_{N,I}^{(k)}$ and will vanish in the $N\rightarrow \infty $ limit process,
the cycle terms $f_{N,\pi }(t)$ with $\pi \neq I$, are irregular in the
sense that (\ref{bound:customary bound}) only holds with $r=\frac{3}{4}$
even in the local Maxwellian case. However, they can be up to $H^{1+}$ in
what we call the permutation coordinates depending on the permutation $\pi $%
. That is why we call them the cycle terms.

The cycle coordinates are actually coordinates such that the operator $%
\sum_{j=1}^{k}v_{j}\cdot \nabla _{x_{j}}$ is invariant. To explain it, it is
easier to go back to the Schr\"{o}dinger picture (\ref{hamiltonian: weak
coupliing})-(\ref{eqn:von Neumann equation}) in which the hyperbolic
Laplacian reads 
\begin{equation*}
\varepsilon \sum_{j=1}^{k}\left( \triangle _{y_{j}^{\prime }}-\triangle
_{y_{j}}\right) .
\end{equation*}%
Given $\pi \in S_{k}$, define the cycle coordinates by 
\begin{equation}
p_{j}^{\pi }=y_{j}^{\prime }+y_{\pi (j)}\qquad q_{j}^{\pi }=(y_{j}^{\prime
}-y_{\pi (j)})/\epsilon ,  \label{def:cycle coordinates}
\end{equation}%
Then the hyperbolic Laplacian and the evolution become%
\begin{eqnarray*}
\frac{1}{4}\epsilon \sum_{j=1}^{k}(\Delta _{y_{j}^{\prime }}-\Delta
_{y_{j}}) &=&\sum_{j=1}^{k}\nabla _{p_{j}}\cdot \nabla _{q_{j}} \\
e^{\frac{1}{4}i\epsilon t(\Delta _{\boldsymbol{y}_{k}^{\prime }}-\Delta _{%
\boldsymbol{y}_{k}})} &=&e^{it\nabla _{\boldsymbol{p}_{k}}\cdot \nabla _{%
\boldsymbol{q}_{k}}}.
\end{eqnarray*}%
Recalling the fact that, the Wigner transform (\ref{eqn:Wigner}) is but a
Fourier transform with shifts, if we denote $\xi $ the inverse Fourier
variable to $v$, then $\sum_{j=1}^{k}v_{j}\cdot \nabla _{x_{j}}$ in (\ref%
{hierarchy:quantum BBGKY in differential form}) becomes $\sum_{j=1}^{k}%
\nabla _{\xi _{j}}\cdot \nabla _{x_{j}}$ with 
\begin{equation*}
x_{j}=y_{j}^{\prime }+y_{j}\qquad \xi _{j}=(y_{j}^{\prime }-y_{j})/\epsilon 
\text{.}
\end{equation*}%
That is, the $(x,v)$ coordinate is the special $\pi =I$ case of the cycle
coordinates (\ref{def:cycle coordinates}).

For $\pi \neq I$, we assume the $H^{1+}$ cycle regularity that, there exists 
$C>0$ independent of $\pi $, $k$ or $N$, such that 
\begin{equation}
\sup_{t\in \left[ 0,T\right] }||(\dprod\limits_{j=1}^{k}\left\langle \nabla
_{p_{j}^{\pi }}\right\rangle ^{1+})(W_{\varepsilon }^{-1}f_{N,\pi
}^{(k)})||_{L_{p^{\pi },q^{\pi }}^{2}}\leqslant C^{k}
\label{energy bound:high freq H^1 divergent}
\end{equation}%
where $W_{\varepsilon }^{-1}$ is the inverse Wigner transform which takes
the ($x,v$) picture back to the ($y,y^{\prime }$) picture and the derivative
and integrations are in (\ref{def:cycle coordinates}) coordinates. It is
pretty low and not in the usual coordinates. But there are some extra good
things about the cycle / irregular terms.

The cycle terms are representations of a type of quantum symmetry. We can
quantify such symmetries. Given $\pi \neq I$ in which $(i_{\pi },j_{\pi })$
is a 2-cycle in its cycle decomposition, we define the
interchange/substitution operator $T_{(i_{\pi },j_{\pi })}$ acting on the
function $\tilde{g}^{(k)}$ with $(\mathbf{p}_{k}^{\pi }$,$\mathbf{q}%
_{k}^{\pi })$ variables by%
\begin{equation*}
T_{(i_{\pi },j_{\pi })}\tilde{g}^{(k)}(\mathbf{p}_{k}^{\pi },\mathbf{q}%
_{k}^{\pi })=\tilde{g}^{(k)}((i_{\pi },j_{\pi })\mathbf{p}_{k}^{\pi
},(i_{\pi },j_{\pi })\mathbf{q}_{k}^{\pi }),
\end{equation*}%
that is, $T_{(i_{\pi },j_{\pi })}$ interchanges the $(p_{i_{\pi }}^{\pi
},q_{i_{\pi }}^{\pi })$ and $(p_{j_{\pi }}^{\pi },q_{j_{\pi }}^{\pi })$
variables inside the function $\tilde{g}^{(k)}$. Let 
\begin{equation*}
W_{\varepsilon }^{-1}f_{N,\pi }^{(k)}(t,\mathbf{y}_{k},\mathbf{y}%
_{k}^{\prime })=\tilde{g}_{N}^{(k)}(t,\mathbf{p}_{k}^{\pi },\mathbf{q}%
_{k}^{\pi })\text{,}
\end{equation*}%
then $f_{N,\pi }^{(k)}$ has the symmetry measurement\footnote{%
It is a measurement of symmetry as (\ref{energy bound:cycle symmetry}) would
be zero if $f_{N,\pi }^{(k)}$ is fully symmetric.} that,%
\begin{eqnarray}
&&\sup_{t}\left\Vert \left\langle \nabla _{p_{j_{\pi }}^{\pi }}\right\rangle
^{\frac{3}{4}+}\left( \tilde{g}_{N}^{(k)}(t,\mathbf{p}_{k}^{\pi },\mathbf{q}%
_{k}^{\pi })-T_{(i_{\pi },j_{\pi })}\tilde{g}^{(k)}(t,\mathbf{p}_{k}^{\pi },%
\mathbf{q}_{k}^{\pi })\right) \right\Vert _{L_{p_{j_{\pi }}^{\pi },q_{j_{\pi
}}^{\pi }}^{2}}  \label{energy bound:cycle symmetry} \\
&\lesssim &\varepsilon ^{\frac{1}{2}}\sup_{t}||(\dprod\limits_{j=1}^{k}\left%
\langle \nabla _{p_{j}^{\pi }}\right\rangle ^{1+})(W_{\varepsilon
}^{-1}f_{N,\pi }^{(k)})||_{L_{p^{\pi },q^{\pi }}^{2}}  \notag
\end{eqnarray}

Moreover, due to the high number of collisions of our model and hence high
number of interactions between the cycle terms and the core term, we have
many times in $\left[ 0,T\right] $ such that the symmetry strengthens
(actually ``jitters"\footnote{%
``Jitters" as in electrical engineering (EE).} as explained in \S \ref{S3}.)
To realize such a symmetry strengthening (gain), we would like to assume
that for each $\varepsilon >0$, there is a subset $E_{\varepsilon }\subset %
\left[ 0,T\right] $ such that, given any open interval $I_{\varepsilon }$
with length at least $\varepsilon ^{\frac{1}{2}}$, we have $E_{\varepsilon
}\cap I_{\varepsilon }$ is nonempty, and for each $t_{0}\in E_{\varepsilon }$%
, we have the symmetry difference strengthens to, 
\begin{eqnarray}
&&\left\Vert \left\langle \nabla _{p_{j_{\pi }}^{\pi }}\right\rangle ^{\frac{%
3}{4}+}\left( \tilde{g}_{N}^{(k)}(t_{0},\mathbf{p}_{k}^{\pi },\mathbf{q}%
_{k}^{\pi })-T_{(i_{\pi },j_{\pi })}\tilde{g}^{(k)}(t_{0},\mathbf{p}%
_{k}^{\pi },\mathbf{q}_{k}^{\pi })\right) \right\Vert _{L_{(p_{j_{\pi
}}^{\pi },q_{j_{\pi }}^{\pi })}^{2}}
\label{energy bound:high freq H^1 goodness} \\
&\lesssim &\varepsilon ^{\mu }\sup_{t\in \left[ 0,T\right]
}||(\dprod\limits_{j=1}^{k}\left\langle \nabla _{p_{j}^{\pi }}\right\rangle
^{1+})(W_{\varepsilon }^{-1}f_{N,\pi }^{(k)})||_{L_{p^{\pi },q^{\pi }}^{2}}.
\notag
\end{eqnarray}
for some $\mu >\frac{1}{2}$.

The assumptions above certainly need more explanation. We will justify why
the regularities cannot be higher or lower and how the permutation
coordinates come in via sharp technical estimates, ill-posedness results,
and matching computations regarding constructions of the local Maxwellian
from quantum $N$-body solutions, throughout the whole paper and more
specifically in \S \ref{S:PC}-\ref{S:PCE} and \S \ref{S3}. In particular,
our assumptions hold (and are implied) if $f_{N}^{(k)}(t)$ is generalized
quasi-free, or just quasi-free. That is, our assumptions is general and
covers the physical cases.

\subsection{Statement of the Main Theorem}

\begin{theorem}[Main Theorem]
\label{thm:main}Let $\left\{ f_{N}^{(k)}\right\} $ be the $N$-body dynamic
given by the BBGKY hierarchy (\ref{hierarchy:quantum BBGKY in differential
form}) with a pair interaction $\phi $ in Schwartz class with zero
integration and $\hat{\phi}$ vanishes at zero up to the 1st order. Assume

(i) The initial datum to (\ref{hierarchy:quantum BBGKY in differential form}%
) is of asymptotic molecular chaos, that is, for some one-particle density $%
f_{0}(x,v)\in H_{x}^{1+}(L_{v}^{2,\frac{1}{2}+})$, for all $k$, we have, 
\begin{equation}
f_{N}^{(k)}|_{t=0}\rightharpoonup \dprod\limits_{j=1}^{k}f_{0}(x,v)\text{
weakly in }L_{\boldsymbol{x}_k}^{2}L_{\boldsymbol{v}_k}^{2}\text{ as }%
N\rightarrow \infty .  \label{eqn:initial molecular chaos}
\end{equation}

(ii) In the time interval $\left[ 0,T\right] $, $f_{N}^{(k)}\geqslant 0$.

(iii) In the time interval $\left[ 0,T\right] ,$ the sequence $\left\{
\left\{ f_{N}^{(k)}\right\} _{k=1}^{N}\right\} _{N=1}^{\infty }$ satisfies
the cycle regularity condition specified in \S \ref{s:intro cycle regularity}%
.

Then we have propogation of chaos that $\forall k$ and $\forall t\in \left[
0,T\right] $%
\begin{equation}
f_{N}^{(k)}(t)\rightharpoonup \dprod\limits_{j=1}^{k}f(t,x,v)\text{ weakly
in }L_{\boldsymbol{x}_k}^{2}L_{\boldsymbol{v}_k}^{2}\text{ as }N\rightarrow
\infty ,  \label{limit:propogation of chaos}
\end{equation}%
where $f\in H_{x}^{1+}(L_{v}^{2,\frac{1}{2}+})$ solves (\ref{eqn:QBEquation}%
) with initial condition $f|_{t=0}(x,v)=f_{0}(x,v).$ Moreover, if $f_{0}\in
L_{x,v}^{1}$, then the concluded limit $\lim_{N\rightarrow \infty
}f_{N}^{(k)}(t)\in L_{x,v}^{1}$.
\end{theorem}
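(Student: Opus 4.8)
The plan is to follow the BBGKY-to-Boltzmann hierarchy scheme, but with the Fourier-side refinements needed to handle the borderline regularity. First I would recast hierarchy \eqref{hierarchy:quantum BBGKY in differential form} in Duhamel (mild) form, writing $f_N^{(k)}(t)$ as a sum over collision histories built from the free transport group $U^{(k)}(t)=e^{-t\sum v_j\cdot\nabla_{x_j}}$ and the operators $\tfrac{1}{\sqrt\varepsilon}A_\varepsilon^{(k)}$, $\tfrac{N}{\sqrt\varepsilon}B_\varepsilon^{(k+1)}$. Because the individual $A$ and $B$ terms vanish on smooth test functions as $\varepsilon\to0$, the crucial step is to organize the Duhamel expansion so that a single collision step pairs one application of $B_\varepsilon^{(k+1)}$ with a compensating application of $A_\varepsilon^{(k)}$ (and the gain/loss structure of the time ordering), producing the combination that survives. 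On the core term $f_{N,I}^{(k)}$ the energy bounds \eqref{energy bound:Hx^1(v^1/2andL1)} give the $H_x^{1+}L_v^{2,\frac12+}\cap L_x^2L_v^{\infty,2+}\cap L_x^2L_v^1$ control uniformly in $N,k$, which is exactly what is needed: the $H_x^{1+}$ regularity lets one integrate the oscillatory phase $e^{ih\cdot(x_j-x_{k+1})/\varepsilon}$ by parts / apply stationary phase in the $dh\,dx_{k+1}\,dv_{k+1}$ integral, the $v$-weight $\langle v\rangle^{2+}$ with the $L_v^\infty$ and $L_v^1$ bounds controls the velocity integration against $\hat\phi$ and produces, in the limit, the delta constraint on $\omega\cdot(v-u)$ that is the content of $Q(f,g)$. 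Meanwhile each cycle term $f_{N,\pi}^{(k)}$, $\pi\neq I$, satisfies only \eqref{energy bound:high freq H^1 divergent} in the $\pi$-coordinates \eqref{def:cycle coordinates}; the point is that in those coordinates the transport operator $\sum v_j\cdot\nabla_{x_j}$ becomes $\sum\nabla_{\xi_j^\pi}\cdot\nabla_{x_j^\pi}$ as well, so the Duhamel iteration still closes, and the symmetry-difference bounds \eqref{energy bound:cycle symmetry}, together with the jitter improvement \eqref{energy bound:high freq H^1 goodness} on the set $E_\varepsilon$ which meets every interval of length $\varepsilon^{1/2}$, force the contribution of each $\pi\neq I$ collision history to carry an extra power $\varepsilon^{\mu-1/2}$ with $\mu>\tfrac12$, hence to vanish as $\varepsilon\to0$.

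The steps, in order: (1) Derive the Duhamel–Born series for $\{f_N^{(k)}\}$ and truncate it at a $k$- and $\varepsilon$-dependent coupling level, using the $C^k$ bounds to control the tail; here the combinatorics of collision histories (a Klainerman–Machedon-type board game, adapted to the $A/B$ pairing) keeps the number of terms under control. (2) Establish equicontinuity in $t$ and weak-$*$ compactness in, say, $C([0,T];\text{weak-}L^2_{x,v})$ of $\{f_N^{(k)}\}$ from the uniform bounds and the hierarchy itself, so that a limit point $\{f_\infty^{(k)}\}$ exists along a subsequence; by assumption (ii) the limit is nonnegative, and assumption (i) fixes the initial data. (3) Pass to the limit term-by-term in the truncated series: show every cycle ($\pi\neq I$) history contributes $o(1)$ via \eqref{energy bound:cycle symmetry}–\eqref{energy bound:high freq H^1 goodness}, and show every pure-core history converges to the corresponding term in the Duhamel series for the Boltzmann hierarchy, with the collision kernel \eqref{eqn:collision kernel} emerging from the $\varepsilon\to0$ stationary-phase analysis of the paired $A$–$B$ integral; this is where the $r=\tfrac34$ threshold and the $\langle v\rangle^{2+}$, $L_v^1$ norms are consumed. (4) Conclude that $\{f_\infty^{(k)}\}$ solves the infinite Boltzmann hierarchy with data $\prod f_0$, invoke the uniqueness theorem for that hierarchy at the critical regularity (stated earlier in the paper) to identify $f_\infty^{(k)}=f^{\otimes k}$ where $f$ solves \eqref{eqn:QBEquation} with $f|_{t=0}=f_0$, and note $f\in H_x^{1+}L_v^{2,\frac12+}$ from the propagated bounds; since the full sequence has a unique limit point, the whole sequence converges. (5) For the last assertion, observe that the $L_x^2L_v^1$ bound on the core term together with nonnegativity propagates an $L^1_{x,v}$ bound when $f_0\in L^1_{x,v}$, and mass conservation for \eqref{eqn:QBEquation} (formally $\int Q(f,f)\,dv=0$) keeps $f(t)\in L^1_{x,v}$, so the limit lies in $L^1_{x,v}$.

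The main obstacle, as the introduction already signals, is step (3): because $A_\varepsilon^{(k)}f$ and $B_\varepsilon^{(k+1)}f$ each tend to zero on smooth inputs, there is no naive formal limit to match, and one must show that the \emph{cancellation is incomplete} in exactly the right way — that the surviving piece of the $A$–$B$ pairing, evaluated on the critically-regular core term, reproduces $Q$, while the genuinely singular remainders (the parts that would blow up under a naive $H_x^{1-}$ estimate) are precisely the cycle contributions, which are killed by the symmetry/jitter hypotheses. Carrying this out requires the stationary-phase estimate in $h$ to be sharp at regularity $H_x^{3/4}$ (no room to spare), a careful accounting of which velocity arguments get shifted by $\pm h/2$ so the outgoing velocities $v^\ast,u^\ast$ of \eqref{def:pre-after collision velocity} appear, and a proof that the jitter set $E_\varepsilon$ can be used to choose the collision times so that \eqref{energy bound:high freq H^1 goodness} applies at each of the $O(k)$ collisions simultaneously without the gains degrading combinatorially. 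I expect the bookkeeping of the paired $A$/$B$ time-ordering under the board-game reorganization, compatibly with working in four Fourier representations at once, to be the most delicate part of the write-up.
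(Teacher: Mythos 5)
Your outline shares the paper's overall skeleton (Duhamel iteration exposing the $B$--$A$ pairing, compactness, convergence to the Boltzmann hierarchy, uniqueness at critical regularity, cycle terms killed by the symmetry/jitter hypotheses), but three of your steps would not go through as described. First, the expansion depth: you propose a full Dyson--Born series at the $N$-body level, truncated at an $\varepsilon$-dependent coupling order and organized by a Klainerman--Machedon board game. The paper iterates the $N$-body Duhamel formula exactly \emph{once}, producing the five terms of \eqref{hierarchy:RealBBGKY}, and reserves the board game for the uniqueness proof of the \emph{limiting} hierarchy in \S\ref{sec:uniqueness}. This is not cosmetic: in a deep $N$-body expansion the terminal factor of each history is an unpaired $N\varepsilon^{-1/2}B$ acting on $f_N^{(k+m)}$, and on the cycle terms this is $O(1)$ with no clean limit --- that is precisely the trivial-limit puzzle --- so term-by-term convergence of the deep series is exactly what cannot be established; stopping at the second iterate is what makes every term individually estimable. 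Second, the mechanism producing $Q$: you invoke stationary phase in the $h$-integral, but at $H_x^{3/4+}$ there is no smoothness to spare for a phase analysis. The paper instead passes to the $(\eta,\xi)$ side, where $B$ carries only an $O(\epsilon)$ phase, and rescales the inner Duhamel time $s=(t-t')/\epsilon$; H\"older at exponents $3\pm$ in $y$ yields an integrable weight $s^{-1+}\langle s\rangle^{0-}$, and the delta constraint in \eqref{eqn:collision kernel} comes from carrying out the $s$-integral over the whole line after symmetrization, not from a critical-point argument.

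Third, your compactness space is too weak. Showing that a limit point satisfies the Boltzmann hierarchy requires pairing $f^{(k+1)}$ against a tier-$(k+1)$ test function built from a Schwartz $J^{(k)}$ and the adjoint of the collision kernel; this composed test function lies only in $H_{\boldsymbol{x}_{k+1}}^{-3/4-}L_{\boldsymbol{v}_{k+1}}^{2,-1/2-}$, not in $L^2$, so weak-$L^2_{x,v}$ convergence does not suffice to pass to the limit in the collision term. The paper works in the weak-$*$ topology of $H_{\boldsymbol{x}_k}^{1+}L_{\boldsymbol{v}_k}^{2,\frac12+}$, and since the raw cycle terms are unbounded in that norm, it must first insert the cutoff projections $P_N^{(k)}$ (Lemma \ref{L:PCcutoff}) and only afterwards recover weak-$L^2$ convergence of the uncut sequence. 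Relatedly, your remark that the transport operator is invariant in cycle coordinates so ``the Duhamel iteration still closes'' overstates what is available: the hierarchy does not close on the $f_{N,\pi}^{(k)}$; they enter only as assumed-bounded inputs whose contributions to each of the five second-iterate terms are shown to vanish, the linear one via a single well-chosen time $t_0\in E_\varepsilon$ rather than via simultaneous gains at $O(k)$ collisions, so the combinatorial degradation you worry about never arises.
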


We have not assumed the well-posedness of (\ref{eqn:QBEquation}) in $[0,T]$
in the statement of Theorem \ref{thm:main}. On the one hand, we prove a
local $H_{x}^{1+}(L_{v}^{2,0+})\cap H_{x}^{\frac{1}{2}+}(L_{v}^{2,\frac{1}{2}%
+})$ unconditional well-posedness result for (\ref{eqn:QBEquation}). On the
other hand, the limit process actually generates a $H_{x}^{1+}(L_{v}^{2,%
\frac{1}{2}+})$ solution $\lim_{N\rightarrow \infty }f_{N}^{(1)}(t)$ to (\ref%
{eqn:QBEquation}) and is hence the only possible everywhere in time solution
subject to the initial condition at this regularity. This is not surprising
due to two reasons: (1) it is known that if the initial datum takes the form
of (\ref{eqn:initial molecular chaos}) then the first marginal of the
infinite Boltzmann hierarchy (\ref{hierarchy:Boltzmann}) is a formal
solution to (\ref{eqn:QBEquation}); (2) due to our method, if, \emph{a priori%
}, the $H_{x}^{1+}(L_{v}^{2,\frac{1}{2}+})$ norm of solution(s) to (\ref%
{eqn:QBEquation}) (if exists) is known to be finite in $\left[ 0,T\right] $,
one can construct a $C([0,T],H_{x}^{1+}(L_{v}^{2,\frac{1}{2}+}))$ solution
to (\ref{eqn:QBEquation}).\footnote{%
See \cite{CSZ4} for the global well-posedness at such low regularities which
is also sharp as ill-posedness starts to happen below it.}

From the above discussion, our proof of Theorem \ref{thm:main} does not
directly rely on (\ref{eqn:initial molecular chaos}) to conclude a limit
exists for the BBGKY sequence. That is indeed the case.

\begin{corollary}
\label{thm:main1}Let $\left\{ f_{N}^{(k)}\right\} $ be the $N$-body dynamic
given by the BBGKY hierarchy (\ref{hierarchy:quantum BBGKY in differential
form}) with a pair interaction $\phi $ in Schwartz class with zero
integration and $\hat{\phi}$ vanishes at zero up to the 1st order. Assume
(ii) and (iii) in Theorem \ref{thm:main} and

(i') The initial datum to (\ref{hierarchy:quantum BBGKY in differential form}%
) has a $L_{x}^{2}L_{v}^{2}$ weak limit which is in $H_{x}^{1+}(L_{v}^{2,%
\frac{1}{2}+})$, that is there is a family $\left\{ f_{0}^{(k)}\right\} $
such that 
\begin{equation}
f_{N}^{(k)}|_{t=0}\rightharpoonup f_{0}^{(k)}\text{ weakly in }L_{\mathbf{x}%
_{k}}^{2}L_{\mathbf{v}_{k}}^{2}\text{ as }N\rightarrow \infty ,
\end{equation}%
for some $f_{0}^{(k)}$ satisfying the customary bound (\ref{bound:customary
bound}) with $r=1+$ and $s=\frac{1}{2}+$. Then there is a unique family $%
\left\{ f^{(k)}\right\} $ in $\left[ 0,T\right] $ such that $\forall k$ and $%
\forall t\in \left[ 0,T\right] $ 
\begin{equation}
f_{N}^{(k)}(t)\rightharpoonup f^{(k)}\text{ weakly in }L_{\mathbf{x}%
_{k}}^{2}L_{\mathbf{v}_{k}}^{2}\text{ as }N\rightarrow \infty ,
\end{equation}%
and $\left\{ f^{(k)}\right\} $ solves the infinite quantum Boltzmann
hierarchy (\ref{hierarchy:Boltzmann}) with initial condition $f^{(k)}|_{t=0}(%
\mathbf{x}_{k},\mathbf{v}_{k})=f_{0}^{(k)}(\mathbf{x}_{k},\mathbf{v}_{k})$
and satisfies the customary bound (\ref{bound:customary bound}) with $r=1+$
and $s=\frac{1}{2}+$.
\end{corollary}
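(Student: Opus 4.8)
The plan is to prove Corollary~\ref{thm:main1} directly by the compactness--convergence--uniqueness scheme for the BBGKY hierarchy, carried out at the critical regularity forced by the cycle regularity condition, and then to recover Theorem~\ref{thm:main} from it by propagating factorization through uniqueness; the argument is organized around the wave-packet splitting $f_N^{(k)}=\sum_{\pi\in S_k}f_{N,\pi}^{(k)}$, with the core term $f_{N,I}^{(k)}$ (the only piece with a nontrivial limit) and the cycle terms $f_{N,\pi}^{(k)}$, $\pi\neq I$, treated on different footings. \emph{Step 1 (uniform bounds and compactness).} Assumption (iii) supplies, uniformly in $N$ and $t\in[0,T]$, the $C^k$ bounds (\ref{energy bound:Hx^1(v^1/2andL1)})--(\ref{energy bound:high freq H^1 divergent}); in particular $f_{N,I}^{(k)}$ satisfies the customary bound (\ref{bound:customary bound}) with $r=1+$, $s=\frac{1}{2}+$, while the cycle pieces are $O(1)$ only in their own permutation coordinates (\ref{def:cycle coordinates}). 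Feeding these back into (\ref{hierarchy:quantum BBGKY in differential form}) controls $\partial_t f_N^{(k)}$ in a weaker norm, so for each $k$ the family $\{f_N^{(k)}\}_N$ is bounded in $L^\infty_t H_x^{1+}L_v^{2,\frac{1}{2}+}$ and equicontinuous in $t$; a diagonal Banach--Alaoglu/Arzel\`{a}--Ascoli extraction yields a subsequence and a limit family $\{f^{(k)}\}$ with $f_N^{(k)}(t)\rightharpoonup f^{(k)}(t)$ in $L^2_{\mathbf{x}_k}L^2_{\mathbf{v}_k}$ for every $t\in[0,T]$, still obeying (\ref{bound:customary bound}) with $r=1+$, $s=\frac{1}{2}+$ by weak lower semicontinuity, and with $f^{(k)}|_{t=0}=f_0^{(k)}$ by (i').

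\emph{Step 2 (convergence to the infinite quantum Boltzmann hierarchy).} This is the heart of the matter. Testing the Duhamel form of (\ref{hierarchy:quantum BBGKY in differential form}) against a smooth compactly supported tensor, one first shows that every cycle term $f_{N,\pi}^{(k)}$ with $\pi\neq I$ drops out in the limit: its lack of $(x,v)$-regularity is exactly offset by the oscillation $e^{ih\cdot(x_i-x_j)/\varepsilon}$ in (\ref{eqn:Aij_epsilon})--(\ref{eqn:Bij_epsilon}), by the symmetry measurement (\ref{energy bound:cycle symmetry}), and---decisively---by the jitter strengthening (\ref{energy bound:high freq H^1 goodness}) on $E_\varepsilon$, whose density at scale $\varepsilon^{1/2}$ lets one localize the time integrals to those good instants and gain a net power $\varepsilon^{\mu-1/2}$ with $\mu>1/2$. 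For the core term one does \emph{not} take the limits of $A_\varepsilon^{(k)}$ and $B_\varepsilon^{(k+1)}$ separately (each vanishes on smooth inputs, but here the inputs are only $H^{1+}$); instead one recombines $\frac{1}{\sqrt{\varepsilon}}A_\varepsilon^{(k)}+\frac{N}{\sqrt{\varepsilon}}B_\varepsilon^{(k+1)}$ into a balanced operator acting essentially on $f_{N,I}^{(k+1)}$, carries out the $dx_{k+1}$ and $dh$ integrations, reduces the surviving oscillatory integral by stationary phase to a $\delta$-measure on the collisional energy surface, and identifies the $\varepsilon\to 0$ limit with $Q^+-Q^-$ of (\ref{eqn:collision kernel}); here the hypotheses that $\phi$ is Schwartz with $\int\phi=0$ and $\hat\phi$ vanishing to first order at $0$ are what remove the self-interaction and diagonal contributions and produce the symbol $|\hat\phi((\omega\cdot(v-u))\omega)|^2$. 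The upshot is that $\{f^{(k)}\}$ solves the infinite quantum Boltzmann hierarchy (\ref{hierarchy:Boltzmann}) with datum $\{f_0^{(k)}\}$.

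\emph{Step 3 (uniqueness and conclusion).} It remains to show that (\ref{hierarchy:Boltzmann}) has at most one solution within the class of families satisfying (\ref{bound:customary bound}) with $r=1+$, $s=\frac{1}{2}+$. Since this regularity lies at (and above the threshold of) the unconditional well-posedness one proves for (\ref{eqn:QBEquation}), the iterated Duhamel series for the hierarchy converges in the $C^k$-weighted norm and a Gronwall estimate closes uniqueness with no combinatorial board-game. Uniqueness of the limit promotes the subsequential convergence of Step 1 to convergence of the full sequence, and the customary bound for $\{f^{(k)}\}$ was already recorded. Theorem~\ref{thm:main} follows at once: factorized data $f_0^{(k)}=f_0^{\otimes k}$ forces, by the same uniqueness, $f^{(k)}=f^{\otimes k}$ with $f\in H_x^{1+}L_v^{2,\frac{1}{2}+}$ solving (\ref{eqn:QBEquation}), and the $L^1$ assertion follows because $L^1$-boundedness of the data propagates along (\ref{hierarchy:Boltzmann}).

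The main obstacle is Step 2: proving that the \emph{unbalanced} hierarchy---one in which $A$ and $B$ each tend to zero and the cycle terms are genuinely singular in $(x,v)$---nonetheless produces the classical-looking collision operator $Q$. This forces one to juggle all four Fourier-side representations of the problem at once, to pass freely between the $(x,v)$ coordinates and the $\pi$-dependent cycle coordinates, and to extract the jitter gain exactly; the oscillatory-integral bookkeeping must close precisely at $r=1+$, $s=\frac{1}{2}+$ with no slack, which is where the real difficulty sits.
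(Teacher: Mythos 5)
Your overall architecture (compactness, convergence to the infinite hierarchy, uniqueness) matches the paper's, but three of your steps contain genuine gaps.

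First, in Step 1 you assert that $\{f_N^{(k)}\}_N$ is bounded in $L^\infty_t H_{\mathbf{x}_k}^{1+}L_{\mathbf{v}_k}^{2,\frac12+}$ uniformly in $N$. This is false and is precisely the obstruction the paper's compactness section is built to circumvent: the cycle terms $f_{N,\pi}^{(k)}$, $\pi\neq I$, are $O(1)$ only in the $\epsilon$-dependent coordinates $(\boldsymbol{p}_k^\pi,\boldsymbol{q}_k^\pi)$, and each $x$-derivative in the standard frame costs a factor $\epsilon^{-1}$ (see Example \ref{EX:2cycles} and \eqref{E:irregularity}); the hypothesis (iii) only yields $\Vert f_N^{(k)}\Vert_{L^2}\leq C^k k!$. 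The paper therefore introduces the space--frequency cutoffs $P_N^{(k)}$, proves the bound for the projected sequence via Lemma \ref{L:PCcutoff}, runs Ascoli in the metric space $(\Lambda^*,\rho)$, and only afterwards transfers the limit back to the unprojected sequence in weak $L^2$. Relatedly, you cannot take the test-function space to be the $L^2$ dual: the tier-$(k+1)$ test function generated by the adjoint of the collision kernel lands only in $H_{\mathbf{x}_{k+1}}^{-\frac34-}L_{\mathbf{v}_{k+1}}^{2,-\frac12-}$ (see \eqref{E:CV02}), which is why $\Lambda^*$ must be based on $H^{1+}_xL_v^{2,\frac12+}$ with the weak-$*$ topology.

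Second, in Step 2 you test the \emph{first} Duhamel iterate and "recombine" $\epsilon^{-1/2}A^{(k)}$ with $N\epsilon^{-1/2}B^{(k+1)}$; but these operators act at different tiers of the hierarchy and no such recombination produces the collision kernel. The paper's resolution of the trivial limit puzzle is to iterate Duhamel once more: the collision operator arises as the composition $Q^\epsilon_{i,k+1}=N\epsilon^{-1/2}B^\epsilon_{i,k+1}\mathcal{D}^{(k+1)}\epsilon^{-1/2}A^\epsilon_{i,k+1}$ in \eqref{E:S411a}, while all other compositions ($R_N^{3}$, $R_N^{4}$, $R_N^{5}$) vanish. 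At the level of the first iterate, $N\epsilon^{-1/2}B$ applied to the cycle part is $O(1)$ with no identifiable limit, so your scheme cannot close there. Third, in Step 3 your claim that uniqueness of the hierarchy follows from a Gronwall argument "with no combinatorial board-game" is a genuine gap: the $k$-fold Dyson expansion of the hierarchy has $(k+1)!$ summands, and $C^k$-type bounds cannot absorb a factorial. Some device to beat the factorial is indispensable --- the paper uses the Klainerman--Machedon board game (Lemma \ref{lem:extended board game}) to collapse the sum to $4^k$ classes, combined with the Hewitt--Savage representation and the bilinear estimates of \S\ref{Subsec:Multilinear Estimates}. Unconditional well-posedness of the single equation \eqref{eqn:QBEquation} does not by itself give uniqueness for the linear infinite hierarchy.
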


\subsection{Optimality of the Main Theorem}

The derivation in Theorem \ref{thm:main} is optimal. The low regularity
setting (\ref{assumption:wave packet sum})-(\ref{energy bound:high freq H^1
goodness}) is physically required and a mathematically critical spot needed
for a long time derivation as we explan below. We remark that assuming
instead, finite second moments like energy and variance, will not lower the
requirement. On the other hand, interestingly, assuming much more smoothness
does not help to simplify the argument and might even run into the problem
that it forces (\ref{hierarchy:quantum BBGKY in differential form}) to have
a trivial limit as pointed out in \cite{BCEP1,BCEP4} (and is in fact
nonphysical). As usual, the critical regularity argument tells most of the
story and high regularity theory actually relies on it.

It is self-evident that the justification of any mean-field limits should be
settled in settings pertinent to their physical backgrounds which have made
these problems fundamental. We calculate in \S \ref{S3} that, for the
quantum $N$-body density $f_{N}^{(k)}$ to converge to a local Maxwellian,
the most well-known stable solution to (\ref{eqn:QBEquation}), as $%
N\rightarrow \infty $, the regularity of $f_{N}^{(k)}$ checks our
assumptions in \S \ref{s:intro cycle regularity} and cannot be higher. In
fact, we prove in Lemma \ref{L:basic-B} and Corollary \ref{C:12nonlin}, that
if $f_{N}^{(k)}$ was a bit more regular, then the $B$ term in (\ref%
{eqn:Bij_epsilon}) tends to zero and the limit of (\ref{hierarchy:quantum
BBGKY in differential form}) is trivial. Thus our regularity assumption is
physical and is even at the critical physical regularity. It is usually
expected that the critical physical regularity is set higher than the lowest
regularity\footnote{%
Depending on the systems, some might be known, some might still be unknown.}
reachable by mathematics proofs after one sees the formal limit assuming a
high regularity. That is not the case here and we are in a physically
enforced low regularity situation. Moreover, as suggested by the first
ill-posedness result regarding the Boltzmann equation in \cite{CH10} by the
authors, and as explained below, the problem studied in this paper happens
to have these two regularities coincide and thus creates a double
criticality and makes things extremely delicate and difficult.\footnote{%
If one considers the particle system as a dynamical system, it's known that
the analysis gets more and more rigid as regularity drops.}

After showing the regularity assumptions are physical and cannot be higher,
one needs to have a corresponding and compatible low regularity theory for
the $N$-body analysis and the limiting Boltzmann equation (\ref%
{eqn:QBEquation}) so that, in the end, one could identify the $N$-body limit
with (\ref{eqn:QBEquation}). To start, one needs to prove hierarchy (\ref%
{hierarchy:quantum BBGKY in differential form}) and equation (\ref%
{eqn:QBEquation}) are well-defined as PDEs under this low regularity. A good
parallel problem to the general audience is the well-definedness problem of
the free boundaries in the hard sphere models \cite{Lanford,CIP,SaintRaymond}
when $N$ is finite. To this end, the $L_{v}^{\infty ,2}$ part can be seen
from hierarchy (\ref{hierarchy:quantum BBGKY in differential form}); the $%
L_{v}^{2,\frac{1}{2}+}$ part can be better understood in the analysis of the
limiting hierarchy (\ref{hierarchy:Boltzmann}) and equation (\ref%
{eqn:QBEquation}) at the collision operator (\ref{eqn:collision kernel}),
mainly due to the \textquotedblleft all integrals are well-defined"
confusion caused by the Schwartz assumption on the interaction potential $%
\phi $; away from the quantum $N$-body requirement, the $H_{x}^{1+}$ part is
actually caused by both the well-definedness of the $x$-trace in the
hierarchy definition and the collision operator (\ref{eqn:collision kernel}%
), but it is easier to realize it from the definition of the solution. We
refer readers to \S \ref{S:preparation}-\ref{sec:Convergence} for the
details.

Once the well-definedness of hierarchy (\ref{hierarchy:quantum BBGKY in
differential form}) and equation (\ref{eqn:QBEquation}) is settled, we need
to prove the solutions being discussed to hierarchy (\ref{hierarchy:quantum
BBGKY in differential form}) and equation (\ref{eqn:QBEquation}) solve these
PDEs everywhere in time. That is, an almost everywhere in time with respect
to some measure solution is not acceptable since it would be kind of weird
that the laws in physics only hold with respect to some measure. It is
better to see this from the well-posedness theory of equation (\ref%
{eqn:QBEquation}) in \S \ref{sec:illposedness}: equation (\ref%
{eqn:QBEquation}) is in fact locally well-posed in $H_{x}^{1+}(L_{v}^{2,0+})$
but such solutions would only solve the PDE a.e. in time, while $%
H_{x}^{1+}(L_{v}^{2,0+})\cap H_{x}^{\frac{1}{2}+}(L_{v}^{2,\frac{1}{2}+})$
is the borderline regularity to solve the PDE everywhere in time. ($%
H_{x}^{1+}(L_{v}^{2,0+})\cap H_{x}^{\frac{1}{2}+}(L_{v}^{2,\frac{1}{2}+})$
is also the borderline regularity for equation (\ref{eqn:QBEquation}) to be
unconditionally well-posed.\footnote{%
T. Kato raised the unconditional well-posedness notion in 1995 \cite{Kato}
when strong but a.e. in time solutions became popoluar. So far, all
unconditional well-posedness, even for NLS and NLW, were proved for
everywhere in time solutions.})

Last but not least, the limiting equation must be well-posed in the working
space as well. Here, by well-posedness, we mean existence, uniqueness and
the uniform continuity of the datum to solution map. It is well known that,
for large $k$, limits like (\ref{limit:propogation of chaos}) are not stable
in norms against small perturbations. If the solution map for equation (\ref%
{eqn:QBEquation}) is not uniformly continuous, then the targeted limit (\ref%
{limit:propogation of chaos}) and the believed approximation, hierarchy (\ref%
{hierarchy:quantum BBGKY in differential form}), could change very much as $%
N\rightarrow \infty $ and invalidate the limit process. In \S \ref%
{sec:illposedness}, we prove equation (\ref{eqn:QBEquation}) is locally
well-posed in $H_{x}^{s}(L_{v}^{2,0+})$ for all $s>1$ and ill-posed in $%
H_{x}^{s}(L_{v}^{2,s_{1}})$ for any $s<1$ and all $s_{1}\in \mathbb{R}^{+}$.%
\footnote{%
The ``bad" impolsion solution family we consider has uniformly bounded
second moments, so assuming in addition finite second moments will not
improve.}\footnote{%
Though implosion solutions cause blow ups for the compressible Euler
equations, as proved here and first by \cite{CH10}, they are no problem for (%
\ref{eqn:QBEquation}), except norm deflations causing ill-posedness at low
regularity. We thank Jiajie Chen for discussion related to this matter.}
That is, we are indeed working at the borderline of well-posedness of
equation (\ref{eqn:QBEquation}). We remark that, the low regularity
well-posedness for (\ref{eqn:QBEquation}) here is required to identify the $%
N $-body limit and (\ref{eqn:QBEquation}) because, physically, the $N$-body
limit process has to be done at such a low regularity though the limit
itself could be very smooth like the local Maxwellian. It just also happens
that it is also the sharp well/ill-posedness separation point.\footnote{%
Apparently, the $N$-body analysis requires such low regularity
well-posedness results and thus provides the physical background for these
results as well. T. Chen, Denlinger, and Pavlovi\'{c} \cite%
{CDP19a,CDP19b,CDP21} might be the first to systematically use dispersive
analysis and reach such low regularity well-posedness for Boltzmann type
equations, which is then proved to be the sharp well/ill-posedness
separation points in \cite{CH10} by X.C. and J.H. See \cite{CSZ3,CSZ4} for
further developments along this line.}

The $L^{2}$ weak limit in Theorem \ref{thm:main} cannot be upgraded to
strong either. The cycle terms with $\pi \neq I$, are $O(1)$ in $L^{2}$ (the
Jacobian is $O(1)$). The weak limit is only possible because the $O(1)$
sized quasifree terms are geometrically stretched. More specifically, an $%
O(1)$ ball in $(\boldsymbol{p}_k, \boldsymbol{q}_k)$ space (coordinates
depending on $N$) gets stretched and flattened upon transformation to $(%
\boldsymbol{x}_k,\boldsymbol{\xi}_k )$ in such a way that its intersection
with an $O(1)$ ball in $(\boldsymbol{x}_k,\boldsymbol{\xi}_k)$ space has
volume tending to $0$.

We do admit that, it would be better to prove the energy assumptions (or
(restricted) quasi-freeness)\footnote{%
This is noted to be very difficult in \cite[p.2]{ErdosFock}. Progress has
been made in a similar situation in \cite{TCMH}.} in Theorem \ref{thm:main}
for a general class of initial data and could be considered a drawback. As
explained and proven, these assumptions are the minimal and necessary
requirement and the physical cases fit exactly here. As long as the
assumption remains valid (which is the case for (generalized) quasifree
solutions), the long time derivation of (\ref{eqn:QBEquation}), and hence
the time-irreversibility from the microscopic law of motion, is now
justified. We also recall again Boltzmann's 1896 comment \cite[Vol. III,
paper 119]{BoltzmannCollection} that \textquotedblleft the Maxwell
distribution law (and hence the Boltzmann theory) is not a theorem from
ordinary mechanics and cannot be proven from mechanical assumptions". The
derivation in this paper automatically applies to any future work using
higher regularities while a higher regularity is always more difficult to
prove to hold than a lower one and may not be true physically (and might
lead to trivial limits). Again, we point out in \S \ref{S3} that local
Maxwellians, and their small perturbations, are qualified data for Theorem %
\ref{thm:main} in the limit, and we expect that, counting in the damping
effects,\footnote{%
The equation derived from this physical model is with an angular cut-off, so
we do not expect hypoelliptic structure(s) coming from the non-cut-off case.
How to derive a non-cut-off equation which generated many nice work, for
example, \cite{AMUXY,GS11}, is also open.} we will be able to prove the
energy assumption for this class of datum and that is our next step.

\subsection{Incorporation of the hard-sphere and the inverse power law
models \label{Subsec:CommentOnTheHardSphere}}

Equation (\ref{eqn:QBEquation}) incorporates the celebrated hard-sphere
model and effectively the inverse power law model of power $-1$ (the $\gamma
=-1$ model) at the same time. It extends both models and interconnects them
together as temperature changes, as predicted in theoretical physics and
observed in experiements.

For demonstration purposes, let us assume $\left\vert \hat{\phi}\right\vert
^{2}$ is a bump function supported in $\left[ \frac{\mathfrak{c}_{1}}{2},2%
\mathfrak{c}_{2}\right] $ and is $1$ in $\left[ \mathfrak{c}_{1},\mathfrak{c}%
_{2}\right] $. Then the collision kernel in (\ref{eqn:collision kernel})
equals exactly the hard sphere collision kernel $\left\vert v\cdot \omega
\right\vert $ for $v\in \left[ \mathfrak{c}_{1},\mathfrak{c}_{2}\right] $.
For the $\gamma =-1$ model part, it is not so obvious. It is easier to see
this from the loss term. Consider the angular integral 
\begin{equation*}
\int_{\mathbb{S}^{2}}dS_{\omega }\left\vert \omega \cdot v\right\vert
\left\vert \hat{\phi}\left( \left( \omega \cdot v\right) \omega \right)
\right\vert ^{2},
\end{equation*}%
for large $\left\vert v\right\vert $ in which $\left\vert \hat{\phi}%
\right\vert ^{2}$ is like a bump function. For the integrand to be nonzero,
one would need $\left\vert \omega \cdot \frac{v}{\left\vert v\right\vert }%
\right\vert =\cos \theta \sim 1/\left\vert v\right\vert \ll 1$. By the
geometry, if say $\frac{v}{\left\vert v\right\vert }$ is pointing at the
north pole, and $\omega $ is on $\mathbb{S}^{2}$, then the set on $\mathbb{S}%
^{2}$ almost perpendicular to $\frac{v}{\left\vert v\right\vert }$ is
basically the equator times the width $1/\left\vert v\right\vert $. Hence,
the measure of the $\omega $ integration set is like $1/\left\vert
v\right\vert $, thus 
\begin{equation*}
\int_{\mathbb{S}^{2}}dS_{\omega }\left\vert \omega \cdot v\right\vert
\left\vert \hat{\phi}\left( \left( \omega \cdot v\right) \omega \right)
\right\vert ^{2}\sim 1/\left\vert v\right\vert .
\end{equation*}%
That is, formally, we expect the $\gamma =-1$ model behaviors for large $%
\left\vert v\right\vert $. These two models have been tested countless times
in their regimes of validity. The famous hard-sphere model is in the regime
of moderate/atmospherical temperature, in which, it also implies the
classical ideal gas laws and Newton's cooling law,\footnote{%
Both of them are known to be invalid outside of some temperature range so
there is no contradiction.}\footnote{%
The regime of validity of the ideal gas laws is well-known. The failure of
the exponential to equilibrium law at high temperature might be first
documented by Dalton \cite{R}.} and its rigorous derivation from Newtonian $%
N $-body dynamics has been studied in many work, see, for example, \cite%
{Lanford,CIP,SaintRaymond} using analytic methods (hence does not need \emph{%
a priori} bounds and is up to a sufficiently small time). The $\gamma =-1$
model applies to high temperature situations while, to the best of the
authors' knowledge, this paper offers the first rigorous derivation of an
inverse power law model (effectively and with a cut-off though) from $N$%
-body systems.

Let us recall the fact that the mean speed of the molecules in a gas is
proportional to the temperature and the speed is distributed fairly close to
the mean speed with variance also proportional to the temperature. That is,
equation (\ref{eqn:QBEquation}) interconnects the hard-sphere and $\gamma =-1
$ models in the sense (at least formally) that it behaves like the
hard-sphere model at moderate/atmospherical temperature and the $\gamma =-1$
model at high\footnote{%
The ``high" here is relative as it is well below the noticeable ionization
temperature $\sim 3\times 10^{3}$K at which point some Vlasov theory comes
into play.} temperature.

The paper \cite{CHe} by X.\thinspace C. and L.\thinspace He justifies
mathematically the above observation on $\mathbb{T}^{3}$. For initial
condition near a Maxwellian, they prove that, for large\footnote{%
This \textquotedblleft large" is also relative as particle speeds in (\ref%
{eqn:QBEquation}) and the hard-sphere model are all 4-5 digits smaller than
the speed of light in reality and one could say they both have bounded
collision kernels in practice. However, the unboundedness of the hard-sphere
collision kernel has indeed motivated many innnovations in mathematics
theory and have propelled mathematics forward.} $\mathfrak{c}_{2}$,
solutions to (\ref{eqn:QBEquation}) and the hard-sphere model are close for
a long time depending on $\mathfrak{c}_{2}$, and if $\mathfrak{c}_{1}^{-1},%
\mathfrak{c}_{2}\rightarrow \infty $, (\ref{eqn:QBEquation}) converges to
the hard sphere model. Moreover, they prove that, for fixed $\mathfrak{c}%
_{2} $ and fixed background temperature, solutions to (\ref{eqn:QBEquation})
will tend to equilibrium at a exponential rate (signature of the hard
potentials and Maxwellian particles) for a long time depending on $\mathfrak{%
c}_{2}$ and the background temperature, then at a polynomial rate (signature
of soft potentials) determined by the datum's energy and the $\gamma =-1$
model. This hints at the physical fact that the specific heat capacity (and
hence the adiabatic index) of matter increases as temperature increases
(hence it takes longer to reach equilibrium).\footnote{%
H$_{2}$O could be the most checked example, though this has also been
observed for He.}

That is, (\ref{eqn:QBEquation}), derived from quantum $N$-body dynamics, in
its regime of validity, rigorously interconnects and enhances the
hard-sphere and the inverse power law models. A further goal is to uncover
the physical meanings of $\mathfrak{c}_{1}$ and $\mathfrak{c}_{2}$ or, more
specifically, to determine the empirical correspondence of regions in which $%
| \hat{\phi}|^{2}\sim O(1)$ or $| \hat{\phi}|^{2}\sim o(1)$. In general,
there are many interpretations and models about the microscopic
interactions, but experimental science tends to verify their effects and
implications instead of providing a direct observational window. We plan to
investigate how the heat capacity depends on $\mathfrak{c}_{1}$ and $%
\mathfrak{c}_{2}$ in (\ref{eqn:QBEquation}). To gain more insight into this
topic, a toy problem could be testing numerically if one could match more
digits of the heat capacity using Boltzmann theory by varying $\mathfrak{c}%
_{1}$ and $\mathfrak{c}_{2}$.

\begin{flushleft}
\textbf{Acknowledgements.} X.C. was supported in part by the NSF grant
DMS-2005469 and the Simons Fellowship \#916862. J.H. was partially supported
by the NSF grant DMS-2055072.
\end{flushleft}

\section{Proof of the Main Theorem\label{sec:proof of main}}

\subsection{The quantum set-up and the trivial limit puzzle}

Write the propagator as $S^{(k)}(t)\equiv
\dprod\nolimits_{j=1}^{k}e^{-tv_{j}\cdot \nabla _{x_{k}}}$, we first rewrite
(\ref{hierarchy:quantum BBGKY in differential form}) in Duhamel form: 
\begin{eqnarray}
f_{N}^{(k)}(t_{k}) &=&S^{(k)}(t_{k})f_{N}^{(k)}(0)+\frac{1}{\sqrt{%
\varepsilon }}\int_{0}^{t_{k}}S^{(k)}(t_{k}-t_{k+1})A_{\varepsilon
}^{(k)}f_{N}^{(k)}(t_{k+1})dt_{k+1}  \label{hierachy:PreBBGKYinDuhamel} \\
&&+\frac{N}{\sqrt{\varepsilon }}\int_{0}^{t_{k}}S^{(k)}(t_{k}-t_{k+1})B_{%
\varepsilon }^{(k+1)}f_{N}^{(k+1)}(t_{k+1})dt_{k+1}.  \notag
\end{eqnarray}%
Iterating relation (\ref{hierachy:PreBBGKYinDuhamel}) once, we have%
\begin{eqnarray}
&&f_{N}^{(k)}(t_{k})  \label{hierarchy:RealBBGKY} \\
&=&S^{(k)}(t_{k})f_{N}^{(k)}(0)+\frac{1}{\sqrt{\varepsilon }}%
\int_{0}^{t_{k}}S^{(k)}(t_{k}-t_{k+1})A_{\varepsilon
}^{(k)}f_{N}^{(k)}(t_{k+1})dt_{k+1}  \notag \\
&&+\frac{N}{\sqrt{\varepsilon }}\int_{0}^{t_{k}}S^{(k)}(t_{k}-t_{k+1})B_{%
\varepsilon }^{(k+1)}S^{(k+1)}(t_{k+1})f_{N}^{(k+1)}(0)dt_{k+1}  \notag \\
&&+\frac{N}{\varepsilon }\int_{0}^{t_{k}}S^{(k)}(t_{k}-t_{k+1})B_{%
\varepsilon
}^{(k+1)}(\int_{0}^{t_{k+1}}S^{(k)}(t_{k+1}-t_{k+2})A_{\varepsilon
}^{(k+1)}f_{N}^{(k+1)}(t_{k+2})dt_{k+2})dt_{k+1}  \notag \\
&&+\frac{N^{2}}{\varepsilon }\int_{0}^{t_{k}}S^{(k)}(t_{k}-t_{k+1})B_{%
\varepsilon
}^{(k+1)}(\int_{0}^{t_{k+1}}S^{(k+1)}(t_{k+1}-t_{k+2})B_{\varepsilon
}^{(k+2)}f_{N}^{(k+2)}(t_{k+2})dt_{k+2})dt_{k+1}  \notag \\
&\equiv
&S^{(k)}(t_{k})f_{N}^{(k)}(0)+R_{N}^{2(k)}f_{N}^{(k)}(t_{k})+R_{N}^{3(k+1)}f_{N}^{(k+1)}(t_{k})+(R_{N}^{4(k+1)}+Q_{N}^{(k+1)})f_{N}^{(k+1)}(t_{k})
\notag \\
&&+R_{N}^{5(k+2)}f_{N}^{(k+2)}(t_{k}).  \notag
\end{eqnarray}%
where%
\begin{eqnarray}
R_{N}^{4(k+1)}(t_{k}) &\equiv &\sum_{\ell=1}^{k}\sum_{\substack{ 1\leqslant
i<j\leqslant k+1  \\ (i,j)\neq (\ell,k+1)}}%
R_{N,l,i,j}^{4(k+1)}f_{N}^{(k+1)}(t_{k})  \label{def:R4} \\
&=&\frac{N}{\varepsilon }\sum_{\ell=1}^{k}\sum_{\substack{ 1\leqslant
i<j\leqslant k+1  \\ (i,j)\neq (\ell,k+1)}}%
\int_{0}^{t_{k}}S^{(k)}(t_{k}-t_{k+1})B_{l,k+1}^{\varepsilon }  \notag \\
&&(\int_{0}^{t_{k+1}}S^{(k)}(t_{k+1}-t_{k+2})A_{i,j}^{\varepsilon
}f_{N}^{(k+1)}(t_{k+2})dt_{k+2})dt_{k+1},  \notag
\end{eqnarray}%
and%
\begin{eqnarray}
Q_{N}^{(k+1)}f_{N}^{(k+1)}(t_{k}) &\equiv
&\sum_{j=1}^{k}Q_{j,k+1}^{\varepsilon }f_{N}^{(k+1)}(t_{k})  \label{def:Q_N}
\\
&=&\sum_{j=1}^{k}\frac{N}{\varepsilon }%
\int_{0}^{t_{k}}S^{(k)}(t_{k}-t_{k+1})B_{j,k+1}^{\varepsilon }  \notag \\
&&(\int_{0}^{t_{k+1}}S^{(k)}(t_{k+1}-t_{k+2})A_{j,k+1}^{\varepsilon
}f_{N}^{(k+1)}(t_{k+2})dt_{k+2})dt_{k+1}.  \notag
\end{eqnarray}

Hierarchies (\ref{hierachy:PreBBGKYinDuhamel}) and (\ref{hierarchy:RealBBGKY}%
) are equivalent by definition. On the one hand, (\ref{hierarchy:RealBBGKY})
is longer and more complicated than (\ref{hierachy:PreBBGKYinDuhamel}). On
the other hand, testing the formal limit under smooth condition is usually
the first thing to try in dealing with mean-field limits. Interestingly, for
a very smooth $f_{N}^{(k)}$, the $A,B$ terms in (\ref%
{hierachy:PreBBGKYinDuhamel}) actually tend to zero. In fact, one just needs
the customary bound (\ref{bound:customary bound}) to hold with $r>1$, then $%
N\varepsilon ^{-\frac{1}{2}}B$ and $\varepsilon ^{-\frac{1}{2}}A$ tend to $0$
as $N\rightarrow \infty $, that is, (\ref{hierachy:PreBBGKYinDuhamel})
yields a trivial limit at regularity higher than $H_{x}^{1+}$. At the same
time, (\ref{hierarchy:RealBBGKY}) will produce a nontrivial limit with (\ref%
{eqn:QBEquation}) as the mean-field equation if tested using smooth
functions. While the Boltzmann equation (\ref{eqn:QBEquation}) certainly
does not agree with a trivial transport equation, the iteration of (\ref%
{hierachy:PreBBGKYinDuhamel}) yielding (\ref{hierarchy:RealBBGKY}) also
looks especially suspicious. This is the trivial limit puzzle stated in \cite%
{BCEP1} and more specifically \cite[p.11]{BCEP4}.

This puzzle should be the first thing to solve in the derivation of (\ref%
{eqn:QBEquation}) and we answer it in detail in \S \ref{S3} and \S \ref%
{S:PCE}. It turns out, the quantum $N$-body solutions to (\ref%
{hierachy:PreBBGKYinDuhamel}) coming from (\ref{eqn:von Neumann equation})
is not smooth at all. The simplest and nontrivial example is to check the
regularity using the local Maxwellian. On the one hand, we prove that $%
f_{N}^{(k)}$ can never be a direct tensor product of local Maxwellians
unless $N\rightarrow \infty $. On the other hand, we compute the
expectations of the Sobolev regularity of the quantum $N$-body states
converging to a direct tensor product of local Maxwellians and we found the
regularity conditions in \S \ref{s:intro cycle regularity}. This computation
thus proves that the $N$-body solution cannot be more regular for the
problem considered in this paper.\footnote{%
One might argue that quantum states far away from the local Maxwellian might
be even rougher, but it would result in the limiting equation being
ill-posed and hence unlikely physically.} It also turns out that, if the
solution is quantum quasi-free, then it satisfies the regularity conditions.
That is, the low regularity setting we seek is not only mathematical, but
also physical.

Under such physical regularity conditions, we prove that $N\varepsilon ^{-%
\frac{1}{2}}B$ is actually an $O(1)$ quantity. That is, the limit is not
trivial for solutions under the problem's setting. On the other hand, the
limit of $N\varepsilon ^{-\frac{1}{2}}B$ is unclear, thus we iterate (\ref%
{hierachy:PreBBGKYinDuhamel}) once to its equivalent form (\ref%
{hierarchy:RealBBGKY}) from which we can conclude a cleanly formatted limit.%
\footnote{%
Such iterations of basic hierarchy yielding a managable limit has a similar
scenario in the NLS case, which is only formally realized by X.C. \& J.H. in 
\cite{CH9}, (see also \cite{CSWZ}), was implicitly used in \cite{C3,CH2,CH5}%
, and might be first hinted in \cite{CP5} by T. Chen \& Pavlovic.} Thence
the trivial limit puzzle is solved. (A more quantitative puzzle can be
provided once one finishes \S \ref{S:preparation}.)

\subsection{Four sides of the Boltzmann equation\label{sec: 4 sides}}

As Theorem \ref{thm:main} is optimal, its proof is extremely rigid and
tight. One needs to explore and invoke every $\varepsilon$ room of play to
avoid failing the proof or losing the optimality (and the physicality at the
same time as they are tied together). One thing we do is to explore all four
sides of the Boltzmann equation (not counting the cycle coordinates (\ref%
{def:cycle coordinates})).

The usual side $(t,x,v)$ is associated with the kinetic transport operator $%
\partial _{t}+v\cdot \nabla _{x}$. To be very honest, they are not exactly
used anywhere in our proof. We use this side solely to state the results in
their usual format.

The $(t,x,\xi )$ or the $\sim$ side is associated with the hyperbolic
symmetric Schr\"{o}dinger operator $i\partial _{t}+\nabla _{\xi }\cdot
\nabla _{x}$. We denote functions on this side with a $\sim$. It is obtained
by applying the inverse Fourier transform to $v$ of the usual $(t,x,v)$
side. It is not new and has been used by many authors before. (See \cite%
{CDP19a,CDP19b} for the Wigner transform version.) In our setting, this side
sort of undoes the Wigner transform and gives a more Schr\"{o}dinger-like
equation for which there are Strichartz estimates \cite{KT98}. One
preconception about this side is that it requires the Fourier transform of
the collision kernel, which is only explicit in some cases, to work.
However, the vantage point of the equation $i\partial _{t}+\nabla _{\xi
}\cdot\nabla_{x}$ suggests to consider estimates in the $X_{s,b}$ spaces
which are carried out on the dual side and thus do not require computing the
Fourier transform of the collision kernel. We study the remainder term $%
R_{N}^{2(k)}$ in (\ref{hierarchy:RealBBGKY}) and the well/ill-posedness of (%
\ref{eqn:QBEquation}) on this side. For the quasifree terms, the cycle
coordinates $(p^\pi,q^\pi)$ associated to a permutation $\pi$ are also most
naturally related to the $(x,\xi)$ coordinates.

The $(\tau ,\eta ,v)$ or the $\wedge $ side is associated with the
multiplication operator $\tau +v\cdot \eta $. We denote functions on this
side with a $\wedge $. It is obtained by applying the Fourier transform on $%
(t,x)$ of the usual $(t,x,v)$ side or on the whole $(t,x,\xi)$ of the $%
(t,x,\xi )$ side. That is, we construct the $X_{s,b}$ Fourier restriction
norm spaces as in \cite{BE1,B1,KM,RR1} for (\ref{eqn:QBEquation}), and prove
multilinear estimates regarding $Q$ in $X_{s,b}$, to obtain bilinear
improvements over the Strichartz estimates in the $(t,x,\xi )$ side. As one
works on the characteristic surface on this side, it does not need the
Fourier transform of the collision kernel and gives a direct treatment of
the problem in terms of multilinear estimates without oscillation. This
perspective was used in \cite{CH10} to obtain the first separation of
well/ill-posedness of Boltzmann type equations. A drawback is that direct
analysis can involve numerous cases and technical geometrical
decompositions. We use this side to deal with one of the difficulties, the
so called remainder term $R_{N}^{5(k)} $ in (\ref{hierarchy:RealBBGKY}).

The $(t,\eta ,\xi )$ or the $\vee$ side is associated with the intertwined
kinetic transport operator $\partial _{t}+\eta \cdot \nabla _{\xi }$. We
denote functions on this side with a $\vee $, and it is obtained by applying
the Fourier transform to $x$ of the $(t,x,\xi )$ side. We find the
representation of the $B$ operator on this side more convenient, especially
when $B$ is composed with other operators (Duhamel, an $A$, or another $B$).
This side has oscillatory terms but most of them vanish in the $\epsilon \to
0$ limit, meaning that these oscillations cannot be important for uniform
estimates. We carry out many key $N$-body estimates and estimates regarding
the collision kernel on this side. It is still a bit mysterious why this is
effective while the other transport side $(x,v)$ typically involves
oscillations that cannot be ignored. Even more unexpected is that we found,
in the uniqueness proof, an application of the so-called kinetic transport
Strichartz estimates \cite{BBGL14,BP01,Ov11} on this side although they were
originally conceived for application on the usual $(t,x,v)$ transport side.
This side might need further investigation in the future.

Before we start the proof of the main theorem, an interesting point to
reflect upon is whether these four sides provide new information or if they
can shed light on the classical hard-sphere model. The $(\tau ,\eta ,v)$ and 
$(t,\eta ,\xi )$ sides seem very difficult to define in the hard-sphere
model, due to the freely moving billiards and their free boundaries in
space. That is, even in terms of techniques, the quantum problem here is
indeed very different from the classical problem. Each has its own beauty
although the quantum problem should ultimately incorporate the classical
problem in a limiting regime as explained in \S \ref%
{Subsec:CommentOnTheHardSphere}. Maybe one could understand the hard sphere
model better by changing the model a bit to define the $(\tau ,\eta,v) $ and 
$(t,\eta ,\xi )$ sides.

\subsection{Proof of the Main Theorem}

Having given an overview of the trivial limit puzzle and the existence of
solutions with suitable regularity, we now turn to a discussion of the proof.

Due to the double criticality -- the $N$-body solution cannot carry higher
regularity and the limiting Boltzmann equation cannot admit lower regularity
as one has to stay physical and the other one has to remain well-posed --
there are few off-the-shelf lemmas available to employ. A problem sharing
similar ``endpoint" flavor is the derivation of the $H^{1}$-critical NLS at $%
H^{1}$-regularity \cite{CH7,CH8} by X.C. and J.H., which was constructed on
the scaffold of prior work \cite%
{CHPS,CP2,CP5,C2,C3,CH1,CH2,CH3,CH4,CH5,CH6,CH9,CSZ,CSWZ,CPU,KSS,KM1,GSS,HS1,HS2,S1}
by many authors, on the derivation of the $H^{1}$-subcritical NLS at $H^{1}$%
-regularity with the hierarchy method, pioneered by Erd\"{o}s, Schlein, and
Yau \cite{E-S-Y2,E-S-Y5,E-S-Y3}. However, even at the critical level, the
structures, methodologies, and analysis of the Boltzmann equation and NLS
are totally different. For example, unlike the NLS cases, we do not have a
``subcritical" case to refer to, and the Boltzmann equations' well-posedness
threshold does not lie at the scaling criticality.\footnote{%
Though \cite{CH10} is on the Boltzmann equation using dispersive techniques,
it is simultaneously an example on how different the Boltzmann equations and
the usual dispersive equations, NLS/NLW, are.} Thus, we have to build much
of our analysis from scratch.

\subsubsection{Step 1. Preparation of the $N$-body analysis and estimates of
term sizes in \S \protect\ref{S:preparation}}

Recall that the BBGKY family $\mathcal{F}=\{f_{N}=\{f_{N}^{(k)}\}_{k=1}^{N}%
\}_{N=1}^{\infty }$ satisifies hierarchy (\ref{hierarchy:RealBBGKY}) under
the (cycle) regularity conditions specified in \S \ref{s:intro cycle
regularity}, and the main object of study is the limit of $f_N$ if there is
one. The first step is analyzing the sizes of the terms in (\ref%
{hierarchy:RealBBGKY}), which are all highly oscillatory integrals on the $%
(x,v)$ side, so that one can arrange the proof later. This is where the four
sides in \S \ref{sec: 4 sides}\ natually come in for estimating and
comparing. This is done in \S \ref{S:preparation}.

We warm up by estimating the $\varepsilon ^{-\frac{1}{2}}A$ operator in the $%
(x,\xi )$ side in Lemma \ref{L:basic-A}. We can immediately see that the $%
(x,\xi)$ representation is more convenient than the $(x,v)$ representation.
For $\frac34$ derivatives which is applicable to the whole $f_{N}^{(k)}$, we
prove that the $\varepsilon^{-\frac{1}{2}}A$ operator and $R_{N}^{2(k)}$
tend to zero strongly. We then prove that the $N\varepsilon^{-\frac{1}{2}}B$
operator also tends to zero strongly if one has $H_{x}^{1+}$ derivative, in
Lemma \ref{L:basic-B}, by working on the ($\eta ,\xi $) side. The operator $%
B $ expressed on the $(\eta,\xi)$ side only involves oscillation with an $%
\epsilon$ coefficient that vanishes as $\epsilon \to 0$, and thus the
estimate reduces to managing positive weights, and is readily shown to be
sharp. Lemma \ref{L:basic-B} together with \S \ref{S:PC}-\ref{S:PCE}, and \S %
\ref{S3} settles the trivial limit puzzle and justifies the low regularity
assumption at which $N\varepsilon ^{-\frac{1}{2}}B$ is $O(1)$ but without a
clear limit. (As assumption of uniform smoothness of the densities gives a
zero limit and thus hinders a formal computation of the limit.) We can now
legitimately go to the next iteration (\ref{hierarchy:RealBBGKY}) and find
the limit by tending to the main difficult terms, $Q_{N}^{(k+1)}$, $%
R_{N}^{4(k+1)}$, and $R_{N}^{5(k+2)}$ which are part of the technical
highlights.

We prove in Proposition \ref{P:QEstimates} that the $Q_{N}^{(k+1)}$ operator
is strongly bounded at the price of just $3/4$ derivative, so that we can
take the $N\rightarrow \infty $ limit later on. Even though the proof is
done in the ($\eta ,\xi $) side which admits a convenient representation for 
$B$, the proof is more difficult than the preliminary estimates mentioned
above mainly due to the role of the time integration coming from the Duhamel
operator sandwiched between the $B$ and $A$ operators. Rescaling this time
variable produces an $\epsilon$ gain factor at the expense of leaving a
rescaled time integral that must be carried out effectively over the whole
real line. It turns out when $L^p_{\xi}$ norms are brought to the inside by
Minkowski's integral inequality, scaling produces a power of this rescaled
time that is integrable provided $p=3\pm$. Proposition \ref{P:QEstimates} is
at the same time a foundational estimate for the limit collision operator (%
\ref{eqn:collision kernel for hierarchy}), which we will use for the
analysis of the infinite hierarchy and the limiting equation. We recall that
the needed $3/4$ $x$-derivative in Proposition \ref{P:QEstimates} is the
borderline regularity satisfied by the $N$-body solution as a whole instead
of only the core part. This unexpected match of regularity thus checks again
that our analysis is optimal and physical.

After the proof of Proposition \ref{P:QEstimates} regarding where the
collision kernel should arise, we deal with $R_{N}^{4(k+1)}$ in Proposition %
\ref{P:R4Estimates}. We again work in the ($\eta ,\xi $) side, with a
careful analysis of all the cases, we prove that $R_{N}^{4(k+1)}$ tends to
zero weakly in $L_{x}^{2}$ but strongly in $H_{x}^{-\frac{3}{4}-}$ at the
price of $1/3$ $x$-derivative, hence it applies to both the core term and
the irrgular parts of $f_{N}^{(k)}$. An interesting comment to Propositions %
\ref{P:QEstimates} and \ref{P:R4Estimates} is that they could also be proved
using Strichartz estimates on the $(x,\xi )$ side and this suggests further
investigation of the relation between the ($\eta ,\xi $) and $(x,\xi )$
sides.

For the last term in (\ref{hierarchy:RealBBGKY}), $R_{N}^{5(k+2)}$, we work
in the $(\eta ,v)$ side. $R_{N}^{5(k+2)}$ is the most complicated term in (%
\ref{hierarchy:RealBBGKY}), though it is not particularly bad from the
perspective of the ending estimate, which is in fact better than $%
R_{N}^{4(k+1)}$. The estimate regarding $R_{N}^{5(k+2)}$, despite being a
strong $L^{2}$ estimate, requires the Duhamel operator to hold, that is, it
is actually a dual $X_{s,b}$ type Strichartz estimate in diguise. The direct 
$X_{s,b}$ analysis reveals the exact mechanism of gaining an $\varepsilon $
from the Duhamel iteration sandwiched in the two $B$'s in term $R_{N}^{5(k)}$
and justifies that the $(\eta ,v)$ side is the right place to work it out.
Proposition \ref{P:R5Estimates} records this dual Strichartz estimate%
\footnote{%
To the best of the authors' knowledge, the $X_{s,b}$ analysis for the
Boltzmann equation was started in \cite{CH10} in which there is no dual
Strichartz.} for analysis regarding the Boltzmann equation and proves that $%
R_{N}^{5(k+2)}$ tends to zero for the core term. Then an advantage of $%
X_{s,b}$ techniques surfaces: such a direct frequency argument allows the
derivatives at different variables to be freely redistributed and creates
the flexibility to fit in the irregular part of $f_{N}^{(k)}$.

We can now take the formal limit of (\ref{hierarchy:RealBBGKY}) mainly
regarding $Q_{N}^{(k+1)}$. On the new ($\eta ,\xi $) side, its formal limit
is very obvious and in clean format as no oscillation forms are present. We
can then compute the limit in the ($x,\xi $) side which will be needed
later, and then in the usual $(x,v)$ form, in \S \ref{S:coll-op-deriv}. They
are actually new representations of the collision operator.

Taking the limit also yields Proposition \ref{P:Q0Estimates} which is the $%
\varepsilon =0$ version of Proposition \ref{P:QEstimates}. In fact,
carefully examining the new form of the collision operator in \S \ref%
{S:coll-op-deriv} inspires new estimates which will lead to the optimal
unconditional well-posedness of (\ref{eqn:QBEquation}) in \S \ref%
{sec:uniqueness} and \ref{sec:illposedness}. We record the new estimate as
Proposition \ref{P:general-fixed-time-2}, a fixed-time $L^{p}$ bilinear
estimate. Different from other technical estimates in \S \ref{S:preparation}%
, Proposition \ref{P:general-fixed-time-2} is a $L^{p}$ estimate and hence
we call in some very different and delicate $L^{p}$ harmonic analysis and
the Littlewood-Paley square function\footnote{%
See \cite{Stein}.} to prove it.

We are now left with the size estimates for $R_{N}^{3(k+1)}$, $Q_{N}^{(k+1)}$
and $R_{N}^{5(k+2)}$ applied to the irregular parts, $f_{N,\pi }^{(k)}$ with 
$\pi \neq I$, under the cycle regularity condition, as $R_{N}^{2(k)}$ and $%
R_{N}^{4(k+1)}$ are already compatible with the roughness. We do so in \S %
\ref{S:PC}-\ref{S:PCE}. We 1st set up more suitable notations and provide
examples to the cycle regularity for $\pi \neq I$ in \S \ref{S:PC}. We also
prove (generalized) quantum quasi-freeness implies the regularity
assumptions in \S \ref{s:intro cycle regularity} there. We then compute and
handle $R_{N}^{3(k+1)}f_{N,\pi }^{(k+1)}$ and $Q_{N}^{(k+1)}f_{N,\pi
}^{(k+1)}$ with the cycle regularity in complete detail in \S \ref{S:PCE}.
In particular, we prove (\ref{hierachy:PreBBGKYinDuhamel}) has a $O(1)$ $B$%
-term while (\ref{hierarchy:RealBBGKY}) has a tending to zero $%
R_{N}^{3(k+1)} $ terms as the extra iteration provides the chance of hitting
a minuscule better symmetry spot in the time interval due to the high number
of collisions. We omit the handling of the $R_{N}^{5(k+2)}f_{N,\pi }^{(k+2)}$
term as, it should be clear that it follows a similar pattern we have
demonstrated in other estimates.

So far, in \S \ref{S:preparation}, we have completed the full picture of the
sizes and inner mechanism of (\ref{hierarchy:RealBBGKY}) and ready to put
them to use in \S \ref{sec:CompactnessConvergence}-\ref{sec:uniqueness}. The
calculation in \S \ref{S:preparation} also provides a quantitative answer to
the trivial limit puzzle: (1) physically, $f_{N}^{(k)}$ has 2 parts, the
core term and the ``goes-to-zero" but irregular part; (2) the irregular
part, though it goes to $0$, is not small under the $B$ operator so (\ref%
{hierachy:PreBBGKYinDuhamel}) is unbalanced and has no clear limit; (3) $%
Q_{N}$ applied to the irregular part is small hence (\ref%
{hierarchy:RealBBGKY}) is balanced and effectively a hierarchy of the core
term from which the limit can be obtained and is nontrivial.

\subsubsection{Step 2. Compactness in \S \protect\ref%
{sec:CompactnessConvergence}}

With the preparation done in \S \ref{S:preparation}, we define a metric
space $(\Lambda ^{\ast },\rho )$ to study the limit based on the $%
H_{x}^{1+}L_{v}^{2,\frac{1}{2}+}$ space. Different from usual, due to the $%
\pi \neq I$ cycle terms, the sequence $\mathcal{F}$ is not in a very good
space for compactness. Hence, the compactness and convergence parts of this
paper are also unusual. We study the projected sequence $P\mathcal{F}=
\left\{ P_{N}f_{N}=\left\{ P_{N}^{(k)}f_{N}^{(k)}\right\} _{k=1}^{N}\right\}
_{N=1}^{\infty }$ first, then comeback in Step 3 / \S \ref{sec:Convergence}
to conclude $\left\{ f_{N}^{(k)}\right\} $ has the same limit as $%
N\rightarrow \infty $. We prove in \S \ref{sec:CompactnessConvergence} that $%
P\mathcal{F}$ is compact in our metric space $(\Lambda ^{\ast },\rho )$
based on a relatively crude estimate that 
\begin{equation*}
\Vert P_{N}^{(k)}f_{N}^{(k)}\Vert _{C([0,T];H_{\boldsymbol{x}_{k}}^{1+}L_{%
\boldsymbol{v}_{k}}^{2,\frac{1}{2}+})}\leq 2^{-k}C^{k}k!.
\end{equation*}%
Hence, limit points of $P\mathcal{F}$ are well-defined. The above crude
estimate is also good enough for the convergence part, but not enough for
the uniqueness part. Thus we prove a finer estimate 
\begin{equation*}
\Vert P_{N}^{(k)}f_{N}^{(k)}\Vert _{C([0,T];H_{\boldsymbol{x}_{k}}^{1+}L_{%
\boldsymbol{v}_{k}}^{2,\frac{1}{2}+})}\leq 2^{-k}C^{k}\Big(1+\sum_{\substack{
\pi \in S_{k}  \\ \pi \neq I}}\epsilon ^{(\ell (\pi )-m(\pi ))/2}\Big)
\end{equation*}%
in Lemma \ref{L:PCcutoff} for the $\pi \neq I$ cycle terms so that we can
conclude a proper regularity bound of the limit points.

\subsubsection{Step 3. Convergence and the emergence of the collision kernel
in \S \protect\ref{sec:Convergence}}

With the preparation done in \S \ref{S:preparation}, we prove that every
limit point $\left\{ f^{(k)}\right\} $ of $P\mathcal{F}$ in $(\Lambda ^{\ast
},\rho )$ coming from \S \ref{sec:CompactnessConvergence} satisfies the
infinite Boltzmann hierarchy 
\begin{equation}
\left( \partial _{t}+\mathbf{v}_{k}\cdot \nabla _{\mathbf{x}_{k}}\right)
f^{(k)}=Q^{(k+1)}f^{(k+1)}  \label{hierarchy:Boltzmann}
\end{equation}%
where the collision term can be decomposed into 
\begin{equation*}
Q^{(k+1)}f^{(k+1)}\equiv \sum_{j=1}^{k}Q_{j,k+1}f^{(k+1)}
\end{equation*}%
if we write the collision operator pieces into the gain/loss terms,%
\begin{eqnarray}
Q_{j,k+1}f^{(k+1)} &\equiv &Q_{j,k+1}^{+}f^{(k+1)}-Q_{j,k+1}^{+}f^{(k+1)}
\label{eqn:collision kernel for hierarchy} \\
&=&\frac{1}{8\pi ^{2}}\int_{\mathbb{R}^{3}}dv_{k+1}\int_{\mathbb{S}%
^{2}}dS_{\omega }\left\vert \omega \cdot \left( v-u\right) \right\vert
\left\vert \hat{\phi}\left( \left( \omega \cdot \left( v-u\right) \right)
\omega \right) \right\vert ^{2}  \notag \\
&&f(\mathbf{x}_{k},x_{k+1},v_{1},...,v_{j-1},v_{j}^{\ast
},v_{j+1},...v_{k},v_{k+1}^{\ast })-f(\mathbf{x}_{k},x_{k+1},\mathbf{v}%
_{k+1}).  \notag
\end{eqnarray}%
Then, together with Lemma \ref{L:PCcutoff}, we conclude for some $C>0$ that 
\begin{equation*}
\sup_{t\in \lbrack 0,T]}\Vert f^{(k)}\Vert _{H_{\boldsymbol{x}_{k}}^{1+}L_{%
\boldsymbol{v}_{k}}^{2,\frac{1}{2}+}}\leqslant C^{k}
\end{equation*}%
to be ready for the uniqueness argument in Step 4 / \S \ref{sec:uniqueness}.
We also prove that if there is only 1 limit point $\left\{ f^{(k)}\right\} $
of $P\mathcal{F}$, then $f_{N}^{(k)}$ also converges weakly in $%
L_{x}^{2}L_{v}^{2}$ to $f^{(k)}$.

Looking back from this point, one might ask why whether we could have based $%
(\Lambda ^{\ast },\rho )$ on $L_{x,v}^{2}$ and simplify the argument by
removing the projections. This seems not possible -- we need the test
functions in \S \ref{sec:CompactnessConvergence}-\ref{sec:Convergence} to be
very weak, due to fact that a test function at tier $k+1$ is generated from
a smooth test function at tier $k$ and the adjoint of the collision
operator, and the resulting composed tier $k+1$ test function cannot lie in $%
L^2_{x,v}$. The unusual compactness and convergence argument is a minor
novelty compared to the estimates in \S \ref{S:preparation}.

\subsubsection{Step 4. Uniqueness in \S \protect\ref{sec:uniqueness}}

We prove a $H_{x}^{1+}(L_{v}^{2,0+})\cap H_{x}^{\frac{1}{2}+}(L_{v}^{2,\frac{%
1}{2}+})$ uniqueness theorem regarding the infinite Boltzmann hierarchy (\ref%
{hierarchy:Boltzmann}). As the weak limit coming from the core of the $N$%
-body solution and the solution to (\ref{eqn:QBEquation}) both verify the $%
H_{x}^{1+}(L_{v}^{2,0+})\cap H_{x}^{\frac{1}{2}+}(L_{v}^{2,\frac{1}{2}+})$
regularity, on the one hand we conclude that all limits points from Step 2 / 
\S \ref{sec:CompactnessConvergence} actually agree, that is, there is only
one limit point $\{ f^{(k)}\}_{k=1}^\infty$ for $P\mathcal{F}$ and the
sequence actually converges in $(\Lambda ^{\ast },\rho )$; on the other
hand, the limit is determined by 
\begin{equation*}
\left\{ f^{(k)}\right\} =\left\{
\dprod\limits_{i=1}^{k}f(t,x_{i},v_{i})\right\} \text{ for all }t\in \left[
0,T_{0}\right]
\end{equation*}
where $f$ solves (\ref{eqn:QBEquation}), that is, propagation of (quantum)
molecular chaos and the derivation of the Boltzmann equation.

The proof of the uniqueness of hierarchy (\ref{hierarchy:Boltzmann}) is by
adopting the recently perfected scheme for the NLS case in \cite{CH8}.
Though this is the first time such a scheme is fully carried out for the
Boltzmann case, the grand scheme is not new. Arkeryd, Caprino \& Ianiro has
suggested using the Hewitt-Savage theorem to prove uniqueness for Boltzmann
hierarchies in \cite{ACI}. The scheme in \cite{CH8} actually matured from 
\cite{CHPS} which creatively adds the quantum de Finetti theorem to the
Klainerman-Machedon board game and the dispersive multilinear estimates
originated in \cite{KM1} for the NLS case.

The new find in the uniqueness proof is not in its scheme, but its
enactment. Despite the $L^{2}$ disguise in many aspects, the core of the
estimates is, for the first time, a $L^{p}$ estimate, Proposition \ref%
{P:general-fixed-time-2}. It then unexpectedly enables the utilization of
the Strichartz estimates \cite{Ov11} for the intertwined kinetic transport
operator $\partial _{t}+\eta \cdot \nabla _{\xi }$ in the $(t,\eta ,\xi )$
side for the Boltzmann bilinear estimates for the collision operator in \S %
\ref{Subsec:Multilinear Estimates}. In fact, a more surprising aspect and
yet another sign of optimality is, though being a ($\eta ,\xi $) fixed time
estimate, Proposition \ref{P:general-fixed-time-2} lands right by the
(false) endpoint of the intertwined kinetic Strichartz estimates. On the
other hand, we have obtained the first unconditional uniqueness result for
the Boltzmann equation. We believe the uniqueness result is optimal because
the uniqueness space lies exactly at the borderline at which solutions to (%
\ref{eqn:QBEquation}) satisfy the equation everywhere in time instead of
almost everywhere in time, and the fact that so far, no one has been able to
prove an unconditional uniqueness result for almost everywhere in time
solutions for any equation.

The proof of Theorem \ref{thm:main} is finished at this step but for
completeness, we have two more steps.

\subsubsection{Step 5. Justification of physicality / regularity from the
viewpoint of the local Maxwellian, a quasi-free construction in \S \protect
\ref{S3}}

In \S \ref{S3}, we construct some quantum $N$-body solutions converging to
the local Maxwellian as $N\rightarrow \infty $. We also compute the
regularities for these $N$-body solutions. It turns out that even without
interactions and in a supposedly very smooth case, in order to be a quantum $%
N$-body solution, $f_{N}^{(k)}$ can only have the customary uniform-in-$N$
regularity (\ref{bound:customary bound}) up to $r=\frac{3}{4}$, and its
whole regularity takes the form we assumed in \S \ref{s:intro cycle
regularity}. It is surprising that such a low regularity still happens even
for such a basic and supposedly smooth example. Moreover, based on the
combinatorics, we find the inner symmetry and hence the cycle regularity
condition for the irregular part. Last but not least, we include a
computation in \S \ref{S:CollisionEffects} to calculate the frequency of the
changes of symmetry in cycle terms. Recall that we have proved that our
assumed (cycle) regularity which comes from this local Maxwellian
computation is compatible with the quantum quasi-free condition in \S \ref%
{S:PC}. Thus we see Theorem \ref{thm:main} is physical and optimal in the
sense that the $N$-body solution cannot be more regular.

\subsubsection{Step 6. Proof of optimality / well-posedness and
ill-posedness of (\protect\ref{eqn:QBEquation}) in \S \protect\ref%
{sec:illposedness}}

In \S \ref{sec:illposedness}, we prove that (\ref{eqn:QBEquation}) is
locally well-posed in $H_{x}^{s}L_{v}^{2,0+}$ for $s>1$, and ill-posed in $%
H_{x}^{s}L_{v}^{2,0+}$ for $s<1$. Moreover, the solution constructed in $%
H_{x}^{s}L_{v}^{2,0+}$, for $s>1$, is nonnegative and in $L_{xv}^{1}$ if the
initial datum carries these properties. Together with \S \ref{sec:uniqueness}%
, we conclude that (\ref{eqn:QBEquation}) is locally unconditionally
well-posed in $H_{x}^{1+}(L_{v}^{2,0+})\cap H_{x}^{\frac{1}{2}+}(L_{v}^{2,%
\frac{1}{2}+})$ which is also the borderline regularity to allow everywhere
in time solutions to (\ref{eqn:QBEquation}). The proof of the well-posedness
follows from a dispersive bilinear estimate also used in \S \ref%
{sec:uniqueness}\footnote{%
The scheme in \cite{CH8} developed from \cite{CHPS} has this feature which
seems to imply that, at critical regularity, unconditional uniqueness is
always stronger than Strichartz well-posedness. The first work carrying such
a feature is \cite{HTX} regarding the NLS.} based on the $(t,\eta ,\xi )$
side analysis, while the ill-posedness is adapted from \cite{CH10} (See \cite%
{CSZ3} for a more detailed proof.) The new mechanical contribution is the
proof that such low regularity solutions belong to $L^{1}$ if the initial
datum is in $L^{1}$. This step in \S \ref{sec:illposedness}, on the one
hand, proves that there is a solution to (\ref{eqn:QBEquation}) in the space
of the $N$-body limit, on the other hand proves that the regularity of the
limiting Boltzmann equation cannot be lower, which is yet another aspect of
the optimality of our proof.

Steps 5-6 are not involved\footnote{%
This is why some lengthy details in \S \ref{S3}-\ref{sec:illposedness} are
left in another paper or available upon request.} in the proof of Theorem %
\ref{thm:main} but they are part of the inspiration of Theorem \ref{thm:main}%
. With Steps 5-6 done, the whole overview of Theorem \ref{thm:main} is now
completed.


\section{Preparation for $N$-body Analysis}

\label{S:preparation}

\subsection{BBGKY in the four spaces and basic operator estimates}

Recall (\ref{hierachy:PreBBGKYinDuhamel}), the quantum BBGKY hierarchy is,
for $1\leq k\leq N$, 
\begin{equation}
\partial _{t}f_{N}^{(k)}+{\boldsymbol{v}}_{k}\cdot \nabla _{{\boldsymbol{x}}%
_{k}}f_{N}^{(k)}=\epsilon ^{-1/2}A_{\epsilon }^{(k)}f_{N}^{(k)}+N\epsilon
^{-1/2}B_{\epsilon }^{(k+1)}f_{N}^{(k+1)}  \label{E:S401}
\end{equation}%
where the cumulative interaction operators are 
\begin{equation*}
\epsilon ^{-1/2}A_{\epsilon }^{(k)}=\sum_{1\leq i<j\leq k}\epsilon
^{-1/2}A_{i,j}^{\epsilon }\,,\qquad N\epsilon ^{-1/2}B_{\epsilon
}^{(k+1)}=\sum_{j=1}^{k}N\epsilon ^{-1/2}B_{j,k+1}^{\epsilon }
\end{equation*}%
with components defined by 
\begin{align*}
& [\epsilon ^{-1/2}A_{i,j}^{\epsilon }f_{N}^{(k)}](t,{\boldsymbol{x}}_{k},{%
\boldsymbol{v}}_{k})=-i\epsilon ^{-1/2}\sum_{\sigma =\pm 1}\sigma
\int_{h}e^{ih\cdot (x_{i}-x_{j})/\epsilon }\hat{\phi}(h) \\
& \qquad \qquad f_{N}^{(k)}(t,{\boldsymbol{x}}_{k},v_{1},\ldots
,v_{i}-\sigma \frac{h}{2},\ldots ,v_{j}+\sigma \frac{h}{2},\ldots ,v_{k})\,dh
\end{align*}%
and 
\begin{align*}
& [N\epsilon ^{-1/2}B_{j,k+1}^{\epsilon }f_{N}^{(k+1)}](t,{\boldsymbol{x}}%
_{k},{\boldsymbol{v}}_{k})=-iN\epsilon ^{-1/2}\sum_{\sigma =\pm 1}\sigma
\int_{x_{k+1},v_{k+1},h}e^{ih\cdot (x_{j}-x_{k+1})/\epsilon }\hat{\phi}(h) \\
& \qquad \qquad f_{N}^{(k+1)}(t,{\boldsymbol{x}}_{k+1},v_{1},\ldots
,v_{j}-\sigma \frac{h}{2},\ldots ,v_{k+1}+\sigma \frac{h}{2}%
)\,dx_{k+1}\,dv_{k+1}\,dh
\end{align*}%
The above is the $({\boldsymbol{x}}_{k},{\boldsymbol{v}}_{k})$ formulation.
We shall need the alternative formulations, and let us start with the $({%
\boldsymbol{x}}_{k},{\boldsymbol{\xi }}_{k})$ formulation. The hierarchy
becomes, for $1\leq k\leq N$, 
\begin{equation*}
\partial _{t}\tilde{f}_{N}^{(k)}-i\nabla _{{\boldsymbol{\xi }}_{k}}\cdot
\nabla _{{\boldsymbol{x}}_{k}}\tilde{f}_{N}^{(k)}=\epsilon ^{-1/2}\tilde{A}%
_{\epsilon }^{(k)}\tilde{f}_{N}^{(k)}+N\epsilon ^{-1/2}\tilde{B}_{\epsilon
}^{(k+1)}\tilde{f}_{N}^{(k+1)}
\end{equation*}%
The components of the operators are 
\begin{equation}
\lbrack \epsilon ^{-1/2}\tilde{A}_{i,j}^{\epsilon }\tilde{f}_{N}^{(k)}](t,{%
\boldsymbol{x}}_{k},{\boldsymbol{\xi }}_{k})=-i\epsilon ^{-1/2}\sum_{\sigma
=\pm 1}\sigma \phi \Big(\frac{x_{i}-x_{j}}{\epsilon }+\frac{\sigma }{2}(\xi
_{i}-\xi _{j})\Big)\tilde{f}_{N}^{(k)}(t,{\boldsymbol{x}}_{k},{\boldsymbol{%
\xi }}_{k})  \label{E:S3-10}
\end{equation}%
and 
\begin{equation}
\begin{aligned} &[N\epsilon^{-1/2}\tilde B_{j,k+1}^\epsilon \tilde
f_N^{(k+1)}](t,{\boldsymbol{x}}_k,{\boldsymbol{\xi}}_k) \\ &=
-iN\epsilon^{-1/2} \sum_{\sigma=\pm 1} \sigma \int_{x_{k+1}}
\phi(\frac{x_j-x_{k+1}}{\epsilon}+\frac{\sigma}{2}\xi_j) \tilde
f_N^{(k+1)}(t,{\boldsymbol{x}}_{k+1}, {\boldsymbol{\xi}}_k,0) \, dx_{k+1}
\end{aligned}  \label{E:S3-05}
\end{equation}%
where the last $0$ means that $\xi _{k+1}$ is set $=0$.

In $({\boldsymbol{x}}_k,{\boldsymbol{\xi}}_k)$ form, it is straightforward
to derive a typical estimate for the $A$-operator that sacrifices
derivatives in exchange for the gain of $\epsilon$ factors.

\begin{lemma}[$A$ estimate]
\label{L:basic-A} For any $0\leq s <\frac32$, 
\begin{equation*}
\| [\epsilon^{-1/2}\tilde A_{i,j}^\epsilon] \tilde f_N^{(k)}(t, {\boldsymbol{%
x}}_k, {\boldsymbol{\xi}}_k) \|_{L_{{\boldsymbol{x}}_k {\boldsymbol{\xi}}%
_k}^2} \lesssim \epsilon^{s-\frac12} \| |\nabla_{x_i}|^{s/2}
|\nabla_{x_j}|^{s/2} \tilde f_N^{(k)} (t, {\boldsymbol{x}}_k, {\boldsymbol{%
\xi}}_k) \|_{L_{{\boldsymbol{x}}_k {\boldsymbol{\xi}}_k}^2}
\end{equation*}
\end{lemma}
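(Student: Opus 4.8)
The plan is to use the key feature of the $({\boldsymbol x}_k,{\boldsymbol\xi}_k)$ picture: by (\ref{E:S3-10}), $\epsilon^{-1/2}\tilde A_{i,j}^{\epsilon}$ is merely the sum over $\sigma=\pm1$ of the \emph{pointwise multiplication} operator by $-i\sigma\epsilon^{-1/2}\phi\big(\tfrac{x_i-x_j}{\epsilon}+\tfrac{\sigma}{2}(\xi_i-\xi_j)\big)$. That multiplier depends only on $(x_i,x_j,\xi_i,\xi_j)$ and the weights on the right involve only $(x_i,x_j)$, so I would freeze all the other $x$- and $\xi$-variables (including $\xi_i,\xi_j$, which enter solely as a fixed translation $c=\tfrac{\sigma}{2}(\xi_i-\xi_j)$ of the argument of $\phi$), use translation invariance of $L^2$ and of the homogeneous Sobolev norms in $(x_i,x_j)$ to set $c=0$, and then rescale $x_i=\epsilon X_i$, $x_j=\epsilon X_j$. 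Tracking Jacobians, this reduces the lemma (one $\sigma$ at a time, then summing two terms) to the $\epsilon$-free inequality, for $G\in L^2(\mathbb R^3_{X_i}\times\mathbb R^3_{X_j})$,
\begin{equation*}
\big\|\,\phi(X_i-X_j)\,G\,\big\|_{L^2}\;\lesssim\;\big\|\,|\nabla_{X_i}|^{s/2}|\nabla_{X_j}|^{s/2}G\,\big\|_{L^2},\qquad 0\le s<\tfrac32,
\end{equation*}
the factor $\epsilon^{\,s-1/2}$ of the lemma being produced entirely by the scaling.

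For this $\epsilon$-free inequality I would first record the two ``one–variable'' endpoint bounds $\|\phi(X_i-X_j)G\|_{L^2}\lesssim\||\nabla_{X_i}|^{s}G\|_{L^2}$ and $\|\phi(X_i-X_j)G\|_{L^2}\lesssim\||\nabla_{X_j}|^{s}G\|_{L^2}$ for $0\le s<\tfrac32$. For the first, freeze $X_j$; as a function of $X_i\in\mathbb R^3$ the multiplier is a translate of $\phi$, so Hölder in $X_i$ with conjugate exponents $q',q$ together with the Sobolev embedding $\dot H^{s}(\mathbb R^3)\hookrightarrow L^{2q}$, $\tfrac{1}{2q}=\tfrac12-\tfrac{s}{3}$ (which is precisely where $s<\tfrac32$ is needed), give
\begin{equation*}
\int_{\mathbb R^3}|\phi(X_i-X_j)|^2|G|^2\,dX_i\le\|\phi\|_{L^{2q'}}^2\,\|G(\cdot,X_j)\|_{L^{2q}_{X_i}}^2\lesssim\big\||\nabla_{X_i}|^{s}G(\cdot,X_j)\big\|_{L^2_{X_i}}^2,
\end{equation*}
with $\|\phi\|_{L^{2q'}}<\infty$ by Schwartz decay (note $2q'\in[2,\infty]$); integrating in $X_j$ gives the first endpoint, and $s=0$ is just the $L^\infty$ bound. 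The second is the same with $X_i,X_j$ interchanged.

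Finally I would interpolate. The multiplication operator $G\mapsto\phi(X_i-X_j)G$ maps both $\dot H^{s}_{X_i}L^2_{X_j}\to L^2$ (meaning $|\nabla_{X_i}|^sG\in L^2$) and $L^2_{X_i}\dot H^{s}_{X_j}\to L^2$ boundedly; since on the Fourier side $\dot H^{a}_{X_i}\dot H^{b}_{X_j}$ is the weighted space $L^2\big(|k_i|^{2a}|k_j|^{2b}\,dk_i\,dk_j\big)$ and $[L^2(w_0),L^2(w_1)]_\theta=L^2(w_0^{1-\theta}w_1^{\theta})$, complex interpolation at $\theta=\tfrac12$ yields boundedness $\dot H^{s/2}_{X_i}\dot H^{s/2}_{X_j}\to L^2$ — exactly the displayed $\epsilon$-free inequality. (Alternatively one can skip interpolation: on the Plancherel side $\phi(X_i-X_j)G$ is the convolution of $\widehat G$ in $(k_i,k_j)$ against a profile carried by the antidiagonal, and a Cauchy–Schwarz with the splitting weight $|k_i-h|^{s}|k_j+h|^{s}$ reduces everything to the uniform bound $\sup_{a,b\in\mathbb R^3}\int|\widehat\phi(h)|\,|a-h|^{-s}|b+h|^{-s}\,dh<\infty$, finite precisely for $0\le s<\tfrac32$ because when the two singularities $h=a$ and $h=-b$ collide one is left with a locally integrable $|w|^{-2s}$ only if $2s<3$.)

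There is no genuine obstacle; this is a standard ``trade $s$ derivatives for $\epsilon^{\,s}$'' estimate. The one point demanding care is that the asserted right-hand side carries the \emph{product} weight $|\nabla_{x_i}|^{s/2}|\nabla_{x_j}|^{s/2}$: a single application of Hölder–Sobolev only yields a norm with $s$ derivatives on one of the two variables, which is \emph{not} dominated by the stated product norm (on functions with trivial $x_j$-frequency the two are incomparable), so the product structure genuinely requires the interpolation step (or the weighted Plancherel argument). The range $s<\tfrac32$ is forced at the very same place, being the $\dot H^{s}(\mathbb R^3)\hookrightarrow L^\infty$ threshold (reflecting the codimension-$3$ trace onto $\{x_i=x_j\}$ at $s=\tfrac32$).
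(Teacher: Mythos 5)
Your proposal is correct, and its engine is the same as the paper's: H\"older between $\phi$ and $\tilde f_N^{(k)}$ in the colliding spatial variables followed by Sobolev embedding, with the restriction $s<\frac32$ entering exactly at $\dot H^{s}(\mathbb{R}^3)\hookrightarrow L^{2q}$, $\frac1{2q}=\frac12-\frac{s}{3}$, and with the $\xi$-shift harmless by translation invariance. The differences are in the packaging. You first translate away $\frac{\sigma}{2}(\xi_i-\xi_j)$ and rescale $x_i,x_j$ by $\epsilon$, reducing the lemma to the $\epsilon$-free model bound $\|\phi(X_i-X_j)G\|_{L^2}\lesssim \||\nabla_{X_i}|^{s/2}|\nabla_{X_j}|^{s/2}G\|_{L^2}$ (your Jacobian bookkeeping producing $\epsilon^{s-\frac12}$ is right); the paper keeps $\epsilon$ in place and extracts $\epsilon^{s}$ by scaling only the $L^{p}_{x_i}$ norm of $\phi$ with $p=\frac3s$. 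More substantively, to convert the intrinsically one-variable H\"older--Sobolev bound into the symmetric product weight $|\nabla_{x_i}|^{s/2}|\nabla_{x_j}|^{s/2}$, you interpolate (Stein--Weiss on the Fourier-side weighted $L^2$ spaces at $\theta=\frac12$, or equivalently your direct Plancherel/Cauchy--Schwarz with the split weight, which indeed closes uniformly precisely when $2s<3$), whereas the paper splits into the frequency regions where $|\eta_j|$ dominates $|\eta_i|$ or vice versa and, on each region, transfers half of the $s$ derivatives from the low-frequency variable to the dominating one. Both mechanisms are valid and give the same uniform-in-$\epsilon$ statement; your remark that the one-variable estimate alone is not controlled by the product norm is exactly the gap the paper's case split is designed to fix (so ``requires interpolation'' is slightly overstated), and your route is somewhat more systematic, while the paper's case split is more elementary and matches the frequency-decomposition style it reuses in its other operator estimates (e.g.\ Lemma \ref{L:basic-B} and Proposition \ref{P:QEstimates}).
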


\begin{proof}
Let $p=\frac3s$. We divide into two cases, depending upon the relative size
of the \emph{frequency} of $\tilde f_N^{(k)}$ in $x_i$ versus the size of
the frequency of $\tilde f_N^{(k)}$ in $x_j$ and thus, by symmetry of the
argument that follows, might as well assume that the frequency in $x_j$
dominates the frequency in $x_i$. This allows us to transfer derivatives in $%
x_i$ to derivatives in $x_j$ at the end of the argument. From \eqref{E:S3-10}%
, by H\"older in $x_i$ between $\phi$ and $\tilde f_N^{(k)}$ 
\begin{equation*}
\| [\epsilon^{-1/2}\tilde A_{i,j}^\epsilon \tilde f_N^{(k)}](t,{\boldsymbol{x%
}}_k,{\boldsymbol{\xi}}_k)\|_{L^2_{x_i}} \lesssim \epsilon^{-1/2}
\sum_{\sigma=\pm 1} \left\| \phi \Big( \frac{x_i-x_j}{\epsilon} + \frac{%
\sigma}{2}(\xi_i-\xi_j) \Big) \right\|_{L^p_{x_i}} \| \tilde f_N^{(k)}(t, {%
\boldsymbol{x}}_k, {\boldsymbol{\xi}}_k) \|_{L^q_{x_i}}
\end{equation*}
where $\frac{1}{p}+\frac{1}{q}=\frac12$. Scaling out in the $\phi$ term
yields the factor $\epsilon^s$. Since $\frac1q = \frac12 - \frac{s}3$,
Sobolev embedding implies 
\begin{equation*}
\lesssim \epsilon^{-1/2} \epsilon^s \|\phi\|_{L^p} \| |\nabla_{x_i}|^s
\tilde f_N^{(k)}(t, {\boldsymbol{x}}_k, {\boldsymbol{\xi}}_k) \|_{L^q_{x_i}}
\end{equation*}
Following through with the $L_{x_j}^2$ norm, transferring half of the $s$%
-derivatives in $x_i$ to $x_j$, then applying the remaining $L^2$ norms,
yields the claim.
\end{proof}

In $({\boldsymbol{\eta}}_k, {\boldsymbol{\xi}}_k) $ coordinates, 
\begin{equation}  \label{E:S406}
[\check A_{i,j}^\epsilon \check f_N^{(k)}](t,\boldsymbol{\eta}_k, 
\boldsymbol{\xi}_k) = \begin{aligned}[t] &-i\epsilon^{-1/2} \sum_{\sigma=\pm
1} \sigma \int_{y} \hat\phi( y) e^{i\sigma (\xi_i-\xi_j) \cdot y/2} \\
&\qquad \check f_N^{(k)}(t,\eta_1, \ldots, \eta_i- \epsilon^{-1}y, \ldots,
\eta_j + \epsilon^{-1}y, \ldots, \eta_k, \boldsymbol{\xi}_k) \, dy
\end{aligned}
\end{equation}
For the $B$-operator, we apply Plancherel $x_{k+1}\mapsto \eta_{k+1}$ in the
integral in \eqref{E:S3-05}, we obtain 
\begin{equation}  \label{E:S407}
\begin{aligned} &[N\epsilon^{-1/2}\check B_{j,k+1}^\epsilon \check
f_N^{(k+1)}](t,{\boldsymbol{\eta}}_k,{\boldsymbol{\xi}}_k) \\ &=
-iN\epsilon^{5/2} \sum_{\sigma=\pm 1} \sigma \int_{\eta_{k+1}}
\hat\phi(\epsilon \eta_{k+1}) e^{i\sigma\epsilon \xi_j\eta_{k+1}/2} \check
f_N^{(k+1)}(t,\eta_1, \ldots, \eta_j-\eta_{k+1}, \ldots, \eta_{k+1},
{\boldsymbol{\xi}}_k,0) \, d\eta_{k+1} \end{aligned}
\end{equation}
Note that $N\epsilon^{5/2}=\epsilon^{-1/2}$ when $N=\epsilon^{-3}$. In $({%
\boldsymbol{\eta}}_k,{\boldsymbol{\xi}}_k)$ form, it is straightforward to
derive a typical estimate for the $B$-operator that sacrifices derivatives
in exchange for the gain of $\epsilon$ factors. Notice that Lemma \ref%
{L:basic-B} is significantly weaker than the estimate in Lemma \ref%
{L:basic-A} for $A$, but nevertheless Lemma \ref{L:basic-B} with $s=\frac12+$
gives a bound that decays as $\epsilon\to 0$ and has right side in the space 
$H_{{\boldsymbol{x}}}^{1+}(L_{{\boldsymbol{v}}}^2\cap L_{{\boldsymbol{v}}%
}^1) $.

\begin{lemma}[$B$ estimate]
\label{L:basic-B} Let $N=\epsilon^{-3}$. If for some $s\geq 0$, there holds $%
|\hat\phi(\zeta)| \lesssim |\zeta|^s$ for $|\zeta|\leq 1$, then 
\begin{align*}
&\| \langle \eta_j \rangle^{-(\frac{s}{2}+\frac34)} [N\epsilon^{-1/2}\check
B_{j,k+1}^\epsilon \check f_N^{(k+1)}](t,{\boldsymbol{\eta}}_k,{\boldsymbol{%
\xi}}_k)\|_{L_{{\boldsymbol{\eta}}_k {\boldsymbol{\xi}}_k}^2} \\
&\qquad \lesssim \epsilon^{s-\frac12} \| \langle \eta_j \rangle^{\frac{s}{2}%
+\frac34+} \langle \eta_{k+1} \rangle^{\frac{s}{2}+\frac34+} \check
f_N^{(k+1)} (t,{\boldsymbol{\eta}}_{k+1}, {\boldsymbol{\xi}}_k,0) \|_{L^2_{{%
\boldsymbol{\eta}}_{k+1} {\boldsymbol{\xi}}_k}}
\end{align*}
In particular, if $s=\frac12+$, then 
\begin{align*}
&\| \langle \eta_j \rangle^{-1-} [N\epsilon^{-1/2}\check B_{j,k+1}^\epsilon
\check f_N^{(k+1)}](t,{\boldsymbol{\eta}}_k,{\boldsymbol{\xi}}_k)\|_{L_{{%
\boldsymbol{\eta}}_k {\boldsymbol{\xi}}_k}^2} \\
&\qquad \lesssim \epsilon^{0+} \| \langle \eta_j \rangle^{1+} \langle
\eta_{k+1} \rangle^{1+} \check f_N^{(k+1)} (t,{\boldsymbol{\eta}}_{k+1}, {%
\boldsymbol{\xi}}_k,0) \|_{L^2_{{\boldsymbol{\eta}}_{k+1} {\boldsymbol{\xi}}%
_k}}
\end{align*}
\end{lemma}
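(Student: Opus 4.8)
The plan is to run the whole estimate on the $({\boldsymbol\eta}_k,{\boldsymbol\xi}_k)$ side. By \eqref{E:S407} and the identity $N\epsilon^{5/2}=\epsilon^{-1/2}$, the operator $N\epsilon^{-1/2}\check B_{j,k+1}^\epsilon$ is, apart from the $\sigma$-sum and the unimodular phase $e^{i\sigma\epsilon\xi_j\eta_{k+1}/2}$, a single integral against $\hat\phi(\epsilon\eta_{k+1})$ in the ``transferred'' variable $\eta_{k+1}$. The phase is harmless in an $L^2$ bound, so I would first discard it and move the absolute value inside, reducing to the pointwise inequality
\[
\big|[N\epsilon^{-1/2}\check B_{j,k+1}^\epsilon\check f_N^{(k+1)}](t,{\boldsymbol\eta}_k,{\boldsymbol\xi}_k)\big|
\lesssim \epsilon^{-1/2}\int_{\eta_{k+1}}\big|\hat\phi(\epsilon\eta_{k+1})\big|\,\big|\check f_N^{(k+1)}(t,\ldots,\eta_j-\eta_{k+1},\ldots,\eta_{k+1},{\boldsymbol\xi}_k,0)\big|\,d\eta_{k+1}.
\]
This is the payoff of the $({\boldsymbol\eta}_k,{\boldsymbol\xi}_k)$ side for $B$: all the $\epsilon$-dependence now sits in the vanishing and decay profile of the rescaled $\hat\phi$, and no Fourier transform of the collision kernel or oscillatory-integral analysis is needed, unlike the sharp $A$-estimate of Lemma \ref{L:basic-A}.

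The second step is a Cauchy--Schwarz in $\eta_{k+1}$ that inserts exactly the weight matching the right-hand side, namely $1=\big(\langle\eta_j-\eta_{k+1}\rangle\langle\eta_{k+1}\rangle\big)^{-(\frac s2+\frac34+)}\cdot\big(\langle\eta_j-\eta_{k+1}\rangle\langle\eta_{k+1}\rangle\big)^{\frac s2+\frac34+}$. It bounds $\big|[N\epsilon^{-1/2}\check B_{j,k+1}^\epsilon\check f_N^{(k+1)}]\big|^2$ by $\epsilon^{-1}$ times a kernel factor
\[
\mathcal K_\epsilon(\eta_j)\defeq\int_{\eta_{k+1}}\frac{|\hat\phi(\epsilon\eta_{k+1})|^2}{\langle\eta_j-\eta_{k+1}\rangle^{s+\frac32+}\langle\eta_{k+1}\rangle^{s+\frac32+}}\,d\eta_{k+1},
\]
times the data factor $\int_{\eta_{k+1}}\langle\eta_j-\eta_{k+1}\rangle^{s+\frac32+}\langle\eta_{k+1}\rangle^{s+\frac32+}\big|\check f_N^{(k+1)}(t,\ldots,\eta_j-\eta_{k+1},\ldots,\eta_{k+1},{\boldsymbol\xi}_k,0)\big|^2\,d\eta_{k+1}$. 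Taking the $L_{{\boldsymbol\eta}_k{\boldsymbol\xi}_k}^2$ norm with the weight $\langle\eta_j\rangle^{-(\frac s2+\frac34)}$, pulling out $\sup_{\eta_j}\langle\eta_j\rangle^{-(s+\frac32)}\mathcal K_\epsilon(\eta_j)$, and translating $\eta_j\mapsto\eta_j-\eta_{k+1}$ (at fixed $\eta_{k+1}$) in the data part, the data integral collapses to exactly $\|\langle\eta_j\rangle^{\frac s2+\frac34+}\langle\eta_{k+1}\rangle^{\frac s2+\frac34+}\check f_N^{(k+1)}(t,{\boldsymbol\eta}_{k+1},{\boldsymbol\xi}_k,0)\|_{L_{{\boldsymbol\eta}_{k+1}{\boldsymbol\xi}_k}^2}^2$. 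Hence everything reduces to the scalar bound $\sup_{\eta_j\in\mathbb{R}^3}\langle\eta_j\rangle^{-(s+\frac32)}\mathcal K_\epsilon(\eta_j)\lesssim\epsilon^{2s}$.

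This kernel bound is the last step, and it is the only place using $|\hat\phi(\zeta)|\lesssim|\zeta|^s$ near $\zeta=0$, the Schwartz decay of $\hat\phi$, and the exact exponent $s+\frac32$ — i.e.\ the $\frac34$ derivative. I would split the $\eta_{k+1}$-integral at $|\eta_{k+1}|\sim\epsilon^{-1}$. On $|\eta_{k+1}|\lesssim\epsilon^{-1}$ one has $|\hat\phi(\epsilon\eta_{k+1})|^2\lesssim|\epsilon\eta_{k+1}|^{2s}$; extracting $\epsilon^{2s}$, using $|\eta_{k+1}|^{2s}\langle\eta_{k+1}\rangle^{-(s+\frac32+)}\lesssim\langle\eta_{k+1}\rangle^{s-\frac32-}$ away from the origin (with local integrability at it), and invoking the elementary $\mathbb{R}^3$ estimate for $\langle\,\cdot\,\rangle^{s-\frac32-}\ast\langle\,\cdot\,\rangle^{-(s+\frac32+)}$ — whose only input is that the base exponents $\frac32-s$ and $s+\frac32$ sum to $3$, met with a strict gain, so the convolution is bounded uniformly in $\eta_j$ — gives the low part $\lesssim\epsilon^{2s}$. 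On $|\eta_{k+1}|>\epsilon^{-1}$ there is no power gain from $\hat\phi$, so one keeps both polynomial weights: if $|\eta_j|\le\frac12|\eta_{k+1}|$ then $\langle\eta_j-\eta_{k+1}\rangle\langle\eta_{k+1}\rangle\gtrsim|\eta_{k+1}|^2$ and $\int_{|\eta_{k+1}|>\epsilon^{-1}}|\eta_{k+1}|^{-(2s+3+)}\,d\eta_{k+1}\lesssim\epsilon^{2s+}$; if instead $|\eta_j|>\frac12|\eta_{k+1}|>\frac12\epsilon^{-1}$ then the left-hand weight $\langle\eta_j\rangle^{-(s+\frac32)}\lesssim\epsilon^{s+\frac32}$ absorbs the crude bound $\int_{|\eta_{k+1}|>\epsilon^{-1}}|\hat\phi(\epsilon\eta_{k+1})|^2\langle\eta_{k+1}\rangle^{-(s+\frac32+)}\,d\eta_{k+1}\lesssim\epsilon^{s-\frac32}$, again $\lesssim\epsilon^{2s+}$. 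Summing yields the kernel bound, and feeding it back gives $\lesssim\epsilon^{-1/2}\cdot\epsilon^s=\epsilon^{s-\frac12}$ times the right-hand norm; the case $s=\frac12+$ specializes to $\frac s2+\frac34=1$, $s+\frac32=2$, $\epsilon^{s-\frac12}=\epsilon^{0+}$, i.e.\ the second display.

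The main obstacle — and the reason for the negative weight $\langle\eta_j\rangle^{-(\frac s2+\frac34)}$ on the left, which the $A$-estimate does not need — is the high-frequency piece $|\eta_{k+1}|>\epsilon^{-1}$: there a careless estimate that drops the $\langle\eta_j-\eta_{k+1}\rangle$ weight is left with the $\epsilon$-power $\epsilon^{s-\frac32}$, which is \emph{too large} to close; one is forced either to retain that weight (when $|\eta_j|\lesssim|\eta_{k+1}|$) or to spend the left weight $\langle\eta_j\rangle^{-(s+\frac32)}\lesssim\epsilon^{s+\frac32}$ (when $|\eta_j|\gtrsim|\eta_{k+1}|$). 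Managing both alternatives, together with the observation that $s+\frac32$ is precisely the exponent placing the low-frequency convolution at its convergence threshold, is the only genuinely delicate bookkeeping.
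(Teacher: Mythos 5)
Your proposal is correct: the pointwise reduction from \eqref{E:S407}, the Cauchy--Schwarz in $\eta_{k+1}$ with the weight $\big(\langle\eta_j-\eta_{k+1}\rangle\langle\eta_{k+1}\rangle\big)^{\frac s2+\frac34+}$, the translation collapsing the data factor to the stated norm, and the kernel bound $\sup_{\eta_j}\langle\eta_j\rangle^{-(s+\frac32)}\mathcal K_\epsilon(\eta_j)\lesssim\epsilon^{2s}$ all check out, and they deliver exactly $\epsilon^{s-\frac12}$. The underlying mechanism is the same as the paper's (an $L^2$ Cauchy--Schwarz in the transferred frequency $\eta_{k+1}$ with weights $\langle\eta_{k+1}\rangle^{s+\frac32+}$, the $\epsilon$-gain coming from the vanishing of $\hat\phi$ at the origin under the $\epsilon\eta_{k+1}$ rescaling, and the $\langle\eta_j\rangle^{\pm(\frac s2+\frac34)}$ balance via a triangle-inequality comparison of $\langle\eta_j-\eta_{k+1}\rangle$, $\langle\eta_{k+1}\rangle$, $\langle\eta_j\rangle$), but your execution is organized differently and is heavier. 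The paper extracts $\epsilon^s|\eta_{k+1}|^s$ pointwise at the very first step, which is legitimate for \emph{all} $\eta_{k+1}$ because $\hat\phi$ is bounded, so $|\hat\phi(\zeta)|\lesssim|\zeta|^s$ automatically extends from $|\zeta|\le1$ to $|\zeta|\ge1$; after that, Minkowski, one Cauchy--Schwarz with $1=\langle\eta_{k+1}\rangle^{-\frac32-}\langle\eta_{k+1}\rangle^{\frac32+}$, and a three-case weight comparison finish the proof, with no frequency splitting and no use of decay of $\hat\phi$. Your Schur-type formulation instead defers the use of the vanishing of $\hat\phi$ to a kernel-sup estimate and therefore needs the dichotomy at $|\eta_{k+1}|\sim\epsilon^{-1}$; note that your second high-frequency sub-case genuinely requires decay of $\hat\phi$ (with only $|\hat\phi|\lesssim1$ the integral $\int_{|\eta_{k+1}|>\epsilon^{-1}}\langle\eta_{k+1}\rangle^{-(s+\frac32+)}\,d\eta_{k+1}$ diverges for $s<\tfrac32$), which is fine under the standing Schwartz assumption but is an input the paper's proof never invokes. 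Two small points of bookkeeping: your claim that the convolution $\langle\cdot\rangle^{s-\frac32-}\ast\langle\cdot\rangle^{-(s+\frac32+)}$ is uniformly bounded is accurate only for $0\le s<\tfrac32$ (for $s\ge\tfrac32$ the first factor grows, and uniformity is rescued only by the $\langle\eta_j\rangle^{-(s+\frac32)}$ prefactor you carry), and the whole low/high apparatus could be deleted outright by the boundedness observation above, which is exactly how the paper keeps the proof to a few lines.
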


\begin{proof}
Applying $|\hat\phi(\zeta)| \lesssim |\zeta|^s$, 
\begin{align*}
&|[N\epsilon^{-1/2}\check B_{j,k+1}^\epsilon \check f_N^{(k+1)}](t,{%
\boldsymbol{\eta}}_k,{\boldsymbol{\xi}}_k)| \\
&\lesssim \epsilon^{s-\frac12} \int_{\eta_{k+1}} |\eta_{k+1}|^s |\check
f_N^{(k+1)}(t,\eta_1, \ldots, \eta_j-\eta_{k+1}, \ldots, \eta_{k+1}, {%
\boldsymbol{\xi}}_k,0)| \, d\eta_{k+1}
\end{align*}
Apply the $L^2_{{\boldsymbol{\eta}}_k {\boldsymbol{\xi}}_k}$ norm and
Minkowski's integral inequality, 
\begin{align*}
&\| \langle \eta_j \rangle^{-(\frac{s}{2}+\frac34)} [N\epsilon^{-1/2}\check
B_{j,k+1}^\epsilon \check f_N^{(k+1)}](t,{\boldsymbol{\eta}}_k,{\boldsymbol{%
\xi}}_k)\|_{L_{{\boldsymbol{\eta}}_k {\boldsymbol{\xi}}_k}^2} \\
&\lesssim \epsilon^{s-\frac12} \int_{\eta_{k+1}} |\eta_{k+1}|^s \| \langle
\eta_j+\eta_{k+1} \rangle^{-(\frac{s}{2}+\frac34)} \check f_N^{(k+1)}(t, {%
\boldsymbol{\eta}}_k, \eta_{k+1}, {\boldsymbol{\xi}}_k,0) \|_{L_{{%
\boldsymbol{\eta}}_k {\boldsymbol{\xi}}_k}^2} d\eta_{k+1}
\end{align*}
Writing $1=\langle \eta_{k+1}\rangle^{-3/2-}\langle
\eta_{k+1}\rangle^{+3/2+} $ and applying Cauchy-Schwarz in $\eta_{k+1}$, 
\begin{align*}
&\| \langle \eta_j \rangle^{-(\frac{s}{2}+\frac34)} [N\epsilon^{-1/2}\check
B_{j,k+1}^\epsilon \check f_N^{(k+1)}](t,{\boldsymbol{\eta}}_k,{\boldsymbol{%
\xi}}_k)\|_{L_{{\boldsymbol{\eta}}_k {\boldsymbol{\xi}}_k}^2} \\
&\lesssim \epsilon^{s-\frac12} \| \langle \eta_{k+1} \rangle^{s+\frac32+}
\langle \eta_j+\eta_{k+1} \rangle^{-(\frac{s}{2}+\frac34)} \check
f_N^{(k+1)}(t, {\boldsymbol{\eta}}_k, \eta_{k+1}, {\boldsymbol{\xi}}_k,0)
\|_{L_{{\boldsymbol{\eta}}_{k+1} {\boldsymbol{\xi}}_k}^2}
\end{align*}
By dividing into the three cases $|\eta_{k+1}| \ll |\eta_j|$, $|\eta_{k+1}|
\gg |\eta_j|$ and $|\eta_{k+1}|\sim |\eta_j|$, we see that in any case 
\begin{equation*}
\langle \eta_{k+1} \rangle^{s+\frac32+} \langle \eta_j+\eta_{k+1} \rangle^{-(%
\frac{s}{2}+\frac34)} \lesssim \langle \eta_j \rangle^{\frac{s}{2}+\frac{3}{4%
}} \langle \eta_{k+1} \rangle^{ \frac{s}{2}+\frac34+}
\end{equation*}
\end{proof}

\subsection{Duhamel formulations}

To start the analysis of (\ref{hierarchy:RealBBGKY}), we shorten its
notation. Let 
\begin{equation}
\mathcal{D}^{(k)}f^{(k)}(t)=\int_{0}^{t}S^{(k)}(t-t^{\prime
})f^{(k)}(t^{\prime })\,dt^{\prime }  \label{E:IV03}
\end{equation}%
In this notation, the \emph{first Duhamel iterate (\ref%
{hierachy:PreBBGKYinDuhamel}) reads}: 
\begin{equation*}
f_{N}^{(k)}(t)=S^{(k)}(t)f_{N}^{(k)}(0)+\mathcal{D}^{(k)}[\epsilon
^{-1/2}A_{\epsilon }^{(k)}f_{N}^{(k)}](t)+\mathcal{D}^{(k)}[N\epsilon
^{-1/2}B_{\epsilon }^{(k+1)}f_{N}^{(k+1)}](t)
\end{equation*}%
and the \emph{second Duhamel iterate }(\ref{hierarchy:RealBBGKY}) is 
\begin{equation}
\begin{aligned} f_N^{(k)}(t) &= S^{(k)}(t)f_N^{(k)}(0) +
\mathcal{D}^{(k)}[\epsilon^{-1/2} A_\epsilon^{(k)} f_N^{(k)}](t) +
\mathcal{D}^{(k)}[N\epsilon^{-1/2} B_\epsilon^{(k+1)} S^{(k+1)}
f_N^{(k+1)}(0)] \\ &\qquad + \mathcal{D}^{(k)}[N\epsilon^{-1/2}
B_\epsilon^{(k+1)} \mathcal{D}^{(k+1)} \epsilon^{-1/2} A_\epsilon^{(k+1)}
f_N^{(k+1)}] \\ &\qquad + \mathcal{D}^{(k)}[N\epsilon^{-1/2}
B_\epsilon^{(k+1)} \mathcal{D}^{(k+1)} N \epsilon^{-1/2} B_\epsilon^{(k+2)}
f_N^{(k+2)}] \\ &= S^{(k)}(t)f_N^{(k)}(0) + \mathcal{D}^{(k)} R^{2(k)}_N
f_N^{(k)}(t) + \mathcal{D}^{(k)} R^{3(k+1)}_N f_N^{(k+1)}(t) \\ & \qquad +
\mathcal{D}^{(k)} (Q_N^{(k+1)}+R_N^{4(k+1)}) f_N^{(k+1)}(t) +
\mathcal{D}^{(k)} R^{5(k+2)}_N f_N^{(k+2)}(t) \end{aligned}  \label{E:S402}
\end{equation}%
The last two terms contain the composite operators 
\begin{equation}
\begin{aligned} \hspace{0.3in}&\hspace{-0.3in}
(Q_N^{(k+1)}+R_N^{4(k+1)})f_N^{(k+1)} = N\epsilon^{-1/2} B_\epsilon^{(k+1)}
\mathcal{D}^{(k+1)} \epsilon^{-1/2} A_\epsilon^{(k+1)} f_N^{(k+1)} \\ &=
\sum_{\substack{1\leq \ell \leq k \\ 1\leq i<j\leq k+1}} N\epsilon^{-1/2}
B^\epsilon_{\ell,k+1} \mathcal{D}^{(k+1)} \epsilon^{-1/2} A^\epsilon_{i,j}
f_N^{(k+1)} \end{aligned}  \label{E:S411}
\end{equation}%
\begin{equation}
R_{N}^{5(k+2)}f_{N}^{(k+2)}=N\epsilon ^{-1/2}B_{\epsilon }^{(k+1)}\mathcal{D}%
^{(k+1)}N\epsilon ^{-1/2}B_{\epsilon }^{(k+2)}f_{N}^{(k+2)}  \label{E:S412}
\end{equation}%
We define $Q_{N}^{(k+1)}$ and $R_{N}^{4(k+1)}$ as the following components
of the sum indicated in \eqref{E:S411}. The operator $Q_{N}^{(k+1)}$
corresponds to $\ell =i$ and $j=k+1$: 
\begin{equation}
Q_{N}^{(k+1)}f_{N}^{(k+1)}=\sum_{i=1}^{k}Q_{i,k+1}^{\epsilon }f_{N}^{(k+1)}
\label{E:S411a}
\end{equation}%
where 
\begin{equation*}
Q_{i,k+1}^{\epsilon }f_{N}^{(k+1)}=N\epsilon ^{-1/2}B_{i,k+1}^{\epsilon }%
\mathcal{D}^{(k+1)}\epsilon ^{-1/2}A_{i,k+1}^{\epsilon }f_{N}^{(k+1)}
\end{equation*}%
The operator $R_{N}^{4(k+1)}$ corresponds to the remaining terms in the sum %
\eqref{E:S411}: 
\begin{equation}
R_{N}^{4(k+1)}f_{N}^{(k+1)}=\sum_{\substack{ 1\leq \ell \leq k  \\ 1\leq
i<j\leq k+1  \\ \text{except }\ell =i,\;j=k+1}}R_{\ell
,i,j,k+1}^{4,N}f_{N}^{(k+1)}  \label{E:S411b}
\end{equation}%
where 
\begin{equation*}
R_{\ell ,i,j,k+1}^{4,N}f_{N}^{(k+1)}=N\epsilon ^{-1/2}B_{\ell
,k+1}^{\epsilon }\mathcal{D}^{(k+1)}\epsilon ^{-1/2}A_{i,j}^{\epsilon
}f_{N}^{(k+1)}
\end{equation*}

The operators defined in \eqref{E:S411a}, \eqref{E:S411b}, and \eqref{E:S412}
and are studied in the next three subsections.

\subsection{Collision operator $Q_N^{(k+1)}$}

\begin{proposition}[$Q_N^{(k+1)}$ estimates]
\label{P:QEstimates} The operator $Q_N^{(k+1)}$ defined by \eqref{E:S411a}
satisfies the bound 
\begin{equation*}
\| \check Q_N^{(k+1)} \check f_N^{(k+1)}(t) \|_{L_t^\infty L_{\boldsymbol{%
\eta}_k}^2 L_{\boldsymbol{\xi}_k}^2} \lesssim \sum_{i=1}^k \| \langle \eta_i
\rangle^{\frac34+} \langle \eta_{k+1} \rangle^{\frac34+} \langle
\nabla_{\xi_{k+1}} \rangle^{\frac12+} \check f_N^{(k+1)}(t, \boldsymbol{\eta}%
_{k+1}, \boldsymbol{\xi}_{k+1}) \|_{L_t^\infty L_{\boldsymbol{\eta}_{k+1}}^2
L_{\boldsymbol{\xi}_{k+1}}^2}
\end{equation*}
This bound is uniform in $N$ but does not have an $\epsilon^{0+}$ prefactor.
\end{proposition}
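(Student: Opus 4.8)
By the triangle inequality in the sum over $i$ it is enough, for fixed $t$, to bound $\|\check Q_{i,k+1}^\epsilon\check f_N^{(k+1)}(t)\|_{L^2_{\boldsymbol\eta_k}L^2_{\boldsymbol\xi_k}}$ for a single $i$, where $Q_{i,k+1}^\epsilon=N\epsilon^{-1/2}B_{i,k+1}^\epsilon\,\mathcal D^{(k+1)}\,\epsilon^{-1/2}A_{i,k+1}^\epsilon$ as in \eqref{E:S411a}. The plan is to work entirely on the $(\boldsymbol\eta,\boldsymbol\xi)$ side, where $\mathcal D^{(k+1)}$ acts by the shear $S^{(k+1)}(s):\xi_\ell\mapsto\xi_\ell-s\eta_\ell$ and the $A$, $B$ components are given by \eqref{E:S406}, \eqref{E:S407}. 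First I would expand the composite into a single integral over the $A$-frequency $y$ (weight $\hat\phi(y)$), the $B$-frequency $\eta_{k+1}$ (weight $\hat\phi(\epsilon\eta_{k+1})$), the Duhamel time $s=t-t'\in[0,t]$, and the two signs; the scalar prefactor collapses to $\epsilon^{-1}$ via $N\epsilon^{5/2}=\epsilon^{-1/2}$. Because the $A$ and $B$ act on the \emph{same} pair $(i,k+1)$, the two frequency shifts on slots $i$ and $k+1$ combine into one (momentum conservation): replacing $y$ by $w:=\eta_{k+1}+\epsilon^{-1}y$ (which turns the prefactor into $\epsilon^2$ and the $A$-weight into $\hat\phi(\epsilon w-\epsilon\eta_{k+1})$), $\check f_N^{(k+1)}$ is then evaluated with $\eta$-slot $i$ equal to $\eta_i-w$, $\eta$-slot $k+1$ equal to $w$, the first $k$ $\xi$-slots equal to translates of $\xi_1,\dots,\xi_k$, and — the crucial point — $\xi$-slot $k+1$ equal to $-s\eta_{k+1}$, since $B$ reads its argument at $\xi_{k+1}=0$ and the intervening $\mathcal D^{(k+1)}$ has sheared that value to $-s\eta_{k+1}$.

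Next I would discard every oscillatory phase by the triangle inequality — this is exactly why the $(\boldsymbol\eta,\boldsymbol\xi)$ side is convenient, and it is free for an $L^2\to L^2$ estimate — and apply Minkowski's integral inequality to move $\int ds\,d\eta_{k+1}\,dw$ outside the output norm $\|\cdot\|_{L^2_{\boldsymbol\eta_k}L^2_{\boldsymbol\xi_k}}$. Translating away the shifts in the surviving variables $(\eta_1,\dots,\eta_k,\xi_1,\dots,\xi_k)$ reduces everything to bounding
\[
\epsilon^2\int_0^t\!\!\int_{\eta_{k+1}}\!\!\int_w|\hat\phi(\epsilon\eta_{k+1})|\,|\hat\phi(\epsilon w-\epsilon\eta_{k+1})|\;\big\|\check f_N^{(k+1)}\big(t-s,\ \cdot,\dots,\cdot,\,w\,;\ \cdot,\dots,\cdot,\,-s\eta_{k+1}\big)\big\|_{L^2(\mathbb{R}^{3k}\times\mathbb{R}^{3k})}\,dw\,d\eta_{k+1}\,ds,
\]
where $\check f_N^{(k+1)}$ occurs with its last momentum and velocity slots frozen at $w$ and $-s\eta_{k+1}$; the weight $\langle\eta_i\rangle^{\frac34+}$ of the statement can be folded into the inner $L^2$ for free, being $\ge1$.

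The heart of the matter, and the step I expect to be the main obstacle, is the Duhamel time integral. Since $\hat\phi(\epsilon\eta_{k+1})$ confines $|\eta_{k+1}|\lesssim\epsilon^{-1}$, rescaling $s$ by $|\eta_{k+1}|$ — equivalently, trading $(s,\eta_{k+1})$ for the frozen slot value $b=-s\eta_{k+1}$ together with the direction of $\eta_{k+1}$ — does two things at once: it supplies the power of $\epsilon$ that balances the $\epsilon^2$ prefactor against the $\epsilon^{-3}$-type cost of integrating $\eta_{k+1}$ over its $O(\epsilon^{-1})$ support, and it converts the frozen $\xi_{k+1}$-slot into a genuine variable $b$ sweeping all of $\mathbb{R}^3$, at the price of a Jacobian $\sim|b|^{-2}$ and of a rescaled time variable that must now be carried over $[0,\infty)$. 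Bringing a mixed $L^p$ norm in the $\boldsymbol\xi$-variables inside by Minkowski, one needs $|b|^{-2}$ to pair with $L^p$ both near $b=0$ and out to $|b|\sim\epsilon^{-1}$, two constraints that meet only at the borderline exponent $p=3\pm$; the matching Sobolev embedding (dual to the local integrability of $|b|^{-2}$ at that exponent) is then exactly what produces the velocity weight $\langle\nabla_{\xi_{k+1}}\rangle^{\frac12+}$. The momentum weight $\langle\eta_{k+1}\rangle^{\frac34+}$ is what Cauchy--Schwarz extracts from $\int d\eta_{k+1}$ against $\hat\phi(\epsilon\eta_{k+1})$ concentrated near $|\eta_{k+1}|\sim\epsilon^{-1}$; together with its companion $\langle\eta_i\rangle^{\frac34+}$ (the price for routing the $A$-shift $-\epsilon^{-1}y$ onto slot $i$) it is the same $\frac s2+\frac34$ that appears in Lemma~\ref{L:basic-B}, here at $s=0+$. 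The $L^\infty_t$ on the right side absorbs the dependence $t'=t-s$, and throughout one must keep the $w$-integration tied to $\eta_{k+1}$ through $\hat\phi(\epsilon w-\epsilon\eta_{k+1})$ rather than spending it early.

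The real difficulty, flagged in the statement by the absence of an $\epsilon^{0+}$ prefactor, is that every ingredient — the $\epsilon$ gained from rescaling time, the unbounded range of the rescaled time, the behaviour of $|b|^{-2}$ at $0$ and out to $|b|\sim\epsilon^{-1}$, the $|\eta_{k+1}|\sim\epsilon^{-1}$ concentration, and the Sobolev exponents — must balance simultaneously at the single exponent $p=3\pm$ with the single weight triple $\big(\frac34+,\frac34+,\frac12+\big)$ and no slack whatsoever; this is the manifestation here of the $\frac34$-derivative double criticality. The remainder is the routine but delicate bookkeeping of Jacobians and weights under the change of variables, the $L^p$ Minkowski steps, and the verification that the discarded phases are genuinely harmless.
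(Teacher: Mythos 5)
Your plan is correct and follows essentially the same route as the paper's proof: compose $B_{i,k+1}^\epsilon\,\mathcal D^{(k+1)}\,A_{i,k+1}^\epsilon$ on the $(\boldsymbol\eta,\boldsymbol\xi)$ side, discard the phases, use the momentum-conservation substitution to clean the $\eta$-slots, convert the frozen $\xi_{k+1}$-slot into a genuine variable via the Duhamel time integral with the borderline $L^{3\pm}$/Sobolev $H^{\frac12+}$ matching, and extract the $\frac34+$ momentum weights by Cauchy--Schwarz with a $\langle\cdot\rangle^{-\frac32-}$ insertion as in Lemma~\ref{L:basic-B}. The only differences are organizational: the paper rescales the Duhamel time by $\epsilon$ and H\"olders in the $O(1)$-scale $A$-frequency $y$ (the table of $L^{3\pm}_y$ choices producing $s^{-1\pm}$ integrable over $[0,t/\epsilon]$) rather than performing your polar change in $(s,\eta_{k+1})$, and note that in your parametrization the weight $\langle\eta_{k+1}\rangle^{\frac34+}$ of the statement sits on the momentum slot $w$, so the Cauchy--Schwarz that produces it is the one in $w$, not the one in the $B$-frequency $\eta_{k+1}$.
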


\begin{proof}
Now we study $Q_N^{(k+1)}f_N^{(k+1)}(t)$ defined by \eqref{E:S411a}. Recall %
\eqref{E:S407}, 
\begin{equation}  \label{E:S409}
\begin{aligned} N\epsilon^{-1/2}\check B_{\ell,k+1}^\epsilon \check
g_N^{(k+1)}(t,{\boldsymbol{\eta}}_k,{\boldsymbol{\xi}}_k) =
-iN\epsilon^{5/2} \sum_{\alpha=\pm 1} \alpha \int_{\eta_{k+1}}
\hat\phi(\epsilon \eta_{k+1}) e^{i\alpha\epsilon \xi_\ell\eta_{k+1}/2} &\\
\check g_N^{(k+1)}(t,\eta_1, \ldots, \eta_\ell-\eta_{k+1}, \ldots,
\eta_{k+1}, {\boldsymbol{\xi}}_k,0) \, d\eta_{k+1} & \end{aligned}
\end{equation}
By \eqref{E:S406}, 
\begin{equation}  \label{E:S410}
\begin{aligned} & \check
g_N^{(k+1)}(t,\boldsymbol{\eta}_{k+1},\boldsymbol{\xi}_{k+1}) =
\mathcal{D}^{(k+1)} \epsilon^{-1/2} \check A^\epsilon_{i,j} \check
f_N^{(k+1)}(t) \\ &= -i\epsilon^{-1/2} \sum_{\sigma=\pm 1} \sigma
\int_{t^{\prime} =0}^{t} \int_y \hat \phi(y) e^{i\sigma (\xi_i-\xi_j)\cdot
y/2} e^{-i\sigma (t-t^{\prime} )(\eta_i-\eta_j)\cdot y/2} \\ & \qquad \check
f_N^{(k)}(t^{\prime} ,\eta_1, \ldots, \eta_i-y/\epsilon, \, . \,,
\eta_j+y/\epsilon, \, . \,, \eta_{k+1}, \boldsymbol{\xi}_{k+1} -
(t-t^{\prime})\boldsymbol{\eta}_{k+1}) \, dy \,dt^{\prime} \end{aligned}
\end{equation}
We now consider the form of $Q_{i,k+1}^{\epsilon}$, which is the composition
of \eqref{E:S409} and \eqref{E:S410} in the case $\ell=i$ and $j=k+1$. 
\begin{align*}
\hspace{0.3in}&\hspace{-0.3in} \check Q_{i,k+1}^{\epsilon}\check
f_N^{(k+1)}(t) = -\epsilon^{-1} \sum_{\alpha,\sigma \in \{\pm 1\}} \alpha
\sigma \int_{\eta_{k+1}} \int_y \int_{t^{\prime} =0}^t \hat\phi(\epsilon
\eta_{k+1}) \hat \phi(y) \\
&e^{i\alpha \epsilon \xi_i \eta_{k+1}/2} e^{i\sigma \xi_i y/2}
e^{-i\sigma(t-t^{\prime} )(\eta_i-2\eta_{k+1})y/2} \\
&\check f_N^{(k+1)}(t^{\prime} , \eta_1, \, . \,, \eta_i-\eta_{k+1} - \frac{y%
}{\epsilon}, \, . \,, \eta_{k+1} +\frac{y}{\epsilon}, \xi_1-
(t-t^{\prime})\eta_1, \, . \,, \\
&\qquad \xi_i-(t-t^{\prime})(\eta_i-\eta_{k+1}), \, . \, ,\xi_k -
(t-t^{\prime})\eta_k, -(t-t^{\prime} )\eta_{k+1}) \, d\eta_{k+1} \, dy \,
dt^{\prime}
\end{align*}
Now change variable $\eta_{k+1} \mapsto \eta_{k+1} - \frac{y}{\epsilon}$ to
obtain 
\begin{align*}
\hspace{0.3in}&\hspace{-0.3in} \check Q_{i,k+1}^{\epsilon}\check
f_N^{(k+1)}(t) = -\epsilon^{-1} \sum_{\alpha,\sigma \in \{\pm 1\}} \alpha
\sigma \int_{\eta_{k+1}} \int_y \int_{t^{\prime} =0}^t \hat\phi(\epsilon
\eta_{k+1} - y) \hat \phi(y) \\
&e^{i\alpha \xi_i (\epsilon \eta_{k+1} - y)/2} e^{i\sigma \xi_i y/2}
e^{-i\sigma(t-t^{\prime} ) (\epsilon\eta_i-2\epsilon \eta_{k+1} +
2y)y/(2\epsilon)} \\
&\check f_N^{(k+1)}(t^{\prime} , \eta_1, \, . \,, \eta_i-\eta_{k+1}, \ldots,
\eta_{k+1} , \xi_1-(t-t^{\prime})\eta_1, \, . \,, \\
& \quad \xi_i-(t-t^{\prime})(\epsilon \eta_i-
\epsilon\eta_{k+1}+y)/\epsilon, \, . \,, \xi_k - (t-t^{\prime})\eta_k,
-(t-t^{\prime} )(\epsilon \eta_{k+1} -y)/\epsilon) \, d\eta_{k+1} \, dy \,
dt^{\prime}
\end{align*}
Now change $t^{\prime} \mapsto s$ where $s=(t-t^{\prime})/\epsilon$ to
obtain 
\begin{equation}  \label{E:S433}
\begin{aligned} \hspace{0.3in}&\hspace{-0.3in} \check
Q_{i,k+1}^{\epsilon}\check f_N^{(k+1)}(t) = - \sum_{\alpha,\sigma \in \{\pm
1\}} \alpha \sigma \int_{\eta_{k+1}} \int_y \int_{s=0}^{t/\epsilon}
\hat\phi(\epsilon \eta_{k+1} - y) \hat \phi(y) \\ &e^{i\alpha \xi_i
(\epsilon \eta_{k+1} - y)/2} e^{i\sigma \xi_i y/2} e^{-i\sigma s
(\epsilon\eta_i-2\epsilon \eta_{k+1} + 2y)y/2} \\ &\check
f_N^{(k+1)}(t-\epsilon s, \eta_1, \ldots, \eta_i-\eta_{k+1}, \ldots,
\eta_{k+1} , \xi_1 - \epsilon s \eta_1, \, . \,, \\ & \qquad \xi_i - sy -
\epsilon s\eta_i+\epsilon s \eta_{k+1}, \, . \,, \xi_k-\epsilon s
\eta_{k+1}, sy -s\epsilon\eta_{k+1}) \, d\eta_{k+1} \, dy \, ds \end{aligned}
\end{equation}
Let $L^2_{\boldsymbol{\eta}_i^*}$ denote the $L^2_{\boldsymbol{\eta}_q}$
norm for all $1\leq q\leq k$ \emph{except} $q=i$. Apply the $L^2_{%
\boldsymbol{\eta}_k^*}L^2_{\boldsymbol{\xi}_k}$ norm to obtain 
\begin{align*}
\hspace{0.3in}&\hspace{-0.3in} \| \check Q_{i,k+1}^{\epsilon}\check
f_N^{(k+1)}(t)\|_{L^2_{\boldsymbol{\eta}_k^*}L^2_{\boldsymbol{\xi}_k}}
\lesssim \int_{\eta_{k+1}} \int_y \int_{s=0}^{t/\epsilon} |\hat\phi(\epsilon
\eta_{k+1} - y)| |\hat \phi(y)| \\
& \| \check f_N^{(k+1)}(t-\epsilon s, \eta_1, \ldots, \eta_i-\eta_{k+1}, \,
. \,, \eta_{k+1} , \boldsymbol{\xi}_k, sy -s\epsilon \eta_{k+1})\|_{L^2_{%
\boldsymbol{\eta}_k^*}L^2_{\boldsymbol{\xi}_k}} \, d\eta_{k+1} \, dy \, ds
\end{align*}
Bring the $y$ integration to the inside and H\"older between the three
factors $|\hat\phi(\epsilon \eta_{k+1} - y)|$, $|\hat \phi(y)|$, and $\check
f_N^{(k+1)}(\ldots, sy -s\epsilon \eta_{k+1})$. This can be done in two
ways, that generate different factors $s^{-\mu}$ after rescaling $\check
f_N^{(k+1)}$, as in the following table

\begin{center}
\renewcommand{\arraystretch}{1.5}
\setlength{\tabcolsep}{10pt}  
\begin{tabular}{ccc|cc}
$|\hat\phi(\epsilon \eta_{k+1} - y)|$ & $|\hat \phi(y)|$ & $%
|f_N^{(k+1)}(\ldots, sy -s\epsilon \eta_{k+1})|$ & 
\shortstack{rescaling \\
$\check f_N^{(k+1)}$} & \shortstack{use\\when} \\ \hline
$L^{3-}_y$ & $L^{3-}_y$ & $L^{3+}_y$ & $s^{-1+}$ & $s\leq 1$ \\ 
$L^{3+}_y$ & $L^{3+}_y$ & $L^{3-}_y$ & $s^{-1-}$ & $s\geq 1$%
\end{tabular}
\end{center}

This gives 
\begin{align*}
\hspace{0.3in}&\hspace{-0.3in} \| \check Q_{i,k+1}^{\epsilon}\check
f_N^{(k+1)}(t)\|_{L^2_{\boldsymbol{\eta}_k^*}L^2_{\boldsymbol{\xi}_k}}
\lesssim \int_{\eta_{k+1}} \int_{s=0}^{t/\epsilon} s^{-1+} \langle s
\rangle^{0-} \\
& \| \check f_N^{(k+1)}(t-\epsilon s, \eta_1, \ldots, \eta_i-\eta_{k+1}, \,
. \,, \eta_{k+1} , \boldsymbol{\xi}_{k+1})\|_{(L^{3-}_{\xi_{k+1}}\cap
L^{3+}_{\xi_{k+1}})L^2_{\boldsymbol{\eta}_k^*} L^2_{\boldsymbol{\xi}_k}} \,
d\eta_{k+1} \, ds
\end{align*}
Apply Sobolev embedding in $\xi_{k+1}$: 
\begin{align*}
\hspace{0.3in}&\hspace{-0.3in} \| \check Q_{i,k+1}^{\epsilon}\check
f_N^{(k+1)}(t)\|_{L^2_{\boldsymbol{\eta}_k^*}L^2_{\boldsymbol{\xi}_k}}
\lesssim \int_{\eta_{k+1}} \int_{s=0}^{t/\epsilon} s^{-1+} \langle s
\rangle^{0-} \\
& \| \langle \xi_{k+1}\rangle^{\frac12+}\check f_N^{(k+1)}(t-\epsilon s,
\eta_1, \ldots, \eta_i-\eta_{k+1}, \, . \,, \eta_{k+1} ,\boldsymbol{\xi}%
_{k+1})\|_{L^2_{\boldsymbol{\eta}_k^*} L^2_{\boldsymbol{\xi}_{k+1}}} \,
d\eta_{k+1} \, ds
\end{align*}
For fixed $\eta_i$, divide the $\eta_{k+1}$ integration into two pieces
depending upon which of the two quantities $|\eta_i-\eta_{k+1}|$ and $%
|\eta_{k+1}|$ is maximum. Since both cases are similar, we will just assume $%
|\eta_{k+1}|$ is maximum. In this case, 
\begin{equation*}
1 \lesssim \langle \eta_i-\eta_{k+1}\rangle^{-\frac32-} \langle
\eta_i-\eta_{k+1}\rangle^{\frac34+} \langle \eta_{k+1}\rangle^{\frac34+}
\end{equation*}
can be inserted. This allows Cauchy-Schwarz in $\eta_{k+1}$ giving 
\begin{align*}
\hspace{0.3in}&\hspace{-0.3in} \| \check Q_{i,k+1}^{\epsilon}\check
f_N^{(k+1)}(t)\|_{L^2_{\boldsymbol{\eta}_k^*}L^2_{\boldsymbol{\xi}_k}}
\lesssim \int_{s=0}^{t/\epsilon} s^{-1+} \langle s \rangle^{0-} \| \langle
\eta_i-\eta_{k+1} \rangle^{\frac34+} \langle \eta_{k+1} \rangle^{\frac34+} \\
& \langle \xi_{k+1}\rangle^{\frac12+}\check f_N^{(k+1)}(t-\epsilon s,
\eta_1, \ldots, \eta_i-\eta_{k+1}, \, . \,, \eta_{k+1} , \boldsymbol{\xi}%
_{k+1})\|_{L^2_{\boldsymbol{\eta}_{k+1}^*} L^2_{\boldsymbol{\xi}_{k+1}}} \,
ds
\end{align*}
where we note that $L^2_{\boldsymbol{\eta}_k^*}$ norm on $\check f_N^{(k+1)}$
has been replaced by $L^2_{\boldsymbol{\eta}_{k+1}^*}$, which means the $%
L^2_{\boldsymbol{\eta}_q}$ norm for all $1\leq q\leq k+1$ \emph{except} $q=i$%
. Now apply the $L^2_{\eta_i}$ norm to obtain 
\begin{align*}
\| \check Q_{i,k+1}^{\epsilon}\check f_N^{(k+1)}(t)\|_{L^2_{\boldsymbol{\eta}%
_k}L^2_{\boldsymbol{\xi}_k}} \lesssim \int_{s=0}^{t/\epsilon} & s^{-1+}
\langle s \rangle^{0-} \| \langle \eta_i-\eta_{k+1} \rangle^{\frac34+}
\langle \eta_{k+1} \rangle^{\frac34+} \\
& \langle \xi_{k+1}\rangle^{\frac12+}\check f_N^{(k+1)}(t-\epsilon s, 
\boldsymbol{\eta}_{k+1} , \boldsymbol{\xi}_{k+1})\|_{L^2_{\boldsymbol{\eta}%
_{k+1}} L^2_{\boldsymbol{\xi}_{k+1}}} \, ds
\end{align*}
Taking sup in the $t$-component of $\check f_N^{(k+1)}$ and carrying out the 
$s$-integral gives the result.
\end{proof}

\subsection{Remainder operator $R_N^{4(k+1)}$}

\begin{proposition}[$R^{4(k+1)}_N$ estimates]
\label{P:R4Estimates} The operator $R^{4(k+1)}_N$ defined by \eqref{E:S411b}
satisfies the following estimate: 
\begin{align*}
\hspace{0.3in}&\hspace{-0.3in} \|\langle \boldsymbol{\eta}_k
\rangle^{-\frac34-} \check R^{4(k+1)}_N \check f_N^{(k+1)}(t)\|_{L^2_{%
\boldsymbol{\eta}_k} L^2_{\boldsymbol{\xi}_k} } \lesssim \epsilon^{0+}
\sum_{1\leq i<j\leq k} \int_{t^{\prime} =0}^t \\
& \|\langle \eta_i \rangle^{\frac13+} \langle \eta_j \rangle^{\frac13+}
\langle \eta_{k+1}\rangle^{\frac13+} \check f_N^{(k+1)}(t^{\prime} , 
\boldsymbol{\eta}_{k+1}, \boldsymbol{\xi}_{k+1})\|_{L^2_{\boldsymbol{\eta}%
_{k+1}} L^2_{\boldsymbol{\xi}_k} L^\infty_{\xi_{k+1}}} \, dt^{\prime}
\end{align*}
\end{proposition}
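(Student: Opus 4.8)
The plan is to treat $R_N^{4(k+1)}$ as the non-resonant counterpart of the collision term $Q_N^{(k+1)}$: the summand deliberately excluded from \eqref{E:S411b}, namely $\ell=i$ and $j=k+1$, is exactly the resonant one in which the $A$- and $B$-interactions act on the same pair $\{i,k+1\}$; every surviving term $R_{\ell,i,j,k+1}^{4,N}$ therefore has its two interactions on disjoint particles, or sharing at most one, and this is precisely what produces the extra $\epsilon^{0+}$ (and the first-order vanishing of $\hat\phi$ at $0$ assumed in Theorem~\ref{thm:main} is what makes that gain genuinely usable, exactly as in Lemma~\ref{L:basic-B}).

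I would fix a single summand $R_{\ell,i,j,k+1}^{4,N}$ with $1\le\ell\le k$, $1\le i<j\le k+1$, $(\ell,j)\ne(i,k+1)$, and write it out on the $(\boldsymbol{\eta},\boldsymbol{\xi})$ side just as \eqref{E:S433} is obtained: compose the representation \eqref{E:S407} of $N\epsilon^{-1/2}\check B_{\ell,k+1}^\epsilon$ with the representation \eqref{E:S410} of $\mathcal{D}^{(k+1)}\epsilon^{-1/2}\check A_{i,j}^\epsilon$, which gives a finite sum over signs $\alpha,\sigma\in\{\pm1\}$ of integrals in the $B$-momentum $\eta_{k+1}$, the $A$-momentum $y$, and the Duhamel time $t'$, with integrand weighted by $\hat\phi(\epsilon\eta_{k+1})\,\hat\phi(y)$, carrying only $\epsilon$-coefficient phases (hence harmless in the limit), and with $\check f_N^{(k+1)}(t',\cdot)$ evaluated at frequency arguments in which $\eta_i,\eta_j$ are shifted by $\pm y/\epsilon$, $\eta_\ell$ by $-\eta_{k+1}$, the $(k+1)$-st slot equals $\eta_{k+1}$, the $\xi$-slots carry the transport shifts $-(t-t')(\cdots)$ coming from $\mathcal{D}^{(k+1)}$, and $\xi_{k+1}=0$. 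I would then normalize by $y\mapsto\epsilon y$ so that the two potential factors become $\hat\phi(\epsilon\eta_{k+1})$ and $\hat\phi(\epsilon y)$ and the shifts in $\eta_i,\eta_j$ become $O(1)$.

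The core of the argument is then a fixed-$t'$ estimate, carried out by the same H\"older/Sobolev technology as Lemma~\ref{L:basic-A} and the proof of Proposition~\ref{P:QEstimates}: apply the weighted $L^2_{\boldsymbol{\eta}_k}L^2_{\boldsymbol{\xi}_k}$ norm, bring it inside the $\eta_{k+1}$- and $y$-integrals by Minkowski, and --- using that the two interactions share at most one particle, so the arguments of $\hat\phi(\epsilon y)$ and $\hat\phi(\epsilon\eta_{k+1})$ are genuinely independent --- H\"older each potential factor into an appropriate $L^{3+}$-type norm against $\check f_N^{(k+1)}$ in the complementary exponent, then Sobolev-embed and translate in the output variables to distribute the $\approx\frac13+$ of $x$-derivative onto the weights $\langle\eta_i\rangle^{\frac13+}\langle\eta_j\rangle^{\frac13+}\langle\eta_{k+1}\rangle^{\frac13+}$ and to collect the net power $\epsilon^{0+}$. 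The one feature absent from the $Q_N^{(k+1)}$ analysis is the frequency shift $\eta_\ell\mapsto\eta_\ell-\eta_{k+1}$ that $B$ imposes on the output slot $\ell$: it is absorbed by a Cauchy--Schwarz in $\eta_{k+1}$ against $\langle\eta_{k+1}\rangle^{-\frac32-}$, half of which must be soaked up by the output weight $\langle\boldsymbol{\eta}_k\rangle^{-\frac34-}$ (as in the $\langle\eta_i-\eta_{k+1}\rangle^{-\frac32-}$ step of Proposition~\ref{P:QEstimates}) --- this is exactly why $R_N^{4(k+1)}$ is only controlled in $H^{-\frac34-}_x$ rather than $L^2_x$. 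The $\xi_{k+1}$-slot of $\check f_N^{(k+1)}$, pinned at a fixed value, is dominated by its $L^\infty_{\xi_{k+1}}$ norm, the outer $\int_0^t dt'$ is retained, and summing over the finitely many signs and admissible $(\ell,i,j)$ --- relabeling so the $A$-pair runs over $1\le i<j\le k$, the sub-cases $j=k+1$, $\ell\ne i$ folding into the bound with $\eta_{k+1}$ in the role of $\eta_j$ --- gives the stated inequality.

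The main obstacle is the organization of the case analysis and the associated weight bookkeeping: one must check that for \emph{every} surviving $(\ell,i,j)$ the $A$- and $B$-interactions touch at most one common particle, so that the $L^{3+}\times L^{3+}$ H\"older split (the mechanism for the $\epsilon^{0+}$) is legitimate --- this is exactly what fails for the excluded resonant term, and is the structural reason Proposition~\ref{P:QEstimates} has no $\epsilon^{0+}$ --- and one must verify that the borderline sub-case $j=k+1$, $\ell\ne i$, in which $\eta_{k+1}$ occupies two slots of $\check f_N^{(k+1)}$ simultaneously, still closes with no more than the claimed $\frac13+$ derivative weights and with only $H^{-\frac34-}_x$ on the output.
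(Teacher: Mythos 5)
Your overall architecture is the paper's: write each $R^{4,N}_{\ell,i,j,k+1}$ on the $(\boldsymbol{\eta},\boldsymbol{\xi})$ side by composing \eqref{E:S409} with \eqref{E:S410}, rescale $y\mapsto\epsilon y$ so the prefactor becomes $\epsilon^{2}$ and both potentials read $\hat\phi(\epsilon\,\cdot\,)$, apply Minkowski and sup out $\xi_{k+1}$, split according to which of the shifted frequency arguments is largest, and H\"older/Cauchy--Schwarz the two potential factors into $L^{p}$-type norms (e.g.\ $L^{3+}_{y}\times L^{3+}_{\eta_{k+1}}$) whose scaling costs exactly $\epsilon^{-2+}$ in total, leaving $\epsilon^{0+}$ with $\tfrac13+$ derivatives on each of the three active slots; the case analysis ($j=k+1$ with $\ell\neq i$; $\ell=i$; $\ell=j$; $\ell\notin\{i,j\}$, the last three with $j\le k$) is exactly the one the paper runs.

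Two points in your stated mechanism are off, however. First, the $\epsilon^{0+}$ here owes nothing to the vanishing of $\hat\phi$ at the origin: it is pure $L^{p}$ scaling of $\hat\phi(\epsilon\,\cdot\,)$ (e.g.\ $\|\hat\phi(\epsilon y)\|_{L^{3+}_{y}}\sim\epsilon^{-1+}$), available for any Schwartz $\phi$; the vanishing at zero is what Lemma \ref{L:basic-B} and part of Proposition \ref{P:R5Estimates} need, not this proposition. Second, and this is the step that would actually fail as written, your account of the output weight $\langle\boldsymbol{\eta}_k\rangle^{-\frac34-}$ does not close: $\eta_{k+1}$ is an integration variable, so the $\langle\eta_{k+1}\rangle^{-\frac32-}$ inserted for Cauchy--Schwarz must be repaid by weights placed on the \emph{input} density (this is precisely what happens in the $\langle\eta_i-\eta_{k+1}\rangle^{-\frac32-}$ step of Proposition \ref{P:QEstimates}, whose output is plain $L^{2}$) --- it cannot be ``soaked up'' by a negative weight in the output variables $\boldsymbol{\eta}_k$. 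The true source of the $-\tfrac34-$ is that, after the translational changes of variables, the estimate closes only in the mixed norms $L^{\infty}_{\eta_i}L^{2}_{\eta_j}\cap L^{\infty}_{\eta_j}L^{2}_{\eta_i}$ in the two $A$-shifted output variables (one of them must be supped out to absorb the $y$-translation), and Schur's test in the form $\|\langle u\rangle^{-\frac34-}\langle v\rangle^{-\frac34-}h\|_{L^{2}_{uv}}\lesssim\|h\|_{L^{\infty}_{u}L^{2}_{v}}^{1/2}\|h\|_{L^{\infty}_{v}L^{2}_{u}}^{1/2}$ converts that pair of bounds into the stated weighted $L^{2}$ estimate. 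With those two corrections your plan executes as in the paper.
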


\begin{proof}
There are four cases.

\bigskip

\noindent\emph{Case 1. $\ell\neq i$ and $j= k+1$}. Composing \eqref{E:S409}
and \eqref{E:S410} in this case gives 
\begin{align*}
\hspace{0.3in}&\hspace{-0.3in} \check R_{\ell,i,k+1,k+1}^{4,N}\check
f_N^{(k+1)}(t) = -\epsilon^{-1} \sum_{\alpha,\sigma \in \{\pm 1\}} \alpha
\sigma \int_{\eta_{k+1}} \int_y \int_{t^{\prime} =0}^t \hat\phi(\epsilon
\eta_{k+1}) \hat \phi(y) \\
&e^{i\alpha \epsilon \xi_\ell \eta_{k+1}/2} e^{i\sigma (\xi_i-\xi_{k+1})y/2}
e^{-i\sigma(t-t^{\prime} )(\eta_i-\eta_{k+1})y/2} \\
&\check f_N^{(k+1)}(t^{\prime} , \eta_1, \, . \, , \eta_i - \frac{y}{\epsilon%
}, \, . \,,\eta_\ell-\eta_{k+1}, \, . \,, \eta_{k+1}+ \frac{y}{\epsilon}, \\
&\xi_1 - (t-t^{\prime})\eta_1, \, . \,,
\xi_\ell-(t-t^{\prime})(\eta_\ell-\eta_{k+1}), \, . \,,
\xi_k-(t-t^{\prime})\eta_k, -(t-t^{\prime})\eta_{k+1}) \, d\eta_{k+1} \, dy
\, dt^{\prime}
\end{align*}
Change variables $y\mapsto \epsilon y$ to obtain 
\begin{align*}
\hspace{0.3in}&\hspace{-0.3in} \check R_{\ell,i,k+1,k+1}^{4,N}\check
f_N^{(k+1)}(t) = -\epsilon^2 \sum_{\alpha,\sigma \in \{\pm 1\}} \alpha
\sigma \int_{\eta_{k+1}} \int_y \int_{t^{\prime} =0}^t \hat\phi(\epsilon
\eta_{k+1}) \hat \phi(\epsilon y) \\
&e^{i\alpha \epsilon \xi_\ell \eta_{k+1}/2} e^{i\sigma \epsilon
(\xi_i-\xi_{k+1}) y/2} e^{-i\sigma \epsilon (t-t^{\prime}
)(\eta_i-\eta_{k+1})y/2} \\
&\check f_N^{(k+1)}(t^{\prime} , \eta_1, \, . \, , \eta_i - y, \, .
\,,\eta_\ell-\eta_{k+1}, \, . \,, \eta_{k+1} + y, \\
& \xi_1 - (t-t^{\prime})\eta_1, \, . \,,
\xi_\ell-(t-t^{\prime})(\eta_\ell-\eta_{k+1}), \, . \,,
\xi_k-(t-t^{\prime})\eta_k, -(t-t^{\prime} )\eta_{k+1}) \, d\eta_{k+1} \, dy
\, dt^{\prime}
\end{align*}
This gives 
\begin{align*}
\hspace{0.3in}&\hspace{-0.3in} |\check R_{\ell,i,k+1,k+1}^{4,N}\check
f_N^{(k+1)}(t)| \leq \epsilon^2 \int_{\eta_{k+1}} \int_y \int_{t^{\prime}
=0}^t |\hat\phi(\epsilon \eta_{k+1})| |\hat \phi(\epsilon y)| \\
&|\check f_N^{(k+1)}(t^{\prime} , \eta_1, \, . \, , \eta_i - y, \, .
\,,\eta_\ell-\eta_{k+1}, \, . \,, \eta_{k+1} + y, \\
& \xi_1 - (t-t^{\prime})\eta_1, \, . \,,
\xi_\ell-(t-t^{\prime})(\eta_\ell-\eta_{k+1}), \, . \,,
\xi_k-(t-t^{\prime})\eta_k, -(t-t^{\prime} )\eta_{k+1})| \, d\eta_{k+1} \,
dy \, dt^{\prime}
\end{align*}
Start by applying $L^2_{\boldsymbol{\xi}_k}$, applying Minkowski's integral
inequality, and sup-out in the $\xi_{k+1}$ entry: 
\begin{align*}
\hspace{0.3in}&\hspace{-0.3in} \|\check R_{\ell,i,k+1,k+1}^{4,N}\check
f_N^{(k+1)}(t)\|_{L^2_{\boldsymbol{\xi}_k}} \leq \epsilon^2
\int_{\eta_{k+1}} \int_y \int_{t^{\prime} =0}^t |\hat\phi(\epsilon
\eta_{k+1})| |\hat \phi(\epsilon y)| \\
&\|\check f_N^{(k+1)}(t^{\prime} , \eta_1, \, . \, , \eta_i - y, \, .
\,,\eta_\ell-\eta_{k+1}, \, . \,, \eta_{k+1} + y, \boldsymbol{\xi}%
_{k+1})\|_{L^2_{\boldsymbol{\xi}_k} L^\infty_{\xi_{k+1}}} \, d\eta_{k+1} \,
dy \, dt^{\prime}
\end{align*}
Let $L^2_{\boldsymbol{\eta}_k^*}$ indicate the $L^2$ norm over all $\eta_q$
for $1\leq q \leq k$ \emph{except} $q=i$ and $q=\ell$. By Minkowski's
integral inequality, 
\begin{align*}
\hspace{0.3in}&\hspace{-0.3in} \|\check R_{\ell,i,k+1,k+1}^{4,N}\check
f_N^{(k+1)}(t)\|_{L^2_{\boldsymbol{\eta}_k^*} L^2_{\boldsymbol{\xi}_k}} \leq
\epsilon^2\int_{t^{\prime} =0}^t \int_{\eta_{k+1}} \int_y |\hat\phi(\epsilon
\eta_{k+1})| |\hat \phi(\epsilon y)| \\
&\|\check f_N^{(k+1)}(t^{\prime} , \eta_1, \, . \, , \eta_i - y, \, .
\,,\eta_\ell-\eta_{k+1}, \, . \,, \eta_{k+1} + y, \boldsymbol{\xi}%
_{k+1})\|_{L^2_{\boldsymbol{\eta}_k^*} L^2_{\boldsymbol{\xi}_k}
L^\infty_{\xi_{k+1}}} \, d\eta_{k+1} \, dy \, dt^{\prime}
\end{align*}
Divide the $\eta_{k+1}$, $y$ integration space into three regions depending
upon the relative size of $|\eta_i-y|$, $|\eta_\ell-\eta_{k+1}|$, and $%
|\eta_{k+1}+y|$. In the case when the quantity $|\eta_i-y|$ is the largest
of the three, we use 
\begin{equation*}
1 \leq \langle \eta_\ell - \eta_{k+1} \rangle^{-\frac12-} \langle
\eta_{k+1}+y \rangle^{-\frac12-} \langle \eta_i-y \rangle^{\frac13+} \langle
\eta_\ell - \eta_{k+1} \rangle^{\frac13+} \langle \eta_{k+1}+y
\rangle^{\frac13+}
\end{equation*}
From here, it is similar to the conclusion of Case 3.

\bigskip

\noindent\emph{Case 2. $\ell=i$ and $j\leq k$}. Aside from altering
inconsequential phase factors, this case is identical to Case 3 below.

\bigskip

\noindent\emph{Case 3. $\ell=j$ and $j\leq k$}. In this case, we obtain the
bound \eqref{E:S420} below. Composing \eqref{E:S409} and \eqref{E:S410} in
this case gives 
\begin{align*}
\hspace{0.3in}&\hspace{-0.3in} \check R_{j,i,j,k+1}^{4,N}\check
f_N^{(k+1)}(t) = -\epsilon^{-1} \sum_{\alpha,\sigma \in \{\pm 1\}} \alpha
\sigma \int_{\eta_{k+1}} \int_y \int_{t^{\prime} =0}^t \hat\phi(\epsilon
\eta_{k+1}) \hat \phi(y) \\
&e^{i\alpha \epsilon \xi_j \eta_{k+1}/2} e^{i\sigma (\xi_i-\xi_j) y/2}
e^{-i\sigma(t-t^{\prime} )(\eta_i-\eta_j+\eta_{k+1})y/2} \\
&\check f_N^{(k+1)}(t^{\prime} , \eta_1, \, . \, , \eta_i - \frac{y}{\epsilon%
}, \, . \,, \eta_j -\eta_{k+1}+ \frac{y}{\epsilon}, \, . \, , \eta_{k+1}, \\
&\xi_1-(t-t^{\prime})\eta_1, \, . \,,
\xi_j-(t-t^{\prime})(\eta_j-\eta_{k+1}), \, . \,,
\xi_k-(t-t^{\prime})\eta_k, -(t-t^{\prime} )\eta_{k+1}) \, d\eta_{k+1} \, dy
\, dt^{\prime}
\end{align*}
Change variables $y\mapsto \epsilon y$ to obtain 
\begin{align*}
\hspace{0.3in}&\hspace{-0.3in} \check R_{j,i,j,k+1}^{4,N}\check
f_N^{(k+1)}(t) = -\epsilon^2 \sum_{\alpha,\sigma \in \{\pm 1\}} \alpha
\sigma \int_{\eta_{k+1}} \int_y \int_{t^{\prime} =0}^t \hat\phi(\epsilon
\eta_{k+1}) \hat \phi(\epsilon y) \\
&e^{i\alpha \epsilon \xi_j \eta_{k+1}/2} e^{i\sigma \epsilon (\xi_i-\xi_j)
y/2} e^{-i\sigma \epsilon (t-t^{\prime} )(\eta_i-\eta_j+\eta_{k+1})y/2} \\
&\check f_N^{(k+1)}(t^{\prime} , \eta_1, \, . \, , \eta_i - y, \, . \,,
\eta_j -\eta_{k+1} + y, \, . \, , \eta_{k+1}, \\
&\xi_1-(t-t^{\prime})\eta_1, \, . \,,
\xi_j-(t-t^{\prime})(\eta_j-\eta_{k+1}), \, . \,,
\xi_k-(t-t^{\prime})\eta_k, -(t-t^{\prime} )\eta_{k+1}) \, d\eta_{k+1} \, dy
\, dt^{\prime}
\end{align*}
This gives 
\begin{align*}
\hspace{0.3in}&\hspace{-0.3in} |\check R_{j,i,j,k+1}^{4,N}\check
f_N^{(k+1)}(t)| \leq \epsilon^2 \int_{\eta_{k+1}} \int_y \int_{t^{\prime}
=0}^t |\hat\phi(\epsilon \eta_{k+1})| |\hat \phi(\epsilon y)| \\
&|\check f_N^{(k+1)}(t^{\prime} , \eta_1, \, . \, , \eta_i - y, \, . \,,
\eta_j -\eta_{k+1} + y, \, . \, , \eta_{k+1}, \\
&\xi_1-(t-t^{\prime})\eta_1, \, . \,,
\xi_j-(t-t^{\prime})(\eta_j-\eta_{k+1}), \, . \,,
\xi_k-(t-t^{\prime})\eta_k, -(t-t^{\prime} )\eta_{k+1})| \, d\eta_{k+1} \,
dy \, dt^{\prime}
\end{align*}
Let $L^2_{\boldsymbol{\eta}_k^*}$ indicate the $L^2$ norm over all $\eta_q$
for $1\leq q \leq k$ \emph{except} $q=i$ and $q=j$. Start by applying $L^2_{%
\boldsymbol{\eta}_k^*}L^2_{\boldsymbol{\xi}_k}$, applying Minkowski's
integral inequality, and sup-out in the $\xi_{k+1}$ entry: 
\begin{equation}  \label{E:S416}
\begin{aligned} \hspace{0.3in}&\hspace{-0.3in} \|\check
R_{j,i,j,k+1}^{4,N}\check f_N^{(k+1)}(t)\|_{L^2_{\bds \eta_k^*}
L^2_{\bds\xi_k}} \leq \epsilon^2 \int_{t^{\prime} =0}^t \int_{\eta_{k+1}}
\int_y |\hat\phi(\epsilon \eta_{k+1})| |\hat \phi(\epsilon y)| \\ &\|\check
f_N^{(k+1)}(t^{\prime} , \eta_1, \, . \, , \eta_i - y, \, . \,, \eta_j
-\eta_{k+1}+ y, \, . \, , \eta_{k+1}, \boldsymbol{\xi}_{k+1})\|_{L^2_{\bds
\eta_k^*} L^2_{\bds\xi_k} L^\infty_{\xi_{k+1}}} \, d\eta_{k+1} \, dy \,
dt^{\prime} \end{aligned}
\end{equation}
For fixed $\eta_i$ and $\eta_j$, we can divide the $\eta_{y+1}$, $y$
integration space into three pieces depending on the relative size of the
three quantities $|\eta_i-y|$, $|\eta_j-\eta_{k+1}+y|$, and $|\eta_{k+1}|$.
Since all three cases are similar, we will just present one of them. If $%
|\eta_i-y|$ is largest, then we use 
\begin{equation*}
1 \leq \langle \eta_j - \eta_{k+1}+ y \rangle^{-\frac12-} \langle \eta_{k+1}
\rangle^{-\frac12-} \langle \eta_i-y\rangle^{\frac13+}\langle \eta_j -
\eta_{k+1}+ y \rangle^{\frac13+} \langle \eta_{k+1} \rangle^{\frac13+}
\end{equation*}
Apply H\"older in $y$ on the inside using that for fixed $\eta_j$ and $%
\eta_{k+1}$, the quantity $\| \langle \eta_j-\eta_{k+1}+y
\rangle^{-\frac12-} \|_{L_y^{6-}}$ is finite (uniformly in $\eta_j$ and $%
\eta_{k+1}$), then H\"older in $\eta_{k+1}$ using that $\| \langle
\eta_{k+1} \rangle^{-\frac12-} \|_{L_{\eta_{k+1}}^{6-}}$ is finite, to
obtain 
\begin{align*}
\hspace{0.3in}&\hspace{-0.3in} \|\check R_{j,i,j,k+1}^{4,N}\check
f_N^{(k+1)}(t)\|_{L^2_{\boldsymbol{\eta}_k^*} L^2_{\boldsymbol{\xi}_k}} \leq
\epsilon^2 \int_{t^{\prime} =0}^t \| \hat\phi(\epsilon \eta_{k+1})
\|_{L^{3+}_{\eta_{k+1}}} \|\hat \phi(\epsilon y)\|_{L^{3+}_y} \\
&\|\langle \eta_i-y \rangle^{\frac13+} \langle \eta_j-\eta_{k+1}+y
\rangle^{\frac13+} \langle \eta_{k+1}\rangle^{\frac13+} \\
& \qquad \qquad \check f_N^{(k+1)}(t^{\prime} , \eta_1, \, . \, , \eta_i -
y, \, . \,, \eta_j -\eta_{k+1}+ y, \, . \, , \eta_{k+1}, \boldsymbol{\xi}%
_{k+1})\|_{L^2_y L^2_{\eta_{k+1}} L^2_{\boldsymbol{\eta}_k^*} L^2_{%
\boldsymbol{\xi}_k} L^\infty_{\xi_{k+1}}} \, dt^{\prime}
\end{align*}
Scale the norms on $\hat \phi$, which reduces $\epsilon^2$ to $\epsilon^{0+}$%
. Apply the $L^\infty_{\eta_j}L^2_{\eta_i}$ norm, and on the right-side,
bring the $L^2_{\eta_i}$ norm to the inside by Minkowski's integral
inequality. On the inside the norms in the order $L^2_{\eta_{k+1}} L^2_y
L^2_{\eta_i}$ admit translational change of variables that yield: 
\begin{align*}
\hspace{0.3in}&\hspace{-0.3in} \|\check R_{j,i,j,k+1}^{4,N}\check
f_N^{(k+1)}(t)\|_{L^\infty_{\eta_j} L^2_{\eta_i} L^2_{\boldsymbol{\eta}_k^*}
L^2_{\boldsymbol{\xi}_k} } \leq \epsilon^{0+} \int_{t^{\prime} =0}^t \\
&\|\langle \eta_i \rangle^{\frac13+} \langle \eta_j \rangle^{\frac13+}
\langle \eta_{k+1}\rangle^{\frac13+} \check f_N^{(k+1)}(t^{\prime} , 
\boldsymbol{\eta}_{k+1}, \boldsymbol{\xi}_{k+1})\|_{L^2_{\boldsymbol{\eta}%
_{k+1}} L^2_{\boldsymbol{\xi}_k} L^\infty_{\xi_{k+1}}} \, dt^{\prime}
\end{align*}
The same result can be obtained by applying the $L^\infty_{\eta_i}L^2_{%
\eta_j}$ norm instead of the $L^\infty_{\eta_j}L^2_{\eta_i}$ norm. Thus 
\begin{align*}
\hspace{0.3in}&\hspace{-0.3in} \|\check R_{j,i,j,k+1}^{4,N}\check
f_N^{(k+1)}(t)\|_{(L^\infty_{\eta_j} L^2_{\eta_i}\cap
L^\infty_{\eta_i}L^2_{\eta_j}) L^2_{\boldsymbol{\eta}_k^*} L^2_{\boldsymbol{%
\xi}_k} } \leq \epsilon^{0+} \int_{t^{\prime} =0}^t \\
&\|\langle \eta_i \rangle^{\frac13+} \langle \eta_j \rangle^{\frac13+}
\langle \eta_{k+1}\rangle^{\frac13+} \check f_N^{(k+1)}(t^{\prime} , 
\boldsymbol{\eta}_{k+1}, \boldsymbol{\xi}_{k+1})\|_{L^2_{\boldsymbol{\eta}%
_{k+1}} L^2_{\boldsymbol{\xi}_k} L^\infty_{\xi_{k+1}}} \, dt^{\prime}
\end{align*}
Finally, we conclude by applying Schur's test on the left side in the form 
\begin{equation*}
\| \langle u \rangle^{-\frac34-} \langle v \rangle^{-\frac34-} h(u,v)
\|_{L^2_{uv}} \lesssim \|h\|_{L_u^\infty L_v^2}^{1/2} \|h \|_{L_v^\infty
L_u^2}^{1/2}
\end{equation*}
to obtain 
\begin{equation}  \label{E:S420}
\begin{aligned} \hspace{0.3in}&\hspace{-0.3in} \|\langle \eta_i
\rangle^{-\frac34-} \langle \eta_j \rangle^{-\frac34-}\check
R_{j,i,j,k+1}^{4,N}\check f_N^{(k+1)}(t)\|_{L^2_{\bds \eta_k}
L^2_{\bds\xi_k} } \leq \epsilon^{0+} \int_{t^{\prime} =0}^t \\ &\|\langle
\eta_i \rangle^{\frac13+} \langle \eta_j \rangle^{\frac13+} \langle
\eta_{k+1}\rangle^{\frac13+} \check f_N^{(k+1)}(t^{\prime} ,
\boldsymbol{\eta}_{k+1}, \boldsymbol{\xi}_{k+1})\|_{L^2_{\bds \eta_{k+1}}
L^2_{\bds\xi_k} L^\infty_{\xi_{k+1}}} \, dt^{\prime} \end{aligned}
\end{equation}

\bigskip

\noindent\emph{Case 4. $\ell \notin \{i,j\}$ and $j\leq k$}. This case
results in the inequality \eqref{E:S415} below. Composing \eqref{E:S409} and %
\eqref{E:S410} in this case gives 
\begin{align*}
\hspace{0.3in}&\hspace{-0.3in} \check R_{\ell,i,j,k+1}^{4,N}\check
f_N^{(k+1)}(t) = -\epsilon^{-1} \sum_{\alpha,\sigma \in \{\pm 1\}} \alpha
\sigma \int_{t^{\prime} =0}^t \int_{\eta_{k+1}} \int_y \hat\phi(\epsilon
\eta_{k+1}) \hat \phi(y) \\
&e^{i\alpha \epsilon \xi_\ell \eta_{k+1}/2} e^{i\sigma (\xi_i-\xi_j) y/2}
e^{-i\sigma(t-t^{\prime} )(\eta_i-\eta_j)y/2} \\
&\check f_N^{(k+1)}(t^{\prime} , \eta_1, \, . \, , \eta_i - \frac{y}{\epsilon%
}, \, . \,,\eta_\ell-\eta_{k+1}, \, . \,, \eta_j + \frac{y}{\epsilon}, \, .
\,, \eta_{k+1}, \\
& \xi_1-(t-t^{\prime})\eta_1, \, . \,,
\xi_\ell-(t-t^{\prime})(\eta_\ell-\eta_{k+1}), \, . \,,
\xi_k-(t-t^{\prime})\eta_k, -(t-t^{\prime} )\eta_{k+1}) \, d\eta_{k+1} \, dy
\, dt^{\prime}
\end{align*}
Change variables $y\mapsto \epsilon y$ to obtain 
\begin{align*}
\hspace{0.3in}&\hspace{-0.3in} \check R_{\ell,i,j,k+1}^{4,N}\check
f_N^{(k+1)}(t) = -\epsilon^2 \sum_{\alpha,\sigma \in \{\pm 1\}} \alpha
\sigma \int_{t^{\prime} =0}^t \int_{\eta_{k+1}} \int_y \hat\phi(\epsilon
\eta_{k+1}) \hat \phi(\epsilon y) \\
&e^{i\alpha \epsilon \xi_\ell \eta_{k+1}/2} e^{i\sigma \epsilon
(\xi_i-\xi_j) y/2} e^{-i\sigma \epsilon (t-t^{\prime} )(\eta_i-\eta_j)y/2} \\
&\check f_N^{(k+1)}(t^{\prime} , \eta_1, \, . \, , \eta_i - y, \, .
\,,\eta_\ell-\eta_{k+1}, \, . \,, \eta_j + y, \, . \,, \eta_{k+1}, \\
& \xi_1-(t-t^{\prime})\eta_1, \, . \,,
\xi_\ell-(t-t^{\prime})(\eta_\ell-\eta_{k+1}), \, . \,,
\xi_k-(t-t^{\prime})\eta_k, -(t-t^{\prime} )\eta_{k+1}) \, d\eta_{k+1} \, dy
\, dt^{\prime}
\end{align*}
This gives 
\begin{align*}
\hspace{0.3in}&\hspace{-0.3in} |\check R_{\ell,i,j,k+1}^{4,N}\check
f_N^{(k+1)}(t)| \leq \epsilon^2 \int_{t^{\prime} =0}^t \int_{\eta_{k+1}}
\int_y |\hat\phi(\epsilon \eta_{k+1})| |\hat \phi(\epsilon y)| \\
&|\check f_N^{(k+1)}(t^{\prime} , \eta_1, \, . \, , \eta_i - y, \, .
\,,\eta_\ell-\eta_{k+1}, \, . \,, \eta_j + y, \, . \,, \eta_{k+1}, \\
& \xi_1-(t-t^{\prime})\eta_1, \, . \,,
\xi_\ell-(t-t^{\prime})(\eta_\ell-\eta_{k+1}), \, . \,,
\xi_k-(t-t^{\prime})\eta_k, -(t-t^{\prime} )\eta_{k+1})| \, d\eta_{k+1} \,
dy \, dt^{\prime}
\end{align*}
Let $L^2_{\boldsymbol{\eta}_k^*}$ indicate the $L^2$ norm over all $\eta_q$
for $1\leq q \leq k$ \emph{except} $q=i$ and $q=j$. Applying $L^2_{%
\boldsymbol{\eta}_k^*}L^2_{\boldsymbol{\xi}_k}$, applying Minkowski's
integral inequality, and sup-out in the $\xi_{k+1}$ entry: 
\begin{align*}
\hspace{0.3in}&\hspace{-0.3in} \|\check R_{\ell,i,j,k+1}^{4,N}\check
f_N^{(k+1)}(t)\|_{L^2_{\boldsymbol{\eta}_k^*} L^2_{\boldsymbol{\xi}_k}} \leq
\epsilon^2 \int_{t^{\prime} =0}^t \int_{\eta_{k+1}} \int_y
|\hat\phi(\epsilon \eta_{k+1})| |\hat \phi(\epsilon y)| \\
&\|\check f_N^{(k+1)}(t^{\prime} , \eta_1, \, . \, , \eta_i - y, \, . \,,
\eta_j + y, \eta_{k+1}, \boldsymbol{\xi}_{k+1})\|_{L^2_{\boldsymbol{\eta}%
_k^*} L^2_{\boldsymbol{\xi}_k} L^\infty_{\xi_{k+1}}} \, d\eta_{k+1} \, dy \,
dt^{\prime}
\end{align*}
For fixed $\eta_i$, we divide the $y$ integration space into two cases
depending upon which of the two quantities $|\eta_i-y|$ or $|\eta_j+y|$ is
minimum. The two cases are similar so we just present one and assume $%
|\eta_i-y|$ is minimum. In this case we use 
\begin{equation*}
1 \leq \langle \eta_i- y\rangle^{-\frac23-} \langle \eta_i - y
\rangle^{\frac13+} \langle \eta_j+y \rangle^{\frac13+}
\end{equation*}
Insert this bound, and also $1\leq \eta_{k+1} \rangle^{-\frac13-} \langle
\eta_{k+1} \rangle^{\frac13+}$, and Cauchy-Schwarz in $y$ and $\eta_{k+1}$: 
\begin{align*}
\hspace{0.3in}&\hspace{-0.3in} \|\check R_{\ell,i,j,k+1}^{4,N}\check
f_N^{(k+1)}(t)\|_{L^2_{\boldsymbol{\eta}_k^*} L^2_{\boldsymbol{\xi}_k}} \leq
\epsilon^2 \int_{t^{\prime} =0}^t \|\hat\phi(\epsilon \eta_{k+1}) \langle
\eta_{k+1}\rangle^{-\frac13-} \|_{L^2_{\eta_{k+1}}} \|\hat \phi(\epsilon y)
\langle \eta_i - y \rangle^{-\frac23-} \|_{L^2_y} \\
&\|\langle \eta_i-y\rangle^{\frac13+} \langle \eta_j+y
\rangle^{\frac13+}\langle \eta_{k+1}\rangle^{\frac13+} \\
& \hspace{1in} \check f_N^{(k+1)}(t^{\prime} , \eta_1, \, . \, , \eta_i - y,
\, . \,, \eta_j + y, \eta_{k+1}, \boldsymbol{\xi}_{k+1})\|_{L^2_y L^2_{%
\boldsymbol{\eta}_{k+1}^*} L^2_{\boldsymbol{\xi}_k} L^\infty_{\xi_{k+1}}} \,
d\eta_{k+1} \, dy \, dt^{\prime}
\end{align*}
where now $L^2_{\boldsymbol{\eta}_{k+1}^*}$ indicate the $L^2$ norm over all 
$\eta_q$ for $1\leq q \leq k+1$ \emph{except} $q=i$ and $q=j$. By H\"older
and scaling, 
\begin{equation*}
\| \hat \phi(\epsilon \eta)\langle \eta \rangle^{-\frac13-} \|_{L^2}
\lesssim \| \hat \phi \|_{L^{\frac{18}{7}+}} \epsilon^{-\frac76+} \,, \qquad
\| \hat \phi(\epsilon y) \langle y \rangle^{-\frac23-} \|_{L^2} \lesssim \|
\hat \phi \|_{L^{\frac{18}{5}+}} \epsilon^{-\frac56+}
\end{equation*}
which can be inserted above. Following through with the norm $%
L_{\eta_i}^\infty L_{\eta_j}^2 \cap L_{\eta_j}^\infty L_{\eta_i}^2$ gives 
\begin{equation*}
\begin{aligned} \hspace{0.3in}&\hspace{-0.3in} \|\check
R_{\ell,i,j,k+1}^{4,N}\check f_N^{(k+1)}(t)\|_{(L_{\eta_i}^\infty
L_{\eta_j}^2 \cap L_{\eta_j}^\infty L_{\eta_i}^2) L^2_{\bds \eta_k^*}
L^2_{\bds\xi_k}} \leq \epsilon^{0+} \int_{t^{\prime} =0}^t \\ &\|\langle
\eta_i\rangle^{\frac13+} \langle \eta_j \rangle^{\frac13+}\langle
\eta_{k+1}\rangle^{\frac13+} \check f_N^{(k+1)}(t^{\prime} ,
\boldsymbol{\eta}_{k+1}, \boldsymbol{\xi}_{k+1})\|_{L^2_{\bds \eta_{k+1}}
L^2_{\bds\xi_k} L^\infty_{\xi_{k+1}}} \, d\eta_{k+1} \, dy \, dt^{\prime}
\end{aligned}
\end{equation*}
Finally, we conclude by applying Schur's test on the left side in the form 
\begin{equation*}
\| \langle u \rangle^{-\frac34-} \langle v \rangle^{-\frac34-} h(u,v)
\|_{L^2_{uv}} \lesssim \|h\|_{L_u^\infty L_v^2}^{1/2} \|h \|_{L_v^\infty
L_u^2}^{1/2}
\end{equation*}
to obtain 
\begin{equation}  \label{E:S415}
\begin{aligned} \hspace{0.3in}&\hspace{-0.3in} \|\langle \eta_i
\rangle^{-\frac34-} \langle \eta_j \rangle^{-\frac34-}\check
R_{\ell,i,j,k+1}^{4,N}\check f_N^{(k+1)}(t)\|_{ L^2_{\bds \eta_k}
L^2_{\bds\xi_k}} \leq \epsilon^{0+} \int_{t^{\prime} =0}^t \\ &\|\langle
\eta_i\rangle^{\frac13+} \langle \eta_j \rangle^{\frac13+}\langle
\eta_{k+1}\rangle^{\frac13+} \check f_N^{(k+1)}(t^{\prime} ,
\boldsymbol{\eta}_{k+1}, \boldsymbol{\xi}_{k+1})\|_{L^2_{\bds \eta_{k+1}}
L^2_{\bds\xi_k} L^\infty_{\xi_{k+1}}} \, d\eta_{k+1} \, dy \, dt^{\prime}
\end{aligned}
\end{equation}
\end{proof}

\subsection{Remainder operator $R_N^{5(k+2)}$}

\begin{proposition}[$R_{N}^{5(k+2)}$ estimates]
\label{P:R5Estimates}Assume $\left\vert \hat{\phi}\left( \zeta \right)
\right\vert \lesssim \left\vert \zeta \right\vert ^{1-}$ for $\zeta $ near
zero, the operator $R_{N}^{5(k+2)}$ defined by \eqref{E:S412} satisfies the
following estimate 
\begin{align*}
\hspace{0.3in}& \hspace{-0.3in}\Vert \hat{\mathcal{D}}^{(k)}\hat{R}%
_{N}^{5(k+2)}\hat{f}_{N}^{(k+2)}(t)\Vert _{L_{\boldsymbol{\eta }_{k}}^{2}L_{%
\boldsymbol{v}_{k}}^{2}} \\
& \lesssim \epsilon ^{0+}\sum_{i=1}^{k}\int_{t^{\prime \prime }=0}^{t}\Vert
\langle \eta _{i}\rangle ^{1+}\langle \eta _{k+1}\rangle ^{1+}\langle \eta
_{k+2}\rangle ^{1+}\langle v_{k+1}\rangle ^{2+} \\
& \hspace{2in} \hat{f}_{N}^{(k+2)}(t^{\prime \prime },\boldsymbol{\eta }%
_{k+2},\boldsymbol{v}_{k+2})\Vert _{L_{\boldsymbol{\eta }_{k+2}}^{2}L_{%
\boldsymbol{v}_{k}}^{2}L_{v_{k+1}}^{\infty }L_{v_{k+2}}^{1}}\,dt^{\prime
\prime } \\
& +\epsilon ^{0+}\sum_{1\leq i<j\leq k}\int_{t^{\prime \prime
}=0}^{t}(t-t^{\prime \prime })\Vert \langle \eta _{i}\rangle ^{1+}\langle
\eta _{j}\rangle ^{1+}\langle \eta _{k+1}\rangle ^{1+}\langle \eta
_{k+1}\rangle ^{1+} \\
& \hspace{2in} \hat{f}_{N}^{(k+2)}(t^{\prime \prime },\boldsymbol{\eta }%
_{k+2},\boldsymbol{v}_{k+2})\Vert _{L_{\boldsymbol{\eta }_{k+2}}^{2}L_{%
\boldsymbol{v}_{k}}^{2}L_{v_{k+1}}^{1}L_{v_{k+2}}^{1}}\,dt^{\prime \prime }
\end{align*}
\end{proposition}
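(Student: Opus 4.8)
The plan is to carry out the whole computation on the $(\boldsymbol{\eta}_k,\mathbf{v}_k)$ side. There the propagator $\hat S^{(k)}(t)$ acts as multiplication by $e^{-it\sum_m\eta_m v_m}$, the Duhamel operator is $\hat{\mathcal D}^{(k)}g(t)=\int_0^t e^{-i(t-t')\sum_m\eta_m v_m}g(t')\,dt'$, and — after the Plancherel step $x_{k+1}\mapsto\eta_{k+1}$ used to pass from \eqref{E:S3-05} to \eqref{E:S407} — each $N\epsilon^{-1/2}\hat B_{j,k+1}^\epsilon$ carries the prefactor $\epsilon^{-1/2}$, the weight $\hat\phi(\epsilon\eta_{k+1})$, an integration $\int dv_{k+1}$, and evaluates its argument with $\eta_j$ replaced by $\eta_j-\eta_{k+1}$, the $(k+1)$-slot set to $\eta_{k+1}$, and $v_j$ shifted by $-\tfrac\sigma2\epsilon\eta_{k+1}$. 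Composing the four operators in $\hat{\mathcal D}^{(k)}\hat R_N^{5(k+2)}=\hat{\mathcal D}^{(k)}\,(N\epsilon^{-1/2}\hat B_\epsilon^{(k+1)})\,\hat{\mathcal D}^{(k+1)}\,(N\epsilon^{-1/2}\hat B_\epsilon^{(k+2)})$ produces an explicit iterated integral over the Duhamel times $t',t''$, the momentum transfers $\eta_{k+1},\eta_{k+2}$, and the integrated velocities $v_{k+1},v_{k+2}$, with total prefactor $\epsilon^{-1}$, weight $\hat\phi(\epsilon\eta_{k+1})\hat\phi(\epsilon\eta_{k+2})$, several unimodular phases, and $\hat f_N^{(k+2)}(t'',\cdot)$ at argument slots shifted by $\eta_{k+1},\eta_{k+2}$ (in $\eta$) and $\epsilon\eta_{k+1},\epsilon\eta_{k+2}$ (in $v$). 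All the phases but one may be bounded in absolute value by $1$ — this is why the $(\eta,v)$ side, and not the oscillation-sensitive $(x,v)$ side, is the place to run the argument.

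The one phase I keep is $e^{-i(t'-t'')\eta_{k+1}v_{k+1}}$ from the inner Duhamel, in which $\eta_{k+1}$ is exactly the momentum transferred by the \emph{outer} $B$ and $v_{k+1}$ is exactly the velocity integrated out by the \emph{outer} $B$: performing $\int dv_{k+1}$ against it turns it into the partial Fourier transform of $\hat f_N^{(k+2)}$ in $v_{k+1}$ evaluated at $\xi_{k+1}=\pm(t'-t'')\eta_{k+1}$. This is the ``dual $X_{s,b}$ Strichartz estimate in disguise'': $N\epsilon^{-1/2}\hat B_\epsilon^{(k+1)}\,\hat{\mathcal D}^{(k+1)}$ restricts a Duhamel-evolved tier-$(k+1)$ object to $\xi_{k+1}=0$, and it is the time–frequency rescaling $(t'-t'',\eta_{k+1})\mapsto\xi_{k+1}$ forced by this restriction that releases the power of $\epsilon$ against the $\epsilon^{-1}$ prefactor. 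I split the sum over $\ell$ into the \emph{nested} case $\ell=k+1$ (the inner collision acts again on the $(k+1)$-slot just set by the outer $B$; one base index, giving the first term) and the \emph{parallel} case $\ell\le k$ (distinct base indices; two base indices, giving the second term, with $\ell=j$ a cosmetic sub-case handled identically). In the nested case the sliced factor must be estimated by $\|\langle v_{k+1}\rangle^{-2-}\|_{L^1_{v_{k+1}}(\mathbb R^3)}\|\langle v_{k+1}\rangle^{2+}\hat f_N^{(k+2)}\|_{L^\infty_{v_{k+1}}}$ — borderline in three dimensions, whence the weight $\langle v_{k+1}\rangle^{2+}$ and the $L^\infty_{v_{k+1}}$ norm; this kills the $t',t'',\eta_{k+1}$-dependence so the outer integral contributes only $\int_{t''}^t dt'=t-t''\le T$, absorbed into the implied constant. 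In the parallel case one instead keeps $L^1_{v_{k+1}}L^1_{v_{k+2}}$ with no velocity weight and retains the explicit $t-t''$; in both cases $v_{k+2}$ occurs only inside $\hat f_N^{(k+2)}$ and its integration is the $L^1_{v_{k+2}}$.

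What remains is an $L^2_{\boldsymbol{\eta}_k}L^2_{\mathbf{v}_k}$ bound on an integral in $\eta_{k+1},\eta_{k+2}$ with weight $\hat\phi(\epsilon\eta_{k+1})\hat\phi(\epsilon\eta_{k+2})$ against $\hat f_N^{(k+2)}$ whose $j$-, $(k+1)$- and $(k+2)$-slots are shifted by $\eta_{k+1},\eta_{k+2}$: one translates the external $\eta_j$ to clear the $j$-slot, makes the linear change of variables turning the two remaining shifted slots into the genuine arguments of $\hat f_N^{(k+2)}$, applies Cauchy–Schwarz in $(\eta_{k+1},\eta_{k+2})$ after inserting the weights $\langle\eta_{k+1}\rangle^{1+}\langle\eta_{k+2}\rangle^{1+}$ that the $H^{1+}$ regularity can absorb, and finishes with Minkowski's integral inequality to restore the $L^2_{\boldsymbol{\eta}_{k+2}}$ norm on $\hat f_N^{(k+2)}$; summing over $\sigma,\alpha\in\{\pm1\}$, over $j$, and over $\ell\ne j$, and relabelling $\{j,\ell\}$ as $\{i,j\}$, gives the two stated sums. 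The main obstacle is the $\epsilon$-bookkeeping done jointly with the weights: the two $N\epsilon^{-1/2}B$ prefactors give $\epsilon^{-1}$, and getting back to an honest $\epsilon^{0+}$ while spending only the regularity $\hat f_N^{(k+2)}$ actually has — $\langle\eta\rangle^{1+}$ in momentum and $\langle v_{k+1}\rangle^{2+},\langle v_{k+2}\rangle^{1+}$ in the integrated velocities — forces one to draw the saving partly from the order-$1-$ vanishing of $\hat\phi$ at the origin and partly from the time–frequency rescaling in the $\xi_{k+1}=\pm(t'-t'')\eta_{k+1}$ slicing, with neither source alone enough; getting this split right, and verifying that every sub-case (nested versus parallel, $\ell=j$ versus $\ell\ne j$, and the further subdivision by which shifted $\eta$ dominates) closes with the \emph{same} right-hand side, is the delicate part.
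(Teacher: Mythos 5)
You have the right geometry of the proof: the $(\boldsymbol{\eta},\boldsymbol{v})$ side, the split into a one-base-index case ($j=k+1$, your ``nested'' case, giving the first term with $\langle v_{k+1}\rangle^{2+}$ and $L^\infty_{v_{k+1}}L^1_{v_{k+2}}$) and a two-base-index case (your ``parallel'' case, giving the second term with the $(t-t'')$ factor and $L^1_{v_{k+1}}L^1_{v_{k+2}}$), and the final translation/Cauchy--Schwarz/Minkowski bookkeeping. But there is a genuine gap in the nested case, precisely at the step you call the ``dual $X_{s,b}$ Strichartz in disguise.'' You keep only the inner-Duhamel phase $e^{-i(t'-t'')\eta_{k+1}\cdot v_{k+1}}$ and then immediately bound the resulting slice trivially by $\|\langle v_{k+1}\rangle^{-2-}\|_{L^1_{v_{k+1}}}\|\langle v_{k+1}\rangle^{2+}\hat f_N^{(k+2)}\|_{L^\infty_{v_{k+1}}}$, letting the outer $t'$ integral contribute only $t-t''$. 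That is the same as bounding the phase by $1$, so nothing is ever extracted from the oscillation; your later claim that part of the $\epsilon$-gain comes from the ``time--frequency rescaling $\xi_{k+1}=\pm(t'-t'')\eta_{k+1}$'' contradicts this trivial bound and is never implemented. The paper's argument instead combines the inner phase with the \emph{outer} Duhamel phase and integrates in $t'$ first, producing the non-resonant factor $\langle \eta_{k+1}\cdot(\tfrac12\sigma\epsilon\eta_{k+2}+v_i+v_{k+1})\rangle^{-1}$; only then is $v_{k+1}$ integrated against $\langle v_{k+1}\rangle^{-2-}$, which yields the crucial decay $|\eta_{k+1}|^{-1+}$. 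The entire $\epsilon$-gain in this case then comes from the vanishing of $\hat\phi$, split asymmetrically as $|\hat\phi(\epsilon\eta_{k+2})|\lesssim\epsilon^{0+}|\eta_{k+2}|^{0+}$ and $|\hat\phi(\epsilon\eta_{k+1})|\lesssim\epsilon^{1-}|\eta_{k+1}|^{1-}$, with the $|\eta_{k+1}|^{1-}$ loss cancelled by the $|\eta_{k+1}|^{-1+}$ just produced.

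Without that decay your counting cannot close: the $\epsilon^{-1}$ prefactor must be beaten by $\epsilon^{1+}$ drawn solely from $\hat\phi(\epsilon\eta_{k+1})\hat\phi(\epsilon\eta_{k+2})$, costing momentum growth of total order $1+$ distributed over $\eta_{k+1},\eta_{k+2}$; adding the $\langle\eta_{k+1}\rangle^{-3/2-}\langle\eta_{k+2}\rangle^{-3/2-}$ factors needed for Cauchy--Schwarz, the slots $\eta_i-\eta_{k+1}$, $\eta_{k+1}-\eta_{k+2}$, $\eta_{k+2}$ must absorb roughly $4+$ derivatives while the hypothesis supplies only $\langle\eta\rangle^{1+}$ per slot, i.e.\ $3+$ in total. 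So the outer Duhamel oscillation is not optional here --- it is exactly why the proposition is stated for $\hat{\mathcal D}^{(k)}\hat R_N^{5(k+2)}$ rather than for $\hat R_N^{5(k+2)}$ alone. (Your parallel case is fine in outline, but note that there the paper obtains $\epsilon^{0+}$ purely from splitting the $\hat\phi$-vanishing evenly, $\|\hat\phi(\epsilon\cdot)\langle\cdot\rangle^{-2-}\|_{L^2}\lesssim\epsilon^{1/2+}$ for each factor, with no oscillation and no rescaling; also no rescaling in time is used anywhere in this proposition --- that device belongs to the $Q_N^{(k+1)}$ estimate.)
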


\begin{proof}
Now we study $R_N^{5(k+2)}f_N^{(k+2)}(t)$ defined by \eqref{E:S412}. This
expands as the sum 
\begin{equation*}
R_N^{5(k+2)}f_N^{(k+2)}(t) = \sum_{\substack{ 1\leq i \leq k  \\ 1\leq j
\leq k+1}} R_{i,j,k+2}^{N,5}f_N^{(k+2)}(t)
\end{equation*}
where 
\begin{equation*}
R_{i,j,k+2}^{N,5}f_N^{(k+2)} = N\epsilon^{-1/2} B^\epsilon_{i,k+1} \mathcal{D%
}^{(k+1)} N \epsilon^{-1/2} B^\epsilon_{j,k+2} f_N^{(k+2)}
\end{equation*}
To prepare for calculating the composition, let us rewrite \eqref{E:S407}
with indices $(i,k+1)$ and then again with indices $(j,k+2)$: 
\begin{equation}  \label{E:S413}
\begin{aligned} N\epsilon^{-1/2}\check B_{i,k+1}^\epsilon \check
g_N^{(k+1)}(t,{\boldsymbol{\eta}}_k,{\boldsymbol{\xi}}_k) =
-iN\epsilon^{5/2} \sum_{\alpha=\pm 1} \alpha \int_{\eta_{k+1}}
\hat\phi(\epsilon \eta_{k+1}) e^{i\alpha\epsilon \xi_i\eta_{k+1}/2} &\\
\check g_N^{(k+1)}(t,\eta_1, \ldots, \eta_i-\eta_{k+1}, \ldots, \eta_{k+1},
{\boldsymbol{\xi}}_k,0) \, d\eta_{k+1} & \end{aligned}
\end{equation}
\begin{equation}  \label{E:S414}
\begin{aligned} & \check{g}_N^{(k+1)}(t, \boldsymbol{\eta}_{k+1},
\boldsymbol{\xi}_{k+1}) = \mathcal{D}^{(k+1)} N\epsilon^{-1/2}\check
B_{j,k+2}^\epsilon \check
f_N^{(k+2)}(t,{\boldsymbol{\eta}}_{k+1},{\boldsymbol{\xi}}_{k+1}) \\ & =
-iN\epsilon^{5/2} \sum_{\sigma=\pm 1} \sigma \int_0^t \int_{\eta_{k+2}}
\hat\phi(\epsilon \eta_{k+2}) e^{i\sigma\epsilon (\xi_j-(t-t^{\prime}
)\eta_j)\eta_{k+2}/2} \\ &\qquad \check f_N^{(k+2)}(t^{\prime} ,\eta_1,
\ldots, \eta_j-\eta_{k+2}, \ldots, \eta_{k+2},
{\boldsymbol{\xi}}_{k+1}-(t-t^{\prime} )\boldsymbol{\eta}_{k+1},0) \,
d\eta_{k+2} \, dt^{\prime} \end{aligned}
\end{equation}

There are three cases

\bigskip

\noindent \emph{Case 1. $j=k+1$}. This case results in the bound %
\eqref{E:S419} below. For this case, we assume $|\hat{\phi}(\zeta )|\lesssim
|\zeta |^{0+}$ for $|\zeta |\leq 1$. Combining \eqref{E:S413} and %
\eqref{E:S414} gives 
\begin{align*}
\hspace{0.3in}&\hspace{-0.3in} \check{R}_{i,k+1,k+2}^{N,5}\check{f}%
_{N}^{(k+2)}(t) \\
& = -\epsilon ^{-1}\sum_{\alpha ,\sigma \in \{-1,1\}}\sigma \alpha
\int_{0}^{t}\int_{\eta _{k+1}}\int_{\eta _{k+2}}\hat{\phi}(\epsilon \eta
_{k+1})\hat{\phi}(\epsilon \eta _{k+2}) e^{i\alpha \epsilon \xi _{i}\eta
_{k+1}/2}e^{-i\sigma \epsilon (t-t^{\prime })\eta _{k+1}\eta _{k+2}/2} \\
& \qquad \check{f}_{N}^{(k+2)}(t^{\prime },\eta _{1},\,.\,,\eta _{i}-\eta
_{k+1},\,.\,,\eta _{k+1}-\eta _{k+2},\eta _{k+2}, \xi_1-(t-t^{\prime
})\eta_1, \, . \,, \\
& \qquad \qquad \qquad \xi_i-(t-t^{\prime })(\eta_i-\eta_{k+1}), \, . \,,
\xi_k - (t-t^{\prime })\eta_k, -(t-t^{\prime })\eta_{k+1}, 0) d\eta
_{k+1}\,d\eta _{k+2}\,dt^{\prime }
\end{align*}%
On $\check{f}_{N}^{(k+2)}$, pass to the Fourier side in $\boldsymbol{\xi }%
_{k+2}\mapsto \boldsymbol{v}_{k+2}$, and on the left side, pass to the
Fourier side in $\boldsymbol{\xi }_{k}\mapsto \boldsymbol{v}_{k}$. The
result is the hat form for this remainder term: 
\begin{align*}
\hspace{0.3in}& \hspace{-0.3in}\hat{R}_{i,k+1,k+2}^{N,5}\hat{f}%
_{N}^{(k+2)}(t)=-\epsilon ^{-1}\sum_{\alpha ,\sigma \in \{-1,1\}}\sigma
\alpha \int_{0}^{t}\int_{\substack{ \eta _{k+1},\eta _{k+2}  \\ %
v_{k+1},v_{k+1}}}\hat{\phi}(\epsilon \eta _{k+1})\hat{\phi}(\epsilon \eta
_{k+2}) \\
& e^{-i\sigma \epsilon (t-t^{\prime })\eta _{k+1}\eta
_{k+2}/2}e^{-i(t-t^{\prime })\boldsymbol{\eta }_{k}\cdot \boldsymbol{v}%
_{k}}e^{-i(t-t^{\prime })\eta _{k+1}v_{k+1}} e^{i(t-t^{\prime }) \eta_{k+1}
v_i} \\
& \hat{f}_{N}^{(k+2)}(t^{\prime },\eta _{1},\,.\,,\eta _{i}-\eta
_{k+1},\,.\,,\eta _{k+1}-\eta _{k+2},\eta _{k+2}, \\
& \qquad v_{1},\,.\,, v_{i}-\alpha \epsilon \eta_{k+1}/2,\,.\,,v_{k+1},
v_{k+2})\,dv_{k+1}\,dv_{k+2}\,d\eta _{k+1}\,d\eta _{k+2}\,dt^{\prime }
\end{align*}%
Now we must add the additional Duhamel operator, for which we replace the
old $t^{\prime }$ with $t^{\prime \prime }$ and the old $t$ with $t^{\prime
} $. The propagator associated with this Duhamel term places a new phase
factor $e^{-i(t-t^{\prime })\boldsymbol{\eta }_{k}\cdot \boldsymbol{v}_{k}}$%
: 
\begin{align*}
\hspace{0.3in}& \hspace{-0.3in}\hat{\mathcal{D}}^{(k)}\hat{R}%
_{i,k+1,k+2}^{N,5}\hat{f}_{N}^{(k+2)}(t)=-\epsilon ^{-1}\sum_{\alpha ,\sigma
\in \{-1,1\}}\sigma \alpha \int_{t^{\prime }=0}^{t}\int_{t^{\prime \prime
}=0}^{t^{\prime }}\int_{\substack{ \eta _{k+1},\eta _{k+2}  \\ %
v_{k+1},v_{k+1} }}\hat{\phi}(\epsilon \eta _{k+1})\hat{\phi}(\epsilon \eta
_{k+2}) \\
& e^{-i(t-t^{\prime })\boldsymbol{\eta }_{k}\cdot \boldsymbol{v}%
_{k}}e^{-i\sigma \epsilon (t^{\prime }-t^{\prime \prime })\eta _{k+1}\eta
_{k+2}/2}e^{-i(t^{\prime }-t^{\prime \prime })\boldsymbol{\eta }_{k}\cdot 
\boldsymbol{v}_{k}}e^{-i(t^{\prime }-t^{\prime \prime })\eta _{k+1}v_{k+1}}
e^{i(t^{\prime \prime\prime})\eta_{k+1}v_i} \\
& \hat{f}_{N}^{(k+2)}(t^{\prime \prime },\eta _{1},\,.\,,\eta _{i}-\eta
_{k+1},\,.\,,\eta _{k+1}-\eta _{k+2},\eta _{k+2}, \\
& \qquad v_{1},\,.\,,v_{i}-\alpha \epsilon \eta
_{k+1}/2,\,.\,,v_{k+1},v_{k+2})\,dv_{k+1}\,dv_{k+2}\,d\eta _{k+1}\,d\eta
_{k+2}\,dt^{\prime \prime }\,dt^{\prime }
\end{align*}%
Now we switch the order of the $t^{\prime }$ and $t^{\prime \prime }$
integrals, which allows us to bring the $t^{\prime }$ integral onto the
phase factors: 
\begin{align*}
\hspace{0.3in}& \hspace{-0.3in}\hat{\mathcal{D}}^{(k)}\hat{R}%
_{i,k+1,k+2}^{N,5}\hat{f}_{N}^{(k+2)}(t)=-\epsilon ^{-1}\sum_{\alpha ,\sigma
\in \{-1,1\}}\sigma \alpha \int_{t^{\prime \prime }=0}^{t}\int_{\substack{ %
\eta _{k+1},\eta _{k+2}  \\ v_{k+1},v_{k+2}}}\hat{\phi}(\epsilon \eta _{k+1})%
\hat{\phi}(\epsilon \eta _{k+2}) \\
& \int_{t^{\prime \prime\prime}}^t e^{-i(t-t^{\prime })\boldsymbol{\eta }%
_{k}\cdot \boldsymbol{v}_{k}}e^{-i\sigma \epsilon (t^{\prime }-t^{\prime
\prime })\eta _{k+1}\eta_{k+2}/2}e^{-i(t^{\prime }-t^{\prime \prime })%
\boldsymbol{\eta }_{k}\cdot \boldsymbol{v}_{k}}e^{-i(t^{\prime }-t^{\prime
\prime })\eta _{k+1}v_{k+1}} e^{i(t^{\prime \prime\prime})\eta_{k+1}v_i} \,
dt^{\prime } \\
& \hat{f}_{N}^{(k+2)}(t^{\prime \prime },\eta _{1},\,.\,,\eta _{i}-\eta
_{k+1},\,.\,,\eta _{k+1}-\eta _{k+2},\eta _{k+2}, \\
& \qquad \qquad \qquad v_{1},\,.\,,v_{i}-\alpha \epsilon \eta
_{k+1}/2,\,.\,,v_{k+1},v_{k+2})\,dv_{k+1}\,dv_{k+2}\,d\eta _{k+1}\,d\eta
_{k+2}\,dt^{\prime \prime }
\end{align*}%
For $\mu, \nu \in \mathbb{R}$, 
\begin{equation*}
\int_{t^{\prime \prime\prime}}^t e^{-i(t-t^{\prime })\mu} e^{-i(t^{\prime
\prime\prime})\nu} \,dt^{\prime }= \frac{ e^{-i(t-t^{\prime\prime})\nu} -
e^{-i(t-t^{\prime\prime})\mu}}{i(\mu-\nu)}
\end{equation*}
which implies 
\begin{equation*}
\left| \int_{t^{\prime \prime\prime}}^t e^{-i(t-t^{\prime })\mu}
e^{-i(t^{\prime \prime\prime})\nu} \,dt^{\prime }\right| \lesssim \frac{1}{%
\langle \mu -\nu\rangle}
\end{equation*}
provided $t\leq 1$. Thus 
\begin{align*}
\hspace{0.3in}& \hspace{-0.3in}|\hat{\mathcal{D}}^{(k)}\hat{R}%
_{i,k+1,k+2}^{N,5}\hat{f}_{N}^{(k+2)}(t)|\leq \epsilon ^{-1}\int_{t^{\prime
\prime }=0}^{t}\int_{\substack{ \eta _{k+1},\eta _{k+2}  \\ v_{k+1},v_{k+2}}}%
|\hat{\phi}(\epsilon \eta _{k+1})||\hat{\phi}(\epsilon \eta _{k+2})| \\
& \langle \eta _{k+1}\cdot (\tfrac12\sigma \epsilon \eta
_{k+2}+v_i+v_{k+1})\rangle^{-1} \\
& |\hat{f}_{N}^{(k+2)}(t^{\prime \prime },\eta _{1},\,.\,,\eta _{i}-\eta
_{k+1},\,.\,,\eta _{k+1}-\eta _{k+2},\eta _{k+2}, \\
& \qquad \qquad v_{1},\,.\,,v_{i}-\alpha \epsilon \eta
_{k+1}/2,\,.\,,v_{k+1},v_{k+2})|\,dv_{k+1}\,dv_{k+2}\,d\eta _{k+1}\,d\eta
_{k+2}\,dt^{\prime \prime }
\end{align*}%
Insert $1\leq \langle v_{k+1}\rangle ^{-2-}\langle v_{k+1}\rangle ^{2+}$,
grouping the $\langle v_{k+1}\rangle ^{2+}$ factor with $\hat{f}_{N}^{(k+2)}$%
, and then sup this factor out in $v_{k+1}$: 
\begin{align*}
\hspace{0.3in}& \hspace{-0.3in}|\hat{\mathcal{D}}^{(k)}\hat{R}%
_{i,k+1,k+2}^{N,5}\hat{f}_{N}^{(k+2)}(t)|\leq \epsilon ^{-1}\int_{t^{\prime
\prime }=0}^{t}\int_{\eta _{k+1},\eta _{k+2}}|\hat{\phi}(\epsilon \eta
_{k+1})||\hat{\phi}(\epsilon \eta _{k+2})|\,|\eta _{k+1}|^{-1+} \\
& \Vert \langle v_{k+1}\rangle ^{2+}\hat{f}_{N}^{(k+2)}(t^{\prime \prime
},\eta _{1},\,.\,,\eta _{i}-\eta _{k+1},\,.\,,\eta _{k+1}-\eta _{k+2},\eta
_{k+2}, \\
& \qquad v_{1},\,.\,,v_{i}-\alpha \epsilon \eta
_{k+1}/2,\,.\,,v_{k+1},v_{k+2})\Vert _{L_{v_{k+1}}^{\infty
}L_{v_{k+2}}^{1}}\,d\eta _{k+1}\,d\eta _{k+2}\,dt^{\prime \prime }
\end{align*}%
where we have used 
\begin{equation*}
\int_{v_{k+1}}\langle \eta _{k+1}\cdot (\tfrac12\sigma \epsilon \eta
_{k+2}+v_i+v_{k+1})\rangle^{-1}\langle v_{k+1}\rangle
^{-2-}\,dv_{k+1}\lesssim |\eta _{k+1}|^{-1+}
\end{equation*}%
Now we proceed depending on which of the three quantities is maximum among $%
|\eta _{i}-\eta _{k+1}|$, $|\eta _{k+1}-\eta _{k+2}|$ and $|\eta _{k+2}|$.
Since all three cases are similar, we will just assume $|\eta _{i}-\eta
_{k+1}|$ is maximum. In this case, we insert 
\begin{equation*}
1\leq \langle \eta _{k+1}-\eta _{k+2}\rangle ^{-\frac{3}{2}-}\langle \eta
_{k+2}\rangle ^{-\frac{3}{2}-}\langle \eta _{i}-\eta _{k+1}\rangle
^{1+}\langle \eta _{k+1}-\eta _{k+2}\rangle ^{1+}\langle \eta _{k+2}\rangle
^{1+}
\end{equation*}%
and apply Cauchy-Schwarz: 
\begin{align*}
\hspace{0.3in}& \hspace{-0.3in}|\hat{\mathcal{D}}^{(k)}\hat{R}%
_{i,k+1,k+2}^{N,5}\hat{f}_{N}^{(k+2)}(t)|\leq \epsilon ^{-1}\int_{t^{\prime
\prime }=0}^{t}\Vert \langle \eta _{k+2}\rangle ^{-\frac{3}{2}-}\hat{\phi}%
(\epsilon \eta _{k+2})\Vert _{L_{\eta _{k+2}}^{2}} \\
& \Vert \hat{\phi}(\epsilon \eta _{k+1})|\eta _{k+1}|^{-1+}\langle \eta
_{k+1}-\eta _{k+2}\rangle ^{-\frac{3}{2}-}\Vert _{L_{\eta _{k+2}}^{\infty
}L_{\eta _{k+1}}^{2}} \\
& \Vert \langle \eta _{i}-\eta _{k+1}\rangle ^{1+}\langle \eta _{k+1}-\eta
_{k+2}\rangle ^{1+}\langle \eta _{k+2}\rangle ^{1+}\langle v_{k+1}\rangle
^{2+} \\
& \qquad \hat{f}_{N}^{(k+2)}(t^{\prime \prime },\eta _{1},\,.\,,\eta
_{i}-\eta _{k+1},\,.\,,\eta _{k+1}-\eta _{k+2},\eta _{k+2}, \\
& \qquad \qquad v_{1},\,.\,,v_{i}-\alpha \epsilon \eta
_{k+1}/2,\,.\,,v_{k+1},v_{k+2})\Vert _{L_{\eta _{k+1}}^{2}L_{\eta
_{k+2}}^{2}L_{v_{k+1}}^{\infty }L_{v_{k+2}}^{1}}\,dt^{\prime \prime }
\end{align*}%
Now use $|\hat{\phi}(\epsilon \eta _{k+2})|\lesssim \epsilon ^{0+}|\eta
_{k+2}|^{0+}$ and $|\hat{\phi}(\epsilon \eta _{k+1})|\lesssim \epsilon
^{1-}|\eta _{k+1}|^{1-}$, where $0+$ and $1-$ are selected to sum to $1+$.
This gives 
\begin{align*}
\hspace{0.3in}& \hspace{-0.3in}|\hat{\mathcal{D}}^{(k)}\hat{R}%
_{i,k+1,k+2}^{N,5}\hat{f}_{N}^{(k+2)}(t)|\lesssim \epsilon
^{0+}\int_{t^{\prime \prime }=0}^{t} \\
& \Vert \langle \eta _{i}-\eta _{k+1}\rangle ^{1+}\langle \eta _{k+1}-\eta
_{k+2}\rangle ^{1+}\langle \eta _{k+2}\rangle ^{1+}\langle v_{k+1}\rangle
^{2+} \\
& \qquad \hat{f}_{N}^{(k+2)}(t^{\prime \prime },\eta _{1},\,.\,,\eta
_{i}-\eta _{k+1},\,.\,,\eta _{k+1}-\eta _{k+2},\eta _{k+2}, \\
& \qquad \qquad v_{1},\,.\,,v_{i}-\alpha \epsilon \eta
_{k+1}/2,\,.\,,v_{k+1},v_{k+2})\Vert _{L_{\eta _{k+1}}^{2}L_{\eta
_{k+2}}^{2}L_{v_{k+1}}^{\infty }L_{v_{k+2}}^{1}}\,dt^{\prime \prime }
\end{align*}%
Now apply $L_{\boldsymbol{\eta }_{k}}^{2}L_{\boldsymbol{v}_{k}}^{2}$ and
Minkowski's integral inequality to obtain 
\begin{equation}
\begin{aligned} \|\hat{\mathcal{D}}^{(k)} \hat R_{i,k+1,k+2}^{N,5} \hat
f_N^{(k+2)}(t)\|_{L^2_{\bds \eta_k} L^2_{\bds v_k}} \lesssim \epsilon^{0+}
\int_{t^{\prime\prime} =0}^t \| \langle \eta_i \rangle^{1+} \langle
\eta_{k+1} \rangle^{1+} \langle \eta_{k+2} \rangle^{1+} \langle v_{k+1}
\rangle^{2+} &\\ \hat f_N^{(k+2)}(t^{\prime\prime} ,
\boldsymbol{\eta}_{k+2}, \boldsymbol{v}_{k+2})\|_{L^2_{\bds \eta_{k+2}}
L^2_{\bds v_k} L^\infty_{v_{k+1}}L^1_{v_{k+2}}} & \, dt^{\prime\prime}
\end{aligned}  \label{E:S419}
\end{equation}

\bigskip

\noindent \emph{Case 2. $j\leq k$ and $i=j$.} This case results in the bound %
\eqref{E:S418} below. For this case, we assume $|\hat{\phi}(\zeta )|\lesssim
|\zeta |^{0+}$ for $|\zeta |\leq 1$. Combining \eqref{E:S413} and %
\eqref{E:S414} gives 
\begin{align*}
\hspace{0.3in}& \hspace{-0.3in}\check{R}_{i,i,k+2}^{N,5}\check{f}%
_{N}^{(k+2)}(t) \\
& =-\epsilon ^{-1}\sum_{\alpha ,\sigma \in \{-1,1\}}\sigma \alpha
\int_{0}^{t}\int_{\eta _{k+1}}\int_{\eta _{k+2}}\hat{\phi}(\epsilon\eta
_{k+1})\hat{\phi}(\epsilon \eta _{k+2}) e^{i\alpha \epsilon
\xi_i\eta_{k+1}/2}e^{i\sigma \epsilon (\xi_i-(t-t^{\prime
})(\eta_i-\eta_{k+1}))\eta_{k+2}/2} \\
& \qquad\check{f}_{N}^{(k+2)}(t^{\prime },\eta _{1},\,.\,,\eta _{i}-\eta
_{k+1}-\eta _{k+2},\,.\,,\eta _{k+1},\eta _{k+2}, \xi_1 - (t-t^{\prime
})\eta_1, \, . \,, \\
& \qquad \qquad \qquad \xi_i - (t-t^{\prime })(\eta_i-\eta_{k+1}), \, . \,,
\xi_k - (t-t^{\prime })\eta_k, -(t-t^{\prime })\eta_{k+1},
0)\,d\eta_{k+1}\,d\eta_{k+2}\,dt^{\prime }
\end{align*}
This is handled similarly to Case 1.

\bigskip

\noindent\emph{Case 3. $j \leq k$ and $i\neq j$}. This case results in the
bound \eqref{E:S418} below. For this case, we assume $|\hat \phi(\zeta)|
\lesssim |\zeta|^{\frac12+}$ for $|\zeta| \leq 1$. Combining \eqref{E:S413}
and \eqref{E:S414} gives 
\begin{align*}
\hspace{0.3in}&\hspace{-0.3in} \check R_{i,j,k+2}^{N,5} \check f_N^{(k+2)}(t)
\\
&= -\epsilon^{-1} \sum_{\alpha,\sigma \in \{-1,1\}} \sigma \alpha \int_0^t
\int_{\eta_{k+1}} \int_{\eta_{k+2}} \hat \phi( \epsilon\eta_{k+1}) \hat
\phi(\epsilon\eta_{k+2}) e^{i\alpha \epsilon \xi_i\eta_{k+1}/2} e^{i\sigma
\epsilon(\xi_j - (t-t^{\prime} )\eta_j)\eta_{k+2}/2} \\
&\qquad \check f_N^{(k+2)}(t^{\prime} , \eta_1, \, . \,,
\eta_i-\eta_{k+1},\, . \,, \eta_j - \eta_{k+2}, \, . \,, \eta_{k+1},
\eta_{k+2}, \xi_1 - (t-t^{\prime }) \eta_1, \, . \,, \\
&\qquad \qquad \xi_i - (t-t^{\prime }) (\eta_i-\eta_{k+1}), \, . \,, \xi_k -
(t-t^{\prime }) \eta_k, -(t-t^{\prime })\eta_{k+1}, 0) \, d\eta_{k+1} \,
d\eta_{k+2} \, dt^{\prime}
\end{align*}
Let $L^2_{\boldsymbol{\eta}_k^*}$ denote all $L^2_{\eta_q}$ \emph{except} $%
q=i$ and $q=j$. By Minkowski's integral inequality, 
\begin{align*}
\hspace{0.3in}&\hspace{-0.3in} \| \check R_{i,j,k+2}^{N,5} \check
f_N^{(k+2)}(t) \|_{L_{\boldsymbol{\eta}_k^*}^2L_{\boldsymbol{\xi}_k}^2} \leq
\epsilon^{-1} \int_0^t \int_{\eta_{k+1}} \int_{\eta_{k+2}} d\eta_{k+1} \,
d\eta_{k+2} \, dt^{\prime} \quad |\hat \phi( \epsilon\eta_{k+1})| |\hat
\phi(\epsilon\eta_{k+2})| \\
&\|\check f_N^{(k+2)}(t^{\prime} , \eta_1, \, . \,, \eta_i-\eta_{k+1},\, .
\,, \eta_j - \eta_{k+2}, \, . \,,\eta_{k+1}, \eta_{k+2}, \boldsymbol{\xi}%
_{k+2})\|_{ L_{\boldsymbol{\eta}_k^*}^2L_{\boldsymbol{\xi}_k}^2
L_{\xi_{k+1}}^\infty L_{\xi_{k+2}}^\infty } \,
\end{align*}
We proceed depending upon the relative size of $|\eta_i-\eta_{k+1}|$ and $%
|\eta_{k+1}|$ on the one hand, and also depending upon the relative size of $%
|\eta_j - \eta_{k+2}|$ and $|\eta_{k+2}|$ on the other hand. Thus, there are
four cases in total, although all are similar, so we just present one.
Suppose that both $|\eta_{k+1}| \geq |\eta_i-\eta_{k+1}|$ and $%
|\eta_{k+2}|\geq |\eta_j-\eta_{k+2}|$. Then we use 
\begin{equation*}
1 \leq \langle \eta_{k+1}\rangle^{-2-} \langle \eta_{k+1} \rangle^{1+}
\langle \eta_i-\eta_{k+1} \rangle^{1+}
\end{equation*}
and 
\begin{equation*}
1 \leq \langle \eta_{k+2} \rangle^{-2-} \langle \eta_{k+2}\rangle^{1+}
\langle \eta_j-\eta_{k+2} \rangle^{1+}
\end{equation*}
Inserting these two inequalities, apply Cauchy-Schwarz in both $\eta_{k+1}$
and $\eta_{k+2}$, and then apply $L^2_{\eta_i}L^2_{\eta_j}$ to the entire
expression to obtain 
\begin{align*}
\hspace{0.3in}&\hspace{-0.3in} \| \check R_{i,j,k+2}^{N,5} \check
f_N^{(k+2)}(t) \|_{L_{\boldsymbol{\eta}_k}^2L_{\boldsymbol{\xi}_k}^2} \leq
\epsilon^{-1} \int_0^t \, \\
& \|\hat \phi( \epsilon\eta_{k+1}) \langle
\eta_{k+1}\rangle^{-2-}\|_{L^2_{\eta_{k+1}}} \|\hat \phi(\epsilon\eta_{k+2})
\langle \eta_{k+2}\rangle^{-2-} \|_{L^2_{\eta_{k+2}}} \\
&\|\langle \eta_i \rangle^{1+} \langle \eta_j \rangle^{1+} \langle
\eta_{k+1} \rangle^{1+} \langle \eta_{k+2} \rangle^{1+} \check
f_N^{(k+2)}(t^{\prime} , \boldsymbol{\eta}_{k+2}, \boldsymbol{\xi}%
_{k+2})\|_{ L_{\boldsymbol{\eta}_{k+2}}^2L_{\boldsymbol{\xi}_k}^2
L_{\xi_{k+1}}^\infty L_{\xi_{k+2}}^\infty } dt^{\prime} \,
\end{align*}
Since we have assumed the pointwise bound $|\hat \phi(y)| \lesssim
|y|^{\frac12+}$ for $|y|\leq 1$, it follows that $\| \hat \phi(\epsilon
y)\langle y \rangle^{-2-} \|_{L^2_y} \leq \epsilon^{\frac12+}$ 
\begin{equation}  \label{E:S418}
\begin{aligned} \hspace{0.3in}&\hspace{-0.3in} \| \check R_{i,j,k+2}^{N,5}
\check f_N^{(k+2)}(t) \|_{L_{\bds\eta_k}^2L_{\bds\xi_k}^2} \leq
\epsilon^{0+} \int_0^t \\ &\|\langle \eta_i \rangle^{1+} \langle \eta_j
\rangle^{1+} \langle \eta_{k+1} \rangle^{1+} \langle \eta_{k+2} \rangle^{1+}
\check f_N^{(k+2)}(t^{\prime} , \boldsymbol{\eta}_{k+2},
\boldsymbol{\xi}_{k+2})\|_{ L_{\bds\eta_{k+2}}^2L_{\bds\xi_k}^2
L_{\xi_{k+1}}^\infty L_{\xi_{k+2}}^\infty }\, dt^{\prime} \end{aligned}
\end{equation}
\end{proof}

\subsection{Limiting collision operator $Q^{(k+1)}$: definition and forms 
\label{S:coll-op-deriv}}

Recall that $Q_N^{(k+1)}$ has been defined by \eqref{E:S411a} as 
\begin{equation*}
Q_N^{(k+1)}f_N^{(k+1)} = \sum_{i=1}^k Q^{\epsilon}_{i,k+1}f_N^{(k+1)}
\end{equation*}
where 
\begin{equation*}
Q^{\epsilon}_{i,k+1} f_N^{(k+1)} = N\epsilon^{-1/2} B^\epsilon_{i,k+1} 
\mathcal{D}^{(k+1)} \epsilon^{-1/2} A^\epsilon_{i,k+1} f_N^{(k+1)}
\end{equation*}
A direct formula for $\check Q^{\epsilon}_{i,k+1}$ has been computed in %
\eqref{E:S433}, that we repeat here: 
\begin{equation}  \label{E:S433b}
\begin{aligned} \hspace{0.3in}&\hspace{-0.3in} \check
Q_{i,k+1}^{\epsilon}\check f_N^{(k+1)}(t) = - \sum_{\alpha,\sigma \in \{\pm
1\}} \alpha \sigma \int_{\eta_{k+1}} \int_y \int_{s=0}^{t/\epsilon}
\hat\phi(\epsilon \eta_{k+1} - y) \hat \phi(y) \\ &e^{i\alpha \xi_i
(\epsilon \eta_{k+1} - y)/2} e^{i\sigma \xi_i y/2} e^{-i\sigma s
(\epsilon\eta_i-2\epsilon \eta_{k+1} + 2y)y/2} \\ &\check
f_N^{(k+1)}(t-\epsilon s, \eta_1, \ldots, \eta_i-\eta_{k+1}, \ldots,
\eta_{k+1} , \xi_1 - \epsilon s \eta_1, \, . \,, \\ & \qquad \xi_i - sy -
\epsilon s\eta_i+\epsilon s \eta_{k+1}, \, . \,, \xi_k-\epsilon s
\eta_{k+1}, sy -s\epsilon\eta_{k+1}) \, d\eta_{k+1} \, dy \, ds \end{aligned}
\end{equation}
We can formally set $\epsilon=0$ in this expression to obtain the \emph{%
defining} expression for collision operator component $\check Q_{i,k+1}$:

\begin{definition}[limit form of the collision operator]
Let 
\begin{equation}  \label{E:S435}
Q^{(k+1)}f^{(k+1)} = \sum_{i=1}^k Q_{i,k+1}f^{(k+1)}
\end{equation}
where the components $Q_{i,k+1}f^{(k+1)}$ take the form of \eqref{E:S433b}
with $\epsilon=0$, using that since $\phi$ is real-valued, $\overline{\hat
\phi(y)}= \hat\phi(-y)$: 
\begin{equation}  \label{E:S434}
\begin{aligned} \check Q_{i,k+1}\check f^{(k+1)}(t, \boldsymbol{\eta}_k,
\boldsymbol{\xi}_k) = - \sum_{\alpha,\sigma \in \{\pm 1\}} \alpha \sigma
\int_{\eta_{k+1}} \int_y \int_{s=0}^{\infty} |\hat \phi(y)|^2
e^{i(\sigma-\alpha)\xi_iy/2} e^{-i \sigma s|y|^2} &\\ \check f^{(k+1)}(t,
\eta_1, \, . \,, \eta_i-\eta_{k+1}, \, . \,, \eta_{k+1} , \xi_1 , \, . \,,
\xi_i-sy, \, . \, , \xi_k, sy ) \, d\eta_{k+1} \, dy \, ds & \end{aligned}
\end{equation}
Alternative forms for \eqref{E:S434} are derived below and given as %
\eqref{E:S444}, \eqref{E:S437a} and the gain minus loss representation of
Proposition \ref{P:gainloss}.
\end{definition}

By taking the formal $\epsilon\to 0$, $N\to \infty$ limit of the quantum
BBGKY hierarchy defined in \eqref{E:S401}, we obtain the Boltzmann
(infinite) hierarchy 
\begin{equation}  \label{E:S432}
\partial_t f^{(k)} + {\boldsymbol{v}}_k \cdot \nabla_{{\boldsymbol{x}}_k}
f^{(k)} = Q^{(k+1)} f^{(k+1)} \,, \qquad k \geq 1
\end{equation}

From \eqref{E:S434}, it is straightforward to take the inverse Fourier
transform $\boldsymbol{\eta}_k \mapsto \boldsymbol{x}_k$ to obtain the $(%
\boldsymbol{x}_k, \boldsymbol{\xi}_k)$ form of the operator 
\begin{equation}  \label{E:S444}
\begin{aligned} \tilde Q_{i,k+1}\tilde f^{(k+1)}(t, \boldsymbol{x}_k,
\boldsymbol{\xi}_k) = - \sum_{\alpha,\sigma \in \{\pm 1\}} \alpha \sigma
\int_y \int_{s=0}^{\infty} |\hat \phi(y)|^2 e^{i(\sigma-\alpha)\xi_iy/2}
e^{-i\sigma s|y|^2} &\\ \tilde f^{(k+1)}(t, \boldsymbol{x}_k , x_i, \xi_1,
\, . \,, \xi_i-sy, \, . \,, \xi_k, sy ) \, dy \, ds & \end{aligned}
\end{equation}

Applying the Fourier transform $\boldsymbol{\xi}_k \mapsto \boldsymbol{v}_k$%
, we obtain the $(\boldsymbol{x}_k, \boldsymbol{v}_k)$ form 
\begin{equation}  \label{E:S437a}
\begin{aligned} Q_{i,k+1}f^{(k+1)}(t, \boldsymbol{x}_k, \boldsymbol{v}_k) =
- \sum_{\alpha,\sigma \in \{\pm 1\}} \alpha \sigma \int_y \int_{v_{k+1}}
\int_{s=0}^{\infty} |\hat \phi(y)|^2 e^{isy(v_{k+1}-v_i)}
e^{-i(\sigma+\alpha) s|y|^2/2} &\\ f^{(k+1)}(t, \boldsymbol{x}_k , x_i, v_1,
\, . \,, v_i+ \frac{\alpha-\sigma}{2}y, \, . \,, v_{k+1} ) \, dy \, dv_{k+1}
\, ds & \end{aligned}
\end{equation}

It is customary to rewrite \eqref{E:S437a} in terms of a gain and loss
operator that involve a collision kernel.

\begin{proposition}[representation of $Q_{i,k+1}$ in terms of gain minus loss%
]
\label{P:gainloss} $Q_{i,k+1}$ decomposes as the difference of a \emph{gain}
and \emph{loss} term 
\begin{equation*}
Q_{i,k+1}=Q_{i,k+1}^{+}-Q_{i,k+1}^{-}
\end{equation*}%
where the loss term is 
\begin{equation*}
Q_{i,k+1}^{-}f^{(k+1)}(t,\boldsymbol{x}_{k},\boldsymbol{v}_{k})=\frac{1}{2}%
\int_{v_{k+1}}\int_{\omega \in S^{2}}|r||\hat{\phi}(r\omega )|^{2}\Big|%
_{r=\omega \cdot (v_{i}-v_{k+1})}f^{(k+1)}(t,\boldsymbol{x}_{k},x_{i},%
\boldsymbol{v}_{k+1})\,d\omega \,dv_{k+1}
\end{equation*}%
and the gain term is 
\begin{align*}
Q_{i,k+1}^{+}f^{(k+1)}(t,\boldsymbol{x}_{k},\boldsymbol{v}_{k})& =\frac{1}{2}%
\int_{v_{k+1}}\int_{\omega \in S^{2}}|r||\hat{\phi}(r\omega )|^{2}\Big|%
_{r=\omega \cdot (v_{i}-v_{k+1})} \\
& f^{(k+1)}(t,\boldsymbol{x}_{k},x_{i},v_{1},\,.\,,v_{i}^{\ast
},\,.\,,v_{k},v_{k+1}^{\ast })\Big|_{\substack{ v_{i}^{\ast }=v_{i}+r\omega 
\\ v_{k+1}^{\ast }=v_{k+1}-r\omega }}\Big|_{r=\omega \cdot
(v_{k+1}-v_{i})}\,d\omega \,dv_{k+1}
\end{align*}%
which are (\ref{eqn:collision kernel for hierarchy}).
\end{proposition}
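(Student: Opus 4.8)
The plan is to start from the $(\boldsymbol x_k,\boldsymbol v_k)$ representation \eqref{E:S437a} and carry out the sum over $\alpha,\sigma\in\{\pm1\}$ explicitly, separating the four terms into the two \emph{diagonal} ones $\alpha=\sigma$ and the two \emph{off-diagonal} ones $\alpha=-\sigma$. Since $-\alpha\sigma=-1$ on the diagonal and $-\alpha\sigma=+1$ off it, while $\tfrac{\alpha-\sigma}{2}=0$ on the diagonal and $\tfrac{\alpha-\sigma}{2}=\pm1$ off it, the diagonal terms will produce $-Q_{i,k+1}^{-}$ (the loss part, with $f^{(k+1)}$ at the unshifted velocities $v_i,v_{k+1}$) and the off-diagonal terms $+Q_{i,k+1}^{+}$ (the gain part, with $v_i$ shifted by $\pm y$). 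Throughout one uses that $\phi$ is real, so $|\hat\phi(y)|^2$ is real and even in $y$.

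For the loss term, on the diagonal $e^{-i(\sigma+\alpha)s|y|^2/2}=e^{\mp is|y|^2}$, so the two terms sum to the integral of $|\hat\phi(y)|^2\,e^{\,is\,y\cdot(v_{k+1}-v_i)}\big(e^{-is|y|^2}+e^{is|y|^2}\big)\,f^{(k+1)}(\ldots,v_i,\ldots,v_{k+1})$ against $dy\,dv_{k+1}\,ds$. I would evaluate the $s$-integral distributionally, $\int_0^\infty e^{is\lambda}\,ds=\pi\delta(\lambda)+i\,\mathrm{p.v.}\,\lambda^{-1}$, made rigorous by a convergence factor $e^{-\delta s}$ with $\delta\downarrow0$. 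The change $y\mapsto-y$ maps $\delta(y\cdot(v_{k+1}-v_i)+|y|^2)$ to $\delta(y\cdot(v_{k+1}-v_i)-|y|^2)$ and flips the sign of the odd principal-value part, so against the even weight $|\hat\phi(y)|^2$ the p.v.\ contributions cancel and the two $\delta$'s merge into $2\delta(y\cdot(v_{k+1}-v_i)-|y|^2)$. Passing to polar coordinates $y=r\omega$, $r>0$, $\omega\in\mathbb S^2$, $dy=r^2\,dr\,d\omega$, the distribution $\delta(r\,\omega\cdot(v_{k+1}-v_i)-r^2)$ selects the root $r=\omega\cdot(v_{k+1}-v_i)$ when positive (the root $r=0$ is killed by the $r^2$ Jacobian), leaving the weight $(\omega\cdot(v_{k+1}-v_i))\,|\hat\phi((\omega\cdot(v_{k+1}-v_i))\omega)|^2$ on the half-sphere $\omega\cdot(v_{k+1}-v_i)>0$; a final $\omega\mapsto-\omega$ symmetrization rewrites this over all of $\mathbb S^2$ with a factor $\tfrac12$, which is exactly $Q_{i,k+1}^{-}$ as stated (using $|r|=|\omega\cdot(v_i-v_{k+1})|$ and that $|\hat\phi(r\omega)|^2$ is unchanged under $r\mapsto-r$).

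For the gain term, on the off-diagonal $\sigma+\alpha=0$, so $e^{-i(\sigma+\alpha)s|y|^2/2}=1$ and both terms carry the same phase $e^{\,is\,y\cdot(v_{k+1}-v_i)}$, differing only by the shift: the sum is the integral of $|\hat\phi(y)|^2\,e^{\,is\,y\cdot(v_{k+1}-v_i)}\big[f^{(k+1)}(\ldots,v_i+y,\ldots,v_{k+1})+f^{(k+1)}(\ldots,v_i-y,\ldots,v_{k+1})\big]$ against $dy\,dv_{k+1}\,ds$. The $s$-integral gives $\pi\delta(y\cdot(v_{k+1}-v_i))$ plus a p.v.\ term; afterwards the $y$-integrand is even, so the p.v.\ part drops and the bracket collapses to $2f^{(k+1)}(\ldots,v_i+y,\ldots,v_{k+1})$. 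The decisive step is then the translation $v_{k+1}\mapsto v_{k+1}-y$ at fixed $y$: it replaces $f^{(k+1)}(\ldots,v_i+y,\ldots,v_{k+1})$ by the post-collision pair $f^{(k+1)}(\ldots,v_i+y,\ldots,v_{k+1}-y)$ and simultaneously turns $\delta(y\cdot(v_{k+1}-v_i))$ into $\delta(y\cdot(v_{k+1}-v_i)-|y|^2)$, i.e.\ into the collisional constraint $r=\omega\cdot(v_{k+1}-v_i)$ once $y=r\omega$. Repeating the polar-coordinate and $\omega\mapsto-\omega$ steps from the loss case yields $Q_{i,k+1}^{+}$ with $v_i^{\ast}=v_i+r\omega$, $v_{k+1}^{\ast}=v_{k+1}-r\omega$, $r=\omega\cdot(v_{k+1}-v_i)$, and $Q_{i,k+1}=Q_{i,k+1}^{+}-Q_{i,k+1}^{-}$ is the representation \eqref{eqn:collision kernel for hierarchy} after matching the overall Fourier normalization.

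The one genuine obstacle is analytic rather than combinatorial: legitimizing the oscillatory time integral $\int_0^\infty e^{is\lambda}\,ds$ and the reordering of the $s$-, $y$- and $v_{k+1}$-integrations, and justifying the coarea reduction of the $\delta$ supported on the quadric $\{\,y\cdot(v_{k+1}-v_i)=|y|^2\,\}$. This I would handle with the standard $e^{-\delta s}$ regularization together with the decay of $\hat\phi$ and the regularity of $f^{(k+1)}$ from \S\ref{s:intro cycle regularity}; alternatively, one can quote the boundedness of the limiting collision operator established in \S\ref{S:coll-op-deriv} to pass to the limit. The parity cancellations of the principal-value terms are then elementary bookkeeping once the regularization is in place.
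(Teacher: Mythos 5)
Your proposal is correct and follows essentially the same route as the paper: split the $\alpha,\sigma$ sum into diagonal (loss) and off-diagonal (gain) terms, produce the collisional delta constraint from the time integral via the $y\mapsto -y$ parity, shift $v_{k+1}$ by $y$ in the gain term, and finish with polar coordinates, the homogeneity of the delta, and even extension over $\mathbb{S}^2$. The only cosmetic differences are that the paper combines the two terms into a full-line $s$-integral (rather than invoking $\pi\delta+i\,\mathrm{p.v.}$ and canceling the principal values) and performs the velocity translation before the $s$-integration, which are equivalent manipulations leading to the same gain/loss formulas.
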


\begin{proof}
For expository convenience, we will write out the proof only in the case $%
k=2 $. In this case, \eqref{E:S437a} takes the form 
\begin{equation*}
\begin{aligned} Q f^{(2)}(t,x_1,v_1) = - \sum_{\substack{\sigma = \pm 1 \\
\alpha=\pm 1}} \alpha \sigma \int_{v_2} \int_y \int_{s=0}^{+\infty} |\hat
\phi(y)|^2 e^{-i \frac{\sigma+\alpha}{2} s |y|^2} e^{-isy\cdot (v_1-v_2)} &
\\ f^{(2)}(t,x_1,x_1,v_1+\frac{\alpha-\sigma}{2}y,v_2) & \, ds \, dy dv_2
\end{aligned}
\end{equation*}%
We decompose this as 
\begin{equation*}
Q=Q^{+}-Q^{-}
\end{equation*}%
where the gain operator $Q^{+}$ would be: 
\begin{equation*}
\begin{aligned} Q^+ f^{(2)}(t,x_1,v_1) = \sum_{\substack{
(\alpha,\sigma)=(1,-1), \\ (-1,1)} } \int_{v_2} \int_y \int_{s=0}^{+\infty}
|\hat \phi(y)|^2 e^{-i \frac{\sigma+\alpha}{2} s |y|^2} e^{-isy\cdot
(v_1-v_2)} & \\ f^{(2)}(t,x_1,x_1,v_1+\frac{\alpha-\sigma}{2}y,v_2) & \, ds
\, dy dv_2 \end{aligned}
\end{equation*}%
and the loss operator is $Q^{-}$ would be: 
\begin{equation*}
\begin{aligned} Q^- f^{(2)}(t,x_1,v_1) =
\sum_{\substack{(\alpha,\sigma)=(1,1),\\(-1,-1)}} \int_{v_2} \int_y
\int_{s=0}^{+\infty} |\hat \phi(y)|^2 e^{-i \frac{\sigma+\alpha}{2} s |y|^2}
e^{-isy\cdot (v_1-v_2)} & \\
f^{(2)}(t,x_1,x_1,v_1+\frac{\alpha-\sigma}{2}y,v_2) & \, ds \, dy dv_2
\end{aligned}
\end{equation*}%
The above is another version of the gain/loss collision operators in $(x,v)$
coordinates. We now rewrite them in the more standard format.

We reexpress the loss operator $Q^{-}$ as follows: 
\begin{equation*}
\begin{aligned} Q^- f^{(2)}(t,x_1,v_1) = \int_{v_2} \int_y
\int_{s=0}^{+\infty} |\hat \phi(y)|^2 (e^{i s |y|^2} +e^{-i s |y|^2})
e^{-isy\cdot (v_1-v_2)} & \\ f^{(2)}(t,x_1,x_1,v_1,v_2) & \, ds \, dy dv_2
\end{aligned}
\end{equation*}%
Split the sum into two integrals, and in the second integral change
variables $s\mapsto -s$ and $y \mapsto -y $. Since this transformed second
integral is now over $-\infty <s<0$, the two integrals now combine to give a
single integral over $-\infty <s<\infty $: 
\begin{equation*}
Q^{-}f^{(2)}(t,x_{1},v_{1})=\int_{v_{2}}\int_{y }\int_{s=-\infty }^{+\infty
}|\hat{\phi}(y )|^2e^{is|y |^{2}}e^{-isy \cdot
(v_{1}-v_{2})}f^{(2)}(t,x_{1},x_{1},v_{1},v_{2})\,ds\,dy dv_{2}
\end{equation*}%
Carrying out the $s$ integral: 
\begin{equation*}
Q^{-}f^{(2)}(t,x_{1},v_{1})=\int_{v_{2}}\int_{y }|\hat{\phi}(y )|^2\delta
(|y |^{2}-y \cdot (v_{1}-v_{2}))f^{(2)}(t,x_{1},x_{1},v_{1},v_{2})\,dy dv_{2}
\end{equation*}%
Introduce polar coordinates $y =r\omega $ 
\begin{eqnarray*}
&&Q^{-}f^{(2)}(t,x_{1},v_{1}) \\
&=&\int_{v_{2}}\int_{\omega \in S^{2}}\int_{r=0}^{\infty }|\hat{\phi}%
(r\omega )|^2\delta (r^{2}-r\omega \cdot
(v_{1}-v_{2}))f^{(2)}(t,x_{1},x_{1},v_{1},v_{2})\,r^{2}\,dr\,d\omega \,dv_{2}
\end{eqnarray*}%
Use the homogeneity $r\delta (r^{2}-r\omega \cdot (v_{1}-v_{2}))=\delta
(r-\omega \cdot (v_{1}-v_{2}))$: 
\begin{eqnarray*}
&&Q^{-}f^{(2)}(t,x_{1},v_{1}) \\
&=&\int_{v_{2}}\int_{\omega \in S^{2}}\int_{r=0}^{\infty }|\hat{\phi}%
(r\omega )|^2 \delta (r-\omega \cdot
(v_{1}-v_{2}))f^{(2)}(t,x_{1},x_{1},v_{1},v_{2})\,r\,dr\,d\omega \,dv_{2}
\end{eqnarray*}%
The $\delta $ term reduces the $r$ integration to evaluation at $r=\omega
\cdot (v_{1}-v_{2})$ when this quantity is positive: 
\begin{equation*}
Q^{-}f^{(2)}(t,x_{1},v_{1})=\int_{v_{2}}\int_{\substack{ \omega \in S^{2}: 
\\ \omega \cdot (v_{1}-v_{2})>0}}|\hat{\phi}(r\omega )|^2 r\Big|_{r=\omega
\cdot (v_{1}-v_{2})}f^{(2)}(t,x_{1},x_{1},v_{1},v_{2})\, d\omega \,dv_{2}
\end{equation*}%
By even extension: 
\begin{equation*}
Q^{-}f^{(2)}(t,x_{1},v_{1})=\frac{1}{2}\int_{v_{2}}\int_{\omega \in S^{2}}|%
\hat{\phi}(r\omega )|^2 |r|\Big|_{r=\omega \cdot
(v_{1}-v_{2})}f^{(2)}(t,x_{1},x_{1},v_{1},v_{2})\,d\omega \,dv_{2}
\end{equation*}%
which is the standard form for the loss operator.

Now we derive the standard form of the gain operator 
\begin{equation*}
\begin{aligned} Q^+ f^{(2)}(t,x_1,v_1) = \int_{v_2} \int_y
\int_{s=0}^{+\infty} |\hat\phi(y)|^2 e^{-isy\cdot (v_1-v_2)} & \\ (
f^{(2)}(t,x_1,x_1,v_1+ y,v_2) + f^{(2)}(t,x_1,x_1,v_1-y,v_2) ) & \, ds \, dy
dv_2 \end{aligned}
\end{equation*}%
Split the sum into two integrals, and in the second integral change
variables $s\mapsto -s$ and $y \mapsto -y $. Since this transformed second
integral is now over $-\infty <s<0$, the two integrals now combine to give a
single integral over $-\infty <s<\infty $: 
\begin{equation*}
Q^{+}f^{(2)}(t,x_{1},v_{1})=\int_{v_{2}}\int_{y }\int_{s=-\infty }^{+\infty
}|\hat \phi(y)|^2e^{-isy \cdot (v_{1}-v_{2})}f^{(2)}(t,x_{1},x_{1},v_{1}+y
,v_{2})\,ds\,dy dv_{2}
\end{equation*}%
To avoid confusion, we change notation $v_{2}\mapsto v_{2}^{\ast }$ and also
we can set $v_{1}^{\ast }=v_{1}+y $. 
\begin{equation*}
Q^{+}f^{(2)}(t,x_{1},v_{1})=\int_{v_{2}^{\ast }}\int_{y }\int_{s=-\infty
}^{+\infty }|\hat \phi(y)|^2e^{-isy \cdot (v_{1}-v_{2}^{\ast
})}f^{(2)}(t,x_{1},x_{1},v_{1}^{\ast },v_{2}^{\ast })\Big|_{v_{1}^{\ast
}=v_{1}+y }\,ds\,dy dv_{2}^{\ast }
\end{equation*}%
Now move the $v_{2}^{\ast }$ integration to the inside and (for fixed $y $)
change variable $v_{2}^{\ast }\rightarrow v_{2}$ where $v_{2}=v_{2}^{\ast
}+y $ 
\begin{equation*}
Q^{+}f^{(2)}(t,x_{1},v_{1})=\int_{v_{2}}\int_{y }\int_{s=-\infty }^{+\infty
}|\hat \phi(y)|^2e^{-isy \cdot (v_{1}-v_{2}+y
)}f^{(2)}(t,x_{1},x_{1},v_{1}^{\ast },v_{2}^{\ast })\Big| _{\substack{ %
v_{1}^{\ast }=v_{1}+y  \\ v_{2}^{\ast }=v_{2}-y }}\,ds\,dy dv_{2}
\end{equation*}%
Now carry out the $s$ integral 
\begin{equation*}
Q^{+}f^{(2)}(t,x_{1},v_{1})=\int_{v_{2}}\int_{y }\hat{\phi}(-y )\hat{\phi}(y
)\delta (|y |^{2}+y \cdot (v_{1}-v_{2}))f^{(2)}(t,x_{1},x_{1},v_{1}^{\ast
},v_{2}^{\ast })\Big| _{\substack{ v_{1}^{\ast }=v_{1}+y  \\ v_{2}^{\ast
}=v_{2}-y }}\,dy dv_{2}
\end{equation*}%
Change to polar coordinates $y =r\omega $ 
\begin{eqnarray*}
&&Q^{+}f^{(2)}(t,x_{1},v_{1}) \\
&=&\int_{v_{2}}\int_{\omega \in S^{2}}\int_{r=0}^{+\infty }|\hat
\phi(r\omega)|^2\delta (r^{2}+r\omega \cdot
(v_{1}-v_{2}))f^{(2)}(t,x_{1},x_{1},v_{1}^{\ast },v_{2}^{\ast })\Big| 
_{\substack{ v_{1}^{\ast }=v_{1}+y  \\ v_{2}^{\ast }=v_{2}-y }}%
r^{2}\,dr\,d\omega \,dv_{2}
\end{eqnarray*}%
By homogeneity $\delta (r^{2}+r\omega \cdot (v_{1}-v_{2}))r=\delta (r+\omega
\cdot (v_{1}-v_{2}))$, 
\begin{eqnarray*}
&&Q^{+}f^{(2)}(t,x_{1},v_{1}) \\
&=&\int_{v_{2}}\int_{\omega \in S^{2}}\int_{r=0}^{+\infty }|\hat
\phi(r\omega)|^2\delta (r+\omega \cdot
(v_{1}-v_{2}))rf^{(2)}(t,x_{1},x_{1},v_{1}^{\ast },v_{2}^{\ast })\Big| 
_{\substack{ v_{1}^{\ast }=v_{1}+y  \\ v_{2}^{\ast }=v_{2}-y }}\,dr\,d\omega
\,dv_{2}
\end{eqnarray*}%
The $\delta $ term reduces the $r$ integration to evaluation at $r=\omega
\cdot (v_{2}-v_{1})$ when this quantity is positive: 
\begin{eqnarray*}
&&Q^{+}f^{(2)}(t,x_{1},v_{1}) \\
&=&\int_{v_{2}}\int_{\substack{ \omega \in S^{2}  \\ \omega \cdot
(v_{2}-v_{1})>0}}|\hat \phi(r\omega)|^2rf^{(2)}(t,x_{1},x_{1},v_{1}^{\ast
},v_{2}^{\ast })\Big|_{\substack{ v_{1}^{\ast }=v_{1}+r\omega  \\ %
v_{2}^{\ast }=v_{2}-r\omega }}\Big|_{r=\omega \cdot (v_{2}-v_{1})}\,d\omega
\,dv_{2}
\end{eqnarray*}%
By even extension 
\begin{eqnarray*}
&&Q^{+}f^{(2)}(t,x_{1},v_{1}) \\
&=&\frac{1}{2}\int_{v_{2}}\int_{\omega \in S^{2}}|\hat
\phi(r\omega)|^2|r|f^{(2)}(t,x_{1},x_{1},v_{1}^{\ast },v_{2}^{\ast })\Big| 
_{\substack{ v_{1}^{\ast }=v_{1}+r\omega  \\ v_{2}^{\ast }=v_{2}-r\omega }}%
\Big|_{r=\omega \cdot (v_{2}-v_{1})}\,d\omega \,dv_{2}
\end{eqnarray*}%
as needed.
\end{proof}

\subsection{Limiting collision operator estimates}

Analogous to Proposition \ref{P:QEstimates}, we have the following, which is
proved by the same methods as Proposition \ref{P:QEstimates}.

\begin{proposition}[$Q^{(k+1)}$ estimates in $L^2$]
\label{P:Q0Estimates} The operator $Q^{(k+1)}$ defined by \eqref{E:S435}
(with components \eqref{E:S434}) satisfies the bound 
\begin{equation*}
\| \check Q^{(k+1)} \check f^{(k+1)}(t) \|_{ L_{\boldsymbol{\eta}_k}^2 L_{%
\boldsymbol{\xi}_k}^2} \lesssim \sum_{i=1}^k \| \langle \eta_i
\rangle^{\frac34+} \langle \eta_{k+1} \rangle^{\frac34+} \langle
\nabla_{\xi_{k+1}} \rangle^{\frac12+} \check f^{(k+1)}(t, \boldsymbol{\eta}%
_{k+1}, \boldsymbol{\xi}_{k+1}) \|_{L_{\boldsymbol{\eta}_{k+1}}^2 L_{%
\boldsymbol{\xi}_{k+1}}^2}
\end{equation*}
uniformly in $t$.
\end{proposition}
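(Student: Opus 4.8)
The plan is to run the argument of Proposition \ref{P:QEstimates} at $\epsilon=0$, starting from the explicit formula \eqref{E:S434} for $\check Q_{i,k+1}\check f^{(k+1)}$ (the $\epsilon\to 0$ limit of \eqref{E:S433b}), estimating each component $Q_{i,k+1}$ and then summing over $i$. First I would apply the $L^2_{\boldsymbol{\eta}_k^*}L^2_{\boldsymbol{\xi}_k}$ norm, where $L^2_{\boldsymbol{\eta}_k^*}$ omits $\eta_i$, and use Minkowski's integral inequality to pull the $\eta_{k+1}$, $y$ and $s$ integrations outside. Both phase factors $e^{i(\sigma-\alpha)\xi_i y/2}$ and $e^{-i\sigma s|y|^2}$ have modulus one and are simply dropped, and the shift $\xi_i\mapsto\xi_i-sy$ inside $\check f^{(k+1)}$ is invisible to the translation-invariant $L^2_{\xi_i}$ norm; after taking $L^2_{\boldsymbol{\xi}_k}$ the only surviving $y$-dependence in the argument of $\check f^{(k+1)}$ is the $(k+1)$-st $\xi$-slot, evaluated at $sy$. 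Note that, unlike the $\epsilon>0$ case, both copies of $\hat\phi$ sit at $y$ (we have $|\hat\phi(y)|^2$), so the $\phi$-factors carry no $\eta_{k+1}$-dependence at all.

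Next I would bring the $y$-integral to the inside and apply H\"older between $|\hat\phi(y)|^2$ and the norm of $\check f^{(k+1)}(\ldots,sy)$, rescaling $y\mapsto y/s$ in the latter so as to extract a power $s^{-3/r'}$, where $1/r'$ is the $L^{r'}_y$ exponent placed on $\check f^{(k+1)}$. The choice $r'=3+$ when $s\leq 1$ produces $s^{-1+}$, integrable near $s=0$, and $r'=3-$ when $s\geq 1$ produces $s^{-1-}$, integrable near $s=\infty$; since $\phi$ is Schwartz with $\hat\phi$ vanishing to first order at the origin, $\|\,|\hat\phi|^2\|_{L^r_y}$ with $\tfrac1r+\tfrac1{r'}=1$ is finite for all admissible $r$. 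Sobolev embedding in the $\xi_{k+1}$ variable then converts the resulting $L^{3\pm}_{\xi_{k+1}}$ norm into $\langle\nabla_{\xi_{k+1}}\rangle^{\frac12+}$ applied to $\check f^{(k+1)}$ in $L^2_{\xi_{k+1}}$ (the $\tfrac12+$ is dictated by the $s\leq 1$ regime, matching the statement). To finish, for fixed $\eta_i$ I would split the $\eta_{k+1}$-integration according to whether $|\eta_i-\eta_{k+1}|$ or $|\eta_{k+1}|$ is larger, say the latter by symmetry, insert $1\lesssim\langle\eta_i-\eta_{k+1}\rangle^{-\frac32-}\langle\eta_i-\eta_{k+1}\rangle^{\frac34+}\langle\eta_{k+1}\rangle^{\frac34+}$, apply Cauchy--Schwarz in $\eta_{k+1}$, then apply $L^2_{\eta_i}$ with a translation of variables to recast the right-hand side in the full $L^2_{\boldsymbol{\eta}_{k+1}}$ norm; carrying out the remaining $s$-integral using $s^{-1+}\langle s\rangle^{0-}$ yields a finite constant, and summing over $i$, $\alpha$, $\sigma$ gives the claim, uniformly in $t$ since the limit operator acts at a single time.

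The one genuinely new point compared with Proposition \ref{P:QEstimates} — and the place I expect the only care is needed — is that here the $s$-integral runs over all of $(0,\infty)$ rather than the truncated interval $(0,t/\epsilon)$. In the $\epsilon>0$ estimate only the $s\to 0^+$ singularity was at issue and the cutoff took care of large $s$; now one must also secure integrability at infinity, which is precisely why the H\"older step must be performed twice, once with exponents tuned for $s\leq 1$ and once for $s\geq 1$. Since no oscillation is exploited (integrability comes entirely from the scaling of $\check f^{(k+1)}$ in $y$), the resulting argument is in fact marginally cleaner than its $\epsilon$-dependent counterpart.
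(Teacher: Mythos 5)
Your proposal is correct and is essentially the paper's own argument: the paper proves Proposition \ref{P:Q0Estimates} by simply rerunning the method of Proposition \ref{P:QEstimates} at $\epsilon=0$, which is exactly what you do (Minkowski, the two-regime H\"older in $y$ with rescaling to extract $s^{-1+}\langle s\rangle^{0-}$, Sobolev embedding in $\xi_{k+1}$, the $\eta_{k+1}$ frequency splitting with Cauchy--Schwarz, and the final $s$-integral). One small remark: the large-$s$ regime is not actually new here --- the $\epsilon>0$ proof already needed and used the $s\geq 1$ H\"older exponents, since its upper limit $t/\epsilon$ is unbounded uniformly in $N$ --- but this mischaracterization of the earlier proposition does not affect the validity of your argument.
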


Proposition \ref{P:general-fixed-time-2} below is an $L^p$ based bilinear
estimate that is needed in later sections with $p>2$. Its proof requires use
of the Littlewood-Paley square function.

\begin{lemma}[Littlewood-Paley square function estimate]
\label{L:squarefcn} Let $\chi(\xi)\geq 0$ be a smooth function with support
contained in $\{ \; \xi \; : \; \frac12< |\xi| < 2 \;\}$ such that there
exists $C\geq 1$ for which 
\begin{equation}  \label{E:SF03}
\forall \; \xi \neq 0 \,, \qquad C^{-1} \leq \sum_{M\in 2^{\mathbb{Z}}}
\chi(\xi/M) \leq C
\end{equation}
Let $P_M$ be the Fourier multiplier operator with symbol $\chi(\xi/M)$. Then
there exists a constant $C_p \geq 1$ such that 
\begin{equation*}
C_p^{-1}\| f\|_{L^p} \leq \|Sf\|_{L^p} \leq C_p\|f\|_{L^p}
\end{equation*}
where the \emph{square function} $S$ is 
\begin{equation}  \label{E:SF05}
S f = \left( \sum_{M \in 2^{\mathbb{Z}}} |P_M f|^2 \right)^{1/2}
\end{equation}
\end{lemma}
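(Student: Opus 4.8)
The plan is to deduce Lemma~\ref{L:squarefcn} from the Mikhlin--H\"ormander multiplier theorem (equivalently, vector-valued Calder\'on--Zygmund theory) by the classical randomization argument of Littlewood--Paley theory; all the needed machinery is in \cite{Stein}, so only a sketch is required. One first establishes everything for Schwartz $f$, where the sum in \eqref{E:SF05} converges absolutely and all manipulations are justified, and extends by density. A preliminary remark: the support hypothesis on $\chi$ already forces \emph{bounded overlap}, namely for each $\xi\neq 0$ at most two $M\in 2^{\mathbb{Z}}$ have $\chi(\xi/M)\neq 0$, since $\chi(\xi/M)\neq 0$ requires $M\in(|\xi|/2,2|\xi|)$, an interval of ratio $4$. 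Consequently the upper bound $\sum_M\chi(\xi/M)\leq C$ is automatic, and the real content of the hypothesis is the lower bound $\sum_M\chi(\xi/M)\geq C^{-1}$.

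\emph{Upper bound $\|Sf\|_{L^p}\lesssim\|f\|_{L^p}$.} Let $\{\varepsilon_M\}_{M\in 2^{\mathbb{Z}}}$ be independent random signs on a probability space $(\Omega,\mathbb{P})$ and set $m_\varepsilon(\xi)=\sum_M\varepsilon_M\chi(\xi/M)$. By bounded overlap the sum defining $m_\varepsilon$ and each of its derivatives has $O(1)$ nonzero terms near any $\xi$, each obeying the scaling bound $|\partial^\alpha[\chi(\cdot/M)](\xi)|\lesssim_\alpha|\xi|^{-|\alpha|}$, so $|\partial^\alpha m_\varepsilon(\xi)|\lesssim_\alpha|\xi|^{-|\alpha|}$ uniformly in the choice of signs. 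Hence $m_\varepsilon(D)$ is bounded on $L^p$, $1<p<\infty$, uniformly in $\varepsilon$. Khintchine's inequality gives, pointwise in $x$, $\big(\sum_M|P_Mf(x)|^2\big)^{p/2}\approx_p\mathbb{E}_\varepsilon\big|\sum_M\varepsilon_M P_Mf(x)\big|^p=\mathbb{E}_\varepsilon|m_\varepsilon(D)f(x)|^p$; integrating in $x$ and using Tonelli,
\[
\|Sf\|_{L^p}^p\approx_p\mathbb{E}_\varepsilon\|m_\varepsilon(D)f\|_{L^p}^p\lesssim\|f\|_{L^p}^p.
\]
The identical argument applies to any family $\{\Phi(\cdot/M)\}_{M\in2^{\mathbb{Z}}}$ with $\Phi$ a fixed smooth bump supported in a fixed annulus bounded away from $0$ and $\infty$, a fact used below.

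\emph{Lower bound $\|f\|_{L^p}\lesssim\|Sf\|_{L^p}$.} Put $g(\xi)=\sum_M\chi(\xi/M)$, so $C^{-1}\leq g\leq C$ on $\mathbb{R}^n\setminus\{0\}$ and $g(2\xi)=g(\xi)$; this dilation-periodicity together with smoothness and the lower bound on $g$ yields the homogeneous Mikhlin bounds $|\partial^\alpha(1/g)(\xi)|\lesssim_\alpha|\xi|^{-|\alpha|}$, so $g(D)^{-1}$ is bounded on $L^{p'}$. Fix a smooth bump $\Phi$ with $\Phi\equiv1$ on $\{\tfrac12\leq|\xi|\leq2\}$ and $\operatorname{supp}\Phi\subset\{\tfrac14<|\xi|<4\}$, and set $\tilde P_M=\Phi(D/M)$; since $\Phi(\xi/M)=1$ on $\operatorname{supp}\chi(\cdot/M)$ we have $\chi(\xi/M)=\chi(\xi/M)\Phi(\xi/M)$, hence $\langle P_Mf,h\rangle=\langle P_Mf,\tilde P_Mh\rangle$. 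For $h$ with $\|h\|_{L^{p'}}=1$ write $h_1=g(D)^{-1}h$ (so $\|h_1\|_{L^{p'}}\lesssim1$), and use $\sum_MP_M=g(D)$ with self-adjointness of $g(D)^{-1}$ to get $\langle f,h\rangle=\big\langle\sum_MP_Mf,\,h_1\big\rangle=\sum_M\langle P_Mf,\,\tilde P_Mh_1\rangle$. Cauchy--Schwarz in $M$ pointwise and then H\"older in $x$ give
\[
|\langle f,h\rangle|\leq\|Sf\|_{L^p}\,\Big\|\big(\textstyle\sum_M|\tilde P_Mh_1|^2\big)^{1/2}\Big\|_{L^{p'}}\lesssim\|Sf\|_{L^p}\,\|h_1\|_{L^{p'}}\lesssim\|Sf\|_{L^p},
\]
where the middle inequality is the upper-bound estimate applied to the bounded-overlap family $\{\tilde P_M\}$. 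Taking the supremum over $h$ finishes the proof.

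\emph{Main obstacle.} There is no genuine difficulty beyond correctly invoking the classical machinery; the point requiring the most care is the lower bound, where one must manufacture the exact reproducing formula $f=g(D)^{-1}\sum_MP_Mf$ and verify that the corrector $1/g$ and the fattened projections $\tilde P_M$ are admissible Fourier multipliers. All of this is standard, which is why the statement is ultimately attributable to \cite{Stein}.
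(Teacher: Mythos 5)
Your proof is correct: it is the classical Littlewood--Paley argument (Khintchine randomization plus the Mikhlin--H\"ormander theorem for the upper bound, and duality through the reproducing multiplier $1/g$ with fattened projections for the lower bound), and the paper itself offers no proof of this lemma, treating it as standard and pointing to \cite{Stein}. Your write-up fills in exactly the argument that reference supplies, so there is nothing to add beyond noting that all steps (bounded overlap, self-adjointness of the real-symbol multipliers, and the truncation/limiting argument needed to apply Khintchine to the infinite dyadic sum) are handled correctly at the level of detail appropriate for a classical citation.
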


\begin{proposition}[$Q^{(k+1)}$ estimates in $L^p$]
\label{P:general-fixed-time-2} For any $2\leq p <\infty$ and $0 \leq r \leq
1 $, we have the following bound for fixed $t$, $\boldsymbol{x}_k$, and $%
(\xi_1, \ldots, \xi_{i-1},\xi_{i+1}, \ldots, \xi_k)$.: 
\begin{equation*}
\| L\, \langle \nabla_{\xi_i} \rangle^r \, \tilde Q_{i,k+1} \tilde
f^{(k+1)}(t, \boldsymbol{x}_k, \boldsymbol{\xi}_k) \|_{L^p_{\xi_i}} \lesssim
\| L \, \langle \nabla_{\xi_i} \rangle^r \, \tilde f^{(k+1)}(t, \boldsymbol{x%
}_k, x_i, \boldsymbol{\xi}_{k+1}) \|_{(L_{\xi_{k+1}}^{3+}\cap
L_{\xi_{k+1}}^{3-})L_{\xi_i}^p}
\end{equation*}
where $L$ is any (possibly fractional) derivative operator in $\boldsymbol{x}%
_k$ and/or $(\xi_1, \ldots, \xi_{i-1},\xi_{i+1}, \ldots, \xi_k)$.. The bound
is uniform in $t$, $\boldsymbol{x}_k$, and $(\xi_1, \ldots,
\xi_{i-1},\xi_{i+1}, \ldots, \xi_k)$.
\end{proposition}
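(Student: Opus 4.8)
The plan is to work from the explicit $(\boldsymbol{x}_k,\boldsymbol{\xi}_k)$ representation \eqref{E:S444} of $\tilde Q_{i,k+1}$. In that formula the positions $\boldsymbol{x}_k$ and the velocities $\xi_j$ with $j\neq i$ enter only as spectators: the $(k+1)$-st position is frozen at $x_i$, the $(k+1)$-st velocity is set to $sy$, and $\xi_i$ is translated to $\xi_i-sy$. Hence $\tilde Q_{i,k+1}$ commutes with the operator $L$ (a derivative in $\boldsymbol{x}_k$ and/or $(\xi_j)_{j\neq i}$) and with $\langle\nabla_{\xi_i}\rangle^r$ (an $\xi_i$-Fourier multiplier). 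Freezing $t$, $\boldsymbol{x}_k$, $(\xi_j)_{j\neq i}$ and regarding $G(\xi_i,\xi_{k+1}):=\tilde f^{(k+1)}(t,\boldsymbol{x}_k,x_i,\boldsymbol{\xi}_{k+1})$ as a function of two $\mathbb{R}^3$-variables, it suffices to prove the two-variable inequality
\[
\| \langle \nabla_{\xi_i}\rangle^r \mathcal{Q}G \|_{L^p_{\xi_i}} \lesssim \| \langle \nabla_{\xi_i}\rangle^r G \|_{(L_{\xi_{k+1}}^{3+}\cap L_{\xi_{k+1}}^{3-})L_{\xi_i}^p},
\]
uniformly in the frozen parameters, where $\mathcal{Q}$ is the operator obtained from \eqref{E:S444}; the general statement then follows by applying this to $L$-differentiated data and restoring $L$, which passes through everything. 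I split $\mathcal{Q}=\mathcal{Q}^{-}+\mathcal{Q}^{+}$ according to $\sigma=\alpha$ (loss; no modulation) and $\sigma=-\alpha$ (gain; modulation $e^{\pm i\xi_i\cdot y}$).

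Next I establish the model bound $\|\mathcal{Q}^{\pm}G\|_{L^p_{\xi_i}}\lesssim\|G\|_{(L_{\xi_{k+1}}^{3+}\cap L_{\xi_{k+1}}^{3-})L_{\xi_i}^p}$ at $r=0$, following the Minkowski/polar scheme already used for Proposition \ref{P:QEstimates}. Applying Minkowski's integral inequality in the $(y,s)$-integration, using that $|e^{i(\sigma-\alpha)\xi_i\cdot y/2}e^{-i\sigma s|y|^2}|=1$ and that the $L^p_{\xi_i}$ norm is translation invariant, gives
\[
\|\mathcal{Q}^{\pm}G\|_{L^p_{\xi_i}}\lesssim \int_{\mathbb{R}^3}\int_{0}^{\infty}|\hat\phi(y)|^{2}\,\|G(\cdot,sy)\|_{L^p_{\xi_i}}\,ds\,dy.
\]
Passing to polar coordinates $y=r\omega$ and substituting $\rho=sr$ turns the $s$-integral into $\int_0^\infty\|G(\cdot,\rho\omega)\|_{L^p_{\xi_i}}\,d\rho/r$; since $\hat\phi$ is Schwartz, $\int_0^\infty|\hat\phi(r\omega)|^2\,r\,dr$ is bounded uniformly in $\omega$, and what remains is $\lesssim\int_{\mathbb{R}^3}\|G(\cdot,\xi_{k+1})\|_{L^p_{\xi_i}}\,|\xi_{k+1}|^{-2}\,d\xi_{k+1}$. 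Splitting this at $|\xi_{k+1}|=1$ and using H\"older in $\xi_{k+1}$ --- with $|\xi_{k+1}|^{-2}\in L^{3/2-}$ on the unit ball and $|\xi_{k+1}|^{-2}\in L^{3/2+}$ on its complement --- yields exactly $\|G\|_{L^{3+}_{\xi_{k+1}}L^p_{\xi_i}}+\|G\|_{L^{3-}_{\xi_{k+1}}L^p_{\xi_i}}$. (The fact that $|\xi_{k+1}|^{-2}$ lies in no single $L^q(\mathbb{R}^3)$ is precisely what forces the $L^{3+}\cap L^{3-}$ pair.) For the loss term this already handles $0<r\le1$, since $\langle\nabla_{\xi_i}\rangle^r$ commutes exactly with the translation-invariant $\mathcal{Q}^{-}$; one simply applies the model bound to $\langle\nabla_{\xi_i}\rangle^r G$.

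The remaining --- and only delicate --- point is $\langle\nabla_{\xi_i}\rangle^r\mathcal{Q}^{+}$. For a gain term carrying $e^{i\xi_i\cdot y}$ (the $-y$ case is identical) I use $e^{i\xi_i\cdot y}G(\xi_i-sy,sy)=e^{is|y|^{2}}w_{s,y}(\xi_i-sy)$ with $w_{s,y}(\xi):=e^{i\xi\cdot y}G(\xi,sy)$; as $\langle\nabla_{\xi_i}\rangle^r$ commutes with the translation by $sy$, this gives $\langle\nabla_{\xi_i}\rangle^r[e^{i\xi_i\cdot y}G(\xi_i-sy,sy)]=e^{is|y|^{2}}(\langle\nabla\rangle^r w_{s,y})(\xi_i-sy)$ and then $\langle\nabla\rangle^r w_{s,y}=e^{i\xi\cdot y}\,T_y\,\langle\nabla\rangle^r G(\cdot,sy)$, where $T_y$ is the $\xi_i$-Fourier multiplier with symbol $m_y(\eta)=\bigl(\langle\eta+y\rangle/\langle\eta\rangle\bigr)^{r}$. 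The crux is the uniform operator bound $\|T_y\|_{L^p\to L^p}\lesssim\langle y\rangle^{N}$ for an absolute power $N$ (indeed $N=r$ with more care). Here I invoke the Littlewood--Paley square function of Lemma \ref{L:squarefcn}: decomposing into frequency shells $|\eta|\sim M$, on $M\gtrsim|y|$ the localized symbol is an $L^p$-multiplier with Mikhlin norm bounded uniformly in $y$, while on $M\lesssim|y|$ one has $|m_y|\le\langle y\rangle^{r}$ together with a polynomial-in-$\langle y\rangle$ bound on its (rescaled) derivatives, and reassembling the shells via the square-function inequality (used on both sides) produces the stated bound. Inserting $|e^{i\xi_i\cdot y}|=1$, translation invariance, and $\|T_y\|_{L^p\to L^p}\lesssim\langle y\rangle^{N}$ into the Minkowski step exactly as above --- now with the still rapidly decaying weight $\langle y\rangle^{N}|\hat\phi(y)|^{2}$ in place of $|\hat\phi(y)|^{2}$ --- and repeating the polar/H\"older computation gives $\|\langle\nabla_{\xi_i}\rangle^r\mathcal{Q}^{+}G\|_{L^p_{\xi_i}}\lesssim\|\langle\nabla_{\xi_i}\rangle^r G\|_{(L_{\xi_{k+1}}^{3+}\cap L_{\xi_{k+1}}^{3-})L_{\xi_i}^p}$. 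Combining with the loss estimate and restoring $L$ (which commutes with $\mathcal{Q}^{\pm}$, with $\langle\nabla_{\xi_i}\rangle^r$, and with $T_y$) completes the proof. I expect the uniform $L^p$-control of the commutator multiplier $T_y$ to be the sole obstacle; everything else is a direct transcription of the Minkowski, polar-coordinate, and H\"older manipulations of Proposition \ref{P:QEstimates}.
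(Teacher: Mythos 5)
Your argument is correct, but it follows a genuinely different route from the paper's, chiefly in the fractional case $0<r<1$. The paper keeps $s$ fixed, proves the quantitative bound $\| \langle \nabla_{\xi_1}\rangle^r \tilde U_s \tilde g\|_{L^p}\lesssim s^{-1+}\langle s\rangle^{0-}\|\langle\nabla_{\xi_1}\rangle^r \tilde g\|_{(L^{3+}\cap L^{3-})L^p}$ by H\"older in $y$ with $s$-dependent exponent choices, and then handles the fractional derivative through a frequency-matched Littlewood--Paley decomposition: the diagonal piece is treated by the exact exchange of $M^r$ between projections (its \eqref{E:SF12}), and the off-diagonal pieces $V_M^2, V_M^3$ gain $1/\max(M,R)$ because the modulation forces $|y|\sim\max(M,R)$, with the decay of $\hat\phi$ paying for it; the $3\pm$ pair arises from the $s\lessgtr 1$ dichotomy. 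You instead integrate $s$ out first via polar coordinates, so the $3\pm$ pair arises from $|\xi_{k+1}|^{-2}$ lying in $L^{3/2-}$ near the origin and $L^{3/2+}$ at infinity, and you handle the modulation via the conjugation identity $\langle\nabla\rangle^r\bigl(e^{i\xi\cdot y}h\bigr)=e^{i\xi\cdot y}T_y\langle\nabla\rangle^r h$ with $T_y$ the multiplier with symbol $\bigl(\langle\eta+y\rangle/\langle\eta\rangle\bigr)^r$, accepting a polynomial loss $\langle y\rangle^N$ that is absorbed by the Schwartz decay of $\hat\phi$. The one step you only sketch --- the uniform bound $\|T_y\|_{L^p\to L^p}\lesssim\langle y\rangle^{N}$ --- is true and does not really need the square-function machinery you invoke: Peetre's inequality $\langle\eta+y\rangle\lesssim\langle\eta\rangle\langle y\rangle$ and the Leibniz rule give $|\partial^\beta_\eta m_y(\eta)|\lesssim\langle y\rangle^{2}\,|\eta|^{-|\beta|}$ for $|\beta|\le 2$ in dimension three, so Mikhlin--H\"ormander yields the bound directly; since any fixed power of $\langle y\rangle$ is harmless against $|\hat\phi(y)|^2$, this closes your argument. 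The trade-off: your route is more elementary and avoids the paper's delicate bookkeeping with $\tilde P_M$, $\tilde{\tilde P}_M$, $H_j$ and the separated-support estimate \eqref{E:SF18}, but it spends the rapid decay of $\hat\phi$ (polynomially many moments), whereas the paper's argument needs only $L^{3\pm}$-type integrability of $\hat\phi$ and $|y|\hat\phi$ and, in addition, records the fixed-$s$ weight $s^{-1+}\langle s\rangle^{0-}$, which parallels the structure used in Proposition \ref{P:QEstimates}; for the proposition as stated both arguments suffice.
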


\begin{proof}
Since the $L$ operator carries directly onto $\tilde f^{(k+1)}$, we might as
well take $L=I$. From \eqref{E:S444}, we see that if $\alpha=\sigma$, the $%
\langle \nabla_{\xi_i} \rangle^r$ passes directly onto $\tilde f^{(k+1)}$.
Thus let us assume that $(\alpha,\sigma) \in \{(-1,1), (1,-1)\}$. Both cases
are similar, so for convenience we will take $(\alpha,\sigma) = (-1,1)$. 
\begin{equation*}
\tilde Q_{i,k+1} \tilde f^{(k+1)}(t, \boldsymbol{x}_k, \boldsymbol{\xi}_k) =
\int_{s=0}^{\infty} \int_y |\hat \phi(y)|^2 e^{i\xi_i y} e^{-i\sigma
s|y|^2}\tilde f^{(k+1)}(t, \boldsymbol{x}_k , x_i, \xi_1, \, . \,, \xi_i-sy,
\, . \,, \xi_k, sy ) \, dy \, ds
\end{equation*}
By Minkowski's integral inequality, we see that it suffices to consider, for
fixed $s>0$, the operator $\tilde U_s$ that acts on $\tilde g(\xi_1,\xi_2)$
and returns a function of $\xi_1$, given by 
\begin{equation}  \label{E:SF14}
\tilde U_s \tilde g(\xi_1) = \int_y |\hat\phi(y)|^2 e^{i\xi_1 y} \tilde
g(\xi_1-sy,sy) \, dy = \int_{v_1} \int_y |\hat\phi(y)|^2 e^{i\xi_1 (v_1+y)}
\hat g(v_1,sy) \, dy
\end{equation}
(where, for the purposes of this proof, $\hat g$ denotes the Fourier
transform of $\tilde g$ in $\xi_1\mapsto v_1$ only). Our goal is to show
that 
\begin{equation}  \label{E:SF10}
\| \langle \nabla_{\xi_1} \rangle^r \tilde U_s \tilde g(\xi_1)
\|_{L^p_{\xi_1}} \lesssim s^{-1+} \langle s \rangle^{0-} \| \langle
\nabla_{\xi_1}\rangle^r \tilde g(\xi_1,\xi_2) \|_{(L_{\xi_2}^{3+}\cap
L_{\xi_2}^{3-})L_{\xi_1}^p}
\end{equation}
First, we treat the case of $r=0$. By Minkowski, 
\begin{equation*}
\| \tilde U_s \tilde g(\xi_1) \|_{L^p_{\xi_1}}\lesssim \int_y
|\hat\phi(y)|^2 \|\tilde g(\xi_1,sy)\|_{L^p_{\xi_1}} \, dy
\end{equation*}
Then we do H\"older in $y$, depending on the value of $s$:

\begin{center}
\renewcommand{\arraystretch}{1.5}
\setlength{\tabcolsep}{10pt}  
\begin{tabular}{ccc|cc}
$|\hat\phi( y)|^2$ & $\|\tilde g(\xi_1, sy) \|_{L^p_{\xi_1}}$ & %
\shortstack{rescaling \\ generates} & \shortstack{use\\when} &  \\ \hline
$L^{3/2-}_y$ & $L^{3+}_y$ & $s^{-1+}$ & $s\leq 1$ &  \\ 
$L^{3/2+}_y$ & $L^{3-}_y$ & $s^{-1-}$ & $s\geq 1$ & 
\end{tabular}
\end{center}

This gives 
\begin{equation*}
\| \tilde U_s \tilde g(\xi_1) \|_{L^p_{\xi_1}}\lesssim s^{-1+}\langle s
\rangle^{0-} \|\tilde g(\xi_1,\xi_2)\|_{(L_{\xi_2}^{3+}\cap
L_{\xi_2}^{3-})L^p_{\xi_1}}
\end{equation*}
which completes the proof in the $r=0$ case.

Next, we treat the case of $r=1$. 
\begin{equation*}
\| \langle \nabla_{\xi_1} \rangle^1 \tilde U_s \tilde g(\xi_1)
\|_{L^p_{\xi_1}} \sim \| \tilde U_s \tilde g(\xi_1) \|_{L^p_{\xi_1}} + \|
\nabla_{\xi_1} \tilde U_s \tilde g(\xi_1) \|_{L^p_{\xi_1}}
\end{equation*}
The first term is treated by the $r=0$ case. For the second term, we compute
from the definition \eqref{E:SF14} of $\tilde U_s$ that 
\begin{equation*}
\nabla_{\xi_1} \tilde U_s \tilde g(\xi_1) = \int_y |\hat\phi(y)|^2 e^{i\xi_1
y} (iy + \nabla_{\xi_1})\tilde g(\xi_1-sy,sy) \, dy
\end{equation*}
This splits into two terms, the first is like the $r=0$ case with $%
|\hat\phi(y)|^2$ replaced by $y|\hat\phi(y)|^2$, and the second is like the $%
r=0$ case with $\tilde g$ replaced by $\nabla_{\xi_1}\tilde g$. This
completes the proof in the $r=1$ case.

For general $0<r <1$, we first dispose of low frequencies. 
\begin{equation*}
\| \langle \nabla_{\xi_1} \rangle^r \tilde U_s P_{\lesssim 1} \tilde
g(\xi_1) \|_{L^p_{\xi_1}} \lesssim \| \langle \nabla_{\xi_1} \rangle \tilde
U_s P_{\lesssim 1} \tilde g(\xi_1) \|_{L^p_{\xi_1}}
\end{equation*}
and the proof is completed by appealing to the $r=1$ case and using that 
\begin{equation*}
\| \langle \nabla_{\xi_1} \rangle P_{\lesssim 1} \tilde g(\xi_1,\xi_2)
\|_{L^p_{\xi_1}} \lesssim \| \langle \nabla_{\xi_1} \rangle^r \tilde
g(\xi_1,\xi_2) \|_{L^p_{\xi_1}}
\end{equation*}
Next, note that 
\begin{equation*}
\| P_{\lesssim 1} \langle \nabla_{\xi_1} \rangle^r \tilde U_s \tilde
g(\xi_1) \|_{L^p_{\xi_1}} \lesssim \| \tilde U_s \tilde g(\xi_1)
\|_{L^p_{\xi_1}}
\end{equation*}
and thus the proof reduces to the case of $r=0$.

In view of the above considerations, it suffices to prove 
\begin{equation}  \label{E:SF13}
\| P_{\gtrsim 1} \langle \nabla_{\xi_1} \rangle^r \tilde U_s P_{\gtrsim 1}
\tilde g(\xi_1) \|_{L^p_{\xi_1}} \lesssim s^{-1+} \langle s \rangle^{0-} \|
\langle \nabla_{\xi_1}\rangle^r \tilde g(\xi_1,\xi_2)
\|_{(L_{\xi_2}^{3+}\cap L_{\xi_2}^{3-})L_{\xi_1}^p}
\end{equation}
By the $L^p \to L^p$ boundedness of the operators 
\begin{equation*}
P_{\gtrsim 1} \langle \nabla_{\xi_1} \rangle^r |\nabla_{\xi_1}|^{-r} \quad 
\text{ and } \quad P_{\gtrsim 1} \langle \nabla_{\xi_1} \rangle^{-r}
|\nabla_{\xi_1}|^r
\end{equation*}
(which follows by the Mikhlin multiplier theorem) it suffices to prove the
analogous result with homogeneous derivative operators: 
\begin{equation*}
\| P_{\gtrsim 1} |\nabla_{\xi_1}|^r \tilde U_s P_{\gtrsim 1} \tilde g(\xi_1)
\|_{L^p_{\xi_1}} \lesssim s^{-1+} \langle s \rangle^{0-} \|
|\nabla_{\xi_1}|^r \tilde g(\xi_1,\xi_2) \|_{(L_{\xi_2}^{3+}\cap
L_{\xi_2}^{3-})L_{\xi_1}^p}
\end{equation*}

Take any smooth $\chi(\xi)$ as in the statement of Lemma \ref{L:squarefcn}
with the property that\footnote{%
One can construct such a $\chi$ as follows. Take any smooth $0\leq
\chi^0(\xi)\leq 1$ with $\func{supp} \chi^0 \subset \{ \; \xi \; : \;
\frac12 < |\xi| < 2\}$ such that $\chi^0(\xi)=1$ on $A=\{ \; \xi \; : \; 
\frac{1}{\sqrt{2}} \leq |\xi| \leq \sqrt 2 \; \}$. Let $m(\xi) = \sum_{M \in
2^\mathbb{Z}} \chi^0(\xi/M)$. For $\xi\in A$, $m(\xi) = \chi^0(2\xi)+
\chi^0(\xi) +\chi^0(\xi/2)$ since all other terms vanish. Thus $1 \leq
m(\xi) \leq 3$ for $\xi \in A$. From the definition of $m(\xi)$, we have $%
m(2\xi)=m(\xi)$. Thus $1\leq m(\xi)\leq 3$ for all $\xi \neq 0$. Now let $%
\chi(\xi) = \chi^0(\xi)/m(\xi)$. It follows that $\frac13 \leq \chi(\xi)
\leq 1$ on $A$, that $\sum_{M\in 2^{\mathbb{Z}}} \chi(\xi/M) =1$ for $%
\xi\neq 0$, and that $\func{supp} \chi \subset \{ \; \xi \; : \; \frac12
\leq |\xi| \leq 2\}$.} 
\begin{equation*}
\forall \; \xi \neq 0 \,, \qquad \sum_{M\in 2^{\mathbb{Z}}} \chi(\xi/M) =1
\end{equation*}
which is stronger than \eqref{E:SF03} and also implies that 
\begin{equation}  \label{E:SF04}
\sum_{R\in 2^{\mathbb{Z}}} P_R= I
\end{equation}

For each $M\in 2^{\mathbb{Z}}$, 
\begin{equation}  \label{E:SF11}
\begin{aligned} P_M |\nabla_{\xi_1} |^r \tilde U_s P_{\gtrsim 1} &= P_M
|\nabla_{\xi_1} |^r \tilde U_s (P_{M/2}+P_M+P_{2M}) P_{\gtrsim 1} \\ &\qquad
+ P_M |\nabla_{\xi_1} |^r \tilde U_s [I-(P_{M/2}+P_M+P_{2M})]P_{\gtrsim 1}
\end{aligned}
\end{equation}
Let $\tilde\chi(\xi_1) = |\xi|^r\chi(\xi_1)$ and let 
\begin{equation*}
\tilde{\tilde \chi}(\xi_1) = |\xi|^{-r}(\chi(2\xi_1) + \chi(\xi_1) +
\chi(2\xi_1))
\end{equation*}
Let $\tilde P_M$ be the Fourier multiplier with symbol $\tilde \chi(\xi_1/M)$%
, and let $\tilde{\tilde P}_M$ be the Fourier multiplier with symbol $\tilde{%
\tilde \chi}(\xi_1/M)$. Then 
\begin{equation}  \label{E:SF12}
P_M |\nabla_{\xi_1} |^r \tilde U_s (P_{M/2}+P_M+P_{2M}) P_{\gtrsim 1} =
\tilde P_M \tilde U_s \tilde{\tilde P}_M P_{\gtrsim 1}|\nabla_{\xi_1}|^r
\end{equation}
by exchanging $M^r$ from the left $P$ operator to the right $P$ operator.
Note how this effectively moves the $|\nabla_{\xi_1}|^r$ operator past $%
\tilde U_s$ while preserving exact equality. Let 
\begin{equation*}
h_j(\xi_1) = \sum_{M\in 2^{3\mathbb{Z}+j}} \tilde{\tilde \chi}(\xi_1/M) \,,
\qquad j=0,1,2
\end{equation*}
(In other words, $j=0$ sums over $M = \ldots, \frac18, 1, 8, \ldots$, while $%
j=1$ sums over $M=\ldots, \frac14, 2, 16, \ldots$ and $j=2$ sums over $M=
\ldots, \frac{1}{16}, \frac12, 4, 32, \ldots$.). It follows $h_j(\xi) =
h_j(8\xi)$ and there exists a constant $C\geq 1$ such that for $j=0,1,2$, we
have\footnote{%
Indeed, following the construction of $\chi(\xi)$ given in the previous
footnote and the definition of $\tilde{\tilde \chi}(\xi)$, we have $%
2^{-r}\cdot\frac{1}{2} \leq \tilde{\tilde \chi}(\xi)$ for $\frac{1}{2\sqrt{2}%
} \leq |\xi| \leq 2\sqrt 2$, that $\func{supp} \tilde{\tilde \chi} \subset
\{ \; \xi \; : \; \frac14 \leq |\xi| \leq 4 \; \}$, and that for all $\xi$, $%
0\leq \tilde{\tilde \chi}(\xi) \leq 3\cdot 4^r$. Summing over $M\in 2^{3%
\mathbb{Z}+j}$, at most 3 copies overlap for any given $\xi$, so $%
h_j(\xi)\leq 9\cdot 4^r$. By the definition of $h_j(\xi)$, we have the
periodicity $h_j(\xi)=h_j(8\xi)$, and from this, \eqref{E:SF15} follows.} 
\begin{equation}  \label{E:SF15}
C^{-1} \leq h_j(\xi_1) \leq C \qquad\text{ for }\xi_1\neq 0
\end{equation}
Let $H_j$ be the Fourier multiplier operator with symbol $h_j$. Then 
\begin{equation*}
H_j = \sum_{M\in 2^{3\mathbb{Z}+j}} \tilde{\tilde P}_M
\end{equation*}
and each $H_j$ should be thought of as a near-identity operator.
Substituting 
\begin{equation*}
\tilde{\tilde P}_M = H_j - (H_j-\tilde{\tilde P}_M) \,, \qquad j= \log_2M %
\mod 3
\end{equation*}
into \eqref{E:SF12}, we obtain 
\begin{equation*}
P_M |\nabla_{\xi_1} |^r \tilde U_s (P_{M/2}+P_M+P_{2M}) P_{\gtrsim 1} =
\tilde P_M \tilde U_s H_j P_{\gtrsim 1}|\nabla_{\xi_1}|^r - \tilde P_M
\tilde U_s (H_j-\tilde{\tilde P}_M) P_{\gtrsim 1}|\nabla_{\xi_1}|^r
\end{equation*}
Plug this into \eqref{E:SF11} to obtain 
\begin{equation*}
P_{\gtrsim 1} P_M |\nabla_{\xi_1} |^r \tilde U_s P_{\gtrsim 1} = V_M^1 +
V_M^2 + V_M^3
\end{equation*}
where, again with $j=\log_2M \mod 3$, we define the components: 
\begin{align*}
&V_M^1 \overset{\mathrm{def}}{=} P_{\gtrsim 1} \tilde P_M \tilde U_s H_j
P_{\gtrsim 1}|\nabla_{\xi_1}|^r \\
&V_M^2 \overset{\mathrm{def}}{=} - P_{\gtrsim 1} \tilde P_M \tilde U_s (H_j-%
\tilde{\tilde P}_M) P_{\gtrsim 1}|\nabla_{\xi_1}|^r \\
&V_M^3 \overset{\mathrm{def}}{=} P_{\gtrsim 1} P_M |\nabla_{\xi_1} |^r
\tilde U_s [I-(P_{M/2}+P_M+P_{2M})]P_{\gtrsim 1}
\end{align*}
By Lemma \ref{L:squarefcn} for the collection $\{P_M\}_M$ and the triangle
inequality, 
\begin{align*}
\| P_{\gtrsim 1} |\nabla_{\xi_1} |^r \tilde U_s P_{\gtrsim 1} \tilde
g(\xi_1) \|_{L^p_{\xi_1}} &\lesssim \| P_{\gtrsim 1}P_M |\nabla_{\xi_1} |^r
\tilde U_s P_{\gtrsim 1} \tilde g(\xi_1) \|_{L^p_{\xi_1} \ell^2_{M\in 2^{%
\mathbb{Z}}}} \\
&\lesssim \| V_M^1\tilde g(\xi_1) \|_{L^p_{\xi_1} \ell^2_{M\in 2^{\mathbb{Z}%
}}} +\| V_M^2\tilde g(\xi_1) \|_{L^p_{\xi_1} \ell^2_{M\in 2^{\mathbb{Z}}}} +
\| V_M^3\tilde g(\xi_1) \|_{L^p_{\xi_1} \ell^2_{M\in 2^{\mathbb{Z}}}}
\end{align*}
By the triangle inequality, we can split the norm on the main term $%
V_M^1\tilde g(\xi_1)$ according to the partition $\mathbb{Z} = 3\mathbb{Z}
\cup (3\mathbb{Z}+1) \cup (3\mathbb{Z}+2)$ as 
\begin{equation*}
\| V_M^1\tilde g(\xi_1) \|_{L^p_{\xi_1} \ell^2_{M\in 2^{\mathbb{Z}}}} \leq
\| V_M^1\tilde g(\xi_1) \|_{L^p_{\xi_1} \ell^2_{M\in 2^{3\mathbb{Z}}}} + \|
V_M^1\tilde g(\xi_1) \|_{L^p_{\xi_1} \ell^2_{M\in 2^{3\mathbb{Z}+1}}} + \|
V_M^1\tilde g(\xi_1) \|_{L^p_{\xi_1} \ell^2_{M\in 2^{3\mathbb{Z}+2}}}
\end{equation*}
For each of the terms on the right-side, $j=\log_2(M)\mod 3$ is a constant,
and thus Lemma \ref{L:squarefcn} with respect to the collection $\{\tilde
P_M \}_M$ can be applied to each term separately to give 
\begin{equation*}
\| V_M^1\tilde g(\xi_1) \|_{L^p_{\xi_1} \ell^2_{M\in 2^{\mathbb{Z}}}}
\lesssim \sum_{j=0}^2 \| \tilde U_s H_j P_{\gtrsim 1}|\nabla_{\xi_1}|^r
\tilde g \|_{L^p_{\xi_1}}
\end{equation*}
By the $r=0$ case of the estimate and the $L^p\to L^p$ boundedness of $H_j$
(which follows by the Mikhlin multiplier theorem) 
\begin{equation*}
\| V_M^1\tilde g(\xi_1) \|_{L^p_{\xi_1} \ell^2_{M\in 2^{\mathbb{Z}}}}
\lesssim \| P_{\gtrsim 1}|\nabla_{\xi_1}|^r \tilde g \|_{(L^{3+}_{\xi_2}\cap
L^{3-}_{\xi_2})L^p_{\xi_1}}
\end{equation*}
It remains only to treat the error terms $V_M^2\tilde g(\xi_1)$ and $%
V_M^3\tilde g(\xi_1)$. The proof for $V_M^2$ relies on the fact that 
\begin{equation}  \label{E:SF16}
H_j - \tilde{\tilde P}_M = \sum_{\substack{ R\in 2^{3\mathbb{Z}+j}  \\ R\neq
M}} \tilde{\tilde P}_R \,, \quad j = \log_2 M \mod 3
\end{equation}
and the proof for $V_M^3$ relies on the fact that 
\begin{equation}  \label{E:SF17}
I - P_M = \sum_{\substack{ R\in 2^M  \\ R\notin\{M/2,M,2M\}}} P_R
\end{equation}
Letting $\sigma(A)$ denote the symbol associated to the operator $A$, the
key property of \eqref{E:SF16} and \eqref{E:SF17} is the following of
separation of supports: for $R\in 2^\mathbb{Z}$ but $R\notin \{M/2,M,2M\}$: 
\begin{equation*}
\func{dist}(\func{supp} \sigma(P_M), \func{supp} \sigma(P_R)) \sim \max(M,R)
\end{equation*}
and for $R\in 2^{3\mathbb{Z}+j}$, $j=\log_2M \mod 3$, and $R\neq M$, 
\begin{equation*}
\func{dist}(\func{supp} \sigma(\tilde P_M), \func{supp} \sigma(\tilde{\tilde
P}_R)) \sim \max(M,R)
\end{equation*}
Since both proofs are similar, we will just complete the proof of the
estimate for $V_M^3$. By \eqref{E:SF14}, it follows that 
\begin{align*}
P_M \tilde U_s P_R \tilde g(\xi_1) &= P_M \int_y |\hat\phi(y)|^2 e^{i\xi_1
y} (P_R \tilde g)(\xi_1-sy,sy) \, dy \\
&= \int_{v_1} \int_y \chi(v_1/M) \chi((v_1+y)/R) |\hat\phi(y)|^2 e^{i\xi_1
(v_1+y)} \hat g(v_1,sy) \, dy \\
&= \int_{v_1}\int_{y\,, \; |y| \sim \max(M,R)} \chi(v_1/M)\chi((v_1+y)/R)
|\hat\phi(y)|^2 e^{i\xi_1 (v_1+y)} \hat g(v_1,sy) \, dy \\
&= P_M \int_{y\,, |y|\sim \max(M,R)} |\hat\phi(y)|^2 e^{i\xi_1 y} (P_R
\tilde g)(\xi_1-sy,sy) \, dy
\end{align*}
By following the proof of the $r=0$ case, but also using that 
\begin{equation*}
\| \hat \phi(y) \|_{L_{|y|\sim \max(M,R)}^q} \sim \frac{1}{\max(M,R)} \| |y|
\hat \phi(y) \|_{L_{|y|\sim \max(M,R)}^q}
\end{equation*}
we obtain 
\begin{equation}  \label{E:SF18}
\|P_M\tilde U_s P_R \tilde g(\xi_1) \|_{L^p_{\xi_1}} \lesssim \frac{1}{%
\max(M,R)} s^{-1+} \langle s\rangle^{0-} \| P_R \tilde g
\|_{(L^{3-}_{\xi_2}\cap L^{3+}_{\xi_2})L^p_{\xi_1}}
\end{equation}
By Minkowski's integral inequality 
\begin{align*}
\| V_M^3\tilde g \|_{L^p_{\xi_1} \ell^2_{M\in 2^{\mathbb{Z}}}} & = \left\|
\sum_R P_{\gtrsim 1} P_M |\nabla_{\xi_1}|^r \tilde U_s P_R P_{\gtrsim 1}
\tilde g(\xi_1) \right\|_{L^p_{\xi_1} \ell^2_{M\in 2^{\mathbb{Z}}}} \\
&\lesssim \left\| P_{\gtrsim 1} P_M |\nabla_{\xi_1}|^r \tilde U_s P_R
P_{\gtrsim 1} \tilde g(\xi_1) \right\|_{ \ell^2_{M\in 2^{\mathbb{Z}}}
\ell^1_{R\in 2^\mathbb{Z}} L^p_{\xi_1}}
\end{align*}
By \eqref{E:SF18}, 
\begin{align*}
\| V_M^3\tilde g \|_{L^p_{\xi_1} \ell^2_{M\in 2^{\mathbb{Z}}}}&\lesssim
\left\| \frac{M^r}{R^r\max(M,R)} \right\|_{ \ell^2_{M\in 2^{\mathbb{N}_0}}
\ell^1_{R\in 2^{\mathbb{N}_0}} } s^{-1+}\langle s \rangle^{0-}
\||\nabla_{\xi_1}|^r P_{\gtrsim 1}\tilde g\|_{(L^{3-}_{\xi_2}\cap
L^{3+}_{\xi_2})L^p_{\xi_1}}
\end{align*}
where $2^{\mathbb{Z}}$ has been replaced by $2^{\mathbb{N}_0}$ due to the $%
P_{\gtrsim 1}$ projections. The indicated double norm in $M$, $R$ is finite
for $r<1$ completing the proof.
\end{proof}

\subsection{Permutation coordinates and associated norms\label{S:PC}}

For a given $\pi \in S_k$, below we introduce a transformed coordinate
system, and then use it to define an associated norm $X_{\pi}$. In the case $%
\pi=I=\text{Identity}$, the transformation is the identity -- we use the
original coordinates $(\boldsymbol{x}_k, \boldsymbol{v}_k)$, and the
associated norm 
\begin{equation*}
X_{I} = H_{\boldsymbol{x}_k}^{1+} (L_{\boldsymbol{v}_k}^{2,\frac12+} \cap L_{%
\boldsymbol{v}_k}^1 \cap L_{\boldsymbol{v}_k}^{\infty, 2+})
\end{equation*}
It is easiest to describe the transformation starting from the $(\boldsymbol{%
x}_k, \boldsymbol{\xi}_k)$ coordinate system. Given $\pi$, we introduce new
coordinates 
\begin{equation}  \label{E:PC10}
p_j^\pi = \tfrac12(x_j+x_{\pi(j)})+\tfrac12\epsilon(\xi_j-\xi_{\pi(j)})
\,,\qquad q_j^\pi = \tfrac12(\xi_j +
\xi_{\pi(j)})+\tfrac12(x_j-x_{\pi(j)})/\epsilon
\end{equation}
for $1\leq j \leq k$. Notice that, for any particular $j$, 
\begin{equation*}
\pi(j)=j \qquad \implies \qquad p_j^\pi = x_j \,, \; q_j^\pi=\xi_j
\end{equation*}
and thus if $\pi=\text{Identity}$, then $(\boldsymbol{p}_k^\pi, \boldsymbol{q%
}_k^\pi) = (\boldsymbol{x}_k, \boldsymbol{\xi}_k)$. Let 
\begin{equation*}
\tilde g_{N,\pi}^{(k)}(\boldsymbol{p}_k^\pi, \boldsymbol{q}_k^\pi) = \tilde
f_{N,\pi}^{(k)}(\boldsymbol{x}_k, \boldsymbol{\xi}_k)
\end{equation*}
meaning that when the function $\tilde f_{N,\pi}^{(k)}(\boldsymbol{x}_k, 
\boldsymbol{\xi}_k)$ is reexpressed in terms of the coordinates $(%
\boldsymbol{p}_k^\pi, \boldsymbol{q}_k^\pi)$, it will be denoted by $\tilde
g_{N,\pi}^{(k)}(\boldsymbol{p}_k^\pi, \boldsymbol{q}_k^\pi)$ to avoid
confusion. The norm $X_\pi$ is defined to be 
\begin{equation*}
\| \tilde f_{N,\pi}^{(k)} \|_{X_{\pi}} = \| \tilde g_{N,\pi}^{(k)} \|_{H_{%
\boldsymbol{p}_k^\pi}^{1+} H_{\boldsymbol{q}_k^\pi}^{1/2+}}
\end{equation*}
Note that since the transformation \eqref{E:PC10} is $\epsilon$-dependent,
typically 
\begin{equation*}
\| \tilde f_{N,\pi}^{(k)} \|_{X_{\pi}} \not\sim \| \tilde f_{N,\pi}^{(k)}
\|_{X_I}
\end{equation*}
meaning the the comparability bounds are not uniform in $\epsilon$, and in
fact there are examples of functions $\tilde f_{N,\pi}^{(k)}$ for which $\|
\tilde f_{N,\pi}^{(k)} \|_{X_{\pi}}$ remains bounded as $N\to \infty$ while $%
\| \tilde f_{N,\pi}^{(k)} \|_{X_I}\to +\infty$ as $N\to \infty$. An example
is easily given in the case $\pi=(12)$ using quasi-free states.

\begin{definition}
A density $\tilde f_{N}^{(k)}(\boldsymbol{x}_k, \boldsymbol{\xi}_k)$ is
called \emph{quasi-free} if 
\begin{equation}  \label{E:PC11}
\tilde f_N^{(k)} = \sum_{\pi \in S^k} \tilde f_{N,\pi}^{(k)}
\end{equation}
with 
\begin{equation*}
\tilde f_{N,\pi}^{(k)}(\boldsymbol{x}_k, \boldsymbol{\xi}_k) = \prod_{j=1}^k
\tilde g_0(p_j^\pi,q_j^\pi)
\end{equation*}
for some $\tilde g_0$ independent of $N$ (and thus independent of $\epsilon$%
). A density $\tilde f_{N}^{(k)}(\boldsymbol{x}_k, \boldsymbol{\xi}_k)$ is 
\emph{generalized quasi-free} if \eqref{E:PC11} holds with 
\begin{equation}  \label{E:PC12}
\tilde f_{N,\pi}^{(k)}(\boldsymbol{x}_k, \boldsymbol{\xi}_k) = \tilde
g_{\pi}^{(k)} (\boldsymbol{p}_k^\pi,\boldsymbol{q}_k^\pi)
\end{equation}
where $\tilde g_\pi^{(k)}$ is independent of $N$ (and $\epsilon$). In other
words, it is quasi-free but we do not assume factorization in the $(%
\boldsymbol{p}_k, \boldsymbol{q}_k)$ coordinates.
\end{definition}

\begin{example}[2-cycles]
\label{EX:2cycles} An important case is when $\pi\in S_k$ contains a
2-cycle. To simplify matters, let us consider $k=2$ and $\pi=(12)$. Then 
\begin{equation*}
\begin{aligned} & p_1 &= \tfrac12(x_1+x_2) + \tfrac12\epsilon(\xi_1-\xi_2) &
\qquad & q_1 &= \tfrac12(\xi_1+\xi_2) + \tfrac12(x_1-x_2)/\epsilon \\ & p_2
&= \tfrac12(x_2+x_1) + \tfrac12\epsilon(\xi_2-\xi_1) & \qquad & q_2 &=
\tfrac12(\xi_2+\xi_1) + \tfrac12(x_2-x_1)/\epsilon \end{aligned}
\end{equation*}
The conversion $(x_1,x_2,\xi_1,\xi_2) \leftrightarrow (p_1,p_2,q_1,q_2)$ is
clearly $\epsilon$-dependent. However, the conversion $(p_1,p_2,q_1,q_2)
\leftrightarrow (y_{12},y_1, \mu_{12},\mu_1)$ is $\epsilon$ independent,
where 
\begin{equation*}
\begin{aligned} & y_{12} = p_1+p_2 = x_1+x_2 & \qquad & \mu_{12} = q_1+q_2 =
\xi_1+\xi_2 \\ & y_1 = q_1-q_2 = (x_1-x_2)/\epsilon & \qquad & \mu_1 =
p_1-p_2 = \epsilon(\xi_1-\xi_2) \\ \end{aligned}
\end{equation*}
In the coordinates $(y_{12},y_1, \mu_{12}, \mu_1)$ the geometrical
distortion is easier to see. Indeed, if 
\begin{equation*}
|y_{12}|\lesssim 1 \, \qquad |y_1| \lesssim 1 \,, \qquad |\mu_{12}|\lesssim
1 \,, \quad |\mu_1| \lesssim 1
\end{equation*}
then 
\begin{equation*}
| x_1+x_2| \lesssim 1 \,, \qquad |x_1-x_2| \lesssim \epsilon \,, \quad
|\xi_1+\xi_2|\lesssim 1 \,, \qquad |\xi_1-\xi_2| \lesssim \epsilon^{-1}
\end{equation*}
In particular, it is possible that $|\xi_j|\sim \epsilon^{-1}$ for either $%
j=1$ and/or $j=2$. And if $\eta_1, \eta_2$ denote the Fourier dual variables
to $x_1,x_2$, then likewise the induced effective support properties are 
\begin{equation*}
|\eta_1+\eta_2| \lesssim 1 \,, \qquad |\eta_1-\eta_2|\lesssim \epsilon^{-1}
\end{equation*}
Let us write 
\begin{equation*}
\tilde f_{N,(12)}^{(2)}(x_1,x_2,\xi_1,\xi_2) = \tilde
g_{N,(12)}^{(2)}(p_1,p_2, q_1,q_2) = \tilde h_{N,(12)}^{(2)}(y_{12},y_1,
\mu_{12},\mu_1)
\end{equation*}
Now suppose that $\tilde g_{N,(12)}^{(2)}$ is taken to be a smooth compactly
supported function \emph{independent of $N$ (and thus $\epsilon=N^{-1/3}$)}.
It follows that $\tilde h_{N,(12)}^{(2)}$ is also a smooth compactly
supported function \emph{independent of $N$ (and thus $\epsilon=N^{-1/3}$)}.
However, the support of $\tilde f_{N,(12)}^{(2)}$ will vary with $\epsilon$,
and as a result of the tight separation of $x_1,x_2$, derivatives in $x_1$
or $x_2$ will generate $\epsilon^{-1}$ factor losses. Even in an ideal
situation, where one has restricted the support of $\xi_1$, $\xi_2$, we have 
\begin{equation}  \label{E:irregularity}
\| \langle \xi_1 \rangle^{-\frac32-} \langle \xi_2\rangle^{-\frac32-}
|\nabla_{x_1}|^s |\nabla_{x_2}|^s \tilde f_{N,(12)}^{(2)} \|_{L_{\boldsymbol{%
x}_2}^2 L_{\boldsymbol{\xi}_2}^2} \sim \epsilon^{\frac32-2s}
\end{equation}
Thus, one has only that $\tilde f_{N,(12)}^{(2)}$ is uniformly bounded in $%
H_{\boldsymbol{x}_k}^{3/4}$, but grows in $H_{\boldsymbol{x}_k}^s$ for $%
s>\frac34$. In this sense, the term is \emph{irregular}. In fact, when we do
not restrict the $\xi$ support, the situation is even worse: 
\begin{equation*}
\| f_N^{(2)} \|_{X_I} \sim \epsilon^{-2-} \, \qquad \text{while} \qquad \|
f_N^{(2)} \|_{X_{(12)}} \sim 1
\end{equation*}
Thus it is essential to estimate such a density in the $X_{(12)}$ norm
rather than the $X_{I}$ norm.
\end{example}

The hypothesis of our main theorem is that the given densities can be, at
each time $t$, decomposed into a sum 
\begin{equation}  \label{E:PC01}
f_N^{(k)}(t) = \sum_{ \pi \in S_k } f_{N,\pi}^{(k)}(t)
\end{equation}
(where each $f_{N,\pi}^{(k)}(t)$ is not necessarily quasi-free) but there
exists a constant $C$ so that 
\begin{equation}  \label{E:PC01b}
\forall \; N\geq 0 , \; t\in [0,T]\,, \; \pi \in S_k \,, \quad \text{ we
have }\quad \| f_{N,\pi}(t) \|_{X_{\pi}} \leq C^k
\end{equation}
The term $f_{N,I}$ corresponding to $\pi=I =\text{Identity}$ is called the 
\emph{core}, and the analysis ultimately shows that it is the only term that
has a nontrivial limit as $N\to \infty$. It is not assumed that the
decomposition \eqref{E:PC01} is unique. It should be noted that %
\eqref{E:PC01b} holds when $f_N^{(k)}(t)$ is generalized quasi-free, where
the $g_{\pi}^{(k)}$ terms in \eqref{E:PC12} are allowed to be time dependent
but are assumed to be independent of $N$.

Another hypothesis is needed for the main theorem regarding $f_{N,\pi}^{(k)}$
when $\pi$ contains one or more $2$-cycles. It is a symmetry condition that
must hold on an $\epsilon^{1/2+}$-dense set of times. For example, if $%
\pi=(12)$, then 
\begin{equation}  \label{E:PC02}
f_{N,\pi}^{(k)}(t, x_1,x_2, \boldsymbol{x}_{k-3}, \xi_1,\xi_2, \boldsymbol{%
\xi}_{k-3}) = f_{N,\pi}^{(k)}(t, x_2,x_1, \boldsymbol{x}_{k-3}, \xi_2,\xi_1, 
\boldsymbol{\xi}_{k-3})
\end{equation}
In other words, $(x_1,x_2)$ and $(\xi_1,\xi_2)$ are simultaneously flipped
to $(x_2,x_1)$ and $(\xi_2,\xi_1)$ while other coordinates remain unchanged.
We note that this property is only required to hold on an $\epsilon^{1/2+}$
dense set of times. Specifically, there must exist a subset of times $%
\{t_u\} $ on which \eqref{E:PC02} holds with the property that for any $t\in
[0,T]$, there exists a $u$ such that $|t-t_u| \leq \epsilon^{1/2+}$. The
fact that \eqref{E:PC02} is not required to hold for all $t$ resolves the
trivial limit puzzle, as discussed in \S \ref{S3}, and there it is further
argued that in a collisional environment, \eqref{E:PC02} can only be
expected to hold on an $\epsilon$ dense set of times and typically (on a
time set of large measure) \eqref{E:PC02} does not hold.

Consider now the Schr\"odinger coordinates 
\begin{equation*}
y_j = \tfrac12 x_j-\tfrac12 \epsilon \xi_j \qquad y_j^{\prime} = \tfrac12
x_j+\tfrac12\epsilon \xi_j
\end{equation*}
Then 
\begin{equation*}
x_j = y_j^{\prime}+y_j \qquad \xi_j = (y_j^{\prime}-y_j)/\epsilon
\end{equation*}
whereas 
\begin{equation*}
p_j^\pi = y_j^{\prime}+y_{\pi(j)} \qquad q_j^\pi = (y_j^{\prime} -
y_{\pi(j)})/\epsilon
\end{equation*}
By the chain rule, 
\begin{equation*}
\nabla_{y_{\pi(j)}} = \nabla_{p_j} - \epsilon^{-1} \nabla_{q_j} \qquad
\nabla_{y_j^{\prime}} = \nabla_{p_j} + \epsilon^{-1} \nabla_{q_j}
\end{equation*}
By reindexing, 
\begin{align*}
\frac14 \epsilon \sum_{j=1}^k (\Delta_{y_j^{\prime}} - \Delta_{y_j}) &=
\frac14 \epsilon \sum_{j=1}^k (\Delta_{y_j^{\prime}} - \Delta_{y_{\pi(j)}})
\\
&= \frac14 \epsilon \sum_{j=1}^k
(\nabla_{y_j^{\prime}}+\nabla_{y_{\pi(j)}})\cdot(
\nabla_{y_j^{\prime}}-\nabla_{y_{\pi(j)}}) \\
&= \sum_{j=1}^k \nabla_{p_j} \cdot \nabla_{q_j}
\end{align*}
Therefore, the semiclassical propagators have the conversion 
\begin{equation*}
e^{\frac14 i \epsilon t (\Delta_{\boldsymbol{y}_k^{\prime}} - \Delta_{%
\boldsymbol{y}_k})} = e^{it \nabla_{\boldsymbol{p}_k}\cdot \nabla_{%
\boldsymbol{q}_k}}
\end{equation*}

\subsection{Estimates in permutation coordinates for $k=1$\label{S:PCE}}

In the case $\pi=(12)$, we have 
\begin{align*}
&p_1 = y_1^{\prime}+y_2 = \tfrac12(x_1+x_2) + \tfrac12\epsilon(\xi_1-\xi_2)
& \quad &q_1 = (y_1^{\prime}-y_2)/\epsilon = \tfrac12(\xi_1+\xi_2) +
\tfrac12(x_1-x_2)/\epsilon \\
&p_2 = y_2^{\prime}+y_1 = \tfrac12(x_2+x_1) + \tfrac12\epsilon(\xi_2-\xi_1)
& \quad &q_2 = (y_2^{\prime}-y_1)/\epsilon = \tfrac12(\xi_2+\xi_1) +
\tfrac12(x_2-x_1)/\epsilon
\end{align*}
equivalently 
\begin{align*}
&2y_1 = p_2 -\epsilon q_2 = x_1 - \epsilon\xi_1 & \qquad &2y_1^{\prime} =
p_1+\epsilon q_1 =x_1+\epsilon\xi_1 \\
&2y_2 = p_1 - \epsilon q_1 = x_2-\epsilon\xi_2 & \qquad &2y_2^{\prime} =
p_2+\epsilon q_2 = x_2+\epsilon \xi_2
\end{align*}
equivalently 
\begin{align*}
&x_1=y_1^{\prime}+y_1 = \tfrac12(p_1+p_2) + \tfrac12\epsilon(q_1-q_2) & 
\qquad &\xi_1 = (y_1^{\prime}-y_1)/\epsilon = \tfrac12(q_1+q_2) +
\tfrac12(p_1-p_2)/\epsilon \\
&x_2 = y_2^{\prime}+y_2 = \tfrac12(p_2+p_1) + \tfrac12\epsilon(q_2-q_1) & 
\qquad &\xi_2 = (y_2^{\prime}-y_2)/\epsilon = \tfrac12(q_2+q_1) +
\tfrac12(p_2-p_1)/\epsilon
\end{align*}

In the case $k=1$, the second Duhamel iterate \eqref{E:S402} takes the form 
\begin{equation}  \label{E:PCEi01}
\begin{aligned} f_N^{(1)}(t) = & S^{(1)}(t)f_N^{(1)}(0) + \mathcal{D}^{(1)}[
N\epsilon^{-1/2}B_\epsilon^{(2)} S^{(2)} f_N^{(2)}(0)] \\ &+
\mathcal{D}^{(1)}[N\epsilon^{-1/2}
B_\epsilon^{(2)}\mathcal{D}^{(2)}\epsilon^{-1/2}A_\epsilon^{(2)}f_N^{(2)}]
\\ &+ \mathcal{D}^{(1)}[N\epsilon^{-1/2} B_\epsilon^{(2)} \mathcal{D}^{(2)}
N\epsilon^{-1/2} B_\epsilon^{(3)}f_N^{(3)}] \end{aligned}
\end{equation}
In \S \ref{sec:CompactnessConvergence} we need to estimate the weak pairing
of $f_N^{(k)}(t)-f_N^{(k)}(s)$ by $|t-s|^{\alpha}$, and in \S \ref%
{sec:Convergence}, we need to estimate the weak pairing of $%
f_N^{(k)}(t)-f^{(k)}(t)$, where $f^{(k)}(t)$ is the weak limit as $N\to
\infty$, by $\epsilon^{0+}$ (the relevant topologies are defined in \S \ref%
{sec:CompactnessConvergence}). Recall also that 
\begin{equation*}
f_N^{(k)} = \sum_{\pi \in S^k} f_{N,\pi}^{(k)}
\end{equation*}
Thus in \eqref{E:PCEi01}, the right side involves 
\begin{equation*}
f_N^{(2)}(t) = f_{N,\text{Id}}^{(2)}(t) + f_{N,(12)}^{(2)}(t)
\end{equation*}
(2 terms) and 
\begin{equation*}
f_N^{(3)}(t) = f_{N,\text{Id}}^{(3)}(t) + f_{N,(12)}^{(3)}(t) +
f_{N,(13)}^{(3)}(t) + f_{N,(23)}^{(3)}(t) + f_{N,(123)}^{(3)}(t) +
f_{N,(213)}^{(3)}(t)
\end{equation*}
(6 terms). The components $f_{N,\text{Id}}^{(2)}(t)$ and $f_{N,\text{Id}%
}^{(3)}(t)$ are the core terms, and they are estimated, using the estimates
in the earlier subsections of \S \ref{S:preparation}, for general $k$ as
explained in \S \ref{sec:CompactnessConvergence}-\ref{sec:Convergence}. In
this section, we explain that the term $f_{N,(12)}^{(2)}(t)$ gives
negligible contribution, as $N\to \infty$, in the terms of \eqref{E:PCEi01}.
Specifically, for a fixed Schwartz class function $J(x_1,v_1)$, we consider 
\begin{equation*}
\text{III}_{L}(t) = \langle J, \mathcal{D}^{(1)}[
N\epsilon^{-1/2}B_\epsilon^{(2)} S^{(2)} f_{N,(12)}^{(2)}(0)](t) \rangle
\end{equation*}
in Corollary \ref{C:12linear} and 
\begin{equation*}
\text{IV}(t) = \langle J, \mathcal{D}^{(1)}[N\epsilon^{-1/2} B_\epsilon^{(2)}%
\mathcal{D}^{(2)}\epsilon^{-1/2}A_\epsilon^{(2)}f_{N,(12)}^{(2)}])(t) \rangle
\end{equation*}
in Proposition \ref{P:termIV}. Term $\text{III}(t)$ (which is $\text{III}%
_L(t)$ without the inner linear propagator) defined below, is a template
that is used in both Corollary \ref{C:12linear} and Proposition \ref%
{P:termIV}, and also used to explain the trivial limit puzzle in Remark \ref%
{R:TLP}.

First, we consider the form of the operator $\tilde B_{1,2}^\epsilon$ acting
on a $2$-density and returning a $1$-density, but reexpressed in terms of $%
(p,q)$ coordinates when $\pi=(12)$. With, as usual, 
\begin{equation*}
\tilde F_{N,(12)}^{(2)}(t^{\prime},x_1,x_2,\xi_1,\xi_2) = \tilde
G_{N,(12)}^{(2)}(t^{\prime},p_1,p_2,q_1,q_2)
\end{equation*}
we have

\begin{lemma}[$\protect\pi=(12)$ form for $B$]
\label{L:PCE1} 
\begin{align}
\text{III}(t) &= \int_{t^{\prime}=0}^t \int_{x_1,\xi_1} \tilde J(x_1,\xi_1)
e^{i(t-t^{\prime})\nabla_{x_1}\cdot \nabla_{\xi_1}} (N\epsilon^{-1/2} \tilde
B_{1,2}^\epsilon \tilde F_{N,(12)}^{(2)})(t^{\prime},x_1,\xi_1) \, dx_1 \,
d\xi_1 \, dt^{\prime}  \notag \\
& = \epsilon^{-1/2} \int_{t^{\prime}=0}^t \int_{p_1,q_1,q_2} \mathcal{J}%
(t-t^{\prime}, p_1,q_1+q_2) (\phi(q_2)-\phi(q_1))  \label{E:PCE04} \\
& \qquad\qquad \qquad \tilde G_{N,(12)}^{(2)}(t^{\prime},p_1+\epsilon q_2,
p_1-\epsilon q_1, q_1,q_2) \, dp_1 \, dq_1 \, dq_2 \, dt^{\prime}  \notag
\end{align}
where 
\begin{equation}  \label{E:PCE18}
\mathcal{J}(t-t^{\prime},p_1,q_1+q_2) = \int_{v_1} e^{iv_1(q_1+q_2)}
e^{i(t-t^{\prime})\epsilon\Delta_{p_1}} J(p_1+(t-t^{\prime})v_1,v_1)\, dv_1
\end{equation}
For any $n\geq 0$, assuming $|t|\lesssim 1$, 
\begin{equation}  \label{E:PCE05}
\left| \mathcal{J}(t-t^{\prime},p_1,q_1+q_2) \right| \lesssim_n \langle
p_1\rangle^{-n} \langle q_1+q_2 \rangle^{-n}
\end{equation}
\end{lemma}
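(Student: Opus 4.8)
The plan is to compute $\text{III}(t)$ directly by substituting the $(p,q)$-representation of $\tilde B_{1,2}^\epsilon$ and then changing variables, carefully tracking all $\epsilon$ factors. First I would write out $N\epsilon^{-1/2}\tilde B_{1,2}^\epsilon$ acting on a $2$-density using \eqref{E:S3-05}: recall that $[N\epsilon^{-1/2}\tilde B_{1,2}^\epsilon \tilde F^{(2)}](t',x_1,\xi_1) = -iN\epsilon^{-1/2}\sum_{\sigma=\pm1}\sigma\int_{x_2}\phi(\tfrac{x_1-x_2}{\epsilon}+\tfrac{\sigma}{2}\xi_1)\tilde F^{(2)}(t',x_1,x_2,\xi_1,0)\,dx_2$, and use $N\epsilon^{5/2}=\epsilon^{-1/2}$ after rescaling. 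I would then rewrite $\tilde F_{N,(12)}^{(2)}$ in terms of $\tilde G_{N,(12)}^{(2)}$ using the explicit change of variables displayed just before the lemma statement: $x_1 = \tfrac12(p_1+p_2)+\tfrac12\epsilon(q_1-q_2)$, etc. The key observation, which explains why the \emph{difference} $\phi(q_2)-\phi(q_1)$ appears, is that the two values $\sigma=\pm1$ in $B_{1,2}^\epsilon$ correspond to two arguments of $\phi$ that, after the coordinate change and using $\xi_2=0$, become $q_2$ and $-q_1$ respectively (up to the sign convention in $\phi$); since $\phi$ is radial, $\phi(-q_1)=\phi(q_1)$, so the $\sigma$-sum collapses to $\phi(q_2)-\phi(q_1)$. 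The factor $i$ in front combines with the $\sigma$ sum to produce the real combination.

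Next I would handle the outer Duhamel operator and the pairing against $\tilde J$. The propagator in the $(x,\xi)$ representation is $e^{i(t-t')\nabla_{x_1}\cdot\nabla_{\xi_1}}$, which acts on $\tilde J(x_1,\xi_1)$. I would transfer this propagator onto $\tilde J$ via the pairing (using that the propagator is, up to complex conjugation, unitary/formally self-adjoint in the appropriate sense), and then pass to the $v_1$ (velocity) side: writing $\tilde J(x_1,\xi_1) = \int_{v_1} e^{iv_1\xi_1}J(x_1,v_1)\,dv_1$, the action of $e^{i(t-t')\nabla_{x_1}\cdot\nabla_{\xi_1}}$ becomes the free transport $x_1 \mapsto x_1+(t-t')v_1$ together with a residual semiclassical Schr\"odinger factor $e^{i(t-t')\epsilon\Delta_{p_1}}$ coming from the $\epsilon$-shift in the coordinate change between $x_1$ and $p_1$. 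Collecting terms and identifying $\xi_1 = \tfrac12(q_1+q_2)+\tfrac12(p_1-p_2)/\epsilon$ — but the oscillatory factor only sees the $q_1+q_2$ part once one integrates out appropriately, while the $(p_1-p_2)/\epsilon$ part gets absorbed in the change $p_2 \mapsto p_1 \pm \epsilon q_i$ — yields exactly \eqref{E:PCE04} with $\mathcal{J}$ defined by \eqref{E:PCE18}.

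For the decay bound \eqref{E:PCE05}, I would argue as follows. Since $J$ is Schwartz, $J(p_1+(t-t')v_1,v_1)$ decays rapidly in both arguments; integrating $e^{iv_1(q_1+q_2)}$ against it and integrating by parts repeatedly in $v_1$ gives arbitrary polynomial decay in $q_1+q_2$, using $|t-t'|\lesssim 1$ to control the Jacobian factors $(1+|t-t'|)^n$ that appear. The operator $e^{i(t-t')\epsilon\Delta_{p_1}}$ is an $L^\infty$-bounded Fourier multiplier uniformly in $\epsilon$ and $|t-t'|\lesssim 1$, and more precisely it preserves Schwartz decay in $p_1$ with constants uniform in $\epsilon\in(0,1]$ (this is the standard dispersive smoothing/stationary phase estimate for the free Schr\"odinger group applied to a Schwartz function, noting $\epsilon(t-t')$ stays bounded); hence the rapid decay in $p_1$ of $\int_{v_1}e^{iv_1(q_1+q_2)}J(p_1+(t-t')v_1,v_1)\,dv_1$ is preserved. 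Combining gives $|\mathcal{J}|\lesssim_n \langle p_1\rangle^{-n}\langle q_1+q_2\rangle^{-n}$.

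The main obstacle I anticipate is the bookkeeping in the coordinate change: correctly tracking how the $\epsilon$-dependent substitution $x_1,x_2 \leftrightarrow p_1,p_2,q_1,q_2$ (with $\xi_2=0$ imposed by the $B$ operator) interacts with the oscillatory phase $e^{iv_1\xi_1}$ and the propagator, so that the residual oscillation is precisely the harmless $e^{i(t-t')\epsilon\Delta_{p_1}}$ and the $\epsilon^{-1}$-scale oscillation in $(p_1-p_2)/\epsilon$ is genuinely absorbed rather than left as a dangerous rapidly-oscillating term. A secondary subtlety is verifying uniformity in $\epsilon$ of the Schwartz-preservation under $e^{i(t-t')\epsilon\Delta_{p_1}}$, which requires that we only claim decay (not that the bound is $\epsilon$-independent in the number of derivatives hitting $J$) — but since $J$ is fixed and Schwartz, this is fine.
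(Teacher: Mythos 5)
Your proposal is correct and follows essentially the same route as the paper's proof: convert to the $(p_1,q_1,q_2)$ coordinates using the constraint $\xi_2=0$ (so that the $\epsilon^{-1}$-scale part of $\xi_1$ collapses to $q_1+q_2$), use radiality of $\phi$ to combine the $\sigma=\pm1$ terms into $\phi(q_2)-\phi(q_1)$, write $\tilde J$ as a Fourier integral in $v_1$ so the hyperbolic propagator splits into free transport in $p_1$ plus the residual factor $e^{i(t-t')\epsilon\Delta_{p_1}}$ (with the $\epsilon^{-1}(\Delta_{q_1}-\Delta_{q_2})$ contributions cancelling on the phase $e^{iv_1(q_1+q_2)}$), and obtain \eqref{E:PCE05} by transferring derivatives onto the Schwartz test function. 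The bookkeeping concerns you raise are exactly the ones the paper resolves by the explicit Jacobian $dx_1\,dx_2\,d\xi_1=2\epsilon^3\,dp_1\,dq_1\,dq_2$ and the identity $4\nabla_{x_1}\cdot\nabla_{\xi_1}=\epsilon\Delta_{p_1}+2\nabla_{p_1}\cdot\nabla_{q_1}+\epsilon^{-1}\Delta_{q_1}-\epsilon^{-1}\Delta_{q_2}$.
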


\begin{proof}
The $\tilde B_{1,2}^\epsilon$ introduces an integral over $x_2$ and also
assigns $\xi_2=0$, equivalently 
\begin{equation}  \label{E:PCE01}
y_2=y_2^{\prime}
\end{equation}
equivalently 
\begin{equation*}
p_2=p_1-\epsilon(q_1+q_2)
\end{equation*}
We convert coordinates 
\begin{equation*}
(x_1,x_2,\xi_1) \leftrightarrow (p_1,q_1,q_2)
\end{equation*}
The differential conversion is 
\begin{equation*}
dx_1 \, dx_2 \, d\xi_1 = 2\epsilon^3 dp_1 \, dq_1 \, dq_2
\end{equation*}
and in this setting, 
\begin{equation*}
4\nabla_{x_1}\cdot \nabla_{\xi_1} = \epsilon \Delta_{p_1} + 2\cdot
\nabla_{p_1}\cdot \nabla_{q_1} + \epsilon^{-1}\Delta_{q_1} -
\epsilon^{-1}\Delta_{q_2}
\end{equation*}
In the $\tilde B_{1,2}^\epsilon$ operator, the inner potential terms are
evaluated at 
\begin{equation*}
\frac{y_1-y_2}{\epsilon} = \frac{y_1-y_2^{\prime}}{\epsilon} = -q_2 \,,
\qquad \frac{y_1^{\prime}-y_2^{\prime}}{\epsilon} = \frac{y_1^{\prime}-y_2}{%
\epsilon} = q_1
\end{equation*}
where the restriction \eqref{E:PCE01} is employed, and thus the potential
term is $\phi(-q_2) - \phi(q_1)$. Since $\phi$ is radial, $%
\phi(-q_2)=\phi(q_2)$. On the outside of the propagator, the test function
is evaluated at $(x_1,\xi_1)$, which converts as 
\begin{equation*}
\tilde J(x_1,\xi_1) = \tilde J(p_1-\epsilon q_2, q_1+q_2)
\end{equation*}
This yields the formula 
\begin{equation}  \label{E:PCE03}
\begin{aligned} \text{III}(t) & = \epsilon^{-1/2}\int_{t^{\prime}=0}^t
\int_{p_1,q_1,q_2} \tilde J(p_1, q_1+q_2) e^{i(t-t^{\prime})(\epsilon
\Delta_{p_1} + 2\cdot \nabla_{p_1}\cdot \nabla_{q_1} +
\epsilon^{-1}\Delta_{q_1} - \epsilon^{-1}\Delta_{q_2}) } \\ &
\qquad\qquad\qquad (\phi(q_2)-\phi(q_1)) \tilde
G_{N,(12)}^{(2)}(t^{\prime},p_1+\epsilon q_2 ,p_1-\epsilon q_1, q_1,q_2) \,
dp_1 \, dq_1 \,dq_2 \, dt^{\prime} \end{aligned}
\end{equation}
after shifting $p_1 \mapsto p_1+\epsilon q_2$. To proceed to \eqref{E:PCE04}%
, we need to write 
\begin{equation*}
\tilde J(p_1-\epsilon q_2, q_1+q_2)= \int J(p_1-\epsilon q_2, v_1)
e^{iv_1(q_1+q_2)} \, dv_1
\end{equation*}
and upon substitution into \eqref{E:PCE03}, we obtain 
\begin{align*}
\text{III}(t) &= \epsilon^{-1/2} \int_{p_1,v_1} J(p_1,v_1) \int_{q_1,q_2}
e^{iv_1(q_1+q_2)} e^{i(t-t^{\prime})(\epsilon\Delta_{p_1} +
2\nabla_{p_1}\cdot \nabla_{q_1} +
\epsilon^{-1}\Delta_{q_1}-\epsilon^{-1}\Delta_{q_2})} \\
&\qquad \qquad (\phi(q_2)-\phi(q_1)) \tilde
G_{N,(12)}^{(2)}(t^{\prime},p_1+\epsilon q_2, p_1-\epsilon q_1, q_1,q_2) \,
dq_1 \, dq_2 \, dp_1\, dv_1 \\
&= \epsilon^{-1/2} \int_{p_1,v_1} J(p_1,v_1) \int_{q_1,q_2}
e^{iv_1(q_1+q_2)} e^{i(t-t^{\prime})\epsilon\Delta_{p_1}}
e^{-2(t-t^{\prime})v_1\cdot \nabla_{p_1}} \\
&\qquad \qquad (\phi(q_2)-\phi(q_1)) \tilde G_{N,(12)}^{(2)}(t^{\prime},
p_1+\epsilon q_2, p_1-\epsilon q_1, q_1,q_2) \, dq_1 \, dq_2 \, dp_1\, dv_1
\\
&= \epsilon^{-1/2} \int_{p_1,v_1} e^{i(t-t^{\prime})\epsilon\Delta_{p_1}}
J(p_1+(t-t^{\prime})v_1,v_1) \int_{q_1,q_2} e^{iv_1(q_1+q_2)} \\
&\qquad \qquad (\phi(q_2)-\phi(q_1)) \tilde
G_{N,(12)}^{(2)}(t^{\prime},p_1+\epsilon q_2, p_1-\epsilon q_1, q_1,q_2) \,
dq_1 \, dq_2 \, dp_1\, dv_1
\end{align*}
which results in \eqref{E:PCE04}. Straightforward estimates resulting from
transferring derivatives to the test function $J$ yield \eqref{E:PCE05}.
\end{proof}

The following is an analogue of Lemma \ref{L:basic-B} (for $k=1$ and in weak
form) for the case $\pi=(12)$.

\begin{corollary}[$\protect\pi=(12)$ estimate for $B$ with symmetry
assumption]
\label{C:12nonlin} Let $E_2$ be the symmetry remainder: 
\begin{equation}  \label{E:PCE06c}
E_2(t,p_1,p_2,q_1,q_2) := \tilde G_{N,(12)}^{(2)}(t,p_1,p_2,q_1,q_2) -
\tilde G_{N,(12)}^{(2)}(t, p_2,p_1,q_2,q_1)
\end{equation}
Suppose that there exists $\mu\geq 0$ and $\alpha>0$ such that, for all $%
T_1<T_2$, 
\begin{equation}  \label{E:PCE06}
\begin{aligned} \hspace{0.3in}&\hspace{-0.3in} \|\langle \boldsymbol{p}_2
\rangle^{-n} \langle \boldsymbol{q}_2 \rangle^{-n} \langle \nabla_{\bds p_2}
\rangle^{\frac34+}
E_2(t,\boldsymbol{p}_2,\boldsymbol{q}_2)\|_{L_{t\in{[T_1,T_2]}}^1 L^2_{\bds
p_2 \bds q_2}} \\ &\lesssim \epsilon^\mu (T_2-T_1)^\alpha \| \langle
\nabla_{p_1 }\rangle^{1+} \langle \nabla_{p_2 }\rangle^{1+} \tilde
G_{N,(12)}^{(2)}(t^{\prime},p_1,p_2, q_1, q_2) \|_{L_t^\infty
L^2_{p_1p_2q_1q_2}} \end{aligned}
\end{equation}
Then the quantity $\text{III}(t)$ from Lemma \ref{L:PCE1} is estimated as 
\begin{equation}  \label{E:PCE07}
\text{III}(t) \lesssim (\epsilon^{\mu-\frac12}t^\alpha + \epsilon^{0+}t) \|
\langle \nabla_{p_1 }\rangle^{1+} \langle \nabla_{p_2 }\rangle^{1+} \tilde
G_{N,(12)}^{(2)}(t^{\prime},p_1,p_2, q_1, q_2) \|_{L_t^\infty
L^2_{p_1p_2q_1q_2}}
\end{equation}
The value of $\mu$ is addressed in the remark below.
\end{corollary}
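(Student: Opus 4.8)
The plan is to take the formula \eqref{E:PCE04} for $\text{III}(t)$ from Lemma \ref{L:PCE1} and split it into two pieces according to the symmetric/antisymmetric parts of $\tilde G_{N,(12)}^{(2)}$ under the simultaneous exchange $(p_1,p_2,q_1,q_2)\mapsto(p_2,p_1,q_2,q_1)$. Write $\tilde G_{N,(12)}^{(2)} = \tfrac12(\tilde G_{N,(12)}^{(2)} + T\tilde G_{N,(12)}^{(2)}) + \tfrac12 E_2$ where $T$ denotes that exchange and $E_2$ is the symmetry remainder \eqref{E:PCE06c}. The antisymmetric contribution (the $E_2$ part) will be controlled directly by the hypothesis \eqref{E:PCE06}: in \eqref{E:PCE04} we have the prefactor $\epsilon^{-1/2}$, the kernel $\mathcal{J}(t-t',p_1,q_1+q_2)$ which by \eqref{E:PCE05} decays rapidly in $p_1$ and in $q_1+q_2$, and the Schwartz-class potential difference $\phi(q_2)-\phi(q_1)$. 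The plan is to pair $\mathcal{J}$'s decay against positive weights $\langle p_1\rangle^{-n}\langle q_1+q_2\rangle^{-n}$, use Cauchy–Schwarz in $(p_1,q_1,q_2)$, absorb the $\phi$ factors and the weights $\langle \boldsymbol{p}_2\rangle^{-n}\langle \boldsymbol{q}_2\rangle^{-n}$ into an $L^2$ norm, and integrate in $t'$; the $L^1_{t'}$ structure matches the left-hand side of \eqref{E:PCE06} exactly, producing the $\epsilon^{\mu-1/2}t^\alpha$ term after the $\epsilon^{-1/2}$ prefactor meets $\epsilon^\mu$.

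For the symmetric part $G^{\mathrm{sym}} := \tfrac12(\tilde G_{N,(12)}^{(2)} + T\tilde G_{N,(12)}^{(2)})$, the key point is a cancellation: when one substitutes $G^{\mathrm{sym}}$ into \eqref{E:PCE04} and performs the change of variables $(p_1,q_1,q_2)\mapsto(p_1,q_2,q_1)$ (which is available because the evaluation arguments $(p_1+\epsilon q_2,p_1-\epsilon q_1,q_1,q_2)$ get sent to $(p_1+\epsilon q_1,p_1-\epsilon q_2,q_2,q_1)$, which after the shift $p_1\mapsto p_1 + \epsilon(q_1-q_2)$ — exploiting that $\mathcal{J}$'s second argument $q_1+q_2$ is symmetric and $p_1$-translation is harmless up to the rapidly-decaying kernel — matches the $T$-image), the potential factor $\phi(q_2)-\phi(q_1)$ flips sign. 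So half the symmetric contribution cancels the other half, up to commutator errors coming from the $p_1$-shift inside $\mathcal{J}$ and inside $\tilde G^{(2)}_{N,(12)}$. Those commutator errors each come with an extra factor of $\epsilon$ from Taylor expansion (since the shift is $O(\epsilon)$), which after dividing by the $\epsilon^{-1/2}$ prefactor leaves $\epsilon^{1/2-} = \epsilon^{0+}$; combined with a single $t'$-integral over $[0,t]$ this yields the $\epsilon^{0+}t$ term in \eqref{E:PCE07}. This is essentially the mechanism already used in Lemma \ref{L:basic-B}, now adapted to permutation coordinates, and indeed the right-hand side norm $\|\langle\nabla_{p_1}\rangle^{1+}\langle\nabla_{p_2}\rangle^{1+}\tilde G_{N,(12)}^{(2)}\|_{L_t^\infty L^2_{p_1p_2q_1q_2}}$ is exactly the $X_{(12)}$-type norm, so the $1+$ derivatives are what is needed to absorb the weights generated when bringing the $q$-integrations inside.

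I expect the main obstacle to be the careful bookkeeping of the symmetric-part cancellation: one must verify that the change of variables genuinely maps the integrand for $G^{\mathrm{sym}}$ to (minus) itself \emph{after} accounting for the $p_1$-shift that is hidden inside both $\mathcal{J}$ (via the propagator $e^{i(t-t')\epsilon\Delta_{p_1}}$ and the translation $J(p_1+(t-t')v_1,v_1)$) and inside the evaluation point of $\tilde G_{N,(12)}^{(2)}$, and then estimate each commutator remainder with the right power of $\epsilon$. The rapid decay \eqref{E:PCE05} is what makes the $p_1$-translations cheap, but one has to be honest that the propagator $e^{i(t-t')\epsilon\Delta_{p_1}}$ does not literally commute with the translation unless one first Taylor-expands and absorbs an $\epsilon$; tracking that this always produces $\epsilon^{+}$ and never $\epsilon^{-}$ is the delicate part. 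A secondary, more routine obstacle is confirming the $q_1,q_2$ integrations converge after inserting the Schwartz factors $\phi$ and the weights $\langle q_1+q_2\rangle^{-n}$ — here one splits into the regimes $|q_1|\sim|q_2|$, $|q_1|\ll|q_2|$, $|q_1|\gg|q_2|$ exactly as in the proof of Lemma \ref{L:basic-B} and in Case~3 of Proposition \ref{P:R4Estimates}, so no new idea is required. Finally, the value of $\mu$: under the bare symmetry-measurement bound \eqref{energy bound:cycle symmetry} one has $\mu=\tfrac12$, giving an $O(1)$ (non-decaying) first term, which is precisely the $O(1)$ $B$-term that makes \eqref{hierachy:PreBBGKYinDuhamel} unbalanced; the jitter hypothesis \eqref{energy bound:high freq H^1 goodness} upgrades this to $\mu>\tfrac12$ on an $\epsilon^{1/2}$-dense set of times, which is what ultimately forces the corresponding $R_N^{3(k+1)}$ term in \eqref{hierarchy:RealBBGKY} to vanish — this is the remark referred to at the end of the statement.
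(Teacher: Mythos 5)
Your two mechanisms are the right ones and match the paper's: the antisymmetric part is fed into the hypothesis \eqref{E:PCE06}, and the remaining ``symmetric'' mismatch is an $O(\epsilon)$ translation of the evaluation points $p_1\pm\epsilon q_j$ that is Taylor-expanded (the paper does this via the fundamental theorem of calculus in an interpolation parameter $\theta$, producing $\epsilon\,(q_2\cdot\nabla_{p_1}-q_1\cdot\nabla_{p_2})\tilde G$). Incidentally, the commutator worry about the propagator is a non-issue here: in \eqref{E:PCE04} the propagator is already absorbed into $\mathcal{J}(t-t',p_1,q_1+q_2)$, whose only $q$-dependence is through the symmetric combination $q_1+q_2$, so the swap $q_1\leftrightarrow q_2$ costs nothing there and only flips the sign of $\phi(q_2)-\phi(q_1)$.

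The genuine gap is the absence of the frequency decomposition of $(p_1,p_2)$ (dual variables $u_1,u_2$) into the four regions $P_{1<2,L}$, $P_{1<2,H}$, $P_{1>2,L}$, $P_{1>2,H}$ according to which of $|u_1|,|u_2|$ is smaller and whether the minimum exceeds $\epsilon^{-1}$. Without it the derivative bookkeeping does not close. The integrand is evaluated at $(p_1+\epsilon q_2,\,p_1-\epsilon q_1,\,q_1,q_2)$ with $L^2$ taken only in the single variable $p_1$, so to reach the norm $\|\langle\nabla_{p_1}\rangle^{1+}\langle\nabla_{p_2}\rangle^{1+}\tilde G\|_{L^2_{p_1p_2q_1q_2}}$ one must sup out one $p$-slot and pay $\tfrac32+$ derivatives on it by Sobolev embedding; adding the full derivative produced by your Taylor expansion gives $\tfrac52+$ derivatives total, exceeding the available $2+$. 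The paper resolves this only because (i) on the low-frequency region $\min(|u_1|,|u_2|)\leq\epsilon^{-1}$ the excess $\tfrac12-$ derivatives can be sold back for $\epsilon^{-\frac12+}$ (which is also why the net gain is $\epsilon^{0+}$ rather than the $\epsilon^{1/2-}$ you quote), and (ii) on the high-frequency region the symmetry/Taylor argument is abandoned entirely in favor of $1\leq(\epsilon|\nabla_{p_1}|)^{\frac12+}$, which buys $\epsilon^{\frac12+}$ for only $\tfrac12+$ derivatives; in both regions the ordering $|u_1|\leq|u_2|$ is what permits redistributing derivatives between the two slots. The same localization is needed for your $E_2$ term: the hypothesis \eqref{E:PCE06} supplies only $\langle\nabla_{\boldsymbol{p}_2}\rangle^{\frac34+}$, so the $\tfrac32+$ derivatives demanded by the sup-out must be split as $\tfrac34+$ on each slot, which again requires knowing which frequency dominates. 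As written, your argument would need $\langle\nabla_{\boldsymbol{p}_2}\rangle^{\frac32+}E_2$ in the hypothesis and more than $H^{1+}$ regularity on $\tilde G$, neither of which is available.
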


\begin{proof}
Let $u_j$ be the frequency variable corresponding to $p_j$ under the Fourier
transform. Introduce the partition of $(p_1,p_2)$ space 
\begin{equation}  \label{E:PCE19}
I = P_{1<2,L} + P_{1<2,H} + P_{1>2,L} + P_{1>2,H}
\end{equation}
according to frequencies, where

\begin{itemize}
\item $P_{1<2,L}$ is the projection onto the frequency set $|u_1| < |u_2|$
and $\min(|u_1|, |u_2|) = |u_1|< \epsilon^{-1}$.

\item $P_{1<2,H}$ is the projection onto the frequency set $|u_1| < |u_2|$
and $\min(|u_1|, |u_2|) = |u_1|> \epsilon^{-1}$
\end{itemize}

and analogously define $P_{1>2,L}$ and $P_{1>2,H}$. In \eqref{E:PCE04},
insert the decomposition \eqref{E:PCE19} on $\tilde G_{N,(12)}^{(2)}$ to
obtain 
\begin{equation*}
\text{III}(t) = \text{III}_{1<2,L}(t) + \text{III}_{1<2,H}(t) + \text{III}%
_{2<1,L}(t) + \text{III}_{2<1,H}(t)
\end{equation*}
The treatment of the last two terms is completely analogous to the first two
terms, so we will only address the first two terms, $\text{III}_{1<2,L}(t)$
and $\text{III}_{1<2,H}(t)$.

For $\text{III}_{1<2,H}(t)$, we do not need to use the symmetry assumption %
\eqref{E:PCE06c} since we can effectively use that $1 \leq
\epsilon|\nabla_{p_1}|$ to gain $\epsilon^{\frac12+}$ at the expense of $%
\frac12+$ derivatives, which we now describe. By \eqref{E:PCE05} and the
bound $|\phi(q)| \lesssim \langle q \rangle^{-2n}$, we obtain 
\begin{equation}  \label{E:PCE20}
|\mathcal{J}(t-t^{\prime}, p_1,q_1+q_2) \phi(q_1)| \lesssim_n \langle p_1
\rangle^{-n} \langle q_1+q_2 \rangle^{-n} \langle q_1 \rangle^{-2n} \lesssim
\langle p_1 \rangle^{-n} \langle q_1 \rangle^{-n} \langle q_2 \rangle^{-n}
\end{equation}
and similarly 
\begin{equation}  \label{E:PCE21}
|\mathcal{J}(t-t^{\prime}, p_1,q_1+q_2) \phi(q_2)| \lesssim_n \langle p_1
\rangle^{-n} \langle q_1+q_2 \rangle^{-n} \langle q_2 \rangle^{-2n} \lesssim
\langle p_1 \rangle^{-n} \langle q_1 \rangle^{-n} \langle q_2 \rangle^{-n}
\end{equation}
By Cauchy-Schwarz, 
\begin{equation*}
|\text{III}_{1<2,L}(t)| \leq \epsilon^{-1/2} \int_{t^{\prime}=0}^t
\|P_{1<2,H} \tilde G_{N,(12)}^{(2)}(t^{\prime},p_1+\epsilon q_2,
p_1-\epsilon q_1, q_1,q_2)\|_{ L^2_{p_1q_1q_2}} \,d t^{\prime }
\end{equation*}
Sup out in the $p_1$ coordinate and then apply Sobolev embedding to obtain 
\begin{equation*}
|\text{III}_{1<2,L}(t)| \leq \epsilon^{-1/2} \int_{t^{\prime}=0}^t \|\langle
\nabla_{p_1} \rangle^{\frac32+} P_{1<2,H} \tilde
G_{N,(12)}^{(2)}(t^{\prime},p_1, p_2, q_1,q_2)\|_{ L^2_{p_1p_2q_1q_2}} \,d
t^{\prime }
\end{equation*}
Since $|u_1| \geq \epsilon^{-1}$, we can trade $\frac12+$ derivatives in $%
p_1 $ in change for $\epsilon^{\frac12+}$: 
\begin{equation*}
|\text{III}_{1<2,L}(t)| \leq \epsilon^{0+} \int_{t^{\prime}=0}^t \|\langle
\nabla_{p_1} \rangle^{2+} P_{1<2,H} \tilde G_{N,(12)}^{(2)}(t^{\prime},p_1,
p_2, q_1,q_2)\|_{ L^2_{p_1p_2q_1q_2}} \,d t^{\prime }
\end{equation*}
Since $|u_1|\leq |u_2|$, we can share the $2+$ derivatives, obtaining the
second part on the right-side of \eqref{E:PCE07} in this case.

For $\text{III}_{1<2,L}(t)$, we will need to use the symmetry assumption %
\eqref{E:PCE06}. Take \eqref{E:PCE04}, split into two pieces, and in the
second piece swap $q_1$ and $q_2$ to obtain 
\begin{align*}
\text{III}(t) &= \frac12 \epsilon^{-1/2} \int_{t^{\prime}=0}^t
\int_{p_1,q_1,q_2} \mathcal{J}(t-t^{\prime}, p_1,q_1+q_2)
(\phi(q_2)-\phi(q_1)) \\
& \qquad \qquad [P_{1<2,L}\tilde G_{N,(12)}^{(2)}(t^{\prime},p_1+\epsilon
q_2, p_1-\epsilon q_1, q_1,q_2) \\
& \qquad \qquad \qquad - P_{1<2,L} \tilde
G_{N,(12)}^{(2)}(t^{\prime},p_1+\epsilon q_1, p_1-\epsilon q_2, q_2,q_1)] \,
dp_1 \, dq_1 \, dq_2 \, dt^{\prime}
\end{align*}
Appealing to the definition of $E_2$ above, 
\begin{align*}
\text{III}_{1<2,L}(t) &= \frac12 \epsilon^{-1/2} \int_{t^{\prime}=0}^t
\int_{p_1,q_1,q_2} \mathcal{J}(t-t^{\prime}, p_1,q_1+q_2)
(\phi(q_2)-\phi(q_1)) \\
& \qquad \qquad [P_{1<2,L} \tilde G_{N,(12)}^{(2)}(t^{\prime},p_1+\epsilon
q_2, p_1-\epsilon q_1, q_1,q_2) \\
& \qquad \qquad -P_{1<2,L} \tilde G_{N,(12)}^{(2)}(t^{\prime}, p_1-\epsilon
q_2, p_1+\epsilon q_1, q_1,q_2) \\
& \qquad \qquad + P_{1<2,L} E_2(t^{\prime}, p_1-\epsilon q_2, p_1+\epsilon
q_1, q_1,q_2) ] \, dp_1 \, dq_1 \, dq_2 \, dt^{\prime} \\
&= \frac12\epsilon^{-1/2} \int_{t^{\prime}=0}^t \int_{\theta=-1}^1
\int_{p_1,q_1,q_2} \mathcal{J}(t-t^{\prime}, p_1,q_1+q_2)
(\phi(q_2)-\phi(q_1)) \\
& \qquad \qquad [\frac{d}{d\theta} P_{1<2,L}\tilde
G_{N,(12)}^{(2)}(t^{\prime},p_1+\epsilon \theta q_2, p_1-\epsilon \theta
q_1, q_1,q_2) \\
& \qquad \qquad + P_{1<2,L} E_2(t^{\prime}, p_1-\epsilon q_2, p_1+\epsilon
q_1, q_1,q_2) ] \, dp_1 \, dq_1 \, dq_2 \, dt^{\prime} \, d\theta \\
&= \frac12 \epsilon^{-1/2}\int_{t^{\prime}=0}^t \int_{\theta=-1}^1
\int_{p_1,q_1,q_2} \mathcal{J}(t-t^{\prime}, p_1,q_1+q _2)
(\phi(q_2)-\phi(q_1)) \\
& \qquad \qquad \epsilon [(q_2 \cdot \nabla_{p_1} - q_1 \cdot \nabla_{p_2})
P_{1<2,L}\tilde G_{N,(12)}^{(2)}](t^{\prime},p_1+\epsilon \theta q_2,
p_1-\epsilon \theta q_1, q_1,q_2) \\
& \qquad \qquad + P_{1<2,L} E_2(t^{\prime}, p_1-\epsilon q_2, p_1+\epsilon
q_1, q_1,q_2) ] \, dp_1 \, dq_1 \, dq_2 \, dt^{\prime} \, d\theta
\end{align*}
Using \eqref{E:PCE20}, \eqref{E:PCE21} and applying Cauchy-Schwarz, we
obtain 
\begin{equation*}
\begin{aligned} |\text{III}_{1<2,L}(t)| \lesssim &\epsilon^{1/2}
\int_{\theta=-1}^1 \int_{t'=0}^t \| |\nabla_{p_2}| P_{1<2,L} \tilde
G_{N,(12)}^{(2)}(t', p_1+\epsilon \theta q_2,p_1-\epsilon \theta
q_1,q_1,q_2) \|_{L^2_{p_1q_1q_2}} \, dt' \, d\theta \\ &+ \epsilon^{-1/2}
\int_{t'=0}^t \|P_{1<2,L} E_2(t', p_1-\epsilon q_2, p_1+\epsilon q_2,
q_1,q_2) \|_{L^2_{p_1q_1q_2}} \, dt' \end{aligned}
\end{equation*}
In both terms, sup out in the $p_1$ coordinate, and apply Sobolev embedding
to obtain 
\begin{equation*}
\begin{aligned} |\text{III}_{1<2,L}(t)| \lesssim &\epsilon^{1/2}
\int_{\theta=-1}^1 \int_{t'=0}^t \| \langle \nabla_{p_1}\rangle^{\frac32+}
|\nabla_{p_2}| P_{1<2,L} \tilde G_{N,(12)}^{(2)}(t',p_1,p_2,q_1,q_2)
\|_{L^2_{p_1p_2q_1q_2}} \, dt' \, d\theta \\ &+ \epsilon^{-1/2}
\int_{t'=0}^t \| \langle \nabla_{p_1}\rangle^{\frac32+} P_{1<2,L}
E_2(t',p_1,p_2, q_1,q_2) \|_{L^2_{p_1p_2q_1q_2}} \, dt' \end{aligned}
\end{equation*}
In the first term, we use that $|u_1| \leq \epsilon^{-1}$ to trade $\frac12-$
derivatives for $\epsilon^{-\frac12+}$, giving 
\begin{equation*}
\begin{aligned} |\text{III}_{1<2,L}(t)| \lesssim &\epsilon^{0+}
\int_{\theta=-1}^1 \int_{t'=0}^t \| \langle \nabla_{p_1}\rangle^{1+}
|\nabla_{p_2}| P_{1<2,L} \tilde G_{N,(12)}^{(2)}(t',p_1,p_2,q_1,q_2)
\|_{L^2_{p_1p_2q_1q_2}} \, dt' \, d\theta \\ &+ \epsilon^{-1/2}
\int_{t'=0}^t \| \langle \nabla_{p_1}\rangle^{\frac32+} P_{1<2,L}
E_2(t',p_1,p_2, q_1,q_2) \|_{L^2_{p_1p_2q_1q_2}} \, dt' \end{aligned}
\end{equation*}
In the second term, we transfer $\frac34+$ derivatives from the $p_1$ term
to the $p_2$ term. Applying \eqref{E:PCE06} in the case $n=0$ to the second
term, we obtain the right-side of \eqref{E:PCE07}. In \eqref{E:PCE06} is
only available for some $n\geq 1$, then one can modify the above argument to
capture some additional decay from \eqref{E:PCE20}, \eqref{E:PCE21}.
\end{proof}

\begin{corollary}[Term III for $\protect\pi=(12)$ with linear propagator]
Let $e_2$ be the symmetry remainder: 
\begin{equation}  \label{E:PCE06d}
e_2(t,p_1,p_2,q_1,q_2) := \tilde g_{N,(12)}^{(2)}(t,p_1,p_2,q_1,q_2) -
\tilde g_{N,(12)}^{(2)}(t, p_2,p_1,q_2,q_1)
\end{equation}
Suppose that there exists $\mu\geq 0$ and a time $t_0$ such that 
\begin{equation}  \label{E:PCE06b}
\begin{aligned} \hspace{0.3in}&\hspace{-0.3in} \|\langle \boldsymbol{p}_2
\rangle^{-n} \langle \boldsymbol{q}_2 \rangle^{-n} \langle \nabla_{\bds p_2}
\rangle^{\frac34+} e_2(t_0,\boldsymbol{p}_2,\boldsymbol{q}_2)\|_{L^2_{\bds
p_2 \bds q_2}} \\ &\lesssim \epsilon^\mu \| \langle \nabla_{p_1
}\rangle^{1+} \langle \nabla_{p_2 }\rangle^{1+} \tilde
g_{N,(12)}^{(2)}(t,p_1,p_2, q_1, q_2) \|_{L_t^\infty L^2_{p_1p_2q_1q_2}}
\end{aligned}
\end{equation}
Let 
\begin{equation*}
\text{III}_{L,t_0}(t) = \int_{t^{\prime}=0}^t \int_{x_1,\xi_1} \tilde
J(x_1,\xi_1) e^{i(t-t^{\prime})\nabla_{x_1}\cdot \nabla_{\xi_1}}
(N\epsilon^{-1/2} \tilde B_{1,2}^\epsilon e^{i(t^{\prime}-t_0)\nabla_{%
\boldsymbol{x}_2}\cdot \nabla_{\boldsymbol{\xi}_2}} \tilde
f_{N,(12)}^{(2)})(t_0,x_1,\xi_1) \, dx_1 \, d\xi_1 \, dt^{\prime}
\end{equation*}
Then 
\begin{equation}  \label{E:PCE07b}
\text{III}_{L,t_0}(t) \lesssim (\epsilon^{\mu-\frac12}t^\alpha +
\epsilon^{0+}t) \| \langle \nabla_{p_1 }\rangle^{1+} \langle \nabla_{p_2
}\rangle^{1+} \tilde g_{N,(12)}^{(2)}(t^{\prime},p_1,p_2, q_1, q_2)
\|_{L_t^\infty L^2_{p_1p_2q_1q_2}}
\end{equation}
\label{C:12linear}
\end{corollary}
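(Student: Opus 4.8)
The plan is to reduce Corollary~\ref{C:12linear} to Corollary~\ref{C:12nonlin}: I will identify $\text{III}_{L,t_0}(t)$ with the quantity $\text{III}(t)$ of Lemma~\ref{L:PCE1} for the specific time-dependent density obtained by freely evolving $\tilde f_{N,(12)}^{(2)}(t_0)$, and then verify that this density satisfies the two hypotheses required to invoke Corollary~\ref{C:12nonlin}. All the analytic work --- the $(p,q)$-representation of $\tilde B_{1,2}^\epsilon$, the rapid decay \eqref{E:PCE05} of $\mathcal J$, and the frequency splitting \eqref{E:PCE19} with the $\theta$-averaging that converts the $\epsilon^{-1/2}$ prefactor into a gain --- has already been carried out there, so what remains is a bookkeeping reduction plus one observation about equivariance under the coordinate swap.

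\textbf{The coordinate identity and the identification.} First I would record that the inner propagator in $\text{III}_{L,t_0}$ is, in $\pi=(12)$ coordinates, a shear: by the computation at the end of \S\ref{S:PC} (which gives $\tfrac14\epsilon\sum_j(\Delta_{y_j'}-\Delta_{y_{\pi(j)}})=\sum_j\nabla_{p_j^\pi}\cdot\nabla_{q_j^\pi}$, valid since $\pi$ is a bijection) one has $e^{i(t'-t_0)\nabla_{\boldsymbol x_2}\cdot\nabla_{\boldsymbol\xi_2}}=e^{i(t'-t_0)\nabla_{\boldsymbol p_2}\cdot\nabla_{\boldsymbol q_2}}$. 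Hence, writing $\tilde f_{N,(12)}^{(2)}(t_0,\boldsymbol x_2,\boldsymbol\xi_2)=\tilde g_{N,(12)}^{(2)}(t_0,\boldsymbol p_2,\boldsymbol q_2)$ and setting
\begin{equation*}
\tilde F_{N,(12)}^{(2)}(t')=e^{i(t'-t_0)\nabla_{\boldsymbol x_2}\cdot\nabla_{\boldsymbol\xi_2}}\tilde f_{N,(12)}^{(2)}(t_0)\,,\qquad \tilde G_{N,(12)}^{(2)}(t')=e^{i(t'-t_0)\nabla_{\boldsymbol p_2}\cdot\nabla_{\boldsymbol q_2}}\tilde g_{N,(12)}^{(2)}(t_0)\,,
\end{equation*}
the function $\tilde G_{N,(12)}^{(2)}(t')$ is precisely the $(p,q)$-form of $\tilde F_{N,(12)}^{(2)}(t')$, and with this choice $\text{III}_{L,t_0}(t)$ coincides with $\text{III}(t)$ of Lemma~\ref{L:PCE1}. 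Thus Corollary~\ref{C:12nonlin} applies to $\text{III}_{L,t_0}$ as soon as its two hypotheses are checked for this particular $\tilde G_{N,(12)}^{(2)}$.

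\textbf{Verifying the hypotheses.} For the a priori bound (the quantity appearing on the right of \eqref{E:PCE06} and \eqref{E:PCE07}): the shear $e^{i(t'-t_0)\nabla_{\boldsymbol p_2}\cdot\nabla_{\boldsymbol q_2}}$ is $L^2_{\boldsymbol p_2\boldsymbol q_2}$-unitary and, being a Fourier multiplier in the frequency variables $(u_1,w_1,u_2,w_2)$, commutes with $\langle\nabla_{p_1}\rangle^{1+}\langle\nabla_{p_2}\rangle^{1+}$; therefore $\|\langle\nabla_{p_1}\rangle^{1+}\langle\nabla_{p_2}\rangle^{1+}\tilde G_{N,(12)}^{(2)}(t')\|_{L^2_{p_1p_2q_1q_2}}=\|\langle\nabla_{p_1}\rangle^{1+}\langle\nabla_{p_2}\rangle^{1+}\tilde g_{N,(12)}^{(2)}(t_0)\|_{L^2_{p_1p_2q_1q_2}}$ for all $t'$, which is dominated by the $L_t^\infty L^2$ quantity on the right of \eqref{E:PCE07b}. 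For the symmetry-remainder hypothesis \eqref{E:PCE06}, the key point is that the swap $T_{(12)}\colon(p_1,q_1,p_2,q_2)\mapsto(p_2,q_2,p_1,q_1)$ commutes with $e^{i(t'-t_0)(\nabla_{p_1}\cdot\nabla_{q_1}+\nabla_{p_2}\cdot\nabla_{q_2})}$, so
\begin{equation*}
E_2(t')=(I-T_{(12)})\tilde G_{N,(12)}^{(2)}(t')=e^{i(t'-t_0)\nabla_{\boldsymbol p_2}\cdot\nabla_{\boldsymbol q_2}}(I-T_{(12)})\tilde g_{N,(12)}^{(2)}(t_0)=e^{i(t'-t_0)\nabla_{\boldsymbol p_2}\cdot\nabla_{\boldsymbol q_2}}e_2(t_0)\,.
\end{equation*}
Since the shear also commutes with $\langle\nabla_{\boldsymbol p_2}\rangle^{3/4+}$ and is unitary on $L^2_{\boldsymbol p_2\boldsymbol q_2}$, the $n=0$ instance of the standing hypothesis \eqref{E:PCE06b} yields $\|\langle\nabla_{\boldsymbol p_2}\rangle^{3/4+}E_2(t')\|_{L^2_{\boldsymbol p_2\boldsymbol q_2}}=\|\langle\nabla_{\boldsymbol p_2}\rangle^{3/4+}e_2(t_0)\|_{L^2_{\boldsymbol p_2\boldsymbol q_2}}\lesssim\epsilon^\mu\|\langle\nabla_{p_1}\rangle^{1+}\langle\nabla_{p_2}\rangle^{1+}\tilde g_{N,(12)}^{(2)}\|_{L_t^\infty L^2}$, uniformly in $t'$. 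Integrating over $t'\in[T_1,T_2]\subset[0,T]$ gives \eqref{E:PCE06} with $\alpha=1$, and feeding this into Corollary~\ref{C:12nonlin} produces \eqref{E:PCE07b} (with $\alpha=1$, equivalently any $\alpha\le1$ since $t\le T$).

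\textbf{Main obstacle.} The substantive subtlety is the one already flagged in the proof of Corollary~\ref{C:12nonlin}: only the unweighted ($n=0$) symmetry estimate is actually needed. A weighted version of the argument above would fail to be uniform in $t'$, because the weights $\langle\boldsymbol p_2\rangle^{-n}\langle\boldsymbol q_2\rangle^{-n}$ do not commute with the phase-space shear $e^{i(t'-t_0)\nabla_{\boldsymbol p_2}\cdot\nabla_{\boldsymbol q_2}}$ and one would have to track the spreading of $e_2$ under free evolution; fortunately this is not required. Apart from this point, the only care needed is in bookkeeping the conversion $(\boldsymbol x_2,\boldsymbol\xi_2)\leftrightarrow(\boldsymbol p_2,\boldsymbol q_2)$ consistently so that the free $(x_2,\xi_2)$-flow really does become the $(p,q)$-shear, after which the estimate is inherited wholesale from Lemma~\ref{L:PCE1} and Corollary~\ref{C:12nonlin}.
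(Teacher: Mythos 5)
Your proposal is correct and follows essentially the same route as the paper: the paper's own proof is exactly this reduction, defining $\tilde G_{N,(12)}^{(2)}(t')=e^{i(t'-t_0)\nabla_{\boldsymbol p_2}\cdot\nabla_{\boldsymbol q_2}}\tilde g_{N,(12)}^{(2)}(t_0)$ and observing that then the hypothesis \eqref{E:PCE06} of Corollary \ref{C:12nonlin} holds with the same $\mu$ as in \eqref{E:PCE06b}. Your additional bookkeeping (equality of the $(x,\xi)$ and $(p,q)$ propagators, commutation of the shear with the swap and with $\langle\nabla_{\boldsymbol p_2}\rangle^{3/4+}$, unitarity giving $\alpha=1$, and the remark that only the $n=0$ weight case is needed) simply makes explicit what the paper leaves implicit.
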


\begin{proof}
Let $\tilde G_{N,(12)}^{(2)}$ be the \emph{linear} evolution starting from $%
t_0$, i.e. 
\begin{equation*}
\tilde G_{N,(12)}^{(2)}(t,p_1,p_2,q_1,q_2) = e^{i(t-t_0)\nabla_{\boldsymbol{p%
}_2}\cdot \nabla_{\boldsymbol{q}_2}} \tilde g_N(t_0,p_1,p_2,q_1,q_2)
\end{equation*}
Then with $E_2$ as defined in \eqref{E:PCE06c}, it follows that %
\eqref{E:PCE06} holds (and hence so does \eqref{E:PCE07}) with the same
value of $\mu$ as in \eqref{E:PCE06b}.
\end{proof}

\begin{remark}
\label{R:TLP} Now suppose that $\tilde f_{N,(12)}^{(2)}(t)$ represents the
component for $\pi=(12)$ of the BBGKY $k=2$ density $\tilde f_N^{(2)}$ and
correspondingly $\tilde g_{N,(12)}^{(2)}$ is the expression in $(p,q)$
coordinates. Take $\tilde G_{N,(12)}^{(2)}=\tilde g_{N,(12)}^{(2)}$ in
Corollary \ref{C:12nonlin}. If \eqref{E:PCE06} holds with $\mu=\frac12$,
then \eqref{E:PCE07} gives an $O(1)$ bound on $\text{III}(t)$. Thus, we
expect that this bound is indeed inherited from the $N$-body model. If %
\eqref{E:PCE06} holds with $\mu=\frac12+$, then \eqref{E:PCE07} gives an $%
O(\epsilon^{0+})$ bound on $\text{III}(t)$, and thus we do \emph{not} expect
this improved bound in general, since it results in a trivial limit (zero
collisional effects). Now assume that both of the following hold

\begin{enumerate}
\item \eqref{E:PCE06} holds with $\mu=\frac12$ and no higher value of $\mu$.

\item There exists one time $t_0$ such that \eqref{E:PCE06b} holds with $%
\mu=\frac12+$.
\end{enumerate}

Note that (1) and (2) can indeed simultaneously hold, since \eqref{E:PCE06}
involves an integral in $t$ and thus does not ``see'' a better bound that
holds on a small measure set. Now when both (1) and (2) hold, Corollary \ref%
{C:12nonlin}, \ref{C:12linear} imply that both

\begin{enumerate}
\item $\text{III}(t)=O(1)$.

\item $\text{III}_{L,t_0}(t)= O(\epsilon^{0+})$
\end{enumerate}

This resolves the trivial limit puzzle.

Notice that in Lemma \ref{L:basic-B} (pertaining to $\pi=\text{Identity}$,
the core terms) the needed $\epsilon^{\frac12+}$ gain is obtained from the
vanishing of $\hat\phi(\xi)$ at $\xi=0$. In Corollary \ref{C:12nonlin}/\ref%
{C:12linear} the $\epsilon^{\frac12}$/$\epsilon^{\frac12+}$ gain is instead
obtained from the symmetry assumption. Thus the symmetry assumption is the
``faucet'' that determines the extent of $\epsilon$ gain in the $\pi=(12)$
terms, and this faucet must be tuned precisely as described above.
\end{remark}

Replacing $\tilde G_{N,(12)}^{(2)}$ in \eqref{E:PCE04} with $\mathcal{D}%
^{(2)}\epsilon^{-1/2}A_{1,2}^\epsilon \tilde g_{N,(12)}^{(2)}$, we can
obtain the expression for Term IV in $\pi=(12)$ coordinates.

\begin{proposition}[Term IV for $\protect\pi=(12)$]
Consider the weak form of Term IV: 
\begin{align*}
\text{IV}(t) &= \int_{x_1,\xi_1} \tilde J(x_1,\xi_1) \int_{t^{\prime}=0}^t
e^{i(t-t^{\prime})\nabla_{x_1}\cdot \nabla_{\xi_1}} N \epsilon^{-1/2} \tilde
B_{1,2}^\epsilon \\
& \qquad \left[ \int_{t^{\prime\prime}=0}^{t^{\prime}}
e^{i(t^{\prime}-t^{\prime\prime}) (\nabla_{x_1}\cdot \nabla_{\xi_1} +
\nabla_{x_2}\cdot \nabla_{\xi_2})} \epsilon^{-1/2} \tilde A_{1,2}^\epsilon
\tilde f_{N,(12)}(t^{\prime\prime}) \, dt^{\prime\prime} \right] \,
dt^{\prime }\, dx_1 \, d\xi_1
\end{align*}
Implement the coordinate conversion to the $\pi=(12)$ frame, 
\begin{equation}  \label{E:PCE08}
\tilde f_{N,(12)}^{(2)}(t^{\prime\prime}, x_1,x_2, \xi_1, \xi_2) = \tilde
g_{N,(12)}^{(2)}(t^{\prime\prime}, p_1,p_2, q_1, q_2)
\end{equation}
Let $e_2$ be the symmetry remainder: 
\begin{equation}  \label{E:PCE06e}
e_2(t,p_1,p_2,q_1,q_2) := \tilde g_{N,(12)}^{(2)}(t,p_1,p_2,q_1,q_2) -
\tilde g_{N,(12)}^{(2)}(t, p_2,p_1,q_2,q_1)
\end{equation}
Assume that 
\begin{equation}  \label{E:PCE06f}
\begin{aligned} \hspace{0.3in}&\hspace{-0.3in} \|\langle \boldsymbol{p}_2
\rangle^{-n} \langle \boldsymbol{q}_2 \rangle^{-n} \langle \nabla_{\bds p_2}
\rangle^{\frac34+}
e_2(t,\boldsymbol{p}_2,\boldsymbol{q}_2)\|_{L_{t\in{[T_1,T_2]}}^1 L^2_{\bds
p_2 \bds q_2}} \\ &\lesssim \epsilon^{0+} (T_2-T_1)^\alpha \| \langle
\nabla_{p_1 }\rangle^{1+} \langle \nabla_{p_2 }\rangle^{1+} \tilde
g_{N,(12)}^{(2)}(t^{\prime},p_1,p_2, q_1, q_2) \|_{L_t^\infty
L^2_{p_1p_2q_1q_2}} \end{aligned}
\end{equation}
Then 
\begin{equation*}
|\text{IV}(t)| \lesssim t^{1-} \epsilon^{0+} \| \langle
\nabla_{p_1}\rangle^{\frac34+}\langle \nabla_{p_2}\rangle^{\frac34+}\langle
q_1 \rangle^{0+}\langle q_2 \rangle^{0+} \tilde g_{N,(12)}^{(2)}(t^{\prime
\prime }, p_1,p_2,q_1,q_2) \|_{L_{t^{\prime \prime }}^\infty
L^2_{p_1p_2q_1q_2}}
\end{equation*}
\label{P:termIV}
\end{proposition}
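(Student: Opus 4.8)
The plan is to reduce Term IV to a composition of the Term III formula from Lemma \ref{L:PCE1} with the $A$-operator, and then exploit the $\tfrac{1}{\epsilon}$-prefactor in the composite by the same mechanism as in Corollary \ref{C:12nonlin}: the $\epsilon^{-1/2}$ coming from $B$ is tamed by the symmetry assumption \eqref{E:PCE06f} or (on the high-frequency piece) by trading half a derivative for $\epsilon^{1/2+}$ using the $\epsilon$-stretched $(p,q)$-coordinates, while the remaining $\epsilon^{-1/2}$ coming from $A$ is absorbed by Lemma \ref{L:basic-A}, which gains $\epsilon^{s-1/2}$ at the cost of $s$ derivatives in $x$ (equivalently in $p$ after the coordinate change). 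Concretely, I would first take the expression \eqref{E:PCE04} from Lemma \ref{L:PCE1}, but with $\tilde G_{N,(12)}^{(2)}$ replaced by $\mathcal{D}^{(2)} \epsilon^{-1/2} \tilde A_{1,2}^\epsilon \tilde g_{N,(12)}^{(2)}$ as indicated in the paragraph preceding the statement, so that the inner $t''$-integral and the $A_{1,2}^\epsilon$ potential $\phi\big(\tfrac{x_1-x_2}{\epsilon} + \tfrac{\sigma}{2}(\xi_1-\xi_2)\big)$ are written in the $(p_1,p_2,q_1,q_2)$ frame; from the conversions listed at the start of \S \ref{S:PCE}, $\tfrac{x_1-x_2}{\epsilon} = q_1 - q_2$ and $\xi_1 - \xi_2 = (p_1 - p_2)/\epsilon$, so the $A$-potential in the new coordinates has the form $\phi(q_1 - q_2 + \tfrac{\sigma}{2}(p_1-p_2)/\epsilon)$, which together with the two $\phi(q_1), \phi(q_2)$ factors and the rapidly decaying test-function kernel $\mathcal{J}$ of \eqref{E:PCE18}--\eqref{E:PCE05} gives good $L^1$ control in $q_1, q_2$.

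The key steps in order: (1) write out the composite explicitly in $(p,q)$-coordinates, identifying the $\epsilon^{-1}$ prefactor (product of the two $\epsilon^{-1/2}$'s) and the inner $t''$ time integration that came with $\mathcal{D}^{(2)}$; (2) split the $(p_1,p_2)$-frequency space as in \eqref{E:PCE19} into the four pieces $P_{1<2,L}, P_{1<2,H}, P_{1>2,L}, P_{1>2,H}$, handling $P_{1<2,H}$ (and symmetrically $P_{1>2,H}$) without the symmetry assumption by using $1 \leq \epsilon|\nabla_{p_1}|$ to convert the $\epsilon^{-1}$ loss into $\epsilon^{0+}$ at the cost of derivatives, while sharing derivatives between $p_1$ and $p_2$ since $|u_1| \leq |u_2|$ there; (3) on the $P_{1<2,L}$ piece, repeat the $\theta$-integration / fundamental-theorem-of-calculus trick of Corollary \ref{C:12nonlin}: split into a symmetric main term (which produces a $\theta$-derivative and hence an extra $\epsilon$, matching one of the two $\epsilon^{-1/2}$'s) plus the symmetry remainder $e_2$, then invoke \eqref{E:PCE06f} with its $\epsilon^{0+}$ gain; (4) in both cases use Lemma \ref{L:basic-A} to dispose of the $A$-operator's $\epsilon^{-1/2}$ by sacrificing $\tfrac12+$ derivatives in $p$, which is why the final bound lands at $\langle \nabla_{p_1}\rangle^{3/4+}\langle \nabla_{p_2}\rangle^{3/4+}$ rather than $\langle \nabla_{p}\rangle^{1+}$ — the $A$-estimate only needs derivatives up to $\tfrac32$, and here we split $3/4 + 3/4$; (5) perform the $t''$ and $t'$ time integrals, picking up the $t^{1-}$ factor (the $1-$ coming from the integrable-singularity structure of the rescaled-time integral analogous to the $s^{-1+}$ bound in Proposition \ref{P:QEstimates}), and finally apply Minkowski in the outer $L^2_{p_1 p_2 q_1 q_2}$ norm together with the $\langle q_1\rangle^{0+}\langle q_2\rangle^{0+}$ weights needed to close Hölder against the $\phi(q_j)$ and $\mathcal{J}$ factors.

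The main obstacle I anticipate is bookkeeping the interaction between the $A$-operator's frequency shifts and the $B$-operator's $(p,q)$-frame structure simultaneously: the $A_{1,2}^\epsilon$ acts as a translation in $(v_1,v_2) = (\xi_1,\xi_2)$ by $\mp h/2$, which in $(p,q)$-coordinates is a coupled shift in $(p_1,p_2,q_1,q_2)$, and one must verify that after the $\eta_{k+1}$-type substitution the oscillatory phases either telescope or carry only $O(\epsilon)$ coefficients so they can be discarded (as happens on the $(\eta,\xi)$ side in the core-term estimates). A secondary subtlety is checking that the symmetry remainder $e_2$ that appears after the $\theta$-trick is exactly the one appearing in \eqref{E:PCE06f} — i.e., that the composition with $\mathcal{D}^{(2)}\epsilon^{-1/2}A_{1,2}^\epsilon$ does not break the $(p_1,p_2)\leftrightarrow(p_2,p_1)$, $(q_1,q_2)\leftrightarrow(q_2,q_1)$ symmetry — which holds because $A_{1,2}^\epsilon$ treats particles $1$ and $2$ symmetrically and the free propagator $e^{it\nabla_p\cdot\nabla_q}$ commutes with the coordinate swap. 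Once these two points are verified, the rest is a routine assembly of Lemma \ref{L:basic-A}, the $\mathcal{J}$-decay bound \eqref{E:PCE05}, and Cauchy--Schwarz, exactly parallel to the proof of Corollary \ref{C:12nonlin}.
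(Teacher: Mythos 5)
Your starting point --- substituting $\mathcal{D}^{(2)}\epsilon^{-1/2}\tilde A_{1,2}^\epsilon \tilde g_{N,(12)}^{(2)}$ into the Term III formula \eqref{E:PCE04} --- is exactly the paper's, and your observation that the $A$-potential becomes $\phi\bigl(\sigma\tfrac{p_2-p_1}{\epsilon}+q_1-q_2\bigr)$ in the $(p,q)$ frame (cf.\ \eqref{E:PCE11}) is correct. But the central mechanism you propose does not close, for two reasons. First, the Lemma \ref{L:basic-A} scaling gain $\epsilon^{s-\frac12}$ is not available inside the composite $B\,\mathcal{D}\,A$: the $B$-trace in \eqref{E:PCE04} pins the evaluation points to $p_1+\epsilon q_2$ and $p_1-\epsilon q_1$, so that $\tfrac{p_2-p_1}{\epsilon}=-(q_1+q_2)$ there, and the $A$-potential is an $O(1)$, non-concentrated function of $(q_1,q_2)$ rather than an $\epsilon$-scale bump in $p_1-p_2$. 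This resonance between the $B$-restriction and the $A$-concentration is precisely what makes $Q_N^{(k+1)}=B\mathcal{D}A$ an $O(1)$ operator and produces the collision kernel; if the scaling gain survived the composition, the collision operator itself would vanish. (Separately, Lemma \ref{L:basic-A} as stated costs $x$-derivatives, which on the $(12)$-cycle term are not equivalent to $p$-derivatives --- each $\nabla_x$ carries an $\epsilon^{-1}\nabla_q$ component, cf.\ \eqref{E:irregularity}.) Second, your budget for the $B$-side relies on \eqref{E:PCE06f}, which only supplies $\epsilon^{0+}$, not the $\epsilon^{\frac12+}$ that the Corollary \ref{C:12nonlin} template requires on the symmetry-remainder piece; running that template here leaves an uncompensated $\epsilon^{-\frac12+}$.

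The ingredient you are missing is the oscillatory intermediate time integral. The paper passes to the Fourier variables $(u_1,u_2,w_1,w_2)$ dual to $(p,q)$, carries out the $t'$-integral explicitly,
\begin{equation*}
\epsilon^{-1}\int_{t''}^{t} e^{i(t-t')r_1}e^{i(t'-t'')r_2}\,dt' = \frac{e^{i(t-t'')r_2}-e^{i(t-t'')r_1}}{i\epsilon(r_2-r_1)}\,,
\end{equation*}
and observes that $i\epsilon\sigma(r_2-r_1)=i(2w+w_1-w_2)\cdot w+O(\epsilon)$ because $r_2$ contains the term $(2w+w_1-w_2)\cdot \sigma w/\epsilon$. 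Thus the full $\epsilon^{-1}$ prefactor is absorbed into an $O(1)$ kernel $H$ (the vanishing of the denominator being compensated by $\hat\phi(0)=0$). Only the residual $\epsilon^{0+}$ remains to be extracted, and it comes from three sources after splitting $H=H_2-H_1$: integration by parts against the $\epsilon^{-1}$-rate phase $e^{i(t-t'')r_2}$ for $H_2$; the near-antisymmetry $H_1(u_1,u_2,w_1,w_2)+H_1(u_1,-u_2,w_2,w_1)=O(\epsilon)$; and, for the genuinely antisymmetric part of $H_1$, the pairing with the antisymmetric part of $\hat g_{N,(12)}^{(2)}$, which is exactly $\hat e_2$, where \eqref{E:PCE06f} finally enters with its $\epsilon^{0+}$ gain. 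Your proposal assigns the $\epsilon^{-1/2}$ from $A$ and the $\epsilon^{-1/2}$ from $B$ to two independent mechanisms and never exploits this cancellation through the Duhamel integral, which is where the bulk of the gain actually lives.
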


\begin{proof}
Let 
\begin{equation*}
\tilde{F}_{N,(12)}^{(2)}(t^{\prime })=\int_{t^{\prime \prime }=0}^{t^{\prime
}}e^{i(t^{\prime }-t^{\prime \prime })(\nabla _{x_{1}}\cdot \nabla _{\xi
_{1}}+\nabla _{x_{2}}\cdot \nabla _{\xi _{2}})}\epsilon ^{-1/2}\tilde{A}%
_{1,2}^{\epsilon }\tilde{f}_{N,(12)}(t^{\prime \prime })\,dt^{\prime \prime }
\end{equation*}%
We convert coordinates 
\begin{equation*}
\tilde{F}_{N,(12)}^{(2)}(t^{\prime },x_{1},x_{2},\xi _{1},\xi _{2})=\tilde{G}%
_{N,(12)}^{(2)}(t^{\prime },p_{1},p_{2},q_{1},q_{2})
\end{equation*}%
The converted expression, using \eqref{E:PCE08}, is 
\begin{equation}
\begin{aligned} \hspace{0.3in}&\hspace{-0.3in} \tilde
G_{N,(12)}^{(2)}(t^{\prime}, p_1,p_2,q_1,q_2) =
\epsilon^{-1/2}\int_{t^{\prime\prime}=0}^{t^{\prime}}
e^{i(t^{\prime}-t^{\prime\prime})( \nabla_{p_1}\cdot \nabla_{q_1} +
\nabla_{p_2}\cdot \nabla_{q_2})} \\ &\sum_{\sigma=\pm 1} \sigma \phi\left(
\sigma \frac{p_2-p_1}{\epsilon} + q_1 -q_2 \right) \tilde
g_{N,(12)}^{(2)}(t^{\prime\prime}, p_1,p_2, q_1,q_2) \, dt^{\prime\prime}
\end{aligned}  \label{E:PCE11}
\end{equation}%
Substituting into \eqref{E:PCE04}, 
\begin{equation}
\begin{aligned} \text{IV}(t) & = \epsilon^{-1/2}\sum_{\sigma = \pm 1}
\int_{t^{\prime}=0}^t \int_{p_1,q_1,q_2}
\mathcal{J}(t-t^{\prime},p_1,q_1+q_2) (\phi(q_2)-\phi(q_1)) \\ & \qquad
\qquad \tilde G_{N,(12)}^{(2)}(t^{\prime},p_1+\epsilon q_2, p_1 - \epsilon
q_1, q_1, q_2) \, dp_1 \, dq_1 \, dq_2 \, dt^{\prime} \end{aligned}
\label{E:PCE12}
\end{equation}%
where $\tilde{G}_{N,(12)}^{(2)}$ is given by \eqref{E:PCE11}. Note that 
\begin{equation}
\begin{aligned} \hspace{0.3in}&\hspace{-0.3in} \int_{p_1} e^{-ip_1u_1}
\tilde G_{N,(12)}^{(2)}(t^{\prime}, p_1+\epsilon q_2, p_1 - \epsilon q_1,
q_1, q_2) \, dp_1 \\ & = \int_{u_2} e^{i\epsilon q_2 \cdot u_1}
e^{-i\epsilon (q_1+q_2)\cdot u_2} \check G_{N,(12)}^{(2)}(t^{\prime},
u_1-u_2, u_2, q_1, q_2) \, du_2 \end{aligned}  \label{E:PCE14}
\end{equation}%
In \eqref{E:PCE12}, move the $p_{1}$ integration to the inside, apply
Plancherel $p_{1}\mapsto u$, and insert \eqref{E:PCE14}, to obtain 
\begin{equation*}
\begin{aligned} \text{IV}(t) & = \epsilon^{-1/2}\sum_{\sigma = \pm 1}
\int_{t^{\prime}=0}^t \int_{u_1,u_2, q_1,q_2}
\check{\mathcal{J}}(t-t^{\prime},u_1,q_1+q_2) (\phi(q_2)-\phi(q_1)) \\ &
\qquad \qquad e^{i\epsilon q_2 \cdot u_1} e^{-i\epsilon (q_1+q_2)\cdot u_2}
\check G_{N,(12)}^{(2)}(t^{\prime},u_1-u_2, u_2, q_1, q_2) \, du_1 \, du_2
\, dq_1 \, dq_2 \, dt^{\prime} \end{aligned}
\end{equation*}%
Shift $u_{2}\mapsto u_{2}+\frac{1}{2}u_{1}$ to obtain the more symmetric
expression 
\begin{equation}
\begin{aligned} \text{IV}(t) & = \epsilon^{-1/2}\sum_{\sigma = \pm 1}
\int_{t^{\prime}=0}^t \int_{u_1,u_2, q_1,q_2}
\check{\mathcal{J}}(t-t^{\prime},u_1,q_1+q_2) (\phi(q_2)-\phi(q_1))
e^{\frac12 i\epsilon (q_2-q_1) \cdot u_1} \\ & \qquad \qquad e^{-i\epsilon
(q_1+q_2)\cdot u_2} \check G_{N,(12)}^{(2)}(t^{\prime},\tfrac12 u_1-u_2,
\tfrac12u_1 + u_2, q_1, q_2) \, du_1 \, du_2 \, dq_1 \, dq_2 \, dt^{\prime}
\end{aligned}  \label{E:PCE15}
\end{equation}%
The check-space representation of (\ref{E:PCE11}) is 
\begin{equation}
\begin{aligned} \hspace{0.3in}&\hspace{-0.3in} \check
G_{N,(12)}^{(2)}(t^{\prime}, u_1,u_2,q_1,q_2) =
\epsilon^{-1/2}\int_{t^{\prime\prime}=0}^{t^{\prime}} \int_w e^{i w\cdot
(q_1-q_2)} e^{-i(t^{\prime}-t^{\prime\prime})w\cdot (u_1-u_2)} \hat \phi(w)
\\ & \qquad \sigma \int_w \check g_{N,(12)}^{(2)}(t^{\prime\prime}, u_1+
\frac{\sigma w}{\epsilon}, u_2 - \frac{\sigma w}{\epsilon}, q_1 -
(t^{\prime}-t^{\prime\prime})u_1, q_2 - (t^{\prime}-t^{\prime\prime})u_2) \,
dw \, dt^{\prime\prime} \end{aligned}  \label{E:PCE13}
\end{equation}%
Substituting \eqref{E:PCE13} into \eqref{E:PCE15}, we obtain 
\begin{equation*}
\begin{aligned} \text{IV}(t) & = \epsilon^{-1}\sum_{\sigma = \pm 1}\sigma
\int_{t^{\prime}=0}^t \int_{t^{\prime\prime}=0}^{t^{\prime}} \int_{u_1,u_2,
w, q_1,q_2} \check{\mathcal{J}}(t-t^{\prime},u_1,q_1+q_2)
(\phi(q_2)-\phi(q_1)) \hat \phi(w) \\ & \qquad \qquad e^{\frac12i\epsilon
(q_2-q_1) \cdot u_1} e^{-i\epsilon (q_1+q_2)\cdot u_2} e^{i w\cdot
(q_1-q_2)} e^{2i(t^{\prime}-t^{\prime\prime})w\cdot u_2} \\ & \qquad \qquad
\check g_{N,(12)}^{(2)}(t^{\prime\prime}, \tfrac12 u_1-u_2+ \frac{\sigma
w}{\epsilon}, \tfrac12 u_1 + u_2 - \frac{\sigma w}{\epsilon}, \\ & \qquad
\qquad \qquad \qquad q_1 - (t^{\prime}-t^{\prime\prime})(\tfrac12 u_1-u_2),
q_2 - (t^{\prime}-t^{\prime\prime})(\tfrac12u_1 + u_2)) \, \\ & \qquad
\qquad dw \, du_1 \, du_2 \, dq_1 \, dq_2 \, dt^{\prime\prime} \,
dt^{\prime} \end{aligned}
\end{equation*}%
Replace $u_{2}\mapsto u_{2}+\frac{\sigma w}{\epsilon }$. 
\begin{equation*}
\begin{aligned} \text{IV}(t) & = \epsilon^{-1}\sum_{\sigma = \pm 1}\sigma
\int_{t^{\prime}=0}^t \int_{t^{\prime\prime}=0}^{t^{\prime}} \int_{u_1,u_2,
w, q_1,q_2} \check{\mathcal{J}}(t-t^{\prime},u_1,q_1+q_2)
(\phi(q_2)-\phi(q_1)) \hat \phi(w) \\ & \qquad \qquad e^{\frac12i\epsilon
u_1\cdot (q_2-q_1)} e^{-i\epsilon u_2\cdot (q_1+q_2) } e^{i w\cdot[
(1-\sigma)q_1 - (1+\sigma)q_2]} e^{2i(t'-t^{\prime\prime}) w\cdot (u_2+
\sigma w/\epsilon)} \\ & \qquad \qquad \check
g_{N,(12)}^{(2)}(t^{\prime\prime}, \tfrac12 u_1-u_2, \tfrac12u_1 + u_2 , \\
& \qquad \qquad \qquad \qquad q_1 - (t^{\prime}-t^{\prime\prime})(\tfrac12
u_1 - u_2 - \frac{\sigma w}{\epsilon}) , q_2 -
(t^{\prime}-t^{\prime\prime})(\tfrac12 u_1 +u_2 + \frac{\sigma
w}{\epsilon})) \, \\ & \qquad \qquad dw \, du_1 \, du_2 \, dq_1 \, dq_2 \,
dt^{\prime\prime} \, dt^{\prime} \end{aligned}
\end{equation*}%
Convert to the $\wedge $ side by replacing the $\check{g}_{N,(12)}^{(2)}$
term with 
\begin{equation*}
\begin{aligned} \int_{w_1,w_2} & \hat g_{N,(12)}^{(2)}(t^{\prime\prime},
\tfrac12 u_1-u_2, \tfrac12u_1 + u_2 , w_1,w_2) \\ & e^{iw_1\cdot [ q_1 -
(t^{\prime}-t^{\prime\prime})(\frac12 u_1 - u_2 - \frac{\sigma
w}{\epsilon})] } e^{iw_2\cdot[q_2 - (t^{\prime}-t^{\prime\prime})(\frac12
u_1 +u_2 + \frac{\sigma w}{\epsilon}))]} \, dw_1 \, dw_2 \end{aligned}
\end{equation*}%
Let 
\begin{equation*}
\begin{aligned} \hspace{0.3in}&\hspace{-0.3in}
H(t-t^{\prime\prime},u_1,u_2,w_1,w_2) \\ &= \epsilon^{-1}\sum_{\sigma = \pm
1}\sigma \int_{t'=t^{\prime\prime}}^t \int_{q_1,q_2, w}
\check{\mathcal{J}}(t-t', u_1, q_1+q_2) (\phi(q_1)-\phi(q_2)) \hat \phi(w)
\\ & \hspace{3cm} e^{\frac12i\epsilon u_1\cdot(q_2-q_1)} e^{-i\epsilon
u_2\cdot(q_1+q_2)} e^{iw\cdot[ (1-\sigma)q_1 - (1+\sigma)q_2 ]} e^{iw_1\cdot
q_1} e^{iw_2\cdot q_2} \\ & \hspace{3cm} e^{i(t'-t^{\prime\prime})[ 2w\cdot
u_2 - w_1(\frac12u_1-u_2) - w_2(\frac12u_1+u_2)]}
e^{i(t'-t^{\prime\prime})(2w+w_1-w_2) \cdot \frac{\sigma w}{\epsilon}} \, dw
\, dq_1 \, dq_2 \, dt' \end{aligned}
\end{equation*}%
Bring the $t^{\prime }$ integration to the inside, we obtain 
\begin{equation}
\begin{aligned} \text{IV}(t) & = \int_{t^{\prime\prime}=0}^t \int_{u_1,u_2,
w_1,w_2} H(t,t^{\prime\prime}, u_1,u_2, w_1,w_2) \\ & \qquad \qquad \hat
g_{N,(12)}^{(2)} (t^{\prime\prime}, \tfrac12 u_1-u_2, \tfrac12u_1 + u_2,
w_1, w_2) \, du_1 \, du_2 \, dw_1 \, dw_2 \, dt^{\prime\prime} \,
dt^{\prime} \end{aligned}  \label{E:PCE22}
\end{equation}%
To evaluate $H$, substitute 
\begin{equation*}
\check{\mathcal{J}}(t-t^{\prime
},u_{1},q_{1}+q_{2})=\int_{v_{1}}e^{iv_{1}\cdot
(q_{1}+q_{2})}e^{i(t-t^{\prime })u_{1}\cdot v_{1}}e^{i(t-t^{\prime
})\epsilon |u_{1}|^{2}}\hat{J}(u_{1},v_{1})\,dv_{1}
\end{equation*}%
which then allows the evaluation of the $q_{1}$ and $q_{2}$ integrals: 
\begin{align*}
H(t-t^{\prime \prime },u_{1},u_{2},w_{1},w_{2})=& \epsilon ^{-1}\sum_{\sigma
=\pm 1}\sigma \int_{t^{\prime }=t^{\prime \prime }}^{t}\int_{w,v_{1}}\hat{J}%
(u_{1},v_{1})\hat{\phi}(w) \\
& (\hat{\phi}\otimes \delta -\delta \otimes \hat{\phi})(-v_{1}+\tfrac{1}{2}%
\epsilon u_{1}+\epsilon u_{2}-w(1-\sigma )-w_{1}, \\
& \qquad \qquad -v_{1}-\tfrac{1}{2}\epsilon u_{1}+\epsilon u_{2}+w(1+\sigma
)-w_{2}) \\
& e^{i(t-t^{\prime })u_{1}\cdot v_{1}}e^{i(t-t^{\prime })\epsilon
|u_{1}|^{2}}e^{i(t^{\prime }-t^{\prime \prime })[2w\cdot u_{2}-w_{1}(\frac{1%
}{2}u_{1}-u_{2})-w_{2}(\frac{1}{2}u_{1}+u_{2})]} \\
& e^{i(t^{\prime }-t^{\prime \prime })(2w+w_{1}-w_{2})\cdot \frac{\sigma w}{%
\epsilon }}\,dw\,dv_{1}
\end{align*}%
Using the delta functions to evaluate in $v_{1}$ and appealing to radiality
of $\phi $ gives 
\begin{align*}
H(t-t^{\prime \prime },u_{1},u_{2},w_{1},w_{2})=& \epsilon ^{-1}\sum_{\sigma
=\pm 1}\sigma \int_{t^{\prime }=t^{\prime \prime }}^{t}\int_{w}\Big[\hat{J}%
(u_{1},v_{1})e^{i(t-t^{\prime })u_{1}\cdot v_{1}}\Big]_{v_{1}=\frac{1}{2}%
\epsilon u_{1}+\epsilon u_{2}-(1-\sigma )w-w_{1}}^{v_{1}=-\frac{1}{2}%
\epsilon u_{1}+\epsilon u_{2}+(1+\sigma )w-w_{2}} \\
& \hat{\phi}(\epsilon u_{1}+2w+w_{2}-w_{1})\hat{\phi}(w)e^{i(t-t^{\prime
})\epsilon |u_{1}|^{2}} \\
& e^{i(t^{\prime }-t^{\prime \prime })[2w\cdot u_{2}-w_{1}(\frac{1}{2}%
u_{1}-u_{2})-w_{2}(\frac{1}{2}u_{1}+u_{2})]}e^{i(t^{\prime }-t^{\prime
\prime })(2w+w_{1}-w_{2})\cdot \frac{\sigma w}{\epsilon }}\,dw
\end{align*}%
where the notation $h(v_{1})\Big]_{v_{1}=a}^{v_{1}=b}=h(b)-h(a)$.
Integration in $t^{\prime }$ is just integration of imaginary exponentials: 
\begin{equation*}
\epsilon ^{-1}\int_{t^{\prime }=t^{\prime \prime }}^{t}e^{i(t-t^{\prime
})r_{1}}e^{i(t^{\prime }-t^{\prime \prime })r_{2}}\,dt^{\prime }=\frac{%
e^{i(t-t^{\prime \prime })r_{2}}-e^{i(t-t^{\prime \prime })r_{1}}}{i\epsilon
(r_{2}-r_{1})}
\end{equation*}%
Applying this, with 
\begin{align*}
& r_{1}=u_{1}\cdot v_{1}+\epsilon |u_{1}|^{2} \\
& r_{2}=2w\cdot u_{2}-w_{1}(\tfrac{1}{2}u_{1}-u_{2})-w_{2}(\tfrac{1}{2}%
u_{1}+u_{2})+(2w+w_{1}-w_{2})\cdot \frac{\sigma w}{\epsilon }
\end{align*}%
\begin{align*}
\hspace{0.3in}& \hspace{-0.3in}H(t-t^{\prime \prime
},u_{1},u_{2},w_{1},w_{2}) \\
& =\sum_{\sigma =\pm 1}\sigma \int_{w}\Big[\hat{J}(u_{1},v_{1})\frac{%
e^{i(t-t^{\prime \prime })r_{2}}-e^{i(t-t^{\prime \prime })r_{1}}}{i\epsilon
(r_{2}-r_{1})}\Big]_{v_{1}=\frac{1}{2}\epsilon u_{1}+\epsilon
u_{2}-(1-\sigma )w-w_{1}}^{v_{1}=-\frac{1}{2}\epsilon u_{1}+\epsilon
u_{2}+(1+\sigma )w-w_{2}} \\
& \hspace{3cm}\hat{\phi}(\epsilon u_{1}+2w+w_{2}-w_{1})\hat{\phi}(w)\,dw
\end{align*}%
This can be written more compactly as 
\begin{align*}
\hspace{0.3in}& \hspace{-0.3in}H(t-t^{\prime \prime
},u_{1},u_{2},w_{1},w_{2}) \\
& =\sum_{\sigma ,\alpha \in \{\pm 1\}}\alpha \int_{w}\hat{J}(u_{1},v_{1})%
\frac{e^{i(t-t^{\prime \prime })r_{2}}-e^{i(t-t^{\prime \prime })r_{1}}}{%
i\epsilon \sigma (r_{2}-r_{1})}\hat{\phi}(\epsilon u_{1}+2w+w_{2}-w_{1})\hat{%
\phi}(w)\,dw
\end{align*}%
where 
\begin{align*}
& v_{1}=-\tfrac{1}{2}\alpha \epsilon u_{1}+\epsilon u_{2}+w(\alpha +\sigma )-%
\tfrac{1}{2}(1+\alpha )w_{2}-\tfrac{1}{2}(1-\alpha )w_{1} \\
& r_{1}=u_{1}\cdot v_{1}+\epsilon |u_{1}|^{2} \\
& r_{2}=2w\cdot u_{2}-w_{1}(\tfrac{1}{2}u_{1}-u_{2})-w_{2}(\tfrac{1}{2}%
u_{1}+u_{2})+(2w+w_{1}-w_{2})\cdot \frac{\sigma w}{\epsilon }
\end{align*}%
Note that the denominator is 
\begin{equation*}
i\epsilon \sigma (r_{2}-r_{1})=i(2w+w_{1}-w_{2})\cdot w+O(\epsilon )
\end{equation*}%
The two vectors in the dot product on the main term, $2w+w_{1}-w_{2}$ and $w$%
, also appear inside $\hat{\phi}$, and moreover we assume that $\hat{\phi}%
(0)=0$. Thus vanishing denominators can be suitably compensated, and overall
the size of $H$ is $O(1)$. This can be proved by working in spherical
coordinates for $w + \frac{w_1-w_2}{4}$. In order to gain $\epsilon ^{0+}$,
we will need to split up $H$, and, correspondingly $\text{IV}(t)$ given by %
\eqref{E:PCE22} as 
\begin{equation*}
H=H_{2}-H_{1}\,,\qquad \text{IV}(t)=\text{IV}_{1}(t)-\text{IV}_{2}(t)
\end{equation*}%
where 
\begin{equation*}
\begin{aligned} \hspace{0.3in}&\hspace{-0.3in} H_j(t-t^{\prime \prime },
u_1,u_2,w_1,w_2) \\ &= \sum_{\sigma, \alpha \in \{\pm 1\}} \alpha \int_{w}
\hat J(u_1,v_1) \frac{ e^{i(t-t^{\prime \prime})r_j}}{i\epsilon \sigma
(r_2-r_1)} \hat\phi(\epsilon u_1+2 w + w_2-w_1) \hat\phi(w) \, dw
\end{aligned}
\end{equation*}%
Since $r_{2}$ is oscillatory, an additional integration by parts can be
employed for $H_{2}$ inside $\text{IV}_{2}(t)$. For $H_{1}$, we will employ
a near-symmetry. Let 
\begin{equation*}
H_{3}(t-t^{\prime \prime },u_{1},u_{2},w_{1},w_{2})=H_{1}(t-t^{\prime \prime
},u_{1},-u_{2},w_{2},w_{1})+H_{1}(t-t^{\prime \prime
},u_{1},u_{2},w_{1},w_{2})
\end{equation*}%
Then $H_{3}=O(\epsilon )$ (in a certain precise sense). This allows us to
reexpress $\text{IV}_{1}(t)$ in order to invoke a symmetry assumption on $%
\tilde{g}_{N,(12)}^{(2)}$. Writing $H_{1}=\frac{1}{2}H_{1}+\frac{1}{2}H_{1}$%
, and for the second copy of $H_{1}$, substituting the symmetry: 
\begin{align*}
\hspace{0.3in}& \hspace{-0.3in}H_{1}(t-t^{\prime \prime
},u_{1},u_{2},w_{1},w_{2})=\tfrac{1}{2}H_{1}(t-t^{\prime \prime
},u_{1},u_{2},w_{1},w_{2})+\tfrac{1}{2}H_{1}(t-t^{\prime \prime
},u_{1},u_{2},w_{1},w_{2}) \\
& =\tfrac{1}{2}H_{1}(t-t^{\prime \prime },u_{1},u_{2},w_{1},w_{2})-\tfrac{1}{%
2}H_{1}(t-t^{\prime \prime },u_{1},-u_{2},w_{2},w_{1}) \\
& \hspace{2cm}+\tfrac{1}{2}H_{3}(t-t^{\prime \prime
},u_{1},u_{2},w_{1},w_{2})
\end{align*}%
When substituted into the expression for $\text{IV}_{1}(t)$, in the second
term, we change variable $(u_{2},w_{1},w_{2})\mapsto (-u_{2},w_{2},w_{1})$
to obtain 
\begin{equation*}
\begin{aligned} \text{IV}_1(t) & = \frac12 \int_{t^{\prime\prime}=0}^t
\int_{u_1,u_2, w_1,w_2} H_1(t,t^{\prime\prime}, u_1,u_2, w_1,w_2) [\hat
g_{N,(12)}^{(2)} (t^{\prime\prime}, \tfrac12 u_1-u_2, \tfrac12u_1 + u_2,
w_1, w_2) \\ & \hspace{3cm} - \hat g_{N,(12)}^{(2)} (t^{\prime\prime},
\tfrac12 u_1+u_2, \tfrac12u_1 - u_2, w_2, w_1)] \, du_1 \, du_2 \, dw_1 \,
dw_2 \, dt^{\prime\prime} \, dt^{\prime} \\ & \qquad + \frac12
\int_{t^{\prime\prime}=0}^t \int_{u_1,u_2, w_1,w_2} H_3(t,t^{\prime\prime},
u_1,u_2, w_1,w_2) \\ & \hspace{3cm} \hat g_{N,(12)}^{(2)} (t^{\prime\prime},
\tfrac12 u_1-u_2, \tfrac12u_1 + u_2, w_1, w_2)\, du_1 \, du_2 \, dw_1 \,
dw_2 \, dt^{\prime\prime} \, dt^{\prime} \end{aligned}
\end{equation*}%
In this first term, we use the symmetry assumption, and in the second term,
the smallness of $H_{3}$. All $H_{j}$ satisfy 
\begin{equation*}
|H_{j}(t-t^{\prime \prime },u_{1},u_{2},w_{1},w_{2})|\lesssim _{n}\langle
u_{1}\rangle ^{-n}\langle w_{1}\rangle ^{-n}\langle w_{2}\rangle ^{-n}
\end{equation*}%
(note the absence of $\langle w_{1}\rangle ^{-n}$).
\end{proof}

\section{Compactness of the BBGKY Family\label{sec:CompactnessConvergence}}

In this section, we use the estimates in \S \ref{S:preparation} to prove a
compactness property of solutions to the quantum BBGKY hierarchy.

Fix $\delta>0$, sufficiently small. Let the operator $P_N^{(k)}$, acting on $%
k$-densities, cutoff all components of $\boldsymbol{x}_k$ and $\boldsymbol{v}%
_k$ to both spatial radius and frequency radius $\epsilon^{-\delta}$ (in
other words, it cuts off all of $\boldsymbol{x}_k$, $\boldsymbol{\eta}_k$, $%
\boldsymbol{v}_k$, $\boldsymbol{\xi}_k$ to inside radius $\epsilon^{-\delta}$%
) Note that the radius is expanding (both in space and frequency) as $N\to
\infty$, so it in the limit it becomes the identity. Given a hierarchy $%
f_N=\{ f_N^{(k)} \}_{k=1}^N$, let $P_Nf_N = \{ P_N^{(k)}f_N^{(k)} \}_{k=1}^N$
be the cut-off hierarchy. Given a collection $\mathcal{F} = \{f_N\}_N$ of
hierarchies $f_N=\{f_N^{(k)}\}_{k=1}^N$, let $P\mathcal{F} = \{ Pf_N \}_N$
be the corresponding collection of cut-off hierarchies.

\begin{theorem}
\label{T:compactness} Let $C\geq 2$. Suppose that $\mathcal{F} = \{f_N\}_N$
is a collection of hierarchies $f_N=\{f_N^{(k)}\}_{k=1}^N$ such that each $%
f_N^{(k)}$ admits a decomposition 
\begin{equation*}
f_N^{(k)} = \sum_{\pi \in S_k} f_{N,\pi}^{(k)}
\end{equation*}
and

\noindent (a) Each component $f_{N,\pi}^{(k)}$ satisfies the uniform-in-$N$
bound 
\begin{equation}  \label{E:S3-06}
\forall \; k \geq 1 \,, \qquad \| f_N^{(k)}(t,{\boldsymbol{x}}_k,{%
\boldsymbol{v}}_k) \|_{C([0,T];X_{\pi}^{(k)})} \leq 2^{-k}C^k
\end{equation}

\noindent (b) Each $f_N= \{f_N^{(k)} \}_{k=1}^N$ satisfies the quantum BBGKY
hierarchy \eqref{E:S401}.

\noindent Then $P\mathcal{F}$ is precompact in the metric space metric space 
$C([0,T]; (\Lambda^*,\rho) )$, where the metric space $(\Lambda^*,\rho)$ is
defined below.
\end{theorem}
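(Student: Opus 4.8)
The plan is to apply the Arzel\`a--Ascoli theorem in the metric space $C([0,T];(\Lambda^*,\rho))$, which requires two ingredients: (1) uniform equicontinuity in $t$ of the family $P\mathcal{F}$ with respect to the metric $\rho$, and (2) that for each fixed $t$, the set $\{P_Nf_N(t)\}_N$ lies in a $\rho$-precompact subset of $\Lambda^*$. The metric $\rho$ will be built so that $\rho$-convergence is weak convergence tested against a countable family of very weak test functions, with the $k$-th tier weighted by something like $2^{-k}$ so that the uniform bound $\|P_N^{(k)}f_N^{(k)}\|_{C([0,T];X_I^{(k)})}\leq 2^{-k}C^kk!$ (or the cruder $2^{-k}C^k$ bound on each component summed over $S_k$) keeps the whole hierarchy in a fixed bounded ball of $\Lambda^*$. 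The first step is therefore to record that, under hypothesis (a), the full density satisfies $\|f_N^{(k)}\|\lesssim 2^{-k}C^k\cdot k!$ in the appropriate combined norm once one sums the $|S_k|=k!$ cyclic pieces, each of which must be measured in its own $X_\pi$ norm; since the $P_N^{(k)}$ cutoff collapses all $X_\pi$ norms back to a common $L^2$-type norm (with an $\epsilon$-dependent but, for fixed $t$, harmless constant), the cut-off family sits in a fixed bounded set.

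The second step is equicontinuity. Here I would use the second Duhamel iterate \eqref{E:S402} for $f_N^{(k)}$, apply $P_N^{(k)}$, pair against a fixed test function $J$ from the countable generating family, and estimate $|\langle J, P_N^{(k)}(f_N^{(k)}(t)-f_N^{(k)}(s))\rangle|$. The free-transport piece $S^{(k)}(t)f_N^{(k)}(0)$ contributes a term controlled by $|t-s|$ times a norm of $J$ and of $f_N^{(k)}(0)$, using that the transport generator $\boldsymbol{v}_k\cdot\nabla_{\boldsymbol{x}_k}$ is bounded on the cut-off (spatial/frequency radius $\epsilon^{-\delta}$) band, uniformly enough after transferring derivatives to $J$. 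The Duhamel terms $\mathcal{D}^{(k)}R_N^{2(k)}$, $\mathcal{D}^{(k)}R_N^{3(k+1)}$, $\mathcal{D}^{(k)}(Q_N^{(k+1)}+R_N^{4(k+1)})$, and $\mathcal{D}^{(k)}R_N^{5(k+2)}$ are each of the form $\int_s^t (\cdots)\,dt'$, so the difference in $t$ versus $s$ is literally an integral over $[s,t]$ of a bounded integrand --- the operator bounds of Lemma \ref{L:basic-A}, Lemma \ref{L:basic-B}, Proposition \ref{P:QEstimates}, Proposition \ref{P:R4Estimates}, Proposition \ref{P:R5Estimates} (for the core terms) together with the permutation-coordinate estimates of \S\ref{S:PCE} (Corollary \ref{C:12nonlin}, Proposition \ref{P:termIV} and their general-$k$ analogues, for the cycle terms) show the integrand is $\lesssim C^k$ uniformly in $N$. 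Hence each such term is $\lesssim |t-s|\,C^k\|J\|$, and summing the tiered weights gives a modulus of continuity independent of $N$. One subtlety: the weak pairing must be set up so that the adjoint collision operator applied to the tier-$k$ test function still lands in a space dual to the one in which $f_N^{(k+1)}$, $f_N^{(k+2)}$ are bounded; this is exactly why the test functions in the definition of $\rho$ must be taken very smooth and rapidly decaying, and why one passes to the cut-off hierarchy $P\mathcal{F}$ rather than working with $\mathcal{F}$ directly.

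The third step is the fixed-time precompactness: for fixed $t$, $\{P_N^{(k)}f_N^{(k)}(t)\}_N$ is bounded in $X_I^{(k)}=H_{\boldsymbol{x}_k}^{1+}(L_{\boldsymbol{v}_k}^{2,\frac12+}\cap L_{\boldsymbol{v}_k}^1\cap L_{\boldsymbol{v}_k}^{\infty,2+})$ after the cutoff (again collapsing all $X_\pi$ to this), and bounded sets of $H^{1+}_{x}L^{2,\frac12+}_v$ embed compactly into a suitable weaker $L^2$-type space on any compact region; combined with the tiered summability this gives $\rho$-precompactness of $\{P_Nf_N(t)\}_N$ in $\Lambda^*$. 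Finally one invokes Arzel\`a--Ascoli: a family of functions $[0,T]\to(\Lambda^*,\rho)$ that is uniformly equicontinuous and pointwise precompact is precompact in $C([0,T];(\Lambda^*,\rho))$. The main obstacle I expect is \emph{not} the Arzel\`a--Ascoli packaging but the uniform-in-$N$ control of the Duhamel integrands when the cycle components $f_{N,\pi}^{(k)}$ with $\pi\neq I$ are fed into $B$, $Q_N$, $R_N^{3}$, $R_N^{5}$: these are $O(1)$ in $L^2$ and genuinely irregular in the standard coordinates, so one is forced to route every estimate through the permutation coordinates \eqref{E:PC10} and to invoke the $\epsilon^{1/2+}$-dense symmetry-strengthening hypothesis \eqref{E:PCE06b} to convert the naively-$O(1)$ contributions into $O(\epsilon^{0+})$ ones (as in Remark \ref{R:TLP}); getting this to work \emph{uniformly in} $\pi$ and $k$, so that the sum over $S_k$ still telescopes against the $2^{-k}$ weights, is the delicate point, and it is what makes the cruder $\|P_N^{(k)}f_N^{(k)}\|\leq 2^{-k}C^kk!$ bound (rather than the sharper one with $\epsilon^{(\ell(\pi)-m(\pi))/2}$ factors) the one that suffices for this compactness statement while the sharper bound is postponed to Lemma \ref{L:PCcutoff} for the uniqueness argument.
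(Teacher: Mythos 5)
Your proposal follows essentially the same route as the paper: Ascoli's theorem in $C([0,T];(\Lambda^*,\rho))$, with pointwise-in-$t$ precompactness from the uniform bound and equicontinuity extracted from the second Duhamel iterate \eqref{E:S402} paired against the test functions, using Lemmas \ref{L:basic-A}--\ref{L:basic-B} and Propositions \ref{P:QEstimates}--\ref{P:R5Estimates} for the core terms and the permutation-coordinate estimates of \S\ref{S:PCE} for the cycle terms. Two small corrections to your reasoning, neither of which breaks the argument. First, the fixed-time precompactness does not require any Rellich-type compact embedding on compact regions: since $\rho$ metrizes the weak-* topology of $\Lambda^*$ on bounded sets and $\Lambda$ is separable, precompactness of $\{P_Nf_N(t)\}_N$ for fixed $t$ is exactly Banach--Alaoglu applied to the closed ball, which is what the paper uses. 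Second, you attribute to the compactness proof the need for the symmetry-strengthening hypothesis \eqref{E:PCE06b} (the $\epsilon^{1/2+}$-dense ``jitter'' times) in order to upgrade the $O(1)$ cycle contributions to $O(\epsilon^{0+})$; in fact no $\epsilon^{0+}$ gain is needed here at all. Equicontinuity only requires that each Duhamel integrand be \emph{bounded} uniformly in $N$ so that the $t$-integral produces an $|t-s|^{\alpha}$ factor, and for the cycle terms this already follows from the time-averaged symmetry bound \eqref{E:PCE06} with $\mu=\tfrac12$ (giving the $\epsilon^{\mu-\frac12}t^{\alpha}=t^{\alpha}$ term in \eqref{E:PCE07}); the strengthened hypothesis and the resulting $\epsilon^{0+}$ prefactors are only needed later, in the convergence argument of \S\ref{sec:Convergence}, where one must show the remainder terms actually vanish as $N\to\infty$.
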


The metric space $\Lambda^*$, defined below, is $H_{\boldsymbol{x}_k}^{1+}
L_{\boldsymbol{v}_k}^{2,\frac12+}$ on each $k$ density with the weak-*
topology. The family $\mathcal{F} = \{ f_N \}_N$ is not bounded in $%
\Lambda^* $. In fact, the assumption \eqref{E:S3-06} only establishes that 
\begin{equation*}
\| f_N^{(k)} \|_{C([0,T]; L_{\boldsymbol{x}_k}^2 L_{\boldsymbol{v}_k}^2)}
\leq 2^{-k}C^kk!
\end{equation*}
Due to the fact that the $X_\pi^{(k)}$ norms involve an $\epsilon$-dependent
conversion of variables, as soon as derivatives are added, this upper bound
gains factors of $\epsilon^{-1}$ and thus diverges as $N\to \infty$. With
the projections $P_N^{(k)}$, however, we are able to recover 
\begin{equation*}
\Vert P_{N}^{(k)}f_{N}^{(k)}\Vert _{C([0,T];H_{\boldsymbol{x}_{k}}^{1+}L_{%
\boldsymbol{v}_{k}}^{2,\frac{1}{2}+})}\leq 2^{-k}C^{k}\Big(1+\sum_{\substack{
\pi \in S_{k}  \\ \pi \neq I}}\epsilon ^{(\ell (\pi )-m(\pi ))/2}\Big)
\end{equation*}
by Lemma \ref{L:PCcutoff} below and thus any limit point $f$ of $P\mathcal{F}
$ satisfies $f^{(k)} \in C([0,T]; (H_{\boldsymbol{x}_k}^{1+} L_{\boldsymbol{v%
}_k}^{2,\frac12+})_{\text{wk*}})$ and 
\begin{equation*}
\| f^{(k)} \|_{L^\infty_{[0,T]} H_{\boldsymbol{x}_k}^{1+} L_{\boldsymbol{v}%
_k}^{2,\frac12+}} \leq 2^{-k}C^k
\end{equation*}
In \S \ref{sec:Convergence}, we will prove that every limit point of $P%
\mathcal{F}$ in $C([0,T]; (\Lambda^*,\rho))$ satisfies the quantum Boltzmann
hierarchy. By the uniqueness of solutions to the quantum Boltzmann
hierarchy, proved in \S \ref{sec:uniqueness}, it follows that there is only
one limit point of $P\mathcal{F}$ in $C([0,T]; (\Lambda^*,\rho))$, which
then implies that $P_Nf_N \to f$ in $C([0,T]; (\Lambda^*,\rho))$ as $N\to
\infty$, where $f$ is the solution to the quantum Boltzmann hierarchy. At
the end of \S \ref{sec:Convergence}, we explain that $P_Nf_N \to f$ in $%
C([0,T]; (\Lambda^*,\rho))$ implies that $f_N^{(k)} \to f^{(k)}$ in $%
C([0,T]; (L^2_{\boldsymbol{x}_k, \boldsymbol{v}_k})_{\text{wk*}})$. This of
course raises the question of why we do not instead define $\Lambda^*$ to be 
$L_{\boldsymbol{x}_k}^2 L_{\boldsymbol{v}_k}^2$ on each $k$ density with the
weak-* topology. The reason is that, in the proof of Theorem \ref%
{T:convergence}, starting from a test function $J(\boldsymbol{x}_k, 
\boldsymbol{v}_k)$ (in the space of $k$-densities), a test function $H(%
\boldsymbol{x}_{k+1}, \boldsymbol{v}_{k+1})$ (in the space of $k+1$
densities) emerges \eqref{E:CV02}, built from $J$ and the adjoint of the
kernel of the collision operators. These test functions must lie in $\Lambda$%
. The test function $H$ only belongs to $H_{\boldsymbol{x}%
_{k+1}}^{-\frac34-}L_{\boldsymbol{v}_{k+1}}^{-2,\frac12-}$, as proved below %
\eqref{E:CV02}. Since $H$ does not belong to $L^2_{\boldsymbol{x}_{k+1}}
L^2_{\boldsymbol{v}_{k+1}}$, we cannot define $\Lambda^*$ to be $L_{%
\boldsymbol{x}_k}^2 L_{\boldsymbol{v}_k}^2$, and in fact the weakest space $%
\Lambda^*$ that could be used is $H_{\boldsymbol{x}_k}^{\frac34+} L_{%
\boldsymbol{v}_k}^{2,\frac12+}$ for each $k$ density. Fortunately, the
inclusion of the cutoff operator $P_N^{(k)}$ does not complicate any of the
weak limit analysis, since $P_N^{(k)}$ can be transferred to the test
function.

Now we define the space $(\Lambda^*, \rho)$. (For the moment, think of $t$
as absent or fixed). We start by defining the space $\Lambda$ of all $J=
(J^{(k)})_{k=1}^\infty$ satisfying $J^{(k)} \in H^{-1-}_{{\boldsymbol{x}}%
_k}L^{2,-\frac12-}_{{\boldsymbol{v}}_k}$ and 
\begin{equation*}
\lim_{k\to \infty} C^k \|J^{(k)} \|_{H^{-1-}_{{\boldsymbol{x}}%
_k}L^{2,-\frac12-}_{{\boldsymbol{v}}_k}} = 0
\end{equation*}
where $C$ is as in \eqref{E:S3-06}.\footnote{%
Although the definition of our space $\Lambda$ depends on $C$, we have
suppressed this in the notation, and just write $\Lambda$ instead of the
clumsier $\Lambda_C$.} The set $\Lambda$ is a Banach space with the norm 
\begin{equation*}
\|J\|_\Lambda = \sup_k C^k \|J^{(k)} \|_{H^{-1-}_{{\boldsymbol{x}}%
_k}L^{2,-\frac12-}_{{\boldsymbol{v}}_k}}
\end{equation*}
The dual space $\Lambda^*$ consists of all $f=(f^{(k)})_{k=1}^\infty$
satisfying $f^{(k)} \in H^{1+}_{{\boldsymbol{x}}_k}L^{2,\frac12+}_{{%
\boldsymbol{v}}_k}$ with norm 
\begin{equation}  \label{E:S3-08}
\| f\|_{\Lambda^*} = \sum_{k=1}^\infty C^{-k} \| f^{(k)}\|_{H^{1+}_{{%
\boldsymbol{x}}_k}L^{2,\frac12+}_{{\boldsymbol{v}}_k}}
\end{equation}
The space $\Lambda$ is separable; in fact, there is a countable dense set $%
\{ J_i \}_{i=1}^\infty \subset \Lambda$ that can be selected so that for
each $i$ and each $k$, $J_i^{(k)}$ is Schwartz class in ${\boldsymbol{x}}_k$%
, ${\boldsymbol{v}}_k$, and for each $i$, the function $J_i^{(k)}=0$ except
for $1\leq k \leq K(i)$. Put a metric on the space $\Lambda^*$, as follows: 
\begin{equation}  \label{E:S3-07}
\rho( f, g) = \sum_{i=1}^\infty 2^{-i} \min \left( 1, \;
\left|\sum_{k=1}^{K(i)} \int_{{\boldsymbol{x}}_k, {\boldsymbol{v}}_k} 
\overline{J_i^{(k)}}( f^{(k)}- g^{(k)}) \, d{\boldsymbol{x}}_k d{\boldsymbol{%
v}}_k \right|\right)
\end{equation}
Then, as topological spaces, $(\Lambda^*, \rho) = (\Lambda^*, \text{wk}*)$.

Consider now time dependent hierarchies and the space $C([0,T];
(\Lambda^*,\rho))$ which has metric $\sup_{0\leq t\leq T} \rho( f(t), g(t))$%
. We would like to show that our collection of BBGKY hierarchy solutions $%
\mathcal{F} = \{ f_N \}_N$ is a precompact set in $C([0,T];
(\Lambda^*,\rho)) $. Clearly for each $t\in [0,T]$, $\mathcal{F}_t =\{
f_N(t) \}_N$ is contained in the unit ball of $\Lambda^*$ by assumption %
\eqref{E:S3-06} and \eqref{E:S3-08}. Ascoli's theorem states that if $%
\mathcal{F}$ is equicontinuous under $\rho$ and for each $t\in [0,T]$, the
set $\mathcal{F}_t = \{ f_N(t) \}_N$ has compact closure in $(\Lambda^*,
\rho)$, then $\mathcal{F}$ is contained in a compact subset of $C([0,T];
(\Lambda^*,\rho))$. The fact that for each $t\in [0,T]$, the set $\mathcal{F}%
_t = \{ f_N(t) \}_N$ has compact closure in $(\Lambda^*, \rho)$ follows from
the weak-* compactness of the closed unit ball in $\Lambda^*$, and the fact
that $\rho$ induces the weak-* topology. The following elementary lemma
gives the equicontinuity criterion that we will employ.

\begin{lemma}
\label{L:equi} Suppose that there exists $\alpha>0$ such that the following
holds: for all $i\geq 1$, for all $1\leq k \leq K(i)$, and for all $N$,
there exists constants $C_{i,k}>0$ such that 
\begin{equation}  \label{E:S3-01}
\left| \int_{{\boldsymbol{x}}_k, {\boldsymbol{v}}_k} \overline{J_i^{(k)}}(
f_N^{(k)}(t)- f_N^{(k)}(s)) \, d{\boldsymbol{x}}_k d{\boldsymbol{v}}_k
\right| \leq C_{i,k} |t-s|^\alpha
\end{equation}
where $C_{i,k}$ is independent of $N$. Then $\mathcal{F}= \{ f_N \}_N$ is
equicontinuous in $C([0,T];(\Lambda^*,\rho))$, meaning that for each $\mu>0$%
, there exists $\delta>0$ such that for all $N$ 
\begin{equation*}
\forall \; 0\leq s \leq t \leq T \,, \qquad t-s \leq \delta \implies \rho(
f_N^{(k)}(t), f_N^{(k)}(s)) \leq \mu
\end{equation*}
\end{lemma}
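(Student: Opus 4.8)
The plan is a direct unwinding of the definition of $\rho$ in \eqref{E:S3-07}, exploiting that $\rho$ is a $2^{-i}$-weighted sum over $i$ of terms each capped by $\min(\cdot,1)$. Writing $\langle J,f\rangle:=\int_{\boldsymbol{x}_k,\boldsymbol{v}_k}\overline{J}\,f\,d\boldsymbol{x}_k\,d\boldsymbol{v}_k$ for the pairing appearing in \eqref{E:S3-01} and \eqref{E:S3-07}, and fixing $\mu>0$, I would first choose $I=I(\mu)$ so large that $\sum_{i>I}2^{-i}<\mu/2$; this step uses only the geometric weights together with the built-in $\min(\cdot,1)$ cutoff, and is independent of $N$, $s$, and $t$. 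This reduces the claimed equicontinuity to controlling the finitely many leading terms $i=1,\dots,I$.

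For each such $i$ and each $k$ with $1\le k\le K(i)$, hypothesis \eqref{E:S3-01} gives $|\langle J_i^{(k)},f_N^{(k)}(t)-f_N^{(k)}(s)\rangle|\le C_{i,k}|t-s|^\alpha$ with $C_{i,k}$ independent of $N$. Summing over $1\le k\le K(i)$ --- a finite sum, exactly matching the inner sum in \eqref{E:S3-07}, and legitimate because $J_i^{(k)}=0$ for $k>K(i)$ --- and then over $1\le i\le I$ with weights $2^{-i}$, I obtain
\[
\sum_{i=1}^I 2^{-i}\min\!\Big(1,\Big|\sum_{k=1}^{K(i)}\langle J_i^{(k)},f_N^{(k)}(t)-f_N^{(k)}(s)\rangle\Big|\Big)\;\le\;D_I\,|t-s|^\alpha,\qquad D_I:=\sum_{i=1}^I 2^{-i}\sum_{k=1}^{K(i)}C_{i,k}.
\]
Since $D_I<\infty$ depends only on $\mu$ (through $I$) and not on $N$, choosing $\delta=\delta(\mu)>0$ with $D_I\,\delta^\alpha<\mu/2$ gives, for all $N$ and all $0\le s\le t\le T$ with $t-s\le\delta$,
\[
\rho\big(f_N(t),f_N(s)\big)\le D_I\,\delta^\alpha+\sum_{i>I}2^{-i}<\tfrac{\mu}{2}+\tfrac{\mu}{2}=\mu,
\]
which is precisely the asserted equicontinuity in $C([0,T];(\Lambda^*,\rho))$.

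There is no serious obstacle here, as the lemma is pure bookkeeping, so the only point deserving attention is the order of quantifiers: $I$ must be fixed before $\delta$, using that the tail of a geometric series is small for reasons independent of $N$, $s$, and $t$; only then is $\delta$ chosen in terms of the finitely many $N$-independent constants $C_{i,k}$ with $i\le I$. Because those constants are $N$-independent by hypothesis, the resulting $\delta$ is uniform in $N$, which is exactly what equicontinuity of the family $\mathcal{F}=\{f_N\}_N$ demands; this uniformity in $N$ is the substantive content that Lemma \ref{L:equi} packages for the later application of Ascoli's theorem.
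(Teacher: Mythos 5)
Your proof is correct and follows essentially the same route as the paper's: split the sum defining $\rho$ at a threshold $I(\mu)$ chosen so the geometric tail (capped by the $\min(\cdot,1)$) is below $\mu/2$, bound the finitely many remaining terms by \eqref{E:S3-01}, and then choose $\delta$ from the resulting $N$-independent constant. The only cosmetic difference is that you retain the weights $2^{-i}$ in the constant $D_I$, whereas the paper simply bounds them by $1$.
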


\begin{proof}
Let $I(\mu)$ be chosen sufficiently large so that $2^{-I+1} \leq \mu$. Then
from \eqref{E:S3-07} and \eqref{E:S3-01}, 
\begin{equation*}
\rho_N(f_N^{(k)}(t),f_N^{(k)}(s)) \leq \frac12\mu + \left(
\sum_{i=1}^{I(\mu)} \sum_{k=1}^{K(i)} C_{i,k} \right) |t-s|^\alpha
\end{equation*}
For $\mu>0$, the function $\mu \mapsto \sum_{i=1}^{I(\mu)} \sum_{k=1}^{K(i)}
C_{i,k}$ is finite (although its rate of growth as $\mu\searrow 0$ is
unknown). Therefore, it suffices to take 
\begin{equation*}
\delta = \left( \frac{\mu}{2\sum_{i=1}^{I(\mu)} \sum_{k=1}^{K(i)} C_{i,k}}
\right)^{1/\alpha}
\end{equation*}
\end{proof}

The cutoff $P_N^{(k)}$ allows for control of the $\pi \neq \text{Id}$ terms
using the following lemma.

\begin{lemma}
\label{L:PCcutoff} Suppose that $\pi\neq \text{Id}$ and $f_{N,\pi}^{(k)}(%
\boldsymbol{x}_k, \boldsymbol{\xi}_k) = g_{N,\pi}^{(k)}(\boldsymbol{p}_k,%
\boldsymbol{q}_k)$, where $(\boldsymbol{p}_k, \boldsymbol{q}_k)= (%
\boldsymbol{p}_k^\pi, \boldsymbol{q}_k^\pi)$ are the transformed coordinates %
\eqref{E:PC10}. Then 
\begin{equation}  \label{E:PC103}
\| \tilde P_N^{(k)} \tilde f_{N,\pi}^{(k)} \|_{H^{1+}_{\boldsymbol{x}_k}
H^{\frac12+}_{\boldsymbol{\xi}_k}} \lesssim \epsilon^{\frac12(\ell-m)-}\|
\tilde g_{N,\pi}^{(k)} \|_{H^{\frac12+}_{\boldsymbol{p}_k}H^{\frac12+}_{%
\boldsymbol{q}_k}}
\end{equation}
where $\ell$ is the sum of the lengths of all nontrivial disjoint cycles in $%
\pi$ and $m$ is the number of such cycles.
\end{lemma}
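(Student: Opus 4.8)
\textbf{Proof proposal for Lemma \ref{L:PCcutoff}.}

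The plan is to reduce the estimate to a scaling count on a single nontrivial cycle and then tensor over the disjoint cycle decomposition of $\pi$. First I would split the index set $\{1,\dots,k\}$ into the fixed points of $\pi$ and the supports of the nontrivial disjoint cycles $c_1,\dots,c_r$ of lengths $\ell_1,\dots,\ell_r$, so that $\ell=\sum_s \ell_s$ and $m=r$. On the fixed points, \eqref{E:PC10} gives $p_j^\pi=x_j$, $q_j^\pi=\xi_j$, and $\tilde P_N^{(k)}$ acts as a harmless Littlewood--Paley-type cutoff to frequency $\lesssim \epsilon^{-\delta}$; since $\delta$ is taken small (it contributes only an $\epsilon^{0-}$, which is absorbed in the $-$ in the exponent $\frac12(\ell-m)-$) these variables cost nothing and the two sides are comparable factor by factor. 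So the entire estimate localizes to one nontrivial cycle, and by the obvious product structure of both norms and of the coordinate change across disjoint cycles, it suffices to prove, for a single $\ell$-cycle $c=(i_1\,i_2\,\cdots\,i_\ell)$,
\begin{equation*}
\| \tilde P_N \tilde f \|_{H^{1+}_{\boldsymbol{x}_c} H^{\frac12+}_{\boldsymbol{\xi}_c}} \lesssim \epsilon^{\frac12(\ell-1)-} \| \tilde g \|_{H^{\frac12+}_{\boldsymbol{p}_c}H^{\frac12+}_{\boldsymbol{q}_c}},
\end{equation*}
where $\tilde f(\boldsymbol{x}_c,\boldsymbol{\xi}_c) = \tilde g(\boldsymbol{p}_c,\boldsymbol{q}_c)$ with $p_j = \tfrac12(x_j+x_{c(j)})+\tfrac12\epsilon(\xi_j-\xi_{c(j)})$, $q_j = \tfrac12(\xi_j+\xi_{c(j)}) + \tfrac12(x_j-x_{c(j)})/\epsilon$.

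Next I would make the change of variables explicit on the Fourier side. Writing $u_j,w_j$ for the duals of $p_j,q_j$ and $\eta_j,\xi_j$ for the duals of $x_j$ (so $\xi_j$ is both a variable and the name used in the norm — I would rename the Fourier dual of $x_j$ to $\eta_j$ to avoid the clash, as the paper does elsewhere), the linear map \eqref{E:PC10} dualizes to a linear map $(\boldsymbol{\eta}_c,\boldsymbol{\xi}_c)\mapsto(\boldsymbol{u}_c,\boldsymbol{w}_c)$ whose matrix has entries of size $1$ and $\epsilon^{\pm 1}$ in a cyclic pattern. The key structural fact, already visible in Example \ref{EX:2cycles} for $\ell=2$, is that within a cycle the ``difference'' directions are $\epsilon$-compressed in $x$ and $\epsilon^{-1}$-stretched in $\xi$: concretely, $x_j-x_{c(j)}$ has size $\lesssim\epsilon$ on the support of a fixed-size $\boldsymbol{q}_c$-function, so $\nabla_{x_j}$ falling on $\tilde f$ produces along those $\ell-1$ independent difference directions a factor gain of $\epsilon$ each (equivalently $\partial_{x_j} = $ combinations of $\epsilon^{-1}\partial_{q}$ and $\partial_p$, but the $\epsilon^{-1}\partial_q$ pieces are controlled because the cycle constrains the $q$-sum). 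Rather than track signs, I would estimate directly: insert the cutoff, pass to the $(\boldsymbol{u}_c,\boldsymbol{w}_c)$ variables via Plancherel (the Jacobian of the linear map is a power of $\epsilon$ that I would compute once — it is $\epsilon^{\pm\ell}$ up to a constant and must be carried along), and bound $\langle\eta_{i_1}\rangle^{1+}\cdots\langle\eta_{i_\ell}\rangle^{1+}\langle\xi_{i_1}\rangle^{1/2+}\cdots$ by $\epsilon^{\frac12(\ell-1)-}\langle u_{i_1}\rangle^{1/2+}\cdots\langle w_{i_1}\rangle^{1/2+}\cdots$ on the region where the cutoff $\langle\eta\rangle,\langle\xi\rangle\lesssim\epsilon^{-\delta}$ is in force. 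The cutoff is what makes this pointwise frequency inequality true: without it the $H^{1+}_{\boldsymbol{x}}$ norm of an irregular term is genuinely unbounded (cf. \eqref{E:irregularity}), but on frequencies $\lesssim\epsilon^{-\delta}$ every $\langle\eta\rangle^{1+}$ costs at most $\epsilon^{-\delta(1+)}$, and against that we bank the $\ell-1$ honest powers of $\epsilon$ coming from the compressed difference directions, netting $\epsilon^{\frac12(\ell-1)-}$ after distributing.

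The main obstacle I expect is bookkeeping the linear algebra of the cyclic coordinate change cleanly enough to extract exactly $\frac12(\ell-m)$ and not something off by a half-integer. The subtlety is that the ``$\frac12$'' per excess-over-a-tree difference direction (there are $\ell_s-1$ of them per cycle of length $\ell_s$, hence $\ell-m$ total) comes from balancing: the cutoff lets one trade a full $x$-derivative for $\epsilon^{1-}$ along a compressed direction, while on the right-hand side one only has $H^{1/2+}$ in $\boldsymbol{p}$ and $H^{1/2+}$ in $\boldsymbol{q}$ to pay with, so the effective gain per compressed direction is $\epsilon^{1/2-}$ after accounting for the $\boldsymbol{q}$-regularity one must spend; getting the split of derivatives among the $p$'s and $q$'s right so that the surviving norm is exactly $H^{1/2+}_{\boldsymbol{p}}H^{1/2+}_{\boldsymbol{q}}$ is the delicate point. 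A safe way to organize this is to first do the $\ell=2$ case in complete detail following Example \ref{EX:2cycles} (there the coordinates $(y_{12},y_1,\mu_{12},\mu_1)$ are $\epsilon$-independent, which trivializes the count and shows $\frac12(\ell-m)=\frac12$), and then induct on $\ell$ by peeling off one transposition at a time, each peel contributing one more compressed difference direction and one more factor $\epsilon^{1/2-}$; the $\epsilon^{0-}$ loss from the $P_N$ cutoff's $\delta$ is absorbed into the ``$-$'' in the final exponent and in the $H^{1/2+}$ indices. I would also remark that tightness of the power $\frac12(\ell-m)$ is exactly what is illustrated by \eqref{E:irregularity} (the $k=2$, $\pi=(12)$ case, with no frequency restriction the exponent degrades), so no improvement is possible.
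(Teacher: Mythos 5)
Your reduction to a single nontrivial cycle (tensoring over disjoint cycles, fixed points being harmless) and your use of the frequency cutoff to convert all derivatives into $\epsilon^{0-}$ losses both match the paper. But the core quantitative step --- where the factor $\epsilon^{\frac12(\ell-1)}$ actually comes from --- is misidentified, and the mechanism you propose does not work.

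First, the ``pointwise frequency inequality'' you want, namely
$\langle\boldsymbol{\eta}_c\rangle^{1+}\langle\cdot\rangle^{\frac12+}\lesssim \epsilon^{\frac12(\ell-1)-}\langle\boldsymbol{u}_c\rangle^{\frac12+}\langle\boldsymbol{w}_c\rangle^{\frac12+}$
on the support of the cutoff, is false: at frequencies of size $O(1)$ (which survive every cutoff) both sides' weights are $O(1)$, and the inequality would force $1\lesssim\epsilon^{\frac12(\ell-1)}$. No multiplier comparison on the Fourier side can produce a positive power of $\epsilon$. The gain is not a frequency-weight gain but a phase-space volume gain: for fixed $\boldsymbol{\xi}_r$, the function $\tilde f_{N,\pi}^{(r)}$ is supported in an $\epsilon$-thin neighborhood of the diagonal in the $\ell-1$ three-dimensional difference directions, so its $L^2_{\boldsymbol{x}_r}$ norm, after Sobolev-supping out the complementary $(\boldsymbol{p},\boldsymbol{q})$ variables at the cost of $\frac32+$ derivatives each, carries the square root of the volume compression, i.e.\ $\epsilon^{3(r-1)/2}$ --- not the $\ell-1$ ``honest powers of $\epsilon$'' in your count, which is off by the dimension factor $3$. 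Extracting this requires the $L^\infty_{\boldsymbol{\xi}}$ norm (paid for by the spatial cutoff in $\boldsymbol{\xi}$, contributing $\epsilon^{-\frac32\delta r}$) and a genuine change of variables $\boldsymbol{x}_r\mapsto(p_1,\epsilon q_2,\ldots,\epsilon q_r)$, neither of which appears in your argument.

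Second, you never reconcile the derivative budget: the maximal-gain bound costs $\langle\nabla_{\boldsymbol{p}_r}\rangle^{\frac32+}\langle\nabla_{\boldsymbol{q}_r}\rangle^{\frac32+}$ on $\tilde g$, but the right side of \eqref{E:PC103} only allows $\frac12+$ derivatives. The paper closes this by interpolating the maximal-gain bound against the trivial $L^2$ bound (Jacobian $O(1)$, no gain, no derivatives); at interpolation parameter $\tfrac13$ this yields exactly $\epsilon^{(r-1)/2-}$ with $H^{\frac12+}_{\boldsymbol{p}_r}H^{\frac12+}_{\boldsymbol{q}_r}$ on the right. This interpolation is the step that produces the ``$\tfrac12$'' you were worried about bookkeeping, and it is absent from your proposal; your ``effective gain of $\epsilon^{1/2-}$ per compressed direction after accounting for the $\boldsymbol{q}$-regularity'' is asserted rather than derived. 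The proposed induction by peeling off transpositions is also not needed (and would be awkward, since the cycle coordinates of an $\ell$-cycle are not composed from those of transpositions); the paper treats a general $r$-cycle in one stroke.
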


\begin{proof}
Assume that $k=r$ and $\pi=(12\cdots r)$. Due the frequency cutoffs included
in $\tilde{P}_N^{(r)}$, 
\begin{equation}  \label{E:PC104}
\| \tilde{P}_N^{(r)} \tilde f_{N,\pi}^{(r)} \|_{H_{\boldsymbol{x}%
_r}^{1+\delta} H_{\boldsymbol{\xi}_r}^{\frac12+\delta}} \lesssim
\epsilon^{-\delta r (1+\delta)} \epsilon^{-\delta r(\frac12+\delta)} \| 
\tilde{P}_N^{(r)} \tilde f_{N,\pi}^{(r)} \|_{L_{\boldsymbol{x}_r}^2 L_{%
\boldsymbol{\xi}_r}^2}
\end{equation}
We will obtain two bounds: 
\begin{equation}  \label{E:PC100}
\| \tilde{P}_N^{(r)} \tilde f_{N,\pi}^{(r)}(\boldsymbol{x}_r, \boldsymbol{\xi%
}_r) \|_{L_{\boldsymbol{x}_r}^2 L_{\boldsymbol{\xi}_r}^2 } \lesssim \|
\tilde g_{N,\pi}^{(r)}(\boldsymbol{p}_r, \boldsymbol{q}_r) \|_{L_{%
\boldsymbol{p}_r}^2 L_{\boldsymbol{q}_r}^2}
\end{equation}
and 
\begin{equation}  \label{E:PC101}
\| \tilde{P}_N^{(r)} \tilde f_{N,\pi}^{(r)}(\boldsymbol{x}_r, \boldsymbol{\xi%
}_r) \|_{L_{\boldsymbol{x}_r}^2 L_{\boldsymbol{\xi}_r}^2 } \lesssim
\epsilon^{3(r-1)/2}\epsilon^{-\frac32 \delta r} \| \langle \nabla_{%
\boldsymbol{p}_r}\rangle^{\frac32+} \langle \nabla_{\boldsymbol{q}_r}
\rangle^{\frac32+} \tilde g_{N,\pi}^{(r)}(\boldsymbol{p}_r, \boldsymbol{q}%
_r) \|_{L_{\boldsymbol{p}_r}^2 L_{\boldsymbol{q}_r}^2}
\end{equation}
The first estimate \eqref{E:PC100} just follows by dropping $\tilde{P}%
_N^{(r)}$ and using the fact that the Jacobian for the variable conversion $(%
\boldsymbol{x}_r, \boldsymbol{\xi}_r) \mapsto (\boldsymbol{p}_r, \boldsymbol{%
q}_r)$ is $O(1)$ (independent of $\epsilon$).

The second estimate \eqref{E:PC101} is proved as follows. Due to the spatial
cutoff in $\boldsymbol{\xi}_r$ included in $\tilde{P}_N^{(r)}$ 
\begin{equation}  \label{E:PC102}
\| \tilde{P}_N^{(r)} \tilde f_{N,\pi}^{(r)} \|_{L_{\boldsymbol{x}_r}^2 L_{%
\boldsymbol{\xi}_r}^2} \lesssim \epsilon^{-\frac32 \delta r} \| \tilde
f_{N,\pi}^{(r)} \|_{L_{\boldsymbol{\xi}_r}^ \infty L_{\boldsymbol{x}_r}^2 }
\end{equation}
where at this point we discarded $\tilde{P}_N^{(r)}$. On the inside, sup out
over $p_2, \ldots, p_r, q_1$. Let 
\begin{equation*}
\tilde q_j = \epsilon q_j \,, \qquad j=1,\ldots, r
\end{equation*}
Then, for fixed $\boldsymbol{\xi}_r$, the mapping 
\begin{equation*}
\boldsymbol{x}_r \mapsto (p_1, \tilde q_2, \ldots, \tilde q_r)
\end{equation*}
is invertible with $O(1)$ Jacobian (independent of $\epsilon$). For fixed $%
\boldsymbol{\xi}_r$, implementing this change of variable gives 
\begin{equation*}
\| \tilde f_{N,\pi}^{(r)}(\boldsymbol{x}_r, \boldsymbol{\xi}_r) \|_{L_{%
\boldsymbol{x}_r}^2 } \lesssim \| \tilde g_{N,\pi}^{(r)}(\boldsymbol{p}_r,
q_1 , \epsilon^{-1}\tilde q_2, \ldots, \epsilon^{-1}\tilde q_r)
\|_{L_{p_1}^2 L_{\tilde q_2 \cdots \tilde q_r}^2 L_{p_2 \cdots p_r}^\infty
L_{q_1}^\infty }
\end{equation*}
By Sobolev embedding, 
\begin{equation*}
\| \tilde f_{N,\pi}^{(r)}(\boldsymbol{x}_r, \boldsymbol{\xi}_r) \|_{L_{%
\boldsymbol{x}_r}^2 } \lesssim \| \langle \nabla_{p_2} \rangle^{\frac32+}
\cdots \langle \nabla_{p_r} \rangle^{\frac32+}\langle \nabla_{q_1}
\rangle^{\frac32+} \tilde g_{N,\pi}^{(r)}(\boldsymbol{p}_r, q_1 ,
\epsilon^{-1}\tilde q_2, \ldots, \epsilon^{-1}\tilde q_r) \|_{L_{\boldsymbol{%
p}_k}^2 L_{q_1}^2 L_{\tilde q_2 \cdots \tilde q_r}^2 }
\end{equation*}
Changing variable $\tilde q_j \to q_j$ for $j=2,\ldots, r$, we obtain 
\begin{equation*}
\| \tilde f_{N,\pi}^{(r)}(\boldsymbol{x}_r, \boldsymbol{\xi}_r) \|_{L_{%
\boldsymbol{x}_r}^2 } \lesssim \epsilon^{3(r-1)/2} \| \langle \nabla_{%
\boldsymbol{p}_r}\rangle^{\frac32+} \langle \nabla_{\boldsymbol{q}_r}
\rangle^{\frac32+} \tilde g_{N,\pi}^{(r)}(\boldsymbol{p}_r, \boldsymbol{q}%
_r) \|_{L_{\boldsymbol{p}_r}^2 L_{\boldsymbol{q}_r}^2}
\end{equation*}
By interpolating \eqref{E:PC100} and \eqref{E:PC101}. 
\begin{equation*}
\| \tilde f_{N,\pi}^{(r)}(\boldsymbol{x}_r, \boldsymbol{\xi}_r) \|_{L_{%
\boldsymbol{x}_r}^2 L_{\boldsymbol{\xi}_r}^2 } \lesssim \epsilon^{(r-1)/2-}
\| \langle \nabla_{\boldsymbol{p}_r}\rangle^{\frac12+} \langle \nabla_{%
\boldsymbol{q}_r} \rangle^{\frac12+} \tilde g_{N,\pi}^{(r)}(\boldsymbol{p}%
_r, \boldsymbol{q}_r) \|_{L_{\boldsymbol{p}_r}^2 L_{\boldsymbol{q}_r}^2}
\end{equation*}
Combining with \eqref{E:PC104} gives the claimed bound \eqref{E:PC103} in
the case of one cycle of length $r$, so $\ell=r$ and $m=1$. The general case
follows by separately treating each collection of coordinates in a disjoint
cycle.
\end{proof}

With these preliminaries out of the way, we can now prove Theorem \ref%
{T:compactness}.

\begin{proof}[Proof of Theorem \protect\ref{T:compactness}]
We verify the condition \eqref{E:S3-01} in Lemma \ref{L:equi}. To this end,
fix $k\geq 1$, and for notational convenience we will take $s=0$. Since the
projection $P_N^{(k)}$ can just be transferred to the test function $J^{(k)}$%
, where it has no effect on estimates, we will drop it from the exposition.
We need to show that for fixed Schwartz class $J(\boldsymbol{x}_k, 
\boldsymbol{v}_k)$, 
\begin{equation}  \label{E:S430}
\left| \int_{{\boldsymbol{x}}_k, {\boldsymbol{v}}_k} \overline{J}^{(k)}\;
(f_N^{(k)}(t)- f_N^{(k)}(0)) \, d{\boldsymbol{x}}_k d{\boldsymbol{v}}_k
\right| \lesssim |t|
\end{equation}
where the implicit constant can depend on $k$ and $J$, but not on $N$.

Appealing to the second Duhamel iterate \eqref{E:S402}, 
\begin{equation}  \label{E:S431}
\begin{aligned} f_N^{(k)}(t)-f_N^{(k)}(0) &= [S^{(k)}(t)-I]f_N^{(k)}(0) +
\mathcal{D}^{(k)} R^{2(k)}_N f_N^{(k)}(t) + \mathcal{D}^{(k)} R^{3(k+1)}_N
f_N^{(k+1)}(t) \\ & \qquad + \mathcal{D}^{(k)} (Q_N^{(k+1)}+R_N^{4(k+1)})
f_N^{(k+1)}(t) + \mathcal{D}^{(k)} R^{5(k+2)}_N f_N^{(k+2)}(t) \end{aligned}
\end{equation}
where we adopt the notation of \S \ref{S:preparation} for the Duhamel
operators $\mathcal{D}^{(k)}$, the collision operators $Q_N^{(k+1)}$, and
the remainder operators $R_N^{2(k)}$, $R_N^{3(k+1)}$, $R_N^{4(k+1)}$, and $%
R_N^{5(k+2)}$. Specifically, 
\begin{equation*}
R_N^{2(k)}f_N^{(k)} = \epsilon^{-1/2}A_\epsilon^{(k)}f_N^{(k)} \,,
\end{equation*}
\begin{equation*}
R_N^{3(k+1)}f_N^{(k+1)} = N\epsilon^{-1/2}B_\epsilon^{(k+1)}S^{(k+1)}
f_N^{(k+1)}(0) \,,
\end{equation*}
$Q_N^{(k+1)}$ is defined in \eqref{E:S411a}, $R_N^{4(k+1)}$ is defined in %
\eqref{E:S411b}, and $R_N^{5(k+2)}$ is defined in \eqref{E:S412}.

Each term in \eqref{E:S431} is substituted into \eqref{E:S430}. Each of the
five pieces is then estimated by Cauchy Schwarz with $J$ in the Sobolev norm
dual to the norm on the left side of needed estimate (Lemma \ref{L:basic-A}--%
\ref{L:basic-B}, Proposition \ref{P:QEstimates}--\ref{P:R5Estimates}).

For the $S^{(k)}(t)-I$ term, we use the trivial estimate 
\begin{equation*}
\| \langle \boldsymbol{\eta}_k \rangle^{-1} \langle \boldsymbol{v}_k
\rangle^{-1} |e^{it\boldsymbol{\eta}_k\cdot \boldsymbol{v}_k}-1| \hat
f^{(k)}(0) \|_{L^2_{\boldsymbol{\eta}_k} L^2_{\boldsymbol{v}_k}} \lesssim
|t| \| \hat f^{(k)}(0) \|_{L^2_{\boldsymbol{\eta}_k} L^2_{\boldsymbol{v}_k}}
\end{equation*}
For the $R^{2(k)}$ term, the needed estimate follows from Lemma \ref%
{L:basic-A} with $s=\frac12+$, and the $|t|$ factor on the right-side comes
from the outer Duhamel operator. For the $R^{3(k+1)}$ term, the needed
estimate follows from Lemma \ref{L:basic-B} with $s=\frac12+$, combined with
the straightforward estimate 
\begin{align*}
\hspace{0.3in}&\hspace{-0.3in} \| e^{-t \boldsymbol{\eta}_{k+1}\cdot
\nabla_{\xi_{k+1}}} \check f_N^{(k+1)}(0, \boldsymbol{\eta}_{k+1}, 
\boldsymbol{\xi}_k, 0) \|_{L_t^\infty L^2_{\boldsymbol{\eta}_{k+1}} L^2_{%
\boldsymbol{\xi}_k}} \\
&\lesssim \| e^{-\boldsymbol{\eta}_{k+1}\cdot \nabla_{\xi_{k+1}}} \check
f_N^{(k+1)}(0, \boldsymbol{\eta}_{k+1}, \boldsymbol{\xi}_{k+1}) \|_{L^2_{%
\boldsymbol{\eta}_{k+1}} L^2_{\boldsymbol{\xi}_k} L^\infty_{\xi_{k+1}}}
\end{align*}
For the $Q_N^{(k+1)}$ the needed estimate follows from Proposition \ref%
{P:QEstimates}. For the $R_N^{4(k+1)}$ the needed estimate follows from
Proposition \ref{P:R4Estimates}. For the $R_N^{5(k+2)}$ the needed estimate
follows from Proposition \ref{P:R5Estimates}. Note that these estimates
incorporate the outer Duhamel operator, but still generate a factor of $|t|$.
\end{proof}

\section{Convergence to the Boltzmann Hierarchy\label{sec:Convergence}}

Recall the definition of the Boltzmann hierarchy (\ref{hierarchy:Boltzmann}) 
\begin{equation*}
\partial _{t}f^{(k)}+{\boldsymbol{v}}_{k}\cdot \nabla _{{\boldsymbol{x}}%
_{k}}f^{(k)}=Q^{(k+1)}f^{(k+1)}\,,\qquad k\geq 1
\end{equation*}%
with the collision operator given as (\ref{eqn:collision kernel for
hierarchy}). As we mostly work in the $\vee $ side in this section, we also
recall the $\vee $ side collision operator given by \eqref{E:S435} and %
\eqref{E:S434}.

\begin{theorem}
\label{T:convergence} Suppose that $f_\infty=\{f^{(k)}\}_{k=1}^{\infty }$ is
any limit point (convergence in $C([0,T];(\Lambda ^{\ast },\rho ))$) of $%
\mathcal{P}\mathcal{F}$, where $\mathcal{F}=\{f_{N}\}_{N}$ is a collection
of hierarchies satisfying the hypotheses of Theorem \ref{T:compactness}.
Then $f_\infty$ satisfies the Boltzmann hierarchy (\ref{hierarchy:Boltzmann}%
) with initial condition $f_\infty(0)=\lim_{N\rightarrow \infty }f_{N}(0)$
(convergence in $(\Lambda ^{\ast },\rho )$) and satisfies $f^{(k)} \in
C([0,T];(H_{{\boldsymbol{x}}_{k}}^{1+}L_{{\boldsymbol{v}}_{k}}^{2,\frac{1}{2}%
+})_{\text{wk*}})$ along with the bounds 
\begin{equation*}
\forall \;k\geq 1\,,\qquad \Vert f^{(k)}(t,{\boldsymbol{x}}_{k},{\boldsymbol{%
v}}_{k})\Vert _{L^\infty_{[0,T]} H_{{\boldsymbol{x}}_{k}}^{1+}L_{{%
\boldsymbol{v}}_{k}}^{2,\frac{1}{2}+}}\leq 2^{-k}C^{k}
\end{equation*}
\end{theorem}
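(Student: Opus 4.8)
The plan is to pass the limit in the weak (Duhamel) formulation of the BBGKY hierarchy, using that the test functions generated by the adjoint collision operator land in the space $\Lambda$ on which the topology $\rho$ is built. Fix a limit point $f_\infty$, meaning there is a subsequence $N_j\to\infty$ with $P_{N_j}f_{N_j}\to f_\infty$ in $C([0,T];(\Lambda^*,\rho))$. The uniform bound $\|f^{(k)}\|_{L^\infty_{[0,T]}H^{1+}_{\boldsymbol x_k}L^{2,\frac12+}_{\boldsymbol v_k}}\le 2^{-k}C^k$ follows immediately: by Lemma~\ref{L:PCcutoff} and assumption \eqref{E:S3-06}, the cutoff hierarchy $P_Nf_N$ satisfies $\|P_N^{(k)}f_N^{(k)}\|_{C([0,T];H^{1+}_{\boldsymbol x_k}L^{2,\frac12+}_{\boldsymbol v_k})}\le 2^{-k}C^k(1+\sum_{\pi\ne I}\epsilon^{(\ell(\pi)-m(\pi))/2})$, and for $\pi\ne I$ the exponent $\ell(\pi)-m(\pi)\ge 1$ is strictly positive, so the sum tends to $0$ and the right side is $\le 2^{-k+1}C^k$ uniformly in $N$ for large $N$; since the closed ball of radius $2^{-k+1}C^k$ in $H^{1+}_{\boldsymbol x_k}L^{2,\frac12+}_{\boldsymbol v_k}$ is weak-$*$ closed, the limit inherits the bound (after absorbing the factor $2$ into $C$, or just stating the bound with this constant). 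Continuity in time of each $f^{(k)}$ into the weak-$*$ topology is automatic from convergence in $C([0,T];(\Lambda^*,\rho))$.

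The core of the proof is identifying the equation. I would test the second Duhamel iterate \eqref{E:S402} against a fixed Schwartz $J=J^{(k)}(\boldsymbol x_k,\boldsymbol v_k)$ and show
\begin{align*}
\langle J, f_N^{(k)}(t)\rangle = \langle J, S^{(k)}(t)f_N^{(k)}(0)\rangle
&+ \langle J, \mathcal D^{(k)} Q_N^{(k+1)} f_N^{(k+1)}(t)\rangle\\
&+ \langle J, \mathcal D^{(k)}(R^{2(k)}_N+R^{3(k+1)}_N+R^{4(k+1)}_N+R^{5(k+2)}_N)f_N^{(k+\cdot)}(t)\rangle.
\end{align*}
The four remainder terms vanish as $N\to\infty$: by Cauchy--Schwarz, each is bounded by $J$ paired in the dual Sobolev norm against the right sides of Lemma~\ref{L:basic-A} ($s=\frac12+$), Lemma~\ref{L:basic-B} ($s=\frac12+$), Proposition~\ref{P:R4Estimates}, and Proposition~\ref{P:R5Estimates} respectively, applied to $f_{N}^{(k+\cdot)}$; splitting $f_N^{(k+\cdot)}=\sum_\pi f_{N,\pi}^{(k+\cdot)}$, the core part ($\pi=I$) carries an explicit $\epsilon^{0+}$ prefactor (for $R^2,R^4,R^5$) or the $\epsilon^{0+}$ from the $\hat\phi$-vanishing (for $R^3$ via Lemma~\ref{L:basic-B}), while the $\pi\ne I$ pieces are handled by the permutation-coordinate estimates of \S\ref{S:PCE} (Corollary~\ref{C:12nonlin}, Proposition~\ref{P:termIV}, and their analogues), which is exactly where the $\epsilon^{1/2+}$-dense symmetry hypothesis is consumed to kill these $O(1)$-in-$X_\pi$ terms. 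For the main term, the key point (as in \S\ref{sec:CompactnessConvergence}, footnote and the discussion below \eqref{E:CV02}) is that the adjoint of $Q_{i,k+1}$ maps the Schwartz $J^{(k)}$ to a test function $H=H^{(k+1)}\in H^{-\frac34-}_{\boldsymbol x_{k+1}}L^{2,\frac12-}_{\boldsymbol v_{k+1}}$ — precisely in the predual $\Lambda$ of the space where the cutoff hierarchy lives — so $\langle J,\mathcal D^{(k)}Q_N^{(k+1)}f_N^{(k+1)}\rangle = \langle H, f_N^{(k+1)}\rangle$ (up to the time integral from $\mathcal D^{(k)}$), and convergence $P_{N_j}f_{N_j}\to f_\infty$ in $\rho$, together with Proposition~\ref{P:Q0Estimates} controlling the $\epsilon=0$ limit operator, passes this to $\langle H, f^{(k+1)}\rangle = \langle J,\mathcal D^{(k)}Q^{(k+1)}f^{(k+1)}\rangle$. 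One also checks that the difference $\langle H,(Q_N^{(k+1)}-Q^{(k+1)})f_N^{(k+1)}\rangle\to 0$ using \eqref{E:S433b} versus \eqref{E:S434}: the $\epsilon\to0$ limit of \eqref{E:S433b} is \eqref{E:S434} pointwise, and the difference of phases/potential arguments is $O(\epsilon)$ on the support where $J$ (hence $H$) localizes, combined with the uniform-in-$N$ bound of Proposition~\ref{P:QEstimates} to dominate.

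Assembling, $\langle J, f^{(k)}(t)\rangle = \langle J, S^{(k)}(t)f^{(k)}(0)\rangle + \langle J,\mathcal D^{(k)}Q^{(k+1)}f^{(k+1)}(t)\rangle$ for all Schwartz $J$ and all $t\in[0,T]$, which is the weak (integrated) form of the Boltzmann hierarchy \eqref{hierarchy:Boltzmann}; since the bounds above guarantee $Q^{(k+1)}f^{(k+1)}\in L^1_{[0,T]}(H^{\frac34+}_{\boldsymbol x_k}L^{2,\frac12+}_{\boldsymbol v_k})$ (by Proposition~\ref{P:Q0Estimates}) and $f^{(k)}\in C_{\text{wk*}}$, this promotes to the stated distributional form $\partial_t f^{(k)}+\boldsymbol v_k\cdot\nabla_{\boldsymbol x_k}f^{(k)}=Q^{(k+1)}f^{(k+1)}$ with $f^{(k)}(0)=\lim_N f_N^{(k)}(0)$. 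The main obstacle is the treatment of the $\pi\ne I$ cycle contributions in the remainder terms: unlike the core, these are $O(1)$ in their natural norms, and the only reason they vanish is the precisely-tuned $\epsilon^{1/2+}$-dense symmetry-jitter hypothesis exploited through the delicate Duhamel-sandwich cancellations of \S\ref{S:PCE}; verifying that these estimates, stated there for $k=1$, extend to the general-$k$ weak pairing needed here — and that the $R^{5}$-type cycle terms (whose handling is only sketched in the excerpt) genuinely follow the same pattern — is the technically heaviest part.
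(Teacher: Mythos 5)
Your overall architecture coincides with the paper's: test the second Duhamel iterate against a Schwartz $J$, kill $R_N^{2(k)},R_N^{3(k+1)},R_N^{4(k+1)},R_N^{5(k+2)}$ with the $\epsilon^{0+}$ prefactors of Lemmas \ref{L:basic-A}--\ref{L:basic-B} and Propositions \ref{P:QEstimates}--\ref{P:R5Estimates} plus the \S \ref{S:PCE} cycle estimates, obtain the uniform bound from Lemma \ref{L:PCcutoff} and weak-$*$ closedness, and pass the collision term to the limit by transferring the kernel of $Q_{i,k+1}$ onto $J$ to produce a test function $H\in L^{2,-\frac34-}_{\boldsymbol{\eta}_{k+1}}H^{-\frac12-}_{\boldsymbol{\xi}_{k+1}}$ lying in $\Lambda$, so that $\rho$-convergence finishes the density-convergence half. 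All of that matches the paper.

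The genuine gap is in your operator-convergence half, i.e.\ the claim that $\langle J,\mathcal{D}^{(k)}(Q_N^{(k+1)}-Q^{(k+1)})f_N^{(k+1)}\rangle\to 0$ because ``the difference of phases/potential arguments is $O(\epsilon)$ on the support where $J$ localizes.'' Comparing \eqref{E:S433b} with \eqref{E:S434}, the discrepancies are of size $O(\epsilon s)$ (phases like $e^{-i\sigma s(\epsilon\eta_i-2\epsilon\eta_{k+1}+2y)y/2}$, arguments $\xi_i-sy-\epsilon s\eta_i+\epsilon s\eta_{k+1}$, the time shift $t-\epsilon s$, and the upper limit $t/\epsilon$ versus $\infty$), and the internal variable $s$ runs up to $t/\epsilon$, so these errors are $O(1)$ on most of the integration range; neither $J$ nor the induced $H$ localizes $s$, $y$, $\eta_{k+1}$ or $\xi_{k+1}$. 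This is exactly why the paper first proves Lemma \ref{L:I-cut}, discarding $s\geq\epsilon^{-1/2}$ by exploiting the $s^{-1+}\langle s\rangle^{0-}$ integrability from the proof of Proposition \ref{P:QEstimates}, then reorders the $t'$ and $s$ integrations as in \eqref{E:S445} so the $t-\epsilon s$ shift lands on the test function, and then runs a fundamental-theorem-of-calculus argument in an interpolation parameter $\theta$ whose five generated terms each carry $\epsilon$ or $\epsilon s\leq\epsilon^{1/2}$ but cost extra weights $\langle\eta_{k+1}\rangle^{1}\langle\nabla_{\xi_i}\rangle^{1}\langle\nabla_{\xi_{k+1}}\rangle^{1}$; since that exceeds the regularity available for $P_N^{(k+1)}f_N^{(k+1)}$, a final interpolation with the no-gain bound of Proposition \ref{P:QEstimates} is needed to land on \eqref{E:S442} with only $0+$ extra derivatives. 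None of this is replaceable by a pointwise ``$O(\epsilon)$'' argument, and without it your main term does not close. A secondary point: Proposition \ref{P:QEstimates} cannot be applied to the full $f_N^{(k+1)}$, whose cycle components are not uniformly bounded in the required $\langle\eta\rangle^{\frac34+}\langle\nabla_{\xi_{k+1}}\rangle^{\frac12+}$ norm; you must insert $P_N^{(k+1)}$ and treat $\mathcal{D}^{(k)}Q_N^{(k+1)}(I-P_N^{(k+1)})f_N^{(k+1)}$ separately (cycle parts by \S \ref{S:PCE}, core part again by moving the operator onto the test function), as the paper does.
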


\begin{proof}
It suffices to \emph{assume} that $f_\infty$ satisfies the Boltzmann
hierarchy (\ref{hierarchy:Boltzmann}) and then prove that 
\begin{equation*}
\sup_{t\in \lbrack 0,T]}\rho (P_Nf_{N}(t),f_\infty(t))\rightarrow 0\text{ as 
}N\rightarrow \infty \text{ (along some subsequence)}
\end{equation*}%
By the definition of $\rho $ given in \eqref{E:S3-07}, it suffices to show
for that for any Schwartz $J$, 
\begin{equation}
\left\vert \sup_{t\in \lbrack 0,T]}\int_{\boldsymbol{x}_{k},\boldsymbol{v}%
_{k}}\bar{J}^{(k)}\,(P_N^{(k)}f_{N}^{(k)}(t)-f^{(k)}(t))\,d\boldsymbol{x}%
_{k}d\boldsymbol{v}_{k}\right\vert \lesssim \epsilon ^{0+}  \label{E:S440}
\end{equation}
where the implicit constant can depend on $k$ and $J$.

Appealing to the second Duhamel iterate \eqref{E:S402}, 
\begin{equation}  \label{E:S437}
\begin{aligned} f_N^{(k)}(t) &= S^{(k)}(t)f_N^{(k)}(0) + \mathcal{D}^{(k)}
R^{2(k)}_N f_N^{(k)}(t) + \mathcal{D}^{(k)} R^{3(k+1)}_N f_N^{(k+1)}(t) \\ &
\qquad + \mathcal{D}^{(k)} (Q_N^{(k+1)}+R_N^{4(k+1)}) f_N^{(k+1)}(t) +
\mathcal{D}^{(k)} R^{5(k+2)}_N f_N^{(k+2)}(t) \end{aligned}
\end{equation}
where we adopt the notation of \S \ref{S:preparation} for the Duhamel
operators $\mathcal{D}^{(k)}$, the collision operators $Q_N^{(k+1)}$, and
the remainder operators $R_N^{2(k)}$, $R_N^{3(k+1)}$, $R_N^{4(k+1)}$, and $%
R_N^{5(k+2)}$. Specifically, 
\begin{equation*}
R_N^{2(k)}f_N^{(k)} = \epsilon^{-1/2}A_\epsilon^{(k)}f_N^{(k)} \,, \quad
R_N^{3(k+1)}f_N^{(k+1)} = N\epsilon^{-1/2}B_\epsilon^{(k+1)}S^{(k+1)}
f_N^{(k+1)}(0) \,,
\end{equation*}
$Q_N^{(k+1)}$ is defined in \eqref{E:S411a}, $R_N^{4(k+1)}$ is defined in %
\eqref{E:S411b}, and $R_N^{5(k+2)}$ is defined in \eqref{E:S412}.

The Duhamel representation of \eqref{E:S432} is 
\begin{equation}  \label{E:S438}
f^{(k)}(t) = S^{(k)}(t)f^{(k)}(0) +\mathcal{D}^{(k)} Q^{(k+1)} f^{(k+1)}(t)
\end{equation}
Taking the difference of \eqref{E:S437} and \eqref{E:S438} gives 
\begin{equation}  \label{E:S439}
\begin{aligned} & P_N^{(k)}f_N^{(k)}(t)-f^{(k)}(t) \\ &=
S^{(k)}(t)[f_N^{(k)}(0)-f^{(k)}(0)] + \mathcal{D}^{(k)} R^{2(k)}_N
f_N^{(k)}(t) + \mathcal{D}^{(k)} R^{3(k+1)}_N f_N^{(k+1)}(t) \\ & \qquad +
\mathcal{D}^{(k)}[Q_N^{(k+1)}P_N^{(k+1)}f_N^{(k+1)}(t)-Q^{(k+1)}f^{(k+1)}(t)] + \mathcal{D}^{(k)}R_N^{4(k+1)} f_N^{(k+1)}(t) \\ & \qquad + \mathcal{D}^{(k)} R^{5(k+2)}_N f_N^{(k+2)}(t) + (P_N^{(k)}-I)f_N^{(k)}(t) + \mathcal{D}^{(k)}Q_N^{(k+1)}(I-P_N^{(k+1)})f_N^{(k+1)}(t) \end{aligned}
\end{equation}
We substitute each of the terms in \eqref{E:S439} into \eqref{E:S440}. The
terms involving $R^{2(k)}_N$, $R^{3(k+1)}_N$, $R_N^{4(k+1)}$, $R^{5(k+2)}_N$
are estimated exactly as in the proof of Theorem \ref{T:compactness} by
appealing to Lemma \ref{L:basic-A}--\ref{L:basic-B}, Proposition \ref%
{P:QEstimates}--\ref{P:R5Estimates} after Cauchy-Schwarz placing $J$ in the
corresponding dual Sobolev norm. Note that each of the estimates here yields
a factor $\epsilon^{0+}$, which are needed here although were not needed for
the proof of Theorem \ref{T:compactness}. It is also trivial to dispose of
the term $S^{(k)}(t)[f_N^{(k)}(0)-f^{(k)}(0)]$ since it is assumed that $%
f_N(0)\to f_\infty(0)$ in $(\Lambda^*,\rho)$. The term involving $%
(P_N^{(k)}-I)f_N^{(k)}(t)$ goes to zero due to the smoothness and decay of $%
J^{(k)}$. All cycle terms of $\mathcal{D}%
^{(k)}Q_N^{(k+1)}(I-P_N^{(k+1)})f_N^{(k+1)}(t)$ go to zero by the estimates
of \S \ref{S:PCE}, while the core component of $f_N^{(k+1)}(t)$ is handled
by transferring the collision operator onto the test function, generating a
new test function in $H_{\boldsymbol{x}_{k+1}}^{-\frac34-}L_{\boldsymbol{v}%
_{k+1}}^{2,-\frac12-}$, as in the proof of \eqref{E:S443} below (see %
\eqref{E:CV02}).

Thus, the crux of the proof is to handle $\mathcal{D}%
^{(k)}[Q_N^{(k+1)}P_N^{(k+1)}f_N^{(k+1)}(t)-Q^{(k+1)}f^{(k+1)}(t)]$. To
shorten formulae will drop the $P_N^{(k+1)}$ operator from here on out. It
is helpful to recall \eqref{E:S433} and \eqref{E:S434}: 
\begin{equation*}
\begin{aligned} \hspace{0.3in}&\hspace{-0.3in} \check
Q_{i,k+1}^{\epsilon}\check f_N^{(k+1)}(t) = - \sum_{\alpha,\sigma \in \{\pm
1\}} \alpha \sigma \int_{\eta_{k+1}} \int_y \int_{s=0}^{t/\epsilon}
\hat\phi(\epsilon \eta_{k+1} - y) \hat \phi(y) \\ &e^{i\alpha \xi_i
(\epsilon \eta_{k+1} - y)/2} e^{i\sigma \xi_i y/2} e^{-i\sigma s
(\epsilon\eta_i-2\epsilon \eta_{k+1} + 2y)y/2} \\ &\check
f_N^{(k+1)}(t-\epsilon s, \eta_1, \ldots, \eta_i-\eta_{k+1}, \ldots,
\eta_{k+1} , \xi_1 - \epsilon s \eta_1, \, . \,, \\ & \qquad \xi_i - sy -
\epsilon s\eta_i+\epsilon s \eta_{k+1}, \, . \,, \xi_k-\epsilon s
\eta_{k+1}, sy -s\epsilon\eta_{k+1}) \, d\eta_{k+1} \, dy \, ds \end{aligned}
\end{equation*}
and 
\begin{equation*}
\begin{aligned} \check Q_{i,k+1}\check f^{(k+1)}(t, \boldsymbol{\eta}_k,
\boldsymbol{\xi}_k) = - \sum_{\alpha,\sigma \in \{\pm 1\}} \alpha \sigma
\int_{\eta_{k+1}} \int_y \int_{s=0}^{\infty} |\hat \phi(y)|^2
e^{i(\sigma-\alpha)\xi_iy/2} e^{-i \sigma s|y|^2} &\\ \check f^{(k+1)}(t,
\eta_1, \, . \,, \eta_i-\eta_{k+1}, \, . \,, \eta_{k+1} , \xi_1 , \, . \,,
\xi_i-sy, \, . \, , \xi_k, sy ) \, d\eta_{k+1} \, dy \, ds & \end{aligned}
\end{equation*}
The relationship $N=\epsilon^{-3}$ is always fixed; for this proof we adjust
our notation for $f_N$ to $f_\epsilon$. Fix the test function $J$ and let 
\begin{equation*}
I_{i,k+1}^{\epsilon, \delta} (t) = \int_{\boldsymbol{\eta}_k, \boldsymbol{\xi%
}_k} \bar{\check J} \, \check{\mathcal{D}}^{(k)} \check
Q^\epsilon_{i,k+1}\check f_\delta^{(k+1)}(t) \,d\boldsymbol{\eta_k} \, d%
\boldsymbol{\xi_k}
\end{equation*}
In this notation, $\epsilon=0$ means $Q_{i,k+1}$ (the limiting collision
operator) and $\delta=0$ means $f_\delta^{(k+1)}$ (limiting value of $%
f_N^{(k+1)})$.

We must show that 
\begin{equation}  \label{E:S441}
\sup_{0\leq t \leq T} | I_{i,k+1}^{\epsilon,\epsilon}(t) -
I_{i,k+1}^{0,0}(t) | \to 0 \text{ as } \epsilon \to 0
\end{equation}

Before addressing \eqref{E:S441}, which will of course exploit cancelation
between the two terms, we examine $I_{i,k+1}^{\epsilon,\delta}(t)$
individually. Note that we can move the propagator in $\check{\mathcal{D}}%
^{(k)}$ onto $\check J$: 
\begin{equation*}
I_{i,k+1}^{\epsilon, \delta} (t) = \int_{t^{\prime} =0}^t \int_{\boldsymbol{%
\eta}_k, \boldsymbol{\xi}_k} \bar{\check J}(\boldsymbol{\eta}_k,\boldsymbol{%
\xi}_k+(t-t^{\prime} )\boldsymbol{\eta}_k) \, \check
Q^\epsilon_{i,k+1}\check f_\delta^{(k+1)}(t^{\prime} , \boldsymbol{\eta}_k, 
\boldsymbol{\xi}_k) \,d\boldsymbol{\eta_k} \, d\boldsymbol{\xi_k} \,
dt^{\prime}
\end{equation*}
With $\check Q_{i,k+1}^\epsilon$ written out, 
\begin{align*}
I_{i,k+1}^{\epsilon, \delta} (t) &= - \sum_{\alpha,\sigma \in \{\pm 1\}}
\alpha \sigma \int_{t^{\prime} =0}^t \int_{\boldsymbol{\eta}_k, \boldsymbol{%
\xi}_k} \bar{\check J}(\boldsymbol{\eta}_k,\boldsymbol{\xi}_k+(t-t^{\prime} )%
\boldsymbol{\eta}_k) \int_{\eta_{k+1}} \int_y
\int_{s=0}^{t^{\prime}/\epsilon} \\
&\hat\phi(\epsilon \eta_{k+1} - y) \hat \phi(y) e^{i\alpha \xi_i (\epsilon
\eta_{k+1} - y)/2} e^{i\sigma \xi_i y/2} e^{-i\sigma s
(\epsilon\eta_i-2\epsilon \eta_{k+1} + 2y)y/2} \\
&\check f_\delta^{(k+1)}(t^{\prime}-\epsilon s, \eta_1, \ldots,
\eta_i-\eta_{k+1}, \ldots, \eta_{k+1} , \xi_1 - \epsilon s \eta_1, \, . \,,
\\
& \qquad \xi_i - sy - \epsilon s\eta_i+\epsilon s \eta_{k+1}, \, . \,,
\xi_k-\epsilon s \eta_{k+1}, sy -s\epsilon\eta_{k+1}) \, d\eta_{k+1} \, dy
\, ds
\end{align*}
This formula is also valid when $\epsilon=0$ provided $t^{\prime}/\epsilon$
is replaced by $\infty$.

Our first observation is that we can discard $s\geq \epsilon^{-1/2}$.
Specifically, define (note the new superscript $\beta$): 
\begin{align*}
I_{i,k+1}^{\epsilon, \delta, \beta} (t) &= - \sum_{\alpha,\sigma \in \{\pm
1\}} \alpha \sigma \int_{t^{\prime} =0}^t \int_{\boldsymbol{\eta}_k, 
\boldsymbol{\xi}_k} \bar{\check J}(\boldsymbol{\eta}_k,\boldsymbol{\xi}%
_k+(t-t^{\prime} )\boldsymbol{\eta}_k) \int_{\eta_{k+1}} \int_y
\int_{s=0}^{\beta^{-1}} \\
&\hat\phi(\epsilon \eta_{k+1} - y) \hat \phi(y) e^{i\alpha \xi_i (\epsilon
\eta_{k+1} - y)/2} e^{i\sigma \xi_i y/2} e^{-i\sigma s
(\epsilon\eta_i-2\epsilon \eta_{k+1} + 2y)y/2} \\
&\check f_\delta^{(k+1)}(t^{\prime}-\epsilon s, \eta_1, \ldots,
\eta_i-\eta_{k+1}, \ldots, \eta_{k+1} , \xi_1 - \epsilon s \eta_1, \, . \,,
\\
& \qquad \xi_i - sy - \epsilon s\eta_i+\epsilon s \eta_{k+1}, \, . \,,
\xi_k-\epsilon s \eta_{k+1}, sy -s\epsilon\eta_{k+1}) \, d\eta_{k+1} \, dy
\, ds
\end{align*}
that restricts the $s$ integration range to $0\leq s \leq \beta^{-1}$.

\begin{lemma}
\label{L:I-cut} Uniformly in $\epsilon\geq 0$, $\delta\geq 0$, we have 
\begin{equation*}
\sup_{0\leq t \leq T} |I_{i,k+1}^{\epsilon, \delta} (t) -
I_{i,k+1}^{\epsilon, \delta, \beta} (t)| \lesssim (\epsilon \beta^{-1+} +
t\beta^{0+}) \| \langle \eta_i \rangle^{\frac34+} \langle \eta_{k+1}
\rangle^{\frac34+} \langle \nabla_{\xi_{k+1}}\rangle^{\frac12+} \check
f_\delta^{(k+1)} \|_{L_{[0,T]}^\infty L_{\boldsymbol{\eta}_{k+1}}^2 L_{%
\boldsymbol{\xi}_{k+1}}^2}
\end{equation*}
In particular, if we take $\beta=\epsilon^{1/2}$, we obtain an $%
\epsilon^{0+} $ prefactor.
\end{lemma}

\begin{proof}
From the definitions, 
\begin{align*}
& I_{i,k+1}^{\epsilon, \delta} (t) - I_{i,k+1}^{\epsilon, \delta, \beta} (t)
= - \sum_{\alpha,\sigma \in \{\pm 1\}} \alpha \sigma \int_{t^{\prime} =0}^t
\int_{\boldsymbol{\eta}_k, \boldsymbol{\xi}_k} \bar{\check J}(\boldsymbol{%
\eta}_k,\boldsymbol{\xi}_k+(t-t^{\prime} )\boldsymbol{\eta}_k)
\int_{\eta_{k+1}} \int_y \int_{s=\beta^{-1}}^{t^{\prime}/\epsilon} \\
&\hat\phi(\epsilon \eta_{k+1} - y) \hat \phi(y) e^{i\alpha \xi_i (\epsilon
\eta_{k+1} - y)/2} e^{i\sigma \xi_i y/2} e^{-i\sigma s
(\epsilon\eta_i-2\epsilon \eta_{k+1} + 2y)y/2} \\
&\check f_\delta^{(k+1)}(t^{\prime}-\epsilon s, \eta_1, \ldots,
\eta_i-\eta_{k+1}, \ldots, \eta_{k+1} , \xi_1 - \epsilon s \eta_1, \, . \,,
\\
& \qquad \xi_i - sy - \epsilon s\eta_i+\epsilon s \eta_{k+1}, \, . \,,
\xi_k-\epsilon s \eta_{k+1}, sy -s\epsilon\eta_{k+1}) \, d\eta_{k+1} \, dy
\, ds
\end{align*}
By Cauchy-Schwarz 
\begin{align*}
\hspace{0.3in}&\hspace{-0.3in} |I_{i,k+1}^{\epsilon, \delta} (t) -
I_{i,k+1}^{\epsilon, \delta, \beta} (t)| \lesssim \|\check J \|_{L^2_{%
\boldsymbol{\xi}_k} L^2_{\boldsymbol{\eta}_k}} \int_{t^{\prime}=0}^t
\int_{s=\beta^{-1}}^{t^{\prime}/\epsilon} \Big\| \int_{\eta_{k+1}} \int_y
|\hat \phi(\epsilon \eta_{k+1} - y)| |\hat\phi(y)| \\
& |\check f_\delta^{(k+1)}(t^{\prime}-\epsilon s, \eta_1, \ldots,
\eta_i-\eta_{k+1}, \, . \,, \eta_{k+1} , \xi_1 - \epsilon s \eta_1, \, . \,,
\\
& \qquad \xi_i - sy - \epsilon s\eta_i+\epsilon s \eta_{k+1}, \, . \,,
\xi_k-\epsilon s \eta_{k+1}, sy -s\epsilon\eta_{k+1})| \, dy \, d\eta_{k+1} %
\Big\|_{L^2_{\boldsymbol{\eta}_k}L^2_{\boldsymbol{\xi}_k}} \, ds
\end{align*}
By following the proof of Proposition \ref{P:QEstimates}, we obtain 
\begin{equation*}
\begin{aligned} &|I_{i,k+1}^{\epsilon, \delta} (t) - I_{i,k+1}^{\epsilon,
\delta, \beta} (t)| \\ &\lesssim \int_{t^{\prime}=0}^t
\int_{s=\beta^{-1}}^{t^{\prime}/\epsilon} s^{-1+}\langle s \rangle^{0-} \,
ds \, dt^{\prime} \; \| \langle \eta_i \rangle^{\frac34+} \langle \eta_{k+1}
\rangle^{\frac34+} \langle \nabla_{\xi_{k+1}}\rangle^{\frac12+} \check
f_\delta^{(k+1)} \|_{L_{[0,T]}^\infty L_{\boldsymbol{\eta}_{k+1}}^2
L_{\boldsymbol{\xi}_{k+1}}^2} \end{aligned}
\end{equation*}
Carrying out the $s$ integral, we obtain 
\begin{equation*}
\int_{t^{\prime}=0}^t \int_{s=\beta^{-1}}^{t^{\prime}/\epsilon}
s^{-1+}\langle s \rangle^{0-} \, ds \, dt^{\prime} \sim
\int_{t^{\prime}=0}^t \langle
\min(\beta^{-1},t^{\prime}/\epsilon)\rangle^{0-} \, dt^{\prime} \leq
\int_{t^{\prime}=0}^t [\max(\beta, \epsilon/t^{\prime})]^{0+} \, dt^{\prime}
\end{equation*}
The $t^{\prime}$ integral is carried out in two pieces. First, $%
t^{\prime}\leq \epsilon \beta^{-1}$, in which case $\max(\beta^{-1},
t^{\prime}/\epsilon) = t^{\prime}/\epsilon$. Second, $t^{\prime}\geq
\epsilon \beta^{-1}$, in which case $\max(\beta^{-1}, t^{\prime}/\epsilon) =
\beta^{-1}$. The evaluation of these two integrals gives the result.
\end{proof}

In view of Lemma \ref{L:I-cut}, to prove \eqref{E:S441}, it suffices to show 
\begin{equation*}
\sup_{0\leq t \leq T} | I_{i,k+1}^{\epsilon,\epsilon,\epsilon^{1/2}}(t) -
I^{0,0,\epsilon^{1/2}}(t) | \to 0 \text{ as }\epsilon \to 0
\end{equation*}
And to prove this, it suffices to prove 
\begin{equation}  \label{E:S442}
\sup_{0\leq t \leq T} | I_{i,k+1}^{\epsilon,\epsilon,\epsilon^{1/2}}(t) -
I^{0,\epsilon,\epsilon^{1/2}}(t) | \lesssim \epsilon^{0+} \| \langle \eta_i
\rangle^{\frac34+} \langle \eta_{k+1} \rangle^{\frac34+} \langle
\nabla_{\xi_i} \rangle^{0+} \langle \nabla_{\xi_{k+1}}\rangle^{\frac12+}
\check f_\delta^{(k+1)} \|_{L_{[0,T]}^\infty L_{\boldsymbol{\eta}_{k+1}}^2
L_{\boldsymbol{\xi}_{k+1}}^2}
\end{equation}
and 
\begin{equation}  \label{E:S443}
\sup_{0\leq t \leq T} | I_{i,k+1}^{0,\epsilon,\epsilon^{1/2}}(t) -
I^{0,0,\epsilon^{1/2}}(t) | \to 0 \quad \text{as} \quad \epsilon \to 0
\end{equation}

We begin with the proof of \eqref{E:S442}. For this, we start by switching
the order of $t^{\prime} $ and $s$ integrals and shifting the $t^{\prime} $
integration 
\begin{equation}  \label{E:S445}
\begin{aligned} I_{i,k+1}^{\epsilon, \delta, \beta} (t) &= -
\sum_{\alpha,\sigma \in \{\pm 1\}} \alpha \sigma \int_{s=0}^{\beta^{-1}}
\int_{t^{\prime} =-\epsilon s}^{t-\epsilon} \int_{\boldsymbol{\eta}_k,
\boldsymbol{\xi}_k} \bar{\check
J}(\boldsymbol{\eta}_k,\boldsymbol{\xi}_k+(t-t^{\prime} -\epsilon
s)\boldsymbol{\eta}_k) \int_{\eta_{k+1}} \int_y \\ &\hat\phi(\epsilon
\eta_{k+1} - y) \hat \phi(y) e^{i\alpha \xi_i (\epsilon \eta_{k+1} - y)/2}
e^{i\sigma \xi_i y/2} e^{-i\sigma s (\epsilon\eta_i-2\epsilon \eta_{k+1} +
2y)y/2} \\ &\check f_\delta^{(k+1)}(t^{\prime}, \eta_1, \ldots,
\eta_i-\eta_{k+1}, \ldots, \eta_{k+1} , \xi_1 - \epsilon s \eta_1, \, . \,,
\\ & \qquad \xi_i - sy - \epsilon s\eta_i+\epsilon s \eta_{k+1}, \, . \,,
\xi_k-\epsilon s \eta_{k+1}, sy -s\epsilon\eta_{k+1}) \, d\eta_{k+1} \, dy
\, ds\end{aligned}
\end{equation}
This moves the $s$ translation from the time argument of $f$ to the test
function. The rest of the proof of \eqref{E:S442} is just a matter of
applying the fundamental theorem of calculus: 
\begin{equation*}
I^{\epsilon, \delta, \beta}_{i,k+1}(t) - I^{0,\delta, \beta}_{i,k+1}(t) =
\int_{\theta =0}^{1} \frac{d}{d\theta}[ I^{\epsilon \theta, \delta,
\beta}_{i,k+1}(t)] \,d\theta
\end{equation*}
and then carrying out, via the chain rule, the $\theta$-derivative of %
\eqref{E:S445} with $\epsilon$ replaced by $\epsilon \theta$ (note that $%
\delta$ and $\beta$ are held fixed, although after the calculation is
completed, we set $\delta=\epsilon$ and $\beta=\epsilon^{1/2}$). Rather than
write one very long formula, we provide a table giving the result of each
term generated. We have enumerated the terms in the left column for ease of
reference below.

\begin{center}
\renewcommand{\arraystretch}{1.5}
\setlength{\tabcolsep}{5pt}  
\begin{tabular}{c|c|c}
Term \# & $\theta$ derivative lands on & generates \\ \hline
1 & $\bar{\check J}(\boldsymbol{\eta}_k, \boldsymbol{\xi}_k +
(t-t^{\prime}-\epsilon \theta s)\boldsymbol{\eta}_k)$ & $-\epsilon s \nabla_{%
\boldsymbol{\xi}_k} \bar{\check J}(\boldsymbol{\eta}_k, \boldsymbol{\xi}_k +
(t-t^{\prime}-\epsilon \theta s)\boldsymbol{\eta}_k) \cdot \boldsymbol{\eta}%
_k$ \\ 
2 & $\hat\phi(\epsilon \theta \eta_{k+1} - y)$ & $\epsilon \nabla \hat\phi(
\epsilon \theta \eta_{k+1}-y) \cdot \eta_{k+1}$ \\ 
3 & $\exp\left[ \tfrac12 i \theta\epsilon (\alpha \xi_i + 2\sigma s y) \cdot
\eta_{k+1} \right]$ & $\tfrac12 i (\alpha \epsilon \xi_i + 2\sigma \epsilon
s y) \cdot \eta_{k+1} \exp[ \tfrac12 i \cdots]$ \\ 
4 & $\exp\left[ - \tfrac12 i \theta\epsilon \sigma s \eta_i \cdot y \right]$
& $- \tfrac12 i \epsilon s \sigma \eta_i \cdot y \exp[ - \tfrac12 i \cdots]$
\\ 
5 & $\check f_\delta^{(k+1)}(\cdots)$ & $\epsilon s [ -\boldsymbol{\eta}%
_k\cdot \nabla_{\boldsymbol{\xi}_k} + \eta_{k+1}\cdot (\nabla_{\xi_i} -
\nabla_{\xi_{k+1}})] \check f_\delta^{(k+1)}(\cdots) $%
\end{tabular}
\end{center}

Note that in each case, a factor $\epsilon$ emerges, but in some cases this
comes along with an $s$, which is not small, but is bounded by $\beta^{-1}$,
and when $\beta=\epsilon^{-1/2}$, we have $|s\epsilon| \leq \epsilon^{1/2}$.
Factors $\boldsymbol{\eta}_k$ are absorbed onto $\check J$, as are any
factors $\boldsymbol{\xi}_k$ since we can write 
\begin{equation*}
\boldsymbol{\xi}_k = [\boldsymbol{\xi}_k + (t-t^{\prime}-\epsilon \theta s) 
\boldsymbol{\eta}_k] - [(t-t^{\prime}-\epsilon \theta s) \boldsymbol{\eta}_k]
\end{equation*}
and the prefactor $|(t-t^{\prime}-\epsilon \theta s)| \lesssim 1$. The extra 
$y$ can be absorbed by $\hat\phi(y)$. The factor $\eta_{k+1}$ must be added
to the right-side, as must the $\xi$-derivatives on $\check
f_\delta^{k+1}(\cdots)$ that emerge in Term 5.

The very long expression for $\frac{d}{d\theta}[ I^{\epsilon \theta, \delta,
\beta}_{i,k+1}(t)]$, which has five copies of \eqref{E:S445} with each of
the term replacements as given in the table above, can be estimated by the
method of proof of Proposition \ref{P:QEstimates} to yield 
\begin{equation*}
\sup_{0\leq t \leq T} | I_{i,k+1}^{\epsilon,\epsilon,\epsilon^{1/2}}(t) -
I^{0,\epsilon,\epsilon^{1/2}}(t) | \lesssim \epsilon^{1/2} \| \langle \eta_i
\rangle^{\frac34+} \langle \eta_{k+1} \rangle^{\frac74+} \langle
\nabla_{\xi_i} \rangle^1 \langle \nabla_{\xi_{k+1}}\rangle^{\frac32+} \check
f_\delta^{(k+1)} \|_{L_{[0,T]}^\infty L_{\boldsymbol{\eta}_{k+1}}^2 L_{%
\boldsymbol{\xi}_{k+1}}^2}
\end{equation*}
This can be interpolated with the trivial bound that ignores cancelation and
just estimates $I_{i,k+1}^{\epsilon,\epsilon,\epsilon^{1/2}}$ and $%
I^{0,\epsilon,\epsilon^{1/2}}(t)$ separately via Proposition \ref%
{P:QEstimates}, with no $\epsilon$ gain. The result of this interpolation
leaves an $\epsilon^{0+}$ at the expense of 
\begin{equation*}
\langle \eta_{k+1}\rangle^{0+} \langle \nabla_{\xi_i} \rangle^{0+} \langle
\nabla_{\xi_{k+1}}\rangle^{0+}
\end{equation*}
This completes the proof of \eqref{E:S442}.

Next, we prove \eqref{E:S443}. Both terms are expressed in terms of the
limiting collision operator $Q_{i,k+1}$, and the difference can be expressed
as $Q_{i,k+1}$ acting on $f_\epsilon^{(k+1)} - f^{(k+1)}$. Thus, the
argument hinges upon whether the kernel of $Q_{i,k+1}$ and the outside test
function $J$ can together serve as a test function in $(k+1)$ variables, so
that we can appeal to the fact that $f_\epsilon^{(k+1)} - f^{(k+1)} \to 0$
in $(H^{1+}_{\boldsymbol{x}_{k+1}} L^{2,\frac12+}_{\boldsymbol{v}_{k+1}})_{%
\text{wk*}}$.

By Lemma \ref{L:I-cut}, we can remove the $\beta^{-1}=\epsilon^{-1/2}$
cutoff on the $s$-integral and thus it suffices to examine 
\begin{equation*}
\begin{aligned} I_{i,k+1}^{0, \epsilon} (t)- I_{i,k+1}^{0, 0} (t) &= -
\sum_{\alpha,\sigma \in \{\pm 1\}} \alpha \sigma \int_{s=0}^{\infty}
\int_{t^{\prime} =0}^{t} \int_{\boldsymbol{\eta}_k, \boldsymbol{\xi}_k}
\bar{\check
J}(\boldsymbol{\eta}_k,\boldsymbol{\xi}_k+(t-t^{\prime})\boldsymbol{\eta}_k)
\\ & \int_{\eta_{k+1}} \int_y |\hat \phi(y)|^2 e^{i(\sigma-\alpha)\xi_iy/2}
e^{-i\sigma s |y|^2} (\check f_\epsilon^{(k+1)}- \check f^{(k+1)}) \\ &
(t^{\prime}, \eta_1, \, . \,, \eta_i-\eta_{k+1}, \, . \,, \eta_{k+1} , \xi_1
, \, . \,, \xi_i - sy , \, . \,, \xi_k, sy ) \, d\eta_{k+1} \, dy \,
ds\end{aligned}
\end{equation*}
Change variable $y\mapsto \xi_{k+1}/s$ 
\begin{equation*}
\begin{aligned} I_{i,k+1}^{0, \epsilon} (t)- I_{i,k+1}^{0, 0} (t) &= -
\sum_{\alpha,\sigma \in \{\pm 1\}} \alpha \sigma \int_{s=0}^{\infty}
\int_{t^{\prime} =0}^{t} \int_{\boldsymbol{\eta}_k, \boldsymbol{\xi}_k}
\bar{\check
J}(\boldsymbol{\eta}_k,\boldsymbol{\xi}_k+(t-t^{\prime})\boldsymbol{\eta}_k)
\\ & \int_{\eta_{k+1}} \int_{\xi_{k+1}} s^{-3}|\hat \phi( \xi_{k+1} /s)|^2
e^{i(\sigma-\alpha)\xi_i \xi_{k+1} /(2s)} e^{-i\sigma | \xi_{k+1} |^2/s}
(\check f_\epsilon^{(k+1)}- \check f^{(k+1)}) \\ & (t^{\prime}, \eta_1, \, .
\,, \eta_i-\eta_{k+1}, \, . \,, \eta_{k+1} , \xi_1 , \, . \,, \xi_i -
\xi_{k+1} , \, . \,, \xi_k, \xi_{k+1} ) \, d\eta_{k+1} \, d \xi_{k+1} \,
ds\end{aligned}
\end{equation*}
Let 
\begin{equation}  \label{E:CV01}
h(\xi_i, \xi_{k+1} ) = \int_{s=0}^\infty s^{-3}|\hat\phi( \xi_{k+1} /s)|^2
e^{i(\sigma-\alpha)\xi_i \xi_{k+1} /(2s)} e^{-i\sigma| \xi_{k+1} |^2/s} \, ds
\end{equation}
Then 
\begin{equation*}
\begin{aligned} I_{i,k+1}^{0, \epsilon} (t)- I_{i,k+1}^{0, 0} (t) &= -
\sum_{\alpha,\sigma \in \{\pm 1\}} \alpha \sigma \int_{t^{\prime} =0}^{t}
\int_{\boldsymbol{\eta}_k, \boldsymbol{\xi}_k} \int_{\eta_{k+1}}
\int_{\xi_{k+1}} \bar{\check
J}(\boldsymbol{\eta}_k,\boldsymbol{\xi}_k+(t-t^{\prime})\boldsymbol{\eta}_k)
\\ & \qquad h(\xi_i, \xi_{k+1} ) (\check f_\epsilon^{(k+1)}- \check
f^{(k+1)})(t^{\prime}, \eta_1, \, . \,, \eta_i-\eta_{k+1}, \, . \,,
\eta_{k+1} , \\ &\hspace{2in} \xi_1 , \, . \,, \xi_i - \xi_{k+1} , \, . \,,
\xi_k, \xi_{k+1} ) \, d\eta_{k+1} \, d \xi_{k+1} \end{aligned}
\end{equation*}
Replace $\eta_i \mapsto \eta_i+\eta_{k+1}$ and $\xi_i\mapsto \xi_i
+\xi_{k+1} $. The result is 
\begin{equation}  \label{E:CV04}
\begin{aligned} I_{i,k+1}^{0, \epsilon} (t)- I_{i,k+1}^{0, 0} (t) &= -
\sum_{\alpha,\sigma \in \{\pm 1\}} \alpha \sigma \int_{t^{\prime} =0}^{t}
\int_{\boldsymbol{\eta}_{k+1}, \boldsymbol{\xi}_{k+1}} \\ & \qquad
H(\boldsymbol{\eta}_{k+1}, \boldsymbol{\xi}_{k+1}) (\check
f_\epsilon^{(k+1)}- \check f^{(k+1)})(t^{\prime}, \boldsymbol{\eta}_{k+1},
\boldsymbol{\xi}_{k+1} ) \, d\eta_{k+1} \, d \xi_{k+1} \end{aligned}
\end{equation}
where 
\begin{equation}  \label{E:CV02a}
\begin{aligned} H(\boldsymbol{\eta}_{k+1}, \boldsymbol{\xi}_{k+1}) =
&h(\xi_i+\xi_{k+1}, \xi_{k+1} )\bar{\check J}(\eta_1 \, \, . \,,
\eta_i+\eta_{k+1}, \, . \,, \eta_k, \\ &\quad \xi_1 +(t-t^{\prime})\eta_1 \,
\, . \,, \xi_i+\xi_{k+1}+(t-t^{\prime})(\eta_i+\eta_{k+1}), \, . \,,
\xi_k+(t-t^{\prime})\eta_k) \end{aligned}
\end{equation}
We claim that 
\begin{equation}  \label{E:CV02}
H \in L_{\boldsymbol{\eta}_{k+1}}^{2,-\frac34-}H_{\boldsymbol{\xi}%
_{k+1}}^{-\frac12-}
\end{equation}
uniformly in $t$, $t^{\prime}$.

First, we note that with $h(\xi_i,\xi_{k+1})$ defined by \eqref{E:CV01}, we
have 
\begin{equation}  \label{E:CV03}
|h(\xi_i,\xi_{k+1})| \lesssim |\xi_{k+1}|^{-2}
\end{equation}
To see that \eqref{E:CV03} holds, note that 
\begin{equation*}
|h(\xi_i, \xi_{k+1} )| \leq \int_{s=0}^\infty s^{-3}|\hat\phi( \xi_{k+1}
/s)|^2 \, ds
\end{equation*}
Break the $s$ integral into $s\leq |\xi_{k+1}|$ and $s>|\xi_{k+1}|$. For $%
s<|\xi_{k+1}|$, since $|\xi_{k+1}|/s \geq 1$, we use the assumption that $%
\hat \phi(y) \lesssim |y|^{-1-}$ for $|y|\geq 1$. Then 
\begin{equation*}
\int_{s=0}^{|\xi_{k+1}|} \frac{1}{s^3} |\hat\phi( \frac{\xi_{k+1}}{s} )|^2
\, ds \leq \int_{s=0}^{|\xi_{k+1}|} \frac{1}{s^3} \frac{s^{2+}}{%
|\xi_{k+1}|^{2+}} \, ds = \frac{1}{|\xi_{k+1}|^2}
\end{equation*}
For $s>|\xi_{k+1}|$, since $|\xi_{k+1}|/s <1$, we just use that $%
|\hat\phi(y) \lesssim 1$ for $|y| \leq 1$. Then 
\begin{equation*}
\int_{s=|\xi_{k+1}|}^{+\infty} \frac{1}{s^3} |\hat\phi( \frac{\xi_{k+1}}{s}
)|^2 \, ds \lesssim \int_{s=|\xi_{k+1}|}^{+\infty} \frac{1}{s^3} \, ds = 
\frac{1}{|\xi_{k+1}|^2}
\end{equation*}
This completes the proof of \eqref{E:CV03}.

By \eqref{E:CV03} and the fact that $J \in \mathcal{S}$, 
\begin{align*}
|H(\boldsymbol{\eta}_{k+1}, \boldsymbol{\xi}_{k+1})| \lesssim &
|\xi_{k+1}|^{-2} \langle \eta_1 \rangle^{-2} \, \cdots \, \langle \eta_i +
\eta_{k+1}\rangle^{-2} \cdots \langle \eta_{k} \rangle^{-2} \cdots \\
&\langle \xi_1+(t-t^{\prime})\eta_1 \rangle^{-2} \cdots \langle
\xi_i+\xi_{k+1}+ (t-t^{\prime})(\eta_i+\eta_{k+1}) \cdots \langle
\xi_k+(t-t^{\prime})\eta_k\rangle^{-2}
\end{align*}
Hence 
\begin{equation*}
\| H(\boldsymbol{\eta}_{k+1}, \boldsymbol{\xi}_{k+1}) \|_{L^2_{\boldsymbol{%
\xi}_k}} \lesssim |\xi_{k+1}|^{-2} \langle \eta_1 \rangle^{-2} \, \cdots \,
\langle \eta_i + \eta_{k+1}\rangle^{-2} \cdots \langle \eta_{k} \rangle^{-2}
\end{equation*}
By dual Sobolev embedding, 
\begin{align*}
\| H(\boldsymbol{\eta}_{k+1}, \boldsymbol{\xi}_{k+1}) \boldsymbol{1}%
_{|\xi_{k+1}| \leq 1}\|_{H_{\xi_{k+1}}^{-\frac12-}L^2_{\boldsymbol{\xi}_k}}
&\lesssim \| H(\boldsymbol{\eta}_{k+1}, \boldsymbol{\xi}_{k+1}) \boldsymbol{1%
}_{|\xi_{k+1}| \leq 1}\|_{L_{\xi_{k+1}}^{\frac32-} L^2_{\boldsymbol{\xi}_k}}
\\
&\lesssim \langle \eta_1 \rangle^{-2} \, \cdots \, \langle \eta_i +
\eta_{k+1}\rangle^{-2} \cdots \langle \eta_{k} \rangle^{-2}
\end{align*}
Also, 
\begin{align*}
\| H(\boldsymbol{\eta}_{k+1}, \boldsymbol{\xi}_{k+1}) \boldsymbol{1}%
_{|\xi_{k+1}| \leq 1}\|_{H_{\xi_{k+1}}^{-\frac12-}L^2_{\boldsymbol{\xi}_k}}
&\lesssim \| H(\boldsymbol{\eta}_{k+1}, \boldsymbol{\xi}_{k+1}) \boldsymbol{1%
}_{|\xi_{k+1}| \leq 1}\|_{L_{\xi_{k+1}}^{2} L^2_{\boldsymbol{\xi}_k}} \\
&\lesssim \langle \eta_1 \rangle^{-2} \, \cdots \, \langle \eta_i +
\eta_{k+1}\rangle^{-2} \cdots \langle \eta_{k} \rangle^{-2}
\end{align*}
Consequently, 
\begin{equation*}
\| H(\boldsymbol{\eta}_{k+1},\boldsymbol{\xi}_{k+1}) \|_{L_{\boldsymbol{\eta}%
_{k+1}}^{2,-\frac34-} H_{\xi_{k+1}}^{-\frac34-}} \lesssim \left(
\int_{\eta_i, \eta_{k+1}} \langle \eta_i\rangle^{-\frac32-} \langle
\eta_{k+1} \rangle^{-\frac32-} \langle \eta_i+\eta_{k+1}\rangle^{-4} \,
d\eta_i \, d\eta_{k+1} \right)^{1/2} < \infty
\end{equation*}
completing the proof of \eqref{E:CV02}.

This establishes that the test function in \eqref{E:CV04} belongs to $%
\Lambda $. Since (recalling $\rho$ is the metric assigned to $\Lambda^*$) 
\begin{equation*}
\rho( f_\epsilon^{(k+1)}(t^{\prime}), f^{(k+1)}(t^{\prime})) \to 0
\end{equation*}
uniformly in $t^{\prime}$, it follows that \eqref{E:S443} holds.
\end{proof}

\begin{corollary}
Suppose that $\mathcal{F}= \{f_N\}_N$ is a collection of hierarchies
satisfying the hypotheses of Theorem \ref{T:compactness} and $P_Nf_N \to
f_\infty$ in $C([0,T]; (\Lambda^*,\rho))$, where $f_\infty = \{ f^{(k)}
\}_{k=1}^\infty$. Then for all $k\geq 1$, $f_N^{(k)} \to f^{(k)}$ in $%
C([0,T]; (L^2_{\boldsymbol{x}_k \boldsymbol{v}_k})_{\text{wk*}})$.
\end{corollary}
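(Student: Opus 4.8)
The plan is to deduce weak-$L^2$ convergence, uniformly in $t$, from the already-established $\rho$-convergence together with the uniform $L^2$ bounds. First, recall from the discussion following Theorem \ref{T:compactness} that $\|f_N^{(k)}\|_{C([0,T];L^2_{\boldsymbol{x}_k\boldsymbol{v}_k})}\le 2^{-k}C^k k!$, and from Theorem \ref{T:convergence} together with the embedding $H^{1+}_{\boldsymbol{x}_k}L^{2,\frac12+}_{\boldsymbol{v}_k}\hookrightarrow L^2_{\boldsymbol{x}_k\boldsymbol{v}_k}$ that $\|f^{(k)}\|_{C([0,T];L^2_{\boldsymbol{x}_k\boldsymbol{v}_k})}\lesssim 2^{-k}C^k$. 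Since the weak topology on a closed ball of the separable Hilbert space $L^2_{\boldsymbol{x}_k\boldsymbol{v}_k}$ is metrizable, and $C([0,T];(L^2_{\boldsymbol{x}_k\boldsymbol{v}_k})_{\text{wk*}})$ then carries a sup-in-$t$ metric built from a countable dense family of Schwartz test functions exactly as in \eqref{E:S3-07}, it suffices to show that for every Schwartz $g(\boldsymbol{x}_k,\boldsymbol{v}_k)$,
\[
\sup_{t\in[0,T]}\left| \int_{\boldsymbol{x}_k,\boldsymbol{v}_k}\bar g\,(f_N^{(k)}(t)-f^{(k)}(t))\,d\boldsymbol{x}_k\,d\boldsymbol{v}_k\right|\to 0\quad\text{as }N\to\infty.
\]

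Second, split the pairing, using that $P_N^{(k)}$ is self-adjoint:
\[
\langle g,f_N^{(k)}(t)-f^{(k)}(t)\rangle=\langle g,P_N^{(k)}f_N^{(k)}(t)-f^{(k)}(t)\rangle+\langle (I-P_N^{(k)})g,f_N^{(k)}(t)\rangle .
\]
For the second term: since $g$ is Schwartz and $P_N^{(k)}$ truncates $\boldsymbol{x}_k,\boldsymbol{v}_k,\boldsymbol{\eta}_k,\boldsymbol{\xi}_k$ to radius $\epsilon^{-\delta}\to\infty$, one has $\|(I-P_N^{(k)})g\|_{L^2}\to 0$, and combined with $\|f_N^{(k)}(t)\|_{L^2}\lesssim_k 1$ this term tends to $0$ uniformly in $t$. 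For the first term, view $g$ as the single-tier element $J=(0,\dots,0,g,0,\dots)$ of $\Lambda$ (legitimate since Schwartz functions lie in $H^{-1-}_{\boldsymbol{x}_k}L^{2,-\frac12-}_{\boldsymbol{v}_k}$). By Lemma \ref{L:PCcutoff} the family $\{P_N^{(k)}f_N^{(k)}(t)\}$ is bounded in $H^{1+}_{\boldsymbol{x}_k}L^{2,\frac12+}_{\boldsymbol{v}_k}$ uniformly in $N$ and $t$ (the factors $\epsilon^{(\ell(\pi)-m(\pi))/2}\le 1$), and $f^{(k)}(t)$ satisfies the same bound, so $\{P_N^{(k)}f_N^{(k)}(t)-f^{(k)}(t)\}$ is a bounded family in $\Lambda^*$. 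The hypothesis $P_Nf_N\to f_\infty$ in $C([0,T];(\Lambda^*,\rho))$ gives $\langle J_i^{(k)},P_N^{(k)}f_N^{(k)}(t)-f^{(k)}(t)\rangle\to 0$ uniformly in $t$ for each $i$ in the countable dense family; by the uniform $\Lambda^*$-bound and density of $\{J_i\}$ in $\Lambda$, the same holds for every $J\in\Lambda$, in particular for our $g$. Hence the first term also $\to 0$ uniformly in $t$.

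Third, assembling the two estimates yields $\sup_{t\in[0,T]}|\langle g,f_N^{(k)}(t)-f^{(k)}(t)\rangle|\to 0$ for every Schwartz $g$, and a standard $3\epsilon$-argument using density of Schwartz functions in $L^2_{\boldsymbol{x}_k\boldsymbol{v}_k}$ and the uniform $L^2$ bounds on $f_N^{(k)}(t)$ and $f^{(k)}(t)$ upgrades this to all $g\in L^2_{\boldsymbol{x}_k\boldsymbol{v}_k}$, which is precisely convergence in $C([0,T];(L^2_{\boldsymbol{x}_k\boldsymbol{v}_k})_{\text{wk*}})$. The only point requiring care — and the main, if minor, obstacle — is the term $(I-P_N^{(k)})f_N^{(k)}$: because $f_N^{(k)}$ itself is only $L^2$-bounded (its $H^{1+}$ norm diverges through the cycle terms), one cannot test it against $H^{-1+}$ functions, so the argument genuinely depends on transferring the cutoff onto the Schwartz test function, where the growing radius $\epsilon^{-\delta}$ supplies the decay.
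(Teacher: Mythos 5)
Your argument is correct and is essentially the paper's own proof: the same decomposition $f_N^{(k)}-f^{(k)}=(I-P_N^{(k)})f_N^{(k)}+(P_N^{(k)}f_N^{(k)}-f^{(k)})$, with the cutoff transferred onto the Schwartz test function (where the growing radius $\epsilon^{-\delta}$ and the uniform bound $\|f_N^{(k)}\|_{L^2}\leq C^k k!$ kill the first piece) and the hypothesis $P_Nf_N\to f_\infty$ in $C([0,T];(\Lambda^*,\rho))$ handling the second, followed by $L^2$-density. One cosmetic caution: your claim that $\{P_N^{(k)}f_N^{(k)}(t)-f^{(k)}(t)\}$ is bounded in the full $\Lambda^*$ norm uniformly in $N$ is stronger than what Lemma \ref{L:PCcutoff} gives (the crude bound $\epsilon^{(\ell(\pi)-m(\pi))/2}\leq 1$ produces $k!$-type tier constants), but your argument only ever uses the fixed-tier bounds on the finitely many tiers touched by $g$ and $J_i$, so nothing breaks.
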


\begin{proof}
The hypotheses of Theorem \ref{T:compactness} suffice to imply that 
\begin{equation*}
\| f_N^{(k)}(t) \|_{L^2_{\boldsymbol{x}_k\boldsymbol{v}_k}} \leq C^k k!
\end{equation*}
By density it suffices to show that for any of the test functions $J^{(k)}$
described in the construction of $\Lambda$, we have 
\begin{equation*}
\int_{\boldsymbol{x}_k \boldsymbol{v}_k} J^{(k)}(\boldsymbol{x}_k, 
\boldsymbol{v}_k) (f_N^{(k)}-f^{(k)})(\boldsymbol{x}_k, \boldsymbol{v}_k) \,
d\boldsymbol{x}_k \, d\boldsymbol{v}_k \to 0
\end{equation*}
It suffices to show that both 
\begin{equation*}
\int_{\boldsymbol{x}_k \boldsymbol{v}_k} J^{(k)}(\boldsymbol{x}_k, 
\boldsymbol{v}_k) (f_N^{(k)}-P_N^{(k)}f_N^{(k)})(\boldsymbol{x}_k, 
\boldsymbol{v}_k) \, d\boldsymbol{x}_k \, d\boldsymbol{v}_k \to 0
\end{equation*}
\begin{equation*}
\int_{\boldsymbol{x}_k \boldsymbol{v}_k}J^{(k)}(\boldsymbol{x}_k, 
\boldsymbol{v}_k) ( P_N^{(k)} f_N^{(k)}-f^{(k)})(\boldsymbol{x}_k, 
\boldsymbol{v}_k) \, d\boldsymbol{x}_k \, d\boldsymbol{v}_k \to 0
\end{equation*}
The first of these holds since $\|(I-P_N^{(k)}) J^{(k)}\|_{L^2_{\boldsymbol{x%
}_k \boldsymbol{v}_k}} \to 0$ as $N\to \infty$, while, uniformly in $N$, $%
\|f_N^{(k)}\|_{L^2_{\boldsymbol{x}_k\boldsymbol{v}_k}} \leq C^kk!$. The
second statement holds since it was assumed that $P_N f_N \to f_\infty$ in $%
C([0,T]; (\Lambda^*,\rho))$.
\end{proof}


\section{Unconditional Uniqueness of the Boltzmann Hierarchy \label%
{sec:uniqueness}}

We now turn our attention to proving Theorem \ref{Thm:Uniqueness of
Hierarchy} which concludes that there is only one limit point resulting from
the process in \S \ref{sec:CompactnessConvergence}-\ref{sec:Convergence}.

\begin{definition}
We say the the family $\{f^{(k)}\}$ is admissible if it satisfies one of the
following:

(i) It is a weak $N\rightarrow \infty $ limit point of $\{f_{N}^{(k)}\}$
which is some family of marginals of a symmetric $N$-body system on $\Omega
. $

(ii) It is a family of symmetric probability marginals.
\end{definition}

\begin{theorem}
\label{Thm:Uniqueness of Hierarchy}There is at most one admissible solution
to the quantum Boltzmann hierarchy (\ref{hierarchy:Boltzmann}) in $\left[ 0,T%
\right] $ subject to the condition that, there is a $C_{0}>0$, such that%
\begin{eqnarray}
\sup_{t\in \left[ 0,T\right] }\left\Vert \left(
\dprod\limits_{j=1}^{k}\left\langle \nabla _{x_{j}}\right\rangle
^{1+}\left\langle v_{j}\right\rangle ^{0+}\right) f^{(k)}(t,\mathbf{x}_{k},%
\mathbf{v}_{k})\right\Vert _{L_{x,v}^{2}} &\leqslant &C_{0}^{k}
\label{eqn:condition for uniqueness} \\
\sup_{t\in \left[ 0,T\right] }\left\Vert \left(
\dprod\limits_{j=1}^{k}\left\langle \nabla _{x_{j}}\right\rangle ^{\frac{1}{2%
}+}\left\langle v_{j}\right\rangle ^{\frac{1}{2}+}\right) f^{(k)}(t,\mathbf{x%
}_{k},\mathbf{v}_{k})\right\Vert _{L_{x,v}^{2}} &\leqslant &C_{0}^{k}
\label{eqn:condition for uniqueness 2}
\end{eqnarray}
\end{theorem}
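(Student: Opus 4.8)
The plan is to prove Theorem \ref{Thm:Uniqueness of Hierarchy} by the Klainerman--Machedon board game / quantum de Finetti scheme, recently perfected for the NLS case in \cite{CH8} and originating in \cite{CHPS}, but now enacted in the $(\eta,\xi)$ and $(t,\eta,\xi)$ representations of the Boltzmann hierarchy. Suppose $\{f^{(k)}\}$ and $\{g^{(k)}\}$ are two admissible solutions with the same initial data satisfying the bounds \eqref{eqn:condition for uniqueness}--\eqref{eqn:condition for uniqueness 2}, and set $u^{(k)}=f^{(k)}-g^{(k)}$. Then $u^{(k)}$ solves the Boltzmann hierarchy \eqref{hierarchy:Boltzmann} with zero initial data, so in Duhamel form
\begin{equation*}
u^{(k)}(t_k) = \int_0^{t_k} S^{(k)}(t_k-t_{k+1}) Q^{(k+1)} u^{(k+1)}(t_{k+1}) \, dt_{k+1}.
\end{equation*}
Iterating $n$ times expresses $u^{(k)}(t_k)$ as an $n$-fold nested space-time integral of $S Q S Q \cdots Q u^{(k+n)}$, and the first step is to reorganize this sum of $k(k+1)\cdots(k+n-1)$ terms into $O(C^n)$ equivalence classes via the Klainerman--Machedon board game, so that the combinatorial proliferation is tamed. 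The admissibility hypothesis, through the quantum de Finetti theorem (Hewitt--Savage in the form suggested by Arkeryd--Caprino--Ianiro \cite{ACI}), lets us write $u^{(k)}$ against a signed measure over factorized states, which is what makes the a priori bounds \eqref{eqn:condition for uniqueness}--\eqref{eqn:condition for uniqueness 2} stable under the iteration.

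The core analytic input is the multilinear (bilinear) estimate for the collision operator $Q_{i,k+1}$ that must absorb exactly one power of the Duhamel operator. Concretely, I would work on the $(t,\eta,\xi)$ or $\vee$ side, where the transport generator is $\partial_t + \eta\cdot\nabla_\xi$, and combine Proposition \ref{P:Q0Estimates} (the $\epsilon=0$ $L^2$ estimate, needing only $3/4$ of an $x$-derivative) with Proposition \ref{P:general-fixed-time-2}, the fixed-time $L^p$ bilinear estimate for $\tilde Q_{i,k+1}$ with $p=3\pm$. The $L^p$ estimate is what unlocks the intertwined kinetic transport Strichartz estimates \cite{Ov11,BBGL14,BP01} for $\partial_t + \eta\cdot\nabla_\xi$ on this side: integrating the fixed-time $L^3_{\xi}$-type bound against the Duhamel time variable, one recognizes the right-hand side as a (dual) kinetic Strichartz norm of $u^{(k+1)}$, and Proposition \ref{P:general-fixed-time-2} lands precisely at the (false) endpoint of those Strichartz estimates, so the estimate closes with room to spare coming from the $\langle v\rangle^{1/2+}$ and second regularity assumption \eqref{eqn:condition for uniqueness 2}. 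Carrying the collision estimate through each of the $n$ iterated collision operators produces a bound of the form
\begin{equation*}
\Big| \int_{\boldsymbol{x}_k,\boldsymbol{v}_k} \overline{J^{(k)}} \, u^{(k)}(t) \Big| \lesssim (C T^{\theta})^n \, C_0^{k+n}
\end{equation*}
for some $\theta>0$, uniformly in $n$, where the time gain $T^\theta$ per iteration is harvested from the nested time simplex $0 \le t_{k+n} \le \cdots \le t_{k+1}\le t_k \le T$ (possibly after a Hölder/Sobolev trade as in the proofs of \S \ref{S:preparation}).

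The argument then concludes by choosing $T$ small enough (depending only on $C,C_0$) that $CT^\theta < 1$, so the right-hand side $\to 0$ as $n\to\infty$, forcing $u^{(k)}(t)=0$ on $[0,T_0]$ for every $k$; a continuation/bootstrap argument covering $[0,T]$ by finitely many such short intervals, using that the bounds \eqref{eqn:condition for uniqueness}--\eqref{eqn:condition for uniqueness 2} persist, then gives uniqueness on all of $[0,T]$. The main obstacle I anticipate is the collision estimate at the borderline regularity: one must verify that the $3/4$-derivative cost of Proposition \ref{P:Q0Estimates}, the $L^{3\pm}_{\xi_{k+1}}$ structure of Proposition \ref{P:general-fixed-time-2}, and the endpoint behavior of the kinetic transport Strichartz inequality are \emph{exactly} compatible with the two-tier regularity hypothesis $H^{1+}_x(L^{2,0+}_v)\cap H^{1/2+}_x(L^{2,1/2+}_v)$ --- there is no slack, and in particular one cannot afford to lose even an $\epsilon$ of derivative when redistributing $x$- and $v$-weights across the $k+1$ variables or when passing from the $(\eta,\xi)$ side to the Strichartz side. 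Making the board-game reorganization interact cleanly with this rigid estimate (so that the combinatorial constant stays $O(C^n)$ rather than $n!$) is the delicate point, and it is exactly where the scheme of \cite{CH8} is needed rather than the earlier, lossier versions.
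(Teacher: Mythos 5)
Your proposal matches the paper's own proof in all essentials: the Klainerman--Machedon board game (Lemma \ref{lem:extended board game}) to tame the factorial, the Hewitt--Savage representation (Lemma \ref{lem:Hewitt-Savaege}) exploited via Chebyshev to localize the measure on $E(C_0)$, and an iteration of Strichartz-type bilinear estimates on the $(t,\eta,\xi)$ side (built on Proposition \ref{P:general-fixed-time-2} and the kinetic transport Strichartz inequality of \cite{Ov11}) closed by one fixed-time estimate at the innermost coupling, followed by choosing $T$ small and letting the iteration depth tend to infinity. The only cosmetic difference is that the paper estimates $\Vert \langle\nabla_{\xi_1}\rangle^{0+}\check f^{(1)}\Vert_{L^{3-}_{\eta_1\xi_1}}$ directly rather than pairing against a test function, but the scheme and the key inputs are identical.
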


\begin{corollary}
\label{Cor:Uniqueness of Boltzmann}There is at most one $C\left( \left[ 0,T%
\right] ,H_{x}^{1+}L_{v}^{2,0+}\cap H_{x}^{\frac{1}{2}+}L_{v}^{2,\frac{1}{2}%
+}\right) $ solution to the Boltzmann equation (\ref{eqn:QBEquation}).
\end{corollary}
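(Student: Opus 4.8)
\textbf{Proof of Corollary \ref{Cor:Uniqueness of Boltzmann}.}

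The plan is to deduce the uniqueness of solutions to the single equation \eqref{eqn:QBEquation} from the uniqueness of admissible solutions to the hierarchy \eqref{hierarchy:Boltzmann}, established in Theorem \ref{Thm:Uniqueness of Hierarchy}. The bridge is the classical observation, recalled in \S\ref{sec:illposedness}, that a tensor power of a solution to the Boltzmann equation solves the Boltzmann hierarchy.

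First I would verify the algebraic fact. Suppose $f \in C([0,T], H_x^{1+}L_v^{2,0+}\cap H_x^{1/2+}L_v^{2,1/2+})$ solves \eqref{eqn:QBEquation} with initial datum $f_0$. Set $f^{(k)}(t, \boldsymbol{x}_k, \boldsymbol{v}_k) := \prod_{j=1}^k f(t, x_j, v_j)$. Applying $(\partial_t + \boldsymbol{v}_k\cdot\nabla_{\boldsymbol{x}_k})$ to the product and using that each factor solves \eqref{eqn:QBEquation} with collision operator \eqref{eqn:collision kernel}, the Leibniz rule produces exactly $\sum_{j=1}^k Q_{j,k+1}f^{(k+1)}$ with $Q_{j,k+1}$ the pieces in \eqref{eqn:collision kernel for hierarchy}; this is the content of the remark following Theorem \ref{thm:main} that a tensor power yields a formal solution to the hierarchy. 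One must check that all manipulations are justified at the stated regularity: since $Q$ is bilinear and, by Proposition \ref{P:Q0Estimates} together with Proposition \ref{P:general-fixed-time-2}, bounded on the relevant spaces, the identity $\partial_t f^{(k)} + \boldsymbol{v}_k\cdot\nabla_{\boldsymbol{x}_k} f^{(k)} = Q^{(k+1)}f^{(k+1)}$ holds in the sense that both sides lie in $C([0,T], (\text{a Sobolev space}))$ and agree. The family $\{f^{(k)}\}$ is a family of symmetric probability marginals (symmetry and the normalization are inherited from $f$, positivity from the well-posedness construction in \S\ref{sec:illposedness}), hence admissible in the sense of case (ii) of the Definition preceding Theorem \ref{Thm:Uniqueness of Hierarchy}.

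Next I would check the quantitative hypotheses \eqref{eqn:condition for uniqueness}--\eqref{eqn:condition for uniqueness 2}. For a tensor power one has
\begin{equation*}
\left\| \Big(\textstyle\prod_{j=1}^k \langle\nabla_{x_j}\rangle^{1+}\langle v_j\rangle^{0+}\Big) f^{(k)}\right\|_{L^2_{x,v}} = \| \langle\nabla_x\rangle^{1+}\langle v\rangle^{0+} f\|_{L^2_{x,v}}^k \leq C_0^k
\end{equation*}
with $C_0 := \sup_{t\in[0,T]}\|f(t)\|_{H_x^{1+}L_v^{2,0+}} \vee \sup_{t\in[0,T]}\|f(t)\|_{H_x^{1/2+}L_v^{2,1/2+}}$, which is finite precisely because $f\in C([0,T], H_x^{1+}L_v^{2,0+}\cap H_x^{1/2+}L_v^{2,1/2+})$; the same computation gives \eqref{eqn:condition for uniqueness 2}. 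Thus any two solutions $f, g$ to \eqref{eqn:QBEquation} in this class produce two admissible solutions $\{f^{(k)}\}$, $\{g^{(k)}\}$ of the hierarchy sharing the same initial data $\{f_0^{\otimes k}\}$ and satisfying the bounds, so Theorem \ref{Thm:Uniqueness of Hierarchy} forces $f^{(k)} = g^{(k)}$ for all $k$ and $t$; taking $k=1$ gives $f = g$.

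The main obstacle I anticipate is not conceptual but a matter of care: confirming that the tensorized family genuinely qualifies as an \emph{admissible solution in the exact sense required by Theorem \ref{Thm:Uniqueness of Hierarchy}}, i.e. that the hierarchy identity holds in the precise (weak, everywhere-in-time) formulation used in \S\ref{sec:uniqueness}, and that the constant $C_0$ can be taken uniform in $k$ as written. The first point requires noting that $f$ solves \eqref{eqn:QBEquation} everywhere in time (guaranteed by working in $H_x^{1+}L_v^{2,0+}\cap H_x^{1/2+}L_v^{2,1/2+}$, the borderline regularity identified in \S\ref{sec:illposedness} and \S\ref{sec:uniqueness} for everywhere-in-time solutions), so that the Duhamel form $f^{(k)}(t) = S^{(k)}(t)f_0^{\otimes k} + \mathcal{D}^{(k)}Q^{(k+1)}f^{(k+1)}$ is literally valid; the second is immediate from the product structure as shown above. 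Once these are in place, the corollary follows directly, with the local time of uniqueness being the $T$ on which the hypothesized $C([0,T],\ldots)$ solution exists.
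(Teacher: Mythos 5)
Your overall strategy---tensorize a solution of (\ref{eqn:QBEquation}) into a hierarchy solution and invoke Theorem \ref{Thm:Uniqueness of Hierarchy}---is close in spirit to the paper, and your verifications of the algebraic identity $\partial_t f^{\otimes k} + \boldsymbol{v}_k\cdot\nabla_{\boldsymbol{x}_k} f^{\otimes k} = Q^{(k+1)}f^{\otimes(k+1)}$ and of the bounds (\ref{eqn:condition for uniqueness})--(\ref{eqn:condition for uniqueness 2}) (which are just $k$-th powers of the single-particle norms) are fine. But there is a genuine gap in the reduction step: Theorem \ref{Thm:Uniqueness of Hierarchy} only applies to \emph{admissible} families, and you cannot verify admissibility for the tensor powers of an arbitrary solution in the stated class. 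The Corollary assumes only $f\in C([0,T],H_x^{1+}L_v^{2,0+}\cap H_x^{\frac12+}L_v^{2,\frac12+})$; such an $f$ need not be nonnegative, need not lie in $L^1_{x,v}$, and need not be normalized, so $\{f^{\otimes k}\}$ is in general \emph{not} a family of symmetric probability marginals (case (ii)), and case (i) is equally unavailable. Your appeal to ``positivity from the well-posedness construction'' is circular: a uniqueness statement must cover every competitor solution in the class, not only the ones produced by the existence theorem, and in any case positivity there is only propagated from nonnegative data, which the Corollary does not assume.

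The paper's proof sidesteps this entirely: it reruns the \emph{proof} of Theorem \ref{Thm:Uniqueness of Hierarchy} rather than invoking its statement. Admissibility enters that proof only through Lemma \ref{lem:Hewitt-Savaege}, whose sole purpose is to produce the superposition $f^{(k)}=\int\rho^{\otimes k}\,d\mu_t(\rho)$. For tensor powers of two solutions $f,g$ this representation is trivially available with $\mu_{1,t}=\delta_{f(t)}$, $\mu_{2,t}=\delta_{g(t)}$, so $d|\mu_t|=\delta_{f(t)}+\delta_{g(t)}$ is finite and supported in $E(C_0)$, and the board game combinatorics together with Proposition \ref{prop:estimate with boardgame} apply verbatim to the difference family. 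This is exactly what the paper means by ``follows from the proof \ldots without using Lemma \ref{lem:Hewitt-Savaege}, and hence does not need the admissible condition.'' Your argument becomes correct once you replace ``apply Theorem \ref{Thm:Uniqueness of Hierarchy}'' by ``apply its proof with the delta-measure representation,'' but as written the black-box invocation does not go through.
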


\begin{proof}
This corollary follows from the proof of Theorem \ref{Thm:Uniqueness of
Hierarchy} without using Lemma \ref{lem:Hewitt-Savaege}, and hence does not
need the admissibile condition.
\end{proof}

The proof of Theorem \ref{Thm:Uniqueness of Hierarchy} consists of 3 main
ingredients: a Klainerman-Machedon (KM) combinatoric which is stated as
Lemma \ref{lem:extended board game} and proved in \S \ref{sec:boardgame
proof} to combine the factorial many terms into exponentially many terms; a
Hewitt-Savage theorem,\footnote{\cite{ACI} also suggests the usage of this
theorem.} which gives a representation as a superposition of molecular chaos
for the solution and hence simplifies the proofs of the needed bilinear
estimates\footnote{%
As we are not at scaling critical regularity, trace type multilinear
estimate, which implies the product type we use here, can be proved. But,
away from requiring a even more technical analysis, it would result in a
conditional uniqueness theorem which needs a rerun of Section \ref%
{sec:Convergence} to verify the condition.}; and finally, the bilinear
estimates in \S \ref{Subsec:Multilinear Estimates}, which will be iterated
to conclude the difference of the solutions is actually zero.

We use the following version of the Hewitt-Savage theorem.

\begin{lemma}[Hewitt-Savage]
\label{lem:Hewitt-Savaege}Let $\{f_{N}^{(k)}\}$ be the family of marginals
of a symmetric $N$-body system on $\Omega $ and let the family $\{f^{(k)}\}$
be a weak $N\rightarrow \infty $ limit point of $\{f_{N}^{(k)}\}$, then
there exists a probability measure $d\mu (\rho )$ on $\mathcal{P}(\Omega )$,
the space of probability measures on $\Omega $, such that%
\begin{equation}
f^{(k)}=\int_{\mathcal{P}(\Omega )}\rho ^{\otimes k}d\mu (\rho ).
\label{eqn:H-WRep}
\end{equation}
\end{lemma}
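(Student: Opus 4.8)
\textbf{Proof proposal for Lemma \ref{lem:Hewitt-Savaege} (Hewitt--Savage representation).}

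The plan is to reduce the claim to the classical Hewitt--Savage theorem on the infinite symmetric power of a Polish space. First I would observe that, since the $N$-body system is symmetric, for each $N$ the full joint distribution $f_N^{(N)}$ is an exchangeable probability measure on $\Omega^N$, and its marginals $f_N^{(k)}$ are precisely the push-forwards under the projection $\Omega^N \to \Omega^k$ onto the first $k$ coordinates. The standard device is to extend each finite exchangeable law to an infinite one: because $\Omega$ is Polish (here $\Omega = \mathbb{R}^3 \times \mathbb{R}^3$ or a suitable phase space), I would either invoke the de Finetti--Hewitt--Savage theorem directly for each genuinely infinite exchangeable sequence, or first pass to the limit and then apply it. Concretely, fix $k$; along the subsequence $N \to \infty$ realizing the weak limit point, $f_N^{(k)} \rightharpoonup f^{(k)}$, and the family $\{f^{(k)}\}_{k\geq 1}$ is a consistent family of symmetric probability measures on the $\Omega^k$ (consistency and symmetry pass to weak limits, and tightness plus the uniform mass normalization guarantee the limits are genuine probability measures, not just subprobabilities). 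By Kolmogorov's extension theorem this consistent family defines a single exchangeable probability measure $\mathbb{P}$ on $\Omega^{\mathbb{N}}$ whose $k$-dimensional marginal is $f^{(k)}$.

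Next I would apply the Hewitt--Savage theorem in its measure-theoretic form: an exchangeable probability measure $\mathbb{P}$ on $\Omega^{\mathbb{N}}$, with $\Omega$ Polish, is a mixture of i.i.d. laws, i.e.\ there exists a (unique) probability measure $d\mu(\rho)$ on $\mathcal{P}(\Omega)$ such that $\mathbb{P} = \int_{\mathcal{P}(\Omega)} \rho^{\otimes \mathbb{N}} \, d\mu(\rho)$. Restricting to the first $k$ coordinates then yields exactly
\begin{equation*}
f^{(k)} = \int_{\mathcal{P}(\Omega)} \rho^{\otimes k} \, d\mu(\rho),
\end{equation*}
which is \eqref{eqn:H-WRep}. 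Here I would note that $\mathcal{P}(\Omega)$ is itself Polish under the weak topology, so the integral is well defined and $\rho \mapsto \rho^{\otimes k}$ is (weakly) continuous, making the right-hand side a legitimate Bochner-type average.

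The main obstacle I anticipate is not the combinatorial core of Hewitt--Savage, which is classical, but the soft-analysis bookkeeping needed to legitimately build the infinite exchangeable measure from a weak limit point of finite-$N$ marginals: one must verify (i) that the limiting marginals remain probability measures (no escape of mass to infinity), which uses the uniform-in-$N$ moment/regularity bounds available in this setting to secure tightness; (ii) that the consistency relations $\int f^{(k+1)} \, d(\text{last variable}) = f^{(k)}$ survive the weak limit, which is routine once tightness is in hand; and (iii) that $\Omega$ (equivalently the relevant phase space) is Polish so that the Kolmogorov extension and the Polish-space version of Hewitt--Savage both apply. Once these three points are dispatched, the representation \eqref{eqn:H-WRep} follows immediately, and I would only need to remark that this is precisely the input required for the uniqueness argument in \S\ref{sec:uniqueness}, where the mixture structure reduces the needed bilinear estimates to the factorized case $\rho^{\otimes k}$.
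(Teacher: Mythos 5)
Your route (extend the limiting hierarchy to an exchangeable law on $\Omega^{\mathbb{N}}$ via Kolmogorov, then invoke the classical Hewitt--Savage/de Finetti theorem on a Polish space and restrict to the first $k$ coordinates) is the standard argument and is in spirit the same as what the paper does: the paper does not reprove the theorem but cites the lecture notes \cite{R1}, pointing out that the quick argument there requires compactness of $\Omega$ and that the non-compact case is treated separately in \cite[p.34]{R1}. So the skeleton of your proposal is fine and correctly identifies the three things to check (probability normalization of the limits, consistency, Polishness).

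The one genuine soft spot is item (i), and it is exactly the point on which the compact and non-compact cases diverge. You assert that ``tightness plus the uniform mass normalization guarantee the limits are genuine probability measures,'' but uniform normalization $\int f_N^{(k)}=1$ does not by itself prevent escape of mass, and the convergence in this paper is weak $L^2_{\boldsymbol{x}_k,\boldsymbol{v}_k}$ convergence, not weak convergence of probability measures against bounded continuous test functions; under such a limit the $f^{(k)}$ could a priori be sub-probabilities, in which case the representation over $\mathcal{P}(\Omega)$ (with $\mu$ a probability measure and each $\rho$ a probability measure) would fail as stated. Moreover, the uniform-in-$N$ bounds actually available here, of the type $H_{\boldsymbol{x}_k}^{1+}L_{\boldsymbol{v}_k}^{2,\frac12+}$, give moment control in $v$ but no spatial decay, so tightness in the $x$ variables is not a consequence of the stated hypotheses; it is precisely this failure of automatic tightness on a non-compact $\Omega$ that forces the separate argument in \cite[p.34]{R1} (roughly, a weak de Finetti statement allowing loss of mass, followed by an identification of the limiting measure). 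To make your proposal complete you should either supply a genuine tightness/no-mass-loss argument adapted to the weak-$L^2$ topology used here, or replace the appeal to the compact-case theorem by the non-compact version as in the cited reference; as written, the reduction to Kolmogorov extension silently assumes the very point that needs proof.
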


\begin{proof}
There are many versions and related references for this theorem. See, for
example, \cite{ACI}, in which a version was used to deal with the
homogeneous case. The version we are using here is actually from the lecture
note \cite[\S 2]{R1}. As written in \cite[(2.3) or (2.10)]{R1}, in the $N$%
-body context, the version one would like to use is mostly%
\begin{equation*}
f_{N}^{(k)}\rightharpoonup _{\ast }\int_{\mathcal{P}(\Omega )}\rho ^{\otimes
k}d\mu (\rho )\text{ as }N\rightarrow \infty .
\end{equation*}%
The quick argument on \cite[p.29]{R1} directly near \cite[(2.10)]{R1} needs
compactness of $\Omega $. It is then further investigated and a proof for
the non-compact $\Omega $ case is given in \cite[p.34]{R1}.
\end{proof}

Under representation (\ref{eqn:H-WRep}), we can restate the requirement (\ref%
{eqn:condition for uniqueness})-(\ref{eqn:condition for uniqueness 2}) using
the Chebyshev's inequality. In fact, if we take the $\left\langle \nabla
_{x}\right\rangle ^{1+}\left\langle v\right\rangle ^{0+}$ part as an
example, like in \cite{CHPS}, we have for all $K>C_{0}$ that%
\begin{eqnarray*}
&&\mu \left\{ \rho \in \mathcal{P}(\Omega ):\left\Vert \left\langle \nabla
_{x}\right\rangle ^{1+}\left\langle v\right\rangle ^{0+}\rho \right\Vert
_{L_{x,v}^{2}}>K\right\} \\
&\leqslant &\frac{\left\Vert \left( \dprod\limits_{j=1}^{k}\left\langle
\nabla _{x_{j}}\right\rangle ^{1+}\left\langle v_{j}\right\rangle
^{0+}\right) f^{(k)}(t,\mathbf{x}_{k},\mathbf{v}_{k})\right\Vert
_{L_{x,v}^{2}}}{K^{k}}\leqslant \frac{C_{0}^{k}}{K^{k}}\text{, for all }k,
\end{eqnarray*}%
that is,%
\begin{equation*}
\mu \left\{ \rho \in \mathcal{P}(\Omega ):\left\Vert \left\langle \nabla
_{x}\right\rangle ^{1+}\left\langle v\right\rangle ^{0+}\rho \right\Vert
_{L_{x,v}^{2}}>K\right\} =0.
\end{equation*}%
That is, $\mu $ is supported in the set of functions (not only probabilty
measures): 
\begin{equation}
E(C_{0})=\left\{ \rho \in \mathcal{P}(\Omega ):\max \left\{ \left\Vert
\left\langle \nabla _{x}\right\rangle ^{1+}\left\langle v\right\rangle
^{0+}\rho \right\Vert _{L_{x,v}^{2}},\left\Vert \left\langle \nabla
_{x}\right\rangle ^{\frac{1}{2}+}\left\langle v\right\rangle ^{\frac{1}{2}%
+}\rho \right\Vert _{L_{x,v}^{2}}\right\} \leqslant C_{0}\right\} .
\label{eqn:measure spt property}
\end{equation}

Let $f_{1}^{(k)}$ and $f_{2}^{(k)}$ be two solutions of the quantum
Boltzmann hierarchy subject to the same initial condition and (\ref%
{eqn:condition for uniqueness})-(\ref{eqn:condition for uniqueness 2}), and $%
\mu _{1,t}$ and $\mu _{2,t}$ be their corresponding Hewitt-Savage measures,
we would like to deduce Theorem \ref{Thm:Uniqueness of Hierarchy} by proving 
\begin{equation*}
f^{(k)}(t,\mathbf{x}_{k},\mathbf{v}_{k})=\int \rho ^{\otimes k}d\mu
_{t}(\rho )\equiv \int \rho ^{\otimes k}d\left( \mu _{1,t}-\mu _{2,t}\right)
(\rho )=0.
\end{equation*}%
Here, $\mu _{t}$ is a signed measure, but we only need the properties that $%
d\left\vert \mu _{t}\right\vert =d\mu _{1,t}+d\mu _{2,t}$ is finite and it
is supported in $E(C_{0})$ defined in (\ref{eqn:measure spt property}). It
suffices to prove $f^{(1)}=0$ as the general case follows from the same
proof but with longer superscripts. Using the linearity of (\ref%
{hierarchy:Boltzmann}), we know 
\begin{equation*}
f^{(k)}(t_{k},\mathbf{x}_{k},\mathbf{v}_{k})=%
\int_{0}^{t_{k}}S_{k,k+1}^{(k)}Q^{(k+1)}(f^{(k+1)})dt_{k+1},
\end{equation*}%
where we have taken up the shorthand 
\begin{equation*}
S_{i,l}^{(k)}\equiv \dprod\limits_{j=1}^{k}e^{-(t_{i}-t_{l})v_{j}\cdot
\nabla _{x_{k}}}\text{ and }S_{i,l}=e^{-(t_{i}-t_{l})v\cdot \nabla _{x}}
\end{equation*}%
Iterating the hierarchy relation, we obtain the Dyson series-like
interaction expansion\footnote{%
There are many names attached to such expansions. But as we are in the
quantum setting, we use Dyson or Duhamel-Born here.} of $f^{(1)},$ 
\begin{equation}
f^{(1)}(t_{1},x_{1},v_{1})=\int_{0}^{t_{1}}\int_{0}^{t_{2}}...%
\int_{0}^{t_{k}}D^{(k+1)}(f^{(k+1)}(t_{k+1}))d\text{\b{t}}_{k+1}
\label{eqn:f_1'sBornExpansion}
\end{equation}%
where \b{t}$_{k+1}=\left( t_{2},t_{3},...,t_{k+1}\right) $ and%
\begin{equation*}
D^{(k+1)}(f^{(k+1)}(t_{k+1}))=S_{1,2}^{(1)}Q^{(2)}S_{2,3}^{(2)}Q^{(3)}...S_{k,k+1}^{(k)}Q^{(k+1)}.
\end{equation*}%
As $Q^{(k)}$ has $k$ terms inside (without splitting into $Q^{+},Q^{-}$), (%
\ref{eqn:f_1'sBornExpansion}) contains $(k+1)!$ many summands. In the
Lanford method, such a factorial factor is countered by a simplex integral
of the time domain. In the quantum setting, there are some known
combinatorics based on Feymann diagrams. But we will not use any Feymann
diagrams, we use our own combinatoric, a KM board game, to reduce the number
of terms by combining them.\footnote{%
We are using binary trees for our algorithm, but they are not Feymann
diagrams. Feymann diagrams make up a proper subset of binary trees.}

\begin{lemma}[Klainerman-Machedon board game]
\label{lem:extended board game}One can group the $(k+1)!$ many summands
inside (\ref{eqn:f_1'sBornExpansion}) into at most $4^{k}$ classes indexed
by $\mu \in m_{k}$, where $m_{k}$ is a set of suitable permutations in the
permutation group $S_{k+1}$ satisfying $\mu (j)<j$ for $j=2,...,k+1$.%
\footnote{$m_{k}$ is the set of upper echelon trees as we will see in the
proof in \S \ref{sec:boardgame proof}.} For each class $\mu $, all summands
inside that class, can be summed (combined) into one integral%
\begin{equation}
I_{\mu }^{(k+1)}\left( f^{(k+1)}\right) (t_{1})=\int_{T(\mu )}D_{\mu
}^{(k+1)}(f^{(k+1)}(t_{k+1}))d\text{\b{t}}_{k+1}
\label{eqn:reference integration}
\end{equation}%
where the time integration domain $T(\mu )\subset \left[ 0,t_{1}\right] ^{k}$
is a union of simplexes and is explicitly determined by $\mu $ and the
integrand is given by%
\begin{equation*}
D_{\mu }^{(k+1)}(f^{(k+1)}(t_{k+1}))=S_{1,2}^{(1)}Q_{\mu
(2),2}S_{2,3}^{(2)}Q_{\mu (3),3}...S_{k,k+1}^{(k)}Q_{\mu (k+1),k+1}.
\end{equation*}
\end{lemma}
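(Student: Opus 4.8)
\textbf{Plan of proof of Lemma \ref{lem:extended board game} (the Klainerman--Machedon board game).}

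The plan is to follow the now-standard Klainerman--Machedon combinatorial reorganization of the Dyson--Duhamel expansion \eqref{eqn:f_1'sBornExpansion}, adapted to the Boltzmann collision operators $Q_{j,k+1}$ in place of the NLS interaction operators. First I would record that, after expanding $Q^{(j)}=\sum_{\mu(j)<j}Q_{\mu(j),j}$ in each factor of $D^{(k+1)}$, the $(k+1)!$ summands of \eqref{eqn:f_1'sBornExpansion} are indexed by maps $\mu\colon\{2,\dots,k+1\}\to\{1,\dots,k\}$ with $\mu(j)<j$; each such $\mu$ is encoded by a binary tree on $k+1$ vertices (equivalently, by the choice, at each collision step $j$, of which of the existing variables $x_{\mu(j)}$ collides with the newly introduced variable $x_j$). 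Each summand carries the fully nested simplex $0\le t_{k+1}\le\cdots\le t_2\le t_1$ as its time-integration domain, together with the alternating iterated propagator/collision string $S_{1,2}^{(1)}Q_{\mu(2),2}S_{2,3}^{(2)}Q_{\mu(3),3}\cdots S_{k,k+1}^{(k)}Q_{\mu(k+1),k+1}$.

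The heart of the argument is the ``board game'': two operations on the labeled tree --- (i) an \emph{acceptable move}, which swaps the order of two consecutive time integrations $t_j\leftrightarrow t_{j+1}$ whenever the collisions at steps $j$ and $j+1$ act on disjoint sets of particle indices (so that $Q_{\mu(j),j}$ and $Q_{\mu(j+1),j+1}$ commute), simultaneously relabeling; and (ii) the observation that summing a summand over the two time-orderings $t_j<t_{j+1}$ and $t_{j+1}<t_j$ merges the two corresponding simplexes. I would argue that every $\mu$ can be brought by a finite sequence of acceptable moves into a canonical representative --- the \emph{upper echelon} tree, in which $\mu$ is ``as left-loaded as possible,'' i.e.\ the values $\mu(2),\mu(3),\dots,\mu(k+1)$ are nonincreasing on each block of equal branching depth (this is the precise combinatorial normal form to be spelled out in \S\ref{sec:boardgame proof}). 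The key counting step is then a bijection between upper-echelon trees and a set of lattice paths / sign sequences of length $k$, giving $|m_k|\le 4^k$ (the crude bound $4^k$, rather than the sharper Catalan-type count, suffices here and is what the later iteration needs). Once the classes are identified, collecting all summands in a given class $\mu\in m_k$ and summing their merged simplexes produces exactly the single integral $I_\mu^{(k+1)}$ over the union of simplexes $T(\mu)\subset[0,t_1]^k$ with integrand $D_\mu^{(k+1)}$ as stated, because the propagator/collision string is \emph{identical} for all members of a class (they differ only in the order of the dummy time variables, which the merged domain now absorbs). This yields the reduction from $(k+1)!$ to $\le 4^k$ terms.

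Two points require care specific to the Boltzmann setting rather than NLS. First, the commutation of $Q_{\mu(j),j}$ with $Q_{\mu(j+1),j+1}$ on disjoint index sets: from the $(\boldsymbol{x}_k,\boldsymbol{v}_k)$ form \eqref{eqn:collision kernel for hierarchy}, $Q_{j,k+1}$ only introduces the new variable $(x_{k+1},v_{k+1})$, modifies $(x_j,v_j)$ and $v_{k+1}$, and sets $x_{k+1}=x_j$; hence two collision operators whose distinguished indices and newly introduced indices are disjoint act on disjoint blocks of variables and the transport propagators $S^{(j)}$ act diagonally, so the operators genuinely commute --- this is the algebraic fact underlying the acceptable moves and it should be verified directly on \eqref{eqn:collision kernel for hierarchy}. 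Second, since all our estimates (Proposition \ref{P:QEstimates}, Proposition \ref{P:Q0Estimates}, Proposition \ref{P:general-fixed-time-2}) are stated without needing the split into $Q^+$ and $Q^-$, it is legitimate to run the board game on the \emph{unsplit} operators $Q_{j,k+1}$, which keeps the number of leaves at $k+1$ and the class count at $4^k$; I would emphasize this to avoid the spurious extra factor $2^k$. The main obstacle I anticipate is purely bookkeeping: making the definition of the upper echelon set $m_k$, the domains $T(\mu)$, and the relabeling rule precise enough that ``every $\mu$ reaches a unique upper-echelon representative'' is an actual theorem rather than a picture --- this is exactly the content deferred to \S\ref{sec:boardgame proof}, and it is where the proof's length resides, even though each individual move is elementary.
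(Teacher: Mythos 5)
Your plan follows essentially the same route as the paper's proof in \S\ref{sec:boardgame proof}: acceptable moves swapping $t_j\leftrightarrow t_{j+1}$ with simultaneous relabeling (which the paper justifies by conjugating $\mu$ by the transposition $(j,j+1)$ and invoking the symmetry of the density $f^{(k+1)}$, the rigorous version of your ``commutation'' heuristic), the binary-tree encoding of collapsing maps, reduction to upper-echelon representatives, the Catalan-number bound $\le 4^{k}$ on tree skeletons, and the identification of $T(\mu)$ as a union of simplexes read off from the parent--child relations of the tree. The only cosmetic discrepancy is that the paper's upper-echelon normal form is the nondecreasing condition $\mu(j)\le\mu(j+1)$ rather than your ``nonincreasing / left-loaded'' description, a detail you correctly deferred to the full write-up.
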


\begin{proof}
See \S \ref{sec:boardgame proof}.
\end{proof}

With Lemma \ref{lem:extended board game}, we turn our attention to the
estimate of $I_{\mu }$.

\begin{proposition}
\label{prop:estimate with boardgame}For $I_{\mu }$ coming from Lemma \ref%
{lem:extended board game}, and $f^{(k)}=\int \rho ^{\otimes k}d\mu _{t}(\rho
)$, we have%
\begin{eqnarray*}
&&\sup_{t_{1}\in \left[ 0,T\right] }\left\Vert \langle \nabla _{\xi
_{1}}\rangle ^{0+}\left( I_{\mu }^{(k+1)}\left( f^{(k+1)}\right)
(t_{1})\right) ^{\vee }\right\Vert _{L_{\eta _{1}\xi _{1}}^{3-}} \\
&\leqslant &2\left( CT^{\frac{1}{2}-}\right) ^{k-1}T\int \left\Vert \rho
\right\Vert _{H_{x_{1}}^{1+}L_{v_{1}}^{2,0+}}^{k-2}\left\Vert \rho
\right\Vert _{H_{x_{1}}^{\frac{1}{2}+}L_{v_{1}}^{2,\frac{1}{2}%
+}}^{2}d\left\vert \mu _{t_{k+1}}\right\vert (\rho )
\end{eqnarray*}%
where $C$ is a constant from Sobolev and Strichartz type inequalities and
does not depend on $f$.
\end{proposition}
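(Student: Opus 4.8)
The plan is to iterate the single bilinear estimate $k-1$ times along the ``backbone'' of the expansion $D_\mu^{(k+1)}$, peeling off one $Q_{\mu(j),j}$ factor at a time, while keeping everything on the $\vee$ side $(t,\eta,\xi)$ where Proposition~\ref{P:Q0Estimates} and, more importantly, Proposition~\ref{P:general-fixed-time-2} live. First I would use the Hewitt--Savage representation $f^{(k+1)} = \int \rho^{\otimes(k+1)} \, d\mu_{t_{k+1}}(\rho)$ to pass the whole estimate inside the $d|\mu_{t_{k+1}}|$ integral; since each $Q_{i,j}$ and each free transport $S_{i,j}^{(k)}$ acts factor-by-factor, and the integrand at the bottom is a pure tensor power $\rho^{\otimes(k+1)}$, the nested collision operators reduce at each stage to a \emph{bilinear} pairing of a single copy of $\rho$ (the ``new'' particle $j$) against the accumulated function in the remaining variables. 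This is exactly what makes the product-type (rather than trace-type) bilinear estimate sufficient, as the footnote in the excerpt anticipates.

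Next I would set up the induction. The key building block is a space-time bilinear estimate of Klainerman--Machedon type: for the intertwined transport operator $\partial_t + \eta\cdot\nabla_\xi$ on the $\vee$ side, with the Duhamel operator $\mathcal D^{(k)}$ providing the time integration, one has
\begin{equation*}
\Big\| \langle\nabla_{\xi_1}\rangle^{0+} \big(\mathcal D^{(k)} Q_{i,k+1} g^{(k+1)}\big)^\vee \Big\|_{L^{3-}_{\eta,\xi}} \lesssim T^{\frac12-} \, \big\| \langle\nabla_{\xi_1}\rangle^{0+} g^{(k+1)}_\vee \big\|_{(\text{appropriate norm})},
\end{equation*}
where the $T^{\frac12-}$ gain comes from the time-simplex integration (as in Proposition~\ref{P:QEstimates}, the rescaled $s$-integral produces $s^{-1+}\langle s\rangle^{0-}$, integrable with a $T^{\frac12-}$ margin), and the $\xi_{k+1}$ norm is handled by the $(L^{3+}_{\xi_{k+1}}\cap L^{3-}_{\xi_{k+1}})$ structure and the kinetic-transport Strichartz estimates of \cite{Ov11} on the $(\eta,\xi)$ side — this is precisely the ``false endpoint'' match flagged in the introduction, so the $L^{3\pm}$ Lebesgue exponents are forced and Proposition~\ref{P:general-fixed-time-2} with $p=3\pm$ is what closes it. Applying this estimate once for each $j = k+1, k, \dots, 3$ (that is, $k-1$ times) turns the full $I_\mu^{(k+1)}$ into a product: $k-1$ factors of $CT^{\frac12-}$, one factor of $T$ from the outermost Duhamel, and at the bottom a norm of $\rho^{\otimes(k+1)}$ which, being a tensor power, factors as a product of single-particle norms of $\rho$. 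The bookkeeping of which regularity ($H^{1+}_x L^{2,0+}_v$ versus $H^{1/2+}_x L^{2,1/2+}_v$) each $\rho$-factor receives is dictated by which variable is being ``integrated out'' at that stage of the bilinear estimate: the two factors carrying the $\tfrac12+$ velocity weight are the two slots directly touched by the final collision (the $v_j^\ast, v_{k+1}^\ast$ pair), accounting for the $\|\rho\|_{H^{1/2+}_x L^{2,1/2+}_v}^2$, while the remaining $k-2$ factors only need the $H^{1+}_x L^{2,0+}_v$ bound, giving the $\|\rho\|_{H^{1+}_x L^{2,0+}_v}^{k-2}$. The final leading constant $2$ absorbs the $4^k$ versus $2^{k-1}$ discrepancy after one sums over the collision-operator sign choices $Q = Q^+ - Q^-$ (the factor $4^k$ of classes $\mu$ is dealt with \emph{outside} this proposition, when one sums the $I_\mu$).

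The main obstacle I expect is the bilinear estimate itself at its endpoint: reconciling the $L^{3-}_{\eta_1\xi_1}$ target norm with what the Strichartz theory for $\partial_t + \eta\cdot\nabla_\xi$ actually delivers. Unlike the Schrödinger case, there is no comfortable subcritical cushion, and the kinetic transport Strichartz estimates \cite{Ov11,BBGL14,BP01} were designed for the $(t,x,v)$ transport operator, not the intertwined one on the $(\eta,\xi)$ side; the content of Proposition~\ref{P:general-fixed-time-2} is a \emph{fixed-time} $L^p$ bound that happens to land exactly at the (false) endpoint of the corresponding space-time estimate, so the time integration must be squeezed out by hand via the $s^{-1+}\langle s\rangle^{0-}$ mechanism rather than by a clean Strichartz norm. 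Carefully propagating the $\langle\nabla_{\xi_1}\rangle^{0+}$ derivative past the collision operator without losing the $L^{3\pm}$ duality — which is where the Littlewood-Paley square function argument of Proposition~\ref{P:general-fixed-time-2} becomes essential — is the delicate step, and getting the $T$-powers to sum to exactly $T\cdot(CT^{\frac12-})^{k-1}$ (rather than something with a worse $k$-dependence) requires the time-simplex gain to be fully $\frac12-$ at every stage.
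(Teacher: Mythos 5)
Your proposal follows essentially the same route as the paper: Hewitt--Savage reduces everything to tensor powers $\rho^{\otimes(k+1)}$, the couplings are then peeled off one at a time on the $\vee$ side with a bilinear estimate per coupling (each worth $CT^{\frac12-}$ and one factor of $\|\rho\|_{H_{x}^{1+}L_{v}^{2,0+}}$, using the kinetic-transport Strichartz estimates with the $L^{3\pm}$ exponents), and the innermost coupling $Q^{\pm}(\rho,\rho)$ is closed by the fixed-time bilinear estimate costing $\|\rho\|_{H_{x}^{\frac12+}L_{v}^{2,\frac12+}}^{2}$, exactly as in the paper's iteration scheme. The only inaccuracy worth noting is the origin of the $T^{\frac12-}$ per coupling: it does not come from the $s^{-1+}\langle s\rangle^{0-}$ integrability of the rescaled collision-time integral (that is a fixed-time ingredient of Propositions \ref{P:Q0Estimates} and \ref{P:general-fixed-time-2}), but from Cauchy--Schwarz in the coupling time $t_{j+1}\in[0,T]$ played against the $L_{t}^{2+}$ Strichartz norms in estimates (\ref{eqn:UniquenessStrichartz-1})--(\ref{eqn:UniquenessStrichartz-2}), where the factor carrying the not-yet-expanded inner couplings is the one placed in $L_{\eta\xi}^{3-}$ and the free-flow factor is the one to which Strichartz is applied.
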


\begin{proof}
See \S \ref{sec:Iteration Estimates}.
\end{proof}

With Lemma \ref{lem:extended board game} and Proposition \ref{prop:estimate
with boardgame}, we provide the proof of Theorem \ref{Thm:Uniqueness of
Hierarchy} as the following.

\begin{proof}[Proof of Theorem \protect\ref{Thm:Uniqueness of Hierarchy}]
By Lemma \ref{lem:extended board game}, we have%
\begin{equation*}
\sup_{t_{1}\in \left[ 0,T\right] }\left\Vert \langle \nabla _{\xi
_{1}}\rangle ^{0+}\check{f}^{(1)}\right\Vert _{L_{\eta _{1}\xi
_{1}}^{3-}}\leqslant 4^{k}\sum_{\mu \in m_{k}}\sup_{t_{1}\in \left[ 0,T%
\right] }\left\Vert \langle \nabla _{\xi _{1}}\rangle ^{0+}\left( I_{\mu
}^{(k+1)}\left( f^{(k+1)}\right) (t_{1})\right) ^{\vee }\right\Vert
_{L_{\eta _{1}\xi _{1}}^{3-}}\text{ for all }k
\end{equation*}%
Let $T<1$ to be determined, use Proposition \ref{prop:estimate with
boardgame}, 
\begin{equation*}
\sup_{t_{1}\in \left[ 0,T\right] }\left\Vert \langle \nabla _{\xi
_{1}}\rangle ^{0+}\check{f}^{(1)}\right\Vert _{L_{\eta _{1}\xi
_{1}}^{3-}}\leqslant 4^{k}C^{k}T^{\frac{k}{4}}\int \left\Vert \rho
\right\Vert _{H_{x_{1}}^{1+}L_{v_{1}}^{2,0+}}^{k}\left\Vert \rho \right\Vert
_{H_{x_{1}}^{\frac{1}{2}+}L_{v_{1}}^{2,\frac{1}{2}+}}d\left\vert \mu
_{t_{k+1}}\right\vert (\rho )\ \text{for all }k
\end{equation*}%
Apply the support property (\ref{eqn:measure spt property}),%
\begin{eqnarray*}
\sup_{t_{1}\in \left[ 0,T\right] }\left\Vert \check{f}^{(1)}\right\Vert
_{L_{\eta _{1}\xi _{1}}^{3-}} &\leqslant &4^{k}C^{k}T^{\frac{k}{4}%
}C_{0}^{k+1}\cdot 2\ \text{for all }k \\
&\rightarrow &0\text{ as }k\rightarrow \infty
\end{eqnarray*}%
if we select $T$ such that $\left( 4CC_{0}T^{\frac{1}{4}}\right) <\frac{1}{2}
$.
\end{proof}

\subsection{Proof of the Klainerman-Machedon board game\label{sec:boardgame
proof}}

The Klainerman-Machedon (KM) board game \cite{KM1} and its extensions \cite%
{CH8,CSZ}, since invented, has been used in every paper in which the
analysis of the Gross-Pitaevskii hierarchy is involved. Its original version
in which the time integration domain was unknown, has been used without
proof in \cite{CDP19a} for the Boltzmann hierarchy. We hereby provide its
full proof, for the Boltzmann hierarchy, with the time integration domain
computed using the newest techniques \cite{CH8}. Most of the materials in
this section are a different version of \cite{CH8} as well.

There are 2 key observations. One is the fact that after some suitable
substitution, many summands inside $D^{(k+1)}(f^{(k+1)})$ actually have the
same integrands and hence they can be combined into the so-called upper
echelon classes if we follow some rules. The other one is that, if put in
tree representations, all possible cases inside an upper echelon class are
actually all the possibilities in which children must carry a higher index
than parents.

Recall the notation of $\mu $ in Lemma \ref{lem:extended board game}: $%
\left\{ \mu \right\} $ is a set of maps from $\{2,\ldots ,k+1\}$ to $%
\{1,\ldots ,k\}$ satisfying $\mu (2)=1$ and $\mu (l)<l$ for all $l,$ and 
\begin{equation*}
D_{\mu }^{(k+1)}(f^{(k+1)}(t_{k+1}))=S_{1,2}^{(1)}Q_{\mu
(2),2}S_{2,3}^{(2)}Q_{\mu (3),3}...S_{k,k+1}^{(k)}Q_{\mu (k+1),k+1}.
\end{equation*}%
Throughout this section, we only work with $k\geqslant 4$, that is coupling
to at least $f^{(5)}$, as it is the minimal length for the argument to have
enough room to work. (We actually want $k\rightarrow \infty $ anyway.)

\begin{example}
An example of $\mu$ when $k=5$ is 
\begin{equation*}
\begin{tabular}{c|ccccc}
$j$ & $2$ & $3$ & $4$ & $5$ & $6$ \\ \hline
$\mu $ & $1$ & $1$ & $3$ & $2$ & $1$%
\end{tabular}
.
\end{equation*}
\end{example}

If $\mu $ satisfies $\mu (j)\leq \mu (j+1)$ for $2\leq j\leq k$ in addition
to $\mu (j)<j$ for all $2\leq j\leq k+1$, then it is in \emph{upper-echelon
form} as they are called in \cite{KM}. (The word ``upper echelon" certainly
makes more sense when one uses the matrix / board game representation of $%
D_{\mu }^{(k+1)}(f^{(k+1)})$ in \cite{KM}.) Let $\mu $ be a collapsing map
as defined above and $\sigma $ a permutation of $\{2,\ldots ,k+1\}$. A \emph{%
KM acceptable move}, which we denote $\text{KM}(j,j+1)$, is allowed when $%
\mu (j)\neq \mu (j+1)$ and $\mu (j+1)<j$, and is the following action: $(\mu
^{\prime },\sigma ^{\prime })=\text{KM}(j,j+1)(\mu ,\sigma )$: 
\begin{align*}
\mu ^{\prime }& =(j,j+1)\circ \mu \circ (j,j+1) \\
\sigma ^{\prime }& =(j,j+1)\circ \sigma
\end{align*}

The first key observation is that if $(\mu ^{\prime },\sigma ^{\prime })=%
\text{KM}(j,j+1)(\mu ,\sigma )$ and $f^{(k+1)}$ is a symmetric density, then 
\begin{equation}
\int D_{\mu ^{\prime }}^{(k+1)}(f^{(k+1)})(t_{1},{\sigma ^{\prime }}^{-1}(%
\underline{t}_{k+1}))d\underline{t}_{k+1}=\int D_{\mu
}^{(k+1)}(f^{(k+1)})(t_{1},{\sigma }^{-1}(\underline{t}_{k+1}))d\underline{t}%
_{k+1}  \label{E:KM-den1}
\end{equation}%
where, 
\begin{equation*}
\text{for }\underline{t}_{k+1}=(t_{2},\ldots ,t_{k+1})\text{ we define }%
\sigma ^{-1}(\underline{t}_{k+1})=(t_{\sigma ^{-1}(2)},\ldots ,t_{\sigma
^{-1}(k+1)})
\end{equation*}%
A simple example to see (\ref{E:KM-den1}) is the following.

\begin{example}
The integrals 
\begin{eqnarray*}
I_{1}
&=&%
\int_{D}S_{1,2}^{(1)}Q_{1,2}S_{2,3}^{(2)}Q_{2,3}S_{3,4}^{(3)}Q_{1,4}S_{4,5}^{(4)}Q_{4,5}(f^{(5)})d%
\underline{t}_{4}\text{, } \\
I_{2}
&=&%
\int_{D}S_{1,2}^{(1)}Q_{1,2}S_{2,3}^{(2)}Q_{1,3}S_{3,4}^{(3)}Q_{2,4}S_{4,5}^{(4)}Q_{3,5}(f^{(5)})d%
\underline{t}_{4}
\end{eqnarray*}%
with $D=\{t_{1}\geqslant t_{2}\geqslant t_{3}\geqslant t_{4}\geqslant
t_{5}\} $, actually have the same integrand. For simplicity, pluging in $%
f^{(5)}=\rho ^{\otimes 5}$ (the general case is the same but longer), we have%
\begin{eqnarray*}
I_{1} &=&\int_{D}S_{1,2}Q\left( S_{2,4}Q\left( S_{4,5}\rho ,S_{4,5}Q(\rho
,\rho )\right) ,S_{2,3}Q\left( S_{3,5}\rho ,S_{3,5}\rho \right) \right) d%
\underline{t}_{4} \\
I_{2} &=&\int_{D}S_{1,2}Q\left( S_{2,3}Q(S_{3,5}\rho ,S_{3,5}Q(\rho ,\rho
)),S_{2,4}Q\left( S_{4,5}\rho ,S_{4,5}\rho \right) )\right) d\underline{t}%
_{4}
\end{eqnarray*}%
Doing a $t_{3}\leftrightarrow t_{4}$ swap in $I_{1}$, we have%
\begin{equation*}
I_{1}=\int_{D^{\prime }}S_{1,2}Q\left( S_{2,3}Q(S_{3,5}\rho ,S_{3,5}Q(\rho
,\rho )),S_{2,4}Q\left( S_{4,5}\rho ,S_{4,5}\rho \right) )\right) d%
\underline{t}_{4}
\end{equation*}%
where $D^{\prime }=\{t_{1}\geqslant t_{2}\geqslant t_{4}\geqslant
t_{3}\geqslant t_{5}\}$. That is, $I_{1}$ and $I_{2}$ can be combined.
\end{example}

For each $\mu $ and $\sigma $, we define the Duhamel integrals 
\begin{equation}
I(\mu ,\sigma ,f^{(k+1)})(t_{1})=\int_{t_{1}\geq t_{\sigma (2)}\geq \cdots
\geq t_{\sigma (k+1)}}D_{\mu }^{(k+1)}(f^{(k+1)})(t_{1},\underline{t}%
_{k+1})\,d\underline{t}_{k+1}  \label{E:KM-den2}
\end{equation}%
It follows from \eqref{E:KM-den1} that 
\begin{equation*}
I(\mu ^{\prime },\sigma ^{\prime (k+1)},f^{(k+1)})=I(\mu ,\sigma ,f^{(k+1)})
\end{equation*}%
We combine KM acceptable moves as follows: if $\rho $ is a permutation of $%
\{2,\ldots ,k+1\}$ such that it is possible to write $\rho $ as a
composition of transpositions 
\begin{equation*}
\rho =\tau _{1}\circ \cdots \circ \tau _{r}
\end{equation*}%
for which each operator $\text{KM}(\tau _{j})$ on the right side of the
following is an acceptable action 
\begin{equation*}
\text{KM}(\rho )\overset{\mathrm{def}}{=}\text{KM}(\tau _{1})\circ \cdots
\circ \text{KM}(\tau _{r})
\end{equation*}%
then $\text{KM}(\rho )$, defined by this composition, is acceptable as well.
In this case $(\mu ^{\prime },\sigma ^{\prime })=\text{KM}(\rho )(\mu
,\sigma )$ and 
\begin{align*}
& \mu ^{\prime }=\rho \circ \mu \circ \rho ^{-1} \\
& \sigma ^{\prime }=\rho \circ \sigma
\end{align*}%
\eqref{E:KM-den1} and \eqref{E:KM-den2} hold as well. If $\mu $ and $\mu
^{\prime }$ are such that there exists $\rho $ as above for which $(\mu
^{\prime },\sigma ^{\prime })=\text{KM}(\rho )(\mu ,\sigma )$ then we say
that $\mu ^{\prime }$ and $\mu $ are \emph{KM-relatable}. This is an
equivalence relation that partitions the set of collapsing maps into
equivalence classes.

In the following, we represent these actions via tree diagrams in which the
effect of the actions and the ``not obvious at all" time integration domain $%
T(\mu )$ emerge clearly. Given a $\mu $ which is also a summand inside $%
D^{(k+1)}(f^{(k+1)})$, we construct a binary tree via Algorithm \ref{alg:u
to tree}.

\begin{algorithm}
\label{alg:u to tree} \leavevmode

\begin{enumerate}
\item Set counter $j=2$

\item Given $j$, find the next pair of indices $a$ and $b$ so that $a>j$, $%
b>j$ and 
\begin{equation*}
\mu (a)=\mu (j)\text{ and }\mu (b)=j
\end{equation*}
and moreover $a$ and $b$ are the minimal indices for which the above
equalities hold. It is possible that there is no such $a$ and/or no such $b$.

\item At the node $j$, put $a$ as the left child and $b$ as the right child
(if there is no $a$, then the $j$ node will be missing a left child, and if
there is no $b$, then the $j$ node will be missing a right child.)

\item If $j=k+1,$ then stop, otherwise set $j=j+1$ and go to step 2.
\end{enumerate}
\end{algorithm}

\begin{example}
\label{example:Tree-1} \quad

\begin{minipage}{0.30\linewidth}
\begin{center}
\begin{tikzpicture} 
\node{$1$}
	child[missing]
	child{node{$2$}
		child{node{$3$}} 
		child{node{$5$}}
	};
\end{tikzpicture}
\end{center}
\end{minipage}
\begin{minipage}{0.65\linewidth}
Let us work with the following example 

\medskip

\begin{center}
\begin{tabular}{c|ccccc}
$j$ & $2$ & $3$ & $4$ & $5$ & $6$ \\ \hline
$\mu _{\text{out}}$ & $1$ & $1$ & $1$ & $2$ & $3$
\end{tabular}
\end{center}

\medskip

We start with $j=2$, and note that $\mu _{\text{out}}(2)=1$ so need to find
minimal $a>2$, $b>2$ such that $\mu (a)=1$ and $\mu (b)=2$. In this case, it
is $a=3$ and $b=5$, so we put those as left and right children of $2$,
respectively, in the tree (shown at left)
\end{minipage}

\bigskip

\begin{minipage}{0.30\linewidth}
\begin{center}
\begin{tikzpicture} 
\node{$1$}
	child[missing]
	child{node{$2$}
		child{node{$3$} 
			child{node{$4$}} 
			child{node{$6$}} 
		}
		child{node{$5$}}
	}; 
\end{tikzpicture}
\end{center}
\end{minipage}
\begin{minipage}{0.65\linewidth}
Now we move to $j=3$. Since $\mu _{\text{out}}(3)=1$, we find minimal $a$
and $b$ so that $a>3$, $b>3$ and $\mu (a)=1$ and $\mu (b)=3$. We find that $
a=4$ and $b=6$, so we put these as left and right children of $3$,
respectively, in the tree (shown at left).  Since all indices appear in the tree, it is complete.
\end{minipage}
\end{example}

\begin{definition}
A binary tree is called an \emph{admissible} tree if every child node's
label is strictly larger than its parent node's label.\footnote{%
This is certainly a natural requirement coming from the hierarchy.} For an
admissible tree, we call the graph of the tree without any labels in its
nodes, the \emph{skeleton} of the tree.
\end{definition}

\begin{minipage}{0.30\linewidth}
\begin{center}
\begin{tikzpicture} 
\node{$\encircle{1}$}
	child[missing]
	child{node{$\encircle{\;}$}
		child{node{$\encircle{\;}$} 
			child{node{$\encircle{\;}$}} 
			child{node{$\encircle{\;}$}} 
		}
		child{node{$\encircle{\;}$}}
	}; 
\end{tikzpicture}
\end{center}
\end{minipage}%
\begin{minipage}{0.65\linewidth}
For example, the skeleton of the tree in Example \ref{example:Tree-1} is shown at left.

\bigskip

As the trees are coming from the hierarchy, Algorithm \ref{alg:u to tree}, produces only admissible trees. The procedure
is reversible -- given an admissible binary tree, we can uniquely
reconstruct the $\mu $ that generated it.
\end{minipage}

\begin{algorithm}
\leavevmode
\label{alg:tree to u}

\begin{enumerate}
\item For every right child, $\mu $ maps the child value to the parent value
(i.e. if $f$ is a right child of $d$, then $\mu (f)=d$). Start by filling
these into the $\mu $ table.

\item Fill in the table using that for every left child, $\mu $ maps the
child value to $\mu (\text{parent value})$.
\end{enumerate}
\end{algorithm}

\begin{example}
Suppose we are given the tree

\begin{minipage}{0.375\linewidth}
\begin{center}
\begin{tikzpicture} 
\node{$1$}
child[missing]
child{node{$2$}
	child{node{$3$} 
		child{node{$4$} 
			child[missing]
			child{node{$7$}} 
		}
		child{node{$6$} 
			child[missing] 
			child{node{$8$} 
				child{node{$9$}}
				child[missing] 
			} 
		} 
	} 
	child{node{$5$}}
}; 
\end{tikzpicture}
\end{center}
\end{minipage}
\begin{minipage}{0.575\linewidth}

Using that for every right child, $\mu$ maps the child value to the parent
value, we fill in the following values in the $\mu $ table:

\bigskip

\begin{center}
\begin{tabular}{c|cccccccc}
$j$ & $2$ & $3$ & $4$ & $5$ & $6$ & $7$ & $8$ & $9$ \\ \hline
$\mu $ & $1$ &  &  & $2$ & $3$ & $4$ & $6$ & 
\end{tabular}
\end{center}

\bigskip

Now we employ the left child rule, and note that since $3$ is a left child
of $2$ and $\mu (2)=1$, we must have $\mu (3)=1$, and etc. to recover

\bigskip

\begin{center}
\begin{tabular}{c|cccccccc}
$j$ & $2$ & $3$ & $4$ & $5$ & $6$ & $7$ & $8$ & $9$ \\ \hline
$\mu $ & $1$ & $1$ & $1$ & $2$ & $3$ & $4$ & $6$ & $6$
\end{tabular}
\end{center}
\end{minipage}
\end{example}

One can show that, in the tree representation of $\mu $, a KM acceptable
move, is the operation which switches the labels of two nodes with
consecutive labels on an admissible tree provided that the outcome is still
an admissible tree by writing out the related trees. For example,
interchanging the labeling of 5 and 6 in the tree in Example \ref%
{example:Tree-1} is an acceptable move. That is, KM acceptable moves
preserve the tree structures but permute the labeling under the
admissibility requirement. Two collapsing maps $\mu $ and $\mu ^{\prime }$
are KM-relatable if and only the trees corresponding to $\mu $ and $\mu
^{\prime }$ have the same skeleton.

Given $k$, we would like to have the number of different binary tree
structures of $k$ nodes. This number is exactly one of the Catalan number as
defined and is controlled by $4^{k}$. Hence, we just provided a proof of
Lemma \ref{lem:extended board game}, dropping the computation of $T(\mu )$.
To this end, we need to define what is an upper echelon form. Though the
requirement $\mu (j)\leq \mu (j+1)$ for $2\leq j\leq k$ is good enough, we
give an algorithm which produces the upper echelon tree given the tree
structure, as the tree representation of an upper echelon form is in fact
labeled in sequential order. See, for example, the tree in Example \ref%
{example:Tree-1}.

\begin{algorithm}
\label{alg:TreeToUpperEchelon} \leavevmode

\begin{enumerate}
\item Given a tree structure with $k$ nodes, label the top node with $2$ and
set counter $j=2.$

\item If $j=k+1$, then stop, otherwise continue.

\item If the node labeled $j$ has a left child, then label that left child
node with $j+1$, set counter $j=j+1$ and go to step (2). If not, continue.

\item In the already labeled nodes which has an empty right child, search
for the node with the smallest label. If such a node can be found, label
that node's empty right child as $j+1$, set counter $j=j+1,$ and go to step
(2). If none of the labeled nodes has an empty right child, then stop.
\end{enumerate}
\end{algorithm}

\begin{definition}
We say $\mu $ is in upper echelon form if $\mu (j)\leq \mu (j+1)$ for $2\leq
j\leq k$ or its corresponding tree given by Algorithm \ref{alg:u to tree}
agrees with the tree with the same skeleton given by Algorithm \ref%
{alg:TreeToUpperEchelon}.
\end{definition}

We define a map $T$ which maps an upper echelon tree to a time integration
domain / a set of inequality relations by 
\begin{equation}
\begin{aligned} T(\alpha ) = \{\; t_{j}\geqslant t_{k} \; : \; & j,k\text{
are labels on nodes of }\alpha \\ &\text{such that the }k\text{ node is a
child of the }j\text{ node} \; \} \end{aligned}  \label{E:TD-def}
\end{equation}%
where $\alpha $ is an upper echelon tree. We then have the integration
domain as follows.

\begin{proposition}
\label{Prop:KMIntegrationLimits}Given a $\mu $ in upper echelon form, we
have 
\begin{equation*}
\sum_{\mu _{m}\sim \mu }\int_{t_{1}\geqslant t_{2}\geqslant t_{3}\geqslant
...\geqslant t_{k+1}}D_{\mu _{m}}^{(k+1)}(f^{(k+1)})(t_{1},\underline{t}%
_{k+1})d\underline{t}_{k+1}=\int_{T(\mu )}D_{\mu }^{(k+1)}(f^{(k+1)})(t_{1},%
\underline{t}_{k+1})d\underline{t}_{k+1}.
\end{equation*}%
Here, $\mu _{m}\sim \mu $ means that $\mu _{m}$ is equivalent to $\mu $
under acceptable moves / the trees representing $\mu $ and $\mu _{m}$ have
the same structure and $T(\mu )$ is the domain defined in \eqref{E:TD-def}.
\end{proposition}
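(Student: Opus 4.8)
\textbf{Proof proposal for Proposition \ref{Prop:KMIntegrationLimits}.}

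The plan is to prove the identity by a double induction that simultaneously tracks (a) the combinatorial statement that the equivalence class $\{\mu_m \sim \mu\}$ is in bijection with a sign-free decomposition of the simplex $\{t_1 \geqslant t_2 \geqslant \cdots \geqslant t_{k+1}\}$, and (b) the integrand-invariance statement \eqref{E:KM-den1} under KM acceptable moves. First I would establish, using the tree dictionary (Algorithm \ref{alg:u to tree} and Algorithm \ref{alg:tree to u}), that $\mu_m$ ranges over the equivalence class precisely when the labeled trees $\mathcal{T}(\mu_m)$ range over all admissible labelings of the fixed skeleton $\mathcal{S}$ of $\mathcal{T}(\mu)$ — this was already observed in the text right before the statement. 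Then the combinatorial heart is the claim that the simplex $D = \{t_1 \geqslant t_\sigma(2) \geqslant \cdots\}$, as $\sigma$ ranges over the permutations realizing admissible relabelings, tiles (up to measure zero) the region $T(\mu)$ defined in \eqref{E:TD-def}, i.e. the region cut out by the partial order ``$t_j \geqslant t_k$ whenever $k$ is a child of $j$ in the skeleton.''

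The key steps, in order: (1) Set up the bijection between admissible labelings of $\mathcal{S}$ and linear extensions of the poset $P_\mathcal{S}$ associated to the skeleton (the poset where $k <_P j$ iff $k$ is a descendant of $j$). An admissible labeling is exactly a linear extension of $P_\mathcal{S}$, by definition of admissibility. (2) Invoke the standard fact that the order polytope (here, the simplex region $T(\mu)$ cut out by the cover relations of $P_\mathcal{S}$ inside $[0,t_1]^k$) decomposes into $e(P_\mathcal{S})$ unit simplices indexed by the linear extensions, where $e(P_\mathcal{S})$ is the number of linear extensions; each linear extension $\sigma$ contributes the chamber $\{t_1 \geqslant t_\sigma(2) \geqslant \cdots \geqslant t_\sigma(k+1)\}$. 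This is the order-polytope triangulation, and the chambers are disjoint up to measure zero. (3) For each $\mu_m$ in the class with associated $\sigma_m$, apply the chain of KM acceptable moves connecting $\mu_m$ to $\mu$; by \eqref{E:KM-den1} (extended to compositions, as in the paragraph defining $\mathrm{KM}(\rho)$) the integrand $D_{\mu_m}^{(k+1)}(f^{(k+1)})(t_1, \underline{t}_{k+1})$ equals $D_\mu^{(k+1)}(f^{(k+1)})(t_1, \sigma_m^{-1}(\underline{t}_{k+1}))$ after the corresponding change of time variables, which maps the chamber $\{t_1 \geqslant t_{\sigma_m(2)} \geqslant \cdots\}$ onto the base simplex $\{t_1 \geqslant t_2 \geqslant \cdots \geqslant t_{k+1}\}$ with Jacobian $1$. (4) Sum over $\mu_m \sim \mu$: the left side becomes $\int$ over $\bigcup_m (\text{chamber}_m)$ of the single integrand $D_\mu^{(k+1)}(f^{(k+1)})$, and by step (2) this union is exactly $T(\mu)$ up to null sets, giving the right side.

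The main obstacle I expect is step (2)–(3) done carefully at the level of the actual Duhamel integrands rather than abstractly: one must verify that the symmetry \eqref{E:KM-den1} is genuinely available for \emph{every} transposition in a geodesic word connecting $\mu_m$ to $\mu$ (not just that $\mu_m$ and $\mu$ are in the same class), i.e. that the class can always be traversed by a sequence of \emph{acceptable} moves whose intermediate $\mu$'s remain collapsing maps — this is the ``connectivity of the equivalence class by acceptable moves'' and is where the admissibility constraint $\mu(l) < l$ interacts delicately with the move $\mathrm{KM}(j,j+1)$. The cleanest route is to induct on $k$: peel off the node with the largest label $k+1$ (a leaf of the skeleton), observe that the sub-skeleton on $\{1,\ldots,k\}$ has its own order-polytope triangulation by the inductive hypothesis, and then show that reinserting the leaf $k+1$ below its parent $p$ refines each chamber by the single inequality $t_p \geqslant t_{k+1}$ and correspondingly splits each summand according to where $t_{k+1}$ is inserted in the time-ordering — this last splitting is precisely the time-domain bookkeeping in the interaction expansion \eqref{eqn:f_1'sBornExpansion}. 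Once the inductive step is set up this way, steps (1) and (4) are routine, and the bound $|m_k| \leqslant 4^k$ quoted in Lemma \ref{lem:extended board game} follows from the Catalan number count of skeletons already noted. I would also remark that no positivity or sign cancellation is used here — this is purely a rearrangement of a multiple integral over a simplex — so the identity holds for arbitrary symmetric $f^{(k+1)}$ for which the integrals converge, which is all that is needed before applying the estimates of Proposition \ref{prop:estimate with boardgame}.
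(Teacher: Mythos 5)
Your proposal is correct and follows essentially the same route as the paper: the paper's proof identifies the set of permutations realizing the equivalence class with the set of linear extensions of the tree poset (its $\Sigma_1=\Sigma_2$ identification) and observes that the union of the corresponding time-ordered simplices is exactly $T(\mu)$, which is precisely your order-polytope triangulation combined with the integrand invariance \eqref{E:KM-den1}. The paper argues via a representative example rather than your leaf-peeling induction, and it asserts rather than proves the connectivity of each class under acceptable moves that you rightly flag as the delicate point, but the underlying argument is the same.
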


\begin{proof}

We prove by an example. For the general case, one merely needs to rewrite $%
\Sigma _{1}$ and $\Sigma _{2}$, to be defined in this proof. The key is the
admissible condition or the simple requirement that the child must carry a
larger lable than the parent.

Recall the upper echelon tree in Example \ref{example:Tree-1}, and denote it
with $\alpha $. Here are all the admissible trees equivalent to $\alpha .$


\begin{minipage}{1.45in}
\begin{tikzpicture}
\node{$1$} 
	child[missing] 
	child{node{$2$}
		child{node{$3$}
			child{node{$4$}}
			child{node{$6$}}
		}
		child{node{$5$}}
	};
\end{tikzpicture}
\end{minipage}
\begin{minipage}{1.45in}
\begin{tikzpicture}
\node{$1$} 
	child[missing] 
	child{node{$2$}
		child{node{$3$}
			child{node{$5$}}
			child{node{$6$}}
		}
		child{node{$4$}}
	};
\end{tikzpicture}
\end{minipage}
\begin{minipage}{1.45in}
\begin{tikzpicture}
\node{$1$} 
	child[missing] 
	child{node{$2$}
		child{node{$4$}
			child{node{$5$}}
			child{node{$6$}}
		}
		child{node{$3$}}
	};
\end{tikzpicture}
\end{minipage}
\begin{minipage}{1.45in}
\begin{tikzpicture}
\node{$1$} 
	child[missing] 
	child{node{$2$}
		child{node{$3$}
			child{node{$6$}}
			child{node{$5$}}
		}
		child{node{$4$}}
	};
\end{tikzpicture}
\end{minipage}

\bigskip

\begin{minipage}{1.45in}
\begin{tikzpicture}
\node{$1$} 
	child[missing] 
	child{node{$2$}
		child{node{$4$}
			child{node{$6$}}
			child{node{$5$}}
		}
		child{node{$3$}}
	};
\end{tikzpicture}
\end{minipage}
\begin{minipage}{1.45in}
\begin{tikzpicture}
\node{$1$} 
	child[missing] 
	child{node{$2$}
		child{node{$3$}
			child{node{$6$}}
			child{node{$4$}}
		}
		child{node{$5$}}
	};
\end{tikzpicture}
\end{minipage}
\begin{minipage}{1.45in}
\begin{tikzpicture}
\node{$1$} 
	child[missing] 
	child{node{$2$}
		child{node{$3$}
			child{node{$5$}}
			child{node{$4$}}
		}
		child{node{$6$}}
	};
\end{tikzpicture}
\end{minipage}
\begin{minipage}{1.45in}
\begin{tikzpicture}
\node{$1$} 
	child[missing] 
	child{node{$2$}
		child{node{$3$}
			child{node{$4$}}
			child{node{$5$}}
		}
		child{node{$6$}}
	};
\end{tikzpicture}
\end{minipage}

\bigskip


We first read by definition that 
\begin{equation*}
T(\alpha )=\{t_{1}\geqslant t_{2},t_{2}\geqslant t_{3},t_{3}\geqslant
t_{4},t_{3}\geqslant t_{6},t_{2}\geqslant t_{5}\}\text{.}
\end{equation*}%
Let $\sigma $ denote some composition of acceptable moves, we then notice
the equivalence of the two sets 
\begin{eqnarray*}
\Sigma _{1} &=&\left\{ \sigma :\sigma ^{-1}(1)<\sigma ^{-1}(2)<\sigma
^{-1}(3)<\sigma ^{-1}(4),\sigma ^{-1}(2)<\sigma ^{-1}(5),\sigma
^{-1}(3)<\sigma ^{-1}(6)\right\} , \\
\Sigma _{2} &=&\left\{ \sigma :\sigma \text{ takes input tree to }\alpha 
\text{ where the input tree is admissibile}\right\} ,
\end{eqnarray*}%
both generated by the requirement that the child must carry a larger label
than the parent. That is, both $\Sigma _{1}$ and $\Sigma _{2}$ classifies
the whole upper echelon class represented by $\alpha $.

Hence, 
\begin{equation*}
\bigcup_{\sigma \in \Sigma _{1}}\left\{ t_{1}\geqslant t_{\sigma
(2)}\geqslant t_{\sigma (3)}...\geqslant t_{\sigma \left( 6\right) }\right\}
=\{t_{1}\geqslant t_{2}\geqslant t_{3}\geqslant t_{4},t_{2}\geqslant
t_{5},t_{3}\geqslant t_{6}\}=T(\alpha )
\end{equation*}%
and we are done.
\end{proof}

\subsection{Bilinear estimates\label{Subsec:Multilinear Estimates}}

If 
\begin{equation*}
\tilde{f}^{(2)}(t,x_{1},x_{2},\xi _{1},\xi _{2})=\tilde{f}(t,x_{1},\xi _{1})%
\tilde{g}(t,x_{2},\xi _{2})\,,
\end{equation*}%
we can deduce several consequences of Proposition \ref%
{P:general-fixed-time-2}.

\begin{lemma}[Well-posedness and uniqueness estimate I]
\begin{eqnarray*}
&&\Vert \langle \nabla _{x}\rangle ^{1+}\langle \nabla _{\xi }\rangle ^{0+}%
\tilde{Q}^{\pm }(e^{it\nabla _{x}\cdot \nabla _{\xi }}\tilde{g},e^{it\nabla
_{x}\cdot \nabla _{\xi }}\tilde{h})\Vert _{L_{(-T,T)}^{2+}L_{x\xi }^{2}} \\
&\lesssim &\Vert \langle \nabla _{x}\rangle ^{1+}\langle \nabla _{\xi
}\rangle ^{0+}\tilde{g}\Vert _{L_{x\xi }^{2}}\Vert \langle \nabla
_{x}\rangle ^{1+}\langle \nabla _{\xi }\rangle ^{0+}\tilde{h}\Vert _{L_{x\xi
}^{2}}
\end{eqnarray*}
\end{lemma}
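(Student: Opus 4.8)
The plan is to reduce this space-time estimate for the free-evolved collision operator to the fixed-time $L^p$ bound of Proposition \ref{P:general-fixed-time-2}, using a Strichartz estimate for the intertwined kinetic transport operator $\partial_t + \eta \cdot \nabla_\xi$ (equivalently, after Fourier transform, the hyperbolic symmetric Schr\"odinger operator $i\partial_t + \nabla_\xi \cdot \nabla_x$). First I would note that, since $\tilde Q^\pm$ is built from the pieces $\tilde Q_{1,2}$ of \eqref{E:S444}, and all the $x$-derivatives $\langle \nabla_x\rangle^{1+}$ and the $\xi$-derivatives $\langle \nabla_\xi\rangle^{0+}$ commute with the free propagator $e^{it\nabla_x\cdot\nabla_\xi}$, it suffices to prove the estimate with the derivative weights moved directly onto $\tilde g$, $\tilde h$; i.e. I may as well absorb $\langle \nabla_x\rangle^{1+}\langle\nabla_\xi\rangle^{0+}$ into the data and prove the plain bilinear bound in $L^{2+}_{(-T,T)}L^2_{x\xi}$.

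The core step is: work on the $(t,\eta,\xi)$ ($\vee$) side, where the free propagator becomes the transport $e^{-it\eta\cdot\nabla_\xi}$, and apply the kinetic transport Strichartz estimate (the $\partial_t+\eta\cdot\nabla_\xi$ Strichartz estimates of \cite{BBGL14,BP01,Ov11} referenced in \S\ref{sec: 4 sides}) in the time-space variables $(t,\xi)$ with the spatial frequency $\eta$ as parameter. This converts the $L^{2+}_t L^2_\xi$ norm into a fixed-time $L^{3-}_\xi$ (or an exponent just off $L^3$) norm of the product, at the false endpoint — exactly the $(L^{3+}_{\xi_{k+1}}\cap L^{3-}_{\xi_{k+1}})$ scale that appears on the right side of Proposition \ref{P:general-fixed-time-2}. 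Then I would invoke Proposition \ref{P:general-fixed-time-2} (with $p$ chosen slightly below or above $3$, and $r=0$ after the derivatives have been transferred, or $0<r<1$ if one prefers to keep $\langle\nabla_\xi\rangle^{0+}$ on the operator) to bound the collision output by the analogous mixed norm of $\tilde g\otimes \tilde h$, and finally factor the product norm into $\|\langle\nabla_x\rangle^{1+}\langle\nabla_\xi\rangle^{0+}\tilde g\|_{L^2_{x\xi}}\|\langle\nabla_x\rangle^{1+}\langle\nabla_\xi\rangle^{0+}\tilde h\|_{L^2_{x\xi}}$ by H\"older, using that the $L^{3\pm}_\xi$ integrability in the extra collision variable is supplied by Sobolev embedding from the $\langle\nabla_x\rangle^{1+}$ regularity (since $H^{1+}_{x_{k+1}}\hookrightarrow$ the needed $L^p$ in the collision-partner coordinate after the standard reindexing). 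The $T$-dependence is harmless: the Strichartz norm is on a finite interval, so no smallness in $T$ is needed, only finiteness.

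The main obstacle I expect is the precise matching of exponents at the false endpoint of the intertwined kinetic Strichartz estimate: the natural Strichartz pair for $\partial_t+\eta\cdot\nabla_\xi$ in three $\xi$-dimensions lands at $L^2_t L^{?}_\xi$ only at a borderline that fails, so one must use the $L^{2+}_t$ version and pay a corresponding $\xi$-exponent just off $3$ — this is exactly why Proposition \ref{P:general-fixed-time-2} was stated for the open range $2\le p<\infty$ and with the $(L^{3+}\cap L^{3-})$ intersection norm, so the interface should close, but verifying the endpoint bookkeeping (and confirming that the $\eta$-parameter dependence in the Strichartz constant is uniform, so that reapplying Plancherel in $\eta$ at the end costs nothing) is where the care is required. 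A secondary, more routine point is checking that the loss/gain decomposition $\tilde Q^\pm$ does not obstruct any of this: each of $\tilde Q^+$, $\tilde Q^-$ is a sum of the $(\alpha,\sigma)$-terms in \eqref{E:S444}, and for the diagonal terms $\alpha=\sigma$ the $\langle\nabla_\xi\rangle$ weight passes straight through, while for the off-diagonal terms one uses the $s$-integral structure already handled inside the proof of Proposition \ref{P:general-fixed-time-2}, so no new analysis is needed there.
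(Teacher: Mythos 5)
Your overall skeleton (a fixed-time application of Proposition \ref{P:general-fixed-time-2}, Sobolev embeddings, and a Strichartz estimate for the free flow) is the same one the paper uses inside the proof of Lemma \ref{L:10estimates}, but two of your steps fail as stated. First, the opening reduction is not valid: $\langle \nabla_x\rangle^{1+}\langle\nabla_\xi\rangle^{0+}$ does commute with $e^{it\nabla_x\cdot\nabla_\xi}$, but it does not pass through the bilinear operator $\tilde{Q}^{\pm}$, whose two arguments are evaluated at the same $x$ in \eqref{E:S444}; so you cannot ``absorb the weights into the data'' and prove a plain $L^2\times L^2$ bound. Indeed the unweighted estimate $\Vert \tilde{Q}^{\pm}(\tilde{g},\tilde{h})\Vert_{L^{2+}_{t}L^2_{x\xi}}\lesssim \Vert\tilde{g}\Vert_{L^2_{x\xi}}\Vert\tilde{h}\Vert_{L^2_{x\xi}}$ is false: the $s,y$ (collision-partner) integration requires $L^{3\pm}_{\xi}$ integrability of one factor and $L^\infty_x$-type control of the pointwise product in $x$, neither of which $L^2$ data provide (this is exactly why well-posedness breaks below $H_x^1$). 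In the paper's argument the $x$-derivatives are distributed by the fractional Leibniz rule, producing the two terms of Lemma \ref{L:10estimates} that are then handled with different Sobolev embeddings in $x$, while $\langle\nabla_\xi\rangle^{0+}$ is carried through the off-diagonal $(\alpha,\sigma)$ terms only via the Littlewood--Paley argument inside Proposition \ref{P:general-fixed-time-2}; the derivatives are structural, not decorative.

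Second, your Strichartz step is both the wrong estimate and applied in the wrong direction. There is no Duhamel operator in this statement, so nothing ``converts'' the output norm $L^{2+}_tL^2_{x\xi}$ into a fixed-time $L^{3-}_\xi$ norm of the product; the correct order is H\"older in time, then the fixed-time Proposition \ref{P:general-fixed-time-2}, then Leibniz/Sobolev (including the trade $L^{3+}_\xi\hookleftarrow \langle\nabla_\xi\rangle^{0+}L^{3-}_\xi$), and only then Strichartz applied to the free-evolution \emph{inputs}. Moreover the kinetic transport estimates \cite{BBGL14,BP01,Ov11} used ``with $\eta$ as a parameter'' give nothing: for fixed $\eta$, $e^{-it\eta\cdot\nabla_\xi}$ is a measure-preserving shift in $\xi$, and the admissible triples $L^q_tL^r_\xi L^p_\eta$ require $\tfrac1r+\tfrac1p=1$, so the only way to reach $L^2_\eta L^2_\xi$ is the trivial $q=\infty$; these estimates are what the paper uses for Uniqueness estimate II (where the output is measured in $L^{3-}_{\eta\xi}$), not here. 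What closes the present lemma is the Schr\"odinger-type Strichartz estimate of \cite{KT98} for $e^{it\nabla_x\cdot\nabla_\xi}$ in the joint six-dimensional variables $(x,\xi)$ at the pair $L^{2+}_{(-T,T)}L^{3-}_{x\xi}$, just off the $(2,3)$ endpoint: unitarity gives $\Vert\langle\nabla_x\rangle^{1+}\langle\nabla_\xi\rangle^{0+}e^{it\nabla_x\cdot\nabla_\xi}\tilde{g}\Vert_{L^\infty_tL^2_{x\xi}}=\Vert\langle\nabla_x\rangle^{1+}\langle\nabla_\xi\rangle^{0+}\tilde{g}\Vert_{L^2_{x\xi}}$ and Strichartz gives $\Vert\langle\nabla_x\rangle^{1+}\langle\nabla_\xi\rangle^{0+}e^{it\nabla_x\cdot\nabla_\xi}\tilde{h}\Vert_{L^{2+}_tL^{3-}_{x\xi}}\lesssim\Vert\langle\nabla_x\rangle^{1+}\langle\nabla_\xi\rangle^{0+}\tilde{h}\Vert_{L^2_{x\xi}}$, which, inserted into estimate (\ref{estimate:bilinear well-posedness}), yields the lemma. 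Note also that the $L^{3\pm}_\xi$ integrability in the collision-partner variable comes from this space-time norm, not from Sobolev embedding off the $\langle\nabla_x\rangle^{1+}$ regularity at fixed time, which cannot produce $\xi$-integrability.
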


\begin{proof}
We prove this estimate inside Lemma \ref{L:10estimates} in the middle of the
well-posedness argument.
\end{proof}

\begin{lemma}[Uniqueness estimate II]
\begin{align*}
\hspace{0.3in}& \hspace{-0.3in}\Vert \langle \nabla _{\xi }\rangle ^{0+}%
\check{Q}^{\pm }(\check{g},\check{h})(t,\eta ,\xi )\Vert
_{L_{(-T,T)}^{2+}L_{\xi \eta }^{3-}} \\
& \lesssim \left\{ \begin{aligned} & \| \langle \nabla_\xi \rangle^{0+}
\check g(t,\eta, \xi) \|_{L_{(-T,T)}^\infty L_{\eta\xi}^{3-}} \| \langle
\nabla_x \rangle^{1+} \langle \nabla_\xi \rangle^{0+}\tilde h(0,x, \xi)
\|_{L_{x\xi}^2} \text{ (if $h$ is a linear sol)}\\ &\| \langle \nabla_x
\rangle^{1+} \langle \nabla_\xi \rangle^{0+} \tilde g(0,x, \xi)
\|_{L_{x\xi}^2} \| \langle \nabla_\xi \rangle^{0+} \check h(t,\eta, \xi)
\|_{L_{(-T,T)}^\infty L_{\eta\xi}^{3-}} \text{ (if $g$ is a linear sol)}
\end{aligned}\right.
\end{align*}
\end{lemma}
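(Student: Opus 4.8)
The plan is to reduce this bilinear estimate to the fixed-time $L^p$ estimate of Proposition \ref{P:general-fixed-time-2} combined with a Strichartz estimate for the intertwined kinetic transport operator $\partial_t + \eta\cdot\nabla_\xi$ on the $(t,\eta,\xi)$ side. First I would recall that $\check Q^\pm(\check g,\check h)$ is, up to the sum over $\alpha,\sigma\in\{\pm1\}$ and the trivial ($\alpha=\sigma$) cases where the $\langle\nabla_\xi\rangle^{0+}$ passes directly onto one of the inputs, given by the $\vee$-side representation obtained from \eqref{E:S434} after inverse Fourier transform $\boldsymbol{\xi}_k\mapsto$ nothing (we stay on the $(\eta,\xi)$ side). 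Writing $\check Q^\pm$ with the $y$ and $s$ integrals exposed, one sees the structure: an $s$-integral of $|\hat\phi(y)|^2$ against a translated/evaluated copy of $\check g\cdot\check h$, exactly as in the proof of Proposition \ref{P:QEstimates}. The key point is that when $h$ (say) is a \emph{linear} solution, $\check h(t,\eta,\xi)=e^{-t\eta\cdot\nabla_\xi}\check h(0,\eta,\xi)$ can be frozen — its $L^\infty_t L^{3-}_{\eta\xi}$ norm is time-independent and controlled by $\|\langle\nabla_x\rangle^{1+}\langle\nabla_\xi\rangle^{0+}\tilde h(0)\|_{L^2_{x\xi}}$ via Sobolev embedding $H^{1+}_x\hookrightarrow L^{3-}_{\eta}$ (dual side: $L^{3/2+}\to L^2$) together with the fact that the intertwined flow is unitary on every mixed $L^p_\eta L^2_\xi$-type space in the appropriate variables.

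The key steps in order: (1) expand $\check Q^\pm$ on the $\vee$ side and dispose of the $\alpha=\sigma$ terms trivially; (2) for the surviving $(\alpha,\sigma)\in\{(1,-1),(-1,1)\}$ terms, apply the change of variables $t'\mapsto s=(t-t')/\epsilon$-type rescaling is not needed here since $\epsilon=0$, but the analogous $y\mapsto\xi_{k+1}/s$ substitution from \S\ref{sec:Convergence} converts the $s$-integral into a kernel $h(\xi,\xi')$ satisfying $|h(\xi,\xi')|\lesssim|\xi'|^{-2}$ — I would instead keep the $s$-integral and mimic the Hölder-in-$y$ splitting from the proof of Proposition \ref{P:general-fixed-time-2}, which yields the factor $s^{-1+}\langle s\rangle^{0-}$; (3) bring in $L^{3-}_{\xi}$ (and the companion $L^{3+}_\xi$ via Minkowski) by Proposition \ref{P:general-fixed-time-2} with $p=3-$ (and $3+$), $r=0+$, applied in the $\xi$ variable for fixed $\eta$, $t$; (4) take the $L^{3-}_\eta$ norm and, since one factor is a frozen linear solution, pull it out in $L^\infty_t L^{3-}_{\eta\xi}$ while the other factor — the genuine solution — retains its full $L^{3-}_{\eta\xi}$ norm; (5) finally take $L^{2+}_{t\in(-T,T)}$ and carry out the $s$-integral, which converges because $s^{-1+}\langle s\rangle^{0-}\in L^1_s$, absorbing the time integral into the product structure using Hölder in $t$. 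The placement of derivatives is flexible precisely because $X_{s,b}$-type / direct frequency arguments allow redistribution, so the $\langle\nabla_\xi\rangle^{0+}$ commutes through by the $r=0+$ case of Proposition \ref{P:general-fixed-time-2} and the $\langle\nabla_x\rangle^{1+}$ is what feeds the Sobolev embedding used to pass from $L^2_{x\xi}$ to $L^{3-}_{\eta\xi}$ for the linear factor.

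I expect the main obstacle to be step (4)–(5): correctly matching the mixed norms so that the genuine (nonlinear) solution stays in $L^\infty_t L^{3-}_{\eta\xi}$ while producing the final $L^{2+}_t$ norm on $\check Q^\pm$ itself, rather than losing a power of $T$ or an $\epsilon$ one cannot afford. The subtlety is that Proposition \ref{P:general-fixed-time-2} is a \emph{fixed-time} estimate in $(\eta,\xi)$ (really fixed $\boldsymbol{x}_k$, but here $k=1$ so $\boldsymbol{x}_1\leftrightarrow\eta_1$), so one must first freeze $t$, apply it pointwise in $t$, and only then take the $L^{2+}_{(-T,T)}$ norm — and this is exactly where the remark in the text that Proposition \ref{P:general-fixed-time-2} ``lands right by the (false) endpoint of the intertwined kinetic Strichartz estimates'' becomes load-bearing: the $3\pm$ exponents are forced and there is no slack. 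A secondary but routine obstacle is verifying that $h(\xi,\xi')$ (or equivalently the $s$-integral of $|\hat\phi(\xi'/s)|^2$) has exactly the decay needed under the sole assumptions $|\hat\phi(\zeta)|\lesssim|\zeta|^{s}$ near $0$ (with $s$ large enough, here $s\geq0$ suffices after the $\langle\nabla_\xi\rangle^{0+}$ is accounted for) and $|\hat\phi(\zeta)|\lesssim|\zeta|^{-1-}$ at infinity, which is the same computation as \eqref{E:CV03}. Everything else is a transcription of the methods of Propositions \ref{P:QEstimates} and \ref{P:general-fixed-time-2}.
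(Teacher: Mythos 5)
Your opening sentence names the right tool (the kinetic transport Strichartz estimate on the $(t,\eta,\xi)$ side), but the estimate never actually appears in your steps (1)--(5), and the argument you substitute for it does not close. There are two concrete problems. First, the norm bookkeeping for the $\eta$ variable is inconsistent: on the $\vee$ side the product $\tilde g\,\tilde h$ in $x$ becomes a convolution in $\eta$, so by Young's inequality you cannot place \emph{both} factors in $L^{3-}_{\eta}$ and recover $L^{3-}_\eta$ of the output — one factor must land in $L^{1}_{\eta}$ (this is what the paper does, keeping the genuine solution in $L^\infty_t L^{3-}_{\eta\xi}$ and forcing the linear-solution factor into $L^{3-}_{\xi}L^{1}_{\eta}$). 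Second, and more seriously, your step (4) freezes the linear factor in $L^\infty_t L^{3-}_{\eta\xi}$ and claims this is controlled by $\|\langle\nabla_x\rangle^{1+}\langle\nabla_\xi\rangle^{0+}\tilde h(0)\|_{L^2_{x\xi}}$ "via Sobolev embedding." That embedding is false: by Hausdorff–Young, $\|\check h\|_{L^{3-}_\eta}\lesssim\|\tilde h\|_{L^{3/2+}_x}$, and passing from $L^2_x$ to the \emph{lower} exponent $L^{3/2+}_x$ requires spatial decay (a $\langle x\rangle^{1+}$ weight, i.e.\ regularity in $v$), not $x$-derivatives; Hölder with $\langle\eta\rangle^{1+}$ against $L^2_\eta$ also cannot reach $L^{3-}_\eta$ since the required dual exponent is negative. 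Note also that the free flow $\check h(t,\eta,\xi)=\check h(0,\eta,\xi-t\eta)$ preserves $L^p_\eta L^q_\xi$ only with $\eta$ outermost; the norms you need have $\xi$ outermost and are not propagated.

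This is exactly why the time integration on the left side is indispensable. In the paper's proof, after Young ($L^{3-}_\eta = L^{3-}_\eta * L^1_\eta$) and the $\xi$-analysis as in Proposition \ref{P:general-fixed-time-2}, the linear factor sits in $L^{3-}_{\xi}L^{1}_{\eta}$; writing $1=\langle\eta\rangle^{-1-}\langle\eta\rangle^{1+}$ and applying Hölder puts it in the weighted space $L^{3-}_\xi L^{3/2+}_\eta$, and then the Ovcharov Strichartz estimate for $\partial_t+\eta\cdot\nabla_\xi$ with the admissible triple $L^{2+}_tL^{3-}_\xi L^{3/2+}_\eta$ (just off the false endpoint $(2,3,3/2)$) converts the resulting \emph{space-time} norm into $\|\langle\eta\rangle^{1+}\langle\nabla_\xi\rangle^{0+}\check h(0)\|_{L^2_{\eta\xi}}$. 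The $L^{2+}_{(-T,T)}$ on the left is inherited from this Strichartz exponent; it cannot be produced by Hölder in $t$ from two $L^\infty_t$ factors, because the fixed-time version of the bound on the linear factor is precisely the statement that fails. To repair your proof, replace steps (4)–(5) by: Young's inequality assigning $L^1_\eta$ to the linear factor, the weighted Hölder step, and the kinetic transport Strichartz estimate.
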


\begin{proof}
Recall 
\begin{align*}
\tilde{Q}_{\alpha ,\sigma }(\tilde{g},\tilde{h})(t,x,\xi
)=\int_{s=0}^{\infty }\int_{\zeta }e^{i(\sigma -\alpha )\xi \cdot \zeta
/2}e^{-2is\sigma |\zeta |^{2}/2}\hat{\phi}(-\zeta )\hat{\phi}(\zeta )& \\
\tilde{g}(t,x,\xi -s\zeta )\tilde{h}(t,x,s\zeta )\,ds\,d\zeta &
\end{align*}%
Taking the Fourier transform $x\mapsto \eta $ gives 
\begin{align*}
\check{Q}_{\alpha ,\sigma }(\check{g},\check{h})(t,\eta ,\xi
)=\int_{s=0}^{\infty }\int_{\zeta }\int_{u}e^{i(\sigma -\alpha )\xi \cdot
\zeta /2}e^{-2is\sigma |\zeta |^{2}/2}\hat{\phi}(-\zeta )\hat{\phi}(\zeta )&
\\
\check{g}(t,\eta -u,\xi -s\zeta )\check{h}(t,u,s\zeta )\,ds\,d\zeta \,du&
\end{align*}%
We can estimate in the norm $L_{\eta }^{3-}$ first, bringing it to the
inside by Minkowski's integral inequality, and applying Young's inequality
on the inner convolution (putting $L_{\eta }^{3-}$ on either $\check{g}$ or $%
\check{h}$, as desired). Then, continuing as in the proof of Proposition \ref%
{P:general-fixed-time-2}), we obtain 
\begin{equation*}
\Vert \langle \nabla _{\xi }\rangle ^{0+}\check{Q}^{\pm }(\check{g},\check{h}%
)(t,\eta ,\xi )\Vert _{L_{\xi \eta }^{3-}}\lesssim \left\{ \begin{aligned} &
\| \langle \nabla_\xi \rangle^{0+} \check g(t,\eta, \xi)
\|_{L_\xi^{3-}L_\eta^{3-}} \| \check h(t,\eta, \xi) \|_{(L_\xi^{3+}\cap
L_\xi^{3-})L_\eta^1} \\ &\| \langle \nabla_\xi \rangle^{0+} \check g(t,\eta,
\xi) \|_{L_\xi^{3-}L_\eta^1} \| \check h(t,\eta, \xi) \|_{(L_\xi^{3+}\cap
L_\xi^{3-})L_\eta^{3-}} \end{aligned}\right.
\end{equation*}%
where it is meant that either the top or the bottom expression on the right
side can be used. Applying Sobolev in $\xi $ on the $\check{h}$ terms to
convert $L_{\xi }^{3+}$ to $L_{\xi }^{3-}$ at the expense of adding $\langle
\nabla _{\xi }\rangle ^{0+}$ gives 
\begin{equation*}
\Vert \langle \nabla _{\xi }\rangle ^{0+}\check{Q}^{\pm }(\check{g},\check{h}%
)(t,\eta ,\xi )\Vert _{L_{\xi \eta }^{3-}}\lesssim \left\{ \begin{aligned} &
\| \langle \nabla_\xi \rangle^{0+} \check g(t,\eta, \xi)
\|_{L_{\eta\xi}^{3-}} \| \langle \nabla_\xi \rangle^{0+}\check h(t,\eta,
\xi) \|_{L_\xi^{3-}L_\eta^1} \\ &\| \langle \nabla_\xi \rangle^{0+} \check
g(t,\eta, \xi) \|_{L_\xi^{3-}L_\eta^1} \| \langle \nabla_\xi \rangle^{0+}
\check h(t,\eta, \xi) \|_{L_{\eta\xi}^{3-}} \end{aligned}\right.
\end{equation*}

The Strichartz estimate for the ``kinetic" transport equation \cite[%
Definition 2.1 \& Theorem 2.4]{Ov11} with $a=2$ applies with $%
L_{t}^{q}L_{\xi }^{r}L_{\eta }^{p}$ with 
\begin{equation*}
\frac{1}{2}=\frac{1}{2}\left( \frac{1}{r}+\frac{1}{p}\right) \,,\qquad \frac{%
3}{2}<p\leq 2\,,\qquad 2\leq r<3
\end{equation*}%
where $q$ is defined via 
\begin{equation*}
\frac{1}{q}=\frac{3}{2}\left( \frac{1}{p}-\frac{1}{r}\right)
\end{equation*}%
for such a pair $(p,r)$. In the endpoint case (which is not valid, see \cite%
{BBGL14}), $p=\frac{3}{2}$, $r=3$ and $q=2$. For $(p,r)$ meeting the
requirements above, $q>2$. We will work with a triple $L_{t}^{2+}L_{\xi
}^{3-}L_{\eta }^{\frac{3}{2}+}$. Now writing $1=\langle \eta \rangle
^{-1-}\langle \eta \rangle ^{1+}$ and applying H\"{o}lder in $\eta $, 
\begin{equation*}
\Vert \langle \nabla _{\xi }\rangle ^{0+}\check{h}(t,\eta ,\xi )\Vert
_{L_{\xi }^{3-}L_{\eta }^{1}}\lesssim \Vert \langle \eta \rangle ^{-1-}\Vert
_{L_{\eta }^{3-}}\Vert \langle \eta \rangle ^{1+}\langle \nabla _{\xi
}\rangle ^{0+}\check{h}(t,\eta ,\xi )\Vert _{L_{\xi }^{3-}L_{\eta }^{3/2+}}
\end{equation*}%
The $L_{\xi }^{3-}$ forces a specific $L_{\eta }^{\frac{3}{2}+}$ according
to the Strichartz theory reviewed above. Since we are forced to work with a
particular $\frac{3}{2}+$ in the norm $L_{\eta }^{\frac{3}{2}+}$, we choose
the $1+$ sufficiently above $1$ in the exponent $\langle \eta \rangle ^{1+}$
so that the reciprocal $\langle \eta \rangle ^{-1-}$ is sufficiently below $%
-1$ so that $\Vert \langle \eta \rangle ^{-1-}\Vert _{L_{\eta }^{3-}}<\infty 
$. If $\check{h}(t,\eta ,\xi )$ is a linear solution, then we can apply the
Strichartz estimates to obtain

\begin{equation*}
\Vert \langle \nabla _{\xi }\rangle ^{0+}\check{h}(t,\eta ,\xi )\Vert
_{L_{(-T,T)}^{2+}L_{\xi }^{3-}L_{\eta }^{1}}\lesssim \Vert \langle \eta
\rangle ^{1+}\langle \nabla _{\xi }\rangle ^{0+}\check{h}(0,\eta ,\xi )\Vert
_{L_{\eta \xi }^{2}}
\end{equation*}%
Thus the claimed estimate follows.
\end{proof}

\begin{lemma}[Uniqueness estimate III - final estimate]
\label{Lem:Uniqueness III}%
\begin{equation*}
\Vert \langle \nabla _{\xi }\rangle ^{0+}\check{Q}^{\pm }(\check{g},\check{h}%
)\Vert _{L_{(-T,T)}^{\infty }L_{\eta \xi }^{3-}}\lesssim \Vert \langle
\nabla _{x}\rangle ^{\frac{1}{2}+}\langle \nabla _{\xi }\rangle ^{\frac{1}{2}%
+}\tilde{g}\Vert _{L_{(-T,T)}^{\infty }L_{x\xi }^{2}}\Vert \langle \nabla
_{x}\rangle ^{\frac{1}{2}+}\langle \nabla _{\xi }\rangle ^{\frac{1}{2}+}%
\tilde{h}\Vert _{L_{(-T,T)}^{\infty }L_{x\xi }^{2}}
\end{equation*}%
We note that the estimate is done at fixed time; the $L_{(-T,T)}^{\infty }$
norm is included since that is the form in which the estimate is applied.
\end{lemma}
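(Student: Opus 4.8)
The plan is to reduce the estimate, by Hausdorff--Young in $x$ together with Minkowski's integral inequality, to the fixed-time, fixed-$x$ bilinear bound already established in Proposition~\ref{P:general-fixed-time-2}, and then to absorb the leftover Lebesgue exponents in $\eta$ and $\xi$ back into the $L^2_{x\xi}$-based Sobolev norms on the right-hand side by three-dimensional Sobolev embedding. Throughout, the estimate is at a fixed time $t$, so the outer $L_{(-T,T)}^\infty$ norm is recovered trivially at the end.

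First I would view $\check Q^\pm(\check g,\check h)$ as the partial Fourier transform $x\mapsto\eta$ of the $1$-density $\tilde Q^\pm(\tilde g,\tilde h)(x,\xi)$ obtained by applying $Q_{1,2}$ to the product density $\tilde g\otimes\tilde h$ with $x_2$ set equal to $x_1$, using the $(x,\xi)$-side kernel recorded in the derivation of the preceding uniqueness estimate. Since $\langle\nabla_\xi\rangle^{0+}$ acts only in $\xi$, it commutes with the $x$-Fourier transform, so by Fubini and Hausdorff--Young in $x$ with exponent pair $(3/2+,\,3-)$, followed by Minkowski's integral inequality (legitimate since the inner exponent $3/2+$ does not exceed the outer exponent $3-$),
\[
\|\langle\nabla_\xi\rangle^{0+}\check Q^\pm(\check g,\check h)\|_{L^{3-}_{\eta\xi}}
\le \Big\|\,\big\|\langle\nabla_\xi\rangle^{0+}\tilde Q^\pm(\tilde g,\tilde h)(x,\cdot)\big\|_{L^{3-}_\xi}\,\Big\|_{L^{3/2+}_x}.
\]

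Next, for each fixed $x$ (and $t$), I would apply Proposition~\ref{P:general-fixed-time-2} with $p=3-$, $r=0+$, $k=1$, $i=1$, and $L=I$ to the product $\tilde g(x,\xi_1)\tilde h(x,\xi_2)$; since $\langle\nabla_{\xi_1}\rangle^{0+}$ lands only on the $\tilde g$ factor, this gives
\[
\big\|\langle\nabla_\xi\rangle^{0+}\tilde Q^\pm(\tilde g,\tilde h)(x,\cdot)\big\|_{L^{3-}_\xi}
\lesssim \big\|\langle\nabla_\xi\rangle^{0+}\tilde g(x,\cdot)\big\|_{L^{3-}_\xi}\;\big\|\tilde h(x,\cdot)\big\|_{L^{3+}_\xi\cap L^{3-}_\xi}.
\]
Inserting this, splitting the $L^{3/2+}_x$ norm by H\"older into $L^{3+}_x$ on each factor (since $\tfrac{1}{3+}+\tfrac{1}{3+}=\tfrac{1}{3/2+}$), and then invoking the three-dimensional embeddings $H^{1/2+}_x\hookrightarrow L^{3+}_x$ and $H^{1/2+}_\xi\hookrightarrow L^{3+}_\xi\cap L^{3-}_\xi$ --- applied first in $x$ pointwise in $\xi$, then interchanging the $L^2_x$ norm with the $L^{3\pm}_\xi$ norms by Minkowski, then in $\xi$ pointwise in $x$ --- collapses the two factors to $\|\langle\nabla_x\rangle^{1/2+}\langle\nabla_\xi\rangle^{1/2+}\tilde g\|_{L^2_{x\xi}}$ and $\|\langle\nabla_x\rangle^{1/2+}\langle\nabla_\xi\rangle^{1/2+}\tilde h\|_{L^2_{x\xi}}$, which is the claim.

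The only delicate point is the simultaneous exponent bookkeeping: one must check that the Hausdorff--Young exponent $3/2+$ lies in $[1,2]$, that each mixed-norm interchange is in the Minkowski-admissible direction, that the $x$-H\"older split lands exactly on $3/2+$, and that $H^{1/2+}(\mathbb{R}^3)$ embeds into all of $L^{3-}$, $L^3$, and $L^{3+}$ --- with the various $\pm$ margins chosen consistently (and consistently with the $0+$ in the $\xi$-derivative). I do not expect any genuinely new analytic difficulty: the hard estimate here --- the fixed-time $L^{3-}_\xi$ bilinear bound for the $\xi$-oscillatory collision kernel, including the Littlewood--Paley square-function argument needed to move the $\langle\nabla_\xi\rangle^{0+}$ past the operator --- is exactly what Proposition~\ref{P:general-fixed-time-2} already supplies, so no further oscillatory analysis is required.
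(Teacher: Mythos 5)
Your proposal is correct and follows essentially the same route as the paper's proof: Hausdorff--Young in $\eta\mapsto x$ with the exponent pair $(3-,\tfrac32+)$, the fixed-time $L^{3\pm}_{\xi}$ bilinear kernel bound, H\"older in $x$ splitting $L_x^{3/2+}$ into $L_x^{3+}\times L_x^{3+}$, and the Sobolev embeddings $H^{1/2+}\hookrightarrow L^{3\pm}$ in both $x$ and $\xi$ with the Minkowski interchanges in the admissible direction. The only (immaterial) difference is that you invoke Proposition \ref{P:general-fixed-time-2} at fixed $x$ as a black box for the $s$- and $\zeta$-integrals, whereas the paper re-derives that part in-line via the same $s\lessgtr 1$ splitting and H\"older in $\zeta$.
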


\begin{proof}
We start by applying the estimate $\Vert \hat{F}\Vert _{L^{p}}\lesssim \Vert
F\Vert _{L^{p^{\prime }}}$ where $\frac{1}{p}+\frac{1}{p^{\prime }}=1$ and $%
p\geq 2$. This estimate is applied in $\eta $, so the left side is in the
\textquotedblleft check space\textquotedblright\ and the right side is in
\textquotedblleft tilde space\textquotedblright . 
\begin{equation*}
\Vert \langle \nabla _{\xi }\rangle ^{0+}\check{Q}^{\pm }(\check{g},\check{h}%
)\Vert _{L_{(-T,T)}^{\infty }L_{\eta \xi }^{3-}}\lesssim \Vert \langle
\nabla _{\xi }\rangle ^{0+}\tilde{Q}^{\pm }(\tilde{g},\tilde{h})\Vert
_{L_{(-T,T)}^{\infty }L_{\xi }^{3-}L_{x}^{\frac{3}{2}+}}
\end{equation*}%
Recall 
\begin{align*}
\tilde{Q}_{\alpha ,\sigma }(\tilde{g},\tilde{h})(t,x,\xi
)=\int_{s=0}^{\infty }\int_{\zeta }e^{i(\sigma -\alpha )\xi \cdot \zeta
/2}e^{-2is\sigma |\zeta |^{2}/2}\hat{\phi}(-\zeta )\hat{\phi}(\zeta )& \\
\tilde{g}(t,x,\xi -s\zeta )\tilde{h}(t,x,s\zeta )\,ds\,d\zeta &
\end{align*}%
As in the proof of Proposition \ref{P:general-fixed-time-2}, we can
effectively move the $\langle \nabla _{\xi }\rangle ^{0+}$ operator to act
directly on $\tilde{g}$, although for the gain term this also generates a
power of $\zeta $ (which is easily absorbed by the $\hat{\phi}$ terms). We
indicate this with the $\approx $ symbol, since it must be properly
justified with Littlewood-Paley theory: 
\begin{align*}
\langle \nabla _{\xi }\rangle ^{0+}\tilde{Q}_{\alpha ,\sigma }(\tilde{g},%
\tilde{h})(t,x,\xi )\approx \int_{s=0}^{\infty }\int_{\zeta }e^{i(\sigma
-\alpha )\xi \cdot \zeta /2}e^{-2is\sigma |\zeta |^{2}/2}\hat{\phi}(-\zeta )%
\hat{\phi}(\zeta )& \\
\langle \nabla _{\xi }\rangle ^{0+}\tilde{g}(t,x,\xi -s\zeta )\tilde{h}%
(t,x,s\zeta )\,ds\,d\zeta &
\end{align*}%
Bring the $L_{x}^{3/2+}$ norm inside by the Minkowski integral inequality,
and H\"{o}lder between the $\tilde{g}$ and $\tilde{h}$ terms: 
\begin{align*}
\Vert \langle \nabla _{\xi }\rangle ^{0+}\tilde{Q}_{\alpha ,\sigma }(\tilde{g%
},\tilde{h})(t,x,\xi )\Vert _{L_{x}^{3/2+}}\lesssim \int_{s=0}^{\infty
}\int_{\zeta }|\hat{\phi}(-\zeta )||\hat{\phi}(\zeta )|& \\
\Vert \langle \nabla _{\xi }\rangle ^{0+}\tilde{g}(t,x,\xi -s\zeta )\Vert
_{L_{x}^{3+}}\Vert \tilde{h}(t,x,s\zeta )\Vert _{L_{x}^{3+}}& \,ds\,d\zeta
\end{align*}%
Now apply the $L_{\xi }^{3-}$ norm and bring it inside the right side by the
Minkowski integral inequality: 
\begin{align}
\Vert \langle \nabla _{\xi }\rangle ^{0+}\tilde{Q}_{\alpha ,\sigma }(\tilde{g%
},\tilde{h})(t,x,\xi )\Vert _{L_{\xi }^{3-}L_{x}^{3/2+}}\lesssim
\int_{s=0}^{\infty }\int_{\zeta }|\hat{\phi}(-\zeta )||\hat{\phi}(\zeta )|&
\label{eqn:ineq in uniqueness III pf} \\
\Vert \langle \nabla _{\xi }\rangle ^{0+}\tilde{g}(t,x,\xi )\Vert _{L_{\xi
}^{3-}L_{x}^{3+}}\Vert \tilde{h}(t,x,s\zeta )\Vert _{L_{x}^{3+}}&
\,ds\,d\zeta  \notag
\end{align}

Split the $s$ integration in (\ref{eqn:ineq in uniqueness III pf}) into $%
0<s<1$ and $1<s<+\infty $. For $0<s<1$, apply H\"{o}lder in $\zeta $ as
follows%
\begin{eqnarray*}
&&\int_{s=0}^{1}\int_{\zeta }|\hat{\phi}(-\zeta )||\hat{\phi}(\zeta )|\Vert
\langle \nabla _{\xi }\rangle ^{0+}\tilde{g}(t,x,\xi )\Vert _{L_{\xi
}^{3-}L_{x}^{3+}}\Vert \tilde{h}(t,x,s\zeta )\Vert _{L_{x}^{3+}}\,ds\,d\zeta
\\
&\lesssim &\Vert \langle \nabla _{\xi }\rangle ^{0+}\tilde{g}(t,x,\xi )\Vert
_{L_{\xi }^{3-}L_{x}^{3+}}\int_{s=0}^{1}\Vert \hat{\phi}(-\zeta )\hat{\phi}%
(\zeta )\Vert _{L_{\zeta }^{3/2-}}\Vert \tilde{h}(t,x,s\zeta )\Vert
_{L_{\zeta }^{3+}L_{x}^{3+}}ds
\end{eqnarray*}%
Scaling out the $s$ inside the $L_{\zeta }^{3+}$ norm gives $s^{-1+}$:%
\begin{eqnarray*}
&\lesssim &\Vert \langle \nabla _{\xi }\rangle ^{0+}\tilde{g}(t,x,\xi )\Vert
_{L_{\xi }^{3-}L_{x}^{3+}}\Vert \hat{\phi}\Vert
_{L^{3-}}^{2}\int_{s=0}^{1}s^{-1+}\Vert \tilde{h}(t,x,\zeta )\Vert
_{L_{\zeta }^{3+}L_{x}^{3+}}ds \\
&\lesssim &\Vert \hat{\phi}\Vert _{L^{3-}}^{2}\Vert \langle \nabla _{\xi
}\rangle ^{0+}\tilde{g}(t,x,\xi )\Vert _{L_{\xi }^{3-}L_{x}^{3+}}\Vert 
\tilde{h}(t,x,\zeta )\Vert _{L_{\zeta }^{3+}L_{x}^{3+}}
\end{eqnarray*}
For $s>1$ in (\ref{eqn:ineq in uniqueness III pf}), apply H\"{o}lder in $%
\zeta $ as follows%
\begin{eqnarray*}
&&\int_{s=1}^{+\infty }\int_{\zeta }|\hat{\phi}(-\zeta )||\hat{\phi}(\zeta
)|\Vert \langle \nabla _{\xi }\rangle ^{0+}\tilde{g}(t,x,\xi )\Vert _{L_{\xi
}^{3-}L_{x}^{3+}}\Vert \tilde{h}(t,x,s\zeta )\Vert _{L_{x}^{3+}}\,ds\,d\zeta
\\
&\lesssim &\Vert \langle \nabla _{\xi }\rangle ^{0+}\tilde{g}(t,x,\xi )\Vert
_{L_{\xi }^{3-}L_{x}^{3+}}\int_{s=1}^{+\infty }\Vert \hat{\phi}(-\zeta )\hat{%
\phi}(\zeta )\Vert _{L_{\zeta }^{3/2+}}\Vert \tilde{h}(t,x,s\zeta )\Vert
_{L_{\zeta }^{3-}L_{x}^{3+}}
\end{eqnarray*}%
Scaling out the $s$ inside the $L_{\zeta }^{3-}$ norm gives $s^{-1-}$:%
\begin{eqnarray*}
&\lesssim &\Vert \langle \nabla _{\xi }\rangle ^{0+}\tilde{g}(t,x,\xi )\Vert
_{L_{\xi }^{3-}L_{x}^{3+}}\Vert \hat{\phi}\Vert
_{L^{3+}}^{2}\int_{s=1}^{+\infty }s^{-1-}\Vert \tilde{h}(t,x,\zeta )\Vert
_{L_{\zeta }^{3-}L_{x}^{3+}}ds \\
&\lesssim &\Vert \hat{\phi}\Vert _{L^{3+}}^{2}\Vert \langle \nabla _{\xi
}\rangle ^{0+}\tilde{g}(t,x,\xi )\Vert _{L_{\xi }^{3-}L_{x}^{3+}}\Vert 
\tilde{h}(t,x,\zeta )\Vert _{L_{\zeta }^{3-}L_{x}^{3+}}
\end{eqnarray*}

Putting the $0<s<1$ and $1<s<+\infty $ cases together, we obtain 
\begin{align*}
\hspace{0.3in}& \hspace{-0.3in}\Vert \langle \nabla _{\xi }\rangle ^{0+}%
\tilde{Q}_{\alpha ,\sigma }(\tilde{g},\tilde{h})(t,x,\xi )\Vert _{L_{\xi
}^{3-}L_{x}^{3/2+}} \\
& \lesssim \Vert \langle \nabla _{\xi }\rangle ^{0+}\tilde{g}(t,x,\xi )\Vert
_{L_{\xi }^{3-}L_{x}^{3+}}\Vert \tilde{h}(t,x,\xi )\Vert _{(L_{\xi
}^{3+}\cap L_{\xi }^{3-})L_{x}^{3+}} \\
& \lesssim \Vert \langle \nabla _{x}\rangle ^{\frac{1}{2}+}\langle \nabla
_{\xi }\rangle ^{\frac{1}{2}+}\tilde{g}(t,x,\xi )\Vert _{L_{x\xi }^{2}}\Vert
\langle \nabla _{x}\rangle ^{\frac{1}{2}+}\langle \nabla _{\xi }\rangle ^{%
\frac{1}{2}+}\tilde{h}(t,x,\xi )\Vert _{L_{x\xi }^{2}}
\end{align*}%
as claimed.
\end{proof}

\subsection{Iteration of bilinear estimates\label{sec:Iteration Estimates}}

We need 3 estimates from \S \ref{Subsec:Multilinear Estimates}, in which%
\begin{eqnarray}
&&\Vert \langle \nabla _{x}\rangle ^{1+}\langle \nabla _{\xi }\rangle ^{0+}%
\tilde{Q}^{\pm }(e^{it\nabla _{x}\cdot \nabla _{\xi }}\tilde{g},e^{it\nabla
_{x}\cdot \nabla _{\xi }}\tilde{h})\Vert _{L_{(-T,T)}^{2+}L_{x\xi }^{2}}
\label{eqn:UniquenessStrichartz-1} \\
&\lesssim &\Vert \langle \nabla _{x}\rangle ^{1+}\langle \nabla _{\xi
}\rangle ^{0+}\tilde{g}\Vert _{L_{x\xi }^{2}}\Vert \langle \nabla
_{x}\rangle ^{1+}\langle \nabla _{\xi }\rangle ^{0+}\tilde{h}\Vert _{L_{x\xi
}^{2}}  \notag
\end{eqnarray}%
\begin{align}
\hspace{0.3in}& \hspace{-0.3in}\Vert \langle \nabla _{\xi }\rangle ^{0+}%
\check{Q}^{\pm }(\check{g},\check{h})(t,\eta ,\xi )\Vert
_{L_{(-T,T)}^{2+}L_{\xi \eta }^{3-}}  \label{eqn:UniquenessStrichartz-2} \\
& \lesssim \left\{ \begin{aligned} & \| \langle \nabla_\xi \rangle^{0+}
\check g(t,\eta, \xi) \|_{L_{(-T,T)}^\infty L_{\eta\xi}^{3-}} \| \langle
\nabla_x \rangle^{1+} \langle \nabla_\xi \rangle^{0+}\tilde h(0,x, \xi)
\|_{L_{x\xi}^2} \text{ (if $h$ is a linear sol)}\\ &\| \langle \nabla_x
\rangle^{1+} \langle \nabla_\xi \rangle^{0+} \tilde g(0,x, \xi)
\|_{L_{x\xi}^2} \| \langle \nabla_\xi \rangle^{0+} \check h(t,\eta, \xi)
\|_{L_{(-T,T)}^\infty L_{\eta\xi}^{3-}} \text{ (if $g$ is a linear sol)}
\end{aligned}\right.  \notag
\end{align}%
are of Strichartz type (integrating in time is necessary for them to hold.)
and will be used iteratively, and 
\begin{eqnarray}
&&\Vert \langle \nabla _{\xi }\rangle ^{0+}\check{Q}^{\pm }(\check{g},\check{%
h})\Vert _{L_{\eta \xi }^{3-}}  \label{eqn:UniquenessSobolev} \\
&\lesssim &\Vert \langle \nabla _{x}\rangle ^{\frac{1}{2}+}\langle \nabla
_{\xi }\rangle ^{\frac{1}{2}+}\tilde{g}\Vert _{L_{x\xi }^{2}}\Vert \langle
\nabla _{x}\rangle ^{\frac{1}{2}+}\langle \nabla _{\xi }\rangle ^{\frac{1}{2}%
+}\tilde{h}\Vert _{L_{x\xi }^{2}}  \notag
\end{eqnarray}%
which is a fixed time estimate and will be used only once. We illustrate by
the following example on how to use them to estimate $I_{\mu }^{(k+1)}\left(
f^{(k+1)}\right) (t_{1})$. As the role of the collision operator here is to
couple to the next level, we will call the collision operator $Q_{\mu
(j),j}^{\pm }$ the $(j-1)$th coupling to be clear.

\begin{example}
Consider the summand 
\begin{equation*}
I=\int
S_{1,2}^{(1)}Q_{1,2}^{+}S_{2,3}^{(2)}Q_{1,3}^{-}S_{3,4}^{(3)}Q_{3,4}^{-}f^{(4)}(t_{4})d%
\text{\b{t}}_{4}
\end{equation*}%
in $I_{\mu }^{(4)}\left( f^{(4)}\right) (t_{1})$. Plugging in (\ref%
{eqn:H-WRep}), it reads%
\begin{equation*}
I=\int_{\mathcal{P}(\Omega )}d\mu _{t_{4}}(\rho )\int
S_{1,2}^{(1)}Q_{1,2}^{+}S_{2,3}^{(2)}Q_{1,3}^{-}S_{3,4}^{(3)}Q_{3,4}^{-}%
\left( \rho ^{\otimes 4}\right) dt_{2}dt_{3}dt_{4}
\end{equation*}%
where interchanging integration order is allowed as all measures are finite.
Expanding it out, we have%
\begin{equation*}
=\int_{\mathcal{P}(\Omega )}d\mu _{t_{4}}(\rho )\int
S_{1,2}Q^{+}(S_{2,3}Q^{-}(S_{3,4}\rho ,S_{3,4}Q^{-}(\rho ,\rho
)),S_{2,4}\rho )dt_{2}dt_{3}dt_{4}
\end{equation*}%
Notice that, away from the most inner (the 3rd) coupling, every coupling
takes the form $Q^{\pm }(Sf,Sg)$. For the estimates, put $I$ in the $(\eta
,\xi )$-side and apply the $L_{\eta _{1}\xi _{1}}^{3-}$ norm to obtain 
\begin{eqnarray*}
&&\left\Vert \langle \nabla _{\xi _{1}}\rangle ^{0+}\check{I}\right\Vert
_{L_{\eta _{1}\xi _{1}}^{3-}} \\
&\leqslant &\int_{\mathcal{P}(\Omega )}d\left\vert \mu _{t_{4}}\right\vert
(\rho )\int_{[0,T]^{3}}\left\Vert \left( \langle \nabla _{\xi _{1}}\rangle
^{0+}Q^{+}(S_{2,3}Q^{-}(S_{3,4}\rho ,S_{3,4}Q^{-}(\rho ,\rho )),S_{2,4}\rho
)\right) ^{\vee }\right\Vert _{L_{\eta _{1}\xi _{1}}^{3-}}dt_{2}dt_{3}dt_{4}
\end{eqnarray*}%
For the first coupling, Cauchy-Schwarz in $t_{2}$, and apply (\ref%
{eqn:UniquenessStrichartz-2}) to the first coupling, with the bilinear
variable which contains the 3rd coupling put in $L_{\eta _{1}\xi _{1}}^{3-}$%
, that is,%
\begin{eqnarray*}
\left\Vert \langle \nabla _{\xi _{1}}\rangle ^{0+}\check{I}\right\Vert
_{L_{\eta _{1}\xi _{1}}^{3-}} &\leqslant &CT^{\frac{1}{2}-}\int_{\mathcal{P}%
(\Omega )}d\left\vert \mu _{t_{4}}\right\vert (\rho )\left\Vert \langle
\nabla _{x}\rangle ^{1+}\langle \nabla _{\xi _{1}}\rangle ^{0+}\tilde{\rho}%
\right\Vert _{L_{x\xi }^{2}} \\
&&\times \int_{\lbrack 0,T]^{2}}\left\Vert \left( \langle \nabla _{\xi
_{1}}\rangle ^{0+}Q^{-}(S_{3,4}\rho ,S_{3,4}Q^{-}(\rho ,\rho ))\right)
^{\vee }\right\Vert _{L_{\eta \xi }^{3-}}dt_{3}dt_{4}
\end{eqnarray*}%
Doing the same thing for the 2nd coupling,%
\begin{eqnarray*}
\left\Vert \langle \nabla _{\xi _{1}}\rangle ^{0+}\check{I}\right\Vert
_{L_{\eta _{1}\xi _{1}}^{3-}} &\leqslant &\left( CT^{\frac{1}{2}-}\right)
^{2}\int_{\mathcal{P}(\Omega )}d\left\vert \mu _{t_{4}}\right\vert (\rho
)\left\Vert \langle \nabla _{x}\rangle ^{1+}\langle \nabla _{\xi
_{1}}\rangle ^{0+}\tilde{\rho}\right\Vert _{L_{x\xi }^{2}}^{2} \\
&&\times \int_{\lbrack 0,T]}\left\Vert (\langle \nabla _{\xi _{1}}\rangle
^{0+}Q^{-}(\rho ,\rho ))^{\vee }\right\Vert _{L_{\eta \xi }^{3-}}dt_{4}
\end{eqnarray*}%
Apply (\ref{eqn:UniquenessSobolev}) to the 3rd coupling, we get%
\begin{eqnarray*}
&&\left\Vert \langle \nabla _{\xi _{1}}\rangle ^{0+}\check{I}\right\Vert
_{L_{\eta _{1}\xi _{1}}^{3-}} \\
&\leqslant &\left( CT^{\frac{1}{2}-}\right) ^{2}\int_{\mathcal{P}(\Omega
)}d\left\vert \mu _{t_{4}}\right\vert (\rho )\left\Vert \langle \nabla
_{x}\rangle ^{0+}\langle \nabla _{\xi _{1}}\rangle ^{0+}\tilde{\rho}%
\right\Vert _{L_{x\xi }^{2}}^{2}\Vert \langle \nabla _{x}\rangle ^{\frac{1}{2%
}+}\langle \nabla _{\xi }\rangle ^{\frac{1}{2}+}\tilde{\rho}\Vert _{L_{x\xi
}^{2}}^{2}\int_{[0,T]}dt_{4} \\
&\leqslant &\left( CT^{\frac{1}{2}-}\right) ^{2}T\int_{\mathcal{P}(\Omega
)}\left\Vert \langle \nabla _{x}\rangle ^{1+}\langle \nabla _{\xi
_{1}}\rangle ^{0+}\tilde{\rho}\right\Vert _{L_{x\xi }^{2}}^{2}\Vert \langle
\nabla _{x}\rangle ^{\frac{1}{2}+}\langle \nabla _{\xi }\rangle ^{\frac{1}{2}%
+}\tilde{\rho}\Vert _{L_{x\xi }^{2}}^{2}d\left\vert \mu _{t_{4}}\right\vert
(\rho )
\end{eqnarray*}%
as needed.
\end{example}

\subsubsection{Estimate for the general cases}

We handle the general cases by the following algorithm.

\begin{itemize}
\item[Step 1] Put $I_{\mu }^{(k+1)}\left( f^{(k+1)}\right) (t_{1})$ in the $%
L_{\eta _{1}\xi _{1}}^{3-}$ norm on the $(\eta ,\xi )$-side with $\langle
\nabla _{\xi _{1}}\rangle ^{0+}$ applied.

\item[Step 2] Pay a price of $2^{k}$ to expand all the $Q$ inside $I_{\mu
}^{(k+1)}\left( f^{(k+1)}\right) (t_{1})$ into $Q^{\pm }$ so that there is
at most one $Q^{\pm }(\rho ,\rho )$ at the $k$-th coupling in each summand
denoted by $I_{\mu ,sgn}^{(k+1)}$, where $sgn$ means signed. That is%
\begin{equation*}
\left\Vert \langle \nabla _{\xi _{1}}\rangle ^{0+}\left( I_{\mu
}^{(k+1)}\left( f^{(k+1)}\right) (t_{1})\right) ^{\vee }\right\Vert
_{L_{\eta _{1}\xi _{1}}^{3-}}\leqslant 4^{k}\left\Vert \langle \nabla _{\xi
_{1}}\rangle ^{0+}\left( I_{\mu ,sgn}^{(k+1)}\left( f^{(k+1)}\right)
(t_{1})\right) ^{\vee }\right\Vert _{L_{\eta _{1}\xi _{1}}^{3-}}
\end{equation*}

\item[Step 3] Set counter $j=1$, use Minkowski's inequality to put the $%
L_{\eta _{1}\xi _{1}}^{3-}$ norm inside the $d\underline{t}_{k+1}d\left\vert
\mu _{t_{k+1}}\right\vert $ integrals and expand the time integration domain
to $[0,T]^{k}$. That is,%
\begin{eqnarray*}
&&\left\Vert \langle \nabla _{\xi _{1}}\rangle ^{0+}\left( I_{\mu
,sgn}^{(k+1)}\left( f^{(k+1)}\right) (t_{1})\right) ^{\vee }\right\Vert
_{L_{\eta _{1}\xi _{1}}^{3-}} \\
&\leqslant &\int_{\mathcal{P}(\Omega )}d\left\vert \mu _{t_{k+1}}\right\vert
(\rho )\int_{[0,T]^{3}}\left\Vert \left( \langle \nabla _{\xi _{1}}\rangle
^{0+}Q^{\pm }(...)\right) ^{\vee }\right\Vert _{L_{\eta _{1}\xi _{1}}^{3-}}d%
\underline{t}_{k+1}
\end{eqnarray*}

\item[Step 4] If $j<k$, go to Step 5, otherwise go to Step 8.

\item[Step 5] If the $j$-th coupling contains the $k$-th coupling in one of
its two bilinear variables (there can be at most one), then Cauchy-Schwarz
in $t_{j+1}$ and apply estimate (\ref{eqn:UniquenessStrichartz-2}) to the $j$%
-th coupling such that the bilinear variable carrying the $k$-th coupling is
put in $L_{\eta \xi }^{3-}$ and go to Step 7. If not, go to Step 6.

\item[Step 6] Cauchy-Schwarz in $t_{j+1}$ and apply estimate (\ref%
{eqn:UniquenessStrichartz-1}) to the $j$-th coupling.

\item[Step 7] $j=j+1$ and go to Step 4.

\item[Step 8] Apply estimate (\ref{eqn:UniquenessSobolev}) to the $k$-th
coupling, we would have deduced that%
\begin{eqnarray*}
&&\left\Vert \langle \nabla _{\xi _{1}}\rangle ^{0+}\left( I_{\mu
,sgn}^{(k+1)}\left( f^{(k+1)}\right) (t_{1})\right) ^{\vee }\right\Vert
_{L_{\eta _{1}\xi _{1}}^{3-}} \\
&\leqslant &\left( CT^{\frac{1}{2}-}\right) ^{k-1}\int_{\mathcal{P}(\Omega
)}d\left\vert \mu _{t_{k+1}}\right\vert (\rho )\left\Vert \langle \nabla
_{x}\rangle ^{0+}\langle \nabla _{\xi _{1}}\rangle ^{0+}\tilde{\rho}%
\right\Vert _{L_{x\xi }^{2}}^{k-2}\Vert \langle \nabla _{x}\rangle ^{\frac{1%
}{2}+}\langle \nabla _{\xi }\rangle ^{\frac{1}{2}+}\tilde{\rho}\Vert
_{L_{x\xi }^{2}}^{2}\int_{[0,T]}dt_{k+1} \\
&\leqslant &\left( CT^{\frac{1}{2}-}\right) ^{k-1}T\int_{\mathcal{P}(\Omega
)}\left\Vert \langle \nabla _{x}\rangle ^{1+}\langle \nabla _{\xi
_{1}}\rangle ^{0+}\tilde{\rho}\right\Vert _{L_{x\xi }^{2}}^{k-2}\Vert
\langle \nabla _{x}\rangle ^{\frac{1}{2}+}\langle \nabla _{\xi }\rangle ^{%
\frac{1}{2}+}\tilde{\rho}\Vert _{L_{x\xi }^{2}}^{2}d\left\vert \mu
_{t_{k+1}}\right\vert (\rho )
\end{eqnarray*}%
That is,%
\begin{eqnarray*}
&&\sup_{t\in \left[ 0,T\right] }\left\Vert \langle \nabla _{\xi _{1}}\rangle
^{0+}\left( I_{\mu }^{(k+1)}\left( f^{(k+1)}\right) (t_{1})\right) ^{\vee
}\right\Vert _{L_{\eta _{1}\xi _{1}}^{3-}} \\
&\leqslant &2\left( CT^{\frac{1}{2}-}\right) ^{k-1}T\int_{\mathcal{P}(\Omega
)}\left\Vert \langle \nabla _{x}\rangle ^{1+}\langle \nabla _{\xi
_{1}}\rangle ^{0+}\tilde{\rho}\right\Vert _{L_{x\xi }^{2}}^{k-2}\Vert
\langle \nabla _{x}\rangle ^{\frac{1}{2}+}\langle \nabla _{\xi }\rangle ^{%
\frac{1}{2}+}\tilde{\rho}\Vert _{L_{x\xi }^{2}}^{2}d\left\vert \mu
_{t_{k+1}}\right\vert (\rho )
\end{eqnarray*}%
as claimed in Proposition \ref{prop:estimate with boardgame}.
\end{itemize}

\section{Justification of Physicality: Regularity from the Local Maxwellian
Viewpoint\label{S3}}

\emph{The calculations in this section are not rigorous. However, the
content of this section is not needed for the proof of Theorem \ref{thm:main}%
. This section is only intended to motivate the hypotheses of Theorem \ref%
{thm:main}.}

In this section, we give a construction of an $N$-body solution converging
to a local Maxwellian. A simple tensor product of local Maxwellians is not
qualified to be $f_{N}^{(k)}$; this format can only be achieved in the limit 
$f^{(k)}$. By appealing to the law of large numbers to obtain a
representative form of $f_{N}^{(k)}$, we find that $f_{N}^{(k)}$ must
consist not only of the tensor product of local Maxwellians that persist in
the $N\rightarrow \infty $ limit, but also additional quasi-free terms that
should, in some sense, vanish as $N\rightarrow \infty $ while preventing one
from closing the BBGKY hierarchy estimates with only a single Duhamel
iterate.\footnote{%
This is actually conjectured in \cite[p.11]{BCEP4}.} The quasi-free terms,
when measured in the standard Sobolev norms in the $(\boldsymbol{x}_{k},%
\boldsymbol{\xi }_{k})$ reference frame, have growth as $N\rightarrow \infty 
$. In this sense, these terms are irregular (or more precisely, sort of
regular in their own way) and must be isolated in the decomposition of $%
f_{N}^{(k)}$ and represented in their own natural reference frame so that
they can be estimated separately in the BBGKY hierarchy.

\subsection{A $N$-body construction of the local Maxwellian\label%
{SS:construction}}

Let $Y=O(1)$ in spatial $\mathbb{R}^{3}$ and $W=O(1)$ in frequency $\mathbb{%
R }^{3}$. Let $\chi $ be a Schwartz class function on $\mathbb{R}^{3}$. A
wave packet 
\begin{equation*}
u(y,0)=\chi (\frac{y-Y}{\epsilon ^{1/2}})e^{iy\cdot W/\epsilon }
\end{equation*}
is spatially centered at position $Y$ with spatial width $O(\epsilon ^{1/2})$%
, and is frequency centered at $W/\epsilon $ with frequency width $\epsilon
^{-1/2}$. Under the evolution $(i\epsilon \partial _{t}+\epsilon ^{2}\Delta
)u=0$ on a unit time scale, $u(y,t)$ will be spatially centered at position $%
Y-2Wt$ with spatial width $O(\epsilon ^{1/2})$ and frequency centered at $%
W/\epsilon $ with frequency width $O(\epsilon ^{-1/2})$: 
\begin{equation*}
u(y,t)=\chi_t (\frac{y-Y-2Wt}{\epsilon ^{1/2}})e^{iy\cdot W/\epsilon }
\end{equation*}
where $\chi_t(y) = e^{it\Delta}\chi(y)$. In particular, on a unit time
scale, not much decoherence will take place, and this is why we have chosen
spatial width $O(\epsilon ^{1/2})$. \footnote{%
One way to see the lack of decoherence is to let $v(y,t)=u(\epsilon
^{1/2}y,t)$. Then $(i\partial _{t}+\Delta )v=0$ with $v(y,0)=\chi
(y-\epsilon ^{-1/2}Y)e^{iy\cdot W/\epsilon ^{1/2}}$. So the transformed
initial condition $v(y,0)$ solves the normalized Schr\"{o}dinger equation
with $O(1)$ width initial condition on a unit time scale; the oscillatory
phase factor is handled by Galilean invariance.}. Note that 
\begin{equation*}
\func{cov}(W,Y+2tW) = 2t \func{cov}(W,W) = 2t
\end{equation*}
and thus the frequency shift and positional shift, if initially independent,
will have a linearly evolving covariance. It seems reasonable that upon a
collision, this covariance could shift and thus an interacting multiparticle
ansatz should incorporate a shift in time $t_j$ associated with the $j$th
particle.

Let $Y_1, \ldots, Y_N, W_1, \ldots, W_N$ be an independent sample from the
standard normal distribution and consider the wave function 
\begin{equation}  \label{E:LM05}
\begin{aligned} \Psi_N(\boldsymbol{y}_N) = \kappa \sum_{\sigma\in S_N} &
\chi_{\sigma(1)}(\frac{ y_1-Y_{\sigma(1)} - 2t_{\sigma(1)} W_{\sigma(1)}
}{\epsilon^{1/2}}) e^{iy_1\cdot W_{\sigma(1)}/\epsilon} \\ & \cdots
\chi_{\sigma(N)}( \frac{y_N-Y_{\sigma(N)} -2 t_{\sigma(N)} W_{\sigma(N)}
}{\epsilon^{1/2}})e^{iy_N\cdot W_{\sigma(N)}/\epsilon} \end{aligned}
\end{equation}
where $\kappa$ is a suitable normalization, determined below. Note that we
have applied the permutation $\sigma$ to the spatial center indices $%
Y_\bullet$, the frequency center indices $W_\bullet$, and the profile
indices $\chi_\bullet$. The time shifts $t_\bullet$ allow us to consider a
wave-form in which the $(Y_\bullet, W_\bullet)$ pair covariances vary.

When the wave function \eqref{E:LM05} is taken as the initial condition, the
solution along the $N$-body \emph{free} linear flow (no interaction) is of a
similar form with the $Y$'s suitably translated. 
\begin{equation}  \label{E:LM05b}
\begin{aligned} \Psi_N(\boldsymbol{y}_N,t) = \kappa \sum_{\sigma\in S_N} &
\chi_{\sigma(1),t}(\frac{ y_1-Y_{\sigma(1)} - 2(t-t_{\sigma(1)}
)W_{\sigma(1)} }{\epsilon^{1/2}}) e^{iy_1\cdot W_{\sigma(1)}/\epsilon} \\ &
\cdots \chi_{\sigma(N),t}( \frac{y_N-Y_{\sigma(N)} - 2(t-t_{\sigma(N)})
W_{\sigma(N)} }{\epsilon^{1/2}})e^{iy_N\cdot W_{\sigma(N)}/\epsilon}
\end{aligned}
\end{equation}
where $\chi_{j,t}(x) = e^{it\Delta}\chi_j(x)$ (the $\epsilon=1$ free linear
Schr\"odinger propagator). Thus the covariances of the $(Y_\bullet,
W_\bullet)$ pairs evolve linearly from their respective initial values $%
t_\bullet$.

In a collisional model, collisions are expected to occur on average every $%
\epsilon$ increment of time. Although the effect of collisions is weak,
their expected impact over $O(1)$ time is $O(1)$. Upon collision, the
phase/velocity $W_\bullet$ will shift giving rise to a shift in the $%
(Y_\bullet, W_\bullet)$ pair covariance. Thus \eqref{E:LM05} seems to be a
reasonable model of the functional form of the solution at an arbitrary time
and we will perform computations using the form \eqref{E:LM05}.

From \eqref{E:LM05} 
\begin{align*}
\hspace{0.3in}&\hspace{-0.3in} \gamma _{N}(\boldsymbol{y}_{N},\boldsymbol{y}%
_{N}^{\prime }) =\Psi _{N}( \boldsymbol{y}_{N})\bar{\Psi}_{N}(\boldsymbol{y}%
_{N}^{\prime }) \\
& =\kappa ^{2}\sum_{\substack{ \sigma \in S_{N}  \\ \sigma^{\prime} \in
S_{N} }} \chi_{\sigma(1)} ( \frac{y_{1}-Y_{\sigma (1)}-2t_{\sigma(1)}
W_{\sigma(1)}}{\epsilon ^{1/2}}) \\
& \qquad\qquad \bar{\chi}_{\sigma^{\prime}(1)}(\frac{y_{1}^{\prime}-Y_{%
\sigma^{\prime} (1)} - 2t_{\sigma^{\prime}(1)}W_{\sigma^{\prime}(1)} }{%
\epsilon ^{1/2}})e^{i(y_{1}\cdot W_{\sigma (1)}-y_{1}^{\prime }\cdot
W_{\sigma^{\prime} (1)})/\epsilon } \\
& \qquad\qquad \cdots \chi_{\sigma(N)} (\frac{y_{N}-Y_{\sigma (N)} -
2t_{\sigma(N)} W_{\sigma(N)} }{\epsilon^{1/2}}) \\
& \qquad\qquad \bar{\chi}_{\sigma^{\prime}(N)}(\frac{y_{N}^{\prime
}-Y_{\sigma^{\prime} (N)} -2t_{\sigma^{\prime}(N)} W_{\sigma^{\prime}(N)} }{%
\epsilon^{1/2}} )e^{i(y_{N}\cdot W_{\sigma (N)}-y_{N}^{\prime }\cdot
W_{\sigma^{\prime} (N)})/\epsilon }
\end{align*}
Upon setting 
\begin{equation*}
\begin{aligned} &\boldsymbol{y}_N =
\boldsymbol{x}_N+\epsilon\boldsymbol{\xi}_N \\
&\boldsymbol{y}_N^{\prime}=\boldsymbol{x}_N-\epsilon \boldsymbol{\xi}_N
\end{aligned} \qquad \iff \qquad \begin{aligned} &\boldsymbol{x}_N =
\boldsymbol{y}_N+\boldsymbol{y}_N^{\prime} \\ & \boldsymbol{\xi}_N =
\frac{\bds y_N - \bds y_N^{\prime}}{\epsilon} \end{aligned}
\end{equation*}
we have 
\begin{equation*}
\tilde{f}_{N}(\boldsymbol{x}_{N},\boldsymbol{\xi }_{N})=\gamma _{N}( 
\boldsymbol{y}_{N},\boldsymbol{y}_{N}^{\prime })
\end{equation*}
this takes the form 
\begin{equation}  \label{E:LM12}
\begin{aligned} \tilde{f}_{N}(\boldsymbol{x}_{N},\boldsymbol{\xi
}_{N})=\kappa ^{2} \sum_{\substack{ \sigma \in S_{N} \\ \sigma^{\prime} \in
S_{N}}} \prod_{j=1}^{N} \chi_{\sigma(j)} (\frac{ x_{j}-Y_{\sigma
(j)}-2t_{\sigma(j)} W_{\sigma(j)} }{\epsilon^{1/2}}+\epsilon ^{1/2}\xi _{j})
e^{ix_{j}\cdot (W_{\sigma (j)}-W_{\sigma^{\prime} (j)})/\epsilon }& \\
\bar{\chi}_{\sigma^{\prime}(j)}( \frac{x_{j}-Y_{\sigma^{\prime}(j)} -
2t_{\sigma^{\prime}(j)} W_{\sigma^{\prime}(j)} }{\epsilon ^{1/2}}-\epsilon
^{1/2}\xi _{j}) e^{i\xi _{j}\cdot (W_{\sigma (j)}+W_{\sigma^{\prime} (j)})}&
\end{aligned}
\end{equation}

\begin{proposition}
The constant $\kappa$ needed to achieve the normalization $\mathbb{E}
\|\Psi_N \|_{L^2}^2 = 1$, where $\Psi_N$ is defined by \eqref{E:LM05}, is 
\begin{equation}  \label{E:LM11}
\kappa^2 \sim \frac{1}{N! \epsilon^{3N/2}}
\end{equation}
\end{proposition}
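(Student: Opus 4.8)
The plan is to compute $\mathbb{E}\|\Psi_N\|_{L^2}^2$ directly by expanding the double sum over permutations $\sigma,\sigma'\in S_N$ and taking expectation over the Gaussian sample $Y_1,\ldots,Y_N,W_1,\ldots,W_N$. Writing $\Psi_N$ as in \eqref{E:LM05}, we have
\begin{equation*}
\|\Psi_N\|_{L^2}^2 = \kappa^2 \sum_{\sigma,\sigma'\in S_N} \int_{\boldsymbol{y}_N} \prod_{j=1}^N \chi_{\sigma(j)}\!\Big(\tfrac{y_j - Y_{\sigma(j)} - 2t_{\sigma(j)}W_{\sigma(j)}}{\epsilon^{1/2}}\Big) \bar\chi_{\sigma'(j)}\!\Big(\tfrac{y_j - Y_{\sigma'(j)} - 2t_{\sigma'(j)}W_{\sigma'(j)}}{\epsilon^{1/2}}\Big) e^{iy_j\cdot(W_{\sigma(j)} - W_{\sigma'(j)})/\epsilon}\, d\boldsymbol{y}_N .
\end{equation*}
The $y_j$ integral factorizes across $j$; after the substitution $y_j \mapsto \epsilon^{1/2} y_j$ it produces a factor $\epsilon^{3/2}$ per coordinate, hence $\epsilon^{3N/2}$ overall, times a product of overlap integrals of the shifted profiles weighted by the oscillatory phase.

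The key observation is that the dominant contribution comes from the \emph{diagonal} $\sigma = \sigma'$. When $\sigma=\sigma'$, every oscillatory phase is trivial and each overlap integral is $\|\chi_{\sigma(j)}\|_{L^2}^2 = O(1)$, so the diagonal contributes $\kappa^2 \epsilon^{3N/2} N! \cdot O(1)$. For off-diagonal terms $\sigma\neq\sigma'$, I would argue that the expectation over $(Y_\bullet, W_\bullet)$ forces decay: if $\sigma(j)\neq\sigma'(j)$ for some $j$, then (holding all other variables fixed) the $Y_{\sigma(j)}$-expectation of the corresponding overlap integral is an integral of a Schwartz function against the Gaussian density, which is $O(1)$, but the key point is that the number of indices where $\sigma$ and $\sigma'$ differ controls how many independent Gaussian averages appear; more importantly, the oscillatory factor $e^{iy_j\cdot(W_{\sigma(j)} - W_{\sigma'(j)})/\epsilon}$ after rescaling becomes $e^{i\epsilon^{-1/2} y_j \cdot(W_{\sigma(j)} - W_{\sigma'(j)})}$, which upon $W$-averaging (a Gaussian integral in the $W$'s, which are generically separated by $O(1)$) yields rapid decay in $\epsilon$. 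Thus the off-diagonal terms are lower order, and one obtains $\mathbb{E}\|\Psi_N\|_{L^2}^2 \sim \kappa^2 \epsilon^{3N/2} N!$, giving \eqref{E:LM11}.

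Concretely, the steps I would carry out are: (1) expand $\|\Psi_N\|_{L^2}^2$ as the double permutation sum and factorize the spatial integral; (2) rescale $y_j\mapsto \epsilon^{1/2}y_j$ to extract the $\epsilon^{3N/2}$ factor and rewrite each factor as $\int \chi_a(y - a_j)\bar\chi_b(y - b_j) e^{i\epsilon^{-1/2} y\cdot(W_a - W_b)}\,dy$ for appropriate shifts; (3) identify the diagonal $\sigma=\sigma'$ contribution as $\sim \kappa^2 \epsilon^{3N/2} N!$; (4) bound the off-diagonal contribution by taking expectation over the $W$'s first, using stationary-phase/Gaussian-integral decay in $\epsilon$ (and the rapid decay of $\hat\chi$), to show it is negligible relative to the diagonal; (5) set the total equal to $1$ and solve for $\kappa^2$. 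The main obstacle I expect is step (4): making rigorous (or at least convincing, given this section is explicitly heuristic) that the off-diagonal permutation terms are genuinely subdominant — one must handle the interplay between how many coordinates $\sigma,\sigma'$ disagree on, the combinatorics of such permutation pairs (which can be as large as $N!$ times a factor), and the amount of $\epsilon$-decay harvested from each disagreeing coordinate's oscillatory phase. Since the calculations in this section are declared non-rigorous, I would present step (4) as a scaling heuristic: each disagreeing coordinate contributes an oscillation at frequency $\sim\epsilon^{-1/2}$ against an $O(1)$-width Gaussian in $W$, producing a Gaussian-type suppression, so that even after accounting for permutation combinatorics the diagonal wins, leaving $\kappa^2 \sim (N!\,\epsilon^{3N/2})^{-1}$ as claimed.
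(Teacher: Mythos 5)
Your overall strategy is the same as the paper's: expand $\mathbb{E}\|\Psi_N\|_{L^2}^2=\kappa^2\sum_{\sigma,\sigma'}\mathbb{E}I_{\sigma,\sigma'}$, identify the diagonal $\sigma=\sigma'$ as contributing $N!\,\kappa^2\epsilon^{3N/2}(\int|\chi|^2)^N$, and argue that the off-diagonal terms are subdominant after averaging over $(Y_\bullet,W_\bullet)$. The difference is that your step (4), which you correctly flag as the main obstacle, is exactly where all the content lies, and the paper carries it out concretely rather than heuristically. For a pair $(\sigma,\sigma')$ disagreeing on exactly $k$ indices, the paper computes $\mathbb{E}I_{\sigma,\sigma'}\sim\epsilon^{3(N-k)/2}\,\epsilon^{6k}$: integrating out $W_{\sigma(j)},W_{\sigma'(j)}$ first turns the phase $e^{ix_j\cdot(W_{\sigma(j)}-W_{\sigma'(j)})/\epsilon}$ into $e^{-|x_j|^2/\epsilon^2}$, whose $x_j$-integral gives $\epsilon^3$, and each of the \emph{two} independent $Y$-averages of a width-$\epsilon^{1/2}$ profile gives a further $\epsilon^{3/2}$; hence each disagreeing factor yields $\epsilon^6$ in place of the diagonal's $\epsilon^{3/2}$, a relative gain of $\epsilon^{9/2}$ per coordinate. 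The number of $\sigma'$ disagreeing with a fixed $\sigma$ on exactly $k$ indices is $\binom{N}{k}$ times a derangement number, bounded by $N^k=\epsilon^{-3k}$, so the off-diagonal sum is controlled by a geometric series with ratio $\epsilon^{9/2}\cdot\epsilon^{-3}=\epsilon^{3/2}$.

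The reason you should not leave this as a wave at ``Gaussian-type suppression'' is that the margin is thin and your stated mechanism does not obviously supply enough decay. The combinatorial loss is $\epsilon^{-3}$ per disagreeing coordinate, so you need strictly more than $\epsilon^{3}$ of relative suppression per such coordinate. Your accounting invokes only the $W$-oscillation (and at frequency $\epsilon^{-1/2}$, which is an artifact of the rescaling: the rescaled $y_j$ ranges over a set of diameter $\epsilon^{-1/2}$, so the total phase is still of order $\epsilon^{-1}$, and the correct statement is that the $W$-average forces $|x_j|\lesssim\epsilon$, worth $\epsilon^{3}$). The remaining $\epsilon^{3/2}\cdot\epsilon^{3/2}$ comes from the spatial side: the two wave packets are centered at \emph{independent} $O(1)$-spread random points and each must separately be hit by its $Y$-Gaussian near $x_j\approx 0$. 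Only the product of both mechanisms, $\epsilon^{3}\cdot\epsilon^{3}=\epsilon^{6}$ versus $\epsilon^{3/2}$ on the diagonal, beats $N^k$. So your plan reaches the right answer, but to make even the heuristic convincing you must do this per-coordinate bookkeeping and compare it against the derangement count, as the paper does.
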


\begin{proof}
We assume that $\chi_j = \chi$ and all time shifts $t_j=0$ (the general case
is similar). We have 
\begin{equation}  \label{E:LM06}
\| \Psi_N \|_{L^2}^2 = \int_{\mathbb{R}^{3N}} \tilde f_N(\boldsymbol{x}_N, 
\boldsymbol{0}) \,d\boldsymbol{x}_N = \kappa^2 \sum_{\substack{ \sigma\in
S_N  \\ \sigma^{\prime} \in S_N}} I_{\sigma,\sigma^{\prime}}
\end{equation}
where 
\begin{equation*}
I_{\sigma,\sigma^{\prime}} = \prod_{j=1}^N \int_{x_j} \chi(\frac{x_j
-Y_{\sigma(j)}}{ \epsilon^{1/2}} )\bar \chi(\frac{x_j -Y_{\sigma^{\prime}(j)}%
}{\epsilon^{1/2}}) e^{i x_j\cdot (W_{\sigma(j)} -
W_{\sigma^{\prime}(j)})/\epsilon} dx_j
\end{equation*}
The condition $\mathbb{E} \| \Psi_N \|_{L^2}^2=1$ determines the
normalization constant $\kappa$. We consider the value of $I_{\sigma,
\sigma^{\prime}}$ in various settings, but first let us examine two
representative cases.

\bigskip

\noindent\emph{Case 1}. All $\sigma^{\prime}(j)=\sigma(j)$ for $1\leq j \leq
N$. This case gives the largest expected value. In this case, 
\begin{equation*}
I_{\sigma,\sigma} = \prod_{j=1}^N \int_{x_j} |\chi(\frac{x_j -Y_{\sigma(j)}}{
\epsilon^{1/2}} )|^2 dx_j = \epsilon^{3N/2} (\int |\chi|^2)^N
\end{equation*}
We did not even need to take the expectation; it is constant on the
probability space $\Omega$.

\bigskip

\noindent\emph{Case 2}. All $\sigma^{\prime}(j)=\sigma(j)$ for $k+1 \leq j
\leq N$ but all $\sigma^{\prime}(j) \neq \sigma(j)$ for $1\leq j \leq k$. In
this case, by independence, the expectation is 
\begin{align*}
\mathbb{E} I_{\sigma,\sigma^{\prime}} &= \prod_{j=1}^k \int_{x_j}
\int_{y_{\sigma(j)}, y_{\sigma^{\prime}(j)}, w_{\sigma(j)},
w_{\sigma^{\prime}(j)}} \chi(\frac{x_j -y_{\sigma(j)}}{ \epsilon^{1/2}}
)\bar \chi(\frac{x_j -y_{\sigma^{\prime}(j)}}{\epsilon^{1/2}}) e^{i x_j\cdot
(w_{\sigma(j)} - w_{\sigma^{\prime}(j)})/\epsilon} \\
& \qquad\qquad \qquad \qquad g(y_{\sigma(j)}) g(y_{\sigma^{\prime}(j)})
g(w_{\sigma(j)}) g(w_{\sigma^{\prime}(j)}) dx_j \, dy_{\sigma(j)}
dy_{\sigma^{\prime}(j)}dw_{\sigma(j)} dw_{\sigma^{\prime}(j)} \\
& \quad \prod_{j=k+1}^N \int_{y_{\sigma(j)}} \int_{x_j} |\chi(\frac{x_j
-y_{\sigma(j)}}{\epsilon^{1/2}} )|^2 g(y_{\sigma(j)}) dx_j \, dy_{\sigma(j)}
\end{align*}
The last $N-k$ factors yield $\epsilon^{3(N-k)/2}$, as before. In the first $%
k$ factors, it is easiest to start by carrying out the integrals over $%
w_{\sigma(j)}$ and $w_{\sigma^{\prime}(j)}$ which are just Fourier
transforms of $g$ evaluated at $-x_j/\epsilon$ and $x_j/\epsilon$
respectively. Since we take $g(w)= (2\pi)^{-3/2} e^{-w^2/2}$, the standard
normal distribution, each integral contributes $e^{-|x_j|^2/2\epsilon^2}$.
Thus 
\begin{align*}
\mathbb{E} I_{\sigma,\sigma^{\prime}} = \epsilon^{3(N-k)/2} \prod_{j=1}^k
\int_{x_j} \int_{y_{\sigma(j)}, y_{\sigma^{\prime}(j)}} \chi(\frac{x_j
-y_{\sigma(j)}}{ \epsilon^{1/2}} )\bar \chi(\frac{x_j -y_{\sigma^{\prime}(j)}%
}{\epsilon^{1/2}}) e^{-|x_j|^2/\epsilon^2} & \\
e^{-|y_{\sigma(j)}|^2/2} e^{-|y_{\sigma^{\prime}(j)}|^2/2} dx_j \,
dy_{\sigma(j)} dy_{\sigma^{\prime}(j)} &
\end{align*}
The integral over $x_j$ yields a factor $\epsilon^3$, and the $y_{\sigma(j)}$
and $y_{\sigma^{\prime}(j)}$ integrals yield $\epsilon^{3/2} \int \chi$ and $%
\epsilon^{3/2}\int \bar \chi$, respectively. Thus 
\begin{equation*}
\mathbb{E} I_{\sigma,\sigma^{\prime}} = \epsilon^{3(N-k)/2} \epsilon^{6k}
|\hat\chi(0)|^{2k} (\int|\chi|^2)^{N-k}
\end{equation*}
This is much smaller than Case 1, although the terms in Case 2 occur more
frequently in the sum over permutations.

Now let us return to \eqref{E:LM06}. \emph{Suppose that we fix a permutation 
$\sigma$ of $\{1, \ldots, N\}$}. In terms of $\sigma$, we will categorize
the set of all permutations $\sigma^{\prime}$ of $\{1, \ldots, N\}$.
Specifically, decompose the set of all $\sigma^{\prime}$ into a disjoint
union $E_0 \cup \cdots \cup E_N$, where $E_k$ is the set of all $%
\sigma^{\prime}$ for which the set $S$ of indices $j\in \{1, \ldots, N\}$
for which $\sigma(j) = \sigma^{\prime}(j)$ has cardinality $N-k$. The set $%
E_0$ has cardinality $1$, since $\sigma=\sigma^{\prime}$ on all of $%
\{1,\ldots, N\}$. To determine the cardinality of $E_k$, note first that
there are $\binom{N}{k}$ different ways to select the set $S$. Once $S$ has
been selected, the value of $\sigma^{\prime}$ on $S$ is determined ($%
\sigma^{\prime}=\sigma$ on $S$). On $S^c$ (which has cardinality $k$), we
need to determine the number of possible different selections for the values
of $\sigma^{\prime}$. To do this, consider that for any such $%
\sigma^{\prime} $, $\nu = \sigma^{-1}\circ \sigma^{\prime}$ will have the
property that

\begin{itemize}
\item for each $j\in S$, $\nu(j) = j$ (in other words, $\nu$ fixes $S$).
From this we conclude that $\nu:S^c\to S^c$.

\item for each $j\in S^c$, $\nu(j) \neq j$, but $\nu(j)$ is otherwise
undetermined.
\end{itemize}

Thus $\nu |_{S^{c}}$ is a permutation of $S^{c}$ with no fixed points, and
this type of permutation is called a \emph{derangement\footnote{%
See \url{https://en.wikipedia.org/wiki/Derangement}}}. The number of
derangements of a set of cardinality $k$ is the integer closest of $k!/e$.
We can thus generate all $\sigma^{\prime} $ by allowing $\nu $ to range
through all derangements of $S^{c}$ and for each $\nu $ take $%
\sigma^{\prime} =\sigma \circ \nu $. 
\begin{align*}
|E_{k}|& =(\text{number of ways to select $S$})(\text{number of derangements
on $k$ elements}) \\
& =\binom{N}{k}\left[ \frac{k!}{e}\right] \leq \frac{N!}{(N-k)!}
\end{align*}
where the brackets denote \textquotedblleft integer nearest
to\textquotedblright . From \eqref{E:LM06}, 
\begin{equation*}
\mathbb{E}\Vert \Psi _{N}\Vert _{L^{2}}^{2}=M+R\,,\text{ where }
M=\sum_{\sigma \in S_{N}}\mathbb{E}I_{\sigma ,\sigma }\text{ and }
R=\sum_{\sigma \in S_{N}}\sum_{k=1}^{N}\sum_{\sigma^{\prime} \in E_{k}}%
\mathbb{E} I_{\sigma ,\sigma^{\prime} }
\end{equation*}
Taking all the $\chi $ related integrals to be $1$ for expository
convenience, $M$ is just determined from Case 1 to be 
\begin{equation*}
M=N!\kappa ^{2}\epsilon ^{3N/2}
\end{equation*}
For $R$, there are $N!$ choices for $\sigma $ and for each $\sigma $, there
the set $E_{k}$ has cardinality $|E_{k}|\leq \frac{N!}{(N-k)!}$. Thus from
Case 2, 
\begin{equation*}
|R|\leq N!\kappa ^{2}\sum_{k=1}^{N}\frac{N!}{(N-k)!}\epsilon
^{3(N-k)/2}\epsilon ^{6k}
\end{equation*}
Using the crude bound $\frac{N!}{(N-k)!}\leq N^{k}=\epsilon ^{-3k}$, 
\begin{equation*}
|R|\leq N!\kappa ^{2}\epsilon ^{3N/2}\sum_{k=1}^{N}\epsilon ^{3k/2}\leq M 
\frac{\epsilon ^{3/2}}{1-\epsilon ^{3/2}}
\end{equation*}
Thus $|R|$ is negligible in comparison to $M$. To achieve normalization, we
set 
\begin{equation*}
1=\mathbb{E}\Vert \Psi _{N}\Vert _{L^{2}}^{2}\sim N!\kappa ^{2}\epsilon
^{3N/2}
\end{equation*}
from which it follows that \eqref{E:LM11} holds.
\end{proof}

\subsection{The structure of marginals $\tilde{f}_{N}^{(k)}$}

The following gives the decomposition of $\tilde f_N^{(k)}$ into a \emph{core%
} term plus additional quasi-free terms.

\begin{heuristic}
Let $U_k$ be the set of all subsets of $\{1, \ldots, N\}$ of cardinality $k$%
. Let $S_k$ denote the set of permutations on $\{1,\ldots, k\}$. Then $%
\tilde f_N^{(k)}$ admits a decomposition 
\begin{equation}  \label{E:LM16}
\tilde f_N^{(k)}(\boldsymbol{x}_k, \boldsymbol{\xi}_k) = \sum_{ \pi\in S_k}
\tilde f_{N,\pi}^{(k)}(\boldsymbol{x}_k, \boldsymbol{\xi}_k)
\end{equation}
where 
\begin{equation}  \label{E:LM17}
\tilde f_{N,\pi}^{(k)}(\boldsymbol{x}_k, \boldsymbol{\xi}_k) = \frac{1}{%
\binom{N}{k}} \sum_{ \{\sigma(1), \ldots, \sigma(k)\} \in U_k}
\epsilon^{-3k/2} \tilde f_{N,\sigma,\sigma \circ \pi^{-1}}^{(k)}(\boldsymbol{%
x}_k,\boldsymbol{\xi}_k)
\end{equation}
and 
\begin{equation}  \label{E:LM18}
\begin{aligned} \tilde f_{N,\sigma,\sigma\circ
\pi^{-1}}^{(k)}(\boldsymbol{x}_k, \boldsymbol{\xi}_k) = \prod_{j=1}^{k} &
\chi_{\sigma(j)} (\frac{x_{j}+\epsilon\xi_j-Y_{\sigma(j)}
-2t_{\sigma(j)}W_{\sigma(j)}}{\epsilon ^{1/2}}) e^{ix_{j}\cdot
(W_{\sigma(j)}-W_{\sigma\circ \pi^{-1} (j)})/\epsilon }\\ &
\bar{\chi}_{\sigma\circ \pi^{-1}(j)}(\frac{x_{j}-\epsilon\xi_j
-Y_{\sigma\circ \pi^{-1}(j)} -2 t_{\sigma\circ \pi^{-1}(j)} W_{\sigma\circ
\pi^{-1}(j)}}{\epsilon ^{1/2}}) e^{i\xi _{j}\cdot
(W_{\sigma(j)}+W_{\sigma\circ \pi^{-1} (j)})} \end{aligned}
\end{equation}
\end{heuristic}

\begin{proof}
The marginals are given by 
\begin{equation*}
\tilde f_N^{(k)}(\boldsymbol{x}_k, \boldsymbol{\xi}_k) = \int \tilde f_N(%
\boldsymbol{x}_k, \boldsymbol{x}_{N-k}, \boldsymbol{\xi}_k, \boldsymbol{0})
\, d\boldsymbol{x}_{N-k}
\end{equation*}
By \eqref{E:LM12}, 
\begin{equation}  \label{E:LM13}
\tilde f_N^{(k)}(\boldsymbol{x}_k, \boldsymbol{\xi}_k) =\kappa ^{2}\sum 
_{\substack{ \sigma \in S_{N}  \\ \sigma^{\prime} \in S_{N}}}
I_{\sigma,\sigma^{\prime }}^{N-k} \tilde f_{N,\sigma,\sigma^{\prime }}^{(k)}(%
\boldsymbol{x}_k,\boldsymbol{\xi}_k)
\end{equation}
where 
\begin{equation*}
\begin{aligned} I_{\sigma,\sigma^{\prime }}^{N-k} = \prod_{j=k+1}^N
\int_{x_j} & \chi_{\sigma(j)}(\frac{x_j -Y_{\sigma(j)}-2t_{\sigma(j)}
W_{\sigma(j)}}{\epsilon^{1/2}} ) \\ & \bar
\chi_{\sigma^{\prime}(j)}(\frac{x_j
-Y_{\sigma^{\prime}(j)}-2t_{\sigma^{\prime}(j)}W_{\sigma^{\prime}(j)}}{%
\epsilon^{1/2}}) e^{i x_j\cdot (W_{\sigma(j)} -
W_{\sigma^{\prime}(j)})/\epsilon} dx_j \end{aligned}
\end{equation*}
and 
\begin{equation*}
\begin{aligned} \tilde f_{N,\sigma,\sigma^{\prime}}^{(k)}(\boldsymbol{x}_k,
\boldsymbol{\xi}_k) = \prod_{j=1}^{k} & \chi_{\sigma(j)}
(\frac{x_{j}-Y_{\sigma(j)}-2t_{\sigma(j)} W_{\sigma(j)}}{\epsilon
^{1/2}}+\epsilon ^{1/2}\xi _{j}) e^{ix_{j}\cdot (W_{\sigma
(j)}-W_{\sigma^{\prime} (j)})/\epsilon } \\
&\bar\chi_{\sigma(j)}(\frac{x_{j}-Y_{\sigma^{\prime}(j)}-2t_{\sigma^{%
\prime}(j)} W_{\sigma^{\prime}(j)}}{\epsilon ^{1/2}}-\epsilon ^{1/2}\xi
_{j}) e^{i\xi _{j}\cdot (W_{\sigma (j)}+W_{\sigma^{\prime} (j)})}
\end{aligned}
\end{equation*}
In the sum \eqref{E:LM13}, both factors $I_{\sigma,\sigma^{\prime }}^{N-k}$
and $\tilde f_{N,\sigma,\sigma^{\prime }}^{(k)}(\boldsymbol{x}_k, 
\boldsymbol{\xi}_k)$ are random variables, and for each $(\sigma,\sigma^{%
\prime })$, these two factors are independent. By the arguments in \S \ref%
{SS:construction}, $I_{\sigma,\sigma^{\prime }}^{N-k}$ is dominated by the
case in which $\sigma|_{k+1,\, . \,,N} = \sigma^{\prime }|_{k+1,\, . \,, N}$%
, explicitly 
\begin{equation*}
\sigma(j) = \sigma^{\prime }(j) \text{ for } j=k+1, \ldots, N
\end{equation*}
and in this case, $I_{\sigma,\sigma^{\prime }}^{N-k}$ is a non-random
variable that takes the value $\epsilon^{3(N-k)/2}$. Thus we reduce our
study to 
\begin{equation}  \label{E:LM14}
\tilde f_N^{(k)}(\boldsymbol{x}_k, \boldsymbol{\xi}_k) =\kappa
^{2}\epsilon^{3(N-k)/2}\sum _{\substack{ \sigma, \sigma^{\prime} \in S_{N} 
\\ \sigma|_{k+1,\, . \,,N} = \sigma^{\prime }|_{k+1,\, . \,, N}}} \tilde
f_{N,\sigma,\sigma^{\prime }}^{(k)}(\boldsymbol{x}_k,\boldsymbol{\xi}_k)
\end{equation}
Now, if $\sigma|_{k+1,\, . \,,N} = \sigma^{\prime }|_{k+1,\, . \,, N}$, then
let $\pi = (\sigma^{\prime})^{-1}\circ \sigma$, so that $\pi(j)=j$ for all $%
j\in \{k+1, \ldots, N\}$ and can thus be regarded as an element of $S_k$.
Replacing $\sigma^{\prime }=\sigma \circ \pi^{-1}$, \eqref{E:LM14} becomes 
\begin{equation}  \label{E:LM15}
\tilde f_N^{(k)}(\boldsymbol{x}_k, \boldsymbol{\xi}_k) =\frac{1}{%
N!\epsilon^{3k/2}}\sum_{\sigma \in S_N, \;\pi \in S_k} \tilde
f_{N,\sigma,\sigma \circ \pi^{-1}}^{(k)}(\boldsymbol{x}_k,\boldsymbol{\xi}_k)
\end{equation}
where we have substituted \eqref{E:LM11}. For each $\sigma \in S_N$ and $%
\pi\in S_k$, the definition of $\tilde f_{N,\sigma,\sigma \circ
\pi^{-1}}^{(k)}(\boldsymbol{x}_k,\boldsymbol{\xi}_k)$ depends only on $%
\{\sigma(1), \ldots, \sigma(k)\}$, which is some subset of $\{1, \ldots, N\}$
of cardinality $k$. Let $U_k$ be the set of all subsets of $\{1,\ldots, N\}$
of cardinality $k$. Then, of course, $|U_k| = \binom{N}{k}$, so for a fixed
value of $\{\sigma(1), \ldots, \sigma(k)\}$, there are $N!/ \binom{N}{k}$
terms in the above sum. Thus we obtain \eqref{E:LM16}, \eqref{E:LM17}.
\end{proof}

\begin{heuristic}
\label{H:2} By \eqref{E:LM17}, \eqref{E:LM18} and the law of large numbers 
\begin{equation}  \label{E:LM19}
\begin{aligned} \tilde f_{N, \pi}^{(k)}(\boldsymbol{x}_k,
\boldsymbol{\xi}_k) \sim \mathbb{E} \prod_{j=1}^{k} \epsilon^{-3/2} \chi_{j}
(\frac{x_{j}+\epsilon\xi_j-Y_{j} -2t_{j}W_{j}}{\epsilon ^{1/2}})
e^{ix_{j}\cdot (W_{j}-W_{ \pi^{-1} (j)})/\epsilon } &\\ \bar{\chi}_{
\pi^{-1}(j)}(\frac{x_{j}-\epsilon\xi_j -Y_{ \pi^{-1}(j)} -2 t_{ \pi^{-1}(j)}
W_{ \pi^{-1}(j)}}{\epsilon ^{1/2}}) e^{i\xi _{j}\cdot (W_{j}+W_{ \pi^{-1}
(j)})} & \end{aligned}
\end{equation}
By reindexing the product and using the independence of $\{(Y_1,W_1),
\ldots, (Y_N,W_N)\}$ to bring the expectation inside the product: 
\begin{equation}  \label{E:LM19b}
\begin{aligned} \tilde f_{N, \pi}^{(k)}(\boldsymbol{x}_k,
\boldsymbol{\xi}_k) \sim \prod_{j=1}^{k} \mathbb{E} \; \epsilon^{-3/2}
\chi_{j} (\frac{x_{j}+\epsilon\xi_j-Y_{j} -2t_{j}W_{j}}{\epsilon ^{1/2}})
e^{i(x_j-x_{\pi(j)}) \cdot W_j/\epsilon} &\\
\bar{\chi}_{j}(\frac{x_{\pi(j)}-\epsilon\xi_{\pi(j)} -Y_{ j} -2 t_{ j} W_{
j}}{\epsilon ^{1/2}}) e^{i(\xi_j+\xi_{\pi(j)})\cdot W_j} & \end{aligned}
\end{equation}
Let $G_Y(y)$ and $G_W(w)$ be the pdfs of $Y$ and $W$, which we take to be
standard normal. Then the expectation in \eqref{E:LM19b} can be evaluated to
give 
\begin{equation}  \label{E:LM19c}
\begin{aligned} \tilde f_{N, \pi}^{(k)}(\boldsymbol{x}_k,
\boldsymbol{\xi}_k) \sim \prod_{j=1}^k (1+4t_j^2)^{-3/2} \left(\int
|\chi_j|^2\right) \hat G_W \Big( \frac{2q_j}{\sqrt{1+4t_j^2}} \Big) & \\ G_Y
\Big( \frac{p_j}{\sqrt{1+4t_j^2}} \Big) \exp\left( -4i t_j \frac{q_j\cdot
p_j}{1+4t_j^2} \right) & \end{aligned}
\end{equation}
where 
\begin{equation*}
p_j = \tfrac12 (x_j+x_{\pi(j)}) + \tfrac12\epsilon(\xi_j-\xi_{\pi(j)}) \,,
\qquad q_j = \tfrac12 (\xi_j+\xi_{\pi(j)}) + \tfrac12
(x_j-x_{\pi(j)})/\epsilon
\end{equation*}
\end{heuristic}

\begin{remark}
The core term occurs when $\pi=\text{Identity}$ and in this case, $p_j=x_j$
and $q_j=\xi_j$, and \eqref{E:LM19c} becomes 
\begin{equation}  \label{E:LM19d}
\begin{aligned} \tilde f_{N, \pi}^{(k)}(\boldsymbol{x}_k,
\boldsymbol{\xi}_k) \sim \prod_{j=1}^k (1+4t_j^2)^{-3/2} \left(\int
|\chi_j|^2\right) \hat G_W \Big( \frac{2\xi_j}{\sqrt{1+4t_j^2}} \Big) & \\
G_Y \Big( \frac{x_j}{\sqrt{1+4t_j^2}} \Big) \exp\left( -4i t_j
\frac{\xi_j\cdot x_j}{1+4t_j^2} \right) & \end{aligned}
\end{equation}
Upon taking the Fourier transform $\boldsymbol{\xi}_k \to \boldsymbol{v}_k$,
we obtain 
\begin{equation*}
f_{N,\pi}^{(k)}(\boldsymbol{x}_k, \boldsymbol{v}_k) \sim \prod_{j=1}^k
\left(\int |\chi_j|^2\right) G_Y\left( \frac{x_j}{\sqrt{1+4t_j^2}} \right)
G_W \left[ \sqrt{1+4t_j^2} \left( \frac{v_j}{2} - \frac{2t_jx_j}{1+4t_j^2}%
\right)\right]
\end{equation*}
After completing the square, we obtain 
\begin{equation*}
f_{N,\pi}^{(k)}(\boldsymbol{x}_k, \boldsymbol{v}_k) \sim \prod_{j=1}^k
\left(\int |\chi_j|^2\right) e^{-(x_j-v_jt_j)^2/2} e^{-v_j^2/8}
\end{equation*}
which is the standard form of the local Maxwellian.
\end{remark}

\begin{remark}
At this point, we recall Example \ref{EX:2cycles} an in particular %
\eqref{E:irregularity}, which shows that when $k=2$, $\pi=(12)$, even in the
ideal situation of assuming localization in $\boldsymbol{\xi}_2$, the
function $\tilde f_{N,(12)}^{(2)}(\boldsymbol{x}_2,\boldsymbol{\xi}_2)$
given by \eqref{E:LM19d} only satisfies uniform bounds in $N$ in the space $%
H_{\boldsymbol{x}_2}^s$ for $s\leq \frac34$. From this point of view, such
terms are irregular when measured in the $(\boldsymbol{x}_2,\boldsymbol{\xi}%
_2)$ coordinate frame, since the convergence, compactness, and even the
well-posedness of the limit equation, reside in $H_{\boldsymbol{x}_2}^{1+}$.
\end{remark}

\begin{proof}[Proof of Calculation \protect\ref{H:2}]
Carrying out the expectation in \eqref{E:LM19b}, 
\begin{align*}
\tilde f_{N, \pi}^{(k)}(\boldsymbol{x}_k, \boldsymbol{\xi}_k) \sim &
\prod_{j=1}^k \int_{y_j,w_j} \epsilon^{-3/2} \chi_j\left( \frac{x_j+\epsilon
\xi_j - y_j - 2t_j w_j}{\epsilon^{1/2}} \right) \\
& \qquad \qquad \bar\chi_j \left( \frac{x_{\pi(j)} -\epsilon \xi_{\pi(j)} -
y_j-2t_jw_j}{\epsilon^{1/2}} \right) e^{2iq_j \cdot w_j} G_Y(y_j) G_W(w_j)
\, dy_j \, dw_j
\end{align*}
Shift $y_j \mapsto y_j+2t_jw_j$ and substitute 
\begin{equation*}
G_Y(y_j+2t_jw_j) G_W(w_j) = G_Y \Big( \frac{y_j}{\sqrt{1+4t_j^2}} \Big) G_W %
\left[ \sqrt{1+4t_j^2} \left( w_j+ \frac{2t_jy_j}{1+4t_j^2} \right)\right]
\end{equation*}
Then replace $w_j \mapsto w_j - \frac{2t_jy_j}{1+4t_j^2}$ to obtain 
\begin{align*}
\tilde f_{N, \pi}^{(k)}(\boldsymbol{x}_k, \boldsymbol{\xi}_k) \sim &
\prod_{j=1}^k \int_{y_j,w_j} \epsilon^{-3/2} \chi_j\left( \frac{x_j+\epsilon
\xi_j - y_j }{\epsilon^{1/2}} \right) \bar\chi_j \left( \frac{x_{\pi(j)}
-\epsilon \xi_{\pi(j)}-y_j }{\epsilon^{1/2}} \right) \\
& \qquad \qquad \exp\left( -4i t_j \frac{q_j\cdot y_j}{1+4t_j^2} \right) G_Y %
\Big( \frac{y_j}{\sqrt{1+4t_j^2}} \Big) e^{2iq_j \cdot w_j} G_W \left[ w_j 
\sqrt{1+4t_j^2} \right] \, dy_j \, dw_j
\end{align*}
Carrying out the $w_j$ integral gives 
\begin{align*}
\tilde f_{N, \pi}^{(k)}(\boldsymbol{x}_k, \boldsymbol{\xi}_k) \sim &
\prod_{j=1}^k (1+4t_j^2)^{-3/2} \hat G_W \Big( \frac{2q_j}{\sqrt{1+4t_j^2}} %
\Big)\int_{y_j} \epsilon^{-3/2} \chi_j\left( \frac{x_j+\epsilon \xi_j - y_j 
}{\epsilon^{1/2}} \right) \\
& \qquad \qquad \bar\chi_j \left( \frac{x_{\pi(j)} -\epsilon
\xi_{\pi(j)}-y_j }{\epsilon^{1/2}} \right) \exp\left( -4i t_j \frac{q_j\cdot
y_j}{1+4t_j^2} \right) G_Y \Big( \frac{y_j}{\sqrt{1+4t_j^2}} \Big) \, dy_j
\end{align*}
Replacing $y_j = p_j+\epsilon^{1/2}z_j$ gives 
\begin{align*}
\tilde f_{N, \pi}^{(k)}(\boldsymbol{x}_k, \boldsymbol{\xi}_k) \sim
\prod_{j=1}^k (1+4t_j^2)^{-3/2} \hat G_W \Big( \frac{2q_j}{\sqrt{1+4t_j^2}} %
\Big)\int_{z_j} \chi_j( \epsilon^{1/2} q_j -z_j) \bar\chi_j(
-\epsilon^{1/2}q_j-z_j) & \\
\exp\left( -4i t_j \frac{q_j\cdot (p_j+\epsilon^{1/2}z_j)}{1+4t_j^2} \right)
G_Y \Big( \frac{p_j+\epsilon^{1/2}z_j }{\sqrt{1+4t_j^2}} \Big) \, dz_j &
\end{align*}
This leads to the approximation \eqref{E:LM19c}
\end{proof}

\subsection{Effect of collisions\label{S:CollisionEffects}}

If initial condition \eqref{E:LM05} with all $t_j=0$ evolves \emph{without}
interaction ($\phi=0$), the result is \eqref{E:LM05b}, leading to %
\eqref{E:LM19c}: 
\begin{equation}  \label{E:73-02}
\tilde f^{(k)}_N(\boldsymbol{x}_k, \boldsymbol{\xi}_k) = \sum_{\pi \in S_k}
\tilde f_{N,\pi}^{(k)}(\boldsymbol{x}_k, \boldsymbol{\xi}_k) = \sum_{\pi \in
S_k} \tilde g_{N,\pi}^{(k)}(\boldsymbol{p}_k^\pi, \boldsymbol{q}_k^\pi)
\end{equation}
where 
\begin{equation}  \label{E:73-01}
\tilde g_{N, \pi}^{(k)}(\boldsymbol{p}_k, \boldsymbol{q}_k) \sim
\prod_{j=1}^k (1+4t^2)^{-3/2} \hat G_W \Big( \frac{2q_j}{\sqrt{1+4t^2}} %
\Big) G_Y \Big( \frac{p_j}{\sqrt{1+4t^2}} \Big) \exp\left( -4i t \frac{%
q_j\cdot p_j}{1+4t^2} \right)
\end{equation}
In the collisionless case, \eqref{E:73-02} satisfies the linear BBGKY
hierarchy (with $A=0$, $B=0$). 
\begin{equation*}
i \partial_{t}\tilde f_{N}^{(k)}+ \nabla _{{\boldsymbol{x}}_{k}}\cdot
\nabla_{\boldsymbol{\xi}_k} \tilde f_{N}^{(k)}=0
\end{equation*}
This hierarchy decouples in $k$ and for each $k$, it is just linear
transport. 

We now look for an indication of how the evolution of $\tilde f_{N,\pi}(t, 
\boldsymbol{p}_k, \boldsymbol{q}_k)$ in time will be altered by $\phi \neq 0$%
. We know that $\tilde f_{N}^{(k)}$ satisfies the BBGKY hierarchy (where now 
$A\neq 0$ and $B\neq 0$) as given by \eqref{E:S401}. 
\begin{equation}  \label{E:73-04a}
i\partial _{t}\tilde f_{N}^{(k)}+ \nabla_{{\boldsymbol{x}}_{k}}\cdot \nabla_{%
\boldsymbol{\xi}_k} \tilde f_{N}^{(k)}=\epsilon ^{-1/2}\tilde A_{\epsilon
}^{(k)}\tilde f_{N}^{(k)}+N\epsilon^{-1/2}\tilde B_{\epsilon }^{(k+1)}\tilde
f_{N}^{(k+1)}
\end{equation}

We anticipate that as $N\to \infty$, all $\pi \neq \text{Id}$ terms in %
\eqref{E:73-02} vanish, leaving only the core term with $\pi=\text{Id}$.
Furthermore, the anticipated limiting form of BBGKY is the Boltzmann
hierarchy, in which only the composition of $B$ and $A$ in the Duhamel
expansion survive to give the collision operator: 
\begin{equation}  \label{E:73-03}
i\partial_t \tilde f^{(k)} + \nabla_{\boldsymbol{x}_k}\cdot \nabla_{%
\boldsymbol{\xi}_k} \tilde f^{(k)} = \tilde Q^{(k+1)} \tilde f^{(k+1)}
\end{equation}
where $\tilde Q^{(k+1)}$ is given by \eqref{E:S444}. Using the limiting form %
\eqref{E:73-03} of the equation \eqref{E:73-04a} on the finite $N$
functional form of $\tilde f_N^{(k)}$ as given by \eqref{E:73-02}, we deduce
a type of ``linearization" for the dynamics of $\tilde f_{N,\pi}$ for a
fixed $\pi \in S^k$, as follows. Assuming that $\tilde f_{N,\pi}$ only
interacts with the core term, and the core term can be approximated by its $%
N\to \infty$ limit $\tilde f^{(k)}$, we can write, for fixed $\pi\in S^k$: 
\begin{equation*}
\tilde f_N^{(k)} = \tilde f^{(k)} + \tilde f_{N,\pi}^{(k)}\,, \qquad \tilde
f_N^{(k+1)} = \tilde f^{(k+1)} + \tilde f_{N,\pi}^{(k)} \otimes \tilde
f^{(1)}
\end{equation*}
This leads to the perturbative equation 
\begin{equation}  \label{E:73-04}
i\partial_t \tilde f_{N,\pi}^{(k)} + \nabla_{\boldsymbol{x}_k} \cdot
\nabla_{ \boldsymbol{\xi}_k} \tilde f_{N,\pi}^{(k)} = \tilde Q^{(k+1)}
(\tilde f_{N,\pi}^{(k)} \otimes \tilde f^{(1)})
\end{equation}
Since the limiting collision operator is explicitly given by \eqref{E:S444}
and the form of the local Maxwellian is explicitly given by \eqref{E:LM19d},
we can compute that the effect of the Duhamel operator of the right-side of %
\eqref{E:73-04} on the dynamics of $\tilde f_{N,\pi}^{(k)}$. Written in $(%
\boldsymbol{x}_k, \boldsymbol{v}_k)$ coordinates, the first-order Duhamel
expression is 
\begin{equation}  \label{E:73-05}
\begin{aligned} f_{N,\pi}^{(k)}(t, \boldsymbol{x}_k, \boldsymbol{v}_k) &=
e^{-t \bds v_k \cdot \nabla_{\boldsymbol{x}_k} } f_{N,\pi}^{(k)}(0) \\ &
\qquad + \int_0^t e^{-(t-t') \bds v_k \cdot \nabla_{\boldsymbol{x}_k} }
Q^{(k+1)} ( e^{-t' \bds v_k \cdot \nabla_{\boldsymbol{x}_k} }
f_{N,\pi}^{(k)}(0) \otimes f^{(1)}(t^{\prime })) \, dt^{\prime }
\end{aligned}
\end{equation}
This expression is computable since its components consist of Gaussians. We
are more interested here however in explaining the origin of fluctuations in
the dynamics that give rise to perturbations of the symmetry in coordinates
in $\tilde f_{N,\pi}^{(k)}$. Suppose that instead of substituting %
\eqref{E:LM19d} into the Duhamel term, we use \eqref{E:LM17}-\eqref{E:LM18}
for $k=1$, $\pi=I$ (before the application of averaging in Calculation \ref%
{H:2}). In the case $k=1$, $\pi=I$, \eqref{E:LM17}-\eqref{E:LM18} reduce to
the following 
\begin{equation}  \label{E:73-06}
\begin{aligned} \tilde f_{N}^{(1)}(t', x_1, \xi_1) = \frac{1}{N}
\sum_{\sigma=1}^N \epsilon^{-3/2} \chi_\sigma\left( \frac{x_1+\epsilon \xi_1
- Y_\sigma - 2t_\sigma W_\sigma}{\epsilon^{1/2}} \right) &\\
\bar\chi_\sigma\left( \frac{x_1 - \epsilon \xi_1 - Y_\sigma - 2t_\sigma
W_\sigma}{\epsilon^{1/2}} \right) e^{2i\xi_jW_\sigma} & \end{aligned}
\end{equation}
where $\mathbb{E} t_\sigma = t^{\prime }$. The process is deterministic,
however, we are interested in averages (expected values) which are more
easily extracted from a (pseudo-)random model. Since the Duhamel term in %
\eqref{E:73-05} involves a linear transport propagator, the path of the
integral in time will meet the collection of wave packets in \eqref{E:73-06}
centered at $Y_\sigma + 2t_\sigma W_\sigma$ as $\sigma$ ranges over the full
collection $\{1, \ldots, N\}$, the $j$th particle ($1\leq j \leq k$) will
undergo collisions according to a Poisson process with rate $1/\epsilon$
along its linear path. For expositional simplicity, let us assume these
collisions occur at regularly spaced times -- every $\epsilon$ unit of time.
Writing in terms of characteristics, the linear path of the $j$th particle, $%
1\leq j \leq k$, without perturbation is $x_j(t) = x_j(0) + 2t v_j(0)$, but
with the perturbation (Duhamel term), the path is perturbed. Let us assume
that the effect of each collision on $x_j(t)$ is to randomly either raise or
lower the distance of $x_j(t)$ from $x_j(0) + 2t v_j(0)$, measured
orthogonal to $v_j(0)$, by $\epsilon$. Let 
\begin{equation*}
S_n = \sum_{k=1}^n U_k
\end{equation*}
where $\{U_k\}$ is a collection of independent standard normal random
variables so $S_n$ is a random walk with Gaussian increments. Our model is 
\begin{equation*}
|x_j(t) -x_j(0) - 2tv_j(0)| = |\epsilon S_{\lfloor t/\epsilon \rfloor}|
\end{equation*}
Then 
\begin{equation*}
\func{var} |x_j(t) -x_j(0) - 2tv_j(0)| = \epsilon^2 \lfloor t/\epsilon
\rfloor \approx \epsilon t
\end{equation*}
and thus the standard deviation of these fluctuations, or effective width of
the values around a pure linear trajectory, is $\sqrt{\epsilon t}$.

Moreover, a straightforward calculation shows that the expected number of
zero crossings of $S_{n}$ is, asymptotically $\sim 2\sqrt{n}/\pi $.\footnote{%
See the answer to Question \#1338097 on \url{https://math.stackexchange.com/}%
, for the calculation.}. With $n=\lfloor t/\epsilon \rfloor $, this is $\sim
\epsilon ^{-1/2}$ over a unit time interval. Said differently, the time
steps are of size $\epsilon $, although we cross over $0$ on average every $%
\sim \epsilon ^{1/2}$ units of time.

In this model, the $j$th particle position-velocity covariance fluctuates
around the value $-2t$ with effective width $O(\sqrt{\epsilon })$, but
revisits the exact value $-2t$ every $O(\epsilon ^{1/2})$ time. Recall that
the time shifts $t_{j}$ were inserted into \eqref{E:LM05} to allow the model
to reflect deviations from $C_{j}(t)=-2t$ that could vary from one particle
to the next. Although the appeal to the law of large numbers in Calculation
2, \eqref{E:LM19} should average over the values of $t_{j}$, we need to
account for the fact that the process is \emph{dynamical}. We can interpret
the role of randomness in the particle positions $Y_{j}$ and velocities $%
W_{j}$ to mean that they are randomly selected (sampled) initially (say at
time $0$), and the collection will then evolve in time deterministically
starting from this initial, randomly selected configuration. Then, evolving
forward deterministically in time, each particle suffers collisions
according to a pseudo-random process, such as the simplified one described
above. Thus we have left the $t_{j}$'s in \eqref{E:LM19c} rather than
replace them with an expectation and offer the model above as a way to
suggest that the proper physics could be captured, at the level of particle
densities $\tilde{f}_{N}^{(k)}$, by supposing that, for most times $t$, the
time offsets sastisfy $|t_{i}-t_{j}|=O(\epsilon ^{1/2})$, but for a set of
times $t$ of negligible measure we in fact have $t_{i}=t_{j}$. Moreover,
this set of times of negligible measure is $\epsilon ^{1/2}$ dense on the
timeline. Hence, we conclude condition (\ref{energy bound:high freq H^1
goodness}), along with everything else in \S \ref{s:intro cycle regularity}.

Looking backwards into the proof of Theorem \ref{thm:main}, the above
discussion might be a reason of the emergence of time irreversibility after
everything is finally well-defined and physical. When the whole particle
system returns to its initial state (recurrence) at $t_{r}$, then, as
indication, it is a quasi free symmetry event and $t_{r}\notin
E_{\varepsilon }$ and (\ref{energy bound:high freq H^1 goodness}) does not
happen (though this is not true the other way around), but as $\varepsilon $
tends to zero, the ``jitter" set $E_{\varepsilon }$ becomes dense and the
whole time line are symmetry strengthening events,\footnote{%
In EE, jitters are phase noises in the synchronizing clock, that is exactly
the cause of the $E_{\varepsilon }$ set here. Moreover, jitters in EE indeed
match the prediction here that they never go away, and increases as particle
number increase. (This is one of the reasons for better photolithography.)
One can always observe them directly on oscilloscopes as proof.} hence no
recurrences. Thus, the quantum model (dice) has indeed helped the time
irreversibility and matches \cite[Vol. III, paper 119]{BoltzmannCollection}.
(Of course, this needs more explanation and investigation.)

\section{Proof of Optimality: Well/Ill-posedness Separation of the Limit
Equation\label{sec:illposedness}}

\begin{theorem}
\label{thm:ill-posedness}The quantum Boltzmann equation (\ref{eqn:QBEquation}%
) is locally well-posed in $H_{x}^{s}L_{v}^{2,0+}$ for $s>1$, and ill-posed
in $H_{x}^{s}L_{v}^{2,0+}$ for $s<1.$Moreover, the solution constructed in $%
H_{x}^{s}L_{v}^{2,0+}$, for $s>1$, is nonnegative and in $L_{xv}^{1}$ if the
initial datum has the property.
\end{theorem}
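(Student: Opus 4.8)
The plan is to treat the three assertions separately, all on the $(t,x,\xi)$ side where (\ref{eqn:QBEquation}) is equivalent to the Duhamel equation
\begin{equation*}
\tilde f(t) = e^{it\nabla_x\cdot\nabla_\xi}\tilde f_0 + \int_0^t e^{i(t-t')\nabla_x\cdot\nabla_\xi}\,\tilde Q(\tilde f(t'),\tilde f(t'))\,dt',
\end{equation*}
using the explicit form (\ref{E:S444}) of $\tilde Q$ and the multilinear machinery of \S\ref{S:preparation} and \S\ref{Subsec:Multilinear Estimates}.

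For local well-posedness at $s>1$, I would run a contraction mapping argument in $X_T := C([0,T];H^s_x L^{2,0+}_v)\cap L^{2+}_{[0,T]}(\text{Strichartz space})$, combining the Strichartz estimates for $i\partial_t+\nabla_\xi\cdot\nabla_x$ of \cite{KT98} with the bilinear estimate ``Well-posedness and uniqueness estimate I'' (proved inside Lemma \ref{L:10estimates}), whose $H^{1+}_x L^{2,0+}_\xi$ form sits exactly at the sharp exponent. Since $s>1$ leaves $\epsilon^{0+}$-type room above threshold, the bilinear estimate carries a positive power of $T$, so for $T$ small the Duhamel map contracts on a ball of $X_T$; uniqueness and continuous dependence follow from the same estimate applied to differences. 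The endpoint $s=1$ is deliberately excluded here and is recovered only in the intersection space $H^{1+}_x L^{2,0+}_v\cap H^{1/2+}_x L^{2,1/2+}_v$ via Corollary~\ref{Cor:Uniqueness of Boltzmann} together with the existence half. For the ill-posedness at $s<1$, I would adapt the norm-deflation construction of \cite{CH10} (see \cite{CSZ3} for full details): exhibit a one-parameter family $f_{0,\lambda}$ obtained by concentrating a fixed profile at scale $\lambda\to 0$ along the transport characteristics, with $\|f_{0,\lambda}\|_{H^s_x L^{2,s_1}_v}\to 0$ while the corresponding (explicit, implosion-type) solutions satisfy $\|f_\lambda(t)\|_{H^s_x L^{2,s_1}_v}\gtrsim 1$ for all small $t>0$, so the data-to-solution map is discontinuous at the origin. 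Because the transport semigroup admits self-similar concentration while, by Proposition~\ref{P:Q0Estimates} and Proposition~\ref{P:general-fixed-time-2}, the collision term never forces a genuine blow-up, the obstruction is a norm deflation; moreover the second moments of the family stay bounded, so adding finite energy/variance hypotheses does not help.

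For nonnegativity and $L^1$-persistence of the $s>1$ solution, I would use the gain/loss decomposition of Proposition~\ref{P:gainloss}, writing $Q(f,f)=Q^+(f,f)-\nu(f)\,f$ with collision frequency $\nu(f)(t,x,v)=\tfrac12\int_{u,\omega}|r|\,|\hat\phi(r\omega)|^2\big|_{r=\omega\cdot(v-u)}f(t,x,u)\,d\omega\,du\ge 0$, and recast the equation along characteristics with integrating factor $\exp\!\big(\int_0^t\nu(f)(t',x-(t-t')v,v)\,dt'\big)$. In this mild form the Picard iterates are manifestly nonnegative whenever $f_0\ge 0$ (since $Q^+\ge 0$ on nonnegative inputs and the loss is absorbed by the exponential); passing to the limit — using the convergence of the iteration in $H^s_x L^{2,0+}_v$, which controls $\nu(f)$ and $Q^+(f,f)$ via the bilinear estimates and the Schwartz decay of $\hat\phi$ — yields $f(t)\ge 0$. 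For $L^1$, I would propagate an $L^1_{xv}$ bound \emph{simultaneously} with the $H^s$ bound inside the fixed-point iteration: the measure-preserving collision change of variables $(v,u)\mapsto(v^*,u^*)$ gives $\int_{x,v}Q^+(f,f)=\int_{x,v}\nu(f)\,f$, hence $\tfrac{d}{dt}\|f\|_{L^1}=0$ once nonnegativity holds, and quantitatively one closes $\|f(t)\|_{L^1}\le \|f_0\|_{L^1}+C\int_0^t\|\nu(f)(t')\|_{L^\infty_x L^1_v}\|f(t')\|_{L^1}\,dt'$ with $\|\nu(f)\|_{L^\infty_x L^1_v}$ bounded by the $H^s_x L^{2,0+}_v$ norm (Sobolev in $x$ plus the kernel bound from the decay of $\hat\phi$).

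The main obstacle will be the $L^1$ step: the working space $H^s_x L^{2,0+}_v$ is only marginally better than $L^2$ in $v$ and does not embed into $L^1_v$, so $L^1$-persistence cannot be read off from the $H^s$ theory and must be built into the iteration from the start, keeping the $L^1$ and $H^s$ estimates coupled and exploiting the one-sided structure (gain $\ge 0$, loss absorbable) to avoid a sign-indefinite collision term in $L^1$. Everything else — the contraction for $s>1$ and the citation-driven norm-deflation family for $s<1$ — is comparatively routine given the bilinear estimates already established and the reference \cite{CH10}.
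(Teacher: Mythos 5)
Your well-posedness argument (contraction via the Strichartz/bilinear estimate of Lemma \ref{L:10estimates} on the $(x,\xi)$ side) and your ill-posedness argument (norm-deflation family adapted from \cite{CH10}, with the gain term satisfying better estimates than the loss term so that the deflation survives) are essentially the paper's proofs, so those two parts are fine.

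The genuine gap is in the nonnegativity/$L^{1}$ step, and it is exactly where you yourself flag the difficulty. Your proposed closure rests on the quantitative bound $\Vert \nu(f)\Vert_{L^{\infty}_{x}L^{1}_{v}}\lesssim \Vert f\Vert_{H^{s}_{x}L^{2,0+}_{v}}$, and that bound is false in the regime of the theorem: for $1<s\leq \tfrac32$ the Sobolev embedding $H^{s}_{x}\hookrightarrow L^{\infty}_{x}$ is unavailable, and even granting $L^{\infty}_{x}$ control, the loss kernel behaves like $\min(|v-u|,|v-u|^{-1})$, so $\nu(f)(x,\cdot)$ decays only like $|v|^{-1}$ and is never in $L^{1}_{v}$, while pointwise finiteness of $\nu(f)$ from $f(x,\cdot)\in L^{2,0+}_{v}$ alone already fails by Cauchy--Schwarz (the weight $\langle u\rangle^{-0-}|v-u|^{-1}$ is not square integrable in $u$). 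The same issue undermines defining the integrating factor $\exp(\int \nu)$ and the monotone Picard scheme directly at $H^{s}_{x}L^{2,0+}_{v}$. The paper circumvents this entirely: it first proves persistence of $H^{2}_{x}L^{2,2}_{v}$ regularity (Lemma \ref{L:nn01}), then nonnegativity for such smooth solutions (Lemma \ref{L:nn02}, by the argument of \cite{CDP21}) and an $L^{1}_{xv}$ bound for them (Lemma \ref{L:nn03}), where the extra $v$-weight is what makes $\Vert Q^{\pm}(f,f)\Vert_{L^{1}_{v}}\lesssim \Vert Q^{\pm}(f,f)\Vert_{L^{2,2}_{v}}=\Vert \tilde Q^{\pm}\Vert_{H^{2}_{\xi}}$ and Cauchy--Schwarz in $x$ (not Sobolev) close the estimate; for nonnegative smooth data the gain/loss cancellation then gives exact conservation of $\int f$ (Corollary \ref{C:nn04}). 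Finally, the rough-data statement is obtained by approximating $f(0)\in H^{1+}_{x}L^{2,0+}_{v}\cap L^{1}_{xv}$, $f(0)\geq 0$, by smooth nonnegative data, invoking continuity of the data-to-solution map in $H^{1+}_{x}L^{2,0+}_{v}$, extracting a pointwise a.e.\ convergent subsequence at each time, and concluding nonnegativity and the $L^{1}$ bound by Fatou's lemma --- no propagation of $L^{1}$ at low regularity is ever attempted, only lower semicontinuity. If you replace your coupled Gr\"onwall iteration by this ``regularize, conserve, pass to the limit'' scheme, the rest of your outline goes through.
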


\subsection{Well-posedness}

We prove a $C([0,T];H_{x}^{1+}H_{\xi }^{0+})$ local well-posedness theory
for (\ref{eqn:QBEquation}) on the $\left( x,\xi \right) $ side which is a $%
C([0,T];H_{x}^{1+}L_{v}^{2,0+})$ theory for (\ref{eqn:QBEquation}) on the $%
\left( x,v\right) $ side. That is, we construct a unique solution to (\ref%
{eqn:QBEquation}) in the format of 
\begin{equation*}
i\partial _{t}\tilde{f}+\nabla _{x}\cdot \nabla _{\xi }\tilde{f}=\sum_{\pm
}\pm \tilde{Q}^{\pm }(\tilde{f},\tilde{f})
\end{equation*}%
in the space $C([-T,T];H_{x}^{1+}H_{\xi }^{0+})$ on a time interval whose
length depends on the size of $\Vert \tilde{f}(0)\Vert _{H_{x}^{1+}H_{\xi
}^{0+}}$.

\begin{lemma}
\label{L:10estimates} For given $\tilde{g}$, $\tilde{h}$, consider $\tilde{f}
$ solving 
\begin{equation*}
i\partial _{t}\tilde{f}+\nabla _{x}\cdot \nabla _{\xi }\tilde{f}=\sum_{\pm
}\pm \tilde{Q}^{\pm }(\tilde{g},\tilde{h})
\end{equation*}%
with initial condition $\tilde{f}(0)$. Then 
\begin{align}
\hspace{0.3in}& \hspace{-0.3in}\Vert \langle \nabla _{x}\rangle ^{1+}\langle
\nabla _{\xi }\rangle ^{0+}\tilde{f}\Vert _{C([-T,T];L_{x\xi }^{2})\cap
L_{(-T,T)}^{2+}L_{x\xi }^{3-}}\lesssim \Vert \langle \nabla _{x}\rangle
^{1+}\langle \nabla _{\xi }\rangle ^{0+}\tilde{f}(0)\Vert _{L_{x\xi }^{2}}
\label{estimate:bilinear well-posedness} \\
& +T^{1/2+}\Vert \langle \nabla _{x}\rangle ^{1+}\langle \nabla _{\xi
}\rangle ^{0+}\tilde{g}(t,x,\xi )\Vert _{L_{(-T,T)}^{\infty }L_{x\xi
}^{2}}\Vert \langle \nabla _{x}\rangle ^{1+}\langle \nabla _{\xi }\rangle
^{0+}\tilde{h}(t,x,\xi )\Vert _{L_{(-T,T)}^{2+}L_{x\xi }^{3-}}  \notag
\end{align}%
More precisely, given a choice of $\delta >0$ in the operator $\langle
\nabla _{x}\rangle ^{1+\delta }\langle \nabla _{\xi _{1}}\rangle ^{\delta }$
on the left side, it is possible to select $\delta ^{\prime }>0$, $\delta
^{\prime \prime }>0$ so that the estimate holds with every instance of $%
L_{(-T,T)}^{2+}L_{x\xi }^{3-}$ taken to be $L_{(-T,T)}^{2+\delta ^{\prime
\prime }}L_{x\xi }^{3-\delta ^{\prime }}$ and every instance of the operator 
$\langle \nabla _{x}\rangle ^{1+}\langle \nabla _{\xi _{1}}\rangle ^{0+}$ on
the right side is $\langle \nabla _{x}\rangle ^{1+\delta }\langle \nabla
_{\xi _{1}}\rangle ^{\delta }$ (exactly the same $\delta >0$ as on the left
side). Moreover the pair $(2+\delta ^{\prime \prime },3-\delta ^{\prime })$
is Strichartz admissible.
\end{lemma}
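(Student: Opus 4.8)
The plan is to run a Duhamel/contraction scheme for $i\partial_t\tilde f+\nabla_x\cdot\nabla_\xi\tilde f=\sum_\pm\pm\tilde Q^\pm(\tilde g,\tilde h)$, whose only nontrivial inputs are Strichartz estimates for the linear propagator $U(t)=e^{it\nabla_x\cdot\nabla_\xi}$ and a fixed-time bilinear bound on $\tilde Q^\pm$. First I would write $\tilde f(t)=U(t)\tilde f(0)+\int_0^t U(t-t')\sum_\pm\pm\tilde Q^\pm(\tilde g(t'),\tilde h(t'))\,dt'$. Taking the Fourier transform in both $x\mapsto\eta$ and $\xi\mapsto v$, $U(t)$ is multiplication by $e^{it\eta\cdot v}$, an ultrahyperbolic Schr\"odinger flow on $\mathbb{R}^{6}$ whose kernel has the usual $|t|^{-3}$ dispersive decay; hence by \cite{KT98} $U$ satisfies the Strichartz estimates for every non-endpoint admissible pair, and since $\tfrac{2}{2+}+\tfrac{6}{3-}=3$ is such a pair we get $\|U(t)F\|_{L^{2+}_{(-T,T)}L^{3-}_{x\xi}}\lesssim\|F\|_{L^2_{x\xi}}$ together with, via Minkowski for the energy piece and Christ--Kiselev for the Strichartz piece (applicable since $2+>1$), the retarded bound $\|\int_0^t U(t-t')F(t')\,dt'\|_{C([-T,T];L^2_{x\xi})\cap L^{2+}_{(-T,T)}L^{3-}_{x\xi}}\lesssim\|F\|_{L^1_{(-T,T)}L^2_{x\xi}}$. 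As $\langle\nabla_x\rangle^{1+}$ and $\langle\nabla_\xi\rangle^{0+}$ are Fourier multipliers commuting with $U(t)$, and $U$ is unitary on $L^2_{x\xi}$, the homogeneous term contributes exactly $\|\langle\nabla_x\rangle^{1+}\langle\nabla_\xi\rangle^{0+}\tilde f(0)\|_{L^2_{x\xi}}$, and the estimate reduces to controlling $\int_{-T}^T\|\langle\nabla_x\rangle^{1+}\langle\nabla_\xi\rangle^{0+}\tilde Q^\pm(\tilde g(t),\tilde h(t))\|_{L^2_{x\xi}}\,dt$.

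The heart of the matter is thus the fixed-time estimate $\|\langle\nabla_x\rangle^{1+}\langle\nabla_\xi\rangle^{0+}\tilde Q^\pm(\tilde g,\tilde h)\|_{L^2_{x\xi}}\lesssim\|\langle\nabla_x\rangle^{1+}\langle\nabla_\xi\rangle^{0+}\tilde g\|_{L^2_{x\xi}}\|\langle\nabla_x\rangle^{1+}\langle\nabla_\xi\rangle^{0+}\tilde h\|_{L^{3-}_{x\xi}}$, after which H\"older in $t$ (placing $\tilde g$ in $L^\infty_t$ and $\tilde h$ in $L^{2+}_t$) yields the prefactor $T^{(1+)/(2+)}=T^{1/2+}$ and the claim. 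To prove the bilinear bound I would use the explicit form \eqref{E:S444}: the variable $x$ enters only as a spectator common to both factors, so $\langle\nabla_x\rangle^{1+}$ acts on a product in $x$ and is distributed by the fractional Leibniz (Kato--Ponce) rule into a term $\|\langle\nabla_x\rangle^{1+}\tilde g\|_{L^2_x}\|\tilde h\|_{L^\infty_x}$ --- admissible because $W^{1+,3-}(\mathbb{R}^{3})\hookrightarrow L^\infty$ whenever $(1+)(3-)>3$ --- and a term $\|\tilde g\|_{L^{6+}_x}\|\langle\nabla_x\rangle^{1+}\tilde h\|_{L^{3-}_x}$, using $H^{1+}(\mathbb{R}^{3})\hookrightarrow L^{6+}$. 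The $\langle\nabla_\xi\rangle^{0+}$ I would move past $\tilde Q^\pm$ onto the $\tilde g$-factor exactly as in the proof of Proposition~\ref{P:general-fixed-time-2} (Littlewood--Paley decomposition, absorbing stray $\zeta$-powers into $\hat\phi$). What remains is the $\int_{s=0}^\infty\int_\zeta|\hat\phi(\zeta)|^2(\cdots)\,d\zeta\,ds$: rescaling $\zeta\mapsto\zeta/s$ extracts a factor $s^{-1+}$ for $s\le1$ (H\"older in $\zeta$ with $\hat\phi\in L^{3-}$) and $s^{-1-}$ for $s\ge1$ (with $\hat\phi\in L^{3+}$), both integrable, while the residual $L^3_\zeta$ norm of the $\tilde h$-factor is converted to $L^{3-}_\zeta$ by Sobolev embedding at the exact cost of the $\langle\nabla_\xi\rangle^{0+}$ already carried on $\tilde h$. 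Applying this same fixed-time estimate with $\tilde g,\tilde h$ replaced by linear solutions, and using unitarity on one factor and the $L^{2+}_tL^{3-}$ Strichartz bound on the other, yields the auxiliary ``Well-posedness and uniqueness estimate~I'' stated earlier.

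For the more precise version, every ``$+$'' and ``$-$'' above --- the Kato--Ponce H\"older exponents, the two Sobolev embeddings, the $L^{3\mp}$ norms on $\hat\phi$, the $\xi$-Sobolev correction, and the Strichartz admissibility of the space-time pair --- is governed by a strict open inequality; so given the target $\delta>0$ in $\langle\nabla_x\rangle^{1+\delta}\langle\nabla_{\xi_1}\rangle^{\delta}$ on the left, I would choose $\delta'>0$ small enough that $W^{1+\delta,3-\delta'}(\mathbb{R}^{3})\hookrightarrow L^\infty$ (i.e. $\delta'<\tfrac{3\delta}{1+\delta}$) and all the other open conditions hold, and then set $\delta''=\tfrac{4\delta'}{3-3\delta'}>0$ so that $(2+\delta'',3-\delta')$ lies on the admissibility curve $\tfrac{2}{2+\delta''}+\tfrac{6}{3-\delta'}=3$, with the same $\delta$ propagating to both sides. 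The main obstacle I anticipate is the fixed-time bilinear estimate, and within it the tension between needing the full $1+$ of $x$-derivatives to survive the product structure of $\tilde Q^\pm$ (which forces the $\tilde h$-factor into $L^\infty_x$ and hence the $W^{1+,3-}\hookrightarrow L^\infty$ embedding, sharply three-dimensional) and the genuinely borderline $s$-integral, which leaves no slack in either the $\hat\phi$-integrability or the $\langle\nabla_\xi\rangle^{0+}$ allowance --- this tight simultaneous matching is the source of the optimality claimed for the well-posedness threshold.
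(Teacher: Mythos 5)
Your proposal is correct and follows essentially the same route as the paper: Strichartz for the ultrahyperbolic propagator $e^{it\nabla_x\cdot\nabla_\xi}$ reduces the problem to a fixed-time bilinear bound on $\tilde Q^{\pm}$, which is then obtained from the mechanism of Proposition \ref{P:general-fixed-time-2} (the rescaled $s$-integral with $\hat\phi\in L^{3\pm}$ and the $L_\xi^{3+}\to L_\xi^{3-}$ Sobolev correction) combined with fractional Leibniz in $x$ and the embeddings $W^{1+,3-}\hookrightarrow L^\infty_x$, $H^{1+}_x\hookrightarrow L^{6+}_x$, followed by H\"older in $t$ to produce $T^{1/2+}$. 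The only cosmetic difference is the order of operations (you distribute the $x$-derivatives before running the $\xi$-analysis, the paper does the reverse), which changes nothing.
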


\begin{proof}
The Duhamel form is 
\begin{equation*}
\tilde f(t,x,\xi) = \sum_{\pm} \pm \int_0^t e^{i(t-t^{\prime })
\nabla_x\cdot \nabla_\xi} \tilde Q^{\pm}(\tilde g(t^{\prime }), \tilde
h(t^{\prime })) \, dt^{\prime }
\end{equation*}
By the Strichartz estimate, 
\begin{align*}
\| \langle \nabla_x \rangle^{1+} \langle \nabla_\xi \rangle^{0+} \tilde f
\|_{ C([-T,T]; L_{x\xi}^2) \cap L_{(-T,T)}^{2+} L_{x\xi}^{3-} } &\lesssim \|
\langle \nabla_x \rangle^{1+} \langle \nabla_\xi \rangle^{0+} \tilde f(0)
\|_{L_{x\xi}^2} \\
&\qquad + \sum_\pm \| \langle \nabla_x \rangle^{1+} \langle \nabla_\xi
\rangle^{0+} \tilde Q^{\pm}(\tilde g, \tilde h) \|_{L_{(-T,T)}^1 L_{x\xi}^2}
\end{align*}
By Proposition \ref{P:general-fixed-time-2}, 
\begin{equation*}
\lesssim T^{1/2+} \| \langle \nabla_{x_1}\rangle^{1+} \langle \nabla_{\xi_1}
\rangle^{0+} \tilde g( t, x_1, \xi_1)\tilde h(t,x_1, \xi_2)
\|_{L^{2+}_{(-T,T)} L_{x_1}^2 (L_{\xi_2}^{3+}\cap L_{\xi_2}^{3-})L_{\xi_1}^2}
\end{equation*}
Recall that from the proof of Proposition \ref{P:general-fixed-time-2}, we
have the flexibility to use $L_{\xi_2}^{3+\omega}\cap
L_{\xi_2}^{3-\omega^{\prime }}$ for any $\omega>0$ and $\omega^{\prime }>0$
arbitrarily small (as long as they are both strictly positive). By the
fractional Leibniz rule in $x$, 
\begin{align*}
\lesssim & T^{1/2+}\| \langle \nabla_x \rangle^{1+} \langle \nabla_\xi
\rangle^{0+} \tilde g(t,x,\xi) \|_{ L_{(-T,T)}^\infty L_x^2 L_\xi^2 } \|
\tilde h(t,x,\xi) \|_{ L_{(-T,T)}^{2+} L_x^\infty (L_\xi^{3+}\cap
L_\xi^{3-}) } \\
& \quad + T^{1/2+} \| \langle \nabla_\xi \rangle^{0+} \tilde g(t,x,\xi) \|_{
L_{(-T,T)}^\infty L_x^{6+} L_\xi^2 } \| \langle \nabla_x \rangle^{1+} \tilde
h(t,x,\xi) \|_{ L_{(-T,T)}^{2+} L_x^{3-} (L_\xi^{3+}\cap L_\xi^{3-}) }
\end{align*}
For the two terms $\| \tilde h(t,x,\xi) \|_{ L_{(-T,T)}^{2+} L_x^\infty
(L_\xi^{3+}\cap L_\xi^{3-}) }$ and $\| \langle \nabla_\xi \rangle^{0+}
\tilde g(t,x,\xi) \|_{ L_{(-T,T)}^\infty L_x^{6+} L_\xi^2 }$, we bring the $%
x $-norm to the inside via Minkowski's integral inequality, and then apply
Sobolev in $x$: 
\begin{align*}
\lesssim & T^{1/2+}\| \langle \nabla_x \rangle^{1+} \langle \nabla_\xi
\rangle^{0+} \tilde g(t,x,\xi) \|_{ L_{(-T,T)}^\infty L_x^2 L_\xi^2 } \|
\langle \nabla_x \rangle^{1+} \tilde h(t,x,\xi) \|_{ L_{(-T,T)}^{2+}
(L_\xi^{3+}\cap L_\xi^{3-}) L_x^{3-} } \\
& \quad + T^{1/2+} \| \langle \nabla_x \rangle^{1+} \langle \nabla_\xi
\rangle^{0+} \tilde g(t,x,\xi) \|_{ L_{(-T,T)}^\infty L_\xi^2 L_x^2 } \|
\langle \nabla_x \rangle^{1+} \tilde h(t,x,\xi) \|_{ L_{(-T,T)}^{2+}
L_x^{3-} (L_\xi^{3+}\cap L_\xi^{3-}) }
\end{align*}
In the argument above, the H\"older exponent of $L_x^{3-}$ is chosen to
match exactly the H\"older exponent of $L_\xi^{3-}$. The $L_\xi^{3+}$ norms
are converted to the same $L_\xi^{3-}$ at the expense of $\langle \nabla_\xi
\rangle^{0+}$ via Sobolev. 
\begin{align*}
\lesssim & T^{1/2+}\| \langle \nabla_x \rangle^{1+} \langle \nabla_\xi
\rangle^{0+} \tilde g(t,x,\xi) \|_{ L_{(-T,T)}^\infty L_x^2 L_\xi^2 } \|
\langle \nabla_x \rangle^{1+} \langle \nabla_\xi \rangle^{0+} \tilde
h(t,x,\xi) \|_{ L_{(-T,T)}^{2+} L_{x\xi}^{3-} } \\
& \quad + T^{1/2+} \| \langle \nabla_x \rangle^{1+} \langle \nabla_\xi
\rangle^{0+} \tilde g(t,x,\xi) \|_{ L_{(-T,T)}^\infty L_\xi^2 L_x^2 } \|
\langle \nabla_x \rangle^{1+} \langle \nabla_\xi \rangle^{0+} \tilde
h(t,x,\xi) \|_{ L_{(-T,T)}^{2+} L_{x\xi}^{3-} }
\end{align*}
\end{proof}

Local well-posedness, namely, existence, uniqueness, and uniform continuity
of the datum to solution map, now follows from Lemma \ref{L:10estimates} by
the standard contraction argument. The solution we constructed is also a
strong solution as it is in $C([0,T];H_{x}^{1+}L_{v}^{2,0+})$ and is
nonnegative and in $L_{xv}^{1}$ if the initial datum has the property as we
will prove in \S \ref{sec:proof of L1}. However, it only solves (\ref%
{eqn:QBEquation}) almost everywhere in time in the sense that the
nonlinearity is defined a.e. in time. (An additional $H_{x}^{\frac{1}{2}%
+}L_{v}^{2,\frac{1}{2}+}$ condition will make the solution an everywhere in
time solution.)

\subsubsection{Nonnegativity and persistence of $L_{xv}^{1}\label{sec:proof
of L1}$}

\begin{lemma}[persistence of $H_x^2H_\protect\xi^2$]
\label{L:nn01} Suppose that the initial condition $\tilde f \in H_x^2H_\xi^2$%
. Then the unique solution constructed above in $C([-T,T];
H_x^{1+}H_\xi^{0+})$, where $T>0$ depends on the size of $\| \tilde f(0)
\|_{H_x^{1+}H_\xi^{0+}}$, in fact belongs also to $C([-T,T]; H_x^2 H_\xi^2)$
and this norm is controlled by the corresponding norm of the initial
condition.
\end{lemma}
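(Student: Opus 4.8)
The plan is to mirror the proof of Lemma \ref{L:10estimates}, but run the fixed-point argument at a higher regularity level, namely in the space $C([-T,T];H_x^2H_\xi^2)$, and then invoke uniqueness to identify the resulting solution with the one already constructed. The key point is that the bilinear estimate of Proposition \ref{P:general-fixed-time-2} (specialized through Lemma \ref{L:10estimates}) allows the derivative operators $L$ in $\boldsymbol{x}$ and in $(\xi_1,\dots,\xi_{i-1},\xi_{i+1},\dots,\xi_k)$ to pass directly onto the bilinear arguments, and moreover the $\langle\nabla_{\xi_i}\rangle^r$ weight can be carried through for any $0\le r\le 1$. So I would first record the following upgraded version of \eqref{estimate:bilinear well-posedness}: for $\tilde f$ solving $i\partial_t\tilde f+\nabla_x\cdot\nabla_\xi\tilde f=\sum_\pm\pm\tilde Q^\pm(\tilde g,\tilde h)$,
\begin{align*}
\|\langle\nabla_x\rangle^2\langle\nabla_\xi\rangle^2\tilde f\|_{C([-T,T];L_{x\xi}^2)\cap L_{(-T,T)}^{2+}L_{x\xi}^{3-}}
&\lesssim \|\langle\nabla_x\rangle^2\langle\nabla_\xi\rangle^2\tilde f(0)\|_{L_{x\xi}^2}\\
&\quad + T^{1/2+}\,\|\langle\nabla_x\rangle^2\langle\nabla_\xi\rangle^2\tilde g\|_{L_{(-T,T)}^\infty L_{x\xi}^2}\,\|\langle\nabla_x\rangle^2\langle\nabla_\xi\rangle^2\tilde h\|_{L_{(-T,T)}^{2+}L_{x\xi}^{3-}}.
\end{align*}
This is proved exactly as Lemma \ref{L:10estimates}: apply Strichartz for $i\partial_t+\nabla_x\cdot\nabla_\xi$, then Proposition \ref{P:general-fixed-time-2} (now with $r=2$ handled by treating the two $\xi_1$ derivatives beyond $r=1$ as in the $r=1$ induction step of that proof — each extra $\nabla_{\xi_1}$ either lands on $\tilde g$ or on $\tilde h$ or generates a harmless $\zeta$-factor absorbed by $\hat\phi$), then the fractional Leibniz rule in $x$ distributing $\langle\nabla_x\rangle^2$, then Minkowski and Sobolev in $x$ to convert $L_x^{6+},L_x^\infty$ norms back to $L_x^2,L_x^{3-}$.

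Next I would set up the contraction. Fix $M=\|\tilde f(0)\|_{H_x^2H_\xi^2}$ and let $T'>0$ (possibly smaller than the $T$ from Lemma \ref{L:10estimates}, and depending only on $M$ and on $\|\tilde f(0)\|_{H_x^{1+}H_\xi^{0+}}\le CM$) be chosen so that the map $\tilde f\mapsto \sum_\pm\pm\int_0^t e^{i(t-t')\nabla_x\cdot\nabla_\xi}\tilde Q^\pm(\tilde f(t'),\tilde f(t'))\,dt'$ is a contraction on the ball of radius $2M$ in $C([-T',T'];H_x^2H_\xi^2)\cap L_{(-T',T')}^{2+}H_x^2(L_\xi^2\cap W_\xi^{2,3-})$; the contraction factor is $\lesssim T'^{1/2+}M$ by the displayed estimate and its difference version. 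This produces a solution $\tilde f_\sharp\in C([-T',T'];H_x^2H_\xi^2)$ with $\|\tilde f_\sharp\|_{C([-T',T'];H_x^2H_\xi^2)}\le 2M$. Since $H_x^2H_\xi^2\hookrightarrow H_x^{1+}H_\xi^{0+}$, this $\tilde f_\sharp$ also solves the same integral equation in $C([-T',T'];H_x^{1+}H_\xi^{0+})$, and by the uniqueness half of Lemma \ref{L:10estimates}'s fixed-point argument on $[-T',T']$ it coincides there with the already-constructed solution $\tilde f$. Finally, a standard continuation/bootstrap argument extends the $H_x^2H_\xi^2$ bound from $[-T',T']$ to the full interval $[-T,T]$ on which $\tilde f$ is defined: the $H_x^2H_\xi^2$ norm can only blow up if the $H_x^{1+}H_\xi^{0+}$ norm does (because the local existence time $T'$ at higher regularity depends only on the lower-regularity norm, not on $M$), and the latter stays finite on $[-T,T]$ by hypothesis. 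This gives $\tilde f\in C([-T,T];H_x^2H_\xi^2)$ with the norm controlled by $\|\tilde f(0)\|_{H_x^2H_\xi^2}$ and $\|\tilde f(0)\|_{H_x^{1+}H_\xi^{0+}}$, as claimed.

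The main obstacle I anticipate is verifying that the fixed-time bilinear estimate of Proposition \ref{P:general-fixed-time-2} genuinely extends to $r=2$ (i.e. two full $\xi$-derivatives) rather than just $0\le r\le1$ as stated. The proof of that proposition handled $r=1$ by the identity $\nabla_{\xi_1}\tilde U_s\tilde g(\xi_1)=\int_y|\hat\phi(y)|^2e^{i\xi_1y}(iy+\nabla_{\xi_1})\tilde g(\xi_1-sy,sy)\,dy$, splitting into a term with $y|\hat\phi(y)|^2$ in place of $|\hat\phi(y)|^2$ and a term with $\nabla_{\xi_1}\tilde g$ in place of $\tilde g$; iterating this once more for the second derivative produces terms with $|y|^2|\hat\phi(y)|^2$, with $y|\hat\phi(y)|^2$ paired against $\nabla_{\xi_1}\tilde g$, and with $\nabla_{\xi_1}^2\tilde g$, each of which is again of the $r=0$ type (the extra polynomial factors in $y$ are absorbed by $\hat\phi\in\mathcal S$ when doing the H\"older-in-$y$ split in the table). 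So the extension is routine but must be spelled out; once it is in hand, everything else is a mechanical repetition of the $H_x^{1+}H_\xi^{0+}$ theory at the $H_x^2H_\xi^2$ level together with the persistence-of-regularity continuation argument, neither of which presents any genuine difficulty.
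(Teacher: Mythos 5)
Your overall strategy is the same as the paper's (the paper's proof is essentially the one-line remark that the estimates of Lemma \ref{L:10estimates} close again after derivatives are added), and most of your write-up is fine: the extension of Proposition \ref{P:general-fixed-time-2} to integer $r=2$ by differentiating under the integral and absorbing the extra powers of $y$ into $\hat\phi$ is indeed routine, and the contraction at the $H_x^2H_\xi^2$ level plus identification with the low-regularity solution by uniqueness is unobjectionable.

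The soft spot is the continuation step, and as written it does not follow from what you proved. Your displayed higher-order bilinear estimate puts the full $\langle\nabla_x\rangle^2\langle\nabla_\xi\rangle^2$ on \emph{both} factors, so the contraction time $T'$ it yields depends on the $H_x^2H_\xi^2$ size of the data; your parenthetical claim that ``the local existence time at higher regularity depends only on the lower-regularity norm'' is therefore inconsistent with your own estimate, and with only a bound that is quadratic in the high norm one cannot rule out that the $H_x^2H_\xi^2$ norm blows up strictly before the time $T$ furnished by the $H_x^{1+}H_\xi^{0+}$ theory (iterating local steps whose length shrinks like the inverse square of the current high norm need not exhaust $[-T,T]$). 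To get persistence on the full interval you need the \emph{tame} form of the estimate, in which the top norm appears linearly with coefficients controlled by lower norms. This is available here, but it uses structure you did not invoke: in $\tilde Q^{\pm}(\tilde g,\tilde h)$ the output $\xi$-derivatives fall only on the first slot (plus harmless $\zeta$-factors absorbed by $\hat\phi$), so $\langle\nabla_\xi\rangle^2$ never lands on $\tilde h$, and Kato--Ponce in $x$ produces only the two extreme terms. Even so, the cross term with $\langle\nabla_\xi\rangle^2$ on $\tilde g$ and $\langle\nabla_x\rangle^2$ on $\tilde h$ is not bounded by (high)$\times$(low) directly (a two-frequency-atom example shows $\|f\|_{L_x^2H_\xi^2}\|f\|_{H_x^2L_\xi^2}\not\lesssim\|f\|_{H_x^2H_\xi^2}\|f\|_{H_x^{1+}H_\xi^{0+}}$), so one should run a short ladder: first propagate $H_x^2H_\xi^{0+}$ on $[-T,T]$ (there every Leibniz term is (low)$\times$(top) because the $\xi$-regularity matches the low norm), and then propagate $H_x^2H_\xi^2$, where the problematic cross term is bounded by an intermediate norm times the already-controlled $H_x^2H_\xi^{0+}$ Strichartz quantity, hence is again linear in the top norm and Gronwall/absorption closes on all of $[-T,T]$. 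With that adjustment your argument gives exactly the statement of the lemma.
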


\begin{proof}
This follows using the same estimates after derivatives are added to the
equation.
\end{proof}

\begin{lemma}[nonnegativity of high regularity solutions]
\label{L:nn02} If $f(0)\in H_x^2 L_v^{2,2}$ and $f(0) \geq 0$ pointwise,
then the corresponding solution satisfies $f(t) \geq 0$ pointwise for all $t$%
.
\end{lemma}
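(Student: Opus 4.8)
The plan is to rewrite the evolution so that the (manifestly nonnegative) loss term enters as an integrating factor, and then to run a Picard scheme that visibly preserves the cone of nonnegative functions. Using the gain/loss decomposition of Proposition \ref{P:gainloss} (for $k=1$), write $Q(f,f)=Q^{+}(f,f)-f\,\nu[f]$, where the \emph{collision frequency} is
\begin{equation*}
\nu[f](t,x,v)\defeq \tfrac12\int_{\mathbb{R}^{3}}\int_{\omega\in S^{2}}|r|\,|\hat\phi(r\omega)|^{2}\Big|_{r=\omega\cdot(v-u)}\,f(t,x,u)\,d\omega\,du .
\end{equation*}
For any nonnegative potential $a\geq 0$ and source $b$, the unique solution of $(\partial_{t}+v\cdot\nabla_{x})u+a\,u=b$ with datum $u_{0}$ is
\begin{equation*}
u(t,x,v)=e^{-\int_{0}^{t}a(\tau,x-(t-\tau)v,v)\,d\tau}\,u_{0}(x-tv,v)+\int_{0}^{t}e^{-\int_{s}^{t}a(\tau,x-(t-\tau)v,v)\,d\tau}\,b(s,x-(t-s)v,v)\,ds .
\end{equation*}
The point is that if $a\geq 0$, $b\geq 0$ and $u_{0}\geq 0$ pointwise, then $u\geq 0$ pointwise, since each exponential lies in $(0,1]$. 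The argument below is carried out on the forward existence interval $[0,T]$, which is the direction relevant to Theorem \ref{thm:main}.

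The key a priori fact is that $\nu[f]$ is a bounded, and (when $f\geq 0$) nonnegative, potential controlled by the data space. Since $\phi$ is Schwartz, $|\hat\phi(\zeta)|\lesssim_{M}\langle\zeta\rangle^{-M}$, and computing the angular integral in spherical coordinates about $w/|w|$ gives $\int_{S^{2}}|\omega\cdot w|\,|\hat\phi((\omega\cdot w)\omega)|^{2}\,dS_{\omega}\lesssim \langle w\rangle^{-1}\lesssim 1$; hence $|\nu[f](t,x,v)|\lesssim \|f(t,x,\cdot)\|_{L_{v}^{1}}$ uniformly in $v$. By Cauchy--Schwarz, $\|f(t,x,\cdot)\|_{L^{1}_{v}}\lesssim \|\langle v\rangle^{2}f(t,x,\cdot)\|_{L^{2}_{v}}\,\|\langle v\rangle^{-2}\|_{L^{2}_{v}}\lesssim \|f(t,x,\cdot)\|_{L^{2,2}_{v}}$, and then by the Sobolev embedding $H^{2}_{x}\hookrightarrow L^{\infty}_{x}$ in three dimensions,
\begin{equation*}
\|\nu[f]\|_{L^{\infty}_{t,x,v}}\lesssim \sup_{t}\|f(t)\|_{H^{2}_{x}L^{2,2}_{v}},\qquad \nu[f]\geq 0\ \text{whenever}\ f\geq 0 .
\end{equation*}
Likewise $Q^{+}(f,f)\geq 0$ whenever $f\geq 0$, since by Proposition \ref{P:gainloss} its integrand $|r|\,|\hat\phi(r\omega)|^{2}f(v^{\ast})f(u^{\ast})$ is then nonnegative. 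We also record that the free transport $S(t)g(x,v)=g(x-tv,v)$ is an isometry of $H^{2}_{x}L^{2,2}_{v}$, because it multiplies $\hat g(\eta,v)$ by a unimodular factor and acts trivially on the weights $\langle\eta\rangle,\langle v\rangle$.

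I would then define the iteration $f_{0}(t,x,v)\defeq f_{\mathrm{in}}(x-tv,v)$ and, given $f_{n}$, let $f_{n+1}$ solve the \emph{linear} problem $(\partial_{t}+v\cdot\nabla_{x})f_{n+1}+\nu[f_{n}]\,f_{n+1}=Q^{+}(f_{n},f_{n})$ with datum $f_{\mathrm{in}}$, i.e.\ the explicit integrating-factor formula above with $a=\nu[f_{n}]$, $b=Q^{+}(f_{n},f_{n})$. By induction $f_{n}\geq 0$ for all $n$: the exponentials lie in $(0,1]$ because $\nu[f_{n}]\geq 0$, while $f_{\mathrm{in}}\geq 0$ and $Q^{+}(f_{n},f_{n})\geq 0$. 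It remains to show that $\{f_{n}\}$ converges, on a short interval $[0,T']$ with $T'=T'(\|f_{\mathrm{in}}\|_{H^{2}_{x}L^{2,2}_{v}})$, to the solution $f\in C([0,T];H^{2}_{x}L^{2,2}_{v})$ furnished by Lemma \ref{L:nn01}. This follows from a contraction estimate in $C([0,T'];H^{2}_{x}L^{2,2}_{v})$: the gain term obeys a bilinear bound $\|Q^{+}(f,g)\|_{H^{2}_{x}L^{2,2}_{v}}\lesssim \|f\|_{H^{2}_{x}L^{2,2}_{v}}\|g\|_{H^{2}_{x}L^{2,2}_{v}}$ by the method of Proposition \ref{P:general-fixed-time-2} (the gain term being the more benign of the two, and the argument there never using cancellation between $Q^{+}$ and $Q^{-}$); the exponential factor is Lipschitz, $|e^{-\int_{0}^{t}\nu[f]}-e^{-\int_{0}^{t}\nu[g]}|\leq \int_{0}^{t}|\nu[f-g]|\,d\tau$, via $|e^{-\alpha}-e^{-\beta}|\leq|\alpha-\beta|$ for $\alpha,\beta\geq 0$ together with the $L^{\infty}$ bound on $\nu$; and the products $\nu[f]\,h$ are estimated in $H^{2}_{x}L^{2,2}_{v}$ by distributing derivatives, using the same weighted Cauchy--Schwarz and Sobolev inputs for the factors that land on $\nu$. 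Each contributes a positive power of $T'$, so $f_{n}\mapsto f_{n+1}$ is a contraction for $T'$ small; its fixed point solves $(\partial_{t}+v\cdot\nabla_{x})f_{\ast}=Q^{+}(f_{\ast},f_{\ast})-\nu[f_{\ast}]f_{\ast}=Q(f_{\ast},f_{\ast})$ with datum $f_{\mathrm{in}}$, hence equals $f$ by the uniqueness of Lemma \ref{L:10estimates}. Since $H^{2}_{x}L^{2,2}_{v}\hookrightarrow L^{2}_{x,v}$ and the nonnegative cone is closed in $L^{2}_{x,v}$, the limit satisfies $f(t)\geq 0$ for $t\in[0,T']$; repeating the argument on $[T',2T'],\dots$ from the nonnegative value $f(T')$ propagates nonnegativity to all of $[0,T]$.

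The main obstacle is the convergence of this modified scheme, i.e.\ verifying the bilinear estimate for the gain term and the product estimates $\nu[f]\,h$ in the \emph{weighted} space $H^{2}_{x}L^{2,2}_{v}$: Proposition \ref{P:general-fixed-time-2} is phrased for $Q_{i,k+1}$ with arbitrary $\boldsymbol{x}_{k}$-derivatives, and one must thread the velocity weight $\langle v\rangle^{2}$ (equivalently $\langle\nabla_{\xi}\rangle^{2}$ on the $\sim$ side) through that argument and specialize it to the gain term alone. This is routine but not entirely automatic. Everything else — the integrating-factor representation, the induction giving $f_{n}\geq 0$, the closedness of the nonnegative cone, and the bootstrap over $[0,T]$ — is soft.
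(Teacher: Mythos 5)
Your argument is correct and is essentially the argument the paper relies on: the paper's proof of this lemma is a one-line deferral to \cite{CDP21}, whose positivity argument is exactly this gain/loss splitting with the loss term absorbed as a nonnegative integrating factor along characteristics, followed by a Picard iteration in which each iterate is manifestly nonnegative and which contracts in the high-regularity space, so that the limit coincides with the solution of Lemma \ref{L:nn01} by uniqueness. The one step you flag as needing care --- the weighted bilinear bound for $Q^{+}$ and the product bound for $\nu[f]\,h$ in $H^{2}_{x}L^{2,2}_{v}$ --- is of the same type as the estimate the paper carries out in Lemma \ref{L:nn03} (differentiating under the integral in $\xi$ and absorbing the resulting powers of $\zeta$ into $\hat\phi$), so it is indeed routine.
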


\begin{proof}
This one follows from the same argument in \cite{CDP21}.
\end{proof}

\begin{lemma}[$L^1_{xv}$ bounds of high regularity solutions]
\label{L:nn03} If $f(0) \in H_x^2 L_v^{2,2} \cap L_{x,v}^1$, then the
corresponding solution satisfies $f\in L_{(-T,T)}^\infty L_{x,v}^1$.
\end{lemma}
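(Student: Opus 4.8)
The plan is to close the $L^1_{x,v}$ bound via the Duhamel formulation on the $(x,v)$ side, using that free transport acts isometrically on $L^1_{x,v}$ and that, on the set $\{f\ge 0\}\cap H^2_xL_v^{2,2}$, the Boltzmann collision operator maps boundedly into $L^1_{x,v}$. First I would invoke Lemma \ref{L:nn01} to upgrade the solution to $f\in C([-T,T];H^2_xL_v^{2,2})$ with $\sup_{|t|\le T}\|f(t)\|_{H^2_xL_v^{2,2}}\le C'$ depending only on $\|f(0)\|_{H^2_xL_v^{2,2}}$, and Lemma \ref{L:nn02} to get $f(t)\ge 0$ pointwise. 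Since $\langle v\rangle^{-2}\in L^2_v(\mathbb{R}^3)$, Cauchy--Schwarz gives, for a.e.\ $x$, $\|f(t,x,\cdot)\|_{L^1_v}\lesssim\|\langle v\rangle^{2}f(t,x,\cdot)\|_{L^2_v}$, and also $f(t,x,\cdot)\in L^\infty_v$ via $H^2_v\hookrightarrow L^\infty_v$; consequently $\big\|\,\|f(t,\cdot,\cdot)\|_{L^1_v}\big\|_{L^2_x}\lesssim\|f(t)\|_{L^2_xL_v^{2,2}}\le C'$. This is the only soft input needed.

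Next I would estimate $Q(f,f)$ in $L^1_{x,v}$. Writing $Q=Q^+-Q^-$ as in \eqref{eqn:collision kernel} (equivalently, the single-particle case of Proposition \ref{P:gainloss}), the rapid decay of $\hat\phi$ forces the angular kernel to be bounded, $|\omega\cdot(v-u)|\,|\hat\phi((\omega\cdot(v-u))\omega)|^2\lesssim 1$ uniformly in $v,u,\omega$, whence the pointwise bound $0\le Q^-(f,f)(t,x,v)\lesssim f(t,x,v)\,\|f(t,x,\cdot)\|_{L^1_v}$, and therefore
\begin{equation*}
\|Q^-(f,f)(t)\|_{L^1_{x,v}}\lesssim\int_{x}\|f(t,x,\cdot)\|_{L^1_v}^{2}\,dx\lesssim\|f(t)\|_{L^2_xL_v^{2,2}}^{2}\le C'^{2}.
\end{equation*}
For the gain term I would use the classical pre/post-collisional change of variables $(v,u)\mapsto(v^\ast,u^\ast)$ at fixed $\omega$, which is an involution of unit Jacobian under which $|\omega\cdot(v-u)|$ and (using radiality of $\hat\phi$) the kernel are invariant; this yields the $x$-fixed collision invariant $\int_v Q^+(f,f)(t,x,v)\,dv=\int_v Q^-(f,f)(t,x,v)\,dv$ for a.e.\ $x$. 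Since $f\ge 0$ both integrands are nonnegative, so $\|Q^+(f,f)(t)\|_{L^1_{x,v}}=\|Q^-(f,f)(t)\|_{L^1_{x,v}}$ and hence $\|Q(f,f)(t)\|_{L^1_{x,v}}\lesssim C'^{2}$ uniformly on $[-T,T]$. All the $v$-integrals and the substitution are legitimate because $f(t,x,\cdot)\in L^1_v\cap L^\infty_v$ for a.e.\ $x$ together with the rapid decay of the kernel.

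Finally, from the Duhamel identity $f(t)=e^{-tv\cdot\nabla_x}f(0)+\int_0^t e^{-(t-s)v\cdot\nabla_x}Q(f,f)(s)\,ds$ and the fact that $e^{-tv\cdot\nabla_x}$ is the measure-preserving map $f(x,v)\mapsto f(x-tv,v)$, hence an isometry of $L^1_{x,v}$, I obtain
\begin{equation*}
\|f(t)\|_{L^1_{x,v}}\le\|f(0)\|_{L^1_{x,v}}+\int_0^{|t|}\|Q(f,f)(s)\|_{L^1_{x,v}}\,ds\lesssim\|f(0)\|_{L^1_{x,v}}+C'^{2}\,T,
\end{equation*}
so $f\in L^\infty_{(-T,T)}L^1_{x,v}$, as claimed. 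I expect the only genuinely delicate point to be the rigorous justification of the collisional change of variables in the gain term at $H^2_xL_v^{2,2}$ regularity (Fubini plus the substitution $(v,u,\omega)\mapsto(v^\ast,u^\ast,\omega)$); given the $L^1_v\cap L^\infty_v$ control and the kernel decay established above, this is routine rather than obstructive. Moreover, once $f(t)\in L^1_{x,v}$ is known, integrating the equation in $(x,v)$ (the transport term vanishes and $\int\!\!\int Q(f,f)=0$) upgrades the estimate to exact mass conservation $\|f(t)\|_{L^1_{x,v}}=\|f(0)\|_{L^1_{x,v}}$.
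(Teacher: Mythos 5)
Your route is essentially correct but genuinely different from the paper's. The paper never touches the gain/loss structure on the $(x,v)$ side: it bounds $\Vert Q^{\pm}(f,f)\Vert_{L^1_v}\lesssim \Vert Q^{\pm}(f,f)\Vert_{L^{2,2}_v}=\Vert \tilde Q^{\pm}(\tilde f,\tilde f)\Vert_{H^2_\xi}$, then works entirely in the $(x,\xi)$ representation of the collision operator, applies $(1-\Delta_\xi)$ under the integral sign (absorbing the extra powers of $\zeta$ into $\hat\phi$), does H\"older in $\zeta$ as in the uniqueness estimates, and finishes with Cauchy--Schwarz in $x$ to get $\Vert \tilde Q^{\pm}(\tilde f,\tilde f)\Vert_{L^1_xH^2_\xi}\lesssim \Vert \tilde f\Vert_{L^2_xH^2_\xi}^2$, treating $Q^+$ and $Q^-$ identically and without any sign information. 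Your argument instead exploits the bounded collision kernel, the convolution-type bound for the loss term, and the pre/post-collisional change of variables for the gain term; it is more classical and makes the mass-conservation upgrade transparent, while the paper's Fourier-side argument buys sign-independence.

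That sign-independence is the one substantive point where your proof falls short of the statement: Lemma \ref{L:nn03} assumes only $f(0)\in H^2_xL^{2,2}_v\cap L^1_{x,v}$, with no nonnegativity, and the paper explicitly notes its proof ``does not need the nonnegativity.'' By invoking Lemma \ref{L:nn02} to get $f(t)\geq 0$ before handling the gain term, you prove a weaker statement (and create an unnecessary dependence on the positivity lemma). The repair is immediate: estimate $|Q^{+}(f,f)(v)|\leq\int\!\!\int B\,|f(v^{\ast})||f(u^{\ast})|$, integrate in $v$, and apply the same collisional change of variables (Tonelli is free since the integrand is nonnegative), which gives $\Vert Q^{+}(f,f)(t,x,\cdot)\Vert_{L^1_v}\lesssim\Vert f(t,x,\cdot)\Vert_{L^1_v}^2$ with no sign assumption. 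Separately, your claim that $f(t,x,\cdot)\in L^\infty_v$ ``via $H^2_v\hookrightarrow L^\infty_v$'' is not justified: $L^{2,2}_v$ is a weighted $L^2$ space (it corresponds to $H^2_\xi$ after Fourier transform in $v$), not a space of $v$-derivatives, so no $L^\infty_v$ control is available from the assumed regularity. Fortunately it is also not needed --- the $L^1_v$ control together with the bounded kernel and Tonelli already legitimizes all the $v$-integrations and the substitution.
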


\begin{proof}
This proof does not need the nonnegativity. We will estimate the solution $%
f(t)$ in the Duhamel form. 
\begin{equation*}
f(t)=S(t)f(0)+\sum_{\pm }\pm \int_{0}^{t}S(t-t^{\prime })Q^{\pm
}(f(t^{\prime }),f(t^{\prime }))\,dt^{\prime }
\end{equation*}%
Applying the $L_{xv}^{1}$ norm, and using that this is preserved by the
linear propagator, 
\begin{equation*}
\Vert f(t)\Vert _{L_{(-T,T)}^{\infty }L_{xv}^{1}}\lesssim \Vert f(0)\Vert
_{L_{xv}^{1}}+\sum_{\pm }T\Vert Q^{\pm }(f(t),f(t))\Vert _{L_{xv}^{1}}
\end{equation*}%
Thus, it suffices to estimate $\Vert Q^{\pm }(f,f)\Vert _{L_{(-T,T)}^{\infty
}L_{xv}^{1}}$. To this end, first note that 
\begin{equation}
\Vert Q^{\pm }(f,f)\Vert _{L_{v}^{1}}\lesssim \Vert Q^{\pm }(f,f)\Vert
_{L_{v}^{2,2}}=\Vert \tilde{Q}^{\pm }(\tilde{f},\tilde{f})\Vert _{H_{\xi
}^{2}}  \label{E:nn06}
\end{equation}%
Recall 
\begin{align*}
\tilde{Q}_{\alpha ,\sigma }(\tilde{g},\tilde{h})(t,x,\xi
)=\int_{s=0}^{\infty }\int_{\zeta }e^{i(\sigma -\alpha )\xi \cdot \zeta
/2}e^{-2is\sigma |\zeta |^{2}/2}\hat{\phi}(-\zeta )\hat{\phi}(\zeta )& \\
\tilde{g}(t,x,\xi -s\zeta )\tilde{h}(t,x,s\zeta )\,ds\,d\zeta &
\end{align*}%
Applying the operator $(1-\Delta _{\xi })$ and differentiating under the
integral sign gives 
\begin{align*}
(1-\Delta _{\xi })\tilde{Q}_{\alpha ,\sigma }(\tilde{g},\tilde{h})(t,x,\xi
)=\int_{s=0}^{\infty }\int_{\zeta }e^{i(\sigma -\alpha )\xi \cdot \zeta
/2}e^{-2is\sigma |\zeta |^{2}/2}\hat{\phi}(-\zeta )\hat{\phi}(\zeta )& \\
(1+\frac{1}{4}(\sigma -\alpha )^{2}|\zeta |^{2}-2(\sigma -\alpha )\zeta
\cdot \nabla _{\xi }-\Delta _{\xi })\tilde{g}(t,x,\xi -s\zeta )\tilde{h}%
(t,x,s\zeta )\,ds\,d\zeta &
\end{align*}%
All of the extra powers of $\zeta $ that have been produced can be absorbed
by $\hat{\phi}$. Thus, Minkowski, we have 
\begin{equation*}
\Vert (1-\Delta _{\xi })\tilde{Q}^{\pm }(\tilde{g},\tilde{h})(t,x,\xi )\Vert
_{L_{\xi }^{2}}\lesssim \Vert \tilde{g}(t,x,\xi )\Vert _{H_{\xi
}^{2}}\int_{s=0}^{\infty }\int_{\zeta }\left\vert \langle \zeta \rangle ^{2}%
\hat{\phi}(-\zeta )\hat{\phi}(\zeta )\right\vert \left\vert \tilde{h}%
(t,x,s\zeta )\right\vert ds\,d\zeta
\end{equation*}%
H\"{o}lder in $\zeta $ like in the proof of Lemma \ref{Lem:Uniqueness III},%
\begin{equation*}
\Vert (1-\Delta _{\xi })\tilde{Q}^{\pm }(\tilde{g},\tilde{h})(t,x,\xi )\Vert
_{L_{\xi }^{2}}\lesssim \Vert \langle \zeta \rangle ^{2}\hat{\phi}(\zeta
)\Vert _{L^{3+}\cap L^{3-}}\Vert \tilde{g}(t,x,\xi )\Vert _{H_{\xi
}^{2}}\Vert \tilde{h}(t,x,\xi )\Vert _{L_{\xi }^{3+}\cap L_{\xi }^{3-}}
\end{equation*}

Applying the $L_{x}^{1}$ norm and using Cauchy-Schwarz in $x$, 
\begin{equation*}
\Vert \tilde{Q}^{\pm }(\tilde{g},\tilde{h})(t,x,\xi )\Vert _{L_{x}^{1}H_{\xi
}^{2}}\lesssim \Vert \tilde{g}(t,x,\xi )\Vert _{L_{x}^{2}H_{\xi }^{2}}\Vert 
\tilde{h}(t,x,\xi )\Vert _{L_{x}^{2}H_{\xi }^{2}}
\end{equation*}%
where we have now absorbed $\Vert \langle \zeta \rangle ^{2}\hat{\phi}(\zeta
)\Vert _{L^{3+}\cap L^{3-}}$ into the implicit constant. Returning to %
\eqref{E:nn06}, 
\begin{equation*}
\Vert Q^{\pm }(f,f)(t,x,v)\Vert _{L_{x}^{1}L_{v}^{1}}\lesssim \Vert
f(t,x,v)\Vert _{L_{x}^{2}L_{v}^{2,2}}\Vert f(t,x,v)\Vert
_{L_{x}^{2}L_{v}^{2,2}}
\end{equation*}
\end{proof}

Combining Lemmas \ref{L:nn02}, \ref{L:nn03}, we obtain

\begin{corollary}
\label{C:nn04} Suppose that $f(0) \in H_x^2L_v^{2,2} \cap L_{x,v}^1$ and $%
f(0) \geq 0$. Then the corresponding solution satisfies $f\in
L_{(-T,T)}^\infty L_{x,v}^1$, $f(t) \geq 0$, and 
\begin{equation}  \label{E:nn07}
\int_{x,v} f(t,x,v) \, dx \, dv = \int_{x,v} f(0,x,v) \, dx \, dv
\end{equation}
so that the $L^1_{xv}$ norm is in fact preserved in time.
\end{corollary}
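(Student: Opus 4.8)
\textbf{Proof proposal for Corollary \ref{C:nn04}.} The first two assertions are essentially immediate: the hypothesis $f(0)\in H_x^2L_v^{2,2}\cap L_{x,v}^1$ with $f(0)\geq 0$ puts us simultaneously in the scope of Lemma \ref{L:nn02} (which gives $f(t)\geq 0$ pointwise for all $t\in(-T,T)$) and Lemma \ref{L:nn03} (which gives $f\in L^\infty_{(-T,T)}L_{x,v}^1$, and, from its proof, also $Q^\pm(f,f)\in L^\infty_{(-T,T)}L_{x,v}^1$). So the only new content is the conservation identity \eqref{E:nn07}, and the plan is to extract it from the Duhamel representation together with the classical collisional symmetry.

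The plan is as follows. Starting from the Duhamel form used in the proof of Lemma \ref{L:nn03},
\begin{equation*}
f(t)=S(t)f(0)+\sum_{\pm}\pm\int_0^t S(t-t')\,Q^\pm(f(t'),f(t'))\,dt',
\end{equation*}
I would integrate both sides over $(x,v)\in\mathbb{R}^3\times\mathbb{R}^3$. The linear propagator $S(t)=e^{-t\bds v\cdot\nabla_x}$ is a shear in $x$ (at fixed $v$), hence preserves $\int_{x,v}$; so $\int_{x,v}S(t)f(0)=\int_{x,v}f(0)$, and likewise $\int_{x,v}S(t-t')g=\int_{x,v}g$ for each $t'$. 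Using the $L^1_{x,v}$ bounds on $f(t')$ and on $Q^\pm(f(t'),f(t'))$ supplied by Lemma \ref{L:nn03} together with $t'\in[0,t]\subset[-T,T]$, Fubini applies and reduces matters to showing that for a.e.\ $t'$,
\begin{equation*}
\int_{x,v}\big[Q^+(f,f)-Q^-(f,f)\big](t',x,v)\,dx\,dv=0.
\end{equation*}
This is the collisional mass-invariant. Using the gain/loss representation from Proposition \ref{P:gainloss} (specialized from the hierarchy to the single equation, i.e.\ with $k=1$ and $f^{(2)}=f\otimes f$), for each fixed $\omega\in S^2$ the map $(v,v_\ast)\mapsto(v^\ast,v_\ast^\ast)$ with $v^\ast=v+r\omega$, $v_\ast^\ast=v_\ast-r\omega$, $r=\omega\cdot(v-v_\ast)$, is a unit-Jacobian involution of $\mathbb{R}^3\times\mathbb{R}^3$ under which $r$ changes sign while $|r|\,|\hat\phi(r\omega)|^2$ is invariant (here $|\hat\phi|^2$ is even since $\phi$ is real-valued, and $\hat\phi(r\omega)$ depends only on $|r|$ since $\phi$ is radial). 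This change of variables carries the gain integrand to the loss integrand, so $\int_{v_\ast}\int_v Q^+ = \int_{v_\ast}\int_v Q^-$; integrating the remaining $x$ variable then gives the claimed cancellation.

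The main obstacle is purely a matter of justification at the modest regularity $H_x^2L_v^{2,2}\cap L_{x,v}^1$: before performing the change of variables and the interchange of the $dx$, $dv$, $dv_\ast$, $d\omega$ integrals one must know the relevant multiple integral converges absolutely. I expect to handle this exactly as in the proof of Lemma \ref{L:nn03} --- the kernel bound $\||\zeta|\,\hat\phi(\zeta)\|_{L^{3+}\cap L^{3-}}<\infty$ combined with Cauchy--Schwarz in $x$ and $\|f(t')\|_{L_x^2L_v^{2,2}}<\infty$ yields $Q^\pm(f,f)(t')\in L^1_{x,v}$ with everything absolutely convergent --- and to reach the constant ``test function'' $1$ one approximates it by smooth compactly supported cutoffs and passes to the limit by dominated convergence against the $L^1$ bounds. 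Alternatively, one can first prove \eqref{E:nn07} for Schwartz initial data (where all manipulations are transparent) and then pass to general $f(0)\in H_x^2L_v^{2,2}\cap L_{x,v}^1$ using Lemma \ref{L:nn01} and the continuity of the data-to-solution map established in the well-posedness argument. Either route completes the proof.
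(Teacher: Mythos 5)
Your argument is correct and is essentially the paper's own proof: the paper likewise observes that $\iint_{x,v} f$ is well defined by Lemma \ref{L:nn03} and that its time derivative vanishes because the integral of the gain term matches that of the loss term, and you simply carry this out via the Duhamel form instead of differentiating, while supplying the standard collisional change of variables and the Fubini justification that the paper leaves implicit. One small slip to fix: with your convention $r=\omega\cdot(v-v_*)$ the map $v\mapsto v+r\omega$, $v_*\mapsto v_*-r\omega$ is \emph{not} an involution (iterating it sends $v$ to $v+4r\omega$); you need $r=\omega\cdot(v_*-v)$, which is the convention used in \eqref{def:pre-after collision velocity} and in the gain term of Proposition \ref{P:gainloss}, and with that sign the rest of your argument goes through verbatim.
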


\begin{proof}
Since the quantity $\iint_{x,v} f(t,x,v) \, dx \, dv$ is defined for all
time by Lemma \ref{L:nn03}, it is meaningful to compute 
\begin{equation*}
\partial_t \iint_{x,v} f(t,x,v) \, dx \, dv =0
\end{equation*}
by substituting the equation and using that the integral of the gain term
matches the integral of the loss term. Thus \eqref{E:nn07} holds.
\end{proof}

Now we can use the continuity of the data-to-solution map and Lemma \ref%
{L:nn01} as follows: Suppose that $f(0)\in H_{x}^{1+}L_{v}^{2,0+}\cap
L_{xv}^{1}$ and $f(0)\geq 0$. Approximate this initial condition in the
space $H_{x}^{1+}L_{v}^{2,0+}\cap L_{xv}^{1}$ by a sequence such that for
each $n$, $f_{n}(0)\in H_{x}^{2}L_{v}^{2,2}$ and $f_{n}(0)\geq 0$. The
continuity of the data-to-solution map implies that $f_{n}\rightarrow f$ in $%
C([-T,T];H_{x}^{1+}L_{v}^{2,0+})$. By Corollary \ref{C:nn04}, applied to
each $f_{n}$, we have each $f_{n}(t)\geq 0$ and 
\begin{equation*}
\forall \;n\,,\qquad \Vert f_{n}\Vert _{L_{(-T,T)}^{\infty
}L_{xv}^{1}}=\Vert f_{n}(0)\Vert _{L_{xv}^{1}}\lesssim \Vert f(0)\Vert
_{L_{xv}^{1}}
\end{equation*}%
Now $f_{n}\rightarrow f$ in $C([-T,T];H_{x}^{1+}L_{v}^{2,0+})$ implies that
for each $t$, $f_{n}(t)\rightarrow f(t)$ in $L_{xv}^{2}$, from which it
follows that there is a subsequence (depending on $t$, although this is not
a problem) such that $f_{n_{k}}(t,x,v)\rightarrow f(t,x,v)$ for pointwise
a.e. $(x,v)\in \mathbb{R}^{6}$. Since this is a nonnegative sequence, it
follows from Fatou's lemma that 
\begin{equation*}
\iint_{x,v}f(t)\,dx\,dv\leq \liminf_{k\rightarrow \infty
}\iint_{x,v}f_{n_{k}}(t)\,dx\,dv\lesssim \iint_{x,v}f(0)\,dx\,dv
\end{equation*}

\subsection{Ill-posedness}

We actually find the following result of ill-posedness.

\begin{lemma}
\label{lem:illposedness}Given any $s_{1}>0$ and $s<1$, the quantum Boltzmann
equation (\ref{eqn:QBEquation}) is ill-posed in $H_{x}^{s}L_{v}^{2,s_{1}}$
i.e. as long as the $x$-derivative is below $1$, ill-posedness persists even
with high $v$-weights.
\end{lemma}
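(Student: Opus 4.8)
The plan is to adapt the ill-posedness construction of \cite{CH10} (see also \cite{CSZ3}) to the present weighted spaces, working on the $(x,\xi)$ side, where \eqref{eqn:QBEquation} reads $i\partial_t\tilde f+\nabla_x\cdot\nabla_\xi\tilde f=\sum_{\pm}\pm\tilde Q^{\pm}(\tilde f,\tilde f)$ with $\tilde Q^{\pm}$ given by \eqref{E:S444}. The goal is to exhibit a one-parameter family of exact solutions $\{f_\lambda\}_{\lambda\gg1}$ on a common time interval $[-T,T]$ that ``implodes'' (concentrates spatially as $\lambda\to\infty$) and violates every form of local well-posedness in $H^s_xL^{2,s_1}_v$ for the given $s<1$, $s_1>0$: the data $f_\lambda(0)$ can be normalized to unit size in $H^s_xL^{2,s_1}_v$, have uniformly bounded mass, energy and variance, yet the solution norms $\|f_\lambda(t)\|_{H^s_xL^{2,s_1}_v}$ behave on $[-T,T]$ in a way incompatible with continuity of the data-to-solution map (a ``norm deflation'', which is an inflation in the reversed time direction), so that a pair of such families with data converging to a common limit have solutions that stay a fixed distance apart. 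Because the weight $\langle v\rangle^{s_1}$ commutes with the transport propagator and with the $\xi$-shifts inside $\tilde Q$, and because a fixed Schwartz $v$-profile has finite $L^{2,s_1}_v$ norm for every $s_1$, the obstruction will live entirely in the $x$-variable; this is exactly why arbitrary $v$-decay $s_1>0$ does not help, and why finite moments do not help either.

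For the construction I would start from an imploding local Maxwellian (or a small perturbation of one) built as in \S\ref{S3}, whose macroscopic density/temperature/bulk-velocity profile is chosen so that the characteristics of the free transport $v\cdot\nabla_x$ focus at a point at time $T$, and rescale to spatial scale $\lambda^{-1}$ with amplitude chosen so that $\|f_\lambda(0)\|_{H^s_xL^{2,s_1}_v}\sim1$; note that then $\|f_\lambda(0)\|_{H^1_xL^{2,s_1}_v}\sim\lambda^{1-s}\to\infty$, so the family is unbounded precisely at the well-posedness threshold $s=1$. The two computations that make the mechanism work are: (i) the collision operator $Q$ is \emph{local in $x$ and loses no $x$-derivative} --- this is the content of Proposition \ref{P:general-fixed-time-2} --- so it is a bounded perturbation of the imploding transport dynamics and does \emph{not} arrest the implosion (this is the feature distinguishing \eqref{eqn:QBEquation} from the compressible Euler system, where such profiles blow up); and (ii) as the solution concentrates in $x$, conservation of energy forces the velocity support to spread, and a careful bookkeeping of the competing $x$-concentration scale and $v$-spreading scale against the $H^s_xL^{2,s_1}_v$ norm shows that, for $s<1$, this norm deflates on a subinterval, uniformly in the weight $s_1$. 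At the level of the equation on the $(x,\xi)$ side this deflation is also why the Strichartz-based bilinear estimate underlying Lemma \ref{L:10estimates} must fail below $s=1$: the $s$-integral $\int_0^\infty(\cdots)\,ds$ in $\tilde Q$ is only barely integrable after the $L^{3\pm}_\xi$ Sobolev embedding at $s=1$, and there is no room to lower the exponent.

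The rigorous execution splits into two steps, and the second is the main obstacle. First, one makes precise that the imploding Maxwellian family is a genuine (or sufficiently accurate approximate) solution on $[-T,T]$: one propagates a smooth, highly regular version of the profile using the persistence results Lemmas \ref{L:nn01}--\ref{L:nn03}, extracts its leading-order behavior by the explicit Gaussian computations of \S\ref{S3}, and checks via Lemmas \ref{L:nn02}--\ref{L:nn03} that the approximate solutions remain honest nonnegative $L^1_{x,v}$ densities so that the construction is not vacuous. Second --- and this is where the real work lies --- one must control the full nonlinear correction uniformly in the concentration scale: writing $f_\lambda=f_\lambda^{\mathrm{app}}+e_\lambda$, one needs $e_\lambda$ to remain negligible in $H^s_xL^{2,s_1}_v$ throughout $[-T,T]$, which requires a perturbative bound for $Q$ that is \emph{uniform in $\lambda$} in exactly the regime where the fixed-point estimate of Lemma \ref{L:10estimates} breaks down. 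As in \cite{CH10}, the way around this is to use that the dangerous behaviour is carried by the explicitly known imploding profile, so that the error equation effectively sees better regularity, and a direct energy/commutator estimate --- rather than a Strichartz contraction --- closes $e_\lambda$ on $[-T,T]$. Once both steps are in place, the norm-deflation family contradicts uniform continuity of the solution map, giving ill-posedness of \eqref{eqn:QBEquation} in $H^s_xL^{2,s_1}_v$ for every $s<1$ and every $s_1>0$, which is Lemma \ref{lem:illposedness} and, in particular, the ill-posedness half of Theorem \ref{thm:ill-posedness}.
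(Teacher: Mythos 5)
Your high-level target is correct (a norm-deflation family adapted from \cite{CH10} that defeats uniform continuity of the data-to-solution map, with the $v$-weight playing no decisive role), but the mechanism you describe is not the one that produces the deflation, and as written there is a genuine gap. The engine of the argument is the \emph{regularity gap between the gain and loss terms}: at regularity $s<1$ one exhibits an explicit pair of unit-size data in $H^s_xL^{2,s_1}_v$ --- a bump $f$ at spatial frequency $M_1$ with small velocity support, and a function $g$ built as a superposition of $\sim M_2^2N_2^2$ directional tubes with $|v|\sim N_2\gg 1$ --- for which the loss term collapses to a damping potential, $Q^-(f,g)\approx M_2^{1-s}N_2^{-s_1}\chi(M_2x)\,f$, while the gain term obeys strictly better estimates and cannot cancel it. The approximate solution is then $e^{-M_2^{1-s}N_2^{-s_1}\chi(M_2x)t}f+g$, and the factor $M_2^{1-s}$ is what makes the decay rate large precisely when $s<1$; the weight only costs $N_2^{-s_1}$, which is beaten by taking $M_2$ large relative to $N_2$. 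Note in particular that the velocity support of $g$ is \emph{not} a fixed Schwartz profile, so your claim that ``the obstruction lives entirely in the $x$-variable'' because a fixed $v$-profile has finite $L^{2,s_1}_v$ norm does not reflect how $s_1$ actually enters. Your substitute mechanism --- characteristics focusing at a point, $Q$ being a ``bounded perturbation that does not arrest the implosion,'' and ``conservation of energy forcing the velocity support to spread'' --- does not yield any quantitative deflation of $\|f\|_{H^s_xL^{2,s_1}_v}$ and, crucially, nothing in that picture singles out the threshold $s=1$. Relatedly, Proposition \ref{P:general-fixed-time-2} is a fixed-$(t,\boldsymbol{x}_k)$ estimate in the $\xi$ variables; it does not say that $Q$ ``loses no $x$-derivative,'' and the $H^{1+}_x$ requirement in Lemma \ref{L:10estimates} arises exactly from the $x$-variable H\"older and Sobolev steps of the bilinear estimate.

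The correct skeleton is therefore: construct the loss-term maximizers $f$ and $g$ at scales $M_1,M_2,N,N_2$ with the stated normalizations; compute $Q^-(f,g)$ to identify the damping coefficient $M_2^{1-s}N_2^{-s_1}\chi(M_2x)$ and hence the exponential amplitude decay of $f$ on the short time scale $\sim M_2^{s-1}N_2^{s_1}$; check that the gain term and the remaining Duhamel contributions are lower order at this regularity; and finally close the perturbation $f=f_a+f_c$ around this damping ansatz. You are right that the perturbative closure is the technically demanding step (the paper itself defers it to \cite{CH10,CSZ3}), but it must be closed around the loss-dominated profile, not around an imploding local Maxwellian from \S\ref{S3}, which plays no role in the ill-posedness construction.
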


The mechanism of Lemma \ref{lem:illposedness} was first discovered in \cite%
{CH10}. It can be described as the following. While it is universally known
that the gain term is better than the loss term, it was unknown that there
is a regularity gap between the optimal estimates on the gain term and the
loss term such that the gain term cannot cancel the loss term at all. The
``bad" solutions we consider are mainly maximizers of estimate (\ref%
{estimate:bilinear well-posedness}) in the loss term of the collision
operator, while other parts -- the gain term and the free term -- in
estimate (\ref{estimate:bilinear well-posedness}) in fact satisfy better
estimates with lower regularity. That is, in a Duhamel iteration, the loss
term applied to the ``bad" solutions will stay around the same size while
the gain term applied to the ``bad" solutions will become smaller. Hence,
putting in the maximizers of the loss term is like solving (\ref%
{eqn:QBEquation}) with only the loss term which drives down the amplitude of
the solution exponentially fast, and hence creates ill-posedness, in the
sense that, there is a family of norm deflation solutions and thus the datum
to solution map is not uniformly continuous.

We provide a construction of the approximate ``bad" solution $f_{a}$ and a
formal calculation demonstrating the ill-posedness. For the remaining
perturbation argument proving that a small correction $f_{c}$ exists such
that $f=f_{a}+f_{c}$ exactly solves (\ref{eqn:QBEquation}) and still
exhibits ill-posedness behavior, we refer readers to \cite{CH10,CSZ3}.

Fix a $s$ with $s<1$. Let the dyadic parameters satisfy the relationship. 
\begin{equation*}
0<N\ll \max (M_{1},M_{2})^{-1}\ll 1\,,\qquad M_{1}\geq 1\,,\qquad M_{2}\gg 1
\end{equation*}%
We will consider $\left\vert \eta _{2}\right\vert \sim M_{2}$ and $%
\left\vert v_{2}\right\vert \sim N_{2}$, with $M_{2}\gg 1$ and $N_{2}\gg 1$
dyadic. On the unit sphere, lay down a grid of $J\sim M_{2}^{2}N_{2}^{2}$
points $\{e_{j}\}_{j=1}^{J}$, where the points $e_{j}$ are roughly equally
spaced and each have their own neighborhood of unit-sphere surface area $%
\sim M_{2}^{-1}N_{2}^{-1}$. Let $P_{e_{j}}$ denote the orthogonal projection
onto the 1D subspace spanned by $e_{j}$ and $P_{e_{j}}^{\perp }$ denote the
orthogonal projection onto the 2D subspace $\func{span}\{e_{j}\}^{\perp }$.
We write 
\begin{equation*}
f(x,v,t)\approx \frac{M_{1}^{\frac{3}{2}-s}}{N^{3/2}}\chi (M_{1}x)\hat{\chi}(%
\frac{v}{N})
\end{equation*}%
and 
\begin{equation*}
g(x,v_{2},t)\approx \frac{M_{2}^{1-s}}{N_{2}^{2+s_{1}}}\sum_{j=1}^{J}\chi
(M_{2}P_{e_{j}}^{\perp }x)\chi (\frac{P_{e_{j}}x}{N_{2}})\hat{\chi}%
(M_{2}P_{e_{j}}^{\perp }v_{2})\hat{\chi}(\frac{P_{e_{j}}v_{2}}{N_{2}})
\end{equation*}%
whose $H_{x}^{s}L_{v}^{2,s_{1}}$ norms are $O(1)$ and where the $v_{2}$ in
the definition of $g$ is to remind us the $v$ integration in the loss term.
In the $j-$sum over $J\sim M_{2}^{2}N_{2}^{2}$ terms inside $g$, the
velocity supports are almost disjoint and the square of the sum is
approximately the sum of the squares. As mentioned before, $f$ and $g$ are
actually maximizers for the loss term bilinear estimate at critical
regularity while the gain term satisfies better estimates. So we expect a
small gain term minus a large loss term behavior.

For the loss term, we compute $Q^{-}(f,g)$ for which we use the $%
1/\left\langle u\right\rangle $ approximation,%
\begin{equation*}
Q^{-}(f,g)\approx f\int_{\mathbb{R}^{3}}\frac{g(x,v_{2},t)}{\left\langle
v-v_{2}\right\rangle }\,dv_{2}
\end{equation*}%
Notice that $f$'s $v$ support is of size $N$ which is small and hence, $%
\left\langle v-v_{2}\right\rangle \sim N_{2}.$ Carrying out the integral for
the bump functions, 
\begin{eqnarray*}
Q^{-}(f,g) &\approx &f\sum_{j=1}^{J}\chi (M_{2}P_{e_{j}}^{\perp }x)\chi (%
\frac{P_{e_{j}}x}{N_{2}})\frac{M_{2}^{1-s}}{N_{2}^{2+s_{1}}}\frac{1}{N_{2}}%
\frac{N_{2}}{M_{2}^{2}} \\
&=&f\frac{1}{M_{2}^{1+s}N_{2}^{2+s_{1}}}\sum_{j=1}^{J}\chi
(M_{2}P_{e_{j}}^{\perp }x)\chi (\frac{P_{e_{j}}x}{N_{2}}) \\
&=&f\text{ }M_{2}^{1-s}N_{2}^{-s_{1}}\chi (M_{2}x)
\end{eqnarray*}%
A prototype approximate solution suggested by the formal Duhamel iteration
of $f+g$ is then 
\begin{equation*}
f_{a}(x,v,t)\approx \exp (-M_{2}^{1-s}N_{2}^{-s_{1}}\chi (M_{2}x)t)\cdot f+g
\end{equation*}%
which is just $f+g$ above with $f$ preceeded by an exponentially decaying
factor in time. For a fixed $s_{1}$ and $M_{1}>>N_{2}$, when $s<1$, the size
of the exponential term changes substantially on the short time scale $\sim
M_{2}^{s-1}N_{2}^{s_{1}^{-1}}\ll 1$.

Let us now set

\begin{equation*}
M=M_{1}=M_{2}\gg 1\,,\qquad N=M^{-1}\ll 1
\end{equation*}%
then 
\begin{eqnarray*}
f_{a}(x,v,t) &=&M^{3-s}\chi (Mx)\hat{\chi}(Mv)\exp
(-M^{1-s}N_{2}^{-s_{1}}\chi (Mx)t) \\
&&+\frac{M^{1-s}}{N_{2}^{2+s_{1}}}\sum_{j=1}^{J}\chi (MP_{e_{j}}^{\perp
}x)\chi (\frac{P_{e_{j}}x}{N_{2}})\hat{\chi}(MP_{e_{j}}^{\perp }v)\hat{\chi}(%
\frac{P_{e_{j}}v}{N_{2}})
\end{eqnarray*}%
whose $L_{v}^{2,s_{1}}H_{x}^{s_{0}}$ norm for any $0<s\leqslant 1$, $%
s_{0}\geqslant 0$ is%
\begin{equation}
\Vert f_{a}\Vert _{L_{v}^{2,s_{1}}H_{x}^{s_{0}}}\sim M^{s_{0}-s}\exp
[-M^{1-s}N_{2}{}^{-s_{1}}t]\langle M^{1-s}N_{2}{}^{-s_{1}}t\rangle
^{s_{0}}+M^{s_{0}-s}  \label{E:ni01}
\end{equation}%
Thus, if we let 
\begin{equation}
0<s<1\,,\qquad s_{0}=s-\frac{\ln \ln M}{\ln M}\,,\qquad T_{\ast }=-\frac{%
\delta }{M^{1-s}N_{2}{}^{-s_{1}}}\ln M\leq t\leq 0  \label{E:rTdef}
\end{equation}%
then at the endpoints of the interval $\left[ T_{\ast },0\right] $: 
\begin{equation*}
\Vert f_{a}(T_{\ast })\Vert _{L_{v}^{2,s_{1}}H_{x}^{s_{0}}}\sim M^{\delta
}\gg 1\text{, }\Vert f_{a}(0)\Vert _{L_{v}^{2,s_{1}}H_{x}^{s_{0}}}\leq \frac{%
1}{\ln M}\ll 1\,,\qquad
\end{equation*}%
Note that, as $M\nearrow \infty $, $s_{0}\nearrow s$, and this approximate
solution, in $L_{v}^{2,s_{1}}H_{x}^{s_{0}}$, starts very small in $%
L_{v}^{2,s_{1}}H_{x}^{s_{0}}$ at time $0$, and rapidly inflates at time $%
T_{\ast }<0$ to large size in $L_{v}^{2,s_{1}}H_{x}^{s_{0}}$ backwards in
time. By considering the same approximate solution starting at $T_{\ast }<0$
and evolving forward to time $0$, we have an approximate solution that
starts large and deflates to a small size in a very short period of time.

\end{document}